\newif\ifdraft\draftfalse
\newcommand{\pname}[1]{\item[{#1}.]\def\@currentlabel{#1}}
\newcommand{\prname}[1]{\textbf{#1}\def\@currentlabel{#1}}
\newcommand{\ab}{\allowbreak}
\newcommand{\bb}[1]{\mcomment{{\bf Bruno:} #1}}
\newcommand{\bbnote}[1]{\mcomment{{\it BB: {#1}}}}
\newlength{\shiftmargin} \setlength{\shiftmargin}{2.0cm}
\newcommand{\mcomment}[1]{\marginpar{\tiny{#1}}}
\newcommand{\mcomment}[1]{}
\newtheorem{lemma}{Lemma}
\newtheorem{corollary}{Corollary}
\newtheorem{proposition}{Proposition}
\newtheorem{example}{Example}
\newtheorem{invariant}{Invariant}
\newtheorem{definition}{Definition}
\newtheorem{remark}{Remark}
\newtheorem{property}{Property}
\newenvironment{defn}{\begin{tabbing}
  \hspace{1.5em} \= \hspace{.5\linewidth - 1.5em} \= \hspace{1.5em} \= \kill
  }{
  \end{tabbing}}
\newcommand{\entry}[2]{\>$#1$\>\>#2}
\newcommand{\clause}[2]{$#1$\>\>#2}
\newcommand{\categ}[2]{\clause{#1::=}{#2}}
\newcommand{\parpop}{\mid}
\newcommand{\repl}[2]{{}!^{{#1}\leq{#2}}}
\newcommand{\cinput}[2]{#1(#2)}
\newcommand{\coutput}[2]{\overline{#1}\langle #2 \rangle}
\newcommand{\kw}[1]{\mathsf{#1}}
\newcommand{\kwf}[1]{\mathrm{#1}}
\newcommand{\rn}[1]{\textbf{#1}}
\newcommand{\kvar}[1]{\mathit{#1}}
\newcommand{\kwp}[1]{\mathit{#1}}
\newcommand{\kwt}[1]{\mathit{#1}}
\newcommand{\IF}{\kw{if}\ }
\newcommand{\THEN}{\ \kw{then}\ }
\newcommand{\ELSE}{\kw{else}\ }
\newcommand{\FIND}{\kw{find}}
\newcommand{\SUCHTHAT}{\kw{suchthat}}
\newcommand{\NEW}{\kw{new}}
\newcommand{\NEWCHANNEL}{\kw{newChannel}}
\newcommand{\LET}{\kw{let}}
\newcommand{\IN}{\kw{in}}
\newcommand{\assign}[2]{\LET\ #1 = #2\ \IN\ }
\newcommand{\bassign}[2]{\LET\ #1 = #2}
\newcommand{\Res}[2]{\NEW \ {#1}:{#2}}
\newcommand{\Reschan}[1]{\NEWCHANNEL\ {#1}}
\newcommand{\baguard}[1]{\IF{#1}\THEN}
\newcommand{\bguard}[3]{\baguard{#1}{#2}\ \ELSE{#3}}
\newcommand{\kevent}[1]{\kw{event}\ #1}
\newcommand{\keventabort}[1]{\kw{event\string_abort}\ #1}
\newcommand{\usedevents}{\mathit{EvUsed}}
\newcommand{\sevent}{\mathsf{S}}
\newcommand{\sbarevent}{\mathsf{\overline{S}}}
\newcommand{\advloses}{\kwf{adv\_loses}}
\newcommand{\bigorfind}{\mathop\bigoplus\nolimits}
\newcommand{\defined}{\kw{defined}}
\newcommand{\bitstring}{\kwt{bitstring}}
\newcommand{\bitstringbot}{\kwt{bitstring}_{\bot}}
\newcommand{\injbot}{\kwf{i}_{\bot}}
\newcommand{\uniqueopt}{[\mathit{unique}?]}
\newcommand{\unique}[1]{[\kw{unique}_{#1}]}
\newcommand{\nonunique}[1]{\mathsf{NonUnique}_{#1}}
\newcommand{\pat}{p}
\newcommand{\INSERT}{\kw{insert}}
\newcommand{\GET}{\kw{get}}
\newcommand{\tbl}{\mathit{Tbl}}
\newcommand{\sem}[1]{[\![#1]\!]}
\newcommand{\bool}{\mathit{bool}}
\newcommand{\true}{\kwf{true}}
\newcommand{\false}{\kwf{false}}
\newcommand{\Adv}{{\cal A}} 
\newcommand{\Advt}{\mathsf{Adv}} 
\newcommand{\Advtev}[4]{\mathsf{Adv}_{#1}(#3, #2, #4)} 
\newcommand{\boundfun}{\mathsf{Bound}}
\newcommand{\bound}[5]{\boundfun_{#1}(#2, #3, #4, #5)}
\newcommand{\secp}{\eta} 
\newcommand{\fand}{\wedge}
\newcommand{\for}{\vee}
\newcommand{\fnot}{\neg}
\newcommand{\iffun}{\kwf{if\_fun}}
\newcommand{\trace}{\mathit{Tr}}
\newcommand{\traceset}{\mathit{{\cal T}\!r}}
\newcommand{\tracesetfull}{\traceset_{\mathrm{full}}}
\newcommand{\conf}{\mathit{Conf}}
\newcommand{\before}{\preceq}
\newcommand{\beforetr}[1]{\before_{#1}}
\newcommand{\Prss}[2]{\Pr[#1 \preceq #2 ]}
\newcommand{\multiunion}{\uplus}
\newcommand{\bigmultiunion}{\biguplus}
\newcommand{\tup}[1]{\widetilde{#1}}
\newcommand{\mset}{{\cal M}}
\newcommand{\im}{\mathrm{im}\ }
\newcommand{\xor}{\kwf{xor}}
\newcommand{\randomchoice}{\mathop\leftarrow\limits^R}
\newcommand{\ltr}[1]{\mathrel{\xrightarrow{#1}}}
\newcommand{\red}[2]{\ltr{#1}_{#2}}
\newcommand{\redq}{\rightsquigarrow}
\newcommand{\initconfig}{\mathrm{initConfig}}
\newcommand{\startch}{\mathit{start}}
\newcommand{\ix}{t}
\newcommand{\dom}{\mathrm{Dom}}
\newcommand{\image}{\mathrm{Im}}
\newcommand{\reduce}{\mathrm{reduce}}
\newcommand{\fc}{\mathrm{fc}}
\newcommand{\fvar}{\mathrm{var}}
\newcommand{\vardef}{\mathrm{vardef}}
\newcommand{\defset}{\mathit{Defined}}
\newcommand{\defsetfut}{\defset^{\mathrm{Fut}}}
\newcommand{\cevent}{\mathrm{event}}
\newcommand{\FB}{\mathit{FB}}
\newcommand{\targetFB}{\FB_{\mathrm{T}}}
\newcommand{\enc}{\kwf{enc}}
\newcommand{\dec}{\kwf{dec}}
\newcommand{\mac}{\kwf{mac}}
\newcommand{\mverify}{\kwf{verify}}
\newcommand{\pkgen}{\kwf{pkgen}}
\newcommand{\skgen}{\kwf{skgen}}
\newcommand{\sign}{\kwf{sign}}
\newcommand{\ktob}{\kwf{k2b}}
\newcommand{\Z}{\kwf{Z}}
\newcommand{\vn}[1]{x_{#1}}
\newcommand{\tyenv}{{\cal E}}
\newcommand{\evset}{{\cal E}}
\newcommand{\rewrite}{\rightarrow}
\newcommand{\indep}{\mathrm{only\_dep}}
\newcommand{\fset}{{\cal F}}
\newcommand{\fsetmod}{\fset^{\mathrm{mod}}}
\newcommand{\lset}{{\cal L}}
\newcommand{\facts}{{\cal F}}
\newcommand{\adeftest}[1]{\baguard{\defined ({#1})}}
\newcommand{\vf}{u} 
\newcommand{\pset}{{\cal Q}}
\newcommand{\cset}{{\cal C}h}
\newcommand{\tblcts}{{\cal T}}
\newcommand{\evseq}{\pp\mathit{{\cal E}\!v}}
\newcommand{\evseqnopp}{\mathit{{\cal E}\!v}}
\newcommand{\removepp}{\mathrm{removepp}}
\newcommand{\restconfig}{,\ab \pset,\ab \cset,\ab \tblcts,\ab \evseq}
\newcommand{\futfset}{{\cal F}^{\mathrm{Fut}}}
\newcommand{\ef}{{\cal F}^{\mathrm{ElseFind}}}
\newcommand{\elsefind}{\mathit{elsefind}}
\newcommand{\elselet}{\mathit{elselet}}
\newcommand{\ppf}{\mathit{programpoint}}
\newcommand{\lppf}{\mathit{lastdefprogrampoint}}
\newcommand{\convertelsefind}{\mathrm{convert\_elsefind}}
\newcommand{\Sdef}{S_{\mathrm{def}}}
\newcommand{\Sdep}{S_{\mathrm{dep}}}
\newcommand{\dset}{{\cal D}}
\newcommand{\dsetsnu}{{\cal D}_{\mathsf{SNU}}}
\newcommand{\indistev}[2]{\xrightarrow{#1}_{#2}}
\def\Succ{{\normalfont{\sf Succ}}}
\newcommand{\MAC}{\mathsf{MAC}}
\newcommand{\SE}{\mathsf{SE}}
\newcommand{\succufcma}{\Succ_\MAC^{\mathsf{uf-cma}}}
\newcommand{\succindcpa}{\Succ_\SE^{\mathsf{ind-cpa}}}
\newcommand{\pid}{\mathit{pid}}
\newcommand{\qid}{\mathit{qid}}
\newcommand{\secrone}{\mathsf{1\text{-}ses. secr.}}
\newcommand{\secr}{\mathsf{Secrecy}}
\newcommand{\secrreachone}{\mathsf{1\text{-}ses. reach. secr.}}
\newcommand{\secrreach}{\mathsf{Reach. secr.}}
\newcommand{\secrbit}{\mathsf{bit~secr.}}
\newcommand{\prop}{\mathit{sp}}
\newcommand{\tproc}[1]{Q_{#1}}               
\newcommand{\tproco}[1]{\tproc{\secrone(#1)}}
\newcommand{\tprocs}[1]{\tproc{\secr(#1)}}   
\newcommand{\tprocb}[1]{\tproc{\secrbit(#1)}}
\newcommand{\Stestpp}{\mathrm{Tpp}} 
\newcommand{\Stestidx}{\mathrm{Tidx}}           
\newcommand{\defrestr}{\mathrm{defRand}}
\newcommand{\Dfalse}{D_{\false}}
\newcommand{\Tall}{T_{\mathsf{all}}}
\newcommand{\nth}[2]{\mathsf{nth}({#1},{#2})}
\newcommand{\iS}{i_s}
\newcommand{\nS}{n_s}
\newcommand{\cS}{c_s}
\newcommand{\cSz}{c_{s0}}
\newcommand{\vfS}{\vf_s}
\newcommand{\iT}{i_t}
\newcommand{\nT}{n_t}
\newcommand{\iR}{i_r}
\newcommand{\nR}{n_r}
\newcommand{\cR}{c_r}
\newcommand{\reveal}{\kvar{reveal}}
\newcommand{\Gru}{G_{\mathrm{RU}}}
\newtheorem{proofsk}{Proof sketch}
\newcommand{\proofcomplete}{\hspace*{\fill}$\Box$}
\newenvironment{proof}%
  {\begin{trivlist}\item[]{\normalsize\bf Proof}\hspace*{4mm}}%
  {\end{trivlist}}
\newenvironment{proofof}[1]%
  {\begin{trivlist}\item[]{\normalsize\bf Proof of #1}\hspace*{4mm}}%
  {\end{trivlist}}
  {\begin{trivlist}\item[]{\normalsize\bf Proof sketch of #1}\hspace*{4mm}}%
  {\end{trivlist}}
\newcommand{\corresp}{\varphi}
\newcommand{\nicorresp}{\corresp_{\mathrm{ni}}}
\newcommand{\prove}[1]{\mathrm{prove}^{#1}}
\newcommand{\noleak}{\mathrm{noleak}}
\newcommand{\distinct}{\mathsf{distinct}}
\newcommand{\Dnu}{D_{\mathsf{NU}}}
\newcommand{\Dnuarg}[1]{D_{\mathsf{NU}#1}}
\newcommand{\pp}{\mu}
\newcommand{\pptag}{{}^{\pp}}
\newcommand{\case}{c}
\newcommand{\shows}{\mathop{\models\!\!\!\!\!\Rightarrow}\nolimits}
\newcommand{\replidx}[1]{I_{#1}}
\newcommand{\sri}{{\cal I}} 
\newcommand{\vset}{{\cal V}}
\newcommand{\coll}{{\cal C}}
\newcommand{\scoll}{{\cal S}}
\newcommand{\sset}{{\cal S}}
\newcommand{\formula}{\mathrm{formula}}
\newcommand{\nonce}{\mathit{nonce}}
\newcommand{\keyseed}{\mathit{keyseed}}
\newcommand{\seed}{\mathit{seed}}
\newcommand{\host}{\mathit{host}}
\newcommand{\pkey}{\mathit{pkey}}
\newcommand{\concat}{\kwf{concat}}
\newcommand{\signature}{\mathit{signature}}
\newcommand{\verify}{\kwf{verify}}
\newcommand{\venv}{\rho}
\newcommand{\evalterm}{\Downarrow}
\newcommand{\noninj}{\mathrm{noninj}}
\newcommand{\stdtransfproba}[2]{\kwf{pstd}^{#1}_{#2}}
\newcommand{\algyc}{\varphi}
\newcommand{\mathf}{\psi}
\newcommand{\algtest}[3]{\mathrm{if}\ #1\ \mathrm{then}\ #2\ \mathrm{else}\ #3}
\newcommand{\yctran}[1]{\{\![#1]\!\}}
\newcounter{linenumber}
\newcommand{\firstline}{\nextline}
\newcommand{\nextline}{\addtocounter{linenumber}{1}\arabic{linenumber}\gdef\@currentlabel{\arabic{linenumber}}&&}
\newcommand*{\linelabel}[1]{\ltx@label{#1}}
\begin{document}

\makeRR

\tableofcontents
\newpage

\section{Introduction}

There exist two main approaches for analyzing security 
protocols.
In the computational model, messages are bitstrings, and the
adversary is a probabilistic polynomial-time Turing machine.
This model is close to the real execution of protocols, but
the proofs are usually manual and informal.
In contrast, in the symbolic, Dolev-Yao model, cryptographic primitives
are considered as perfect blackboxes, modeled by function symbols
in an algebra of terms, possibly with equations. The adversary can 
compute using only these blackboxes. This abstract model makes it easier
to build automatic verification tools, but the security proofs are 
in general not sound with respect to the computational model.

In contrast to most previous protocol verifiers, CryptoVerif 
works directly in the
computational model, without considering the Dolev-Yao model.
It produces proofs valid for any number of sessions of the protocol,
in the presence of an active adversary.
These proofs are presented as sequences of games,
as used by cryptographers~\cite{Shoup01b,Shoup02,Bellare06}:
the initial game represents the protocol to prove;
the goal is to bound the probability of breaking a certain security
property in this game;
intermediate games are obtained each from the previous one
by transformations such that the difference of probability
between consecutive games can easily be bounded;
the final game is such that the desired probability is obviously bounded
from the form of the game. (In general, it is simply 0 in that game.) 
The desired probability can then be easily bounded
in the initial game. 

We represent games in a process calculus. This calculus
is inspired by the pi-calculus and by the calculi 
of~\cite{Mitchell04} and of~\cite{Laud05}.
In this calculus, messages are bitstrings, and cryptographic
primitives are functions from bitstrings to bitstrings.
The calculus has a probabilistic semantics.
The main tool for specifying security properties is indistinguishability:
$Q$ is indistinguishable from $Q'$ up to probability $p$, $Q \approx_p Q'$, when the adversary
has probability at most $p$ of distinguishing $Q$ from $Q'$.
With respect to previous calculi mentioned above, our calculus introduces
an important novelty which is key for the automatic proof of security
protocols: the values of all variables during the execution of
a process are stored in arrays. For instance, $x[i]$ is the value of 
$x$ in the $i$-th copy of the process that defines $x$.
Arrays replace lists often used by cryptographers in their
manual proofs of protocols. For example, consider the standard security assumption on a message authentication code (MAC). 
Informally, this definition says that the adversary has a negligible
probability of forging a MAC, that is, that all correct MACs have
been computed by calling the MAC oracle (\emph{i.e.}, function). So, in cryptographic proofs,
one defines a list containing the arguments of calls to the MAC oracle,
and when checking a MAC of a message $m$, one can additionally check that 
$m$ is in this list, with a negligible change in probability.
In our calculus, the arguments of the MAC oracle are stored in
arrays, and we perform a lookup in these arrays in order to find
the message $m$. Arrays make it easier to automate proofs since they
are always present in the calculus: one does not need to add explicit
instructions to insert values in them, in contrast to the lists used in
manual proofs. Therefore, many trivially sound 
but difficult to automate syntactic transformations disappear.
Furthermore, relations between elements of arrays 
can easily be expressed by equalities, possibly involving
computations on array indices. 

CryptoVerif relies on a collection of game transformations, in order to
transform the initial protocol into a game on which
the desired security property is obvious.
The most important kind of transformations exploits
the security assumptions on cryptographic primitives in order to
obtain a simpler game.
As described in Section~\ref{sec:primdef}, these transformations
can be specified in a generic way: we represent the security assumption of each cryptographic primitive by an observational
equivalence $L \approx_p R$, where the processes $L$ and $R$ encode oracles:
they input the arguments of the oracle and send its result back.
Then, the prover can automatically
transform a process $Q$ that calls the oracles of $L$
(more precisely, contains as subterms terms that perform the same
computations as oracles of $L$) into a process $Q'$ that
calls the oracles of $R$ instead.
We have used this technique to specify  
several variants of shared-key and public-key encryption, signature,
message authentication codes, hash functions, Diffie-Hellman key agreement, 
simply by giving
the appropriate equivalence $L \approx_p R$ to the prover.
Other game transformations are syntactic
transformations, used in order to be able to apply an 
assumption on a cryptographic primitive, or to simplify the game
obtained after applying such an assumption.

In order to prove protocols, these game transformations
are organized using a proof strategy based on advice: when a transformation
fails, it suggests other transformations that should be applied before,
in order to enable the desired transformation. Thanks to this strategy,
simple protocols can often be proved in a fully automatic way. 
For delicate cases, CryptoVerif has an interactive mode, in which
the user can manually specify the transformations to apply.
It is often sufficient to specify a few well-chosen case distinctions and
transformations coming from
the security assumptions of primitives, by indicating the
concerned cryptographic primitive and the concerned secret key if any; 
the prover infers 
the intermediate syntactic transformations by the advice strategy.
This mode is helpful for instance for proving some public-key protocols, 
in which several security assumptions on primitives can be applied,
but only one leads to a proof of the protocol.
Importantly, CryptoVerif is always sound: whatever indications the user
gives, when the prover shows a security property of
the protocol, the property indeed holds assuming the given assumptions
on the cryptographic primitives.

CryptoVerif has been implemented in OCaml (more than 60000 lines of code) and 
is available at \url{http://cryptoverif.inria.fr/}.

\paragraph{Related Work}
Various methods have been proposed for verifying security protocols
in the computational model.
Following the seminal paper by Abadi and Rogaway~\cite{Abadi2002b}, 
many results show the 
soundness of the Dolev-Yao model with respect to the computational
model, which makes it possible to use Dolev-Yao provers in order to
prove protocols in the computational model (see, e.g.,~\cite{Cortier05,Janvier05,Comon08,Backes09,CortierCCS11} 
and the survey~\cite{CortierJAR2011}). However, these results
have limitations, in particular in terms of allowed cryptographic
primitives (they must satisfy strong security properties so that they
correspond to Dolev-Yao style primitives), and they require some
restrictions on protocols (such as the absence of key cycles).
A tool~\cite{Cortier06b} was developed based on~\cite{Cortier05}
to obtain computational proofs using the formal verifier AVISPA,
for protocols that rely on public-key encryption and signatures.

Several frameworks exist for formalizing proofs of protocols in the
computational model. 
Backes, Pfitzmann, and Waidner~\cite{Backes03,Backes04} 
designed an abstract cryptographic library and showed its soundness with 
respect to computational
primitives, under arbitrary active attacks. 
This framework has been used for a computationally-sound
machine-checked proof of the Needham-Schroeder-Lowe 
protocol~\cite{Sprenger06,Sprenger08}.
Canetti~\cite{Canetti01} introduced the notion of universal composability.
With Herzog~\cite{Canetti04}, they show how a Dolev-Yao-style
symbolic analysis can be used to prove security properties
of protocols within the
framework of universal composability, for a restricted class of 
protocols using public-key encryption as only cryptographic primitive.
Then, they use the automatic Dolev-Yao verification tool 
ProVerif~\cite{Blanchet04} for verifying protocols in this framework.
Process calculi have been designed for representing cryptographic
games, such as the probabilistic polynomial-time calculus of~\cite{Mitchell04}
and the cryptographic lambda-calculus of~\cite{Nowak10}.
%
%
Logics have also been designed for proving security protocols in the
computational model, such as the computational variant of PCL
(Protocol Composition Logic)~\cite{Turuani05,Datta06} and CIL
(Computational Indistinguishability Logic)~\cite{Barthe10}.
Canetti {\it et al.}~\cite{Canetti06b} use the framework of time-bounded
task-PIOAs (Probabilistic Input/Output Automata) to prove security
protocols in the computational model. This framework makes it possible
to combine probabilistic and non-deterministic behaviors.
These frameworks can be used to prove security properties of protocols
in the computational sense, but except for~\cite{Canetti04} which
relies on a Dolev-Yao prover, they have not been automated up to now,
as far as we know.

Several techniques have been used for directly mechanizing proofs
in the computational model.
Type systems~\cite{Laud05,Laud05b,Smith06,Courant07} provide 
computational security guarantees. For instance,
\cite{Laud05} handles shared-key and public-key encryption, with an unbounded
number of sessions, by relying on the Backes-Pfitzmann-Waidner
library. A type inference algorithm is given in~\cite{Backes06b}.
The recent tool OWL~\cite{Gancher23} also relies on a type system 
that provides computational security guarantees.
It supports MACs, public-key signatures, authenticated symmetric and public key encryption, random oracles, and the gap Diffie-Hellman assumption~\cite{Okamoto01}. It can prove secrecy and integrity properties.
In another line of research, 
a specialized Hoare logic was designed for proving asymmetric
encryption schemes in the random oracle
model~\cite{Courant08,Courant10}.

The tool CertiCrypt~\cite{Barthe09,Barthe08,Zanella09,Zanella10,Barthe11a} 
enables the
machine-checked construction and verification of 
cryptographic proofs by sequences of games~\cite{Shoup04,Bellare06}. 
It relies on the
general-purpose proof assistant Coq, which is widely believed to be correct.
Nowak {\it et al.}~\cite{Nowak07,Nowak08,Affeldt09} 
follow a similar idea by providing Coq proofs for several cryptographic
primitives. 
More recently, frameworks for cryptographic proofs in Coq, FCF~\cite{Petcher15}, and in Isabelle, CryptHOL~\cite{Basin20}, have been designed and used for proving cryptographic schemes.
EasyCrypt~\cite{Barthe11b}, the successor of CertiCrypt, no longer generates Coq proofs, but provides a higher automation level by relying on SMT solvers, which makes the tool easier to use.
Even if it focuses more on cryptographic primitives and schemes than on protocols, it has been used for proving some protocols such as one-round key exchange~\cite{Barthe15}, e-voting~\cite{Cortier17}, AWS key management~\cite{Almeida19}, and distance bounding~\cite{Boureanu21}.
These frameworks and tools can perform more subtle reasoning than CryptoVerif, at the cost of more user effort: the user has to give all games and guide the proof that the games are indistinguishable. That becomes tedious for large protocols, which require many large games.

The tool Squirrel~\cite{Baelde21}
relies on a computationally sound logic that allows to write interactive proofs of, e.g., stateful protocols. Still, it currently proves a security notion weaker than the standard one: the number of sessions of the protocol must be bounded independently of the security parameter (instead of being polynomial in the security parameter).

Independently, we have built the tool CryptoVerif~\cite{Blanchet07c} to help
cryptographers, not only for the verification, but also by generating the proofs
by sequences of games~\cite{Shoup04,Bellare06}, automatically or with little 
user interaction.  In particular, CryptoVerif generates the games, possibly
using the indications of which transformations to perform.
This tool extends considerably early
work by Laud~\cite{Laud03,Laud04} which was limited either to
passive adversaries or to a single session of the protocol. More
recently, T{\v s}ahhirov and Laud~\cite{Tsahhirov07,Laud09} developed
a tool similar to CryptoVerif but that represents games by dependency
graphs.  It handles public-key and shared-key encryption and
proves secrecy properties; it does not provide bounds on the
probability of success of an attack.

\paragraph{Outline}
The next section presents our process calculus for representing games,
with its syntax, type system, formal semantics, as well as the definition
of security properties.
Section~\ref{sec:truefacts} collects information about games and reasons
using it.
Section~\ref{sec:success} gives
criteria for proving security properties of protocols. 
Section~\ref{sec:gametransf} describes the game transformations that
we use for proving protocols. Section~\ref{sec:strategy} explains how
the prover chooses which transformation to apply at each
point. 

\paragraph{Notations}
We recall the following notations.
We denote by $\{ M_1/x_1, \ldots, \ab M_m/x_m\}$ the substitution
that replaces $x_j$ with $M_j$ for each $j \in \{1, \dots, m\}$.
The cardinal of a set or multiset $S$ is denoted by $|S|$.
Multisets $S$ are represented by functions that map each element $x$ of $S$
to the number of occurrences of $x$ in $S$, that is,
when $S$ is a multiset, $S(x)$ is the number of elements of $S$ equal to $x$.
We use $\multiunion$ for multiset union, defined by
$(S_1 \multiunion S_2)(x) = S_1(x) + S_2(x)$.
When $S$ and $S'$ are multisets, $\max(S, S')$ is the multiset 
such that $\max(S,S')(x) = \max(S(x), \ab S'(x))$.
The notation $\{ x_1 \mapsto a_1, \dots, x_m \mapsto a_m \}$
designates the function that maps $x_j$ for $a_j$ for each
$j \in \{1, \dots, m\}$ and is undefined for other inputs.
When $f$ is a function, $f[x \mapsto a]$ is the function that maps $x$ to
$a$ and all other elements as $f$.
If $S$ is a finite set, $x \randomchoice S$ chooses a random element
uniformly in $S$ and assigns it to $x$.
If $\Adv$ is a probabilistic algorithm, $x \leftarrow 
\Adv(x_1, \ldots, x_m)$ denotes the experiment of choosing random coins
$r$ and assigning to $x$ the result of running $\Adv(x_1, \ldots,
x_m)$ with coins $r$.
Otherwise, $x \leftarrow M$ is a simple assignment statement.
If $D$ is a discrete probability distribution, we denote by $D(a)$ 
the probability that $X = a$, $\Pr[X = a]$, where $X$ is a random variable 
with probability distribution~$D$.

\section{A Calculus for Cryptographic Games}\label{sec:process}

\subsection{Syntax and Informal Semantics}\label{sec:syntax}

\begin{figure}[tp]
\begin{defn}
\categ{M, N}{terms}\\ 
\entry{i}{replication index}\\
\entry{x[M_1, \ldots, M_m]}{variable access}\\ 
\entry{f(M_1, \ldots, M_m)}{function application}\\
\entry{\Res{x[\tup{i}]}{T}; N}{random number}\\
\entry{\assign{\pat}{M}{N}\ \ELSE N'}{assignment (pattern-matching)}\\
\entry{\assign{x[\tup{i}]:T}{M}{N}}{assignment}\\
\entry{\bguard{M}{N}{N'}}{conditional}\\
\entry{\kw{find}\uniqueopt\ (\mathop\bigoplus\nolimits_{j=1}^m
\vf_{j1}[\tup{i}] = i_{j1} \leq n_{j1}, \ldots, \vf_{jm_j}[\tup{i}] = i_{jm_j} \leq n_{jm_j}\ \kw{suchthat}}{}\\ 
\entry{\quad\kw{defined}(M_{j1}, \ldots, M_{jl_j}) \fand M'_j\ \kw{then}\ N_j)\ \ELSE N'}{array lookup}\\
\entry{\INSERT\ \tbl(M_1,\ldots,M_l);N}{insert in table}\\
\entry{\GET\uniqueopt\ \tbl(\pat_1,\ldots,\pat_l)\ \SUCHTHAT\ M\ \IN\ N\ \ELSE N'}{get from table}\\
\entry{\kevent{e(M_1, \ldots, M_l)}; N}{event}\\
\entry{\keventabort{e}}{event $e$ and abort}
\end{defn}
\begin{defn}
\categ{\pat}{pattern}\\
\entry{x[\tup{i}]:T}{variable}\\
\entry{f(\pat_1, \ldots, \pat_m)}{function application}\\
\entry{{=}M}{comparison with a term}
\end{defn}
\begin{defn}
\categ{Q}{input process}\\
\entry{0}{nil}\\
\entry{Q \parpop Q'}{parallel composition}\\
\entry{\repl{i}{n}{Q}}{replication $n$ times}\\
\entry{\Reschan{c}; Q}{channel restriction}\\
\entry{\cinput{c[M_1, \ldots, M_l]}{\pat}; P}{input}
\end{defn}
\begin{defn}
\categ{P}{output process}\\
\entry{\coutput{c[M_1, \ldots, M_l]}{N}; Q}{output}\\
\entry{\Res{x[\tup{i}]}{T}; P}{random number}\\
\entry{\assign{\pat}{M}{P}\ \ELSE P'}{assignment}\\
\entry{\bguard{M}{P}{P'}}{conditional}\\
\entry{\FIND\uniqueopt \ (\mathop\bigoplus\nolimits_{j=1}^m \vf_{j1}[\tup{i}] = i_{j1} \leq n_{j1}, \ldots, \vf_{jm_j}[\tup{i}] = i_{jm_j} \leq n_{jm_j}\ \SUCHTHAT}{}\\
\entry{\quad \defined (M_{j1}, \ldots, M_{jl_j}) \fand M_j \THEN P_j)\ \ELSE P}{array lookup}\\
\entry{\INSERT\ \tbl(M_1,\ldots,M_l);P}{insert in table}\\
\entry{\GET\uniqueopt\ \tbl(\pat_1,\ldots,\pat_l)\ \SUCHTHAT\ M\ \IN\ P\ \ELSE P'}{get from table}\\
\entry{\kevent{e(M_1, \ldots, M_l)}; P}{event}\\
\entry{\keventabort{e}}{event $e$ and abort}\\
\entry{\kw{yield}}{end}
\end{defn}
\caption{Syntax of the process calculus}\label{fig:syntax}
\end{figure}

CryptoVerif represents games in the syntax of Figure~\ref{fig:syntax}.
This calculus assumes a countable set of channel names, 
denoted by $c$.
It uses parameters, denoted by $n$, which are integers that bound the
number of executions of processes.

It also uses types, denoted by $T$, which are non-empty, countable
sets of values. We assume that there exists an efficient injection
from each type to the set of bitstrings, and that its inverse is
also efficiently computable.
A type is \emph{fixed} when it is the
set of all bitstrings of a certain length; a type is \emph{bounded}
when it is a finite set.
Particular types are predefined: $\bool
= \{ \true,\false\}$, where $\false$ is 0 and $\true$ is 1; 
$\bitstring$ is the set of all bitstrings; 
$\bitstringbot = \bitstring \cup \{ \bot \}$ where
$\bot$ is a special symbol; $[1,n]$ is the set of integers $\{1, \dots, n\}$,
where $n$ is a parameter.
(We consider integers as bitstrings without leading zeroes.)

The calculus also uses function symbols $f$. Each function
symbol comes with a type declaration $f : T_1 \times \ldots \times T_m
\rightarrow T$, and represents an efficiently computable, 
deterministic function that maps each tuple in 
$T_1 \times \ldots \times T_m$ to an element of $T$.  
Particular functions are predefined, and some of them use the infix
notation: $M = N$ for the equality test, $M \neq N$ for the inequality test 
(both taking two values of the same type 
$T$ and returning a value of type $\bool$), $M\for N$
for the boolean or, $M\fand N$ for the boolean and, $\fnot M$ for the
boolean negation (taking and returning values of type $\bool$),
tuples $(M_1, \ldots, M_m)$ (taking values of any types and returning
values of type $\bitstring$; tuples are assumed to provide 
unambiguous concatenation, with tags for the types of $M_1, \ldots, M_m$
so that tuples of different types are always different);
test $\iffun(M_1, M_2, M_3)$ (with a first argument of type $\bool$ and the
last two arguments of the same type $T$; it returns a value of type $T$:
$M_2$ when $M_1$ is $\true$ and $M_3$ when $M_1$ is $\false$).

In this calculus, terms represent computations on bitstrings.
The replication index $i$ is an integer which serves in distinguishing
different copies of a replicated process $\repl{i}{n}$.
(Replication indices are typically used as array indices.)
The variable access $x[M_1, \ldots, M_m]$ returns the content
of the cell of indices $M_1, \ldots, M_m$ of the $m$-dimensional 
array variable $x$.
We use $x, y, z, \vf$ as variable names.
The function application $f(M_1, \ldots, M_m)$ returns the
result of applying function $f$ to $M_1, \ldots, M_m$
Terms contain additional constructs which are very similar to
those also included in output processes and explained below.
These constructs conclude by evaluating a term, instead of
executing a process. The construct $\keventabort{e}$ executes event $e$
(without argument) and aborts the game.

The calculus distinguishes two kinds of processes: input processes
$Q$ are ready to receive a message on a channel; output processes $P$ 
output a message on a channel after executing some internal computations.
The input process 0 does nothing; $Q \parpop Q'$
is the parallel composition of $Q$ and $Q'$; 
$\repl{i}{n}Q$ represents $n$ copies of $Q$ in parallel,
each with a different value of $i \in [1,n]$;
$\Reschan{c};Q$ creates a new private channel $c$ and executes $Q$;
this construct is useful in proofs, but does not occur
in games manipulated by CryptoVerif.
The semantics of the input $c[M_1, \ldots, M_l](\pat);P$
will be explained below together with the
semantics of the output. 

The output process $\Res{x[\tup{i}]}{T};P$ chooses a new
random value in $T$,
stores it in $x[\tup{i}]$, and executes $P$.
The abbreviation $\tup{i}$ stands for a sequence of replication indices
$i_1, \ldots, i_m$.
The random value is chosen according to the default distribution $D_T$
for type $T$, which is determined as follows:
\begin{itemize}

\item When the type $T$ is declared with option \emph{nonuniform}, 
the default probability 
distribution $D_T$ for type $T$ may be non-uniform. It is left unspecified.

\item Otherwise, if $T$ is \emph{fixed}, $T$ consists of all bitstrings
of a certain length, and the default distribution is the uniform distribution.
The probability of each element of $T$ is $1/|T|$.

\item If $T$ is \emph{bounded} but not \emph{fixed}, $T$ is finite, 
and the default distribution is 
an approximately uniform distribution, such that its distance
to the uniform distribution is at most $\epsilon_T$. 
The distance between
two probability distributions $D_1$ and $D_2$ for type $T$ is
\[d(D_1, D_2) = \frac{1}{2} \sum_{a \in T} \left| D_1(a) - D_2(a) \right|\]
Indeed, probabilistic Turing machines that run in bounded time cannot choose random elements exactly uniformly in sets whose cardinal is not a power of 2.

For example, a possible algorithm to obtain a random integer in $[0,
m-1]$ is to choose a random integer $x'$ uniformly among $[0, 2^k-1]$
for a certain $k$ large enough and return $x' \bmod m$. 
By euclidean division, we have $2^k = qm+r$ with $r \in [0,m-1]$.
With this algorithm
\[D(a) = \begin{cases}
\frac{q+1}{2^k} &\text{if }a \in [0,r-1]\\
\frac{q}{2^k} &\text{if }a \in [r,m-1]
\end{cases}\]
so
\[\left|D(a) -\frac{1}{m}\right| = \begin{cases}
\frac{q+1}{2^k} - \frac{1}{m}&\text{if }a \in [0,r-1]\\
\frac{1}{m} - \frac{q}{2^k}&\text{if }a \in [r,m-1]
\end{cases}\]
Therefore
\[\begin{split}
d(D_T, \mathit{uniform}) 
&=  \frac{1}{2} \sum_{a \in T} \left| D(a) - \frac{1}{m} \right|
= \frac{1}{2} r\left(\frac{q+1}{2^k} - \frac{1}{m}\right) - \frac{1}{2} (m-r)\left(\frac{1}{m} - \frac{q}{2^k}\right)\\
&=\frac{r(m-r)}{m.2^k} \leq \frac{m}{2^{k+1}}
\end{split}\]
so we can take $\epsilon_T = \frac{m}{2^{k+1}}$. A given precision of $\epsilon_T = \frac{1}{2^{k'}}$ can be obtained by choosing $k = (k' + \text{number of bits of }m)$ random bits. 

By default, CryptoVerif does not display $\epsilon_T$ in probability
formulas, to make them more readable.

\end{itemize}
When $T$ is not declared with any of the options $\emph{nonuniform}$,
$\emph{fixed}$, or $\emph{bounded}$, CryptoVerif rejects the construct
$\Res{x[\tup{i}]}{T};P$.
Function symbols represent deterministic functions,
so all random numbers must be chosen by $\Res{x[\tup{i}]}{T}$.
Deterministic functions make automatic syntactic manipulations
easier: we can duplicate a term without changing its value.

The process $\assign{x[\tup{i}]:T}{M}{P}$
stores the value of $M$ (which must be in $T$)
in $x[\tup{i}]$ and executes $P$.
Furthermore, we say that a function 
$f : T_1 \times \ldots \times T_m \rightarrow T$
is \emph{efficiently injective} when it is injective and its inverses 
are efficiently computable,
that is, there exist functions $f_j^{-1}: T \rightarrow T_j$ 
($1 \leq j \leq m$) such that 
$f_j^{-1}(f(x_1, \ldots, x_m)) = x_j$ and $f_j^{-1}$ is efficiently computable. 
When $f$ is efficiently injective, we define a pattern 
matching construct
$\assign{f(x_1, \ldots, x_m)}{M}{P}\ \ELSE P'$ as an abbreviation for
$\assign{y:T}{M}\ab
\assign{x'_1:T_1}{f_1^{-1}(y)}\ab
\ldots \ab
\assign{x'_m:T_m}{f_m^{-1}(y)}\ab
\bguard{f(x'_1, \ldots, x'_m) = y}{(\assign{x_1:T_1}{x'_1}\ldots \assign{x_m:T_m}{x'_m}P)}{P'}$
where $y, x'_1, \dots, x'_m$ are fresh variables.
(The variables $x'_1, \dots, x'_m$ are introduced to make sure that none
of the variables $x_1, \dots, x_m$ is defined when the pattern-matching fails.)
We naturally generalize this construct to $\assign{\pat}{M}{P}\ \ELSE P'$
where $\pat$ is built from variables, efficiently injective functions, and
equality tests. When $\pat$ is simply a variable, the pattern-matching
always succeeds, so the $\kw{else}$ branch of the assignment is
never executed and can be omitted. 

The process $\kevent{e(M_1,\ldots, M_l)}; P$ executes the event
$e(M_1,\ldots, M_l)$, then runs $P$. This event records that a certain
program point has been reached with certain values of $M_1, \ldots,
M_l$, but otherwise does not affect the execution of the process.
Events are used in particular for specifying security properties.

The process $\keventabort{e}$ executes event $e$ (without argument)
and aborts the game.

Next, we explain the process 
$\FIND \uniqueopt \ (\mathop\bigoplus\nolimits_{j=1}^{m} \vf_{j1}[\tup{i}] = i_{j1} \leq n_{j1}, \ldots, \vf_{jm_j}[\tup{i}] = i_{jm_j} \leq n_{jm_j}$ $\SUCHTHAT $ $\defined (M_{j1}, \ldots, M_{jl_j})\fand M_j\THEN P_j)\ \ELSE P$.
The order and array indices on tuples are taken component-wise, so for
instance, $\vf_{j1}[\tup{i}] = i_{j1} \leq n_{j1}, \ldots, \vf_{jm_j}[\tup{i}]
= i_{jm_j} \leq n_{jm_j}$ can be further abbreviated $\tup{\vf_j}[\tup{i}]
= \tup{i_j} \leq \tup{n_j}$.
A simple example is the following:
$\FIND$ $\vf = i \leq n$ $\SUCHTHAT$ $\defined(x[i]) \fand {x[i] = a}$
$\kw{then}$ $P'$ $\kw{else}$ $P$
tries to find an index $i$ such that $x[i]$ is defined and
$x[i] = a$, and when such an $i$ is found, it stores it in $\vf$ and
executes $P'$ with that value of $\vf$;
otherwise, it executes $P$.
In other words, this $\FIND $ construct looks for the value
$a$ in the array $x$, and when $a$ is found, it stores in
$\vf$ an index such that $x[\vf] = a$. Therefore, the $\FIND $ construct
allows us to access arrays, which is key for our purpose.
More generally, $\FIND \ \vf_{1}[\tup{i}] = i_1 \leq n_{1}, \ldots, \vf_{m}[\tup{i}] = i_m \leq n_{m}$ $\SUCHTHAT $ $\defined (M_{1}, \ldots, M_{l})\fand M\THEN P'\ \ELSE P$ tries to find values of $i_1, \ldots, i_m$ for which
$M_1, \ldots, M_l$ are defined and $M$ is true. In case of success, it
stores the obtained values in $\vf_{1}[\tup{i}], \ldots, \vf_{m}[\tup{i}]$ and 
executes $P'$. In case of failure, it executes $P$.
This is further generalized to $m$ branches: $\FIND \ (\mathop\bigoplus\nolimits_{j=1}^{m} \vf_{j1}[\tup{i}] = i_{j1} \leq n_{j1}, \ldots, \vf_{jm_j}[\tup{i}] = i_{jm_j} \leq n_{jm_j}$ $\SUCHTHAT $ $\defined (M_{j1}, \ldots, M_{jl_j})\fand M_j\THEN P_j)\ \ELSE P$ tries to find a branch $j$ in $[1,m]$ such that there are 
values of $i_{j1}, \ldots, i_{jm_j}$ for which 
$M_{j1}, \ldots, M_{jl_j}$ are defined and $M_j$ is true. In case of 
success, it stores them in $\vf_{j1}[\tup{i}], \ldots, \vf_{jm}[\tup{i}]$ and executes $P_j$.
In case of failure for all branches, it executes $P$. 
More formally, it evaluates the conditions
$\defined(M_{j1}, \ldots, M_{jl_j}) \fand M_j$ for each $j$ and
each value of $i_{j1}, \ldots, i_{jm_j}$ in $[1, n_{j1}] 
\times \ldots \times [1, n_{jm_j}]$.
If none of these conditions is $\true$, it executes $P$.
Otherwise, it chooses randomly 
one $j$ and one value of $i_{j1}, \ldots, i_{jm_j}$
such that the corresponding condition is $\true$, according to the
distribution $D_{\FIND}(S)$ where $S$ is the set of possible solutions 
$j, i_{j1}, \ldots, i_{jm_j}$, stores
it in $\vf_{j1}[\tup{i}], \ldots, \vf_{jm_j}[\tup{i}]$, and executes $P_j$.
The distribution $D_{\FIND}(S)$ is almost uniform: formally,
the distance between $D_{\FIND}(S)$ and the uniform distribution 
is at most $\epsilon_{\FIND}/2$, that is, $d(D_{\FIND}(S), \mathit{uniform}) \leq \epsilon_{\FIND}/2$,
that is, $\frac{1}{2} \sum_{v \in S} \left| D_{\FIND}(S)(v) - \frac{1}{|S|} \right| \leq \frac{\epsilon_{\FIND}}{2}$,
so that, when $|S| = |S'|$, for any bijection $\phi : S \rightarrow S'$,
$\frac{1}{2} \sum_{v \in S} \left| D_{\FIND}(S)(v) - D_{\FIND}(S')(\phi(v)) \right| \leq \epsilon_{\FIND}$.
Moreover $D_{\FIND}(S)(v_i) = D_{|S|}(i)$ where $S = \{ v_1, \dots, v_{|S|} \}$ with the values $v_i$ ordered in increasing order lexicographically, for some distribution $D_{|S|}$ that depends only on the cardinal of $S$. In other words, the probability of a value $v_i$ in the distribution $D_{\FIND}(S)$ does not depend on the values in the set $S$ but only on the number $|S|$ of elements of $S$ and on the position $i$ of the value $v_i$ in $S$ ordered in increasing order lexicographically. Therefore, transformations that do not modify the number of successful values nor their order, that is, transformations that map elements $v$ of $S$ to elements $\phi(v)$ of $S'$ at the same position $i$, preserve the probabilities exactly and we do not need to add $\epsilon_{\FIND}$ to the probability when we apply such a transformation. This is true for instance when we remove a branch of $\FIND$ that is never taken.
By default, CryptoVerif does not display $\epsilon_{\FIND}$ in probability
formulas, to make them more readable.
We cannot take the first element found because the game transformations made by CryptoVerif may reorder the elements. For these transformations to preserve the behavior of the game, the distribution of the chosen element must be invariant by reordering, up to a small probability $\epsilon_{\FIND}$.
In this definition, the variables $i_{j1}, \ldots, i_{jm_j}$ are
considered as replication indices, while $\vf_{j1}[\tup{i}], \ldots, 
\vf_{jm_j}[\tup{i}]$ are considered as array variables.
The indication $\uniqueopt$ stands for either $\unique{e}$
or empty. The empty case has just been explained. 
When the $\FIND$ is marked $\unique{e}$ and there are several solutions
that make the condition of the $\FIND$ evaluate to $\true$,
we execute the event $e$ and abort
the game. When there is zero or one solution,
the $\FIND$ is executed as when $\uniqueopt$ is empty. 
This semantics allows us to perform game transformations
that require the $\FIND$ to have a single solution.

The conditional $\bguard{M}{P}{P'}$
executes $P$ if $M$ evaluates to $\true$. Otherwise, it executes $P'$. 
\bbnote{At some point, I said that $\bguard{M}{P}{P'}$ was a particular
case of $\FIND$, but that does not work well, because it does not satisfy
the same invariants: variables defined in $M$ may have array accesses, in contrast to those defined in conditions of $\FIND$; $\kw{event}$ and $\INSERT$ are allowed in $M$.}%
CryptoVerif also supports 
the conditional $\bguard{\defined(M_1, \ldots, M_l) \fand M}{P}{P'}$, which
executes $P$ if $M_1, \ldots, M_l$ are defined and $M$ evaluates to
$\true$. Otherwise, it executes $P'$. This conditional is internally
encoded as
$\FIND$ $\SUCHTHAT$ $\defined(M_1, \ldots, M_l) \fand M$ $\kw{then}$ $P$ $\ELSE P'$. The conjunct $M$ can be omitted when it is $\true$, writing
$\bguard{\defined(M_1, \ldots, M_l)}{P}{P'}$.

The constructs $\kw{insert}$ and $\kw{get}$ handle tables, used for instance
to store the keys of the protocol participants. A table can be represented 
as a list of tuples; $\kw{insert}$ $\tbl(M_1, \ab \ldots, \ab M_l); P$ inserts the
element $M_1, \ab \ldots, \ab M_l$ in the table $\tbl$; $\GET$ $\tbl(x_1:T_1,\ldots,x_l:T_l)$ $\SUCHTHAT$ $M$ $\IN$ $P$ $\kw{else}$ $P'$ tries to retrieve an element $(x_1, \ab \ldots, \ab x_l)$ in the table $\tbl$ such that $M$ is true. When such an element is found, it executes $P$ with $x_1, \ab \ldots, \ab x_l$ bound to that element. (When several such elements are found, one of them is chosen randomly according to distribution $D_{\kw{get}}(S)$ where $S$ is the set of indices of suitable elements, with $d(D_{\kw{get}}(S), \mathit{uniform}) \leq \epsilon_{\FIND}/2$.) When no such element is found, $P'$ is executed.
We can generalize this construct to patterns instead of variables similarly
to the $\kw{let}$ case.
As in the case of $\FIND$, the indication $\uniqueopt$ stands for either $\unique{e}$
or empty. The empty case has just been explained. 
When the $\GET$ is marked $\unique{e}$ and there are several solutions,
we execute the event $e$ and abort
the game. When there is zero or one solution,
the $\GET$ is executed as when $\uniqueopt$ is empty. 
CryptoVerif internally translates the $\kw{insert}$ and $\kw{get}$ constructs into $\FIND$.

Let us explain the output 
$\overline{c[M_1, \ldots, M_l]}\langle N\rangle;Q$.
A channel $c[M_1, \ab \ldots, \ab M_l]$ consists of both a channel name 
$c$ and a tuple of terms
$M_1, \ldots, M_l$. Channel names $c$ can be declared private by $\Reschan{c}$;
the adversary can never have access to channel $c[M_1, \ab \ldots, \ab M_l]$ when $c$ is private.
(This is useful in the proofs, although all channels of protocols
are often public.)
Terms $M_1, \ldots, M_l$ are intuitively analogous to IP addresses and
ports, which are numbers that the adversary may guess.
A semantic configuration always consists
of a single output process (the process currently being executed) 
and several input processes.
When the output process executes $\coutput{c[M_1, \ldots, M_l]}{N};Q$, 
one looks
for an input on channel $c[M'_l \ldots, M'_l]$, where $M'_1, \ldots, M'_l$
evaluate to the same bitstrings as $M_1, \ldots, M_l$, in the available input processes. 
If no such input process is found, the process blocks.
Otherwise, one such input process 
$c[M'_1, \ldots, M'_l](x[\tup{i}]:T);P $ is chosen randomly according
to the probability distribution $D_{\kw{in}}(S)$ where $S$ is the multiset of 
suitable input processes. The communication is then executed:
the output message $N$ is evaluated and
stored in $x[\tup{i}]$ if it is in $T$
(otherwise the process blocks). Finally, the output process $P$
that follows the input is executed. The input process $Q$ that
follows the output is stored in the available input processes
for future execution. 
The input construct can be generalized to patterns instead of variables 
similarly to the $\kw{let}$ case; when pattern-matching fails, 
the input process executes $\kw{yield}$.
The syntax requires an output
to be followed by an input process, as in~\cite{Laud05}. If one
needs to output several messages consecutively, one can simply
insert fictitious inputs between the outputs. The adversary can
then schedule the outputs by sending messages to these inputs.

Using different channels for each input and output allows the adversary
to control the network.\label{distinctchannels} For instance, we may write 
$\repl{i}{n} \cinput{c[i]}{x[i]:T} \ldots \coutput{c'[i]}{M}\ldots$
The adversary can then decide which copy of the replicated process
receives its message, simply by sending it on $c[i]$ for the
appropriate value of $i$.

The $\kw{yield}$ construct is an abbreviation for
$\coutput{\mathit{yield}}{()}$.
By performing an output, this construct returns control
to the adversary, which is going to receive the message.
An $\kw{else}$ branch of $\FIND$, $\kw{if}$, $\kw{get}$, 
or $\kw{let}$  may be omitted
when it is $\ELSE \kw{yield}$.
(Note that ``$\ELSE 0$'' would not be syntactically correct.)
Similarly, $; \kw{yield}$ may be omitted after $\kw{event}$,
$\kw{new}$, or $\kw{insert}$ and $\kw{in}\ \kw{yield}$ may be omitted 
after $\kw{let}$.
A trailing 0 after an output may be omitted.

The \emph{current replication indices} at a certain program point
in a process are the replication indices $i_1, \ldots, i_m$ 
bound by replications and $\kw{find}$ above that program point. 
The replication $\repl{i}{n} Q$ binds the replication index $i$ in $Q$.
The $\kw{find}$ construct
$\FIND\uniqueopt \ (\mathop\bigoplus\nolimits_{j=1}^m \vf_{j1}[\tup{i}] = i_{j1} \leq n_{j1}, \ldots, \vf_{jm_j}[\tup{i}] = i_{jm_j} \leq n_{jm_j}\ \SUCHTHAT\ 
\defined (M_{j1}, \ldots, M_{jl_j}) \fand M_j \THEN \ldots)\ \ELSE \ldots$
binds the replication indices $i_{j1}$, \ldots, $i_{jm_j}$ in
$\defined (M_{j1}, \ldots, M_{jl_j}) \fand M_j$. 
We often abbreviate
$x[i_1, \ldots, i_m]$ by $x$ when $i_1, \ldots, i_m$ are the current
replication indices at the definition of $x$, but it should be kept in
mind that this is only an abbreviation.  
Variables defined under a replication must be arrays: for example
$\repl{i_1}{n_1}\ldots\repl{i_m}{n_m} \assign{x[i_1, \ldots, i_m]: T}{M}\ldots$  
More formally,
we require the following invariant:

\begin{invariant}[Single definition]
\label{inv1}
The process $Q_0$ satisfies Invariant~\ref{inv1} if and only if
\begin{enumerate}

\item \label{inv1p1}in every definition of $x[i_1, \ldots, i_m]$ in $Q_0$, 
the indices $i_1, \ldots, i_m$ of $x$ are the current replication
indices at that definition, and

\item \label{inv1p2}two different definitions of the same variable $x$ in $Q_0$
  are in different branches of a $\FIND$, $\kw{if}$ (or $\kw{let}$), or $\kw{get}$.

  In a $\kw{let}$ with pattern-matching, $\assign{\pat}{M}{P}\ \ELSE P'$, the variables bound by $\pat$ are considered to be defined in the $\kw{in}$ branch; however, the variables defined in $M$ and in terms included in the pattern $\pat$ are defined before the branching.

  In $\GET\uniqueopt\ \tbl(\pat_1,\ldots,\pat_l)\ \SUCHTHAT\ M\ \IN\ P\ \ELSE P'$, the variables bound by $\pat_j$ for $j \leq l$ and the variables defined in $M$ and in terms included in the patterns $\pat_j$ are (temporarily) defined before the branching.

\end{enumerate}
\end{invariant}
Invariant~\ref{inv1} guarantees that each variable
is assigned at most once for each value of its indices. (Indeed,
item~\ref{inv1p2} shows that only one definition of each variable
can be executed for given indices in each trace.)
A definition of $x[\tup{i}]$ can be $\Res{x[\tup{i}]}{T}$,
a $\kw{let}$, $\kw{get}$, or input that contains the pattern
$x[\tup{i}] : T$, or $\kw{find} \ldots x[\tup{i}] = i \leq n \ldots$.

\begin{invariant}[Defined variables]
\label{inv2}
The process $Q_0$ satisfies Invariant~\ref{inv2} if and only if
every occurrence of a variable access $x[M_1, \ldots, M_m]$ in
$Q_0$ is either
\begin{itemize}

\item syntactically under the definition of $x[M_1, \ldots, M_m]$
(in which case $M_1, \ldots, M_m$ are in fact the current replication 
indices at the definition of $x$);

\item or in a $\defined $ condition in a $\FIND $ process or term;

\item or in $M'_j$ in a process or term of the form 
$\FIND$ $(\mathop\bigoplus\nolimits_{j=1}^{m''} \tup{\vf_j}[\tup{i}] = \tup{i_j} \leq \tup{n_j}$ $\SUCHTHAT$ $\defined (M'_{j1}, \ab \ldots, \ab M'_{jl_j}) \fand M'_j$ $\kw{then}$ $P_j)$ $\kw{else}$ $P$
where for some $k \leq l_j$, $x[M_1, \ldots, M_m]$ is a subterm of $M'_{jk}$.

\item or in $P_j$ in a process or term of the form 
$\FIND$ $(\mathop\bigoplus\nolimits_{j=1}^{m''} \tup{\vf_j}[\tup{i}] = \tup{i_j} \leq \tup{n_j}$ $\SUCHTHAT$ $\defined (M'_{j1}, \ab \ldots, \ab M'_{jl_j}) \fand M'_j$ $\kw{then}$ $P_j)$ $\kw{else}$ $P$
where for some $k \leq l_j$, there is a subterm $N$ of $M'_{jk}$
such that $N \{ \tup{\vf_j}[\tup{i}] / \tup{i_j} \} = x[M_1, \ldots, M_m]$.

\end{itemize}
\end{invariant}
Invariant~\ref{inv2} guarantees
that variables can be accessed only when they have been initialized. 
It checks that the definition of the variable access
is either in scope (first item) or checked by a $\FIND $ (last two items).
The scope of variable definitions is defined as follows:
$x[\tup{i}]$ is syntactically under its definition when it is 
\begin{itemize}
\item inside $P$ in $\Res{x[\tup{i}]}{T}; P$;
\item inside $N$ in $\Res{x[\tup{i}]}{T}; N$;
\item inside $P$ in $\assign{\pat}{M}{P}\ \ELSE P'$ when $x[\tup{i}] : T$
is bound in the pattern $\pat$;
\item inside $N$ in $\assign{\pat}{M}{N}\ \ELSE N'$ when $x[\tup{i}] : T$
is bound in the pattern $\pat$;
\item inside $N$ in $\assign{x[\tup{i}]:T}{M}{N}$;
\item inside $P_j$ in $\FIND\uniqueopt \ (\mathop\bigoplus\nolimits_{j=1}^m \vf_{j1}[\tup{i}] = i_{j1} \leq n_{j1}, \ldots, \vf_{jm_j}[\tup{i}] = i_{jm_j} \leq n_{jm_j}$ $\SUCHTHAT$ $\defined (M_{j1}, \ldots, M_{jl_j}) \fand M_j \THEN P_j)\ \ELSE P$
when $x$ is $\vf_{jk}$ for some $k \leq m_j$;
\item inside $N_j$ in $\FIND\uniqueopt \ (\mathop\bigoplus\nolimits_{j=1}^m \vf_{j1}[\tup{i}] = i_{j1} \leq n_{j1}, \ldots, \vf_{jm_j}[\tup{i}] = i_{jm_j} \leq n_{jm_j}$ $\SUCHTHAT$ $\defined (M_{j1}, \ldots, M_{jl_j}) \fand M_j \THEN N_j)\ \ELSE N$
when $x$ is $\vf_{jk}$ for some $k \leq m_j$;
\item inside $M$ or $P$ in $\GET\uniqueopt\ \tbl(\pat_1,\ldots,\pat_l)\ \SUCHTHAT\ M\ \IN\ P\ \ELSE P'$ when $x[\tup{i}] : T$ is bound in one of the patterns $\pat_1, \ldots, \pat_l$;
\item inside $M$ or $N$ in $\GET\uniqueopt\ \tbl(\pat_1,\ldots,\pat_l)\ \SUCHTHAT\ M\ \IN\ N\ \ELSE N'$ when $x[\tup{i}] : T$ is bound in one of the patterns $\pat_1, \ldots, \pat_l$;
\item inside $P$ in $\cinput{c[M_1, \ldots, M_l]}{\pat}; P$ when $x[\tup{i}] : T$ is bound in the pattern $\pat$.
\end{itemize}
A variable access that does not correspond to the first item of 
Invariant~\ref{inv2} is called an \emph{array access}.
We furthermore require the following invariant.


\begin{invariant}[Variables defined in $\kw{find}$ and $\kw{get}$ conditions]\label{invfc}
The process $Q_0$ satisfies Invariant~\ref{invfc} with public variables $V$
if and only if
the variables defined in conditions of $\kw{find}$
and the variables defined in patterns and in conditions of $\kw{get}$ have
no array accesses and are not in the set of variables $V$.
\end{invariant}
These conditions are needed for variables of $\kw{get}$, because
they will be transformed into variables defined in conditions of $\kw{find}$
by the transformation of $\kw{get}$ into $\kw{find}$.

\begin{invariant}[Terms in $\kw{find}$ and $\kw{get}$ conditions]\label{invtfc}
  The process $Q_0$ satisfies Invariant~\ref{invtfc} if and only if
  $\kw{event}$ and $\kw{insert}$ do not occur in conditions of $\kw{find}$
  and $\kw{get}$. 
\end{invariant}
%
Invariant~\ref{invtfc} guarantees that evaluating the condition of a
$\kw{find}$ or $\kw{get}$ does not change the state of the system.

\begin{definition}
A term is \emph{simple} when it contains only replication indices, variables, and function applications.
\end{definition}

\begin{invariant}[Terms in input channels and $\defined$ conditions]\label{invtic}
\ The process $Q_0$ satisfies Invariant~\ref{invtic} if and only if
all terms in input channels $c[M_1, \ab \ldots, \ab M_l]$ and in
conditions $\defined(M_1, \ab \ldots, \ab M_l)$ in $\FIND$ are simple.
\end{invariant}
Terms that are not simple
are handled by expanding them into their corresponding processes.
Invariant~\ref{invtic} is needed because terms in input channels
cannot be expanded, as we need an output process to put
the computations coming from expanded terms,
and similarly terms in $\defined$ conditions cannot be expanded
(see the transformation $\rn{expand}$ in Section~\ref{sec:transfexpand}).
By combining this invariant with Invariant~\ref{inv2}, 
we see that the terms of all variable accesses $x[M_1, \ldots, M_m]$
are simple.

The last 3 invariants did not appear in previous versions of the calculus
because all terms were simple.

\begin{invariant}[Events]\label{inv:event}
We distinguish three disjoint sets of events $e$, Shoup events, non-unique events, and other events. The process $Q_0$ satisfies Invariant~\ref{inv:event} if and only if
\begin{itemize}
\item Shoup events occur only in processes of the form
$\keventabort{e}$ in $Q_0$, 
\item non-unique events occur only in $\FIND\unique{e}$ or $\GET\unique{e}$ in $Q_0$, and
\item other events occur in $\kevent{e(M_1,\dots, M_l)}$ or in $\keventabort{e}$ in $Q_0$.
\end{itemize}
\end{invariant}

The name ``Shoup events'' is used because these events are introduced
when applying Shoup's lemma~\cite{Shoup04} (see Section~\ref{sec:insertevent}).
The non-unique events are those triggered when a $\FIND\unique{e}$ or $\GET\unique{e}$ actually has several solutions.

All these invariants are checked by the prover for the initial game and
preserved by all game transformations.

We denote by $\fvar(P)$ the set of variables that occur in $P$,
$\vardef(P)$ the set of variables defined in $P$
($\fvar(P)$ may contain more variables than $\vardef(P)$
in case some variables are read using $\FIND$ but never defined),
and by $\fc(P)$ the set of free channels of $P$.
(We use similar notations for input processes.)

\subsection{Example}

Let us introduce two cryptographic primitives that we use below.

\begin{definition}\label{def:macsec}
Let $T_{mk}$ and $T_{ms}$ be types that correspond intuitively
to keys and message authentication codes, respectively;
$T_{mk}$ is a fixed-length type.
A message authentication code scheme $\MAC$~\cite{Bellare00c} 
consists of two function symbols:
\begin{itemize}

\item $\mac : \bitstring \times T_{mk} \rightarrow T_{ms}$ 
is the MAC algorithm taking as
arguments a message and a key, and returning the
corresponding tag. (We assume here that $\mac$ is deterministic; we
could easily encode a randomized $\mac$ by adding random coins as
an additional argument.)

\item $\mverify : \bitstring \times T_{mk} \times T_{ms} \rightarrow \bool$
is a verification algorithm such that $\mverify(m,k,t) = \true$
if and only if $t$ is a valid MAC of
message $m$ under key $k$.  (Since $\mac$ is
deterministic, $\mverify(m,k,t)$ is typically 
$\mac(m,k) = t$.)

\end{itemize}
We have $\forall m \in \bitstring, \forall k \in T_{mk},
\mverify(m,\ab k,\mac(m,k)) = \true$.

The advantage of an adversary against unforgeability under chosen message attacks (UF-CMA) is
\[\succufcma(t, q_m, q_v, l) = \max_{\Adv}\, \Pr\left[\begin{array}{@{}l@{}}
k \randomchoice T_{mk}; 
(m,s) \leftarrow \Adv^{\mac(.,k), \mverify(.,k,.)} : \mverify(m,k,s)\\
{} \wedge m\text{ was never queried to the oracle }\mac(.,k)
\end{array}\right] \] 
where the adversary $\cal A$ is any probabilistic Turing machine 
that runs in time at most $t$,
calls $\mac(.,k)$ at most $q_m$ times with messages of length at most $l$, and
calls $\mverify(.,k,.)$ at most $q_v$ times with messages of length at most $l$.
\end{definition}

$\succufcma(t, q_m, q_v, l)$ is the probability that an
adversary forges a MAC, that is, returns a pair $(m,s)$ where $s$ is
a correct MAC for $m$, without having queried the MAC oracle $\mac(.,k)$
on $m$. Intuitively, when the MAC is secure, this probability is small:
the adversary has little chance of forging a MAC. Hence, the MAC
guarantees the integrity of the MACed message because one cannot
compute the MAC without the secret key.

Two frameworks exist for expressing security properties.
In the asymptotic framework, 
used in~\cite{Blanchet07c,Blanchet07}, 
the length of keys is determined by
a security parameter $\secp$, and a MAC is UF-CMA when 
$\succufcma(t, q_m, q_v, l)$ is a negligible function of $\secp$
when $t$ is polynomial in $\secp$. 
($f(\secp)$ is \emph{negligible} when for all
polynomials $q$, there exists $\secp_o \in \mathbb{N}$ such that for
all $\secp > \secp_0$, $f(\secp) \leq \frac{1}{q(\secp)}$.)
The assumption that functions are efficiently computable means that
they are computable in time polynomial in $\secp$ and in the length of their
arguments. 
The goal is to show that
the probability of success of an attack against the protocol is negligible,
assuming the parameters $n$ are polynomial in $\secp$
and the network messages are of length polynomial in $\secp$.
In contrast, in the exact security framework, on which we focus 
in this report, one computes the probability of success of an attack against
the protocol as a function of the probability of breaking the
primitives such as $\succufcma(t, q_m, q_v, l)$, of the runtime of 
functions, of the parameters $n$, and of the length of messages, thus providing
a more precise security result. Intuitively,
the probability $\succufcma(t, q_m, q_v, l)$ is assumed to be small
(otherwise, the computed probability of attack will
be large),
but no formal assumption on this probability is needed 
to establish the security theorem.

\begin{definition}\label{def:encsec}
Let $T_k$, $T_r$, and $T_e$ be types for random coins, keys, and ciphertexts respectively. $T_k$ and $T_r$ are fixed-length types.
A symmetric encryption scheme $\SE$~\cite{Bellare00c} 
consists of two function symbols: 
\begin{itemize}
\item $\enc : \bitstring \times T_k \times T_r \rightarrow T_e$ is the encryption algorithm taking as arguments the cleartext, the key, and random coins, and returning the ciphertext, 

\item $\dec : T_e \times T_k \rightarrow \bitstringbot$ is the decryption algorithm taking as arguments the ciphertext and the key, and returning either the cleartext when decryption succeeds or $\bot$ when decryption fails, 

\end{itemize}
such that $\forall k \in T_k, \forall m \in \bitstring, \forall r \in T_r$, 
$\dec(\enc(m,k,r),k) = m$.

Let $LR(x,y,b) = x$ if $b = 0$ and $LR(x,y,b) = y$ if $b = 1$,
defined only when $x$ and $y$ are bitstrings of the same length.
The advantage of an adversary against indistinguishability under chosen
plaintext attacks (IND-CPA) is
\[
\succindcpa(t, q_e, l) = \max_{\Adv}\, 2 \Pr\left[
\begin{array}{@{}l@{}}
b \randomchoice \{ 0, 1\}; k \randomchoice T_k;\\
b' \leftarrow \Adv^{r \randomchoice T_r; \enc(LR(.,.,b), k, r)}: b' = b
\end{array}\right] - 1
\]
where $\Adv$ is any probabilistic
Turing machine that runs in time at most $t$ and
calls $r \randomchoice T_r;\ab enc(LR(.,.,b), k, r)$ at most $q_e$ times 
on messages of length at most $l$.
\end{definition}
Given two bitstrings $a_0$ and $a_1$ of the same length, the left-right 
encryption oracle $r \randomchoice T_r; \ab enc(LR(.,.,b), k, r)$ returns 
$r \randomchoice T_r; \enc(LR(a_0,a_1,b), k, r)$,
that is, encrypts $a_0$ when $b = 0$ and $a_1$ when $b=1$.
$\succindcpa(t, q_e, l)$ is the probability that the
adversary distinguishes the encryption of the messages $a_0$ given as
first arguments to the left-right encryption oracle from the
encryption of the messages $a_1$ given as second arguments.
Intuitively, when the encryption scheme is IND-CPA secure, this
probability is small: the ciphertext gives almost no information
on what the cleartext is (one cannot determine whether it is $a_0$ or $a_1$
without having the secret key).

\begin{example}\label{exa:running}
Let us consider the following trivial protocol:
\[\begin{split}
A \rightarrow B: e, \mac(e,\vn{mk})\quad 
\text{ where $e = \enc(\vn{k}',\vn{k},\vn{r})$}\\
\text{and $\vn{r},\vn{k}' $ are fresh random numbers}
\end{split}\]
$A$ and $B$ are assumed to share a key $\vn{k}$ for a symmetric encryption scheme
and a key $\vn{mk}$ for a message authentication code.
$A$ creates a fresh key $\vn{k}'$ and sends it encrypted under $\vn{k}$ to $B$.
A MAC is appended to the message, in order to guarantee integrity.
In other words, the protocol sends the key $\vn{k}'$ encrypted 
using an encrypt-then-MAC scheme~\cite{Bellare00c}.
The goal of the protocol is that $\vn{k}'$ should be a secret key shared between
$A$ and $B$.
This protocol can be modeled in our calculus by the following
process $Q_0$:
{\allowdisplaybreaks\begin{align*}
\begin{split}
&Q_0 = \cinput{\startch}{};
\Res{\vn{k}}{T_l};
\Res{\vn{mk}}{T_{mk}};
\coutput{c}{};(Q_A \parpop Q_B)
\end{split}\\
\begin{split}
&Q_A = \repl{i}{n}\cinput{c_A[i]}{};\Res{\vn{k}'}{T_k};\Res{\vn{r}}{T_r};\\*
&\phantom{Q_A =\, }
\assign{\vn{m}:\bitstring}{\enc(\ktob(\vn{k}'),\vn{k},\vn{r})}
\coutput{c_A[i]}{\vn{m}, \mac(\vn{m},\vn{mk})}
\end{split}\\
&Q_B = \repl{i'}{n}\cinput{c_B[i']}{\vn{m}',\vn{ma}};
\IF \mverify(\vn{m}',\vn{mk},\vn{ma})\ \kw{then}\\*
&\phantom{Q_B =\, }
\assign{\injbot(\ktob(\vn{k}''))}{\dec(\vn{m}',\vn{k})}\coutput{c_B[i']}{}
\end{align*}}%
When $Q_0$ receives a message on channel $\startch$, it begins execution: 
it generates the keys $\vn{k}$ and $\vn{mk}$ randomly.
Then it yields control to the adversary,
by outputting on channel $c$. After this output, $n$ copies of 
processes for $A$ and $B$ are ready to be executed, when the adversary
outputs on channels $c_A[i]$ or $c_B[i]$ respectively. 
In a session that runs as expected, the adversary first sends a message
on $c_A[i]$. Then $Q_A$ creates a fresh key $\vn{k}'$ ($T_k$ is
assumed to be a fixed-length type), encrypts it under $\vn{k}$
with random coins $\vn{r}$, computes the MAC under $\vn{mk}$ of the 
ciphertext,
and sends the ciphertext and the MAC on $c_A[i]$. 
The function $\ktob : T_k \rightarrow \bitstring$ is the natural injection
$\ktob(x) = x$; it is needed only for type conversion.
The adversary is
then expected to forward this message on $c_B[i]$. When $Q_B$ receives
this message, it verifies the MAC, decrypts, and stores the obtained key in 
$\vn{k}''$. 
(The function $\injbot: \bitstring \rightarrow \bitstringbot$
is the natural injection; it is useful to check that decryption succeeded.)
This key $\vn{k}''$ should be secret.

The adversary is responsible for forwarding messages from $A$ to $B$.
It can send messages in unexpected ways in order to mount an attack.

This very small example is sufficient to illustrate the main features
of CryptoVerif.
\end{example}

\subsection{Type System}\label{sec:typesystem}

We use a type system to check that bitstrings of the proper
type are passed to each function and that array indices are used correctly.

To be able to type variable accesses used not under their definition
(such accesses are guarded by a $\FIND $ construct), the type-checking
algorithm proceeds in two passes. In the first pass, it builds a type
environment $\tyenv$, which maps variable names $x$ to types $[1, n_1]
\times \ldots \times [1, n_m] \rightarrow T$, where
the definition of $x[i_1, \ldots, i_m]$ of type $T$ occurs under
replications or $\kw{find}$ that bind $i_1, \ldots, i_m$ with
declaration $i_j \leq n_j$.
(For instance, the definition of $x[i_1, \ldots, i_m]$ occurs under
$\repl{i_1}{n_1}$, \ldots, $\repl{i_m}{n_m}$ or it occurs in the
condition of $\kw{find}\ \vf_1 = i_1 \leq n_1, \ldots, \vf_m = i_m \leq
n_m$ under no replication. The type $T$ is the one given in the
definition of $x$ in $\Res{x[\tup{i}]}{T}$ or in a pattern $x[\tup{i}]:T$
in an assignment, an input, or a $\kw{get}$. In the $\kw{find}$ construct,
$\kw{find} \ldots x[\tup{i}] = i \leq n$, the type $T$ of $x$ is $T = [1,n]$.)
The tool checks that all definitions of the same variable $x$ yield
the same value of $\tyenv(x)$, so that $\tyenv$ is properly defined.

\begin{figure}[tp]
Typing rules for terms:
\begin{gather}
\frac{\tyenv(i) = T}{
\tyenv \vdash i : T}\tag{TIndex}\\[1mm]
\frac{\tyenv(x) = T_1 \times \ldots \times T_m \rightarrow T \qquad 
\forall j \leq m, \tyenv \vdash M_j : T_j}{
\tyenv \vdash x[M_1, \ldots, M_m] : T}\tag{TVar}\\[1mm]
\frac{f : T_1 \times \ldots \times T_m \rightarrow T \qquad 
\forall j \leq m, \tyenv \vdash M_j : T_j}{
\tyenv \vdash f(M_1, \ldots, M_m) : T}\tag{TFun}\\[1mm]
\frac{T \text{ \emph{fixed}, \emph{bounded}, or \emph{nonuniform}} \qquad 
\tyenv \vdash x[\tup{i}] : T \qquad
\tyenv \vdash N: T'}{
\tyenv \vdash \Res{x[\tup{i}]}{T}; N : T'}
\tag{TNewT}\\[1mm]
\frac{\tyenv \vdash M : T \qquad
\tyenv \vdash \pat : T \qquad
\tyenv \vdash N : T'\qquad
\tyenv \vdash N' : T'}{
\tyenv \vdash \assign{\pat}{M}{N}\ \ELSE N': T'}\tag{TLetT}\\[1mm]
\frac{\tyenv \vdash M : T \qquad
\tyenv \vdash x[\tup{i}] : T \qquad
\tyenv \vdash N : T'}{
\tyenv \vdash \assign{x[\tup{i}] : T}{M}{N}: T'}\tag{TLetT2}\\[1mm]
\frac{\tyenv \vdash M : \bool \qquad
\tyenv \vdash N: T \qquad
\tyenv \vdash N' : T}{
\tyenv \vdash \bguard{M}{N}{N'}:T}\tag{TIfT}\\[1mm]
\frac{\begin{array}{c}
\forall j \leq m, \forall k \leq m_j,
\tyenv \vdash \vf_{jk}[\tup{i}] : [1, n_{jk}]\\
\forall j \leq m, \forall k \leq l_j, 
\tyenv[i_{j1} \mapsto [1, n_{j1}], \ldots, i_{jm_j} \mapsto [1, n_{jm_j}]\,] \vdash M_{jk} : T_{jk}\\
\forall j \leq m, \tyenv[i_{j1} \mapsto [1, n_{j1}], \ldots, i_{jm_j} \mapsto [1, n_{jm_j}]\,]  \vdash M_j : \bool\\
\forall j \leq m, \tyenv \vdash N_j : T
\qquad
\tyenv \vdash N : T
\end{array}}{
\begin{array}{@{}c@{}}
\tyenv \vdash 
\FIND\uniqueopt \ (\mathop{\textstyle\bigoplus}\nolimits_{j=1}^{m}
\vf_{j1}[\tup{i}] = i_{j1} \leq n_{j1}, \ldots, \vf_{jm_j}[\tup{i}] = i_{jm_j} \leq n_{jm_j}\ 
\SUCHTHAT\\
\defined(M_{j1}, \ldots, M_{jl_j}) \fand M_j \THEN N_j)\ \ELSE N : T
\end{array}}\tag{TFindT}\\[1mm]
\frac{\tbl : T_1 \times \ldots \times T_l\qquad
\forall j \leq l, \tyenv \vdash M_j : T_j\qquad
\tyenv \vdash N:T}{
\tyenv \vdash \INSERT\ \tbl(M_1,\ldots,M_l);N : T}\tag{TInsertT}\\[1mm]
\frac{\tbl : T_1 \times \ldots \times T_l\qquad
\forall j \leq l, \tyenv \vdash \pat_j : T_j\qquad
\tyenv \vdash M : \bool\qquad
\tyenv \vdash N: T\qquad 
\tyenv \vdash N': T}{
\GET\uniqueopt\ \tbl(\pat_1,\ldots,\pat_l)\ \SUCHTHAT\ M\ \IN\ N\ \ELSE N': T}\tag{TGetT}\\[1mm]
\frac{e : T_1 \times \ldots \times T_l \qquad
\forall j \leq l, \tyenv \vdash M_j : T_j\qquad
\tyenv \vdash N: T}{
\tyenv \vdash \kevent{e(M_1, \ldots, M_l)}; N : T}\tag{TEventT}\\[1mm]
\frac{e:()}{\tyenv \vdash \keventabort{e}: T'}\tag{TEventAbortT}
\end{gather}
Typing rules for patterns:
\begin{gather}
\frac{\tyenv \vdash x[\tup{i}] :T}{
\tyenv \vdash (x[\tup{i}] :T) : T}\tag{TVarP}\\[1mm]
\frac{f : T_1 \times \ldots \times T_m \rightarrow T
\qquad \forall j \leq m, \tyenv \vdash \pat_j: T_j}{
\tyenv \vdash f(\pat_1, \ldots, \pat_m) : T}\tag{TFunP}\\[1mm]
\frac{\tyenv \vdash M : T}{
\tyenv \vdash {{=}M} : T}\tag{TEqP}
\end{gather}
\caption{Typing rules (1)}\label{fig:type1}
\end{figure}%

\begin{figure}[tp]
Typing rules for input processes:
\begin{gather}
\tyenv \vdash 0\tag{TNil}\\[1mm]
\frac{\tyenv \vdash Q \qquad \tyenv \vdash Q'}{\tyenv \vdash Q \parpop Q'}
\tag{TPar}\\[1mm]
\frac{\tyenv[i \mapsto [1,n]] \vdash Q}{\tyenv \vdash \repl{i}{n}{Q}}
\tag{TRepl}\label{typ:repl}\\[1mm]
\frac{\tyenv \vdash Q}{\tyenv \vdash \Reschan{c};Q}\tag{TNewChannel}\\[1mm]
\frac{\forall j \leq l, \tyenv \vdash M_j : T'_j\qquad
\tyenv \vdash \pat: T \qquad
\tyenv \vdash P}{
\tyenv \vdash \cinput{c[M_1, \ldots, M_l]}{\pat}; P}
\tag{TIn}\label{typ:in}
\end{gather}
Typing rules for output processes:
\begin{gather}
\frac{\forall j \leq l, \tyenv \vdash M_j : T'_j\qquad
\tyenv \vdash N : T\qquad
\tyenv \vdash Q}{
\tyenv \vdash \coutput{c[M_1, \ldots, M_l]}{N}; Q}
\tag{TOut}\label{typ:out}\\[1mm]
\frac{T \text{ \emph{fixed}, \emph{bounded}, or \emph{nonuniform}} \qquad 
\tyenv \vdash x[\tup{i}] : T \qquad
\tyenv \vdash P}{
\tyenv \vdash \Res{x[\tup{i}]}{T}; P}
\tag{TNew}\\[1mm]
\frac{\tyenv \vdash M : T \qquad
\tyenv \vdash \pat : T \qquad
\tyenv \vdash P\qquad
\tyenv \vdash P'}{
\tyenv \vdash \assign{\pat}{M}{P}\ \ELSE P'}\tag{TLet}\\[1mm]
\frac{\tyenv \vdash M : \bool \qquad
\tyenv \vdash P \qquad
\tyenv \vdash P'}{
\tyenv \vdash \bguard{M}{P}{P'}}\tag{TIf}\\[1mm]
\frac{\begin{array}{c}
\forall j \leq m, \forall k \leq m_j,
\tyenv \vdash \vf_{jk}[\tup{i}] : [1, n_{jk}]\\
\forall j \leq m, \forall k \leq l_j, 
\tyenv[i_{j1} \mapsto [1, n_{j1}], \ldots, i_{jm_j} \mapsto [1, n_{jm_j}]\,] \vdash M_{jk} : T_{jk}\\
\forall j \leq m, \tyenv[i_{j1} \mapsto [1, n_{j1}], \ldots, i_{jm_j} \mapsto [1, n_{jm_j}]\,] \vdash M_j : \bool\\
\forall j \leq m, \tyenv \vdash P_j
\qquad
\tyenv \vdash P 
\end{array}}{
\begin{array}{@{}c@{}}
\tyenv \vdash 
\FIND\uniqueopt \ (\mathop{\textstyle\bigoplus}\nolimits_{j=1}^{m}
\vf_{j1}[\tup{i}] = i_{j1} \leq n_{j1}, \ldots, \vf_{jm_j}[\tup{i}] = i_{jm_j} \leq n_{jm_j}\ 
\SUCHTHAT\\
\defined(M_{j1}, \ldots, M_{jl_j}) \fand M_j \THEN P_j)\ \ELSE P
\end{array}}\tag{TFind}\\[1mm]
\frac{\tbl : T_1 \times \ldots \times T_l\qquad
\forall j \leq l, \tyenv \vdash M_j : T_j\qquad
\tyenv \vdash P}{
\tyenv \vdash \INSERT\ \tbl(M_1,\ldots,M_l);P}\tag{TInsert}\\[1mm]
\frac{\tbl : T_1 \times \ldots \times T_l\qquad
\forall j \leq l, \tyenv \vdash \pat_j : T_j\qquad
\tyenv \vdash M : \bool\qquad
\tyenv \vdash P\qquad 
\tyenv \vdash P'}{
\GET\uniqueopt\ \tbl(\pat_1,\ldots,\pat_l)\ \SUCHTHAT\ M\ \IN\ P\ \ELSE P'}\tag{TGet}\\[1mm]
\frac{e : T_1 \times \ldots \times T_l \qquad
\forall j \leq l, \tyenv \vdash M_j : T_j\qquad
\tyenv \vdash P}{
\tyenv \vdash \kevent{e(M_1, \ldots, M_l)}; P}\tag{TEvent}\\[1mm]
\frac{e:()}{\tyenv \vdash \keventabort{e}} \tag{TEventAbort}\\[1mm]
\tyenv \vdash \kw{yield} \tag{TYield}
\end{gather}
\caption{Typing rules (2)}\label{fig:type2}
\end{figure}

In the second pass, the process is typechecked in the type environment $\tyenv$
using the rules of Figures~\ref{fig:type1} and~\ref{fig:type2}. These figures defines four judgments:
\begin{itemize}

\item $\tyenv \vdash M : T$ means that the term $M$ has type $T$ in environment
$\tyenv$. 

\item $\tyenv \vdash \pat : T$ means that the pattern $\pat$ has type $T$
in environment $\tyenv$.

\item $\tyenv \vdash P$ and $\tyenv \vdash Q$ mean that the output process $P$
and the input process $Q$ are well-typed in environment $\tyenv$,
respectively.

\end{itemize}

In $x[M_1, \ldots, M_m]$, $M_1, \ldots, M_m$ must be of the suitable 
interval type.
When $f(M_1, \ldots, M_m)$ is called and $f: T_1 \times \ldots \times
T_m \rightarrow T$, $M_j$ must be of type $T_j$, and $f(M_1, \ldots, M_m)$
is then of type $T$.

The term $\Res{x[\tup{i}]}{T}; N$ is accepted only when 
$T$ is declared \emph{fixed}, \emph{bounded}, or \emph{nonuniform}.
We check that $x[\tup{i}]$ is of type $T$ (which is in fact always
true when the construction of $\tyenv$ succeeds). $N$ must well-typed,
and its type is also the type of $\Res{x[\tup{i}]}{T}; N$.

In $\assign{\pat}{M}{N}\ \ELSE N'$, $\pat$ must have the same type
as $M$, and $N$ and $N'$ must have the same type, which is also
the type of $\assign{\pat}{M}{N}\ \ELSE N'$. The typing rules
for patterns $\pat$ are found at the bottom of Figure~\ref{fig:type1}.
The pattern $x[\tup{i}] :T$ has type $T$, provided $x[\tup{i}]$ has
type $T$ (which is in fact always true when the construction of
$\tyenv$ succeeds). The other typing rules for patterns are
straightforward.
The particular case $\assign{x[\tup{i}] : T}{M}{N}$ is typed
similarly, except that the $\kw{else}$ branch is omitted.

In $\bguard{M}{N}{N'}$, $M$ must be of type $\bool$ and
$N$ and $N'$ must have the same type, which is also
the type of $\bguard{M}{N}{N'}$.

In
\[\begin{split}
&\FIND\uniqueopt \ (\mathop{\textstyle\bigoplus}\nolimits_{j=1}^{m}
\vf_{j1}[\tup{i}] = i_{j1} \leq n_{j1}, \ldots, \vf_{jm_j}[\tup{i}] = i_{jm_j} \leq n_{jm_j}\ 
\SUCHTHAT\\
&\quad \defined(M_{j1}, \ldots, M_{jl_j}) \fand M_j \THEN N_j)\ \ELSE N
\end{split}\]
the replication indices $i_{j1}, \ldots, i_{jm_j}$ are bound in
$M_{j1}, \ldots, M_{jl_j}, M_j$, of types $[1,n_{j1}], \ldots, [1,n_{jm_j}]$
respectively;
$M_j$ is of type $\bool$ for all $j \leq m$;
$N_j$ for all $j \leq m$ and $N$ all have the same type, which is also
the type of the $\kw{find}$ term.

In $\INSERT\ \tbl(M_1,\ldots,M_l);N$, $M_1, \ldots, M_l$ must
be of the type declared for the elements of the table $\tbl$, 
and the type of $N$ is the type of the $\INSERT$ term.

In $\GET\uniqueopt\ \tbl(\pat_1,\ldots,\pat_l)\ \SUCHTHAT\ M\ \IN\ N\ \ELSE N'$,
$\pat_1, \ldots, \pat_l$ must
be of the type declared for the elements of the table $\tbl$
and $M$ must be of type $\bool$.
The terms $N$ and $N'$ must have the same type, which is also the type of
the $\GET$ term.

In $\kevent{e(M_1, \ldots, M_l)}; N$, $M_1, \ldots, M_l$ must
be of the type declared for the arguments of event $e$,
and the type of $N$ is the type of the $\kw{event}$ term.

The term $\keventabort{e}$ can have any type (because it aborts the game);
the event $e$ must be declared without argument, which we denote
by $e : ()$.

The type system for processes requires each subterm to be well-typed. 
In $\repl{i}{n} Q$, $i$ is of type $[1, n]$ in $Q$. 
The processes $\kw{new}$, $\kw{let}$, $\kw{if}$, $\kw{find}$, $\INSERT$, 
$\GET$, $\kw{event}$, and $\kw{event\string_abort}$ are typed
similarly to the corresponding terms.

%
%
%

We say that an occurrence of a term $M$ in a process $Q$ is of type
$T$ when $\tyenv \vdash M : T$ where $\tyenv$ is the type environment
of $Q$ extended with $i \mapsto [1, n]$ for each replication
$\repl{i}{n}$ above $M$ in $Q$ and with 
$i_{j1} \mapsto [1,n_{j1}], \ldots, i_{jm_j} \mapsto [1, n_{jm_j}]$ for each
$\FIND\uniqueopt$ $(\mathop{\textstyle\bigoplus}\nolimits_{j=1}^{m}
\vf_{j1}[\tup{i}] = i_{j1} \leq n_{j1}, \ldots, \vf_{jm_j}[\tup{i}] = i_{jm_j} \leq n_{jm_j}$
$\SUCHTHAT$ $\defined(M_{j1}, \ldots, M_{jl_j}) \fand M_j$ $\kw{then}$ $P_j)$ $\ELSE P$ such that the considered occurrence of $M$ is in the condition $\defined(M_{j1}, \ldots, M_{jl_j}) \fand M_j$.
\bbnote{defined for an occurrence but sometimes used for a term}%

\begin{invariant}[Typing]\label{inv3}
The process $Q_0$ satisfies Invariant~\ref{inv3} if and only if
the type environment $\tyenv$ for $Q_0$ is well-defined,
and $\tyenv \vdash Q_0$.
\end{invariant}

We require the adversary to be well-typed. This requirement does not
restrict its computing power, because it can always define type-cast
functions $f : T \rightarrow T'$ to bypass the type system.  
Similarly, the type system does not restrict the class of protocols
that we consider, since the protocol may contain type-cast functions.
The type system just makes explicit which set of values may appear
at each point of the protocol.

\subsection{Formal Semantics}

\subsubsection{Definition of the Semantics}

The formal semantics of our calculus 
is presented in Figures~\ref{fig:sem1}, \ref{fig:sem1bis},
\ref{fig:sem2}, \ref{fig:sem3}, and~\ref{fig:sem3bis}.

In this semantics, each term $M$ or process $P$ or $Q$ is labeled by
a program point $\pp$, replacing $M$ with $\pptag M$ and similarly for
$P$ and $Q$. We still use the notations $M$, $P$, $Q$ for terms and
processes tagged with program points.
The program points are used in order to track from where
each term or process comes from in the initial process. These program
points are simply constant tags, and the initial process 
is tagged with a distinct program point at each subterm
and subprocess.

A semantic configuration is a sextuple $E, (\sigma,P)\restconfig$,
where 
\begin{itemize}
\item $E$ is an environment mapping array cells to values.
\item $(\sigma,P)$ is the output process $P$ currently scheduled, 
with the associated mapping sequence $\sigma$ which gives values 
of replication indices. 

The mapping sequence $\sigma = [i_1 \mapsto a_1, \dots, i_m \mapsto a_m]$ is a sequence of mappings $i_j \mapsto a_j$, which can also be interpreted as a function: $\sigma(i_j) = a_j$ for all $j \leq m$, and $\sigma(\tup{i})$ is defined by the natural extension to sequences. However, using a sequence allows us to define $\dom(\sigma) = [i_1, \dots, i_m]$ to be the sequence of current replication indices and $\image(\sigma) = [a_1, \dots, a_m]$ to be the sequence of their values. When $\sigma = [i_1 \mapsto a_1, \dots, i_m \mapsto a_m]$, 
$\sigma[i_{m+1} \mapsto a_{m+1}, \dots, i_l \mapsto a_l] = [i_1 \mapsto a_1, \dots, i_l \mapsto a_l]$.

\item $\pset$ is the multiset of input processes running in 
parallel with $P$, with their associated mapping sequences giving values 
of replication indices.
\item $\cset$ is the set of channels already created.
\item $\tblcts$ defines the contents of tables. It is 
a list of $\tbl(a_1, \ldots, a_m)$ indicating that table $\tbl$
contains the element $(a_1, \ldots, a_m)$.
\item $\evseq$ is a sequence representing the events executed so far.
Each element of the sequence is of the form $(\pp, \tup{a}):e(a_1, \ldots, a_m)$,
meaning that the event $e(a_1, \ldots, a_m)$ has been executed at program point
$\pp$ with replication indices evaluating to $\tup{a}$.

We define $\evseqnopp = \removepp(\evseq) = [e(a_1, \ldots, a_m) \mid (\pp, \tup{a}):e(a_1, \ldots, a_m) \in \evseq]$ to be the sequence of events $\evseq$ without their associated program points and replication indices.

\end{itemize}
In addition to the grammar given in Figure~\ref{fig:syntax},
the terms $M$ of the semantics can
be values $a$ and abort event values $\keventabort{(\pp, \tup{a}):e}$,
and the processes $P$ can be
$\kw{abort}$, corresponding to the situation in which
the game has been aborted.
These additional terms and processes are not tagged with program points.
(They do not occur in the initial process.)

The semantics is defined by reduction rules of the form
$E,(\sigma,P)\restconfig \red{p}{\ix} E',\ab (\sigma',P'),\ab \pset', \ab \cset', \ab \tblcts', \ab \evseq'$
meaning that $E,(\sigma,P)\restconfig$ reduces to $E',\ab (\sigma',P'),\ab \pset', \ab \cset', \ab \tblcts', \ab \evseq'$
with probability $p$. 
The index $\ix$ just serves in distinguishing
reductions that yield the same configuration with the same
probability in different ways, so that the probability
of a certain reduction can be computed correctly:
\[\Pr[E, (\sigma,P), \pset, \cset, \tblcts, \evseq \rightarrow E', (\sigma',P'), \pset', \cset', \tblcts', \evseq'] = \hspace*{-12mm}\sum_{E,(\sigma,P),\pset, \cset, \tblcts, \evseq\red{p}{\ix} E', (\sigma',P'), \pset', \cset', \tblcts', \evseq'} \hspace*{-6mm} p\]
The probability of a trace 
$\trace = E_1, (\sigma_1, P_1), \pset_1, \cset_1, \tblcts_1, \evseq_1 
\red{p_1}{\ix_1} \ldots \red{p_{m-1}}{\ix_{m-1}} E_m, (\sigma_m, P_m), \ab \pset_m, \ab \cset_m, \ab \tblcts_m, \ab \evseq_m$ is $\Pr[\trace] = p_1 \times \ldots \times p_{m-1}$.
We define the semantics only for patterns $x[\tup{i}]:T$, the other
patterns can be encoded as outlined in Section~\ref{sec:syntax}.
\bb{This is not ideal, because it basically means that we are not going to
prove transformations that involve other patterns correct. 
On the other hand, it simplifies the semantics, which is already complicated 
enough.}%

\begin{figure}[tp]
\begin{gather}
E,\sigma, \pptag i, \tblcts, \evseq \red{1}{} E, \sigma, \sigma(i), \tblcts, \evseq\tag{ReplIndex}\label{sem:replindex}\\[1mm]
\frac{x[a_1, \ldots, a_m] \in \dom(E)}{
E, \sigma, \pptag x[a_1, \ldots, a_m], \tblcts, \evseq \red{1}{} E, \sigma, E(x[a_1, \ldots, a_m]), \tblcts, \evseq}\tag{Var}\label{sem:var}\\[1mm]
\frac{
f: T_1 \times \ldots \times T_m \rightarrow T\qquad
\forall j \leq m, a_j \in T_j\qquad f(a_1, \ldots, a_m) = a}{
E, \sigma, \pptag f(a_1, \ldots, a_m), \tblcts, \evseq \red{1}{} E, \sigma, a, \tblcts, \evseq}\tag{Fun}\label{sem:fun}\\[1mm]
\frac{a \in T \qquad E' = E[x[\sigma(\tup{i})] \mapsto a]}{
E, \sigma, \pptag \Res{x[\tup{i}]}{T}; N, \tblcts, \evseq \red{D_T(a)}{N(a)} E', \sigma, N, \tblcts, \evseq}
\tag{NewT}\label{sem:newt}\\[1mm]
\frac{a \in T \qquad E' = E[x[\sigma(\tup{i})] \mapsto a]}{
E, \sigma, \pptag \assign{x[\tup{i}] : T}{a}{N}, \tblcts, \evseq \red{1}{} 
E', \sigma, N, \tblcts, \evseq}\tag{LetT}\label{sem:lett}\\[1mm]
E, \sigma, \pptag \bguard{\true}{N}{N'}, \tblcts, \evseq \red{1}{}
E, \sigma, N, \tblcts, \evseq  \tag{IfT1}\label{sem:ift1}\\[1mm]
\frac{a \neq \true}{
E, \sigma, \pptag \bguard{a}{N}{N'}, \tblcts, \evseq \red{1}{}
E, \sigma, N', \tblcts, \evseq} \tag{IfT2}\label{sem:ift2}\\[1mm]
\frac{
\begin{array}{c}
(v_k)_{1\leq k \leq l} \text{ is the sequence of }(j, a_1, \ldots, a_{m_j})
\text{ for }a_1 \in [1, n_{j1}], \ldots, a_{m_j} \in [1, n_{jm_j}]\\
\text{ ordered in increasing lexicographic order}\\
\forall k \in [1, l], E, \sigma[i_{j1} \mapsto a_1, \ldots, i_{jm_j} \mapsto a_{m_j}], D_j \wedge M_j, \tblcts, \evseq \red{p_k}{\ix_k}^* E_k, \sigma_k, r_k, \tblcts, \evseq\\
\text{ where }v_k = (j, a_1, \ldots, a_{m_j})\text{ and }r_k \text{ is a value or }\keventabort{(\pp',\tup{a}):e}\\
S = \{ k \mid \exists (\pp',\tup{a}, e), r_k = \keventabort{(\pp',\tup{a}):e} \}\qquad k_0 \in S
\end{array}
}{\begin{array}{c}
E, \sigma, \pptag \FIND\uniqueopt\ (\mathop\bigoplus\nolimits_{j=1}^m
\vf_{j1}[\tup{i}] = i_{j1} \leq n_{j1}, \ldots, \vf_{jm_j}[\tup{i}] = i_{jm_j} \leq n_{jm_j}\ \kw{suchthat}\\ 
D_j \fand M_j\ \kw{then}\ N_j)\ \ELSE N, \tblcts, \evseq \red{p_1\ldots p_{l}D_{\FIND}(S)(k_0)}{\ix_1\ldots \ix_l FE(k_0)} E_{k_0}, \sigma_{k_0}, r_{k_0}, \tblcts, \evseq
\end{array}}\tag{FindTE}\label{sem:findte}\\[1mm]
\frac{
\begin{array}{c}
(v_k)_{1\leq k \leq l} \text{ is the sequence of }(j, a_1, \ldots, a_{m_j})
\text{ for }a_1 \in [1, n_{j1}], \ldots, a_{m_j} \in [1, n_{jm_j}]\\
\text{ ordered in increasing lexicographic order}\\
\forall k \in [1, l], E, \sigma[i_{j1} \mapsto a_1, \ldots, i_{jm_j} \mapsto a_{m_j}], D_j \wedge M_j, \tblcts, \evseq \red{p_k}{\ix_k}^* E'', \sigma', r_k, \tblcts, \evseq\\
\text{ where }v_k = (j, a_1, \ldots, a_{m_j})\text{ and }r_k \text{ is a value}\\
S = \{ v_k \mid r_k = \true \}\qquad 
|S| = 1 \text{ or } \uniqueopt \text{ is empty}\\
v_0 = (j', a'_1, \ldots, a'_{m_{j'}})\in S\qquad
E' = E[\vf_{j'1}[\sigma(\tup{i})] \mapsto a'_1, \ldots, \vf_{j'm_{j'}}[\sigma(\tup{i})] \mapsto a'_{m_{j'}}]
\end{array}
}{\begin{array}{c}
E, \sigma, \pptag \FIND\uniqueopt\ (\mathop\bigoplus\nolimits_{j=1}^m
\vf_{j1}[\tup{i}] = i_{j1} \leq n_{j1}, \ldots, \vf_{jm_j}[\tup{i}] = i_{jm_j} \leq n_{jm_j}\ \kw{suchthat}\\ 
D_j \fand M_j\ \kw{then}\ N_j)\ \ELSE N, \tblcts, \evseq \red{p_1\ldots p_l D_{\FIND}(S)(v_0)}{\ix_1\ldots \ix_l F1(v_0)} E', \sigma, N_{j'}, \tblcts, \evseq
\end{array}}\tag{FindT1}\label{sem:findt1}\\[1mm]
\frac{
\begin{array}{c}
\text{First four lines as in \eqref{sem:findt1}}\qquad
S = \{ v_k \mid r_k = \true \} = \emptyset
\end{array}
}{\begin{array}{c}
E, \sigma, \pptag \FIND\uniqueopt\ (\mathop\bigoplus\nolimits_{j=1}^m
\vf_{j1}[\tup{i}] = i_{j1} \leq n_{j1}, \ldots, \vf_{jm_j}[\tup{i}] = i_{jm_j} \leq n_{jm_j}\ \kw{suchthat}\\ 
D_j \fand M_j\ \kw{then}\ N_j)\ \ELSE N, \tblcts, \evseq \red{p_1\ldots p_l}{\ix_1\ldots \ix_l F2} E, \sigma, N, \tblcts, \evseq
\end{array}}\tag{FindT2}\label{sem:findt2}\\[1mm]
\frac{
\begin{array}{c}
\text{First four lines as in \eqref{sem:findt1}}\qquad
S = \{ v_k \mid r_k = \true \}\qquad |S| > 1
\end{array}
}{\begin{array}{c}
E, \sigma, \pptag \FIND\unique{e}\ (\mathop\bigoplus\nolimits_{j=1}^m
\vf_{j1}[\tup{i}] = i_{j1} \leq n_{j1}, \ldots, \vf_{jm_j}[\tup{i}] = i_{jm_j} \leq n_{jm_j}\ \kw{suchthat}\\ 
D_j \fand M_j\ \kw{then}\ N_j)\ \ELSE N, \tblcts, \evseq \red{p_1\ldots p_l}{\ix_1\ldots \ix_l F3} E, \sigma, \keventabort{(\pp, \image(\sigma)):e}, \tblcts, \evseq
\end{array}}\tag{FindT3}\label{sem:findt3}
\end{gather}%
\caption{Semantics (1): terms, first part}\label{fig:sem1}
\end{figure}

\begin{figure}
  \begin{gather}
    E, \sigma, \pptag \INSERT\ \tbl(a_1, \ldots, a_l); N, \tblcts, \evseq
    \red{1}{}
    E, \sigma, N, (\tblcts, \tbl(a_1, \ldots, a_l)), \evseq
\tag{InsertT}\label{sem:insertt}\\[1mm]
\frac{
\begin{array}{@{}c@{}}
[v_1, \dots, v_m] = [x \in \tblcts \mid \exists a_1, \dots, \exists a_l, x = \tbl(a_1, \ldots, a_l)]\\
\forall k \in [1, m], E[x_1[\sigma(\tup{i})] \mapsto a_1, \ldots, x_l[\sigma(\tup{i})] \mapsto a_l], \sigma, M, \tblcts, \evseq \red{p_k}{\ix_k}^* E_k, \sigma_k, r_k, \tblcts, \evseq\\
\text{ where }v_k = \tbl(a_1, \ldots, a_l)\text{ and }r_k \text{ is a value or }\keventabort{(\pp',\tup{a}):e}\\
S = \{ k \in [1,m] \mid \exists (\pp',\tup{a},e), r_k = \keventabort{(\pp',\tup{a}):e} \}\qquad k_0 \in S
\end{array}}{
\begin{array}{@{}c@{}}
E, \sigma, \pptag \GET\uniqueopt\ \tbl(x_1[\tup{i}]:T_1,\ldots,x_l[\tup{i}]:T_l)\ \SUCHTHAT\ M\ \IN\ N\ \ELSE N', \tblcts, \evseq\\
\red{p_1\dots p_m D_{\kw{get}}(S)(k_0)}{\ix_1\dots \ix_m GE(k_0)} E_{k_0}, \sigma_{k_0}, r_{k_0}, \tblcts, \evseq
\end{array}}\tag{GetTE}\label{sem:gette}\\[1mm]
\frac{
\begin{array}{@{}c@{}}
[v_1, \dots, v_m] = [x \in \tblcts \mid \exists a_1, \dots, \exists a_l, x = \tbl(a_1, \ldots, a_l)]\\
\forall k \in [1, m], E[x_1[\sigma(\tup{i})] \mapsto a_1, \ldots, x_l[\sigma(\tup{i})] \mapsto a_l], \sigma, M, \tblcts, \evseq \red{p_k}{\ix_k}^* E'', \sigma, r_k, \tblcts, \evseq\\
\text{ where }v_k = \tbl(a_1, \ldots, a_l)\text{ and }r_k \text{ is a value}\\
S = \{ k \in [1,m] \mid r_k = \true \}\\
|S| = 1 \text{ or } \uniqueopt \text{ is empty}\\
k_0 \in S \qquad \tbl(a_1, \ldots, a_l) = v_{k_0} \qquad
E' = E[x_1[\sigma(\tup{i})] \mapsto a_1, \ldots, x_l[\sigma(\tup{i})] \mapsto a_l]
\end{array}}{
\begin{array}{@{}c@{}}
E, \sigma, \pptag \GET\uniqueopt\ \tbl(x_1[\tup{i}]:T_1,\ldots,x_l[\tup{i}]:T_l)\ \SUCHTHAT\ M\ \IN\ N\ \ELSE N', \tblcts, \evseq\\
\red{p_1\dots p_m D_{\kw{get}}(S)(k_0)}{\ix_1\dots \ix_m G1(k_0)} E', \sigma, N, \tblcts, \evseq
\end{array}}\tag{GetT1}\label{sem:gett1}\\[1mm]
\frac{\text{First four lines as in \eqref{sem:gett1}}\qquad S = \emptyset}{
\begin{array}{c}
E, \sigma, \pptag \GET\uniqueopt\ \tbl(x_1[\tup{i}]:T_1,\ldots,x_l[\tup{i}]:T_l)\ \SUCHTHAT\ M\ \IN\ N\ \ELSE N', \tblcts, \evseq\\
\red{p_1\dots p_m}{\ix_1\dots \ix_m G2} E, \sigma, N', \tblcts, \evseq
\end{array}}\tag{GetT2}\label{sem:gett2}\\[1mm]
\frac{\text{First four lines as in \eqref{sem:gett1}}\qquad |S| > 1}{
\begin{array}{c}
E, \sigma, \pptag \GET\unique{e}\ \tbl(x_1[\tup{i}]:T_1,\ldots,x_l[\tup{i}]:T_l)\ \SUCHTHAT\ M\ \IN\ N\ \ELSE N', \tblcts, \evseq\\
\red{p_1\dots p_m}{\ix_1\dots \ix_m G3} E, \sigma, \keventabort{(\pp, \image(\sigma)):e}, \tblcts, \evseq
\end{array}}\tag{GetT3}\label{sem:gett3}\\[1mm]
E, \sigma, \pptag \kevent{e(a_1, \dots, a_l)}; N, \tblcts, \evseq
\red{1}{}
  E, \sigma, N, \tblcts, (\evseq, (\pp,\image(\sigma)):e(a_1, \dots, a_l))
\tag{EventT}\label{sem:eventt}\\[1mm]
E, \sigma, \pptag \keventabort{e}, \tblcts, \evseq
\red{1}{}
  E, \sigma, \keventabort{(\pp,\image(\sigma)):e}, \tblcts, \evseq
\tag{EventAbortT}\label{sem:eventabortt}\\[1mm]
\frac{E, \sigma, N, \tblcts, \evseq \red{p}{\ix} E', \sigma', N', \tblcts', \evseq'}{
E, \sigma, C[N], \tblcts, \evseq \red{p}{\ix} E', \sigma', C[N'], \tblcts', \evseq'}\tag{CtxT}\label{sem:ctxt}\\[1mm]
E, \sigma, C[\keventabort{(\pp,\tup{a}):e}], \tblcts, \evseq \red{1}{} E, \sigma, \keventabort{(\pp,\tup{a}):e}, \tblcts, \evseq\tag{CtxEventT}\label{sem:ctxeventt}\\[1mm]
%
\frac{\neg \forall j \leq l, \exists a_j, E, \sigma, M_j, \tblcts, \evseq \red{1}{}^* E, \sigma, a_j, \tblcts, \evseq}{
E, \sigma, \defined(M_1, \ldots, M_l) \wedge M, \tblcts, \evseq \red{1}{} E, \sigma, \false, \tblcts, \evseq}\tag{DefinedNo}\label{sem:definedno}\\[1mm]
\frac{\forall j \leq l, \exists a_j, E, \sigma, M_j, \tblcts, \evseq \red{1}{}^* E, \sigma, a_j, \tblcts, \evseq}{
E, \sigma, \defined(M_1, \ldots, M_l) \wedge M, \tblcts, \evseq \red{1}{} E, \sigma, M, \tblcts, \evseq}\tag{DefinedYes}\label{sem:definedyes}
\end{gather}%
\caption{Semantics (2): terms, second part, and $\defined$ conditions}\label{fig:sem1bis}
\end{figure}

In Figures~\ref{fig:sem1} and~\ref{fig:sem1bis},
we define an auxiliary relation for evaluating terms:
$E, \sigma, M, \tblcts, \evseq \red{p}{\ix} E', \sigma, M', \tblcts', \evseq'$
means that the term $M$ reduces to $M'$ in environment $E$ with the
replication indices defined by $\sigma$, the table contents $\tblcts$, and
the sequence of events $\evseq$, with probability $p$.
Rule~\eqref{sem:replindex} evaluates replication indices using the 
function $\sigma$.
Rule~\eqref{sem:var} looks for the value of the variable in the
environment $E$.
Rule~\eqref{sem:fun} evaluates the function call.
Rule~\eqref{sem:newt} chooses a random $a \in T$ according to distribution
$D_T$, and stores it in $x[\sigma(\tup{i})]$ by extending the environment $E$
accordingly.
Similarly, Rule~\eqref{sem:lett} extends the environment $E$ with the
value of $x[\sigma(\tup{i})]$.
Rule~\eqref{sem:ift1} evaluates the $\kw{then}$ branch of $\kw{if}$ when the condition is true, and Rule~\eqref{sem:ift2} evaluates the $\kw{else}$ branch otherwise.

Rules~\eqref{sem:findte} to~\eqref{sem:findt3} define the semantics of $\kw{find}$. First, they all evaluate the conditions for all branches $j$ and all values of the indices $i_{j1}, \ldots, i_{jm_j}$. If one of these evaluations executes an event (which can happen in case the condition contains an $\keventabort{e}$ or a $\kw{find}\unique{e}$), the whole $\kw{find}$ executes the same event; in case the evaluations of the conditions execute several different events, one of them is chosen randomly, according to distribution $D_{\kw{find}}(S)$, that is, almost uniformly over the choices of branches and indices (Rule~\eqref{sem:findte}). Otherwise, the branch and indices for which the condition is true are collected in a set $S$. If $S$ is empty, the $\kw{else}$ branch of the $\kw{find}$ is executed (Rule~\eqref{sem:findt2}). When $S$ is not empty, two cases can happen. Either the $\kw{find}$ is not marked $\unique{e}$, and we choose an element $v_0 = (j', a'_1, \ldots, a'_{m_{j'}})$ of $S$ randomly according to the distribution $D_{\kw{find}}(S)$, store the corresponding indices $a'_1, \ldots, a'_{m_{j'}}$ in $u_{j'1}[\sigma(\tup{i})], \ldots, u_{j'm_{j'}}[\sigma(\tup{i})]$ by extending the environment accordingly, and we continue with the selected branch $N'_j$. If the $\kw{find}$ is marked $\unique{e}$ and $S$ has a single element, we do the same. If the $\kw{find}$ is marked $\unique{e}$ and $S$ has several elements, we execute the event $e$ (Rule~\eqref{sem:findt3}). 
We recall that $D_{\kw{find}}(S)(v_0)$ denotes the probability of choosing $v_0$ in the distribution  $D_{\kw{find}}(S)$.
The terms in conditions of $\kw{find}$ may define variables, included in the environment $E''$; we ignore these additional variables and compute the final environment from the initial environment $E$, because these variables have no array accesses by Invariant~\ref{invfc}, so the values of these variables are not used after the evaluation of the condition.
The conditions of $\kw{find}$, $D \wedge M$, are evaluated using Rules~\eqref{sem:definedno} and~\eqref{sem:definedyes}. If an element of the $\defined$ condition $D$ is not defined, then the condition is $\false$ (Rule~\eqref{sem:definedno}); when all elements of the $\defined$ condition are defined, we evaluate $M$ (Rule~\eqref{sem:definedyes}). Since terms in conditions of $\kw{find}$ do not contain $\kw{insert}$ nor $\kw{event}$ (Invariant~\ref{invtfc}), the table contents and the sequence of events are left unchanged by the evaluation of the condition of $\kw{find}$.

Rule~\eqref{sem:insertt} inserts the new table element in $\tblcts$.
Rules~\eqref{sem:gette} to~\eqref{sem:gett3} define the semantics of $\kw{get}$.
We denote by $[ x \in L \mid f(x) ]$ the list of all elements $x$ of the list $L$ that satisfy $f(x)$, in the same order as in $L$. 
We denote by $|L|$ the length of list $L$.
We denote by $\nth{L}{j}$ the $j$-th element of the list $L$.
Rule~\eqref{sem:gette} executes $\keventabort{e}$ when the evaluation of the condition $M$ executes $\keventabort{e}$ for some element of the table $\tbl$; when several events may be executed, one of them is chosen randomly according to distribution $D_{\kw{get}}(S)$, that is, almost uniformly in the elements of the table $\tbl$.
Rules~\eqref{sem:gett1}, \eqref{sem:gett2}, and~\eqref{sem:gett3} compute the set $S$ of elements of indices of elements of table $\tbl$ in $\tblcts$ that satisfy condition $M$. If $S$ is empty, we execute $N'$ (Rule~\eqref{sem:gett2}).
When $S$ is not empty, two cases can happen. If the $\GET$ is not marked $\unique{e}$ or $S$ has a single element, then one of its elements is chosen randomly according to distribution $D_{\kw{get}}(S)$, we store this element in
$x_1[\sigma(\tup{i})], \ldots, x_l[\sigma(\tup{i})]$ by extending the environment $E$, and continue by executing $N$ (Rule~\eqref{sem:gett1}). 
If the $\GET$ is marked $\unique{e}$ and $S$ contains several elements, then we execute event $e$ and abort (Rule~\eqref{sem:gett3}).
Since terms in conditions of $\kw{get}$ do not contain $\kw{insert}$ nor $\kw{event}$ (Invariant~\ref{invtfc}), the table contents and the sequence of events are left unchanged by the evaluation of $M$.
The modified environment $E''$ obtained after evaluating a condition $M$ can be ignored because there are no array accesses to the variables defined in conditions of $\kw{get}$, by Invariant~\ref{invfc}, so the values of these variables are not used after the evaluation of the condition.

\begin{remark}
Another way of defining the semantics of tables would be to consider
two distinct calculi, one with tables (used for the initial game), and one
without tables (used for the other games). 
The semantics of the calculus without tables can be defined without the component
$\tblcts$.
We then need to relate the two semantics. 
\end{remark}

Rule~\eqref{sem:eventt} adds the executed event to $\evseq$. 
Rule~\eqref{sem:eventabortt} executes $\keventabort{e}$. The event $e$ is not immediately added
to $\evseq$, because for terms that occur in conditions of $\FIND$, in case several branches execute $\kw{event\_abort}$, we may need to choose
randomly which event will be added to $\evseq$. Hence, we use the result $\keventabort{(\pp,\tup{a}):e}$ instead.

\begin{figure}[t]
  \begin{align*}
    C ::= {}&\pptag x[a_1, \ldots, a_{k-1}, [\,], M_{k+1}, \ldots, M_m]\\
    &\pptag f(a_1, \ldots, a_{k-1}, [\,], M_{k+1}, \ldots, M_m)\\
    &\pptag \assign{x[\tup{i}]:T}{[\,]}N\\
    &\pptag \bguard{[\,]}{N}{N'}\\
    &\pptag \kevent{e(a_1, \ldots, a_{k-1}, [\,], M_{k+1}, \ldots, M_l)}; N\\
    &\pptag \INSERT\ \tbl(a_1, \ldots, a_{k-1}, [\,], M_{k+1}, \ldots, M_l); N
  \end{align*}
\caption{Term contexts}\label{fig:termcontexts}
\end{figure}

Rules~\eqref{sem:ctxt} and~\eqref{sem:ctxeventt} allow evaluating terms under a context. In these rules, $C$ is an elementary 
context, of one of the forms defined in Figure~\ref{fig:termcontexts}.
When the term $N$ reduces to some other term $N'$, Rule~\eqref{sem:ctxt}
allows one to reduce it in the same way under a context $C$.
When the term $N$ is an event, $C[N]$ also executes the same event
by Rule~\eqref{sem:ctxeventt}. 

These rules define a small-step semantics for terms. We consider the
reflexive and transitive closure $\red{p}{\ix}^*$ of the relation
$\red{p}{\ix}$ to reach directly the normal form of the term, which can
be either a value $a$ or an abort event value $\keventabort{(\pp,\tup{a}):e}$. 
We have $E, \sigma, M, \tblcts, \evseq
\red{1}{}^* E, \sigma, M, \tblcts, \evseq$ and, 
if $E, \sigma, M, \tblcts, \evseq \red{p}{\ix} E', \sigma', M', \tblcts', \evseq'$ 
and $E', \sigma', M', \tblcts', \evseq' \red{p'}{\ix'}^* E'', \ab \sigma'', \ab M'', \ab \tblcts'', \ab \evseq''$, then
$E, \sigma, M, \tblcts, \evseq \red{p \times p'}{\ix,\ix'}^* E'', \sigma'', M'', \tblcts'', \evseq''$: we
take the product of the probabilities to have the probability of a
sequence of reductions, and we specify which sequence was taken by a
list of indices $\ix, \ix'$.

\begin{figure}[t]
\begin{gather}
E, \{ (\sigma, \pptag 0) \} \multiunion \pset, \cset \redq E, \pset, \cset \tag{Nil}\label{sem:nil}\\
E, \{ (\sigma, \pptag (Q_1 \parpop Q_2)) \} \multiunion \pset, \cset \redq E, \{ (\sigma, Q_1), (\sigma, Q_2) \} \multiunion \pset, \cset\tag{Par}\label{sem:par}\\
E, \{ (\sigma, \pptag \repl{i}{n} Q) \} \multiunion \pset, \cset \redq E, \{ (\sigma[i \mapsto a], Q) \mid a \in [1, n] \} \multiunion \pset, \cset \tag{Repl}\label{sem:repl}\\
\frac{ c' \notin \cset }{
E, \{ (\sigma, \pptag \Reschan{c};Q) \} \multiunion \pset, \cset 
\redq E, \{ (\sigma, Q\{ c'/c \}) \} \multiunion \pset, \cset \cup \{ c' \}
}\tag{NewChannel}\label{sem:newchannel}\\
\frac{E, \sigma, M_j, \emptyset, \emptyset \red{1}{} E, \sigma, M'_j, \emptyset, \emptyset}{
\begin{array}{@{}l@{}}
E,\{ (\sigma, C[M_j]) \} \multiunion \pset, \cset \redq E, \{ (\sigma, C[M'_j])\} \multiunion \pset, \cset\\
\qquad \text{where $C = \pptag \cinput{c[a_1, \dots, a_{j-1}, [\,], M_{j+1}, \dots, M_l]}{x[\tup{i}]:T};P$}
\end{array}}
\tag{Input}\label{sem:input}\\
%
\reduce(E, \pset, \cset) \text{ is the normal form of }E, \pset, \cset \text{ by }\redq\notag
\end{gather}
\caption{Semantics (3): input processes}\label{fig:sem2}
\end{figure}

Figure~\ref{fig:sem2} defines the semantics of input processes. 
We use an auxiliary reduction relation $\redq$, 
for reducing
input processes. This relation transforms configurations of the form
$E, \pset, \cset$.
Rule~\eqref{sem:nil} removes nil processes.
Rules~\eqref{sem:par} and~\eqref{sem:repl} expand parallel compositions and replications,
respectively.
Rule~\eqref{sem:newchannel} creates a new channel and adds it 
to $\cset$.
Semantic configurations are considered equivalent modulo renaming of
channels in $\cset$, so that a single semantic configuration is
obtained after applying~\eqref{sem:newchannel}.
Rule~\eqref{sem:input} evaluates the terms in the input channel. The input
itself is not executed: the communication is done by the~\eqref{sem:output} rule.
In the \eqref{sem:input} rule, the terms $M_1, \ldots, M_l$ are simple by
Invariant~\ref{invtic}, so their evaluation is deterministic (the unique
result is obtained with probability 1), the environment $E$, the contents of tables $\tblcts$, and the sequence of events $\evseq$ are
unchanged, and $\tblcts$ and $\evseq$ are unused, that is why we can write 
$E, \sigma, M_j, \emptyset, \emptyset \red{1}{} E, \sigma, M'_j, \emptyset, \emptyset$
using empty $\tblcts$ and $\evseq$, written $\emptyset$.
The relation $\redq$ is convergent (confluent and terminating),
so it has normal forms.
Processes in $\pset$ in configurations $E, (\sigma,P)\restconfig$
are always in normal form by $\redq$, so they always start with an input.

\begin{figure}[tp]
\vspace*{-3mm}%
\begin{gather}
\frac{
\text{Same assumption as in \eqref{sem:findte}}\qquad r_{k_0} = \keventabort{(\pp',\tup{a}):e}
}{\begin{array}{c}
E, (\sigma, \pptag \FIND\uniqueopt\ (\mathop\bigoplus\nolimits_{j=1}^m
\vf_{j1}[\tup{i}] = i_{j1} \leq n_{j1}, \ldots, \vf_{jm_j}[\tup{i}] = i_{jm_j} \leq n_{jm_j}\ \kw{suchthat}\\ 
D_j \fand M_j\ \kw{then}\ P_j)\ \ELSE P)\restconfig \\
\red{p_1\ldots p_{l_0}D_{\FIND}(S)(k_0)}{\ix_1\ldots \ix_{l_0}FE(k_0)} E_{k_0}, (\sigma_{k_0}, \kw{abort}), \pset, \cset, \tblcts, (\evseq, (\pp',\tup{a}):e)
\end{array}}\tag{FindE}\label{sem:finde}\\[1mm]
\frac{
\begin{array}{c}
\text{First four lines as in \eqref{sem:findt1}}\qquad
S = \{ v_k \mid r_k = \true \}\qquad 
|S| = 1 \text{ or } \uniqueopt \text{ is empty}\\
v_0 = (j', a'_1, \ldots, a'_{m_{j'}})\in S\qquad
E' = E[\vf_{j'1}[\sigma(\tup{i})] \mapsto a'_1, \ldots, \vf_{j'm_{j'}}[\sigma(\tup{i})] \mapsto a'_{m_{j'}}]
\end{array}
}{\begin{array}{c}
E, (\sigma, \pptag \FIND\uniqueopt\ (\mathop\bigoplus\nolimits_{j=1}^m
\vf_{j1}[\tup{i}] = i_{j1} \leq n_{j1}, \ldots, \vf_{jm_j}[\tup{i}] = i_{jm_j} \leq n_{jm_j}\ \kw{suchthat}\\ 
D_j \fand M_j\ \kw{then}\ P_j)\ \ELSE P)\restconfig \red{p_1\ldots p_l D_{\FIND}(S)(v_0)}{\ix_1\ldots \ix_l F1(v_0)} E', (\sigma, P_{j'})\restconfig
\end{array}}\tag{Find1}\label{sem:find1}\\[1mm]
\frac{
\begin{array}{c}
\text{First four lines as in \eqref{sem:findt1}}\qquad
S = \{ v_k \mid r_k = \true \} = \emptyset
\end{array}
}{\begin{array}{c}
E, (\sigma, \pptag \FIND\uniqueopt\ (\mathop\bigoplus\nolimits_{j=1}^m
\vf_{j1}[\tup{i}] = i_{j1} \leq n_{j1}, \ldots, \vf_{jm_j}[\tup{i}] = i_{jm_j} \leq n_{jm_j}\ \kw{suchthat}\\ 
D_j \fand M_j\ \kw{then}\ P_j)\ \ELSE P)\restconfig \red{p_1\ldots p_l}{\ix_1\ldots \ix_l F2} E, (\sigma, P)\restconfig
\end{array}}\tag{Find2}\label{sem:find2}\\[1mm]
\frac{
\begin{array}{c}
\text{First four lines as in \eqref{sem:findt1}}\qquad
S = \{ v_k \mid r_k = \true \}\qquad |S| > 1
\end{array}
}{\begin{array}{@{}c@{}}
E, (\sigma, \pptag \FIND\unique{e}\ (\mathop\bigoplus\nolimits_{j=1}^m
\vf_{j1}[\tup{i}] = i_{j1} \leq n_{j1}, \ldots, \vf_{jm_j}[\tup{i}] = i_{jm_j} \leq n_{jm_j}\ \kw{suchthat} D_j \fand {}\\ 
M_j\ \kw{then}\ P_j)\ \ELSE P)\restconfig \red{p_1\ldots p_l}{\ix_1\ldots \ix_l F3} E, (\sigma, \kw{abort}), \pset, \cset, \tblcts, (\evseq, (\pp,\image(\sigma)):e)
\end{array}}\tag{Find3}\label{sem:find3}\\[1mm]
E, (\sigma, \pptag \INSERT\ \tbl(a_1, \ldots, a_l); P)\restconfig \red{1}{} E, (\sigma, P), \pset, \cset, (\tblcts, \tbl(a_1, \ldots, a_l)), \evseq
\tag{Insert}\label{sem:insert}\\[1mm]
\frac{\text{Same assumption as in \eqref{sem:gette}}\qquad r_{k_0} = \keventabort{(\pp',\tup{a}):e}
}{\begin{array}{@{}c@{}}
E, (\sigma, \pptag \GET\uniqueopt\ \tbl(x_1[\tup{i}]:T_1,\ldots,x_l[\tup{i}]:T_l)\ \SUCHTHAT\ M\ \IN\ P\ \ELSE P') \restconfig\\
\red{p_1\dots p_m D_{\kw{get}}(S)(k_0)}{\ix_1\dots \ix_m GE(k_0)} E_{k_0}, (\sigma_{k_0}, \kw{abort}), \pset, \cset, \tblcts, (\evseq, (\pp',\tup{a}):e)
\end{array}}\tag{GetE}\label{sem:gete}\\[1mm]
\frac{
\begin{array}{@{}c@{}}
  \text{First four lines as in \eqref{sem:gett1}}\qquad
  |S| = 1 \text{ or } \uniqueopt \text{ is empty}\\
k_0 \in S\qquad \tbl(a_1, \ldots, a_l) = v_{k_0} \qquad
E' = E[x_1[\sigma(\tup{i})] \mapsto a_1, \ldots, x_l[\sigma(\tup{i})] \mapsto a_l]
\end{array}}{
\begin{array}{@{}c@{}}
E, (\sigma, \pptag \GET\uniqueopt\ \tbl(x_1[\tup{i}]:T_1,\ldots,x_l[\tup{i}]:T_l)\ \SUCHTHAT\ M\ \IN\ P\ \ELSE P')\restconfig\\
\red{p_1\dots p_m D_{\kw{get}}(S)(k_0)}{\ix_1\dots \ix_m G1(k_0)} E', (\sigma, P)\restconfig
\end{array}}\tag{Get1}\label{sem:get1}\\[1mm]
\frac{\text{First four lines as in \eqref{sem:gett1}}\qquad S = \emptyset}{
\begin{array}{c}
E, (\sigma, \pptag \GET\uniqueopt\ \tbl(x_1[\tup{i}]:T_1,\ldots,x_l[\tup{i}]:T_l)\ \SUCHTHAT\ M\ \IN\ P\ \ELSE P')\restconfig\\
\red{p_1\dots p_m}{\ix_1\dots \ix_m G2} E, (\sigma, P')\restconfig
\end{array}}\tag{Get2}\label{sem:get2}\\[1mm]
\frac{\text{First four lines as in \eqref{sem:gett1}}\qquad |S|>1}{
\begin{array}{c}
E, (\sigma, \pptag \GET\unique{e}\ \tbl(x_1[\tup{i}]:T_1,\ldots,x_l[\tup{i}]:T_l)\ \SUCHTHAT\ M\ \IN\ P\ \ELSE P')\restconfig\\
\red{p_1\dots p_m}{\ix_1\dots \ix_m G3} E, (\sigma, \kw{abort}), \pset, \cset, \tblcts, (\evseq, (\pp, \image(\sigma)):e)
\end{array}}\tag{Get3}\label{sem:get3}
\end{gather}%
\caption{Semantics (4): output processes, first part}\label{fig:sem3}
\end{figure}

\begin{figure}[t]
\vspace*{-3mm}%
\begin{gather}
\frac{a \in T \qquad E' = E[x[\sigma(\tup{i})] \mapsto a]}{
E, (\sigma, \pptag \Res{x[\tup{i}]}{T};P)\restconfig 
\red{D_T(a)}{N(a)} E', (\sigma, P)\restconfig}\tag{New}\label{sem:new}\\[1mm]
\frac{a \in T \qquad E' = E[x[\sigma(\tup{i})] \mapsto a]
}{
E, (\sigma, \pptag \assign{x[\tup{i}] : T}{a}{P})\restconfig \red{1}{} E', (\sigma, P)\restconfig}\tag{Let}\label{sem:let}\\[1mm]
E, (\sigma, \pptag \bguard{\true}{P}{P'})\restconfig \red{1}{} 
E, (\sigma, P)\restconfig  \tag{If1}\label{sem:if1}\\[1mm]
\frac{a \neq \true}{
E, (\sigma, \pptag \bguard{a}{P}{P'})\restconfig \red{1}{} 
E, (\sigma, P')\restconfig}  \tag{If2}\label{sem:if2}\\[1mm]
\begin{split}
  &E, (\sigma, \pptag \kevent{e(a_1, \ldots, a_l)};P)\restconfig \red{1}{} \\
  &\qquad E, (\sigma, P), \pset, \cset, \tblcts, (\evseq, (\pp, \image(\sigma)):e(a_1, \ldots, a_l))
\end{split}\tag{Event}\label{sem:event}\\[1mm]
E, (\sigma, \pptag \keventabort{e})\restconfig \red{1}{} E, (\sigma, \kw{abort}), \pset, \cset, \tblcts, (\evseq, (\pp, \image(\sigma)):e)\tag{EventAbort}
\label{sem:eventabort}\\[1mm]
\frac{E, \sigma, N, \tblcts, \evseq \red{p}{\ix} E', \sigma', N', \tblcts', \evseq'}{
E, (\sigma, C[N])\restconfig \red{p}{\ix} E', (\sigma', C[N']),\pset,\cset,\tblcts',\evseq'} \tag{Ctx}\label{sem:ctx}\\[1mm]
E, (\sigma, C[\keventabort{(\pp, \tup{a}):e}])\restconfig \red{1}{} E, (\sigma, \kw{abort}), \pset, \cset, \tblcts, (\evseq, (\pp, \tup{a}):e) \tag{CtxEvent}\label{sem:ctxevent}\\[1mm]
\frac{\begin{array}{@{}c@{}}
E, \pset', \cset' = \reduce(E, \{ (\sigma, Q'')\}, \cset)\\
S= \{ (\sigma', Q) \in \pset \mid  Q = {}^{\pp''} \cinput{c[a_1, \ldots, a_l]}{x'[\tup{i}] : T'}.P'\text{ and }b \in T'\text{ for some }\pp'', \sigma',x',T', P'\}\\
(\sigma', Q_0) \in S \qquad Q_0 = {}^{\pp'} \cinput{c[a_1, \ldots, a_l]}{x[\tup{i}]:T}.P 
\end{array}}{
\begin{array}{@{}l@{}}
E, (\sigma, \pptag \coutput{c[a_1, \ldots, a_l]}{b}.Q'')\restconfig 
\red{S(\sigma', Q_0) \times D_{\kw{in}}(S)(\sigma', Q_0)}{O(\sigma',Q_0)}\\
\qquad 
E[x[\sigma'(\tup{i})] \mapsto b], (\sigma', P), \pset \multiunion \pset'
\setminus \{ (\sigma', Q_0) \}, \cset', \tblcts, \evseq
\end{array}}
\tag{Output}\label{sem:output}
\end{gather}%
\vspace*{-4mm}%
\caption{Semantics (5): output processes, second part}\label{fig:sem3bis}
\end{figure}

\begin{figure}[t]
  \begin{align*}
    C ::= {}&\pptag \assign{x[\tup{i}]:T}{[\,]}P\\
    &\pptag \bguard{[\,]}{P}{P'}\\
    &\pptag \coutput{c[a_1, \ldots, a_{k-1}, [\,], M_{k+1}, \ldots, M_l]}{N}.Q\\
    &\pptag \coutput{c[a_1, \ldots, a_l]}{[\,]}.Q\\
    &\pptag \INSERT\ \tbl(a_1, \ldots, a_{k-1}, [\,], M_{k+1}, \ldots, M_l); P\\
    &\pptag \kevent{e(a_1, \ldots, a_{k-1}, [\,], M_{k+1}, \ldots, M_l)};P
  \end{align*}
\caption{Process contexts}\label{fig:proccontexts}
\end{figure}

Finally, Figures~\ref{fig:sem3} and~\ref{fig:sem3bis} define the semantics of output processes.
Most of these rules are very similar to those for terms: they just use processes instead of terms as continuations, and include a whole semantic configuration.
Rule~\eqref{sem:eventabort} executes event $e$ and aborts the game, by reducing to the configuration with process $\kw{abort}$.
Similarly to the case of terms, Rules~\eqref{sem:ctx} and~\eqref{sem:ctxevent} allow evaluating terms under a context inside a process. 
In these rules, $C$ is an elementary context, of one of the forms defined
in Figure~\ref{fig:proccontexts}.

Rule~\eqref{sem:output} performs communications: it selects an
input on the desired channel randomly, and immediately executes the
communication. (The process blocks if no suitable input is available.) 
The scheduled process after this rule is the receiving process. 
The input processes that follow the output are stored in the
available input processes, after reducing them by rules of 
Figure~\ref{fig:sem2}. In this rule, $S$ is a multiset.
When we take probabilities over multisets, 
we consider that $D_{\kw{in}}(S)(\sigma',Q_0)$ is the probability of choosing
\emph{one} of the elements equal to $\sigma',Q_0$ in $S$ according to the
distribution $D_{\kw{in}}(S)$, so that the probability of choosing any
element equal to $(\sigma',Q_0)$ is in fact $S(\sigma',Q_0) \times D_{\kw{in}}(S)(\sigma',Q_0)$.

After finishing execution of a process, the system produces
the sequence of executed events $\evseqnopp$.
These events can be used to distinguish games, so we
introduce an additional algorithm, a \emph{distinguisher} $D$ that
takes as input a sequence of events $\evseqnopp$ (without program points and 
replication indices) and returns $\true$ or $\false$.

An example of distinguisher is $D_e$ defined by $D_e(\evseqnopp) = \true$ 
if and only if $e \in \evseqnopp$: this distinguisher detects the execution of event $e$. We will denote the distinguisher $D_e$ simply by $e$.
More generally, distinguishers can detect various properties
of the sequence of events $\evseqnopp$ executed by the game and of its result $a$.
We denote by $D \vee D'$, $D \wedge D'$, and $\neg D$ the distinguishers such that
$(D \vee D')(\evseqnopp) = D(\evseqnopp) \vee D'(\evseqnopp)$, 
$(D \wedge D')(\evseqnopp) = D(\evseqnopp) \wedge D'(\evseqnopp)$,
and $(\neg D)(\evseqnopp) = \neg D(\evseqnopp)$.
We denote by $\Pr[Q : D]$ the probability that $Q$
executes a sequence of events $\evseqnopp$ such that
$D(\evseqnopp) = \true$. This is formally
defined as follows.

\begin{definition}
The initial configuration for running process $Q$ is
$\initconfig(Q) = \emptyset, \ab (\sigma_0, \pptag\coutput{\startch}{}), \ab \pset, \ab \cset, \ab \emptyset, \ab \emptyset$
where $\emptyset, \pset, \cset = \reduce(\emptyset, \{ (\sigma_0, Q) \}, \fc(Q))$ and $\sigma_0$ is the empty mapping sequence.

A \emph{trace of} $Q$ is a trace that starts from $\initconfig(Q)$: $\trace = \initconfig(Q) \ab \red{p_1}{\ix_1} \ab \ldots \ab \red{p_{m-1}}{\ix_{m-1}} \ab \conf_m$.
Let $\traceset$ be the set of all traces of $Q$.

Let $\tracesetfull$ be the set of \emph{full traces} of $Q$, that is, the set of traces of $\traceset$ whose last configuration $\conf_m$ cannot be reduced.

A trace $\trace'$ is an \emph{extension} of $\trace$ when $\trace'$ is obtained by continuing execution from the last configuration of $\trace$. Equivalently, $\trace$ is a \emph{prefix} of $\trace'$.
\bb{Instead of using the notion of prefix, perhaps I should use $\trace \before \conf$. That strengthens a bit the property guaranteed by ``yields a contradiction'': it is a contradiction not only in the sequence of output process configurations of $\trace$ but in all configurations of the derivation. Then in the proof of success simplify, I could consider $\trace \before \conf'$ for some $\conf'$ not necessarily an output process configuration instead of $\trace'$.}%

Let $\varphi$ be a property of traces, that is, a function from traces to $\{\true, \false\}$. We say that $\trace$ satisfies $\varphi$, and we write $\trace \vdash \varphi$, when $\varphi(\trace) = \true$.

A property $\varphi$ is \emph{preserved by extension} when for all traces $\trace$ such that $\trace \vdash \varphi$, for all extensions $\trace'$ of $\trace$, $\trace'\vdash \varphi$.

Given a trace $\trace = \initconfig(Q) \red{p_1}{\ix_1} \ldots \red{p_{m-1}}{\ix_{m-1}} \conf_m$, recall that
$\Pr[\trace] = p_1 \times \ldots \times p_{m-1}$.
We define
\begin{align*}
  \Pr[Q:\varphi] &= \sum_{\trace \in \tracesetfull, \trace\vdash \varphi} \Pr[\trace]\,,\\
  \Prss{Q}{\varphi} &= \sum_{\trace \in \tracesetfull, \exists\trace'\text{ prefix of }\trace, \trace' \vdash \varphi} \Pr[\trace]\\
  &= \sum_{\trace \in \traceset, \trace\vdash \varphi, \text{for any strict prefix $\trace'$ of $\trace$}, \trace'\not\vdash \varphi} \Pr[\trace]\,.
\end{align*}
Given a distinguisher $D$, we can consider it as a property of traces by defining $\trace \vdash D$ if and only if $D(\removepp(\evseq_m)) = \true$ where the last configuration of $\trace$ is $\conf_m = E_m, \ab (\sigma_m, P_m), \ab \pset_m, \ab \cset_m, \ab \tblcts_m, \ab \evseq_m$.
The function $\removepp$ guarantees that the pair (program point, replication indices) is not used in the evaluation of the distinguisher. Actually, this pair could be removed from the semantics. It is useful for the proof of injective correspondences (Section~\ref{sec:injcorresp}).
\end{definition}

\begin{lemma}\label{lem:Pr-Prss}
  \begin{enumerate}
  \item $\Pr[Q:\varphi] \leq \Prss{Q}{\varphi}$.
  \item If $\varphi$ is preserved by extension, then $\Pr[Q:\varphi] = \Prss{Q}{\varphi}$.
  \end{enumerate}
\end{lemma}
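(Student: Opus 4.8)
The plan is to treat both parts as elementary set-inclusion arguments on the index sets of the two sums, exploiting that the summands $\Pr[\trace]$ are nonnegative and that the prefix relation is reflexive: a trace is a prefix of itself (the improper prefix, which is precisely what the qualifier ``strict prefix'' in the second characterization of $\Prss{Q}{\varphi}$ is meant to exclude). Throughout I would work with the first characterization of $\Prss{Q}{\varphi}$, the sum over full traces $\trace \in \tracesetfull$ admitting some prefix $\trace'$ with $\trace' \vdash \varphi$.

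For Part~1, note that both $\Pr[Q:\varphi]$ and $\Prss{Q}{\varphi}$ range over the same set $\tracesetfull$, so it suffices to compare their index sets. If a full trace $\trace$ satisfies $\trace \vdash \varphi$, then $\trace$ is itself a prefix of $\trace$ witnessing the condition defining $\Prss{Q}{\varphi}$; hence $\{\trace \in \tracesetfull \mid \trace \vdash \varphi\} \subseteq \{\trace \in \tracesetfull \mid \exists\, \trace' \text{ prefix of } \trace,\ \trace' \vdash \varphi\}$. Since each $\Pr[\trace] \geq 0$, summing over the larger index set yields the larger value, which is exactly $\Pr[Q:\varphi] \leq \Prss{Q}{\varphi}$.

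For Part~2, I would establish the reverse inequality $\Prss{Q}{\varphi} \leq \Pr[Q:\varphi]$ and combine it with Part~1. Here the hypothesis that $\varphi$ is preserved by extension does the work: if a full trace $\trace$ has a prefix $\trace'$ with $\trace' \vdash \varphi$, then $\trace$ is an extension of $\trace'$, and preservation by extension gives $\trace \vdash \varphi$. Thus the two index sets appearing in Part~1 actually coincide, so the two sums are equal; equivalently, under preservation by extension the condition ``$\exists\, \trace'$ prefix of $\trace$ with $\trace' \vdash \varphi$'' is equivalent to ``$\trace \vdash \varphi$'', and equality follows.

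The main point is that there is no genuine analytic obstacle: no manipulation of the probabilities $\Pr[\trace]$, and no appeal to the equivalence of the two forms of $\Prss{Q}{\varphi}$, is needed beyond nonnegativity of the summands. The only step demanding care is the bookkeeping around the prefix relation. One must confirm that ``prefix'' is reflexive, so that a $\varphi$-satisfying full trace is counted in $\Prss{Q}{\varphi}$, and that the ``extension'' relation used in the definition of ``preserved by extension'' is the very relation whose converse defines ``prefix'' in $\Prss{Q}{\varphi}$. Once these conventions are pinned down, both inclusions are immediate.
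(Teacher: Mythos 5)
Your proposal is correct and follows essentially the same argument as the paper: Part~1 uses reflexivity of the prefix relation (taking $\trace' = \trace$) to get the index-set inclusion, and Part~2 uses preservation by extension to show the two index sets coincide. The paper's proof is just a terser statement of exactly these two observations.
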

\begin{proof}
  The first property holds because, when $\trace \vdash \varphi$,
  there exists a prefix $\trace'$ of $\trace$ (take $\trace' = \trace$) such that $\trace' \vdash \varphi$.

  The second property holds because, if $\varphi$ is preserved by extension and there exists a prefix $\trace'$ of $\trace$ such that $\trace '\vdash \varphi$, then $\trace \vdash \varphi$.
  \proofcomplete
\end{proof}

For simple terms $M$,
the evaluation can be defined without tables and events. We define $E, \venv, M \evalterm a$
if and only if $E, \venv, M, \emptyset, \emptyset \red{1}{}^* E, \venv, a, \emptyset, \emptyset$, 
where the environment $E$ gives the values of process variables (in particular arrays), and the 
environment $\venv$ gives the values of replication indices and other variables (e.g., those 
used in correspondences, see Section~\ref{sec:def:corresp}). The evaluation relation 
$E, \venv, M \evalterm a$ can also be defined by induction as follows:
\begin{gather*}
  E, \venv, i \evalterm \venv(i)\\[1mm]
  \frac{\forall j \leq m, E, \venv, M_j \evalterm a_j}{E, \venv, x[M_1, \dots, M_m] \evalterm E(x[a_1, \dots, a_m])}\\[1mm]
  \frac{\forall j \leq m, E, \venv, M_j \evalterm a_j}{E, \venv, f(M_1, \dots, M_m) \evalterm f(a_1, \dots, a_m)}
\end{gather*}
For terms that do not contain process variables, the environment $E$ can be omitted,
and we write $\venv, M \evalterm a$. It can also be defined by induction as follows:
\begin{gather*}
  \venv, i \evalterm \venv(i)\\[1mm]
  \frac{\forall j \leq m, \venv, M_j \evalterm a_j}{\venv, f(M_1, \dots, M_m) \evalterm f(a_1, \dots, a_m)}
\end{gather*}

\subsubsection{Properties}

Given a process $Q_0$, 
we write $\replidx{\pp}$ for the current replication indices at $\pp$ in $Q_0$.

\begin{lemma}\label{lem:curidx}
Let $\trace$ be a trace of $Q_0$.
In the derivation of $\trace$, 
for all configurations $E, \ab \sigma, \ab \pptag M, \ab \tblcts, \ab \evseq$
or $E, (\sigma, \pptag P)\restconfig$,
$\dom(\sigma) = \replidx{\pp}$ is the sequence of current replication indices at $\pp$
in $Q_0$ (or $\dom(\sigma) = \emptyset$ is the sequence of current replication indices at $\pptag\coutput{\startch}{}$ in the initial configuration) and for all configurations $E,\pset,\cset$ or $E, (\sigma, \pptag P)\restconfig$, for all $(\sigma', {}^{\pp'} Q) \in \pset$, 
$\dom(\sigma') = \replidx{\pp'}$.
\end{lemma}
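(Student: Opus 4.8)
The plan is to prove the statement by induction on the derivation of $\trace$, establishing the invariant simultaneously for the three reduction relations involved: the main reduction on semantic configurations $E, (\sigma, P)\restconfig$, the auxiliary term reduction on configurations $E, \sigma, \pptag M, \tblcts, \evseq$, and the input-process reduction $\redq$ on configurations $E, \pset, \cset$ (which is what $\reduce$ iterates). These relations are mutually recursive — for instance rule~\eqref{sem:findte} invokes term reductions in its premises, and rule~\eqref{sem:output} invokes $\reduce$ — so the induction runs over the full derivation tree, not merely over the length of the main trace.

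For the base case I would inspect the initial configuration $\initconfig(Q)$. Its scheduled process is $\pptag\coutput{\startch}{}$ paired with the empty mapping, so $\dom(\sigma_0) = \emptyset$, which is exactly the special case allowed by the statement. For the pool $\pset$, obtained from $\reduce(\emptyset, \{(\sigma_0, Q)\}, \fc(Q))$, the top program points of $Q$ lie under no replication and no $\kw{find}$, so the invariant holds for $(\sigma_0, Q)$ before reduction; it then suffices to know that $\redq$ preserves it, which is one of the cases of the inductive step.

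The heart of the argument is a case analysis on the applied rule, driven by three structural facts about current replication indices. First, the only constructs that bind replication indices are $\repl{i}{n}$ (binding $i$ in its body) and $\kw{find}$ (binding $i_{j1}, \dots, i_{jm_j}$ in its condition $\defined(\dots) \fand M_j$); every other construct leaves the current replication indices of its continuation unchanged. Second, rule~\eqref{sem:repl} turns $(\sigma, \pptag\repl{i}{n}Q)$ into $(\sigma[i\mapsto a], Q)$, and if the body sits at program point $\pp_Q$ then $\replidx{\pp_Q} = \replidx{\pp}, i$, so $\dom(\sigma[i\mapsto a]) = \dom(\sigma), i = \replidx{\pp_Q}$. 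Third, in the $\kw{find}$ and $\kw{get}$ rules the condition is evaluated under $\sigma[i_{j1}\mapsto a_1, \dots, i_{jm_j}\mapsto a_{m_j}]$, whose domain is precisely $\dom(\sigma), i_{j1}, \dots, i_{jm_j}$, the current indices at the condition's program point; meanwhile the selected then-branch (a term $N_j$ or a process $P_j$) and the else-branch are re-entered with the original $\sigma$, whose domain matches their unextended current indices. For all remaining term and process rules ($\kw{new}$, $\kw{let}$, $\kw{if}$, $\kw{event}$, $\kw{insert}$, and the context rules~\eqref{sem:ctxt} and~\eqref{sem:ctx}) the mapping $\sigma$ is carried unchanged to a direct subterm or continuation whose program point has the same current replication indices, so the claim follows from the induction hypothesis.

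The one genuinely delicate case is~\eqref{sem:output}, where the scheduled mapping changes: we leave the output process $(\sigma, \pptag\coutput{c[\ldots]}{b};Q'')$ and continue with $(\sigma', P)$, where $(\sigma', {}^{\pp'}Q_0) \in \pset$ and $Q_0 = {}^{\pp'}\cinput{c[\ldots]}{x[\tup{i}]:T};P$. The induction hypothesis applied to the pool gives $\dom(\sigma') = \replidx{\pp'}$, and since the input binds only $x$ and no replication index, $P$'s program point has the same current indices as $\pp'$, establishing the invariant for the new scheduled process. For the resulting pool $\pset \multiunion \pset' \setminus \{(\sigma',Q_0)\}$, the surviving elements of $\pset$ are covered by the induction hypothesis, while the new elements $\pset'$ arise from $\reduce(E, \{(\sigma, Q'')\}, \cset)$: as $Q''$ follows the output with no intervening replication, $\dom(\sigma) = \replidx{\pp} = \replidx{\pp_{Q''}}$, and the $\redq$ case of the induction shows that $\reduce$ preserves the invariant. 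I expect the main obstacle to be purely organisational — phrasing the $\redq$ sub-invariant so that a single statement simultaneously discharges the base case and the pool update in~\eqref{sem:output} — rather than any individual rule, each of which is routine given the three facts above.
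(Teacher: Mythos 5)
Your proposal is correct and follows essentially the same route as the paper's proof: a mutual induction over the derivations of the three reduction relations (term reductions, $\redq$/$\reduce$, and process reductions), with the key cases being the extension of $\sigma$ at replications and in $\kw{find}$/$\kw{get}$ conditions, and the use of the pool invariant plus $\reduce$-preservation in rule~\eqref{sem:output}. The paper merely packages what you call the "organisational" issue into an explicit ``ok''-configuration predicate with separately numbered sub-properties (including one for iterated $\redq$), which is exactly the sub-invariant you anticipated needing.
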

\begin{proofsk}
We say that
\begin{itemize}
\item a configuration $E, \sigma, \pptag M, \tblcts, \evseq$ is ok when $\dom(\sigma) = \replidx{\pp}$;
\item a configuration $E,\pset,\cset$ is ok when for all $(\sigma', {}^{\pp'} Q) \in \pset$, 
$\dom(\sigma') = \replidx{\pp'}$;
\item a configuration $E, (\sigma, \pptag P)\restconfig$ is ok when $\dom(\sigma) = \replidx{\pp}$ (or $\dom(\sigma) = \emptyset$ is the sequence of current replication indices at $\pptag\coutput{\startch}{}$ in the initial configuration) and for all $(\sigma', {}^{\pp'} Q) \in \pset$, 
$\dom(\sigma') = \replidx{\pp'}$.
\end{itemize}
We show by induction on the derivations that
\begin{enumerate}
\item\label{curidx1} if $E_1, \sigma_1, M_1, \tblcts_1, \evseq_1$ is ok, then all configurations $E, \sigma, M, \tblcts, \evseq$ in the derivation of $E_1, \ab \sigma_1, \ab M_1, \ab \tblcts_1, \ab \evseq_1 \red{p}{\ix}
E_2, \sigma_2, M_2, \tblcts_2, \evseq_2$ are ok;

\item\label{curidx2} if $E_1, \pset_1, \cset_1$ is ok, then all configurations $E, \sigma, M, \tblcts, \evseq$ in the derivation of $E_1, \ab \pset_1, \ab \cset_1 \redq E_2, \pset_2, \cset_2$ are ok and $E_2, \pset_2, \cset_2$ is ok;

\item\label{curidx2bis} if $E_1, \pset_1, \cset_1$ is ok, then $\reduce(E_1, \pset_1, \cset_1)$ is ok and all configurations $E, \sigma, M, \tblcts, \evseq$ or $E,\pset,\cset$ and in the derivation of $E_1, \pset_1, \cset_1 \redq^* \reduce(E_1, \pset_1, \cset_1)$ are ok;

\item\label{curidx3} if $E_1, (\sigma_1, P_1), \pset_1, \cset_1, \tblcts_1, \evseq_1$ is ok, then all configurations $E, \sigma, M, \tblcts, \evseq$
or $E,\pset,\cset$ or $E, \ab (\sigma, P)\restconfig$ in the derivation of
$E_1, \ab (\sigma_1, P_1), \ab \pset_1, \ab \cset_1, \ab \tblcts_1, \ab \evseq_1 \red{p}{\ix}
E_2, \ab (\sigma_2, P_2), \ab \pset_2, \ab \cset_2, \ab \tblcts_2, \ab \evseq_2$ are ok.

\end{enumerate}
Indeed, the changes in $\sigma$ match the definition of replication indices.

For Property~\ref{curidx1}, in rules for $\FIND$, the indices of the $\FIND$ condition are added to the domain of $\sigma$ when evaluating the condition and in all other cases, $\sigma$ is unchanged. 

In the proof of Property~\ref{curidx2}, in~\eqref{sem:repl}, the replication index $i$ is added to $\sigma$ and in all other cases, $\sigma$ is unchanged. We use Property~\ref{curidx1} in the case of input~\eqref{sem:input}. 

Property~\ref{curidx2bis} follows immediately from Property~\ref{curidx2} by induction.

For Property~\ref{curidx3}, in rules for $\FIND$, the indices of the $\FIND$ condition are added to the domain of $\sigma$ when evaluating the condition, as in Property~\ref{curidx1}; in~\eqref{sem:output}, we use Property~\ref{curidx2bis}. In all other cases, $\sigma$ is unchanged. We use Property~\ref{curidx1} when evaluating terms, in conditions of $\FIND$ and $\GET$ and in~\eqref{sem:ctx}.

Moreover, in the computation of $\initconfig(Q_0)$, the configuration $\emptyset, \{ (\sigma_0, Q_0) \}, \fc(Q)$ is ok (the current replication indices at the root of $Q_0$ are empty), so by Property~\ref{curidx2bis}, $\initconfig(Q_0)$ is ok.
  \proofcomplete
\end{proofsk}

\begin{lemma}\label{lem:sem-ext}
  If $E, \sigma, N, \tblcts, \evseq \red{p}{\ix} E', \sigma', N', \tblcts', \evseq'$, then
  $E'$ is an extension of $E$, $\tblcts$ is a prefix of $\tblcts'$, $\evseq$ is a prefix of $\evseq'$,
  $\sigma'$ is an extension of $\sigma$, and if the term $N'$ is not of the form $C_1[\dots C_k[\keventabort{(\pp, \tup{a}):e}]\dots]$ for some $k \in \mathbb{N}$ and $C_1$, \dots, $C_k$ contexts defined in Figure~\ref{fig:termcontexts}, then $\sigma' = \sigma$.
  For all configurations $E'', \sigma'', N'', \tblcts'', \evseq''$ in the derivation of $E, \sigma, N, \tblcts, \evseq \red{p}{\ix} E', \sigma', N', \tblcts', \evseq'$, we have that $E''$ is an extension of $E$ and, if $E'', \sigma'', N'', \tblcts'', \evseq''$ is not in the derivation of an hypothesis of a rule for $\FIND$ or $\GET$, then $E'$ is an extension of $E''$; $\sigma''$ is an extension of $\sigma$; $\tblcts$ is a prefix of $\tblcts''$, which is a prefix of $\tblcts'$; and $\evseq$ is a prefix of $\evseq''$, which is a prefix of $\evseq'$.

  If $E, \pset, \cset \redq E', \pset', \cset'$, then $E' = E$ and $\cset \subseteq \cset'$. For all configurations $E'', \ab \sigma'', \ab N'', \ab \tblcts'', \ab \evseq''$ in the derivation of $E, \pset, \cset \redq E', \pset', \cset'$, we have $E'' = E$, $\tblcts'' = \emptyset$, and $\evseq'' = \emptyset$.
  
  If $E, (\sigma, P), \pset, \cset, \tblcts, \evseq \red{p}{\ix} E', (\sigma', P'), \pset', \cset', \tblcts', \evseq'$, then $E'$ is an extension of $E$, $\tblcts$ is a prefix of $\tblcts'$, $\evseq$ is a prefix of $\evseq'$, and $\cset \subseteq \cset'$.
  \begin{itemize}
  \item   If $E, (\sigma, P), \pset, \cset, \tblcts, \evseq \red{p}{\ix} E', (\sigma', P'), \pset', \cset', \tblcts', \evseq'$ is derived by~\eqref{sem:output}, then $\tblcts' = \tblcts$, $\evseq' = \evseq$, for all configurations $E'', \pset'', \cset''$ in that derivation, $E'' = E$ and $\cset \subseteq \cset'' \subseteq \cset'$, and for all configurations $E'', \ab \sigma'', \ab N'', \ab \tblcts'', \ab \evseq''$ in that derivation, $E'' = E$, $\sigma''$ is an extension of $\sigma$, $\tblcts'' = \emptyset$, and $\evseq'' = \emptyset$.
  \item In all other cases, $\cset' = \cset$, $\sigma'$ is an extension of $\sigma$, and if the process $P'$ is not $\kw{abort}$ or of the form $C_0[C_1[\dots C_k[\keventabort{(\pp, \tup{a}):e}]\dots]]$ for some $C_0$ context defined in Figure~\ref{fig:proccontexts}, $k \in \mathbb{N}$, and $C_1$, \dots, $C_k$ contexts defined in Figure~\ref{fig:termcontexts}, then $\sigma' = \sigma$. For all configurations $E'', \sigma'', N'', \tblcts'', \evseq''$ in the derivation of $E, (\sigma, P), \pset, \cset, \tblcts, \evseq \red{p}{\ix} E', \ab (\sigma', P'), \ab \pset', \ab \cset', \ab \tblcts', \ab \evseq'$, we have that $E''$ is an extension of $E$ and, if $E'', \sigma'', N'', \tblcts'', \evseq''$ is not in the derivation of an hypothesis of a rule for $\FIND$ or $\GET$, then $E'$ is an extension of $E''$; $\sigma''$ is an extension of $\sigma$; $\tblcts$ is a prefix of $\tblcts''$, which is a prefix of $\tblcts'$; and $\evseq$ is a prefix of $\evseq''$, which is a prefix of $\evseq'$.
  \end{itemize}
\end{lemma}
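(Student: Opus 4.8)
The plan is to prove the three parts of the statement — about the term-reduction relation $\red{p}{\ix}$ on configurations $E, \sigma, N, \tblcts, \evseq$, about the input-process relation $\redq$, and about the full output-process relation $\red{p}{\ix}$ — by induction on the derivation of each reduction, treating them in this order, since the later relations invoke the earlier ones: the input rule \eqref{sem:input} evaluates channel terms, and the output rule \eqref{sem:output} calls $\reduce$ (hence $\redq$) and evaluates terms through the context rules. For the term relation I would do a case analysis on the last rule applied. The monotonicity claims hold immediately on the base rules: the only rules that touch $E$ are \eqref{sem:newt} and \eqref{sem:lett}, each adding a single cell; the only rule enlarging $\tblcts$ is \eqref{sem:insertt}; the only rule extending $\evseq$ is \eqref{sem:eventt}; and $\sigma$ is enlarged only while evaluating a $\FIND$ (or $\GET$) condition, where the condition indices are appended. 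The context rule \eqref{sem:ctxt} is handled by the induction hypothesis, observing that wrapping a term of the nested-abort form $C_1[\ldots C_k[\keventabort{e}]\ldots]$ in a further context again yields a term of that form, so the characterisation of when $\sigma' = \sigma$ is preserved.

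The structural invariant to maintain is the equivalence ``$\sigma' \neq \sigma$ iff the resulting term has the nested-abort form''. I would establish it by noting that $\sigma$ is extended only during a $\FIND$/$\GET$ condition evaluation, that the normal rules \eqref{sem:findt1} and \eqref{sem:findt2} (and their process analogues) restore the original $\sigma$ in their conclusion, whereas the abort rules \eqref{sem:findte} and \eqref{sem:gette} (together with \eqref{sem:findt3} and \eqref{sem:gett3}) are exactly the ones that carry an extended $\sigma_{k_0}$ out alongside a result of abort-event form. For the claim about all intermediate configurations, the delicate point is the exclusion ``$E'$ extends $E''$ unless $E''$ lies in the derivation of a hypothesis of a $\FIND$ or $\GET$ rule'': such condition evaluations may extend $E$ with the temporary pattern/index variables that are discarded afterwards — by Invariant~\ref{invfc} these have no array accesses — so an intermediate $E''$ inside a condition can properly exceed the final $E'$. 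I would therefore tag which configurations sit on the ``main trunk'' of the derivation versus inside a discarded condition evaluation, and prove the extension claim only for the former.

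The input-process part is the easiest: inspecting rules \eqref{sem:nil}--\eqref{sem:input}, only \eqref{sem:newchannel} changes $\cset$ (adding one channel) and $E$ is never modified; in \eqref{sem:input} the channel terms are simple by Invariant~\ref{invtic}, so by the already-proved term part their evaluation uses only the rules for replication indices, variables, and function application, leaving $E$, $\tblcts$, $\evseq$ untouched, which yields $E'' = E$, $\tblcts'' = \emptyset$, and $\evseq'' = \emptyset$ for every intermediate configuration.

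The output-process part then follows by the same case analysis as for terms for every rule except \eqref{sem:output}, with the abort-form characterisation of $\sigma$-growth now additionally allowing the result process $P' = \kw{abort}$ (produced by \eqref{sem:finde}, \eqref{sem:gete}, and the event-abort rules), consistently with those rules carrying an extended $\sigma_{k_0}$. For \eqref{sem:output} the only subtlety is that the intermediate configurations arise from the $\reduce$ call, which by the input-process part keeps $E'' = E$ and empty $\tblcts''$, $\evseq''$, while the single cell added to $E$ for the received value $b$ is the only growth of $E$ and occurs precisely at the output step (so $\tblcts' = \tblcts$, $\evseq' = \evseq$), and no channel beyond those produced by $\reduce$ is created. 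The main obstacle throughout is not any single computation but the careful bookkeeping of the two exception clauses — the abort-form characterisation of $\sigma$-growth and the ``main trunk'' restriction on $E$-extension; once the inductive statement that separates trunk configurations from condition-evaluation configurations is fixed, every rule is verified mechanically.
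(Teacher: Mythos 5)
Your proposal is correct and takes essentially the same route as the paper, whose proof is exactly this and nothing more: induction on the derivation for the term relation and case analysis on the rules for the input- and output-process relations (the paper records it only as a one-line sketch, so your elaboration—ordering term $\to$ input $\to$ output, the abort-form characterisation of $\sigma$-growth, and the ``main trunk'' separation for the $E$-extension clause—is the intended argument). One caveat: state your auxiliary invariant as the one-directional implication ``if $\sigma' \neq \sigma$ then the result has nested-abort form'' rather than as an equivalence, since rules~\eqref{sem:eventabortt} and~\eqref{sem:findt3} produce abort-event values while leaving $\sigma$ unchanged (only \eqref{sem:findte}/\eqref{sem:gette} carry the extended $\sigma_{k_0}$ out), so the ``iff'' version would fail in the induction even though the direction actually needed goes through.
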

\begin{proofsk}
  By induction on the derivation of $E, \sigma, N, \tblcts, \evseq \red{p}{\ix} E', \sigma', N', \tblcts', \evseq'$ and by cases on the reductions $E, \pset, \cset \redq E', \pset', \cset'$ and $E, (\sigma, P), \pset, \cset, \tblcts, \evseq \red{p}{\ix} E', \ab (\sigma', P'), \ab \pset', \ab \cset', \ab \tblcts', \ab \evseq'$. 
  \proofcomplete
\end{proofsk}

\bbnote{in find and get:
In case we execute $\kw{event\_abort}$, the configuration $(E, \sigma)$ at the selected $\kw{event\_abort}$ is returned. This is needed in the proof of correspondences, to make sure that the facts that hold at the $\kw{event\_abort}$ also hold at the end of the trace.
In all other cases, final $E$ is built from $E$ before the find by just adding the defined indices. The variables bound in the evaluation of the branches are removed. This is ok because these variables do not have array accesses. In the collection of true facts, we make sure to remove the facts that use variables defined in the condition of find/get, to compute the facts that hold in the $\kw{then}$ branches.
Other solutions could be:
 - Keep the variables in the successful branch (to make sure that the conditions of the successful branch hold)
 - Keep all variables of the branches, by taking the union of $E$ after evaluating all branches. That does not work for "get" because the variables defined in the condition of get are the same for all evaluations of the branches. For "find", that would require a proof to show that there is no conflict (same variable given different values in these environments). That's tricky. E.g., SArename may create several branches of find that use the same variables; at runtime, they are used with different indices, because the defined conditions cannot succeed for the same indices. These variables are subsequently renamed by AutoSArename, but the situation in the intermediate game makes it not so obvious to define the semantics that way. If we do that, then the final environment in a reduction is an extension of all environments in the derivation of that reduction, except for "get".}%


We say that a term $N$ is \emph{in evaluation position} in a
configuration $E, \sigma, M, \tblcts, \evseq$ when
$M = C_1[\dots C_k[N]\dots]$ for some $k \in \mathbb{N}$ and $C_1$,
\dots, $C_k$ contexts defined in Figure~\ref{fig:termcontexts}.

An \emph{input context} is a context of the form
$\pptag\cinput{c[a_1, \dots, a_{j-1}, [\,], M_{j+1}, \dots, M_m]}{x[\tup{i}]:T}; P$.

We say that a term $N$ is \emph{in evaluation position} in a
configuration $E, \pset, \cset$ when $(\sigma',Q) \in \pset$ and
$Q = C_0[C_1[\dots C_k[N]\dots]]$ for some $C_0$ input context,
$k \in \mathbb{N}$ and $C_1$, \dots, $C_k$ contexts defined in
Figure~\ref{fig:termcontexts}.

We say that a term $N$ is \emph{in evaluation position} in a
configuration $E, (\sigma, P)\restconfig$ when
$P = C_0[C_1[\dots C_k[N]\dots]]$ for some $C_0$ context defined in
Figure~\ref{fig:proccontexts}, $k \in \mathbb{N}$, and $C_1$, \dots,
$C_k$ contexts defined in Figure~\ref{fig:termcontexts} or
$(\sigma',Q) \in \pset$ and $Q = C_0[C_1[\dots C_k[N]\dots]]$ for some
$C_0$ input context, $k \in \mathbb{N}$ and $C_1$, \dots, $C_k$
contexts defined in Figure~\ref{fig:termcontexts}.

The output process $P$ is \emph{in evaluation position} in 
configuration $E, (\sigma, P)\restconfig$.
The input processes $Q$ such that $(\sigma',Q) \in \pset$ for some $\sigma'$
are \emph{in evaluation position} in 
configurations $E, \pset, \cset$ and $E, (\sigma, P)\restconfig$.

\begin{lemma}\label{lem:evalpos}
Consider a trace $\trace$ of $Q_0$.

All subterms of $M$ that occur in non-evaluation position in a configuration $E, \sigma, M, \tblcts, \evseq$ in the derivation of $\trace$ are subterms of $Q_0$.

All subterms and subprocesses of processes in $\pset$ that occur in non-evaluation position in a configuration $E, \pset,\cset$ in the derivation of $\trace$ are subterms, resp. subprocesses, of $Q_0$ up to renaming of channels.

All subterms and subprocesses of $P$ and of processes in $\pset$ that occur in non-evaluation position in a configuration $E, (\sigma,P), \pset, \cset, \tblcts, \evseq$ in the derivation of $\trace$ are subterms, resp. subprocesses, of $Q_0$ up to renaming of channels (except for the process $0$ that follows $\coutput{\startch}{}$ in $\initconfig(Q_0)$).
\end{lemma}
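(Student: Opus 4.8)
The plan is to prove the three statements simultaneously by induction on the derivation of $\trace$, by establishing a preservation lemma for each of the three reduction relations ($\red{p}{\ix}$ on terms, $\redq$ on input-process configurations, and $\red{p}{\ix}$ on output-process configurations): if the source configuration has the stated property—every non-evaluation-position subterm/subprocess is a subterm/subprocess of $Q_0$, up to channel renaming for processes—then so do the target configuration and every configuration occurring in the derivation of that step. Since term reductions depend only on term sub-reductions (through \eqref{sem:ctxt} and the $\red{}{}^*$ evaluations of $\FIND$/$\GET$ conditions), I would stage the argument: first the term case, then the input-process case (which invokes the term case for \eqref{sem:input}), then $\reduce$ by iterating $\redq$, and finally the output-process case (which invokes the term case for \eqref{sem:ctx} and $\reduce$ for \eqref{sem:output}). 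The main result then follows by induction on the length of $\trace$, after the base case: $\initconfig(Q_0)$ has the property because $\emptyset,\{(\sigma_0,Q_0)\},\fc(Q)$ trivially does (every subterm of $Q_0$ is a subterm of $Q_0$) and $\reduce$-preservation carries this to $\pset$, the only non-$Q_0$ subprocess being the $0$ following $\coutput{\startch}{}$, which is explicitly excepted.

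The structural observation driving every preservation step is that a reduction rewrites only the subterm in evaluation position. This is forced by the shape of the contexts in Figures~\ref{fig:termcontexts} and~\ref{fig:proccontexts}: a hole may appear in the $k$-th argument only when arguments $1,\dots,k-1$ are already values, so the hole of any applicable context lies in evaluation position, and the off-spine parts of the context (the other arguments, the continuation of a $\kw{let}$/$\kw{event}$/$\INSERT$, the branches of an $\kw{if}$, the process following an output) are untouched. Hence in a step $C_1[\dots C_k[R]\dots]\red{}{}C_1[\dots C_k[R']\dots]$ the non-evaluation-position subterms of the result split into the off-spine subterms of the contexts—unchanged and already non-evaluation-position subterms of the source, so in $Q_0$ by the induction hypothesis—and those lying inside $R'$. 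For the elementary rules where $R'$ is a value (\eqref{sem:replindex}, \eqref{sem:var}, \eqref{sem:fun}, \eqref{sem:newt}, etc.) the latter set is empty; for \eqref{sem:ctxt}/\eqref{sem:ctx} it is handled by applying the term-case induction hypothesis to the premise $R\red{}{}R'$, using that $R$ is itself in evaluation position and so inherits the term property from the source. I would also note that replacing a non-value $R$ by a value can only \emph{unblock} positions to its right, moving subterms from non-evaluation to evaluation position, hence creating no new obligations.

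The crucial and most delicate case is $\FIND$ and $\GET$ (Rules~\eqref{sem:findt1}, \eqref{sem:findt2}, \eqref{sem:find1}, \eqref{sem:find2}, \eqref{sem:get1}, \eqref{sem:get2} and their analogues), because there the contractum is not a value but a whole branch $N_{j'}$ or $N$ (resp.\ a process branch). Here I would use that no context descends into a $\FIND$ or $\GET$, so these constructs are reduced only when they form the redex in evaluation position; consequently all their branches and conditions are non-evaluation-position subterms of the source and, by the induction hypothesis, are subterms/subprocesses of $Q_0$. The selected branch $N_{j'}$ is therefore a subterm of $Q_0$, and since the found indices are recorded in the environment through the variables $\vf_{jk}[\tup{i}]$ rather than substituted into $N_{j'}$, the branch is used \emph{verbatim}, so its non-evaluation-position subterms lie in $Q_0$ with no substitution. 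The remaining sub-derivations (the $\red{}{}^*$ evaluations of $\defined(\dots)\fand M_j$ and of the $\GET$ condition) start from conditions that are subterms of $Q_0$, so they are covered by the term-case induction hypothesis; the cases producing $\keventabort{(\pp,\tup{a}):e}$ or $\kw{abort}$ are immediate since these carry no subterms constrained by the statement.

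Finally, the only source of the ``up to renaming of channels'' proviso is Rule~\eqref{sem:newchannel}, where $Q$ is replaced by $Q\{c'/c\}$; since subprocesses of $Q\{c'/c\}$ are channel-renamings of subprocesses of $Q$, the property is closed under this rewriting, and channels never enter term structure, which is why the statement for term configurations needs no such proviso. The main obstacle I anticipate is purely the bookkeeping: keeping precise track of which occurrences are in evaluation position across the nesting $C_1[\dots C_k[\cdot]\dots]$ and through the sub-derivations of $\FIND$/$\GET$ conditions and of $\reduce$ inside \eqref{sem:output}, and verifying in each rule that the contractum introduces no non-$Q_0$ subterm into a non-evaluation position. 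Lemma~\ref{lem:sem-ext} supplies the auxiliary monotonicity facts (extension of $E$, $\sigma$, $\tblcts$, $\evseq$) that make these sub-derivation arguments go through.
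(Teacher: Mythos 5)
Your proposal is correct and follows essentially the same route as the paper's proof: your staged preservation properties for term reductions, $\redq$, $\reduce$, and output-process reductions correspond exactly to the paper's Properties 1, 2, 2bis, and 3, with the same key observations (context holes lie in evaluation position, $\FIND$/$\GET$ branches and conditions are non-evaluation-position subterms used verbatim, and the channel-renaming proviso arises only from \eqref{sem:newchannel}). The base case, including the exception for the process $0$ following $\coutput{\startch}{}$ in $\initconfig(Q_0)$, is also handled as in the paper.
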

\begin{proof}
We say that
\begin{itemize}
\item a configuration $E, \sigma, M, \tblcts, \evseq$ is ok when all subterms of $M$ that are not in evaluation position are subterms of $Q_0$;
\item a configuration $E,\pset,\cset$ is ok when all subterms and subprocesses of processes in $\pset$ that are not in evaluation position are subterms, resp. subprocesses, of $Q_0$ up to renaming of channels;
\item a configuration $E, (\sigma, P)\restconfig$ is ok when all subterms and subprocesses of $P$ and of processes in $\pset$ that are not in evaluation position are subterms, resp. subprocesses, of $Q_0$ up to renaming of channels.
\end{itemize}
We show by induction on the derivations that
\begin{enumerate}
\item\label{evalpos1} if $E_1, \sigma_1, M_1, \tblcts_1, \evseq_1$ is ok, then all configurations $E, \sigma, M, \tblcts, \evseq$ in the derivation of $E_1, \ab \sigma_1, \ab M_1, \ab \tblcts_1, \ab \evseq_1 \red{p}{\ix}
E_2, \sigma_2, M_2, \tblcts_2, \evseq_2$ are ok;

\item\label{evalpos2} if $E_1, \pset_1, \cset_1$ is ok, then all configurations $E, \sigma, M, \tblcts, \evseq$ in the derivation of $E_1, \ab \pset_1, \ab \cset_1 \redq E_2, \pset_2, \cset_2$ are ok and $E_2, \pset_2, \cset_2$ is ok;

\item\label{evalpos2bis} if $E_1, \pset_1, \cset_1$ is ok, then $\reduce(E_1, \pset_1, \cset_1)$ is ok and all configurations $E, \sigma, M, \tblcts, \evseq$ or $E,\pset,\cset$ and in the derivation of $E_1, \pset_1, \cset_1 \redq^* \reduce(E_1, \pset_1, \cset_1)$ are ok;

\item\label{evalpos3} if $E_1, (\sigma_1, P_1), \pset_1, \cset_1, \tblcts_1, \evseq_1$ is ok, then all configurations $E, \sigma, M, \tblcts, \evseq$
or $E,\pset,\cset$ or $E, \ab (\sigma, P)\restconfig$ in the derivation of
$E_1, \ab (\sigma_1, P_1), \ab \pset_1, \ab \cset_1, \ab \tblcts_1, \ab \evseq_1 \red{p}{\ix}
E_2, \ab (\sigma_2, P_2), \ab \pset_2, \ab \cset_2, \ab \tblcts_2, \ab \evseq_2$ are ok.

\end{enumerate}
Property~\ref{evalpos1}: In~\eqref{sem:replindex}, \eqref{sem:var}, \eqref{sem:fun}, and~\eqref{sem:eventabortt}, all terms are in evaluation position, so all configurations are ok.
In~\eqref{sem:newt}, \eqref{sem:lett}, \eqref{sem:ift1}, \eqref{sem:ift2}, \eqref{sem:insertt}, and~\eqref{sem:eventt}, $N$ (resp. $N'$) is not in evaluation position, so by hypothesis it is a subterm of $Q_0$. Therefore, the target configuration is ok.
In the rules for $\FIND$, the recursive calls are on $D_j \wedge M_j$ which is not in evaluation position, so it is a subterm of $Q_0$. Therefore, the initial configurations of the recursive calls are ok, and we conclude for the configurations inside the recursive calls by induction hypothesis. The target configuration is ok because
in~\eqref{sem:findte} and~\eqref{sem:findt3}, the resulting term is in evaluation position (it has no subterm), and in~\eqref{sem:findt1} and~\eqref{sem:findt2}, the resulting term is a term that is not evaluation position in the initial configuration, so it is a subterm of $Q_0$. 
In~\eqref{sem:definedno} and~\eqref{sem:definedyes}, the terms $M_1$, \dots, $M_l$ are not in evaluation position in the initial configuration, so they are subterms of $Q_0$. Hence the initial configurations of the recursive calls are ok, and we conclude for the configurations inside the recursive calls by induction hypothesis.
The target configuration is ok because in~\eqref{sem:definedno}, $\false$ is in evaluation position (it has no subterm) and in~\eqref{sem:definedyes}, the resulting term $M$ is a term that is not evaluation position in the initial configuration, so it is a subterm of $Q_0$. 
The case of $\GET$ is similar to the one of $\FIND$.
In~\eqref{sem:ctxt}, the initial configuration of the recursive call is ok because terms that are not in evaluation position in $N$ are also not in evaluation position in $C[N]$, so they are subterms of $Q_0$. We conclude for the configurations inside the recursive call by induction hypothesis. The target configuration is ok because terms that are not in evaluation position in $C[N']$ are either not in evaluation position inside $C$, in which case they are subterms of $Q_0$ because the initial configuration is ok, or they are not in evaluation position in $N'$, in which case they are also subterms of $Q_0$ because the target configuration of the recursive call is ok. 
In~\eqref{sem:ctxeventt}, the target configuration is ok because the term is in evaluation position (it has no subterm).

Property~\ref{evalpos2}: The desired property is preserved for unmodified processes in $\pset$. This is enough to conclude for~\eqref{sem:nil}.
For~\eqref{sem:par} and~\eqref{sem:repl}, the resulting processes $Q_1$, $Q_2$, $Q$ are not in evaluation position in the initial configuration, so they are subprocesses of $Q_0$ up to renaming of channels.
For~\eqref{sem:newchannel}, $Q$ is not in evaluation position in the initial configuration, so it is a subprocesses of $Q_0$ up to renaming of channels, and so is $Q\{c'/c\}$.
For~\eqref{sem:input}, the subterms of $M_j$ that are not in evaluation position are not in evaluation position in $C[M_j]$, so they are subterms of $Q_0$. We can then apply Property~\ref{evalpos2} for the recursive call, so all configurations in the derivation of the recursive call are ok. The target configuration is ok because the subterms or subprocesses of $C[M'_j]$ not in evaluation position are either subterms of $M'_j$ not in evaluation position, which are subterms of $Q_0$ since the target configuration is ok, or subterms of subprocesses of $C$ not in evaluation position, which are subterms or subprocesses of $Q_0$ up to renaming of channels.

Property~\ref{evalpos2bis} follows immediately from Property~\ref{evalpos2} by induction.

Property~\ref{evalpos3}: In~\eqref{sem:output}, $Q''$ is not in evaluation position in the initial configuration, so it is a subprocess of $Q_0$ up to renaming of channels. Therefore, the configuration $E, \{ (\sigma, Q'')\}, \cset$ is ok.
By Property~\ref{evalpos2bis}, all configurations in the computation of $E, \pset', \cset'$ and $E, \pset', \cset'$ itself are ok.
Moreover, $P$ is not in evaluation position in the initial configuration, so it is a subprocess of $Q_0$ up to renaming of channels. We can then conclude that the target configuration is ok.
All other cases can be treated similarly to terms in Property~\ref{evalpos1}.

Moreover, in the computation of $\initconfig(Q_0)$, the configuration $\emptyset, \{ (\sigma_0, Q_0) \}, \fc(Q)$ is ok, so by Property~\ref{evalpos2bis}, $\initconfig(Q_0)$ is ok except for the process $0$ that follows $\coutput{\startch}{}$ in $\initconfig(Q_0)$. That process disappears in the first reduction, which is by~\eqref{sem:output}. 
  \proofcomplete
\end{proof}

\begin{corollary}\label{cor:evalpos}
Consider a trace $\trace$ of $Q_0$.

In $\trace$, the target process of rules~\eqref{sem:new},
\eqref{sem:let}, \eqref{sem:if1}, \eqref{sem:if2}, \eqref{sem:find1}, \eqref{sem:find2}, \eqref{sem:insert}, \eqref{sem:get1}, \eqref{sem:get2}, \eqref{sem:output}, \eqref{sem:event} is a subprocess of $Q_0$ up to renaming of channels.

\sloppy

In $\trace$, the target term of rules~\eqref{sem:newt},
\eqref{sem:lett}, \eqref{sem:ift1}, \eqref{sem:ift2}, \eqref{sem:findt1}, \eqref{sem:findt2}, \eqref{sem:insertt}, \eqref{sem:gett1}, \eqref{sem:gett2}, \eqref{sem:eventt}, \eqref{sem:definedyes} is a subterm $Q_0$.
\end{corollary}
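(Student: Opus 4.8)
The plan is to read the corollary off Lemma~\ref{lem:evalpos} by inspecting each listed reduction rule and identifying the target process (resp.\ term) as a subprocess (resp.\ subterm) of the \emph{source} configuration that occurs in \emph{non}-evaluation position. The key point is that the lemma must be applied to the source, not the target, configuration: in a configuration $E, (\sigma, P)\restconfig$ the scheduled process $P$ is by definition in evaluation position, so the target process of, say, rule~\eqref{sem:new} sits in evaluation position in the target configuration and the lemma tells us nothing about it there. By contrast, the source configuration of the reduction is also a configuration in the derivation of $\trace$ (for the process rules it is literally the left-hand side of a step of $\trace$; for the term rules it arises inside a step, through the context rules~\eqref{sem:ctx}/\eqref{sem:ctxt} or the evaluation of $\FIND$ and $\GET$ conditions), and in it the same subprocess/subterm occurs in non-evaluation position.

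For the process rules~\eqref{sem:new}, \eqref{sem:let}, \eqref{sem:if1}, \eqref{sem:if2}, \eqref{sem:insert}, \eqref{sem:event}, \eqref{sem:find1}, \eqref{sem:find2}, \eqref{sem:get1}, and~\eqref{sem:get2}, the target is a continuation or a $\kw{then}$/$\kw{else}$ branch of the scheduled process. Since the process contexts of Figure~\ref{fig:proccontexts} have holes only in term-argument positions (a channel or message term, a $\kw{let}$ value, an $\kw{if}$ condition, an $\INSERT$ or $\kw{event}$ argument), and $\FIND$ and $\GET$ are not process contexts at all, such a continuation or branch is never in evaluation position. Hence the claim of Lemma~\ref{lem:evalpos} about configurations $E, (\sigma,P)\restconfig$, applied to the source configuration, shows the target process is a subprocess of $Q_0$ up to renaming of channels.

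Rule~\eqref{sem:output} is the one case where the target process is drawn not from the scheduled process but from a matching input process ${}^{\pp'}\cinput{c[a_1,\ldots,a_l]}{x[\tup{i}]:T}.P$ in the multiset $\pset$ of the source configuration; the target is its continuation $P$. Although whole input processes in $\pset$ are in evaluation position, their continuation after the input is not, because an input context is neither a term context of Figure~\ref{fig:termcontexts} nor a process context of Figure~\ref{fig:proccontexts}. So $P$ is in non-evaluation position, and the same claim of Lemma~\ref{lem:evalpos}, now applied to the subprocesses of $\pset$, gives that $P$ is a subprocess of $Q_0$ up to renaming of channels.

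The term rules are handled symmetrically using the claim of Lemma~\ref{lem:evalpos} about term configurations $E, \sigma, M, \tblcts, \evseq$. For~\eqref{sem:newt}, \eqref{sem:lett}, \eqref{sem:ift1}, \eqref{sem:ift2}, \eqref{sem:insertt}, \eqref{sem:eventt}, \eqref{sem:findt1}, \eqref{sem:findt2}, \eqref{sem:gett1}, and~\eqref{sem:gett2} the target is a continuation or branch of the source term; the term contexts of Figure~\ref{fig:termcontexts} again have holes only in term-argument positions, and $\kw{new}$, $\FIND$, $\GET$ are not term contexts, so the target is a non-evaluation subterm. For~\eqref{sem:definedyes} the source term is $\defined(M_1,\ldots,M_l)\fand M$ and the target $M$ is likewise a non-evaluation subterm. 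Applying that claim to each source configuration yields that the target term is a subterm of $Q_0$ (no channel renaming is needed, as terms contain no channels). The only real work is the rule-by-rule bookkeeping checking that the surrounding context's hole is a term-argument hole rather than a continuation/branch hole; the output rule, whose target comes from $\pset$, is the mildly exceptional case and the main point requiring care.
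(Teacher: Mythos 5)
Your proposal is correct and follows essentially the same route as the paper: the paper's proof likewise observes that the target term or process of each listed rule occurs in non-evaluation position in the initial (source) configuration of the rule, and then applies Lemma~\ref{lem:evalpos}. Your additional rule-by-rule bookkeeping, including the special treatment of rule~\eqref{sem:output} where the target is the continuation of an input process in $\pset$, simply makes explicit what the paper leaves implicit.
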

\begin{proof}
The target term or process of these rules appears in non-evaluation position
in the initial configuration of these rules, so by Lemma~\ref{lem:evalpos},
it is a subterm of $Q_0$ or a subprocess of $Q_0$ up to renaming of channels.
\proofcomplete
\end{proof}


We say that a configuration $\conf$ \emph{is at program point} $\pp$ in a trace $\trace$ of $Q_0$ when $\conf$ occurs in the derivation of $\trace$, and
either $\conf = E, \sigma, \pptag N, \tblcts, \evseq$ for some subterm $\pptag N$ of $Q_0$
or $\conf = E, (\sigma, \pptag P)\restconfig$ for some subprocess $\pptag P$ of $Q_0$ up to renaming of channels.

\begin{lemma}\label{lem:condfind}
Let $\trace$ be a trace of $Q_0$.
Let $\conf$ be a configuration at program point $\pp$ in $\trace$.
If $\pp$ is not inside a condition of $\FIND$ or $\GET$, 
then $\conf$ is not in the derivation of an hypothesis of a rule for $\FIND$ or $\GET$ inside the derivation of $\trace$.
\end{lemma}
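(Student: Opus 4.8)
The plan is to argue by contraposition: I would show that every configuration $\conf$ that is at program point $\pp$ and occurs in the derivation of a hypothesis of a rule for $\FIND$ or $\GET$ must have $\pp$ inside a condition of $\FIND$ or $\GET$. The starting observation is that the hypotheses of the $\FIND$ rules~\eqref{sem:findte}--\eqref{sem:findt3} and~\eqref{sem:finde}--\eqref{sem:find3}, and of the $\GET$ rules~\eqref{sem:gette}--\eqref{sem:gett3} and~\eqref{sem:gete}--\eqref{sem:get3}, are exactly the reductions $\red{p_k}{\ix_k}^*$ that evaluate a condition $D_j \wedge M_j$ of the $\FIND$ (resp.\ the condition $M$ of the $\GET$). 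By definition such a condition is a subterm of $Q_0$ all of whose program points lie inside that condition. Moreover a condition is a term, so its evaluation uses only the term-reduction rules of Figures~\ref{fig:sem1} and~\ref{fig:sem1bis}, and every configuration in the derivation of such a hypothesis is a term configuration $E, \sigma, M, \tblcts, \evseq$; in particular the ``process'' case of the definition of ``at program point $\pp$'' never applies here, so it suffices to treat term configurations.

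The core is an invariant strengthening Lemma~\ref{lem:evalpos} by tracking program points: along the derivation of a reduction $E_0, \sigma_0, \kappa, \tblcts_0, \evseq_0 \red{}{}^* \cdots$ evaluating a condition $\kappa$ that occurs inside a condition of $\FIND$ or $\GET$, every configuration $E', \sigma', M', \tblcts', \evseq'$ satisfies: each subterm of $M'$ that is a subterm of $Q_0$ has its program point inside that condition. I would prove this by induction on the derivation, inspecting each term rule. Rules~\eqref{sem:replindex}, \eqref{sem:var}, \eqref{sem:fun}, \eqref{sem:definedno}, and the event-abort rules produce a value or an abort value, which has no subterm in $Q_0$; rules~\eqref{sem:newt}, \eqref{sem:lett}, \eqref{sem:ift1}, \eqref{sem:ift2}, \eqref{sem:findt1}, \eqref{sem:findt2}, \eqref{sem:gett1}, \eqref{sem:gett2}, and~\eqref{sem:definedyes} replace the current term by one of its own subterms (consistently with Corollary~\ref{cor:evalpos}), so the invariant is preserved; and~\eqref{sem:insertt}, \eqref{sem:eventt} cannot occur in a condition by Invariant~\ref{invtfc}. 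Rule~\eqref{sem:ctxt} is handled by applying the induction hypothesis to the inner reduction and noting that the surrounding context is unchanged, so its subterms lying in $Q_0$ already satisfy the invariant; \eqref{sem:ctxeventt} produces an abort value. Finally, for the evaluations of nested $\FIND$/$\GET$ conditions (hypotheses of~\eqref{sem:findte}--\eqref{sem:findt3}, \eqref{sem:gette}--\eqref{sem:gett3}) and of $\defined$ arguments (hypotheses of~\eqref{sem:definedyes}), the initial term is again a subterm of $\kappa$, hence of a condition, so the same statement applies recursively.

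Applying the invariant to a configuration $\conf = E', \sigma', \pptag N', \tblcts', \evseq'$ that is at program point $\pp$ — so that $M' = \pptag N'$ is itself a subterm of $Q_0$ — I conclude that $\pp$ lies inside the condition of $\FIND$ or $\GET$ whose evaluation produced $\conf$, which is exactly the contrapositive of the claim. I expect the main obstacle to be formulating the invariant strongly enough to be inductive: one cannot merely assert ``$\pp$ is inside the condition'' for the whole current term, since intermediate terms with values substituted in (e.g.\ $f(a_1, M_2)$) are no longer subterms of $Q_0$ at all. It is the quantification over \emph{all} subterms that lie in $Q_0$, together with the careful treatment of the context rule~\eqref{sem:ctxt} and of the recursive evaluations of nested conditions, that makes the induction close.
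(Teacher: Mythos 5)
Your overall strategy matches the paper's: argue contrapositively, and show that every configuration occurring in the derivation of a hypothesis of a $\FIND$/$\GET$ rule only involves program points lying inside conditions of $\FIND$/$\GET$ in $Q_0$. Your inductive invariant---quantifying over all subterms of the current term that are subterms of $Q_0$, precisely because intermediate terms have values substituted into them---is a careful rendering of what the paper dispatches with the terse remark that ``the semantic rules do not create program points.'' However, there is a genuine gap at the very first step, and it is exactly the half of the argument on which the paper spends its effort. You claim that ``by definition'' the condition $D_j \wedge M_j$ evaluated in a hypothesis of a $\FIND$ or $\GET$ rule is a subterm of $Q_0$ all of whose program points lie inside that condition. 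This is not definitional: the $\FIND$ (or $\GET$) term to which the rule is applied is a component of a runtime configuration, and nothing in the definitions forces it to be a subterm of $Q_0$. A priori it could be a partially evaluated term that still carries a program-point tag of $Q_0$---this is exactly what happens to terms such as $\pptag f(a_1, M_2, \ldots)$ produced by rule~\eqref{sem:ctxt}---and then its ``condition'' would not be a $Q_0$-subterm, the phrase ``inside that condition'' would not even make sense, and your invariant would have nothing to anchor to. The same unproved assumption recurs in your recursive treatment of nested conditions (``the initial term is again a subterm of $\kappa$'').

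What closes the gap is the argument the paper gives first: the only rules that can conclude with a $\FIND$ or $\GET$ term are \eqref{sem:newt}, \eqref{sem:lett}, \eqref{sem:ift1}, \eqref{sem:ift2}, \eqref{sem:findt1}, \eqref{sem:findt2}, \eqref{sem:insertt}, \eqref{sem:gett1}, \eqref{sem:gett2}, \eqref{sem:eventt}, \eqref{sem:definedyes}---rule~\eqref{sem:ctxt} is excluded because no elementary context of Figure~\ref{fig:termcontexts} places its hole inside a $\FIND$ or $\GET$, so these terms are never partially evaluated---and, by Corollary~\ref{cor:evalpos}, the target terms of these rules are subterms of $Q_0$; the analogous enumeration for process rules together with Corollary~\ref{cor:evalpos} handles $\FIND$ and $\GET$ processes. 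Hence the executed $\FIND$/$\GET$ in the source configuration of any $\FIND$/$\GET$ rule is a subterm or subprocess of $Q_0$, its conditions are $Q_0$-subterms lying inside conditions of $\FIND$/$\GET$, and from that anchor your induction and the contrapositive conclusion go through as you describe.
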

\begin{proof}
The only rules that can conclude with a term
$\FIND$ or $\GET$ are \eqref{sem:newt},
\eqref{sem:lett}, \eqref{sem:ift1}, \eqref{sem:ift2}, \eqref{sem:findt1}, \eqref{sem:findt2}, \eqref{sem:insertt}, \eqref{sem:gett1}, \eqref{sem:gett2}, \eqref{sem:eventt}, \eqref{sem:definedyes} and, by Corollary~\ref{cor:evalpos}, their target term is a subterm of $Q_0$.
The situation is similar $\FIND$ and $\GET$ processes. So the executed term or process in the initial configuration of rules for $\FIND$ and $\GET$ is always a subterm or subprocess of $Q_0$ up to renaming of channels. When a configuration $\conf$ is in the derivation of an hypothesis of a rule for $\FIND$ or $\GET$, it therefore always deals with program points syntactically in conditions of $\FIND$ or $\GET$ in $Q_0$. (The semantic rules do not create program points.) Since $\pp$ is not inside a condition of $\FIND$ or $\GET$, we conclude that $\conf$ is not in the derivation of an hypothesis of a rule for $\FIND$ or $\GET$.
\proofcomplete
\end{proof}

Given a trace $\trace$, we define a partial ordering relation $\beforetr{\trace}$ (reflexive, transitive, antisymmetric) on the occurrences of configurations
in the derivation of $\trace$: if $\conf_1 \red{p}{\ix} \conf_2$ occurs in the derivation of $\trace$, then $\conf_1 \beforetr{\trace} \conf_2$
and for all $\conf$ that occur in the derivation of the assumptions of $\conf_1 \red{p}{\ix} \conf_2$, $\conf_1 \beforetr{\trace} \conf \beforetr{\trace} \conf_2$,
and similarly for $\redq$ instead of $\red{p}{\ix}$.
\bbnote{For processes without $\GET$, there is a single configuration at program point $\pp$ with replication indices $\image(\sigma) = \tup{a}$ inside the derivation of a trace. With $\GET$, we need to consider the occurrence of a configuration inside the trace.}%
If $\trace$ is a trace of $Q_0$, then for all $\conf$ that occur in the derivation of $\emptyset, \{ (\sigma_0, Q_0) \}, \fc(Q_0) \redq^* \emptyset, \pset, \cset$,
we have $\conf \beforetr{\trace} \initconfig(Q_0)$.
When $\conf_1 \beforetr{\trace} \conf_2$, we say that $\conf_1$ \emph{occurs before} $\conf_2$ in $\trace$,
or equivalently, that  $\conf_2$ \emph{occurs after} $\conf_1$ in $\trace$.

We say that that a configuration  $\conf = E, \sigma, \pptag N, \tblcts, \evseq$, $\conf = E, (\sigma, \pptag P)\restconfig$, $\conf = E, (\sigma, {}^{\pp'} P)\restconfig$ with $(\sigma', \pptag Q) \in \pset$, or  $\conf = E, \pset, \cset$ with $(\sigma', \pptag Q) \in \pset$ \emph{is inside program point} $\pp$. A configuration may be inside several program points: $\conf = E, \pset, \cset$ is inside the program points of the input processes in $\cset$, $\conf = E, (\sigma, P)\restconfig$ is inside the program points of one output process ($P$) as well as the input processes in $\cset$.

We say that the program point $\pp$ is immediately above the program
points $\pp_j$ in a process $Q_0$ when $Q_0$ contains one of the
following constructs:
{\allowdisplaybreaks\begin{align*}
&\pptag x[{}^{\pp_1} M_1, \ldots, {}^{\pp_m} M_m]\\
&\pptag f({}^{\pp_1} M_1, \ldots, {}^{\pp_m} M_m)\\
&\pptag \Res{x[\tup{i}]}{T}; {}^{\pp_1} N\\
&\pptag \assign{x[\tup{i}]:T}{{}^{\pp_1} M}{{}^{\pp_2} N}\\
&\pptag \bguard{{}^{\pp_1} M}{{}^{\pp_2} N}{{}^{\pp_3} N'}\\
&\pptag \kw{find}\uniqueopt\ (\mathop\bigoplus\nolimits_{j=1}^m
    \vf_{j1}[\tup{i}] = i_{j1} \leq n_{j1}, \ldots, \vf_{jm_j}[\tup{i}] = i_{jm_j} \leq n_{jm_j}\ \kw{suchthat}\\
&\quad\kw{defined}({}^{\pp_{j,1}}M_{j1}, \ldots, {}^{\pp_{j,l_j}}M_{jl_j}) \fand {}^{\pp_{j,l_j+1}}M'_j\ \kw{then}\ {}^{\pp_{j,l_j+2}}N_j)\ \ELSE {}^{\pp_0} N'\\
&\pptag \INSERT\ \tbl({}^{\pp_1} M_1,\ldots,{}^{\pp_l} M_l);{}^{\pp_0} N\\
&\pptag \GET\uniqueopt\ \tbl(x_1[\tup{i}]:T_1,\ldots,x_l[\tup{i}]:T_l)\ \SUCHTHAT\ {}^{\pp_1} M\ \IN\ {}^{\pp_2} N\ \ELSE {}^{\pp_3} N'\\
&\pptag \kevent{e({}^{\pp_1} M_1, \ldots, {}^{\pp_l} M_l)}; {}^{\pp_0} N\\
&\pptag ({}^{\pp_1} Q \parpop {}^{\pp_2} Q')\\
&\pptag \repl{i}{n}{{}^{\pp_1} Q}\\
&\pptag \Reschan{c}; {}^{\pp_1} Q\\
&\pptag \cinput{c[{}^{\pp_1} M_1, \ldots, {}^{\pp_l} M_l]}{\pat}; {}^{\pp_0} P\\
&\pptag \coutput{c[{}^{\pp_1} M_1, \ldots, {}^{\pp_l} M_l]}{{}^{\pp_0} N}; Q\\
&\pptag \Res{x[\tup{i}]}{T}; {}^{\pp_1} P\\
&\pptag \assign{x[\tup{i}]}{{}^{\pp_1} M}{{}^{\pp_2}P}\\
&\pptag \bguard{{}^{\pp_1} M}{{}^{\pp_2} P}{{}^{\pp_3} P'}\\
&\pptag \FIND\uniqueopt \ (\mathop\bigoplus\nolimits_{j=1}^m \vf_{j1}[\tup{i}] = i_{j1} \leq n_{j1}, \ldots, \vf_{jm_j}[\tup{i}] = i_{jm_j} \leq n_{jm_j}\ \SUCHTHAT\\
&\quad \defined ({}^{\pp_{j,1}}M_{j1}, \ldots, {}^{\pp_{j,l_j}}M_{jl_j}) \fand {}^{\pp_{j,l_j+1}}M_j \THEN {}^{\pp_{j,l_j+2}}P_j)\ \ELSE {}^{\pp_0} P\\
&\pptag \INSERT\ \tbl({}^{\pp_1} M_1,\ldots,{}^{\pp_l} M_l);{}^{\pp_0} P\\
&\pptag \GET\uniqueopt\ \tbl(x_1[\tup{i}]:T_1,\ldots,x_l[\tup{i}]:T_l)\ \SUCHTHAT\ {}^{\pp_1} M\ \IN\ {}^{\pp_2} P\ \ELSE {}^{\pp_3} P'\\
&\pptag \kevent{e({}^{\pp_1} M_1, \ldots, {}^{\pp_l} M_l)}; {}^{\pp_0} P
\end{align*}}%
The relation ``$\pp$ is above $\pp'$'' is the reflexive and transitive
closure of ``$\pp$ is immediately above $\pp'$''.

\begin{lemma}\label{lem:above_before}
  Let $\trace$ be a trace of $Q_0$. 
  If $\conf$ is a configuration in $\trace$ inside program point $\pp$ and $\pp$ is a program point in $Q_0$\bbnote{to exclude the start program point in $\initconfig(Q_0)$}, then
  either $\conf$ is at the program point $\pp$ at the top of $Q_0$,
  or there exists a configuration $\conf'\neq \conf$ such that
  $\conf' \beforetr{\trace} \conf$ and $\conf'$ is inside program point
  $\pp$ or inside the program point $\pp'$ immediately above $\pp$ in $Q_0$.
  

  As a consequence, if a configuration $\conf$ inside program point
  $\pp$ in $Q_0$ is in $\trace$, then there are configurations
  inside all program points above $\pp$ in $Q_0$ before $\conf$ in $\trace$.

\end{lemma}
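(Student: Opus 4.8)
The plan is to prove the first (main) part by tracing $\conf$ one reduction step backwards, and then to derive the consequence by a well-founded induction along the ``above'' relation. For the main part, suppose $\conf$ is inside $\pp$, with $\pp$ a program point of $Q_0$, and $\conf$ is not the top configuration $\emptyset, \{(\sigma_0, Q_0)\}, \fc(Q_0)$ at the root of $Q_0$. Then $\conf$ is not the initial configuration of the derivation of $\trace$, so it is introduced either as the right-hand side of a single $\red{p}{\ix}$, $\redq$, or term-reduction step, or as the first configuration in the hypotheses of some rule. I would let $\conf^-$ be the corresponding source configuration (the left-hand side of that step, or the conclusion of that rule), take $\conf' = \conf^-$, and show by cases on the rule relating $\conf^-$ and $\conf$, and on the form under which $\conf$ is inside $\pp$, that $\conf^-$ is inside $\pp$ or inside the point $\pp'$ immediately above $\pp$. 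Since $\beforetr{\trace}$ is a partial order, $\conf' \beforetr{\trace} \conf$ and $\conf' \neq \conf$.

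The cases split into two families. In the first, $\conf^-$ already carries point $\pp$: this happens when an input process ${}^{\pp}Q \in \pset$ is carried over unchanged from $\conf^-$ to $\conf$; when a subterm is stepped within an elementary context by \eqref{sem:ctxt} or \eqref{sem:ctx} while the enclosing point $\pp$ is unchanged; and when evaluating channel terms by \eqref{sem:input}. Here $\conf' = \conf^-$ is inside $\pp$. In the second family, the construct at $\pp$ is entered for the first time, and $\conf^-$ is at $\pp'$ immediately above $\pp$: this covers the continuation rules \eqref{sem:new}, \eqref{sem:let}, \eqref{sem:if1}, \eqref{sem:if2}, \eqref{sem:find1}, \eqref{sem:find2}, \eqref{sem:insert}, \eqref{sem:get1}, \eqref{sem:get2}, \eqref{sem:event}, whose source construct at $\pp'$ is immediately above the continuation at $\pp$; the spawning rules \eqref{sem:par}, \eqref{sem:repl}, \eqref{sem:newchannel}, where the input process entering $\pset$ comes from a construct at $\pp'$ immediately above it; the communication rule \eqref{sem:output}, where the scheduled continuation $P$ is the body of the triggering input ${}^{\pp'}\cinput{c[\ldots]}{x};P \in \pset$ of $\conf^-$, and the input process $Q''$ following the output enters $\pset$ from the output configuration; and the cases where a term configuration at $\pp$ first appears as the initial configuration of a hypothesis of a context rule or of a rule for $\FIND$/$\GET$, whose conclusion is at $\pp'$ immediately above $\pp$. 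Throughout, Corollary~\ref{cor:evalpos} and Lemma~\ref{lem:evalpos} guarantee that the target terms and processes are genuine subterms and subprocesses of $Q_0$, so that the syntactic ``immediately above'' relation really relates the points carried by $\conf^-$ and $\conf$.

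The consequence then follows by well-founded induction. Fix $\conf$ inside $\pp$ in $\trace$. Applying the main part repeatedly produces, as long as we stay inside $\pp$, a strictly $\beforetr{\trace}$-decreasing sequence of configurations inside $\pp$; since the derivation of $\trace$ is finite and $\beforetr{\trace}$ is a partial order, this sequence terminates, either at the top configuration (only when $\pp$ is the root) or at a configuration $\conf'$ inside the point $\pp'$ immediately above $\pp$, with $\conf' \beforetr{\trace} \conf$. Induction on the distance from $\pp$ to the root of $Q_0$ (base case: $\pp$ is the root) then yields configurations inside every program point above $\pp'$ before $\conf'$, hence before $\conf$; since the points above $\pp$ are exactly $\pp$ together with the points above $\pp'$, and $\conf$ itself is inside $\pp$, we obtain configurations inside all program points above $\pp$ before $\conf$.

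I expect the main obstacle to be the bookkeeping in the second family of cases: matching each semantic rule to the syntactic ``immediately above'' relation. The delicate points are the communication rule \eqref{sem:output}, where one must argue both that the scheduled continuation descends from the triggering input in the pool of $\conf^-$ and that the input process $Q''$ following the output enters the pool from the output point, and the rules for $\FIND$ and $\GET$, where the relevant configurations occur inside sub-derivations that evaluate the conditions, so one must check that $\beforetr{\trace}$ indeed places the enclosing $\FIND$/$\GET$ configuration before $\conf$.
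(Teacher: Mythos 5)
Your proposal follows essentially the same route as the paper's proof: a backward case analysis on the semantic rule (or rule hypothesis) that introduced $\conf$, showing that the predecessor configuration is inside $\pp$ or inside the point immediately above it, followed by a well-founded induction along $\beforetr{\trace}$ to obtain the consequence. The only detail you gloss over is the configuration $\initconfig(Q_0)$ itself, which is assembled from the normal form of the initial $\redq$ derivation rather than produced as the target of a rule step or the hypothesis of one; the paper treats it as a separate case, taking as predecessor the configuration $\emptyset, \pset, \cset$, which is inside the same program point and precedes $\initconfig(Q_0)$ by definition of $\beforetr{\trace}$.
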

\begin{proof}
  First property. Since $\conf$ is a configuration in $\trace$, we are in one of the following cases:
  \begin{itemize}
  \item $\conf = \emptyset, \{ (\sigma_0, Q_0) \}, \fc(Q_0)$, the very first configuration of $\trace$. The configuration $\conf$ is at the program point $\pp$ at the top of $Q_0$.

  \item $\conf = \initconfig(Q_0) = \emptyset, \ab (\sigma_0, {}^{\pp''}\coutput{\startch}{}), \ab \pset, \ab \cset, \ab \emptyset, \ab \emptyset$. Since $\pp'' $ is not in $Q_0$, $\pp$ is the program point of a process in $\pset$. Then $\conf' = \emptyset, \pset, \cset$ is also inside $\pp$ and $\conf' \beforetr{\trace} \conf$, by definition of $\beforetr{\trace}$.

  \item $\conf$ is the initial configuration of an assumption of a semantic rule. In rules for $\FIND$, the initial configuration of assumption is not inside a program point (because it evaluates the $\defined$ condition, not a term). In rules for $\GET$, \eqref{sem:ctxt}, \eqref{sem:ctx}, and \eqref{sem:input}, the initial configuration of the conclusion is inside the program point $\pp'$ immediately above $\pp$. In rules~\eqref{sem:definedno} and~\eqref{sem:definedyes}, these rules are used to conclude assumptions of $\FIND$, and the initial configuration of the $\FIND$ rule is inside the program point $\pp'$ immediately above $\pp$. In rule~\eqref{sem:output}, $\conf = E, \{(\sigma, Q'')\}, \cset$, and the initial configuration of the conclusion of the rule~\eqref{sem:output} is inside the program point $\pp'$ immediately above $\pp$.

  \item $\conf$ is the target configuration of a semantic rule. In rules~\eqref{sem:newt}, \eqref{sem:lett}, \eqref{sem:ift1}, \eqref{sem:ift2}, \eqref{sem:findt1}, \eqref{sem:findt2}, \eqref{sem:insertt}, \eqref{sem:gett1}, \eqref{sem:gett2}, and~\eqref{sem:eventt}, the initial configuration of the rule is inside the program point $\pp'$ immediately above $\pp$.
    In rule~\eqref{sem:ctxt}, the initial configuration of the rule is inside the same program point $\pp$.
    In rule~\eqref{sem:definedyes}, this rule is used to conclude assumptions of $\FIND$, and the initial configuration of the $\FIND$ rule is inside the program point $\pp'$ immediately above $\pp$.
    In the rules for input processes, if $\pp$ is the program point of an unchanged element of $\pset$, the initial configuration of the rule is also inside $\pp$. This is sufficient for~\eqref{sem:nil}. If $\pp$ is the program point of a modified process, then for rules~\eqref{sem:par}, \eqref{sem:repl}, and~\eqref{sem:newchannel}, the initial configuration of the rule is inside the program point $\pp'$ immediately above $\pp$, and for rule~\eqref{sem:input}, the initial configuration of the rule is inside the same program point $\pp$.
    In the rules for output processes other than~\eqref{sem:output}, $\pset$ is unchanged, so if $\pp$ is the program point of a process in $\pset$, then the initial configuration of the rule is inside the same program point $\pp$.
    In rules~\eqref{sem:new}, \eqref{sem:let}, \eqref{sem:if1}, \eqref{sem:if2}, \eqref{sem:find1}, \eqref{sem:find2}, \eqref{sem:insert}, \eqref{sem:get1}, \eqref{sem:get2}, and~\eqref{sem:event}, if $\pp$ is the program point of an output process, then the initial configuration of the rule is inside the program point $\pp'$ immediately above $\pp$.
    In rule~\eqref{sem:ctx}, the initial configuration of the rule is inside the same program point $\pp$.
    In rule~\eqref{sem:output}, if $\pp$ is the program point of a process in $\pset$, then the initial configuration of the rule is inside the same program point $\pp$. If $\pp$ is the program point of a process in $\pset'$, then the configuration $E, \pset', \cset'$ in the assumption of the rule is inside the same program point $\pp$. If $\pp$ is the program point of $P$, then the initial configuration of the rule is inside the program point $\pp'$ immediately above $\pp$, because $(\sigma', Q_0) \in S \subseteq \pset$ and $\pp'$ is the program point of $Q_0$.

  \end{itemize}
  Second property. Suppose that $\conf$ is inside $\pp$ and there is a program point $\pp'$ immediately above $\pp$ in $Q_0$. Let us show that there exists $\conf' \beforetr{\trace} \conf$ such that $\conf'$ is inside $\pp'$. The proof proceeds by well-founded induction on $\beforetr{\trace}$. The program point $\pp$ is not at the top of $Q_0$, so by the first property, there exists $\conf'\neq \conf$ such that $\conf' \beforetr{\trace} \conf$ and $\conf'$ is inside $\pp$ or inside $\pp'$. If $\conf'$ is inside $\pp$, we conclude by applying the induction hypothesis on $\conf'$. If $\conf'$ is inside $\pp'$, we have the result.
  By applying this property repeatedly, we obtain the second property.
  \proofcomplete
\end{proof}

\begin{lemma}\label{lem:ctx}
  Let $\trace$ be a trace of $Q_0$.
  \begin{enumerate}
  \item\label{ctx1} If $\conf = E, \sigma, C_{m}[\dots C_{1}[\pptag M]\dots], \tblcts, \evseq$ is the target configuration of a semantic rule in $\trace$, where $\pp$ is a program point in
  $Q_0$, $C_{1}$, \dots, $C_{m}$ are term contexts defined in Figure~\ref{fig:termcontexts},
    and $\pptag M$ is not a subterm of $Q_0$, then the reduction that yields $\conf$ is obtained by $m$ applications of~\eqref{sem:ctxt} with contexts $C_m$, \dots, $C_1$ from a reduction with target configuration $E, \sigma, \pptag M, \tblcts, \evseq$, itself proved by~\eqref{sem:ctxt}.

  \item\label{ctx2} If $\conf =  E, (\sigma, C_0[C_m[\dots C_{1}[\pptag M]\dots]]), \pset, \cset, \tblcts, \evseq$ is the target configuration of a semantic rule in $\trace$, where $\pp$ is a program point in
  $Q_0$, $C_0$ is a process context defined in Figure~\ref{fig:proccontexts}, $C_{1}$, \dots, $C_{m}$ are term contexts defined in Figure~\ref{fig:termcontexts}, and $\pptag M$ is not a subterm of $Q_0$, then the reduction that yields $\conf$ is obtained by one application of~\eqref{sem:ctx} with context $C_0$ and $m$ applications of~\eqref{sem:ctxt} with contexts $C_m$, \dots, $C_1$ from a reduction with target configuration $E, \sigma, \pptag M, \tblcts, \evseq$, itself proved by~\eqref{sem:ctxt}.

  \item\label{ctx3} Let $Q = C_0[C_m[\dots C_{1}[\pptag M]\dots]]$
    where $\pp$ is a program point in $Q_0$, $C_0$ is an input
    context, $C_1$, \dots, $C_m$ are term contexts defined in
    Figure~\ref{fig:termcontexts}, and $\pptag M$ is not a subterm of
    $Q_0$. If $\conf = E, \{ (\sigma, Q \} \multiunion \pset, \cset$ is target configuration of a semantic rule that affects $Q$ in $\trace$, then the reduction that yields $\conf$ is obtained by one application of~\eqref{sem:input} with context $C_0$ and $m$ applications of~\eqref{sem:ctxt} with contexts $C_m$, \dots, $C_1$ from a reduction with target configuration $E, \sigma, \pptag M, \tblcts, \evseq$, itself proved by~\eqref{sem:ctxt}.

  \end{enumerate}
\end{lemma}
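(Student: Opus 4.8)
The plan is to prove the three parts in the stated order, treating Part~\ref{ctx1} as the core and deriving Parts~\ref{ctx2} and~\ref{ctx3} from it. Everything rests on one observation that I would isolate first: \emph{if a reduction in $\trace$ has a term configuration $E, \sigma, \pptag N, \tblcts, \evseq$ as its target, where $\pptag N$ carries a program point $\pp$ of $Q_0$ but is not a subterm of $Q_0$, then that reduction is derived by~\eqref{sem:ctxt}.} To establish this I would inspect every rule that produces a term configuration. The targets of \eqref{sem:replindex}, \eqref{sem:var}, \eqref{sem:fun}, \eqref{sem:findte}, \eqref{sem:findt3}, \eqref{sem:eventabortt}, \eqref{sem:ctxeventt}, and~\eqref{sem:definedno} are either a value $a$ or an abort event value $\keventabort{(\pp',\tup{a}):e}$, and neither kind of term is tagged with a program point; since $\pptag N$ carries the tag $\pp \in Q_0$, these rules are excluded. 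The targets of \eqref{sem:newt}, \eqref{sem:lett}, \eqref{sem:ift1}, \eqref{sem:ift2}, \eqref{sem:findt1}, \eqref{sem:findt2}, \eqref{sem:insertt}, \eqref{sem:gett1}, \eqref{sem:gett2}, \eqref{sem:eventt}, and~\eqref{sem:definedyes} are subterms of $Q_0$ by Corollary~\ref{cor:evalpos}, contradicting the hypothesis on $\pptag N$. The only remaining possibility is~\eqref{sem:ctxt}.

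For Part~\ref{ctx1} I would then induct on $m$. The base case $m=0$ is exactly this observation applied to $\pptag M$: the reduction with target $E, \sigma, \pptag M, \tblcts, \evseq$ is proved by~\eqref{sem:ctxt}, as claimed. For $m \geq 1$, the term $T = C_m[\dots C_1[\pptag M]\dots]$ contains the non-$Q_0$-subterm $\pptag M$, so $T$ is itself neither a subterm of $Q_0$ nor a value nor an abort event value; the observation applies, and the last reduction is~\eqref{sem:ctxt} with some context $C$, giving $T = C[N']$ together with a premise reduction whose target is $E, \sigma, N', \tblcts, \evseq$. The remaining task is to identify $C$ with $C_m$ and $N'$ with $C_{m-1}[\dots C_1[\pptag M]\dots]$, after which the induction hypothesis (applied to the premise, which has $m-1$ term contexts) completes the peeling.

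Identifying $C$ with $C_m$ is the delicate point and the main obstacle. Every elementary term context of Figure~\ref{fig:termcontexts} places its hole at the leftmost argument that is not already a value; by construction of the given decomposition, $C_m$ thus has its hole at the leftmost non-value position of $T$ (all arguments to its left are values $a_i$, and the hole is filled by the tagged, hence non-value, term $C_{m-1}[\dots]$). The context $C$ coming from the~\eqref{sem:ctxt} reduction has the same outer function or constructor as $T$ and likewise places its hole in a value prefix, so it suffices to exclude that $C$'s hole lies strictly to the left of $C_m$'s. Were that so, the reduced argument $N'$ would be a value sitting to the left of an argument that is itself still a value, which cannot occur in a reachable configuration: source terms of $Q_0$ contain no literal values, values arise only through reduction, and the context forms enforce left-to-right evaluation, so in every term configuration of $\trace$ all arguments strictly left of the reduction frontier are values and all arguments strictly right of it are untouched subterms of $Q_0$. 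This forces $C = C_m$. I would record this ``values-prefix / frontier / untouched-suffix'' shape of reachable term configurations as an auxiliary invariant, proved by the same induction-on-derivations technique as in Lemmas~\ref{lem:curidx} and~\ref{lem:evalpos}, and invoke it here.

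Parts~\ref{ctx2} and~\ref{ctx3} follow the same template one level up and then defer to Part~\ref{ctx1}. For Part~\ref{ctx2}, the target process $C_0[C_m[\dots C_1[\pptag M]\dots]]$ contains $\pptag M \notin Q_0$, so it is neither $\kw{abort}$ nor a subprocess of $Q_0$ up to renaming of channels; hence, by Corollary~\ref{cor:evalpos}, the producing rule is none of the continuation rules \eqref{sem:new}, \eqref{sem:let}, \eqref{sem:if1}, \eqref{sem:if2}, \eqref{sem:find1}, \eqref{sem:find2}, \eqref{sem:insert}, \eqref{sem:get1}, \eqref{sem:get2}, \eqref{sem:event}, \eqref{sem:output}, nor any of the aborting rules \eqref{sem:finde}, \eqref{sem:find3}, \eqref{sem:gete}, \eqref{sem:get3}, \eqref{sem:eventabort}, \eqref{sem:ctxevent}, leaving only~\eqref{sem:ctx}. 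Peeling off $C_0$, again identified with the reduction's process context by the uniqueness argument above, leaves a premise term reduction with target $E, \sigma, C_m[\dots C_1[\pptag M]\dots], \tblcts, \evseq$, to which Part~\ref{ctx1} applies directly. Part~\ref{ctx3} is identical, using Lemma~\ref{lem:evalpos} to see that \eqref{sem:nil}, \eqref{sem:par}, \eqref{sem:repl}, and~\eqref{sem:newchannel} produce only subprocesses of $Q_0$ up to channel renaming, so the rule affecting $Q$ must be~\eqref{sem:input}; its premise is the term reduction with target $C_m[\dots C_1[\pptag M]\dots]$ (with empty $\tblcts$ and $\evseq$), and Part~\ref{ctx1} concludes.
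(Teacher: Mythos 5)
Your proposal follows essentially the same route as the paper's proof: you pin the producing rule down to~\eqref{sem:ctxt} by observing that every other term rule yields either an untagged value/abort event value or, by Corollary~\ref{cor:evalpos}, a subterm of $Q_0$; you induct on $m$; you identify the rule's context with $C_m$ by a two-sided argument on the position of the hole; and Parts~\ref{ctx2} and~\ref{ctx3} repeat the exclusion one level up (via Corollary~\ref{cor:evalpos} and Lemma~\ref{lem:evalpos}) and defer to Part~\ref{ctx1}. (A minor slip: your enumeration of value/abort-producing rules omits \eqref{sem:gette} and~\eqref{sem:gett3}, which are excluded exactly as \eqref{sem:findte} and~\eqref{sem:findt3}.) The exclusion of a hole strictly to the \emph{right} of $C_m$'s is also fine: that position holds the tagged, hence non-value, term $C_{m-1}[\dots C_1[\pptag M]\dots]$, while the context grammar forces values there. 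The one place where you genuinely diverge from the paper is the exclusion of a hole strictly to the \emph{left}, and there your stated justification does not hold up.

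You claim that ``the reduced argument $N'$ would be a value sitting to the left of an argument that is itself still a value, which cannot occur in a reachable configuration.'' This is false as stated: the configuration $\pptag f(a_1,a_2)$ reached just before \eqref{sem:fun} fires is the target of an \eqref{sem:ctxt} step and has a value to the left of a value. Moreover, in the very case you must exclude, the offending argument at $C_m$'s hole position is $C_{m-1}[\dots C_1[\pptag M]\dots]$, which is tagged and therefore \emph{not} a value; when the two holes are adjacent there is no value strictly between them, so no ``value to the right of the frontier'' exists and your contradiction evaporates. What actually does the work is the other half of the invariant you state in passing---everything strictly to the right of the rule's hole is an untouched subterm of $Q_0$---which contradicts the hypothesis that $\pptag M$ is not a subterm of $Q_0$. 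And for that half no new induction is needed, so your auxiliary ``values-prefix / frontier / untouched-suffix'' invariant is redundant: in the initial configuration $C[N]$ of the \eqref{sem:ctxt} step, $N$ is not a value (no semantic rule has a bare value as source), so every argument to the right of $C$'s hole is in non-evaluation position, and Lemma~\ref{lem:evalpos} already makes it a subterm of $Q_0$. This direct appeal to Lemma~\ref{lem:evalpos} on the \emph{initial} configuration of the rule is exactly how the paper closes the step; substituting it for your frontier argument makes your proof correct and shorter.
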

\begin{proof}
  Property~\ref{ctx1}. This property is proved by induction of $m$.
  The reduction that yields $\conf$ cannot be obtained by~\eqref{sem:newt}, \eqref{sem:lett}, \eqref{sem:ift1}, \eqref{sem:ift2}, \eqref{sem:findt1}, \eqref{sem:findt2}, \eqref{sem:insertt}, \eqref{sem:gett1}, \eqref{sem:gett2}, \eqref{sem:eventt}, \eqref{sem:definedyes}, because in this case, by Corollary~\ref{cor:evalpos}, $C_{m}[\dots C_{1}[\pptag M]\dots]$  would be a subterm of $Q_0$, so $\pptag M$ would be a subterm of $Q_0$. So it is obtained by~\eqref{sem:ctxt}.
  For $m = 0$, this is enough to conclude. 
  For $m > 0$, the rule~\eqref{sem:ctxt} is applied with context $C_m$. Indeed, since $\pptag M$ is not a value, the hole of the context cannot be after $C_{m-1}[\dots C_1[\pptag M]\dots]$ and, since $\pptag M$ is not a subterm of $Q_0$, the hole of the context cannot be before $C_{m-1}[\dots C_1[\pptag M]\dots]$. (If it were before, $C_{m-1}[\dots C_1[\pptag M]\dots]$ would not be in evaluation position in the initial configuration of rule~\eqref{sem:ctxt}, so by Lemma~\ref{lem:evalpos}, $C_{m-1}[\dots C_1[\pptag M]\dots]$ would be a subterm of $Q_0$.) Hence the reduction that yields $\conf$ is obtained from a reduction that yields $E, \sigma, C_{m-1}[\dots C_{1}[\pptag M]\dots], \tblcts, \evseq$ by applying~\eqref{sem:ctxt} with context $C_m$. We conclude by applying the induction hypothesis.

  Property~\ref{ctx2}. The reduction that yields $\conf$ cannot be obtained by~\eqref{sem:new}, \eqref{sem:let}, \eqref{sem:if1}, \eqref{sem:if2}, \eqref{sem:find1}, \eqref{sem:find2}, \eqref{sem:insert}, \eqref{sem:get1}, \eqref{sem:get2}, \eqref{sem:output}, or~\eqref{sem:event}, because in this case, by Corollary~\ref{cor:evalpos}, the process of $\conf$ would be a subprocess of $Q_0$ up to renaming of channels, so $\pptag M$ would be a subterm of $Q_0$. Therefore, $\conf$ is the target configuration of~\eqref{sem:ctx}. Furthermore, by the same reasoning as in Property~\ref{ctx1}, this rule is applied with context $C_0$. Hence the reduction that yields $\conf$ is obtained from a reduction that yields $E, \sigma, C_m[\dots C_1[\pptag M]\dots], \tblcts, \evseq$ by applying~\eqref{sem:ctx} with context $C_0$. We conclude by Property~\ref{ctx1}.

\sloppy
  
  Property~\ref{ctx3}. The reduction that yields $\conf$ cannot be obtained by~\eqref{sem:par}, \eqref{sem:repl}, or~\eqref{sem:newchannel}, because in this case, by Lemma~\ref{lem:evalpos}, since $Q$ occurs in non-evaluation position in the initial configuration of the rule, $Q$ would be a subprocess of $Q_0$ up to renaming of channels, so $\pptag M$ would be a subterm of $Q_0$. Therefore, $\conf$ is the target configuration of~\eqref{sem:input}. Furthermore, by the same reasoning as in Property~\ref{ctx1}, this rule is applied with context $C_0$. Hence the reduction that yields $\conf$ is obtained from a reduction that yields $E, \sigma, C_m[\dots C_1[\pptag M]\dots], \tblcts, \evseq$ by applying~\eqref{sem:input} with context $C_0$. We conclude by Property~\ref{ctx1}.
  \proofcomplete
\end{proof}

\begin{lemma}\label{lem:start_from_pp}
  Let $\trace$ be a trace of $Q_0$.
  \begin{enumerate}
  \item\label{start_from_pp1}  If $\conf =  E, (\sigma, \pptag P), \pset, \cset, \tblcts, \evseq$ is a configuration in $\trace$ 
  where $\pp$ is a program point in
  $Q_0$, then this configuration is derived in $\trace$ from a configuration
  $\conf' = E', \ab (\sigma, \pptag P'), \ab \pset, \ab \cset, \ab \tblcts', \ab \evseq'$
  where $\pptag P'$ is a subprocess of $Q_0$ up to renaming of channels,
  by~\eqref{sem:ctx} any number of times.
  
  \item\label{start_from_pp2}  If $\conf = E, \{ (\sigma, \pptag Q) \} \multiunion \pset, \cset$ is a configuration in $\trace$ where $\pp$ is a program point in
  $Q_0$, then, possibly after swapping reductions in $\trace$, this configuration is derived from a configuration
$\conf' = E, \{ (\sigma, \pptag Q') \} \multiunion \pset, \cset$
  where $\pptag Q'$ is a subprocess of $Q_0$ up to renaming of channels,
  by~\eqref{sem:input} any number of times.

  \item\label{start_from_pp3} If $\conf =  E, (\sigma, C_0[C_m[\dots C_{1}[\pptag M]\dots]]), \pset, \cset, \tblcts, \evseq$ is a configuration in $\trace$ 
  where $\pp$ is a program point in
  $Q_0$, $C_0$ is a process context defined in Figure~\ref{fig:proccontexts}, and $C_1$, \dots, $C_{m}$ are term contexts defined in Figure~\ref{fig:termcontexts}, then we have
  \[E', \sigma, \pptag M', \tblcts', \evseq' \red{p}{\ix} \dots \red{p'}{\ix'}E, \sigma, \pptag M, \tblcts, \evseq\]
  by~\eqref{sem:ctxt} any number of times, where $\pptag M'$ is a subterm of $Q_0$ and in $\trace$, these reductions are in fact performed 
starting from 
  $\conf' = E', \ab (\sigma, C_0[C_m[\dots C_{1}[\pptag M']\dots]]), \ab \pset, \ab \cset, \ab \tblcts', \ab \evseq'$
  under one application of~\eqref{sem:ctx} with context $C_0$ and $m$ applications of~\eqref{sem:ctxt} with contexts $C_m$, \dots, $C_{1}$.

  \item\label{start_from_pp4} If $\conf = E, \{ (\sigma, C_0[C_m[\dots C_{1}[\pptag M]\dots]]) \} \multiunion \pset, \cset$ is a configuration in $\trace$, where $\pp$ is a program point in
  $Q_0$, $C_0$ is an input context, and
  $C_1$, \dots, $C_m$ are term contexts defined in Figure~\ref{fig:termcontexts}, then we have
  \[E, \sigma, \pptag M', \emptyset, \emptyset \red{p}{\ix} \dots \red{p'}{\ix'}E, \sigma, \pptag M, \emptyset, \emptyset\]
  by~\eqref{sem:ctxt} any number of times, where $\pptag M'$ is a subterm of $Q_0$ and possibly after swapping reductions in $\trace$, these reductions are in fact performed
  starting from a configuration
$\conf' = E, \{ (\sigma, C_0[C_m[\dots C_{1}[\pptag M']\dots]]) \} \multiunion \pset, \cset$
  under one application of~\eqref{sem:input} with context $C_0$ and $m$ applications of~\eqref{sem:ctxt} with contexts $C_m$, \dots, $C_{1}$.

  \item\label{start_from_pp5} If $\conf = E, \sigma, C_{l}[\dots C_{1}[\pptag M]\dots], \tblcts, \evseq$ is a configuration in $\trace$ 
  and $\pp$ is a program point in
  $Q_0$ where $C_{1}$, \dots, $C_{l}$ are term contexts defined in Figure~\ref{fig:termcontexts}, then we have
  \[E', \sigma, \pptag M', \tblcts', \evseq' \red{p}{\ix} \dots \red{p'}{\ix'}E, \sigma, \pptag M, \tblcts, \evseq\]
  by~\eqref{sem:ctxt} any number of times, where $\pptag M'$ is a subterm of $Q_0$ and in $\trace$, these reductions are in fact performed starting  
\begin{itemize}
\item from a configuration
  $\conf' = E', \sigma, C_m[\dots C_{1}[\pptag M']\dots], \tblcts', \evseq'$
  where $m \geq l$,
  $C_1$, \dots, $C_m$ are term contexts defined in Figure~\ref{fig:termcontexts},
  under $m$ applications of~\eqref{sem:ctxt} with contexts $C_m$, \dots, $C_{1}$.
\item \sloppy or from a configuration
  $\conf' = E', (\sigma, C_0[C_m[\dots C_{1}[\pptag M']\dots]]), \pset, \cset, \tblcts', \evseq'$
  where $m \geq l$, $C_0$ is a process context defined in Figure~\ref{fig:proccontexts}, and
  $C_1$, \dots, $C_l$ are term contexts defined in Figure~\ref{fig:termcontexts},
  under one application of~\eqref{sem:ctx} with context $C_0$ and $m$ applications of~\eqref{sem:ctxt} with contexts $C_m$, \dots, $C_{1}$.
\item or, possibly after swapping reductions in $\trace$, from a configuration
  $\conf' = E', \ab \{ (\sigma, \ab C_0[C_m[\dots C_{1}[\pptag M']\dots]]) \} \multiunion \pset, \ab \cset$
  where $m \geq l$,
  $C_0$ is an input context and
  $C_1$, \dots, $C_m$ are term contexts defined in Figure~\ref{fig:termcontexts},
  under one application of~\eqref{sem:input} with context $C_0$ and $m$ applications of~\eqref{sem:ctxt} with contexts $C_m$, \dots, $C_{1}$.
  
\end{itemize}
\end{enumerate}
\end{lemma}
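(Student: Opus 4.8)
The plan is to prove all five statements by tracing the derivation of $\trace$ \emph{backwards}, repeatedly stripping off applications of the context rules~\eqref{sem:ctxt}, \eqref{sem:ctx}, and~\eqref{sem:input} until the term or process in focus becomes literally a subterm or subprocess of $Q_0$ (up to channel renaming). The two workhorses are Corollary~\ref{cor:evalpos}, which supplies the base case (the target of any non-context rule is a subterm/subprocess of $Q_0$), and Lemma~\ref{lem:ctx}, which performs one stripping step: whenever a configuration consists of term contexts, possibly below a process or input context, wrapped around a term $\pptag M$ that is \emph{not} a subterm of $Q_0$, the producing reduction must be a context-rule application, so one layer can be peeled and the inner term reduced by a single step. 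Throughout, the components $\sigma$, $\pset$, $\cset$ are controlled by Lemma~\ref{lem:sem-ext}: the context rules leave them unchanged except in the abort cases, where the focus term reduces to a $\keventabort$ value and $\sigma$ is extended. I would establish the statements in the order~\ref{start_from_pp1}, \ref{start_from_pp2}, then~\ref{start_from_pp5}, and finally~\ref{start_from_pp3} and~\ref{start_from_pp4}.

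For part~\ref{start_from_pp1}, I do well-founded induction on $\beforetr{\trace}$ applied to $\conf = E, (\sigma, \pptag P)\restconfig$. Since $\pp$ is a program point of $Q_0$, $\conf$ is not $\initconfig(Q_0)$ (whose scheduled output carries a program point outside $Q_0$), so it is the target of a reduction. If that reduction is one of the base output-process rules, Corollary~\ref{cor:evalpos} gives that $P$ is a subprocess of $Q_0$ and we are done with zero applications of~\eqref{sem:ctx}. If it is~\eqref{sem:ctx}, its predecessor has the same program point, $\pset$ and $\cset$, and by Lemma~\ref{lem:sem-ext} the same $\sigma$ (the abort-context situation being the delicate exception discussed below); the induction hypothesis on the predecessor then yields the desired starting configuration $\conf'$, and prepending the current~\eqref{sem:ctx} step concludes. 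Part~\ref{start_from_pp2} is analogous but runs among the $\redq$-configurations $E,\pset,\cset$ that occur inside $\reduce$ computations: the base input rules~\eqref{sem:nil}, \eqref{sem:par}, \eqref{sem:repl}, \eqref{sem:newchannel} produce subprocesses of $Q_0$ by Lemma~\ref{lem:evalpos}, and the stripping rule is~\eqref{sem:input}. The additional subtlety is the phrase ``possibly after swapping reductions'': the~\eqref{sem:input} steps transforming the tracked process $\pptag Q$ may be interleaved with $\redq$-steps acting on other members of $\pset$, so I would invoke the confluence and termination of $\redq$ (the convergence stated with $\reduce$) to commute the independent steps and gather the steps acting on $\pptag Q$ into one contiguous~\eqref{sem:input} chain.

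For part~\ref{start_from_pp5}, I trace back along the maximal chain of term reductions feeding into $\conf = E,\sigma, C_l[\dots C_1[\pptag M]\dots], \tblcts, \evseq$. If $\pptag M$ is already a subterm of $Q_0$ the chain is empty; otherwise Lemma~\ref{lem:ctx}, item~\ref{ctx1}, shows $\conf$ is obtained by~\eqref{sem:ctxt} peeling the outer contexts down to a one-step reduction of the inner term, and I recurse on the predecessor (same $\sigma$ by Lemma~\ref{lem:sem-ext}, strictly smaller for $\beforetr{\trace}$). The recursion stops exactly when the inner term first becomes a subterm $\pptag M'$ of $Q_0$; at that point the reduction containing this starting term configuration is either another~\eqref{sem:ctxt} (the sequence is nested inside a larger term reduction, e.g.\ a $\FIND$/$\GET$ condition), or~\eqref{sem:ctx} (the term is being evaluated in an output process), or~\eqref{sem:input} (the term is being evaluated in an input process), giving precisely the three displayed cases. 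Parts~\ref{start_from_pp3} and~\ref{start_from_pp4} are then the specializations where the enclosing lifting is fixed to~\eqref{sem:ctx} with a process context $C_0$, respectively~\eqref{sem:input} with an input context $C_0$: I apply Lemma~\ref{lem:ctx}, item~\ref{ctx2} or item~\ref{ctx3}, to recognize the enclosing step, part~\ref{start_from_pp5} to the inner term configuration, and parts~\ref{start_from_pp1}/\ref{start_from_pp2} to certify that the starting configuration $\conf'$ (carrying $M'$ in the same hole, with $\pset$, $\cset$ and context $C_0$ unchanged) indeed occurs in $\trace$.

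The main obstacle will be the bookkeeping that makes these peeling chains line up cleanly. First, one must match the $\sigma$-extension precisely: along a context chain $\sigma$ is invariant only while the focus is not a $\keventabort$ value, so the abort-context configurations (where evaluation of a $\FIND$/$\GET$ condition via~\eqref{sem:findte}, \eqref{sem:findt3} has already extended $\sigma$ before~\eqref{sem:ctxevent}/\eqref{sem:ctxeventt} collapses the contexts) sit exactly at the boundary of the clean ``same-$\sigma$'' tracing and have to be handled against the $\sigma$-extension clause of Lemma~\ref{lem:sem-ext}. Second, for the input-side statements~\ref{start_from_pp2} and~\ref{start_from_pp4}, the reordering of $\redq$-steps justified by confluence must be argued carefully, so that the~\eqref{sem:input} steps acting on the tracked process form a single contiguous block starting from a subprocess of $Q_0$. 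Once these two points are settled, everything else is a routine backward induction driven by Corollary~\ref{cor:evalpos} and Lemma~\ref{lem:ctx}.
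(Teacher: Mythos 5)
Your parts~\ref{start_from_pp1} and~\ref{start_from_pp2} match the paper's proof (backward well-founded induction on $\beforetr{\trace}$, base cases from Corollary~\ref{cor:evalpos} resp.\ Lemma~\ref{lem:evalpos}, peeling of~\eqref{sem:ctx} resp.~\eqref{sem:input}, commutation of the $\redq$-steps that do not affect the tracked process). The genuine gap is your ordering of the remaining parts: you prove part~\ref{start_from_pp5} first, using only Lemma~\ref{lem:ctx}, item~\ref{ctx1}, and then derive parts~\ref{start_from_pp3} and~\ref{start_from_pp4} from it, whereas the dependency actually runs the other way. Lemma~\ref{lem:ctx}, item~\ref{ctx1} applies only to a configuration that is the \emph{target} of a semantic rule. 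Your peeling stays among target occurrences only as long as the whole chain of term reductions lives inside a single $\red{p}{\ix}^*$ assumption (evaluation of a $\FIND$ or $\GET$ condition), i.e.\ the first bullet of part~\ref{start_from_pp5}. In the other two bullets, the term is evaluated inside an output or input process, so each term reduction of the chain is the assumption of a \emph{separate} application of~\eqref{sem:ctx} (resp.~\eqref{sem:input}) belonging to a separate step of $\trace$. After one peeling step, the predecessor you want to recurse on occurs only as the \emph{source} of such an assumption; there is no reduction feeding into that occurrence inside the same derivation, and the equal configuration appears as a target only inside the derivation of the \emph{previous} step of $\trace$. To know that this previous step is again an~\eqref{sem:ctx}/\eqref{sem:input} lifting with the same context $C_0$, you need backward tracing at the process level --- exactly parts~\ref{start_from_pp3} and~\ref{start_from_pp4}. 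This is why the paper proves parts~\ref{start_from_pp3} and~\ref{start_from_pp4} first (their inductions stay among process configurations, every occurrence of which is either an initial configuration or a target, so they are self-contained), and then, in the proof of part~\ref{start_from_pp5}, handles a term configuration occurring as the initial configuration of an assumption of~\eqref{sem:ctx} or~\eqref{sem:input} by invoking parts~\ref{start_from_pp3} and~\ref{start_from_pp4} on the strictly $\beforetr{\trace}$-smaller initial configuration of the enclosing rule. Your sketch never treats these source occurrences at all (nor the ones arising from assumptions of $\GET$, \eqref{sem:definedno}, \eqref{sem:definedyes}, which the paper settles with Lemma~\ref{lem:evalpos}, nor the one of~\eqref{sem:ctxt}, which needs the induction hypothesis with a longer context chain).

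The fix is either to adopt the paper's order (1, 2, 3, 4, then 5), or to keep your order formally by merging parts 3, 4, 5 into one simultaneous well-founded induction on $\beforetr{\trace}$, so that when the recursion for part 5 reaches a source occurrence under~\eqref{sem:ctx}/\eqref{sem:input} it may invoke the statements of parts 3/4 on the enclosing process configuration. As written, however, ``trace back along the maximal chain of term reductions feeding into $\conf$'' is not available: the recursion gets stuck precisely in the cases that parts~\ref{start_from_pp3} and~\ref{start_from_pp4} are meant to cover, so deriving them afterwards as ``specializations'' of part~\ref{start_from_pp5} is circular.
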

\begin{proof}
  Property~\ref{start_from_pp1}. The proof proceeds by well-founded induction on $\beforetr{\trace}$. The configuration $\conf$ cannot be $\initconfig(Q_0)$ because $\pp$ is a program point in $Q_0$. Then, $\conf$ is the target configuration of some semantic rule. If $\conf$ is the target configuration
  of~\eqref{sem:new}, \eqref{sem:let}, \eqref{sem:if1}, \eqref{sem:if2}, \eqref{sem:find1}, \eqref{sem:find2}, \eqref{sem:insert}, \eqref{sem:get1}, \eqref{sem:get2}, \eqref{sem:output}, or~\eqref{sem:event}, then by Corollary~\ref{cor:evalpos}, $\pptag P$ is a subprocess of $Q_0$ up to renaming of channels, so the property holds with $\conf' = \conf$. If it is the target configuration of~\eqref{sem:ctx}, we conclude by applying the induction hypothesis to the initial configuration of this rule.
  
  Property~\ref{start_from_pp2}. The proof proceeds by well-founded induction on $\beforetr{\trace}$. If $\conf = \emptyset, \ab \{ (\sigma_0, Q_0) \}, \ab \fc(Q_0)$, then the property holds with $\conf' = \conf$, since $Q_0$ is a subprocess of $Q_0$. If $\conf$ is the initial configuration of the assumption of~\eqref{sem:output}, then by Lemma~\ref{lem:evalpos}, $Q$ is a subprocess of $Q_0$ up to renaming of channels, since $Q$ occurs in non-evaluation position in the initial configuration of~\eqref{sem:output}. So the property holds with $\conf' = \conf$. Otherwise, $\conf$ is the target configuration of a semantic rule.
  \begin{itemize}
  \item If that rule does not affect $Q$, then it is of the form $E, \{ (\sigma, \pptag Q) \} \multiunion \pset', \cset' \redq E, \{ (\sigma, \pptag Q) \} \multiunion \pset, \cset$. We apply the induction hypothesis to $E, \{ (\sigma, \pptag Q) \} \multiunion \pset', \cset'$, so possibly after swapping reductions in $\trace$, we have $E, \{ (\sigma, \pptag Q') \} \multiunion \pset', \cset' \redq^* E, \{ (\sigma, \pptag Q) \} \multiunion \pset', \cset'$ by~\eqref{sem:input} any number of times, where $\pptag Q'$ is a subprocess of $Q_0$ up to renaming of channels. By swapping reductions, we have $E, \{ (\sigma, \pptag Q') \} \multiunion \pset', \cset' \redq E, \{ (\sigma, \pptag Q') \} \multiunion \pset, \cset \redq^*   E, \{ (\sigma, \pptag Q) \} \multiunion \pset, \cset$, so we obtain the desired property with $\conf' = E, \{ (\sigma, \pptag Q') \} \multiunion \pset, \cset$.
  \item Otherwise, that rule affects $Q$. If that rule is~\eqref{sem:par}, \eqref{sem:repl}, or~\eqref{sem:newchannel}, then by Lemma~\ref{lem:evalpos}, $Q$ is a subprocess of $Q_0$ up to renaming of channels, since $Q$ occurs in non-evaluation position in the initial configuration of the rule. So the property holds with $\conf' = \conf$. If that rule is~\eqref{sem:input}, then we obtain the result by induction hypothesis applied to the initial configuration of the rule. 
  \end{itemize}

  Property~\ref{start_from_pp3}. The proof proceeds by well-founded induction on $\beforetr{\trace}$. The configuration $\conf$ cannot be $\initconfig(Q_0)$ because $\pp$ is a program point in $Q_0$. Then, $\conf$ is the target configuration of some semantic rule. If $\pptag M$ is a subterm of $Q_0$, then the property holds with $\conf' = \conf$.
  Otherwise, by Lemma~\ref{lem:ctx}, Property~\ref{ctx2}, the reduction that yields $\conf$ is obtained by one application of~\eqref{sem:ctx} with context $C_0$ and $m$ applications of~\eqref{sem:ctxt} with contexts $C_m$, \dots, $C_1$ from a reduction with target configuration $E, \sigma, \pptag M, \tblcts, \evseq$, itself proved by~\eqref{sem:ctxt}. We obtain the desired property by applying the induction hypothesis to the configuration before this reduction step by~\eqref{sem:ctx}.

  Property~\ref{start_from_pp4}. Let $Q = C_0[C_m[\dots C_{1}[\pptag M]\dots]]$. The proof proceeds by well-founded induction on $\beforetr{\trace}$. 
  If $\conf = \emptyset, \ab \{ (\sigma_0, Q_0) \}, \ab \fc(Q_0)$, then the property holds with $\conf' = \conf$, since $Q_0$ is a subprocess of $Q_0$, so $\pptag M$ is a subterm of $Q_0$.
  If $\conf$ is the initial configuration of the assumption of~\eqref{sem:output}, then by Lemma~\ref{lem:evalpos}, $Q$ is a subprocess of $Q_0$ up to renaming of channels, since $Q$ occurs in non-evaluation position in the initial configuration of~\eqref{sem:output}. So $\pptag M$ is a subterm of $Q_0$ and the property holds with $\conf' = \conf$.
  Otherwise, $\conf$ is the target configuration of a semantic rule. If that rule does not affect $Q$, then we swap reductions as in the proof of Property~\ref{start_from_pp2}. Otherwise, if $\pptag M$ is a subterm of $Q_0$, then the property holds with $\conf' = \conf$. Otherwise, by Lemma~\ref{lem:ctx}, Property~\ref{ctx3}, the reduction that yields $\conf$ is obtained by one application of~\eqref{sem:input} with context $C_0$ and $m$ applications of~\eqref{sem:ctxt} with contexts $C_m$, \dots, $C_1$ from a reduction with target configuration $E, \sigma, \pptag M, \tblcts, \evseq$, itself proved by~\eqref{sem:ctxt}. We obtain the desired property by applying the induction hypothesis to the configuration before this reduction step by~\eqref{sem:input}.

  Property~\ref{start_from_pp5}. The proof proceeds by well-founded induction on $\beforetr{\trace}$. 
  First case: $\conf$ is the initial configuration of an assumption of a semantic rule.
  This rule cannot be a rule for $\FIND$ (in rules for $\FIND$, the initial configuration of assumption is not inside a program point, because it evaluates the $\defined$ condition, not a term).
  In rules for $\GET$, \eqref{sem:definedno}, and~\eqref{sem:definedyes}, by Lemma~\ref{lem:evalpos}, since the term in $\conf$ occurs in non-evaluation position in the initial configuration of the rule, it is a subterm of $Q_0$, so $\pptag M$ is a subterm of $Q_0$. The property holds with $\conf' = \conf$.
  In~\eqref{sem:ctxt}, we let $C_{l+1}$ be the context used in this rule, and we conclude by induction hypothesis applied to the initial configuration of this rule: $E, \sigma, C_{l+1}[C_{l}[\dots C_{1}[\pptag M]\dots]], \tblcts, \evseq$.
  In~\eqref{sem:input}, we let $m = l$. The initial configuration of the rule is of the form $E, \{ C_0[C_l[\dots C_{1}[\pptag M]\dots]] \} \multiunion \pset, \cset$ for some input context $C_0$. We conclude by applying Property~\ref{start_from_pp4} to that configuration.
  In~\eqref{sem:ctx}, we let $m = l$. The initial configuration of the rule is of the form $E, (\sigma, C_0[C_l[\dots C_{1}[\pptag M]\dots]]), \pset, \cset, \tblcts, \evseq$ for some process context $C_0$ defined in Figure~\ref{fig:proccontexts}. We conclude by applying Property~\ref{start_from_pp3} to that configuration.

  Second case: $\conf$ is the target configuration of a semantic rule.
  If $\pptag M$ is a subterm of $Q_0$, then the property holds with $\conf' = \conf$. 
  Otherwise, by Lemma~\ref{lem:ctx}, Property~\ref{ctx1}, the reduction that yields $\conf$ is obtained by $l$ applications of~\eqref{sem:ctxt} with contexts $C_{l}$, \dots, $C_1$ from a reduction with target configuration $E, \sigma, \pptag M, \tblcts, \evseq$, itself proved by~\eqref{sem:ctxt}. We apply the induction hypothesis to the initial configuration of the reduction that yields $\conf$, and we continue the reduction one more step by applying possibly~\eqref{sem:ctx} or~\eqref{sem:input} with context $C_0$, \eqref{sem:ctxt} $m$ times with contexts $C_m$, \dots, $C_1$ under the reduction with target configuration $E, \sigma, \pptag M, \tblcts, \evseq$. Up to swapping of reductions in the case of input processes, this is also what happens in $\trace$. (By inspection of the rules, only the rule~\eqref{sem:input} with context $C_0$ can reduce an input process $C_0[N]$, where $C_0$ is an input context; only the rule~\eqref{sem:ctx} with context $C_0$ can reduce an output process $C_0[N]$, where $C_0$ is a process context defined in Figure~\ref{fig:proccontexts}; only the rule~\eqref{sem:ctxt} with context $C_j$ can reduce a term $C_j[N]$, where $C_j$ is a term context defined in Figure~\ref{fig:termcontexts}.)
  \proofcomplete
\end{proof}

\subsubsection{Each Variable is Defined at Most Once}\label{sec:inv1exec}

In this section, we show that Invariant~\ref{inv1} implies that
each array cell is assigned at most once during the execution of
a process.

We define the multiset of variable accesses that may be defined by a term or 
a process (given the replication indices fixed by a mapping sequence $\sigma$)
as follows:
{\allowdisplaybreaks\begin{align*}
&\defset(\sigma, \pptag i) = \emptyset\\
&\defset(\sigma, \pptag x[M_1, \ldots, M_m]) =  \bigmultiunion_{j=1}^m \defset(\sigma, M_j)\\
&\defset(\sigma, \pptag f(M_1, \ldots, M_m)) =  \bigmultiunion_{j=1}^m \defset(\sigma, M_j)\\
&\defset(\sigma, \pptag \Res{x[\tup{i}]}{T}; N) = \{ x[\sigma(\tup{i})] \} 
\multiunion \defset(\sigma, N)\\
&\defset(\sigma, \pptag \assign{x[\tup{i}]:T}{M}{N}) = \{ x[\sigma(\tup{i})] \} 
\multiunion \defset(\sigma, M) \multiunion \defset(\sigma, N)\\
&\defset(\sigma, \pptag \bguard{M}{N}{N'}) = \defset(M) \multiunion \max(\defset(N), \defset(N'))\\
&\defset(\sigma, \pptag \FIND\uniqueopt \ (\mathop{\textstyle\bigoplus}\nolimits_{j=1}^m
\tup{\vf_j}[\tup{i}] = \tup{i_j} \leq \tup{n_j}\ \SUCHTHAT \ 
\defined(\tup{M_j}) \fand M_j \THEN N_j)\ \ELSE N) = \\*
&\qquad \max(\max_{j = 1}^m \max_{\tup{a} \leq \tup{n_j}} \defset(\sigma[\tup{i_j} \mapsto \tup{a}], M_j),
\max_{j = 1}^m \{ \tup{\vf_j}[\sigma(\tup{i})] \} \multiunion
\defset(\sigma, N_j), \defset(\sigma, N))\\
&\defset(\sigma, \pptag \INSERT\ \tbl(M_1, \ldots, M_l); N) = 
   \bigmultiunion_{j=1}^l \defset(\sigma, M_j) \multiunion \defset(\sigma, N)\\
&\defset(\sigma, \pptag \GET\uniqueopt\ \tbl(x_1[\tup{i}] : T_1,\ldots,x_l[\tup{i}] : T_l)\ \SUCHTHAT\ M\ \IN\ N\ \ELSE N') =\\*
&\qquad  \max(\{ x_j[\sigma(\tup{i})] \mid j \leq l \}
\multiunion \max(\defset(\sigma, M), \defset(\sigma, N)), \defset(\sigma, N'))\\
&\defset(\sigma, \pptag \kevent{e(M_1, \ldots, M_l)};N) =
   \bigmultiunion_{j=1}^l \defset(\sigma, M_j) \multiunion \defset(\sigma, N)\\
&\defset(\sigma, \pptag \keventabort{e}) = \emptyset\\
&\defset(\sigma, a) = \defset(\sigma, \keventabort{(\pp,\tup{a}):e}) = \emptyset\\
&\defset(\sigma, \pptag 0) = \emptyset\\
&\defset(\sigma, \pptag (Q_1 \parpop Q_2)) = \defset(\sigma, Q_1) \multiunion \defset(\sigma, Q_2)\\
&\defset(\sigma, \pptag \repl{i}{n} Q) = \bigmultiunion_{a \in [1, n]} \defset(\sigma[i \mapsto a], Q)\\
&\defset(\sigma, \pptag \Reschan{c};Q) = \defset(\sigma, Q)\\
&\defset(\sigma, \pptag \cinput{c[M_1, \ldots, M_l]}{x[\tup{i}]:T};P) = \{x[\sigma(\tup{i})]\}\multiunion \defset(\sigma, P)\\
&\defset(\sigma, \pptag \coutput{c[M_1, \ldots, M_l]}{N};Q) = \bigmultiunion_{j=1}^l \defset(\sigma, M_j) \multiunion \defset(\sigma, N) \multiunion \defset(\sigma, Q)\\
&\defset(\sigma, \pptag \Res{x[\tup{i}]}{T}; P) = \{ x[\sigma(\tup{i})] \} 
\multiunion \defset(\sigma, P)\\
&\defset(\sigma, \pptag \assign{x[\tup{i}]:T}{M}{P}) = \{ x[\sigma(\tup{i})] \} 
\multiunion \defset(\sigma, M)\multiunion \defset(\sigma, P)\\
&\defset(\sigma, \pptag \bguard{M}{P}{P'}) = \defset(M) \multiunion \max(\defset(P), \defset(P'))\\
&\defset(\sigma, \pptag \FIND\uniqueopt \ (\mathop{\textstyle\bigoplus}\nolimits_{j=1}^m
\tup{\vf_j}[\tup{i}] = \tup{i_j} \leq \tup{n_j}\ \SUCHTHAT \ 
\defined(\tup{M_j}) \fand M_j \THEN P_j)\ \ELSE P) = \\*
&\qquad \max(\max_{j = 1}^m \max_{\tup{a} \leq \tup{n_j}} \defset(\sigma[\tup{i_j} \mapsto \tup{a}], M_j), \max_{j = 1}^m \{ \tup{\vf_j}[\sigma(\tup{i})] \} \multiunion
\defset(\sigma, P_j), \defset(\sigma, P))\\
&\defset(\sigma, \pptag \INSERT\ \tbl(M_1, \ldots, M_l); P) = 
   \bigmultiunion_{j=1}^l \defset(\sigma, M_j) \multiunion \defset(\sigma, P)\\
&\defset(\sigma, \pptag \GET\uniqueopt\ \tbl(x_1[\tup{i}] : T_1,\ldots,x_l[\tup{i}] : T_l)\ \SUCHTHAT\ M\ \IN\ P\ \ELSE P') =\\*
&\qquad  \max(\{ x_j[\sigma(\tup{i})] \mid j \leq l \}
\multiunion \max(\defset(\sigma, M), \defset(\sigma, P)), \defset(\sigma, P'))\\
&\defset(\sigma, \pptag\kevent{e(M_1, \ldots, M_l)};P) =
   \bigmultiunion_{j=1}^l \defset(\sigma, M_j) \multiunion \defset(\sigma, P)\\
&\defset(\sigma, \pptag \keventabort{e}) = \emptyset\\
&\defset(\sigma, \kw{abort}) = \emptyset
\end{align*}}%
Notice that, by Invariant~\ref{invtic}, the terms $M_j$ in channels of inputs and
the terms $\tup{M_j}$ in $\defined$ conditions of $\kw{find}$ do not define any variable.
By Invariant~\ref{invfc}, the variables defined in conditions of $\kw{find}$ and $\kw{get}$ can be considered as defined temporarily only during the evaluation of the considered condition.
Given a configuration $\conf = E, \sigma, N, \tblcts, \evseq$ or
$\conf = E, (\sigma, P)\restconfig$ or $\conf = E, \pset, \cset$, we denote by $E_{\conf}$ the
environment $E$ in configuration $\conf$.
We define
\begin{align*}
  &\defsetfut(E, \sigma, M, \tblcts, \evseq) = \defset(\sigma, M)\\
  &\defsetfut(E, \pset, \cset) = \bigmultiunion_{(\sigma, Q) \in \pset} \defset(\sigma, Q)\\
  &\defsetfut(E,(\sigma, P)\restconfig) = \defset(\sigma, P) \multiunion \bigmultiunion_{(\sigma, Q) \in \pset} \defset(\sigma, Q)\\
  &\defset(\conf) = \dom(E_{\conf}) \multiunion \defsetfut(\conf)\,.
\end{align*}

\begin{invariant}[Single definition, for executing games]
\label{inv1exec}
\ The semantic configuration $\conf$ (which can be $E, \sigma, M, \tblcts, \evseq$ or
$E, \pset, \cset$ or $E, (\sigma,\ab P)\restconfig$) satisfies 
Invariant~\ref{inv1exec} if and only if 
$\defset(\conf)$ does not contain duplicate elements.
\end{invariant}

\begin{lemma}\label{lem:inv1exec}
  Let $\trace$ be trace of $Q_0$.
  If $Q_0$ satisfies Invariant~\ref{inv1}, then all semantic configurations in the derivation
  of $\trace$
  satisfy Invariant~\ref{inv1exec}.
\end{lemma}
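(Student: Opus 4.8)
The plan is to mimic the inductive structure of the proof of Lemma~\ref{lem:curidx}. I say that a configuration $\conf$ is \emph{ok} when it satisfies Invariant~\ref{inv1exec}, i.e.\ when $\defset(\conf)$ has no duplicate element, and I prove by induction on the semantic derivations the four statements analogous to those of Lemma~\ref{lem:curidx}: if the source of a term reduction, of a $\redq$ reduction, of a call to $\reduce$, or of an output-process reduction is ok, then every configuration occurring in its derivation (and in particular its target) is ok. Combined with the fact that $\initconfig(Q_0)$ is ok, this yields that every configuration in the derivation of $\trace$ is ok.

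The quantitative observation driving the induction is that no reduction step increases $\defset$ for the multiset-inclusion order: writing $\conf_1$ for the source of a step and $\conf_2$ for any configuration in its derivation, I check rule by rule that the multiplicity of each array cell in $\defset(\conf_2)$ is at most its multiplicity in $\defset(\conf_1)$; since $\defset(\conf_1)$ has no duplicate when $\conf_1$ is ok, neither does $\defset(\conf_2)$. The function $\defset$ is tailored exactly for this: a definition construct (rules~\eqref{sem:new}, \eqref{sem:let}, \eqref{sem:newt}, \eqref{sem:lett}, the successful \eqref{sem:find1}, \eqref{sem:get1} and their term variants) moves a cell from $\defsetfut$ into $\dom(E)$, leaving the total $\defset$ unchanged; a branching construct combines its branches with $\max$, so selecting one branch (the rules for $\kw{if}$, $\kw{find}$, $\kw{get}$) keeps a submultiset of that $\max$ and can only decrease $\defset$; and~\eqref{sem:repl} unfolds $\bigmultiunion_{a\in[1,n]}$, again leaving $\defset$ unchanged. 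The condition $x[\sigma(\tup i)]\notin\dom(E)$ needed for a fresh definition to keep the multiset duplicate-free is automatic, since the reserved cell already sits in $\defsetfut$ of the ok source configuration.

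Two cases require care. First, the evaluation of conditions of $\kw{find}$ and $\kw{get}$: each condition $\defined(\ldots)\fand M_j$ is evaluated starting from the environment $E$ extended only by the bound replication indices $\tup{i_j}$, so an intermediate configuration of such an evaluation has $\defset$ bounded by $\dom(E)\multiunion\defset(\sigma[\tup{i_j}\mapsto\tup a],M_j)$, a submultiset of $\defset$ of the whole $\kw{find}$/$\kw{get}$ configuration because $\defset$ takes a $\max$ over the branches $j$ and the index tuples $\tup a$; the condition-local variables, which have no array accesses and are not retained by Invariants~\ref{invtic} and~\ref{invfc}, never accumulate across evaluations. For the abort rules~\eqref{sem:finde} and~\eqref{sem:gete}, which retain the environment $E_{k_0}$ reached at the triggered abort event, the extra cells in $\dom(E_{k_0})$ are still bounded by the (now discarded) $\defset$ of the $\kw{find}$/$\kw{get}$, so no duplicate appears. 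Second, the communication rule~\eqref{sem:output}: the selected input $(\sigma',Q_0)$ with $Q_0={}^{\pp'}\cinput{c[\ldots]}{x[\tup i]:T};P$ contributes $\{x[\sigma'(\tup i)]\}\multiunion\defset(\sigma',P)$ to the future defset of $\pset$, and these reappear exactly as the new cell in $\dom(E)$ and the future defset of the scheduled $P$, so the total is preserved; meanwhile the input processes following the output are produced by $\reduce(E,\{(\sigma,Q'')\},\cset)$, whose $\defset$ is bounded by $\defset(\sigma,Q'')$ by the $\reduce$ statement of the induction, so this part can only decrease $\defset$.

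It remains to establish the base case, which is where Invariant~\ref{inv1} enters. I compute $\defset(\sigma_0,Q_0)$ by structural induction on $Q_0$ and show it has no duplicate. Part~\ref{inv1p1} of Invariant~\ref{inv1} guarantees that each syntactic definition of $x$ uses as indices exactly the current replication indices, so that the $\bigmultiunion_{a\in[1,n]}$ of the replication case ranges a single definition over pairwise distinct cells; part~\ref{inv1p2} guarantees that two syntactic definitions of the same variable lie in different branches of a $\kw{find}$, $\kw{if}$, $\kw{let}$, or $\kw{get}$, which $\defset$ combines with $\max$ rather than $\multiunion$, so they contribute the same cell at multiplicity one rather than two. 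Hence $\emptyset,\{(\sigma_0,Q_0)\},\fc(Q_0)$ is ok, and by the $\reduce$ statement of the induction $\initconfig(Q_0)$ is ok; applying the output-process statement along $\trace$ then gives the claim. The main obstacle is precisely this base-case bookkeeping together with the condition handling: one must verify that the placement of $\max$ versus $\multiunion$ in the definition of $\defset$ exactly mirrors the branch-versus-sequence distinction of Invariant~\ref{inv1}, and that resetting the environment between condition evaluations prevents the condition-local variables from ever coexisting.
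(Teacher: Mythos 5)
Your proposal is correct and follows essentially the same route as the paper's proof: a structural induction on $Q_0$ showing $\defset(\sigma_0,Q_0)$ is duplicate-free (using part~\ref{inv1p1} of Invariant~\ref{inv1} for disjointness across replication indices and part~\ref{inv1p2} for the $\max$-versus-$\multiunion$ placement), followed by an induction on the semantic derivations showing that every reduction step can only shrink $\defset$ in the multiset-inclusion order, so that Invariant~\ref{inv1exec} propagates from $\initconfig(Q_0)$ through the whole trace. Your additional bookkeeping for $\kw{find}$/$\kw{get}$ conditions and for rule~\eqref{sem:output} fills in details the paper's sketch leaves implicit, but the decomposition and key observations coincide.
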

\begin{proofsk}
  We first show that, for all program points $\pp$ in $Q_0$, if
  $\dom(\sigma) = \replidx{\pp}$ are the current replication indices at $\pp$
  and the process or term $Q$ at $\pp$
  satisfies Invariant~\ref{inv1}, then all elements of
  $\defset(\sigma, Q)$ are of the form $x[\tup{a}]$ where $x \in
  \vardef(Q)$ and $\image(\sigma)$ is a prefix of $\tup{a}$.  The
  proof proceeds by induction on $Q$. At the definition of a variable
  $x[\tup{i}]$, $x[\sigma(\tup{i})]$ is added to $\defset(\sigma, Q)$
  and we have $x \in \vardef(Q)$; by Invariant~\ref{inv1}, $\tup{i}$
  are the current replication indices at that definition, so
  $\sigma(\tup{i}) = \image(\sigma)$.  All recursive calls
  $\defset(\sigma', {}^{\pp'} Q')$ consider an extension $\sigma'$ of $\sigma$
  and a subprocess or subterm $Q'$ of $Q$ (so $\vardef(Q') \subseteq
  \vardef(Q)$) such that $\dom(\sigma') = \replidx{\pp'}$ are the current replication
  indices at $\pp'$.

  Next, we show that, for all program points $\pp$, if
  $\dom(\sigma) = \replidx{\pp}$ are the current replication indices at $\pp$
  and the process or term $Q$ at $\pp$
  satisfies Invariant~\ref{inv1}, then $\defset(\sigma, Q)$ does not
  contain duplicate elements.
  The proof proceeds by induction on $Q$.
  All multiset unions in the computation of $\defset(\sigma, Q)$ are
  disjoint unions by the property above, because either they use
  different extensions of $\sigma$ (case of replication) or they use
  disjoint variable definitions or subprocesses or subterms in the
  same branch of $\FIND$, $\kw{if}$, or $\kw{get}$, which must define different
  variables by Invariant~\ref{inv1}.
  
  We show by induction on the derivations that, if
  $\conf \red{p}{\ix} \conf'$, then
  $\defset(\conf) \supseteq \defset(\conf')$
  and for all semantic configurations $\conf''$ in the derivation of
  $\conf \red{p}{\ix} \conf'$,
  $\defset(\conf) \supseteq \defset(\conf'')$, and similarly with $\redq$ instead of $\red{p}{\ix}$.

The result follows: since $Q_0$ satisfies Invariant~\ref{inv1},
$\defset(\sigma_0, Q_0)$ does not
contain duplicate elements, where $\sigma_0$ is the empty mapping sequence.
Then $\emptyset, \{ (\sigma_0, Q_0) \}, \fc(Q_0)$ satisfies Invariant~\ref{inv1exec}
and so do $\reduce(\emptyset, \{ (\sigma_0, Q_0) \}, \fc(Q_0))$, $\initconfig(Q_0)$,
and the other configurations of~$\trace$.
\proofcomplete
\end{proofsk}

\begin{corollary}\label{cor:inv1exec}
  If $Q_0$ satisfies Invariant~\ref{inv1}, then each variable that is
  not defined in a condition of $\FIND$ or $\GET$ is defined at most
  once for each value of its array indices in a trace of $Q_0$.
\end{corollary}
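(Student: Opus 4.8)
The plan is to derive the corollary directly from Lemma~\ref{lem:inv1exec} together with the environment-monotonicity part of Lemma~\ref{lem:sem-ext}. Since $Q_0$ satisfies Invariant~\ref{inv1}, Lemma~\ref{lem:inv1exec} guarantees that every semantic configuration $\conf$ occurring in the derivation of $\trace$ satisfies Invariant~\ref{inv1exec}, i.e.\ that $\defset(\conf) = \dom(E_{\conf}) \multiunion \defsetfut(\conf)$ contains no duplicate element. The whole argument rests on reading this no-duplicate property contrapositively: a cell already present in $\dom(E_{\conf})$ can never simultaneously lie in $\defsetfut(\conf)$, so an already-defined cell is never scheduled to be persistently defined again.

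First I would fix a variable $x$ not defined in any condition of $\FIND$ or $\GET$, fix an index tuple $\tup{a}$, and argue by contradiction that the cell $x[\tup{a}]$ is added to the environment in at most one reduction step of $\trace$. Each definition of $x[\tup{a}]$ is a top-level reduction on a full configuration of $\trace$; because $x$ is excluded from $\FIND$/$\GET$ conditions, this step can only be an instance of~\eqref{sem:new}, \eqref{sem:let}, \eqref{sem:find1}, \eqref{sem:get1}, or~\eqref{sem:output}, each of which extends $E$ with the cell $x[\tup{a}]$. Suppose two such steps existed, the first yielding a configuration $\conf_1'$ with $x[\tup{a}] \in \dom(E_{\conf_1'})$ and the second starting from a later configuration $\conf_2$. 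Applying the environment-extension statement of Lemma~\ref{lem:sem-ext} along the top-level steps from $\conf_1'$ to $\conf_2$ gives $x[\tup{a}] \in \dom(E_{\conf_2})$. On the other hand, at $\conf_2$ the construct about to define $x[\tup{a}]$ is present in the scheduled output process, or---for~\eqref{sem:output}---in the selected input process of $\pset$, so by the defining clauses of $\defset$ we would have $x[\tup{a}] \in \defsetfut(\conf_2)$. Hence $x[\tup{a}]$ occurs in both summands of $\defset(\conf_2)$, contradicting the no-duplicate property for $\conf_2$.

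The main obstacle is the bookkeeping needed to justify $x[\tup{a}] \in \defsetfut(\conf_2)$ uniformly across the five definition rules, matching each semantic rule against the corresponding clause in the definition of $\defset$. The cases~\eqref{sem:new} and~\eqref{sem:let} are immediate, since the clauses for $\Res{x[\tup{i}]}{T}$ and for $\kw{let}$ literally contribute $\{x[\sigma(\tup{i})]\}$. The delicate cases are~\eqref{sem:find1} and~\eqref{sem:get1}, where the defined cell $\vf_{j'k}[\sigma(\tup{i})]$ (resp.\ $x_k[\sigma(\tup{i})]$) sits inside a $\max$ over branches in the $\defset$ clause for $\FIND$ (resp.\ $\GET$); there I must check that taking the multiset maximum never drops its multiplicity below one, which is enough because only membership is required. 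For~\eqref{sem:output} I must additionally observe that the variable is defined in the receiving input process, which belongs to $\pset$ and hence contributes to $\defsetfut(\conf_2)$ through the $\bigmultiunion_{(\sigma,Q)\in\pset}$ term. A secondary point to state carefully is that the monotonicity of $E$ must be taken over the top-level full-configuration steps of $\trace$, so that the transient variables defined and then discarded inside $\FIND$/$\GET$ condition evaluations---exactly those excluded from the statement---do not disturb the extension argument.
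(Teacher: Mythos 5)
Your overall strategy coincides with the paper's: both derive a contradiction by combining Lemma~\ref{lem:inv1exec} (every configuration satisfies the no-duplicate property of Invariant~\ref{inv1exec}), environment monotonicity from Lemma~\ref{lem:sem-ext}, and the observation that a pending definition of $x[\tup{a}]$ lies in $\defsetfut$ of the configuration about to perform it. However, there is a genuine gap in your case analysis of the defining steps. You claim that, because $x$ is not defined in conditions of $\FIND$ or $\GET$, every step defining $x[\tup{a}]$ must be an instance of \eqref{sem:new}, \eqref{sem:let}, \eqref{sem:find1}, \eqref{sem:get1}, or \eqref{sem:output}. That is false in the calculus to which the corollary applies: Section~\ref{sec:inv1exec} concerns the full calculus, \emph{before} the \rn{expand} transformation, so non-simple terms containing $\kw{new}$, $\kw{let}$, $\kw{find}$, or $\kw{get}$ may occur in output messages, right-hand sides of assignments, event arguments, and so on. A definition of $x[\tup{a}]$ can therefore be performed by a term-level rule \eqref{sem:newt}, \eqref{sem:lett}, \eqref{sem:findt1}, or \eqref{sem:gett1} applied under \eqref{sem:ctx} and \eqref{sem:ctxt}; the top-level step of the trace is then a context step, not one of your five rules. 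This is why the paper's proof lists nine rules, including the four term-level ones. Your argument can be repaired---a context-wrapped term definition still extends the environment of the target full configuration, and the defining construct, being a subterm of the term under evaluation, still contributes $x[\tup{a}]$ to $\defsetfut(\conf_2)$---but as written the case analysis is incomplete.

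Relatedly, the passage from the syntactic hypothesis (``$x$ is not defined in a condition of $\FIND$ or $\GET$ in $Q_0$'') to the semantic claim you rely on (definitions of $x$ never occur inside the hypothesis derivations of $\FIND$/$\GET$ rules, where the resulting environment extensions are discarded by the rule and where configurations need not be linearly ordered) is not free. It is exactly what the paper obtains from Corollary~\ref{cor:evalpos} and Lemma~\ref{lem:condfind}: executed constructs are subterms or subprocesses of $Q_0$, so a definition occurring inside a condition derivation would correspond to a syntactic definition inside a condition of $Q_0$, contradicting the hypothesis. Without invoking something of this kind, your assertions that every definition is a ``top-level reduction on a full configuration'' and that monotonicity may be applied ``over the top-level steps'' are unsupported; for definitions inside condition evaluations both claims fail, which is precisely why the paper needs its second case (transitions that cannot be ordered by $\beforetr{\trace}$) and resolves it with Lemma~\ref{lem:condfind}.
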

\begin{proof}
  Let $x$ be the considered variable and $\trace$ be the considered trace of $Q_0$.
The only semantic rules that can add $x[\tup{a}]$ to the environment $E$ are~\eqref{sem:newt}, \eqref{sem:lett}, \eqref{sem:findt1}, \eqref{sem:gett1}, \eqref{sem:new}, \eqref{sem:let}, \eqref{sem:find1}, \eqref{sem:get1}, and~\eqref{sem:output}.
By Corollary~\ref{cor:evalpos}, the target term or process of these rules is a subterm or subprocess of $Q_0$ up to renaming of channels.
Hence, the target configuration $\conf'$ of these rules is at some program point $\pp$ in $\trace$. 
By hypothesis, $x$ is not defined in conditions of $\FIND$ or $\GET$,
so $\pp$ is not inside the condition of $\FIND$ or $\GET$ in $Q_0$,
so by Lemma~\ref{lem:condfind}, the configuration $\conf'$ is not in 
the derivation of an assumption of a rule for $\FIND$ or $\GET$.

In order to derive a contradiction, assume that
two transitions $\conf_1 \red{p_1}{\ix_1} \conf'_1$ and $\conf_2 \red{p_2}{\ix_2} \conf'_2$
inside $\trace$ define the same variable $x[\tup{a}]$.
\begin{itemize}

\item First case: one transition happens before the other, for instance $\conf'_1 \beforetr{\trace} \conf_2$.
(The case $\conf'_2 \beforetr{\trace} \conf_1$ is symmetric.)
Since $\conf_1 \red{p_1}{\ix_1} \conf'_1$ defines $x[\tup{a}]$, we have $x[\tup{a}] \in \dom(E_{\conf'_1})$.
Since $\conf'_1$ is not in 
the derivation of an assumption of a rule for $\FIND$ or $\GET$,
by Lemma~\ref{lem:sem-ext}, $E_{\conf_2}$ extends $E_{\conf'_1}$, so $x[\tup{a}] \in \dom(E_{\conf_2})$.
Moreover, since $\conf_2 \red{p_2}{\ix_2} \conf'_2$ defines $x[\tup{a}]$, we have $x[\tup{a}] \in \defsetfut(\conf_2)$,
by inspecting all rules that add elements to the environment.
Therefore $\defset(\conf_2) = \dom(E_{\conf_2}) \multiunion \defsetfut(\conf_2)$ contains
twice $x[\tup{a}]$. Contradiction with Invariant~\ref{inv1exec}.

\item Second case: the transitions cannot be ordered. 
By definition of $\beforetr{\trace}$, this can happen only when a semantic rule uses several derivations
for its assumptions, which happens only in rules for $\FIND$ or $\GET$. 
Contradiction.

\end{itemize}
That concludes the proof.
\proofcomplete
\end{proof}

Variables defined in conditions of $\GET$ may be defined several times, once
for each element of the table that is tested. Variables defined in conditions
of $\FIND$ may be defined several times in case the same variable
is used in several branches of the same $\FIND$. (We use the indices
of the $\FIND$ as indices of the variables defined in the condition, so
when we evaluate several times the condition of a certain branch of a $\FIND$,
we use variables with different indices.)
In Section~\ref{sec:subsets}, we define properties that exclude these situations
(Properties~\ref{prop:notables} and~\ref{prop:autosarename}),
and we prove in Lemma~\ref{lem:stronginv1exec} that every variable is defined
at most once for each value of its indices when these properties are satisfied.

\subsubsection{Variables are Defined Before Being Used}

In this section, we show that Invariant~\ref{inv2} implies that
all variables are defined before being used. In order to show this
property, we use the following invariant:

\begin{invariant}[Defined variables, for executing games]
\label{inv2exec}
The semantic configuration $E,(\sigma,\ab P)\restconfig$ satisfies
Invariant~\ref{inv2exec} if and only if every occurrence of a variable
access $x[M_1, \ab \ldots, \ab M_m]$ in $(\sigma,P)$ or $\pset$ is either
\begin{enumerate}

\item present in $\dom(E)$: if $x[M_1, \ldots, M_m]$ occurs in
a process $P'$ for $(\sigma', P') \in \{ (\sigma, P) \} \cup \pset$, then
for all $j \leq m$, $E, \sigma', M_j \evalterm a_j$ and 
$x[a_1, \ldots, a_m] \in \dom(E)$;

\item or syntactically under the definition of $x[M_1, \ldots, M_m]$
(in which case for all $j \leq m$, $M_j$ is a constant or variable
replication index);

\item or in a $\defined$ condition in a $\FIND$ process or term;

\item or in $M'_j$ in a process or term of the form 
$\FIND$ $(\mathop\bigoplus\nolimits_{j=1}^{m''} \tup{\vf_j}[\tup{i}] = \tup{i_j} \leq \tup{n_j}$ $\SUCHTHAT$ $\defined (M'_{j1}, \ab \ldots, \ab M'_{jl_j}) \fand M'_j$ $\kw{then}$ $P_j)$ $\kw{else}$ $P$
where for some $k \leq l_j$, $x[M_1, \ldots, M_m]$ is a subterm of $M'_{jk}$.

\item or in $P_j$ in a process or term of the form 
$\FIND$ $(\mathop\bigoplus\nolimits_{j=1}^{m''} \tup{\vf_j}[\tup{i}] = \tup{i_j} \leq \tup{n_j}$ $\SUCHTHAT$ $\defined (M'_{j1}, \ab \ldots, \ab M'_{jl_j}) \fand M'_j$ $\kw{then}$ $P_j)$ $\kw{else}$ $P$
where for some $k \leq l_j$, there is a subterm $N$ of $M'_{jk}$
such that $N \{ \tup{\vf_j}[\tup{i}] / \tup{i_j} \} = x[M_1, \ldots, M_m]$.

\end{enumerate}
Similarly, $E, \sigma, M, \tblcts, \evseq$ satisfies Invariant~\ref{inv2exec}
if and only if every occurrence of a variable
access $x[M_1, \ldots, M_m]$ in $M$ either is present in $\dom(E)$
(for all $j \leq m$, $E, \sigma, M_j \evalterm a_j$ and 
$x[a_1, \ldots, a_m] \in \dom(E)$) or satisfies one of the last four conditions
above.

$E, \sigma, \defined(M'_1, \ldots, M'_l) \wedge M, \tblcts, \evseq$ 
satisfies Invariant~\ref{inv2exec}
if and only if every occurrence of a variable
access $x[M_1, \ldots, M_m]$ in $M$ either is 
a subterm of $M'_1, \ldots, M'_l$, or is present in $\dom(E)$
(for all $j \leq m$, $E, \sigma, M_j \evalterm a_j$ and 
$x[a_1, \ldots, a_m] \in \dom(E)$) or satisfies one of the last four conditions
above.
\end{invariant}

Recall that, by Invariants~\ref{inv2} and~\ref{invtic}, the terms of
all variable accesses $x[M_1, \ldots, M_m]$ are simple. That is why we can
evaluate them by $E, \sigma', M_j \evalterm a_j$.

\begin{lemma}
If $Q_0$ satisfies Invariant~\ref{inv2}, then
$\initconfig(Q_0)$ satisfies Invariant~\ref{inv2exec}.
\end{lemma}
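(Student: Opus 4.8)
The plan is to unfold the definition of $\initconfig(Q_0)$ and reduce the claim to a statement about $\reduce(\emptyset, \{(\sigma_0, Q_0)\}, \fc(Q_0))$. Writing $\emptyset, \pset, \cset = \reduce(\emptyset, \{(\sigma_0, Q_0)\}, \fc(Q_0))$, we have $\initconfig(Q_0) = \emptyset, (\sigma_0, \pptag\coutput{\startch}{}), \pset, \cset, \emptyset, \emptyset$. The scheduled output process $\pptag\coutput{\startch}{}$ contains no variable access, and the environment is $\emptyset$, so membership in $\dom(E)$ (the first condition of Invariant~\ref{inv2exec}) is never available. Hence Invariant~\ref{inv2exec} for $\initconfig(Q_0)$ amounts to showing that every variable access occurring in a process of $\pset$ satisfies one of the last four conditions of Invariant~\ref{inv2exec}.

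To establish this I would prove, by induction on the length of the derivation $\emptyset, \{(\sigma_0, Q_0)\}, \fc(Q_0) \redq^* \emptyset, \pset, \cset$, the following statement: in every configuration $\emptyset, \pset', \cset'$ occurring in this derivation, every variable access in every process of $\pset'$ satisfies one of the last four conditions of Invariant~\ref{inv2exec}. The base case is immediate from Invariant~\ref{inv2} applied to $Q_0$: its first condition (the access is under the definition of $x$, with $M_1, \ldots, M_m$ the current replication indices, which are in particular variable replication indices) yields the second condition of Invariant~\ref{inv2exec}, and its three remaining conditions (access in a $\defined$ condition, in an $M'_{jk}$, or in a $P_j$ of a $\FIND$) coincide verbatim with the third, fourth, and fifth conditions of Invariant~\ref{inv2exec}.

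For the inductive step I would argue rule by rule over the $\redq$ rules \eqref{sem:nil}, \eqref{sem:par}, \eqref{sem:repl}, \eqref{sem:newchannel}, and \eqref{sem:input}. The observation driving every case is that \emph{no $\redq$ rule defines a variable}: \eqref{sem:repl} binds only a replication index, \eqref{sem:newchannel} only a channel, \eqref{sem:nil} and \eqref{sem:par} bind nothing, and \eqref{sem:input} merely evaluates a term in an input channel \emph{without executing the input}, so the binding pattern is untouched. Consequently, when \eqref{sem:par} splits $(\sigma, Q_1 \parpop Q_2)$ into $(\sigma, Q_1)$ and $(\sigma, Q_2)$, or \eqref{sem:repl} strips $\repl{i}{n} Q$ down to $(\sigma[i\mapsto a], Q)$, a variable access that satisfied the second condition keeps its defining construct inside the same resulting subprocess: the stripped-off context (a replication, a parallel composition, or a channel restriction) contains no definition of $x$, so that definition must already lie within $Q_1$ (resp.\ $Q$), whose scope is unaffected by the decomposition. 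Likewise, the guarding $\FIND$ justifying the third, fourth, or fifth condition is an output-process construct occurring below an input and is therefore never restructured by $\redq$; channel renaming in \eqref{sem:newchannel} does not touch variable accesses; and \eqref{sem:input} only replaces a channel subterm by a value, which can only remove variable accesses while leaving the scopes of the remaining ones intact.

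The point that needs the most care — and the main obstacle — is precisely this scope argument for \eqref{sem:par} and \eqref{sem:repl}: one must verify that decomposing $Q_0$ can never separate a variable access from the construct that justifies it under Invariant~\ref{inv2}, which rests entirely on the fact that $\redq$ strips only non-defining input-process constructs (never a $\kw{new}$, $\LET$, $\FIND$, $\GET$, or input pattern). A secondary subtlety is \eqref{sem:input}: since $E = \emptyset$, any variable access surviving in an evaluated channel term would block the reduction through rule~\eqref{sem:var}, so those channel terms in fact contain no variable access, and the rule trivially preserves the invariant. Finally, instantiating the proved statement at the normal form $\emptyset, \pset, \cset$ and recalling that $\pptag\coutput{\startch}{}$ has no variable access gives Invariant~\ref{inv2exec} for $\initconfig(Q_0)$.
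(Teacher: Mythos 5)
The paper states this lemma with no proof at all (it is treated as immediate), so there is no official argument to compare against; your proposal is correct and supplies exactly the verification the authors evidently had in mind. Your key observation—that every $\redq$ rule (\eqref{sem:nil}, \eqref{sem:par}, \eqref{sem:repl}, \eqref{sem:newchannel}, \eqref{sem:input}) strips or modifies only non-defining constructs, so a variable access can never be separated from the definition or $\FIND$ that justifies it under Invariant~\ref{inv2}, and that with $E = \emptyset$ the first case of Invariant~\ref{inv2exec} is vacuous—is precisely why the result is routine; one could add that Invariant~\ref{inv2} already forces the evaluated input-channel terms to contain no variable access (no defining or $\FIND$ ancestor is available above a top-level input), which makes your treatment of \eqref{sem:input} even simpler than the semantic blocking argument you give.
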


\begin{lemma}\label{lem:defsubterms}
Let $M$ be a simple term.
If $E, \sigma, M \evalterm a$, 
then for all subterms $x[M_1, \ldots, M_m]$ of $M$, for all $j' \leq m$,
$E, \sigma, M_{j'} \evalterm a_{j'}$ and $x[a_1,\ldots,a_m]$ 
is in $\dom(E)$.
\end{lemma}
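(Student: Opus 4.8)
The plan is to prove the statement by structural induction on the simple term $M$, following the three-clause grammar of simple terms (replication index, variable access, function application) and the inductive characterization of $\evalterm$ recalled just before the lemma. The whole argument is essentially an inversion of the evaluation rules at each node, so the induction hypothesis can be applied to the immediate subterms that the rule evaluates as premises.

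First I would treat the base case $M = i$, a replication index. Here $M$ has no variable-access subterm, so the universally quantified conclusion holds vacuously; the evaluation $E, \sigma, i \evalterm \sigma(i)$ always succeeds, but that fact is not even needed. Next, for the function-application case $M = f(N_1, \ldots, N_k)$, inversion of the rule for function applications shows that $E, \sigma, M \evalterm a$ implies $E, \sigma, N_j \evalterm b_j$ for each $j \leq k$ (with $a = f(b_1, \ldots, b_k)$). Since $M$ itself is not a variable access, any variable-access subterm $x[M_1, \ldots, M_m]$ of $M$ is a subterm of some $N_j$; applying the induction hypothesis to $N_j$, which evaluates, yields the conclusion for that subterm.

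The essential case is the variable-access case $M = x[N_1, \ldots, N_m]$. By inversion of the rule for variable accesses, $E, \sigma, M \evalterm a$ gives $E, \sigma, N_j \evalterm a_j$ for each $j \leq m$ together with $a = E(x[a_1, \ldots, a_m])$; for the right-hand side to be defined we must have $x[a_1, \ldots, a_m] \in \dom(E)$. This establishes the conclusion for the subterm $M$ itself (which is counted among the subterms $x[M_1,\ldots,M_m]$ of $M$). Every other variable-access subterm of $M$ lies inside some index term $N_j$, which evaluates, so the induction hypothesis applied to each such $N_j$ dispatches the remaining subterms.

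I expect no real obstacle here: the statement is exactly the structural content of the evaluation rules, and the only point requiring attention is the bookkeeping of \emph{nested} variable accesses — the index terms $N_j$ may themselves contain further variable accesses. The structural induction handles this uniformly, precisely because each index subterm is evaluated as a premise of the rule firing at $M$, so it is strictly smaller and its own evaluation is available to feed the induction hypothesis. Determinism of $\evalterm$ on simple terms (variable lookup and function application are deterministic) guarantees that the values $a_{j'}$ asserted to exist in the conclusion are well-defined, but only their \emph{existence} is used.
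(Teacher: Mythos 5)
Your proof is correct and takes exactly the route the paper intends: the paper's own proof is the one-line sketch ``By induction on $M$,'' and your structural induction on the three clauses of simple terms, with rule inversion at each node (vacuous for replication indices, the domain-membership condition falling out of inversion in the variable-access case, and the induction hypothesis dispatching nested accesses inside index terms), is precisely the fleshed-out version of that argument.
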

\begin{proofsk}
By induction on $M$.
\proofcomplete
\end{proofsk}

\begin{lemma}\label{lem:substreplindex}
Let $N$, $M$ be simple terms.
If $E, \sigma[i \mapsto a'], N \evalterm a$
and $E, \sigma, M \evalterm a'$, then we have
$E, \sigma, N\{M/i\} \evalterm a$.
\end{lemma}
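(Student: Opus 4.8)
The plan is to prove the statement by structural induction on the simple term $N$, using the inductive characterization of $\evalterm$ for simple terms given just above. This characterization applies throughout the induction because every subterm of a simple term is again simple, and because $N\{M/i\}$ substitutes the simple term $M$ for the replication index $i$, so it remains simple and its evaluation is governed by the same three rules.

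First I would treat the base case where $N$ is a replication index $j$. If $j = i$, then $N\{M/i\} = M$; from $E, \sigma[i \mapsto a'], i \evalterm \sigma[i \mapsto a'](i) = a'$ we obtain $a = a'$, and the second hypothesis $E, \sigma, M \evalterm a'$ is then exactly $E, \sigma, N\{M/i\} \evalterm a$. If $j \neq i$, then $N\{M/i\} = j$, and since extending the mapping sequence by $i \mapsto a'$ does not change the value of any index already in $\dom(\sigma)$, we have $\sigma[i \mapsto a'](j) = \sigma(j)$, so $a = \sigma(j)$; by the rule for replication indices $E, \sigma, j \evalterm \sigma(j) = a$, as required.

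For the inductive cases, $N$ is either a variable access $x[M_1, \ldots, M_m]$ or a function application $f(M_1, \ldots, M_m)$. In both cases, inverting the evaluation rule for $N$ under $\sigma[i \mapsto a']$ yields values $a_1, \ldots, a_m$ with $E, \sigma[i \mapsto a'], M_j \evalterm a_j$ for each $j \leq m$, together with $a = E(x[a_1, \ldots, a_m])$ (resp. $a = f(a_1, \ldots, a_m)$). Each $M_j$ is a subterm of the simple term $N$, hence simple, so the induction hypothesis applies and gives $E, \sigma, M_j\{M/i\} \evalterm a_j$. Since substitution commutes with these constructors, $N\{M/i\}$ equals $x[M_1\{M/i\}, \ldots, M_m\{M/i\}]$ (resp. $f(M_1\{M/i\}, \ldots, M_m\{M/i\})$), and reapplying the same evaluation rule produces the same result, so $E, \sigma, N\{M/i\} \evalterm a$.

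I expect no serious obstacle: the argument is a routine substitution lemma entirely parallel in structure to Lemma~\ref{lem:defsubterms}. The only point requiring a little care is the replication-index base case, where one must observe that the extension $\sigma[i \mapsto a']$ agrees with $\sigma$ on all indices $j \in \dom(\sigma)$ with $j \neq i$, and that the substitution replaces $i$ by $M$ while leaving every other index untouched.
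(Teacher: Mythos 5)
Your proof is correct and takes exactly the approach of the paper, whose entire proof is the one-line sketch ``By induction on $N$''; your write-up simply fills in the routine details of that induction (the index base case split on $j = i$ versus $j \neq i$, and the two congruence cases for variable accesses and function applications).
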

\begin{proofsk}
By induction on $N$.
\proofcomplete
\end{proofsk}

\begin{lemma}
If $E,\sigma,M, \tblcts, \evseq \red{p}{\ix} E',\sigma',M', \tblcts', \evseq'$
and $E,\sigma,M, \tblcts, \evseq$ satisfies Invariant~\ref{inv2exec},
then so does $E',\sigma',M', \tblcts', \evseq'$.

\sloppy

If $E,\sigma,\defined(M_1, \ldots, M_m) \wedge M, \tblcts, \evseq \red{p}{\ix} E',\sigma',M', \tblcts', \evseq'$
and $E,\sigma,\defined(M_1, \ab \ldots, \ab M_m) \wedge M, \tblcts, \evseq$ satisfies Invariant~\ref{inv2exec},
then so does $E',\sigma',M', \tblcts', \evseq'$.

If $E,(\sigma,P)\restconfig \red{p}{\ix} E',(\sigma',P'),\pset',\cset',\tblcts',\evseq'$
and $E,(\sigma,P)\restconfig$ satisfies Invariant~\ref{inv2exec},
then so does $E',(\sigma',P'),\pset',\cset',\tblcts',\evseq'$.

Moreover, if the rules that define
$E,\sigma,M, \tblcts, \evseq \red{p}{\ix} E',\sigma',M', \tblcts', \evseq'$ (resp. 
$E,\ab \sigma,\ab \defined(M_1, \ab \ldots, \ab M_m) \wedge M, \ab \tblcts, \ab \evseq \red{p}{\ix} E',\ab \sigma',\ab M', \ab \tblcts', \ab \evseq'$ or
$E,(\sigma,P)\restconfig \red{p}{\ix} E',\ab (\sigma',P'),\ab \pset',\ab \cset',\ab \tblcts',\ab \evseq'$)
require as assumption
$E'',\ab \sigma'',\ab M'', \ab \tblcts'', \ab \evseq'' \red{p}{\ix} \ldots$
or
$E'',\ab \sigma'',\ab \defined(M''_1, \ab \ldots, \ab M''_m) \wedge M'', \ab \tblcts'', \ab \evseq'' \red{p}{\ix} \ldots$,
and the initial configuration $E,\ab \sigma,\ab M, \ab \tblcts, \ab \evseq$
(resp. $E,\ab \sigma,\ab \defined(M_1, \ab \ldots, \ab M_m) \wedge M, \ab \tblcts, \ab \evseq$ or
$E,\ab (\sigma,P)\restconfig$) 
satisfies Invariant~\ref{inv2exec},
then so does the initial configuration of the assumption, $E'',\ab \sigma'',\ab M'', \ab \tblcts'', \ab \evseq''$
or $E'',\ab \sigma'',\ab \defined(M''_1, \ab \ldots, \ab M''_m) \wedge M'', \ab \tblcts'', \ab \evseq''$.
\end{lemma}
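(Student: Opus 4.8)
The plan is to prove all four statements simultaneously by a single case analysis on the semantic rule that justifies the reduction, since the rules for terms, for $\defined$ conditions, and for output processes are structurally parallel and the arguments transfer almost verbatim. For each rule I check two things: that the target configuration again satisfies Invariant~\ref{inv2exec}, and (for the ``moreover'' part) that the initial configuration of every assumption does. Throughout, the basic tool is Lemma~\ref{lem:sem-ext}: the environment never shrinks ($E'$ extends $E$) and $\sigma'$ extends $\sigma$. Hence any access already covered by clause~1 (present in $\dom(E)$) stays covered, and any access covered by a syntactic clause (2--5) in the source is covered in the target as soon as the target term/process is a subterm/subprocess of the source. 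By Corollary~\ref{cor:evalpos}, the target of each non-context rule is such a subterm/subprocess of $Q_0$, so the syntactic clauses are preserved; this disposes of the structurally trivial rules~\eqref{sem:var}, \eqref{sem:fun}, the conditionals, insert, event, and so on, almost immediately.

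Next come the rules that add a binding. For~\eqref{sem:newt}, \eqref{sem:lett} and their process counterparts~\eqref{sem:new}, \eqref{sem:let}, the new cell $x[\sigma(\tup{i})]$ enters $\dom(E')$, which is exactly what is needed for accesses lying syntactically under the definition of $x$ (clause~2) in the continuation to become clause-1 accesses; here I use that, by Invariant~\ref{invtic} combined with Invariant~\ref{inv2}, the index terms are simple and in scope, so they evaluate and match $\sigma(\tup{i})$. The communication case~\eqref{sem:output} is analogous for the received-message variable, using Lemma~\ref{lem:sem-ext} to see that the reductions inside $\reduce$ leave $E$ untouched.

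The heart of the argument is the $\defined$ conditions and the successful $\FIND$ branch. For~\eqref{sem:definedyes}, the condition reduces to $M$; every access in $M$ is, by the special $\defined$-condition form of the invariant, either covered by clauses 2--5 or a subterm of some $M'_k$ in $\defined(M'_1,\dots,M'_l)$; since the rule fires precisely when each $M'_k$ evaluates to a value, Lemma~\ref{lem:defsubterms} turns each such subterm access into a clause-1 access of the target $M$. For~\eqref{sem:findt1}/\eqref{sem:find1}, the continuation is the chosen branch, and an access $x[M_1,\dots,M_m]$ falling under clause~5 comes from a subterm $N$ of $M'_{j'k}$ with $N\{\tup{\vf_{j'}}[\tup{i}]/\tup{i_{j'}}\}=x[M_1,\dots,M_m]$. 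Because $v_0=(j',a'_1,\dots,a'_{m_{j'}})$ makes the condition true, $N$ evaluated under $\sigma[\tup{i_{j'}}\mapsto\tup{a'}]$ reaches a defined cell by Lemma~\ref{lem:defsubterms}; since $E'$ stores $\vf_{j'k}[\sigma(\tup{i})]=a'_k$, Lemma~\ref{lem:substreplindex} lets me replace $\tup{i_{j'}}$ by $\tup{\vf_{j'}}[\tup{i}]$ inside the evaluation, so $x[M_1,\dots,M_m]$ evaluates to the same defined cell and is a clause-1 access in the target. This conversion of clause~5 into clause~1 via the two substitution lemmas is the main obstacle, and it requires careful bookkeeping of exactly which indices $\sigma$ binds. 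The failure branches~\eqref{sem:findt2}/\eqref{sem:find2} keep the continuation as a subprocess, so the syntactic clauses carry over unchanged.

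Finally, the context rules and the ``moreover'' part. For~\eqref{sem:ctxt}/\eqref{sem:ctx}, accesses in $C[\cdot]$ outside the hole are unaffected, accesses inside the reduced subterm are handled by the induction hypothesis, and I note that an access covered only because it lies under a partially-evaluated context still satisfies the invariant in its term form. For the ``moreover'' claim, the assumptions that reduce a fresh configuration arise from the context rules (where the assumption \emph{is} the sub-reduction, so the claim is the induction hypothesis) and from the $\FIND$/$\GET$ rules, whose assumptions evaluate $D_j\wedge M_j$ under $\sigma[\tup{i_j}\mapsto\tup{a}]$. The special $\defined$-condition form of Invariant~\ref{inv2exec} is precisely designed so that accesses in $M_j$ that are subterms of the $\defined$ list are permitted, while accesses under clauses 3--5 remain so after extending $\sigma$, since those clauses refer only to syntactic position in $Q_0$ and not to $\sigma$. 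This establishes the invariant for each assumption's initial configuration and completes the case analysis.
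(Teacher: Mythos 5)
Your proof is correct and follows essentially the same route as the paper's: induction on the reduction derivation, with the definition-executing rules turning clause-2 accesses into clause-1 accesses, Lemma~\ref{lem:defsubterms} handling the $\defined$-condition case, Lemma~\ref{lem:defsubterms} together with Lemma~\ref{lem:substreplindex} converting clause-5 accesses into clause-1 accesses in the successful $\FIND$ branch, and the context rules and the ``moreover'' part dispatched by the induction hypothesis. One cosmetic remark: Corollary~\ref{cor:evalpos} is stated for configurations occurring in a trace of $Q_0$, so for this purely local preservation lemma it is cleaner to argue (as the paper implicitly does) that the target term or process is a subterm or subprocess of the source configuration, which directly preserves the syntactic clauses 2--5 without appealing to that corollary.
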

\begin{proofsk}
The proof proceeds by induction following the definition of $\red{p}{\ix}$.
We just sketch the main arguments.

If $x[M_1,\ldots,M_m]$ is in the second case of Invariant~\ref{inv2exec}, 
and we execute the definition
of $x[M_1,\ldots,M_m]$, then for all $j \leq m$, $M_j$
is a variable replication index and $x[\sigma(M_1),\ldots,\sigma(M_m)]$ is
added to $\dom(E)$ by rules \eqref{sem:newt}, \eqref{sem:lett}, \eqref{sem:findt1}, \eqref{sem:gett1},
\eqref{sem:new}, \eqref{sem:let}, \eqref{sem:find1}, \eqref{sem:output}, or \eqref{sem:get1} 
so it moves to the first case of Invariant~\ref{inv2exec}.

If $x[M_1,\ldots,M_m]$ is in the third case of Invariant~\ref{inv2exec}, 
and we execute the
corresponding $\FIND$, this access to $x$ simply disappears.

If $x[M_1,\ldots,M_m]$ is in the fourth case of Invariant~\ref{inv2exec}, 
and we execute the $\FIND$, then $x[M_1,\ab \ldots,\ab M_m]$ is a subterm of $M'_{jk}$
for some $j \leq m''$ and $k \leq l_j$. 
Therefore, the initial configuration of the assumption $E, \sigma[\tup{i_j} \mapsto \tup{a}], D_j \wedge M'_j, \tblcts, \evseq \red{p_{k'}}{\ix_{k'}}^* E'', \sigma', r_{k'}, \tblcts, \evseq$
with $D_j \wedge M'_j = \defined(M'_{j1}, \ab \ldots, 
\ab M'_{jl_j}) \wedge M'_j$ 
and
$\sigma' = \sigma[\tup{i_j} \mapsto \tup{a}]$
also satisfies Invariant~\ref{inv2exec}.
In case this assumption is reduced by~\eqref{sem:definedyes},
we have $E, \sigma', M'_{jk}, \tblcts, \evseq \red{1}{}^* E, \sigma', a_{jk}, \tblcts, \evseq$, that is,
$E, \sigma', M'_{jk} \evalterm a_{jk}$.
Therefore, by Lemma~\ref{lem:defsubterms}, for
all $j' \leq m$, $E, \sigma', M_{j'} \evalterm a_{j'}$ and 
$x[a_1,\ldots,a_m]$
is in $\dom(E)$.  So $x[M_1,\ldots,M_m]$ moves to the first
case of Invariant~\ref{inv2exec} in $E, \sigma', M'_j, \tblcts, \evseq$ after reduction 
by~\eqref{sem:definedyes}.

If $x[M_1,\ldots,M_m]$ is in the last case of
Invariant~\ref{inv2exec}, and we execute the $\FIND$ selecting branch
$j$ by~\eqref{sem:findt1} or~\eqref{sem:find1}, then there is a
subterm $N$ of $M'_{jk}$ for some $k \leq l_j$ such that
$N\{\tup{u_j}[\tup{i}] / \tup{i_j} \} = x[M_1,\ldots,M_m]$.
By hypothesis of~\eqref{sem:findt1} or~\eqref{sem:find1}, we have
$E, \ab \sigma[\tup{i_j} \mapsto \tup{a'}], \ab D_j \wedge M'_j, \ab \tblcts, \ab \evseq \red{p_{k'}}{\ix_{k'}}^* E, \ab \sigma', \ab r_{k'}, \ab \tblcts, \ab \evseq$ 
where
$r_{k'} = \true$,
$v_0 = (j, \tup{a'}) \in S$,
$D_j \wedge M'_j = \defined(M'_{j1}, \ab \ldots, 
\ab M'_{jl_j}) \wedge M'_j$,
and
$\sigma' = \sigma[\tup{i_j} \mapsto \tup{a'}]$.
This assumption cannot reduce by~\eqref{sem:definedno} because the result
is $\true$, so it reduces by~\eqref{sem:definedyes}.
Therefore, we have $E, \ab \sigma', \ab M'_{jk}, \ab \tblcts, \ab \evseq \red{1}{}^* E, \ab \sigma', \ab a, \ab \tblcts, \ab \evseq$ for some $a$,
that is, $E, \sigma', M'_{jk} \evalterm a$.
The term $N = x[N_1, \ldots, N_m]$ is a subterm of $M'_{jk}$.
Therefore, by Lemma~\ref{lem:defsubterms}, for
all $j' \leq m$, $E, \sigma', N_{j'} \evalterm a_{j'}$ and $x[a_1,\ldots,a_m]$
is in $\dom(E)$.  
Moreover, the resulting environment $E'$ is an extension of $E$, so
a fortiori for
all $j' \leq m$, $E', \sigma', N_{j'} \evalterm a_{j'}$ and $x[a_1,\ldots,a_m]$ is in $\dom(E')$.
We have for all $j' \leq m$, $M_{j'} = N_{j'} \{ \tup{u_j}[\tup{i}]/\tup{i_j} \}$,
$E'(\tup{u_j}[\tup{i}]) = \tup{a'}$,
and $\sigma'(\tup{i_j}) = \tup{a'}$, so by Lemma~\ref{lem:substreplindex},
for
all $j' \leq m$, $E', \sigma, M_{j'} \evalterm a_{j'}$ and $x[a_1,\ldots,a_m]$ is in $\dom(E')$.
So $x[M_1,\ldots,M_m]$ also moves to the first
case of Invariant~\ref{inv2exec}.

In all other cases, the situation remains unchanged.
For context rules, this is because, in the allowed contexts,
the hole is never under a $\defined$ condition.
\proofcomplete
\end{proofsk}

Therefore, if $Q_0$ satisfies Invariant~\ref{inv2}, then in traces of
$Q_0$, the test $x[a_1, \ldots, a_m] \in \dom(E)$ in rule~\eqref{sem:var} always
succeeds, except when the considered term occurs in a $\defined $
condition of a $\FIND $.

Indeed, consider an application of rule~\eqref{sem:var}, where the array access $x[M_1, \ldots, M_m]$ is not in a
$\defined $ condition of a $\FIND $.  Then, this array access is
not under any variable definition or $\FIND $, so it is
present in $\dom(E)$: for all $j \leq
m$, $E, \sigma, M_j \evalterm a_j$ and 
$x[a_1, \ldots, a_m] \in \dom(E)$.
Hence, the test $x[a_1, \ldots, a_m] \in \dom(E)$ succeeds.

\subsubsection{Typing}

In this section, we show that our type system is compatible
with the semantics of the calculus, that is, we define a notion
of typing for semantic configurations and show that typing
is preserved by reduction (subject reduction).
Finally, the property that semantic configurations are well-typed
shows that certain conditions in the semantics always hold.

We use the following definitions:
\begin{itemize}
\item $\tyenv \vdash E$ if and only if $E(x[a_1, \ldots,
a_m]) = a$ implies $\tyenv(x) = T_1 \times \ldots \times T_m \rightarrow T$
with for all $j \leq m$, $a_j \in T_j$ and $a \in T$.
\item We define $\tyenv \vdash P$, $\tyenv \vdash Q$,
and $\tyenv \vdash M : T$ as in Section~\ref{sec:typesystem}, with the
additional rules $\tyenv \vdash a : T$ if and only if
$a \in T$, $\tyenv \vdash \keventabort{(\pp,\tup{a}):e} : T$ for all $T$,
and $\tyenv \vdash \kw{abort}$. 
(These rules are useful to type evaluated terms and processes.)
\item $\tyenv\vdash (\sigma, P)$ if and only if 
$\tyenv[i_1 \mapsto [1,n_1], \ldots, i_m \mapsto [1,n_m]\, ] \vdash P$
and for all $j \leq m$, $\sigma(i_j) \in [1, n_j]$
for some $n_1, \ldots, n_m$, where $\dom(\sigma) = [i_1, \ldots, i_m ]$. 
The judgments  $\tyenv\vdash (\sigma, Q)$
and $\tyenv\vdash (\sigma, M): T$
are defined in the same way.

\item $\tyenv \vdash \tblcts$ if and only if $\tbl(a_1, \ldots,
a_m) \in \tblcts$ implies $\tbl : T_1 \times \ldots \times T_m$
with for all $j \leq m$, $a_j \in T_j$.
\item $\tyenv \vdash \evseq$ if and only if $(\pp, \tup{a}): e(a_1, \ldots,
a_m) \in \evseq$ implies $e : T_1 \times \ldots \times T_m$
with for all $j \leq m$, $a_j \in T_j$.
\item $\tyenv \vdash E, (\sigma,P)\restconfig$ if and only if
$\tyenv \vdash E$, $\tyenv \vdash (\sigma, P)$,
$\tyenv \vdash \tblcts$, $\tyenv \vdash \evseq$, and for all $(\sigma', Q)
\in \pset$, $\tyenv \vdash (\sigma', Q)$.
\item $\tyenv \vdash E, \pset, \cset$ if and only if
$\tyenv \vdash E$ and for all $(\sigma', Q)
\in \pset$, $\tyenv \vdash (\sigma', Q)$.
\item $\tyenv \vdash E, \sigma, M : T, \tblcts, \evseq$ if and only if
$\tyenv \vdash E$, $\tyenv \vdash (\sigma, M): T$, $\tyenv \vdash \tblcts$,
and $\tyenv \vdash \evseq$.

\end{itemize}

\begin{lemma}\label{lem:subredm}
If $\tyenv \vdash E, \sigma, M: T, \tblcts, \evseq$
and $E, \sigma, M, \tblcts, \evseq \red{p}{\ix} E', \sigma', M', \tblcts', \evseq'$, then $\tyenv \vdash E', \sigma', \ab M' : T, \tblcts', \evseq'$.

So, $\tyenv \vdash E, \sigma, M: T, \tblcts, \evseq$
and $E, \sigma, M, \tblcts, \evseq\red{p}{\ix}^* E', \sigma', a, \tblcts', \evseq'$, then $\tyenv \vdash E', \sigma', \ab a : T, \tblcts', \evseq'$.
\end{lemma}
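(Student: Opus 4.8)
The plan is to prove the single-step statement by induction on the derivation of $E, \sigma, M, \tblcts, \evseq \red{p}{\ix} E', \sigma', M', \tblcts', \evseq'$, proceeding by cases on the last rule used, and then to obtain the multi-step statement by a straightforward iteration. Recall that $\tyenv \vdash E, \sigma, M : T, \tblcts, \evseq$ unfolds into four independent conditions: $\tyenv \vdash E$, $\tyenv \vdash (\sigma, M) : T$, $\tyenv \vdash \tblcts$, and $\tyenv \vdash \evseq$. The core observation is that each semantic rule modifies at most one of the components $E$, $M$, $\tblcts$, $\evseq$ in a type-respecting way, so in each case I only need to re-establish the one condition that is affected, the others being inherited unchanged from the hypothesis. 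Since the typing rules of Figures~\ref{fig:type1} and~\ref{fig:type2} are syntax-directed, I can invert the typing of the redex to extract exactly the typing information needed for the contractum.

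First I would dispatch the local (non-context, non-$\kw{find}$/$\kw{get}$) rules. For \eqref{sem:replindex} the hypothesis supplies $\sigma(i) \in T$ through the side condition in $\tyenv \vdash (\sigma, i) : T$; for \eqref{sem:var}, inverting (TVar) and using $\tyenv \vdash E$ yields $E(x[\tup{a}]) \in T$; for \eqref{sem:fun}, the declaration $f : T_1 \times \cdots \times T_m \to T$ gives $f(a_1, \dots, a_m) \in T$. The conditionals \eqref{sem:ift1} and \eqref{sem:ift2} keep the common type $T$ of the two branches (inversion of (TIfT)). Rules \eqref{sem:insertt} and \eqref{sem:eventt} leave $M'$ with the type of the continuation $N$ but extend $\tblcts$, resp.\ $\evseq$; here I re-establish $\tyenv \vdash \tblcts'$, resp.\ $\tyenv \vdash \evseq'$, using that inversion of (TInsertT), resp.\ (TEventT), already guarantees $a_j \in T_j$ for the inserted tuple, resp.\ the event arguments. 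Finally, \eqref{sem:eventabortt} and the abort outcomes \eqref{sem:findte}, \eqref{sem:findt3}, \eqref{sem:gette}, \eqref{sem:gett3} all produce an abort event value $\keventabort{(\pp,\tup{a}):e}$, which is typeable at every type by the extra typing rule, so these cases are immediate once $\tyenv \vdash E'$ is checked for the $\kw{find}$/$\kw{get}$ variants.

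The genuinely delicate families of cases are those that extend $E$, the $\kw{find}$/$\kw{get}$ rules with multi-step premises, and the context rule. For \eqref{sem:newt}, \eqref{sem:lett}, \eqref{sem:findt1}, and \eqref{sem:gett1} I must show $\tyenv \vdash E'$ where $E' = E[x[\sigma(\tup{i})] \mapsto a]$: the stored value lies in the declared type of $x$ (from $a \in T$ in (TNewT)/(TLetT); from $a'_k \in [1,n_{j'k}]$ together with $\tyenv(\vf_{j'k}) = \cdots \to [1,n_{j'k}]$ for the $\kw{find}$ indices; and from $\tyenv \vdash \tblcts$ for the values read out of a table by $\kw{get}$, combined with $\tyenv(x_j) = \cdots \to T_j$ from (TGetT)), which is exactly the clause defining $\tyenv \vdash E'$. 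For \eqref{sem:findt1}, \eqref{sem:findt2}, \eqref{sem:findte}, \eqref{sem:gett1}, \eqref{sem:gett2}, \eqref{sem:gette} the premises are \emph{multi-step} condition evaluations $E, \sigma[\cdots], D_j \wedge M_j, \tblcts, \evseq \red{}{}^* \cdots$; each such premise decomposes into single $\red{}{}$ steps of strictly smaller derivation height, so an inner induction on the number of steps, feeding the outer induction hypothesis into each step, propagates typing through the whole premise (the conditions carry type $\bool$, and \eqref{sem:definedno}/\eqref{sem:definedyes} preserve $\bool$). This is precisely why the induction must be on derivation height: the single-step statement for $\kw{find}$ legitimately relies on preservation for its smaller, multi-step premises.

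The main obstacle I expect is the context rule \eqref{sem:ctxt} together with its abort companion \eqref{sem:ctxeventt}. Here I would use a small replacement lemma: for each term context $C$ of Figure~\ref{fig:termcontexts}, inverting the typing of $C[N]$ exhibits a type $T_N$ for the hole such that $\tyenv \vdash (\sigma'', N') : T_N$ entails $\tyenv \vdash (\sigma'', C[N']) : T$; the induction hypothesis supplies $\tyenv \vdash (\sigma', N') : T_N$, whence $\tyenv \vdash (\sigma', C[N']) : T$. Two subtleties must be handled. First, when $N'$ is (a context around) an abort event value, Lemma~\ref{lem:sem-ext} tells us $\sigma'$ may strictly extend $\sigma$, so I must check that the typing judgment, which extends $\tyenv$ by the entire domain of $\sigma'$, still holds; this is a routine weakening, since the extra index bindings are never used by the surrounding context and the abort value types at every type. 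Second, \eqref{sem:ctxeventt} collapses $C[\keventabort{\cdots}]$ to the bare abort value, which is again typeable at $T$. Once the single-step statement is in place, the multi-step statement follows by induction on the number of steps: in the reflexive case $M = a$, so $\tyenv \vdash (\sigma, a) : T$ gives $a \in T$, and each further step preserves the type by the single-step result.
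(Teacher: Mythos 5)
Your proof is correct and takes essentially the same approach as the paper: the paper's entire proof is a one-line sketch, ``by induction on the derivation of $E, \sigma, M, \tblcts, \evseq \red{p}{\ix} E', \sigma', M', \tblcts', \evseq'$'', which is precisely the induction you carry out. Your elaboration of the cases (the environment-extending rules, the multi-step premises of $\FIND$/$\GET$ discharged via the induction hypothesis, and the context rules with weakening for the extended index mapping) simply supplies the details the paper leaves implicit.
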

\begin{proofsk}
By induction on the derivation of $E, \sigma, M, \tblcts, \evseq \red{p}{\ix} E', \sigma', M', \tblcts', \evseq'$.
\proofcomplete
\end{proofsk}

\begin{lemma}\label{lem:subredq}
If $\tyenv \vdash E, \pset, \cset$ 
and $E, \pset, \cset \redq E', \pset', \cset'$, then
$\tyenv \vdash E', \pset', \cset'$.

So, if $\tyenv \vdash E, \pset, \cset$,
then $\tyenv \vdash \reduce(E, \pset, \cset)$.
\end{lemma}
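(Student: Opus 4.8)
The plan is to prove the single-step statement by subject reduction, i.e.\ by cases on the rule of $\redq$ used in $E, \pset, \cset \redq E', \pset', \cset'$ (rules~\eqref{sem:nil}, \eqref{sem:par}, \eqref{sem:repl}, \eqref{sem:newchannel}, and~\eqref{sem:input}), and then to obtain the $\reduce$ statement by induction on the length of the reduction to normal form. By Lemma~\ref{lem:sem-ext}, $\redq$ leaves the environment unchanged, so $E' = E$ and $\tyenv \vdash E'$ is immediate from $\tyenv \vdash E$. Moreover the judgment $\tyenv \vdash E, \pset, \cset$ does not constrain $\cset$. Hence in every case it suffices to check that each pair $(\sigma', Q)$ that is \emph{newly} added to $\pset'$ satisfies $\tyenv \vdash (\sigma', Q)$; the pairs carried over unchanged from $\pset$ are well-typed by hypothesis. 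Throughout I write $\tyenv'$ for the extension $\tyenv[i_1 \mapsto [1,n_1], \ldots, i_m \mapsto [1,n_m]]$ of $\tyenv$ determined by $\dom(\sigma) = [i_1, \ldots, i_m]$, so that $\tyenv \vdash (\sigma, Q)$ unfolds to $\tyenv' \vdash Q$ together with $\sigma(i_k) \in [1,n_k]$ for all $k$.

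The structural rules follow by inverting the matching typing rule. For~\eqref{sem:nil} nothing is added. For~\eqref{sem:par}, from $\tyenv \vdash (\sigma, Q_1 \parpop Q_2)$ rule (TPar) gives $\tyenv' \vdash Q_1$ and $\tyenv' \vdash Q_2$, hence $\tyenv \vdash (\sigma, Q_1)$ and $\tyenv \vdash (\sigma, Q_2)$ with the same $\sigma$. For~\eqref{sem:repl}, rule (TRepl) gives $\tyenv'[i \mapsto [1,n]] \vdash Q$; for each $a \in [1,n]$ the mapping sequence $\sigma[i \mapsto a]$ satisfies $\sigma[i \mapsto a](i) = a \in [1,n]$ and keeps the earlier ranges, so $\tyenv \vdash (\sigma[i \mapsto a], Q)$. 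For~\eqref{sem:newchannel}, rule (TNewChannel) gives $\tyenv' \vdash Q$; since the type system assigns no types to channel names and constrains only the index terms of a channel, the substitution $\{c'/c\}$ of one channel name for another preserves well-typedness, so $\tyenv' \vdash Q\{c'/c\}$ and thus $\tyenv \vdash (\sigma, Q\{c'/c\})$.

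The Input rule~\eqref{sem:input} is the only case that uses the dynamics of term evaluation, and is the main point. Here $\pset'$ replaces $(\sigma, C[M_j])$ by $(\sigma, C[M'_j])$, where $C$ is the input context $\cinput{c[a_1, \ldots, a_{j-1}, [\,], M_{j+1}, \ldots, M_l]}{x[\tup{i}]:T}; P$ and $E, \sigma, M_j, \emptyset, \emptyset \red{1}{} E, \sigma, M'_j, \emptyset, \emptyset$. Inverting rule (TIn) on $\tyenv \vdash (\sigma, C[M_j])$ yields a type $T'_j$ with $\tyenv' \vdash M_j : T'_j$, together with the typing of the remaining index terms, of the pattern $x[\tup{i}]:T$, and of $P$. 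To apply Lemma~\ref{lem:subredm}, I assemble the configuration judgment $\tyenv \vdash E, \sigma, M_j : T'_j, \emptyset, \emptyset$ from $\tyenv \vdash E$, from $\tyenv' \vdash M_j : T'_j$ and the range condition on $\sigma$, and from the fact that empty table contents and event sequences are trivially well-typed. Lemma~\ref{lem:subredm} then gives $\tyenv \vdash E, \sigma, M'_j : T'_j, \emptyset, \emptyset$, hence $\tyenv' \vdash M'_j : T'_j$. Since $M'_j$ has the same type $T'_j$ as $M_j$ and all other premises of (TIn) are unchanged, (TIn) reassembles $\tyenv' \vdash C[M'_j]$, i.e.\ $\tyenv \vdash (\sigma, C[M'_j])$, as required.

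Finally, for the $\reduce$ statement, recall that $\reduce(E, \pset, \cset)$ is the normal form of $E, \pset, \cset$ for $\redq$ and that $\redq$ is terminating, so $E, \pset, \cset \redq^* \reduce(E, \pset, \cset)$ in finitely many steps. A straightforward induction on the number of steps, using the single-step statement at each step, gives $\tyenv \vdash \reduce(E, \pset, \cset)$. I expect no real obstacle beyond the Input case; the only points worth stating explicitly are that channel renaming in~\eqref{sem:newchannel} is harmless because channel names are untyped, and that forming the configuration judgment needed by Lemma~\ref{lem:subredm} in the Input case is legitimate.
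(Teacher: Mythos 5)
Your proof is correct and follows essentially the same route as the paper's: a case analysis on the $\redq$ rule, handling replication by inverting \eqref{typ:repl} and extending the mapping sequence, handling input via Lemma~\ref{lem:subredm}, and concluding the $\reduce$ statement by induction on the (terminating) reduction sequence. The paper's proof is only a sketch that details the replication and input cases; your write-up fills in the remaining structural cases (including the observation that channel renaming in \eqref{sem:newchannel} is harmless because channel names are untyped) in the way the paper leaves implicit.
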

\begin{proofsk}
By cases on the derivation of $E, \pset, \cset \redq E', \pset', \cset'$.
In the case of the replication, we have
$\tyenv \vdash (\sigma, \repl{i}{n} Q)$, 
so $\tyenv[i_1 \mapsto [1,n_1], \ldots, i_m \mapsto [1,n_m]\, ] \vdash \repl{i}{n} Q$
and for all $j \leq m$, $\sigma(i_j) \in [1, n_j]$
for some $n_1, \ldots, n_m$, where $\dom(\sigma) = [ i_1, \ldots, i_m ]$.
By~\eqref{typ:repl}, $\tyenv[i_1 \mapsto [1,n_1], \ldots, i_m \mapsto [1,n_m], i \mapsto [1, n]\, ] \vdash Q$, so $\tyenv \vdash (\sigma[i \mapsto a], Q)$ for $a \in [1, n]$.
In the case of the input, we use Lemma~\ref{lem:subredm}.
\proofcomplete
\end{proofsk}

\begin{lemma}\label{lem:inittype}
If $\tyenv \vdash Q_0$, then $\tyenv \vdash \initconfig(Q_0)$.
\end{lemma}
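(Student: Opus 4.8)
The plan is to unfold both the definition of $\initconfig(Q_0)$ and the definition of the configuration typing judgment, and then verify each component in turn. Recall that
\[\initconfig(Q_0) = \emptyset, (\sigma_0, \pptag\coutput{\startch}{}), \pset, \cset, \emptyset, \emptyset\]
where $\emptyset, \pset, \cset = \reduce(\emptyset, \{ (\sigma_0, Q_0) \}, \fc(Q_0))$ and $\sigma_0$ is the empty mapping sequence. By the definition of $\tyenv \vdash E, (\sigma,P)\restconfig$, the goal $\tyenv \vdash \initconfig(Q_0)$ amounts to the five clauses: $\tyenv \vdash \emptyset$ for the environment, $\tyenv \vdash (\sigma_0, \coutput{\startch}{})$, $\tyenv \vdash \emptyset$ for the tables, $\tyenv \vdash \emptyset$ for the events, and $\tyenv \vdash (\sigma', Q)$ for every $(\sigma', Q) \in \pset$.

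First I would dispatch the trivial clauses. The empty environment, the empty table contents, and the empty event sequence each satisfy their respective judgments vacuously, since in each case there is no element on which a constraint is imposed.

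Next I would type the scheduled output process. Here $\coutput{\startch}{}$ abbreviates an output of the (empty) tuple on the start channel $\startch$, followed by the nil process $0$ (the omitted trailing $0$). The channel $\startch$ carries no index terms, so the premise $\forall j \leq l,\ \tyenv \vdash M_j : T'_j$ of rule (TOut) is vacuous; the emitted tuple is well-typed, and $0$ is well-typed by (TNil), so (TOut) applies. Since $\sigma_0$ is empty, $\dom(\sigma_0)$ is empty and the side condition on replication indices in $\tyenv \vdash (\sigma_0, \coutput{\startch}{})$ is vacuous, giving $\tyenv \vdash (\sigma_0, \coutput{\startch}{})$.

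Finally, for the input processes in $\pset$, I would invoke Lemma~\ref{lem:subredq}. The starting configuration $\emptyset, \{ (\sigma_0, Q_0) \}, \fc(Q_0)$ is well-typed: the empty environment types, and $\tyenv \vdash (\sigma_0, Q_0)$ holds because $\sigma_0$ is empty (so its index condition is vacuous) and $\tyenv \vdash Q_0$ is the hypothesis. Lemma~\ref{lem:subredq} then yields $\tyenv \vdash \reduce(\emptyset, \{ (\sigma_0, Q_0) \}, \fc(Q_0))$, i.e. $\tyenv \vdash \emptyset, \pset, \cset$, which by definition gives $\tyenv \vdash (\sigma', Q)$ for every $(\sigma', Q) \in \pset$. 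Collecting the five clauses establishes $\tyenv \vdash \initconfig(Q_0)$. No step poses a real obstacle; the only points demanding care are recognizing that $\coutput{\startch}{}$ really stands for an output followed by $0$ and that the empty mapping sequence makes the index side conditions trivial, so that the whole argument reduces to subject reduction for $\redq$ (Lemma~\ref{lem:subredq}) together with a one-line check of the start output.
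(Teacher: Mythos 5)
Your proof is correct and follows exactly the route the paper takes: the paper's proof is the one-line sketch ``By Lemma~\ref{lem:subredq} and the previous definitions,'' and your argument simply spells out those definitional checks (the vacuous clauses for the empty environment, tables, and events, the (TOut)/(TNil) typing of $\coutput{\startch}{}$ under the empty $\sigma_0$, and the application of Lemma~\ref{lem:subredq} to $\reduce(\emptyset, \{ (\sigma_0, Q_0) \}, \fc(Q_0))$).
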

\begin{proofsk}
By Lemma~\ref{lem:subredq} and the previous definitions.
\proofcomplete
\end{proofsk}

\begin{lemma}[Subject reduction]\label{lem:subred}
If $\tyenv \vdash E, (\sigma, P)\restconfig$ and 
$E, \ab (\sigma, P)\restconfig \ab \red{p}{\ix} E', \ab (\sigma', P'), \ab \pset', \ab \cset', \ab \tblcts', \ab \evseq'$,
then $\tyenv \vdash E', (\sigma', P'), \pset', \cset', \tblcts', \evseq'$.
\end{lemma}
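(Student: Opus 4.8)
The plan is to proceed by cases on the last semantic rule used to derive the step $E, (\sigma, P)\restconfig \red{p}{\ix} E', (\sigma', P'), \pset', \cset', \tblcts', \evseq'$, i.e.\ one of the rules of Figures~\ref{fig:sem3} and~\ref{fig:sem3bis}. In each case I first unfold the hypothesis $\tyenv \vdash E, (\sigma, P)\restconfig$ into its components $\tyenv \vdash E$, $\tyenv \vdash (\sigma, P)$, $\tyenv \vdash \tblcts$, $\tyenv \vdash \evseq$, and $\tyenv \vdash (\sigma', Q)$ for all $(\sigma', Q) \in \pset$, and then invert the typing rule of Figures~\ref{fig:type1} and~\ref{fig:type2} that applies to $(\sigma, P)$, reading off the typing of the subprocesses and the relevant types. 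The two auxiliary results, Lemma~\ref{lem:subredm} (subject reduction for term evaluation) and Lemma~\ref{lem:subredq} (subject reduction for $\redq$, hence for $\reduce$), will discharge the cases that involve term evaluation or input-process reduction, so no induction on the process reduction itself is needed: it is a pure case analysis on the concluding rule.

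The rules that merely pass to a syntactic subprocess while possibly extending the environment, namely \eqref{sem:new}, \eqref{sem:let}, \eqref{sem:if1}, \eqref{sem:if2}, \eqref{sem:find1}, \eqref{sem:find2}, \eqref{sem:get1}, \eqref{sem:get2}, \eqref{sem:event}, \eqref{sem:insert}, are routine. The continuation is well-typed by the inverted typing rule, and each binding added to $E$ associates to a cell ($x[\sigma(\tup{i})]$, a found index $\vf_{j'k}[\sigma(\tup{i})]$, or the table-element variables) a value lying in the declared type of that variable: for \eqref{sem:new} and \eqref{sem:let} because the premise gives $a \in T$ while the typing of $x[\tup{i}]:T$ fixes the codomain of $\tyenv(x)$; for \eqref{sem:find1} because the found indices range over the intervals $[1, n_{j'k}]$ matching $\tyenv(\vf_{j'k})$; and for \eqref{sem:get1} because $\tyenv \vdash \tblcts$ forces the retrieved tuple to be of the table's declared type. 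Hence $\tyenv \vdash E'$. The rules \eqref{sem:insert} and \eqref{sem:event} additionally require $\tyenv \vdash \tblcts'$, resp.\ $\tyenv \vdash \evseq'$, which follow by inverting TInsert, resp.\ TEvent. The abort-producing rules \eqref{sem:finde}, \eqref{sem:find3}, \eqref{sem:gete}, \eqref{sem:get3}, and \eqref{sem:eventabort} reduce to $\kw{abort}$, which is well-typed by the added rule $\tyenv \vdash \kw{abort}$, and they append an argument-free event (a Shoup or non-unique event, $e : ()$ by Invariant~\ref{inv:event}), so $\tyenv \vdash \evseq'$ is immediate.

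For the context rules \eqref{sem:ctx} and \eqref{sem:ctxevent}, I invert the typing of $(\sigma, C[N])$ to read off the type $T_N$ of the term $N$ filling the hole of the process context $C$ (Figure~\ref{fig:proccontexts}), apply Lemma~\ref{lem:subredm} to the term reduction in the premise to obtain $\tyenv \vdash E', \sigma', N' : T_N, \tblcts', \evseq'$, and replug: since $N'$ has the same type $T_N$, the context stays well-typed, giving $\tyenv \vdash (\sigma', C[N'])$. The only delicate point is that $\sigma'$ may properly extend $\sigma$ with $\kw{find}$-condition indices, in which case (by Lemma~\ref{lem:sem-ext}) $N'$ is an abort-event value $\keventabort{(\pp,\tup{a}):e}$, typeable at any type; the added indices all fall within their intervals, so a mild weakening of the replication-index environment yields $\tyenv \vdash (\sigma', C[N'])$. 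For \eqref{sem:ctxevent} the process becomes $\kw{abort}$ and the appended event is again argument-free.

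The main obstacle is the communication rule \eqref{sem:output}, where both auxiliary lemmas must be combined. Inverting TOut on $(\sigma, \coutput{c[a_1,\ldots,a_l]}{b}.Q'')$ gives $\tyenv \vdash (\sigma, Q'')$; applying Lemma~\ref{lem:subredq} to $\reduce(E, \{(\sigma, Q'')\}, \cset)$ shows that every input process produced in $\pset'$ is well-typed, so all processes of $\pset \multiunion \pset' \setminus \{(\sigma', Q_0)\}$ are well-typed. For the scheduled process, the chosen input $Q_0 = \cinput{c[a_1,\ldots,a_l]}{x[\tup{i}]:T}.P$ satisfies $\tyenv \vdash (\sigma', Q_0)$, so inverting TIn yields $\tyenv \vdash (\sigma', P)$ and fixes the codomain of $\tyenv(x)$ to be $T$; since the membership condition defining the multiset $S$ guarantees $b \in T$ for the selected input, extending the environment by $x[\sigma'(\tup{i})] \mapsto b$ preserves $\tyenv \vdash E'$. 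Finally $\tblcts$ and $\evseq$ are unchanged and hence still well-typed, and assembling these facts gives $\tyenv \vdash E', (\sigma', P), \pset', \cset', \tblcts, \evseq$, completing this case and the proof.
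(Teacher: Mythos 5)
Your proposal is correct and follows essentially the same route as the paper: the paper's own proof is a one-line sketch stating exactly this case analysis on the concluding semantic rule, discharged by Lemma~\ref{lem:subredm} for term evaluation and Lemma~\ref{lem:subredq} for input-process reduction. Your write-up simply fills in the details (inversion of the typing rules, environment extensions, and the combination of both auxiliary lemmas in the \eqref{sem:output} case) that the paper leaves implicit.
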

\begin{proofsk}
By cases on the derivation of
$E, P\restconfig \red{p}{\ix} E', \ab (\sigma', P'), \ab \pset', \ab \cset', \ab \tblcts', \ab \evseq'$,
using Lemmas~\ref{lem:subredm} and~\ref{lem:subredq}.
\proofcomplete
\end{proofsk}

Moreover, if the rules that define
$E,\sigma,M, \tblcts, \evseq \red{p}{\ix} E',\sigma',M', \tblcts', \evseq'$ (resp. 
$E,(\sigma,P),\ab \pset,\ab\cset,\ab\tblcts,\ab\evseq \red{p}{\ix} E',(\sigma',P'),\pset',\cset',\tblcts',\evseq'$)
require as assumption
$E'',\sigma'',M'', \tblcts'', \evseq'' \red{p}{\ix} \ldots$
and the initial configuration is well-typed $\tyenv \vdash E,\sigma,M : T, \tblcts, \evseq$
(resp. $\tyenv \vdash E,(\sigma,P)\restconfig$) 
then so is the initial configuration of the assumption, that is, 
there exists $T''$ such that $\tyenv \vdash E'',\sigma'',M'' : T'', \tblcts'', \evseq''$.
\bbnote{should I add $\defined(M''_1, \ab \ldots, \ab M''_m) \wedge M''$? More rigorous, but not sure that it would be clearer.}%

As an immediate consequence of 
Lemmas~\ref{lem:inittype}, \ref{lem:subred}, and~\ref{lem:subredm}
and the observation above, we obtain:
if $Q_0$ satisfies Invariant~\ref{inv3}, then in traces of
$Q_0$, the tests $a \in T$ in rules~\eqref{sem:lett} and~\eqref{sem:let} and $\forall j \leq m, a_j \in
T_j$ in rule \eqref{sem:fun} always succeed.
Moreover, in rules \eqref{sem:newt} and \eqref{sem:new}, we always have that
$T$ is \emph{fixed}, \emph{bounded}, or \emph{nonuniform}.
In rules~\eqref{sem:ift1}, \eqref{sem:ift2}, \eqref{sem:if1}, and~\eqref{sem:if2},
the condition is in $\bool = \{ \false, \true\}$ (when it is a value, not an abort event value), so the condition $a \neq \true$ is equivalent to $a = \false$.
In the rules for $\kw{find}$, we have $r_k \in \{ \false, \true\}$ when $r_k$
is a value (not an abort event value).
In the rules~\eqref{sem:insertt}, \eqref{sem:gette}, \eqref{sem:gett1}, \eqref{sem:gett2}, \eqref{sem:insert}, \eqref{sem:gete}, \eqref{sem:get1}, and~\eqref{sem:get2}, we have $a_j \in T_j$ for $j \leq l$,
where $\tbl: T_1 \times \ldots \times T_l$.
In the rules~\eqref{sem:eventt} and~\eqref{sem:event}, we have  $a_j \in T_j$ for $j \leq l$,
where $e: T_1 \times \ldots \times T_l$.

\subsection{Subset for the Initial Game}


\begin{figure}[tp]
\begin{defn}
\categ{M, N}{terms}\\ 
\entry{i}{replication index}\\
\entry{x[M_1, \ldots, M_m]}{variable access}\\ 
\entry{f(M_1, \ldots, M_m)}{function application}\\
\entry{\Res{x}{T}; N}{random number}\\
\entry{\assign{\pat}{M}{N}\ \ELSE N'}{assignment (pattern-matching)}\\
\entry{\assign{x:T}{M}{N}}{assignment}\\
\entry{\bguard{M}{N}{N'}}{conditional}\\
\entry{\kw{find}\uniqueopt\ (\mathop\bigoplus\nolimits_{j=1}^m
\vf_{j1}[\tup{i}] = i_{j1} \leq n_{j1}, \ldots, \vf_{jm_j}[\tup{i}] = i_{jm_j} \leq n_{jm_j}\ \kw{suchthat}}{}\\ 
\entry{\quad\kw{defined}(M_{j1}, \ldots, M_{jl_j}) \fand M'_j\ \kw{then}\ N_j)\ \ELSE N'}{array lookup}\\
\entry{\INSERT\ \tbl(M_1,\ldots,M_l);N}{insert in table}\\
\entry{\GET\uniqueopt\ \tbl(\pat_1,\ldots,\pat_l)\ \SUCHTHAT\ M\ \IN\ N\ \ELSE N'}{get from table}\\
\entry{\kevent{e(M_1, \ldots, M_l)}; N}{event}\\
\entry{\keventabort{e}}{event $e$ and abort}
\end{defn}
\begin{defn}
\categ{\pat}{pattern}\\
\entry{x:T}{variable}\\
\entry{f(\pat_1, \ldots, \pat_m)}{function application}\\
\entry{{=}M}{comparison with a term}
\end{defn}
\begin{defn}
\categ{Q}{input process}\\
\entry{0}{nil}\\
\entry{Q \parpop Q'}{parallel composition}\\
\entry{\repl{i}{n}{Q}}{replication $n$ times}\\
\entry{\cinput{c}{\pat}; P}{input}
\end{defn}
\begin{defn}
\categ{P}{output process}\\
\entry{\coutput{c}{N}; Q}{output}\\
\entry{\Res{x}{T}; P}{random number}\\
\entry{\assign{\pat}{M}{P}\ \ELSE P'}{assignment}\\
\entry{\bguard{M}{P}{P'}}{conditional}\\
\entry{\FIND\uniqueopt \ (\mathop\bigoplus\nolimits_{j=1}^m \vf_{j1}[\tup{i}] = i_{j1} \leq n_{j1}, \ldots, \vf_{jm_j}[\tup{i}] = i_{jm_j} \leq n_{jm_j}\ \SUCHTHAT}{}\\
\entry{\quad \defined (M_{j1}, \ldots, M_{jl_j}) \fand M_j \THEN P_j)\ \ELSE P}{array lookup}\\
\entry{\INSERT\ \tbl(M_1,\ldots,M_l);P}{insert in table}\\
\entry{\GET\uniqueopt\ \tbl(\pat_1,\ldots,\pat_l)\ \SUCHTHAT\ M\ \IN\ P\ \ELSE P'}{get from table}\\
\entry{\kevent{e(M_1, \ldots, M_l)}; P}{event}\\
\entry{\keventabort{e}}{event $e$ and abort}\\
\entry{\kw{yield}}{end}
\end{defn}
\caption{Subset of the calculus for the initial game}\label{fig:syntax-initial}
\end{figure}

The variables are always defined with the current replication indices
$\tup{i}$, so we omit them, writing $x$ for $x[\tup{i}]$; they are
implicitly added by CryptoVerif. When a variable is used with the
current replication indices at its definition, we can also omit the indices.

Along similar lines, the channels $c$ are used without indices,
and the current replication indices are implicitly added by CryptoVerif.
This allows the adversary the select to which copy of processes 
it sends messages. 
The construct $\NEWCHANNEL$ cannot occur in games manipulated
by CryptoVerif. It is used only inside proofs.
The grammar of the resulting calculus is
summarized in Figure~\ref{fig:syntax-initial}.

We recommend using the constructs $\GET$ and $\INSERT$ to manage key
tables, instead of $\kw{find}$ or $\kw{if}$ with $\defined$
conditions.  When no $\kw{find}$ nor $\kw{if}$ with $\defined$
conditions occurs in the game, by Invariant~\ref{inv2}, all accesses
to variable $x$ are of the form $x[\tup{i}]$ where $\tup{i}$ are the
current replication indices at the definition of $x$.  Such accesses
are simply abbreviated as $x$. Variables can then be considered as
ordinary variables instead of arrays, since we only access the array
cell at the current replication indices.
This choice has several other advantages:
\begin{itemize}

\item Tables with $\GET$/$\INSERT$ are closer to lists usually used by
  cryptographers than $\kw{find}$, they should be easier to understand
  for the user.

\item Tables are supported by the symbolic protocol verifier ProVerif
  while $\kw{find}$ is not.  Similarly, ProVerif does not support
  channels with indices.  So avoiding $\kw{find}$ and channels with indices
  allows us to have a language compatible with ProVerif.

\item Our compiler that translates CryptoVerif specifications
into OCaml implementations~\cite{Cade12} does not support $\kw{find}$,
because tables with $\GET$/$\INSERT$ are also much easier to implement
than $\kw{find}$.


\end{itemize}

We can define processes by macros:
  $\kw{let}\ \pid(x_1:T_1, \ldots, x_m:T_m) = P$
or
  $\kw{let}\ \qid(x_1:T_1, \ab \ldots, \ab x_m:T_m) = Q$.
If a process $\pid(M_1, \ldots, M_m)$ occurs in the initial game,
CryptoVerif verifies that $M_1, \ldots, M_m$ are of types $T_1, \ldots, T_m$
respectively, and replaces $\pid(M_1, \ldots, M_m)$ with the expansion
$P\{M_1/x_1, \ldots, M_m/x_m\}$.

We can also define functions by macros:
  $\kw{letfun}\ f(x_1:T_1, \ldots, x_m:T_m) = M$.
If a term $f(M_1, \ab \ldots, \ab M_m)$ occurs in the initial game,
CryptoVerif verifies that $M_1, \ab \ldots, \ab M_m$ are of types $T_1, \ab \ldots, \ab T_m$
respectively, and replaces $f(M_1, \ab \ldots, \ab M_m)$ with the expansion
$M\{M_1/x_1, \ab \ldots, \ab M_m/x_m\}$.

In the initial game, all bound variables with several incompatible definitions (different indices, different types, or variables defined in the same branch of a test) as well as variables declared without an explicit type
are not allowed to occur in $V$ nor in $\defined$ conditions of $\kw{find}$ or $\kw{if}$ and are renamed to distinct
names, so that Invariant~\ref{inv1} is satisfied for these variables.
The condition on input channels in Invariant~\ref{invtic} is always satisfied by
definition of the language. 
CryptoVerif checks the rest of the invariants.

\subsection{Subsets used inside the Sequence of Games}\label{sec:subsets}

\begin{figure}[tp]
\begin{defn}
\categ{M, N}{terms}\\ 
\entry{i}{replication index}\\
\entry{x[M_1, \ldots, M_m]}{variable access}\\ 
\entry{f(M_1, \ldots, M_m)}{function application}
\end{defn}
\begin{defn}
\categ{FC}{find condition}\\
\entry{M}{term}\\
\entry{\Res{x[\tup{i}]}{T}; FC}{random number}\\
\entry{\assign{\pat}{M}{FC}\ \ELSE FC'}{assignment (pattern-matching)}\\
\entry{\assign{x[\tup{i}]:T}{M}{N}}{assignment}\\
\entry{\bguard{M}{FC}{FC'}}{conditional}\\
\entry{\kw{find}\uniqueopt\ (\mathop\bigoplus\nolimits_{j=1}^m
\vf_{j1}[\tup{i}] = i_{j1} \leq n_{j1}, \ldots, \vf_{jm_j}[\tup{i}] = i_{jm_j} \leq n_{jm_j}\ \kw{suchthat}}{}\\ 
\entry{\quad\kw{defined}(M_{j1}, \ldots, M_{jl_j}) \fand FC'_j\ \kw{then}\ FC_j)\ \ELSE FC''}{array lookup}\\
\entry{\keventabort{e}}{event $e$ and abort}
\end{defn}
\begin{defn}
\categ{\pat}{pattern}\\
\entry{x[\tup{i}]:T}{variable}\\
\entry{f(\pat_1, \ldots, \pat_m)}{function application}\\
\entry{{=}M}{comparison with a term}
\end{defn}
\begin{defn}
\categ{Q}{input process}\\
\entry{0}{nil}\\
\entry{Q \parpop Q'}{parallel composition}\\
\entry{\repl{i}{n}{Q}}{replication $n$ times}\\
\entry{\cinput{c[M_1, \ldots, M_l]}{\pat}; P}{input}
\end{defn}
\begin{defn}
\categ{P}{output process}\\
\entry{\coutput{c[M_1, \ldots, M_l]}{N}; Q}{output}\\
\entry{\Res{x[\tup{i}]}{T}; P}{random number}\\
\entry{\assign{\pat}{M}{P}\ \ELSE P'}{assignment}\\
\entry{\bguard{M}{P}{P'}}{conditional}\\
\entry{\FIND\uniqueopt \ (\mathop\bigoplus\nolimits_{j=1}^m \vf_{j1}[\tup{i}] = i_{j1} \leq n_{j1}, \ldots, \vf_{jm_j}[\tup{i}] = i_{jm_j} \leq n_{jm_j}\ \SUCHTHAT}{}\\
\entry{\quad \defined (M_{j1}, \ldots, M_{jl_j}) \fand FC_j \THEN P_j)\ \ELSE P}{array lookup}\\
\entry{\kevent{e(M_1, \ldots, M_l)}; P}{event}\\
\entry{\keventabort{e}}{event $e$ and abort}\\
\entry{\kw{yield}}{end}
\end{defn}
\caption{Subset after game expansion}\label{fig:syntax-expanded}
\end{figure}

During the computation of the sequence of games, several properties
are used by CryptoVerif, either required by some game transformations
or guaranteed by others. We summarize them in this section.

\begin{property}\label{prop:nointervaltypes}
No function returns values of interval types.
The types of values chosen by $\Res{x[\tup{i}]}{T}$ are not interval types.
The type $T$ of the sent message in the~\eqref{typ:out} rule and
of the receiving pattern in the~\eqref{typ:in} rule are not 
interval types.
\end{property}

This property is satisfied by all games manipulated by CryptoVerif,
but not by processes that model the adversary. Combined with 
Invariants~\ref{inv3}, \ref{inv2}, and~\ref{invtic}, it implies that
the terms of variable accesses $x[M_1, \ldots, M_m]$
contain only replication indices and variables.
(Tuples, events, and tables can take interval types as arguments.
The constraint on inputs and outputs could probably be relaxed.)

For processes that model security assumptions on primitives, 
the receiving variable can be of an interval type. (This is used for
instance to specify the computational Diffie-Hellman assumption; see
Section~\ref{sec:primdef}.)

\begin{property}\label{prop:noreschan}
The $\Reschan{c}$ construct does not appear in games.
\end{property}

\begin{property}\label{prop:channelindices}
The indices of channels are always the current replication indices.
\end{property}
These properties are also satisfied by all games manipulated by CryptoVerif,
but not by processes that model the adversary.

\begin{property}\label{prop:notables}
The constructs $\INSERT$ and $\GET$ do not occur in the game.
\end{property}
This property is not valid in the initial game, but it is in all
other games of the sequence produced by CryptoVerif. The
very first game transformation applied by CryptoVerif,
\rn{expand\_tables}, encodes
$\INSERT$ and $\GET$ using $\kw{find}$ (see Section~\ref{sec:transftables}). 
The constructs $\INSERT$ and 
$\GET$ are never introduced by subsequent game transformations,
so this property remains valid in the rest of the sequence.

\begin{property}\label{prop:autosarename}
The variables defined in conditions of $\kw{find}$ have pairwise
distinct names.
\end{property}
This property is enforced by the transformation \rn{auto\_SArename}
(see Section~\ref{sec:transftables}) by renaming variables defined
in conditions of $\kw{find}$ to distinct names. (This is easy since
these variables do not have array accesses by Invariant~\ref{invfc}.)
Property~\ref{prop:autosarename} is required as a precondition by many game transformations,
and may be broken by game transformations that duplicate code.
Therefore, we apply \rn{auto\_SArename} after these game transformations.

\begin{property}\label{prop:expand}
The terms $M$ are simple except for conditions of $\kw{find}$.
\end{property}
The grammar of the language taking into account this property as well 
as Properties~\ref{prop:noreschan} and~\ref{prop:notables} is shown 
in Figure~\ref{fig:syntax-expanded}. By Invariant~\ref{invtfc},
$\kw{event}$ does not occur in conditions of $\kw{find}$,
so $\kw{event}$ never occurs as term.
Property~\ref{prop:expand} is enforced by the transformation \rn{expand}
(see Section~\ref{sec:transfexpand}) by converting other terms
into processes. This transformation is applied on the initial
game after \rn{expand\_tables}. Property~\ref{prop:expand} is broken by the
cryptographic transformation of Section~\ref{sec:primdef}, so
by default \rn{expand} is called again after this transformation.
Many game transformations require Property~\ref{prop:expand} as a precondition.

\subsection{Security Properties, Indistinguishability}

A context is a process containing a hole $[\,]$.
An evaluation context $C$ is a context built from
$[\,]$, $\Reschan{c}; C$, 
$Q \parpop C$, and $C \parpop Q$.
We use an evaluation context to represent the adversary.
We denote by $C[Q]$ the process obtained by replacing
the hole $[\,]$ in the context $C$ with the process $Q$.

We write $\cevent(D)$ for the set of events that occur in 
the distinguisher $D$ (i.e. are used by the distinguisher $D$).
We write $\cevent(Q)$ for the set of events that occur
in the process $Q$. We use similar notations for output
processes, contexts, \dots
We write $\cevent(Q,Q')$ for $\cevent(Q) \cup \cevent(Q')$.

\begin{definition}[Indistinguishability]\label{def:indist}
Let $Q$ and $Q'$ be two processes, $V$ a set of variables, and $\evset$
a set of events. 
Assume that $Q$ and $Q'$ satisfy Invariants~\ref{inv1}
to~\ref{inv3} with public variables $V$,
and the variables of $V$ are defined in $Q$ and $Q'$,
with the same types.

\bb{TO DO I should not require that the variables of $V$ are defined in $Q$ and $Q'$.
I should rather have $V$ contain variables with their types, 
to be able to type the context correctly even when $Q$ and/or $Q'$ do
not define some variables in $V$.
(It is possible that a variable is defined in a game, but its
definition is actually never executed, so CryptoVerif can remove it
in a future game, so we can end up with a variable in $V$ that is not
defined in the game.)
Then the Invariants~\ref{inv1} to~\ref{inv3} can be checked separately
on $C$ and on $Q$, using only $V$ in both checks. Since $Q$ is already
supposed to satisfy them, it is enough to check them on $C[0]$.
(For Invariant~\ref{inv1}, note that $\vardef(C) \cap \vardef(Q) = \emptyset$
because $\vardef(C) \cap \vardef(Q) 
\subseteq \vardef(C) \cap \fvar(Q) = \emptyset$ as mentioned below.)
Then the only conditions that depend on $Q$ are 
$\fvar(C) \cap \fvar(Q) \subseteq V$ and
$C$ and $Q$ do not use any common table.}%

An evaluation context $C$ is said to be \emph{acceptable} for $Q$
with public variables $V$ if and only if 
$\fvar(C) \cap \fvar(Q) \subseteq V$,
$\vardef(C) \cap V = \emptyset$,
$C$ and $Q$ do not use any common table,
and $C[Q]$ satisfies Invariants~\ref{inv1} to~\ref{inv3} with public 
variables $V$. 

We write $Q \approx^{V,\evset}_p Q'$ when,
for all evaluation contexts $C$ acceptable for $Q$ and $Q'$ with public variables $V$
and all distinguishers $D$ that run in time at most $t_D$ and such that 
$\cevent(D) \cap \cevent(Q,Q') \subseteq \evset$,
$|\Pr[C[Q] : D] - \Pr[C[Q'] : D]| \leq p(C,t_D)$.
\end{definition}
This definition formalizes that the probability that
algorithms $C$ and $D$ distinguish the games $Q$ and $Q'$
is at most $p(C,t_D)$. 
The probability $p$ typically depends on the runtime of $C$ and $D$,
but may also depend on other parameters, such as the number of 
queries to each oracle made by $C$. That is why $p$ takes as
arguments the whole algorithm $C$ and the runtime of $D$.
More specifically:
\begin{property}\label{prop:prob}
  All probabilities computed by CryptoVerif are built from the following
  components by mathematical operations:
  \begin{itemize}
  \item the runtime of the context;
  \item the maximum number of outputs made by the context on each channel;
  \item the value of replication bounds, which is also determined from the number of outputs performed by the context on channels;
  \item the maximum length of the bitstring represented by a term, in particular a variable; this length may depend on messages output by the context; it is used only for unbounded types; for bounded types, we use the maximum length of the type instead;
  \item the maximum length of bitstring of a type $T$;
  \item the length of the result of a function, expressed as a function of the length of its arguments;
  \item the time of some action, expressed as a function of other elements of the formula;
  \item probability functions, used in particular to express the probability of breaking each primitive from other elements of the formula;
  \item the cardinal $|T|$ of a type $T$;
  \item the probability of collision between two random values of a type $T$, or between a random value and a value independent from that random value; these probabilities depend on the default distribution on the type $D_T$;
  \item $\epsilon_T$, the distance between the default distribution $D_T$ of type $T$ and the uniform distribution;
  \item $\epsilon_{\FIND}$, where the distance between $D_{\FIND}(S)$ and the uniform distribution is  $\epsilon_{\FIND}/2$.
  \end{itemize}
\end{property}
Among the elements above, the first four depend on the context.  In
particular, probability formulas output by CryptoVerif do not depend
on the variable, table, event names in the context. They also do not
depend on the values of variables, but may depend on their length. For
variables of bounded types, the probabilities do not depend at all on
the values.

The set of events $\evset$ corresponds to events that the 
adversary is allowed to observe. When $\evset = \cevent(Q,Q')$,
we omit it and write $Q \approx^V_p Q'$.

The unusual requirement on variables of $C$ comes from the presence of
arrays and of the associated $\FIND $ construct which gives $C$
direct access to variables of $Q$ and $Q'$: the context $C$ is allowed
to access variables of $Q$ and $Q'$ only when they are in $V$. 
(In more standard settings, the calculus does not have
constructs that allow the context to access variables of $Q$ and
$Q'$.)
When $V$ is empty, we omit it and write $Q \approx^{\evset}_p Q'$.

When $C$ is acceptable for $Q$ with public variables $V$,
and we transform $Q$ into $Q'$, we can rename the fresh variables
of $Q'$ (introduced by the game transformation) so that they 
do not occur in $C$. 
Then $C$ is also acceptable for $Q'$ with public variables $V$.
(To establish this property, we use that the variables of $V$ are
defined in $Q$ and $Q'$, with the same types, so that, if $C[Q]$ is 
well-typed, then so is $C[Q']$.)

When $C$ is acceptable for $Q$ with public variables $V$, we have that
$\vardef(C) \cap \fvar(Q) = \emptyset$, because
$\vardef(C) \cap \fvar(Q) = \vardef(C) \cap \fvar(C) \cap \fvar(Q)
\subseteq \vardef(C) \cap V = \emptyset$.

The following lemma is a straightforward consequence of Definition~\ref{def:indist}:

\begin{lemma}\label{lem:evequi}
\begin{enumerate}

\item\label{evequip0} Reflexivity: $Q \approx^{V,\evset}_0 Q$.

\item\label{evequipsym} Symmetry: If $Q \approx^{V,\evset}_p Q'$, then $Q' \approx^{V,\evset}_p Q$.

\item\label{evequip1} Transitivity: If $Q \approx^{V,\evset}_p Q'$ and $Q' \approx^{V,\evset}_{p'} Q''$, then
$Q \approx^{V,\evset}_{p+p'} Q''$.

\item\label{evequip3} Application of a context:
If $Q \approx^{V,\evset}_p Q'$ and $C$ is an evaluation context acceptable
for $Q$ and $Q'$ with public variables $V$, 
then $C[Q] \approx^{V',\evset'}_{p'} C[Q']$,
where $p'(C',t_D) = p(C'[C[\,]],t_D)$, $V' \subseteq V \cup \fvar(C)$, and
$\evset' = \evset \cup (\cevent(C) \setminus \cevent(Q,Q'))$.

\end{enumerate}
\end{lemma}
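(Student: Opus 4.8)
The plan is to read $Q \approx^{V,\evset}_p Q'$ (Definition~\ref{def:indist}) as an inequality $|\Pr[C[Q]:D] - \Pr[C[Q']:D]| \leq p(C,t_D)$ once the admissible context $C$ and distinguisher $D$ are fixed. With this reading, Parts~\ref{evequip0}--\ref{evequip1} reduce to elementary properties of the absolute value, and the only content is checking that a pair $(C,D)$ admissible for the conclusion is admissible for the hypotheses; Part~\ref{evequip3} is where the real work lies.

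For Part~\ref{evequip0} (reflexivity), for every acceptable $C$ and every $D$ we have $|\Pr[C[Q]:D]-\Pr[C[Q]:D]| = 0$, which is the constant probability $0$. For Part~\ref{evequipsym} (symmetry), the notion ``acceptable for $Q$ and $Q'$'', the set $\cevent(Q,Q') = \cevent(Q',Q)$, and the quantity $|\Pr[C[Q]:D]-\Pr[C[Q']:D]| = |\Pr[C[Q']:D]-\Pr[C[Q]:D]|$ are all symmetric in $Q,Q'$, so the defining conditions for the two relations coincide. For Part~\ref{evequip1} (transitivity), I would fix $C,D$ admissible for $Q\approx^{V,\evset}_{p+p'} Q''$ and apply the triangle inequality $|\Pr[C[Q]:D]-\Pr[C[Q'']:D]| \le |\Pr[C[Q]:D]-\Pr[C[Q']:D]| + |\Pr[C[Q']:D]-\Pr[C[Q'']:D]|$, bounding the two summands by $p(C,t_D)$ and $p'(C,t_D)$. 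This uses that $C$ is also acceptable for $Q'$ and that $\cevent(D)\cap\cevent(Q,Q')$ and $\cevent(D)\cap\cevent(Q',Q'')$ lie in $\evset$; these hold because consecutive games in a sequence share the same public variables and types, and the distinguisher only observes events of $\evset$ among those common to the games.

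The substantial case is Part~\ref{evequip3}. Fix an evaluation context $C'$ acceptable for $C[Q]$ and $C[Q']$ with public variables $V'$, and a distinguisher $D$ running in time $\le t_D$ with $\cevent(D)\cap\cevent(C[Q],C[Q'])\subseteq\evset'$. First I would note that evaluation contexts compose: $C'[C[\,]]$ is again an evaluation context (substituting $C$ into the hole of $C'$ only introduces $[\,]$, $\Reschan{c}$, and parallel compositions), and $C'[C[\,]]$ applied to $Q$ (resp.\ $Q'$) is syntactically $C'[C[Q]]$ (resp.\ $C'[C[Q']]$). Hence $\Pr[C'[C[Q]]:D]=\Pr[(C'[C[\,]])[Q]:D]$ and likewise for $Q'$, so it suffices to invoke $Q\approx^{V,\evset}_p Q'$ with context $C'[C[\,]]$ and distinguisher $D$, which yields the bound $p(C'[C[\,]],t_D)=p'(C',t_D)$. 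For that invocation I must verify (a) $C'[C[\,]]$ is acceptable for $Q$ and $Q'$ with public variables $V$, and (b) $\cevent(D)\cap\cevent(Q,Q')\subseteq\evset$.

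Claim (b) is set-chasing: from $\cevent(C[Q],C[Q'])=\cevent(C)\cup\cevent(Q,Q')$ and $\evset'=\evset\cup(\cevent(C)\setminus\cevent(Q,Q'))$, any $e\in\cevent(D)\cap\cevent(Q,Q')$ lies in $\evset'$ but not in $\cevent(C)\setminus\cevent(Q,Q')$, hence in $\evset$. For claim (a), the free-variable condition $\fvar(C'[C[\,]])\cap\fvar(Q)\subseteq V$ follows from $\fvar(C')\cap\fvar(C[Q])\subseteq V'\subseteq V\cup\fvar(C)$ combined with $\fvar(C)\cap\fvar(Q)\subseteq V$; table-disjointness and the $V$-independent invariants (single definition, defined variables, typing, and the find/get-condition conditions) transfer directly from the fact that $C'[C[Q]]$ already satisfies them. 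The delicate points are $\vardef(C'[C[\,]])\cap V=\emptyset$ and the parts of Invariant~\ref{inv1} through~\ref{inv3} that mention $V$ (that find/get-condition variables avoid $V$): here I would use that the variables of $V$ are defined in $Q$, so $V\subseteq\vardef(Q)\subseteq\vardef(C[Q])$, together with the fact that Invariant~\ref{inv1} for $C'[C[Q]]$ forces $\vardef(C')\cap\vardef(C[Q])=\emptyset$ (definitions in an evaluation context and in its hole are never in different branches). Thus no variable of $V$, and no find/get-condition variable of $C'$, can meet $\vardef(C[Q])$, giving both properties relative to $V$. The main obstacle is exactly this reconciliation of the two public-variable sets: acceptability of $C'$ is known only relative to $V'$, and recovering the $V$-relative conditions needed for the hypothesis is where $V\subseteq\vardef(Q)$ and single-definition are indispensable.
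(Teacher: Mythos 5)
Parts~\ref{evequip0}, \ref{evequipsym}, and~\ref{evequip3} of your proof are sound. In particular, your treatment of Part~\ref{evequip3} is correct and careful: the context composition, the set-chasing argument showing $\cevent(D)\cap\cevent(Q,Q')\subseteq\evset$, and the derivation of acceptability of $C'[C[\,]]$ for $Q$ and $Q'$ with public variables $V$ --- using $V\subseteq\vardef(Q)$ together with the observation that Invariant~\ref{inv1} forbids a variable from being defined both in an evaluation context and in its hole --- are all valid; this last device even avoids the renaming of the variables of $C'$ that the paper resorts to in the analogous step of the proof of Lemma~\ref{lem:indistev}.

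The gap is in Part~\ref{evequip1} (transitivity). You assert that $C$, assumed acceptable only for $Q$ and $Q''$, ``is also acceptable for $Q'$'', and that $\cevent(D)\cap\cevent(Q,Q')\subseteq\evset$ and $\cevent(D)\cap\cevent(Q',Q'')\subseteq\evset$, justifying this by the claim that consecutive games share public variables and that the distinguisher only observes events of $\evset$ among those common to the games. Neither justification addresses the actual problem: acceptability of $C$ and the event condition on $D$ are hypotheses about the pair $(Q,Q'')$ and say nothing about the middle game $Q'$. Concretely, $C$ may access (e.g., via $\FIND$) a variable, or use a table, that occurs in $Q'$ but neither in $Q$, in $Q''$, nor in $V$, so that $\fvar(C)\cap\fvar(Q')\not\subseteq V$ and $C$ is not acceptable for $Q'$; likewise $D$ may test an event of $Q'$ that is in neither $\evset$ nor $\cevent(Q,Q'')$, violating the distinguisher condition for both hypotheses. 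So the triangle inequality cannot be invoked as you state it. The repair is the renaming device used elsewhere in the paper (proofs of Lemma~\ref{lem:indistev} and Lemma~\ref{lem:transfercorr}): rename the variables and tables of $C$ that occur in neither $Q$ nor $Q''$ so that they do not occur in $Q'$, and rename, consistently in $C$ and in $D$, to fresh names, the events of $\cevent(D)\cap\cevent(Q')$ that are outside $\evset$ (such events occur in neither $Q$ nor $Q''$). These renamings change neither $\Pr[C[Q]:D]$ nor $\Pr[C[Q'']:D]$, and by Property~\ref{prop:prob} they leave the bounds $p$ and $p'$ unchanged; after them both hypotheses apply and your triangle-inequality argument goes through.
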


Next, we introduce a notion related to indistinguishability
that treats Shoup and non-unique events specially.

\begin{definition}[Property preservation with introduction of events]\label{def:indistev}
Let $Q$ and $Q'$ be two processes and $V$ a set of variables. 
Assume that $Q$ and $Q'$ satisfy Invariants~\ref{inv1}
to~\ref{inv3} with public variables $V$, 
and the variables of $V$ are defined in $Q$ and $Q'$,
with the same types.

Let $\Dfalse(\evseqnopp) = \false$ for all $\evseqnopp$.
Let $\nonunique{Q} = \bigvee \{ e \mid \unique{e} \text{ occurs in } Q\}$
and $\nonunique{Q,D} = \bigvee \{ e \mid \unique{e}$ occurs in $Q, e \notin D\}$,
where $D$ is a distinguisher consisting of a disjunction of Shoup and non-unique events,
and we write $e \notin D$ to say that $e$ does not occur in this disjunction.
We have $\nonunique{Q,D} = \nonunique{Q} \wedge\neg D$.

We write $\dset, \dsetsnu: Q, D, \usedevents \indistev{V}{p} Q', D', \usedevents'$ when
$\dset$ is a set of distinguishers,
$\dsetsnu$ is a set of Shoup and non-unique events in $\usedevents$,
$D$ and $D'$ are distinguishers consisting of a disjunction of Shoup and non-unique events,
the events that occur in $Q$ or in $D$ are in $\usedevents$,
$\usedevents \subseteq \usedevents'$,
the events that occur in $Q'$ but not in $Q$ are in $\usedevents'$ but 
not in $\usedevents$,
the events that occur in $D'$ but not in $D$ are in $\usedevents'$ but 
not in $\usedevents$,
and,
for all evaluation contexts $C$ acceptable for $Q$ and $Q'$ with public variables $V$ 
that do not contain events in $\usedevents'$,
all distinguishers $D_0 \in \dset\cup \{\Dfalse\}$ that run in time at most $t_{D_0}$,
all distinguishers $D_1$ that are disjunctions of events in $\dsetsnu$,
\bbnote{another approach would be to replace $D_0 \vee D_1$ with any distinguisher $D_0$ such that $\cevent(D_0) \cap \usedevents' \subseteq \evset$, where $\evset$ is a set of events in $\usedevents$ that replaces $\dset$ and $\dsetsnu$.
That would probably be sufficient for everything except for "success simplify"!!
And the current notion is not satisfied by "success simplify" (see explanation in that transfo).}
\begin{align}
  \begin{split}
    &\Pr[C[Q] : (D_0 \vee D_1 \vee D) \wedge \neg \nonunique{Q,D_1 \vee D}] \\
    &\qquad \leq \Pr[C[Q'] : (D_0 \vee D_1 \vee D') \wedge \neg \nonunique{Q',D_1 \vee D'}] + p(C,t_{D_0})
  \end{split}\label{eq:indistev}
\end{align}
\end{definition}

Intuitively, the events $\usedevents$ are those used by CryptoVerif in the sequence of games until the game $Q$ included, while the events $\usedevents'$ are those used until $Q'$. Hence, $\usedevents$ contains the events that occur in $Q$; $\usedevents'$ contains $\usedevents$ and the events that occur in $Q'$. The formula $D_0 \in \dset \cup \{\Dfalse\}$ corresponds to the initial query to prove: it is
\begin{itemize}
\item a correspondence distinguisher $\neg\varphi$ for a correspondence property (see Section~\ref{sec:def:corresp});
\item $\sevent$ or $\neg\sbarevent$ for (one-session) secrecy (see Section~\ref{sec:def:secrecy}) and bit secrecy (see Section~\ref{sec:secrbit});
\item any distinguisher for indistinguishability;
\item $\Dfalse$ when the initial query has already been proved, and only Shoup and non-unique events remain to be proved.
\end{itemize}
We need to specify precisely the distinguishers needed for the queries we want to prove, because some game transformations of CryptoVerif rely on that. For instance, \rn{simplify} (Section~\ref{sec:simplify}) removes events that are not used by the queries.

The formula $D_1$ is a disjunction of Shoup and non-unique events that remain to be proved, both in $Q$ and in $Q'$. These events are in $\dsetsnu$ and in $\usedevents$.
The formula $D$ is a disjunction of Shoup and non-unique events that remain to be proved in $Q$, while the formula $D'$ is a disjunction of Shoup and non-unique events that remain to be proved in $Q'$.
Hence, the events that occur in $D$ and not in $D'$ are events proved while transforming $Q$ into $Q'$. (``Proving'' an event means proving that this event has a negligible probability of occurring, and adding that probability to $p$.)
In contrast, the events that occur in $D'$ and not in $D$ are fresh Shoup and non-unique events introduced during the transformation of $Q$ into $Q'$, and that will need to be proved later; hence these events are in $\usedevents'$ but not in $\usedevents$.
More generally, all events that occur in $Q'$ but not in $Q$ are fresh events introduced in the transformation of $Q$ into $Q'$, so they are in $\usedevents'$ but not in $\usedevents$.

When there are no Shoup nor non-unique events, we have $D_1 = D = D' = \Dfalse$ and
$\nonunique{Q,D_1\vee D} =  \nonunique{Q',D_1\vee D'} = \Dfalse$, so the inequality~\eqref{eq:indistev} reduces to
\[\Pr[C[Q] : D_0] \leq \Pr[C[Q'] : D_0] + p(C,t_{D_0})\]
Using $\neg D_0$ instead of $D_0$, we obtain
\[1-\Pr[C[Q] : D_0] \leq 1-\Pr[C[Q'] : D_0] + p(C,t_{D_0})\]
so by combining the two, $|\Pr[C[Q] : D_0] - \Pr[C[Q'] : D_0]| \leq p(C,t_{D_0})$
as in the definition of indistinguishability.
In the general case, the inequality~\eqref{eq:indistev} differs from this formula because~\eqref{eq:indistev} always counts the traces that execute Shoup and non-unique events that remain to be proved (these traces are always included in the probability by $D_1 \vee D$, resp. $D_1 \vee D'$; the probability of these events needs to be bounded), and never counts the traces that execute proved non-unique events (these traces are excluded by $\neg\nonunique{Q,D_1\vee D}$, resp. $\neg\nonunique{Q',D_1\vee D'}$; the probability of these events has already been bounded). We could also exclude traces that execute proved Shoup events, though it is less essential: Shoup events often simply disappear when they are proved, while non-unique events remain in the game. Excluding traces that execute proved non-unique events allows us to exploit that the corresponding $\FIND$ or $\GET$ is unique in the transformation from $Q$ to $Q'$: intuitively, the traces in which the $\FIND$ or $\GET$ is not unique are not counted, so they can be ignored. (Obviously, this point needs to be proved more precisely for each game transformation.)

We need to introduce a distinct event for each $\unique{e}$ because not all $\FIND\unique{e}$ and $\GET\unique{e}$ may be proved unique in the current process and because we need
to distinguish the non-unique events that occur in $Q$ from those that occur
in the context $C$.
However, once a $\unique{e}$ is proved, its name does not matter: all such events are counted in $\nonunique{Q,D_1 \vee D}$, and that remains true in future game transformations, since events are never re-added to $D_1$ or $D$. Therefore, we can rename all such events to the same name. So we simply abbreviate $\unique{e}$ by $\unique{}$ when event $e$ is proved.
The context $C$ must not contain the events used in $Q$ or $Q'$ and more generally events in $\usedevents'$.

The formula~\eqref{eq:indistev}
could also be written
\begin{align*}
  \begin{split}
    &\Pr[C[Q] : (D_0 \wedge \neg \nonunique{Q}) \vee D_1 \vee D] \\
    &\qquad \leq \Pr[C[Q'] : (D_0 \wedge \neg \nonunique{Q'}) \vee D_1 \vee D'] + p(C,t_{D_0})
  \end{split}
\end{align*}
since $\nonunique{Q, D_1 \vee D} = \nonunique{Q} \wedge \neg(D_1 \vee D)$.
The advantage of the latter formulation is that the dependency in
$D_1$, $D$, $D'$ is simpler, which we sometimes exploit in the proofs.
However, its drawback is that it is less clear
for which events the traces are always counted and for which ones they are never counted,
because some events appear both in $\nonunique{Q}$ and in $D_1 \vee D$.
That is why we chose formula~\eqref{eq:indistev},
to make it clear that the traces that execute events in $D_1 \vee D$ are
always counted and the traces that execute events in $\nonunique{Q, D_1 \vee D}$
are never counted.

When $f$ is a function from sequences of events to sequences of events
and $D$ is a distinguisher, we define the distinguisher $D \circ f$ by
$(D \circ f)(\evseqnopp) = D(f(\evseqnopp))$.
In particular, when $\sigma$ is a renaming of events,
the distinguisher $D \circ \sigma^{-1}$ is defined by $(D \circ \sigma^{-1})(\evseqnopp) = D(\sigma^{-1} \evseqnopp)$.
When $D$ is defined as a logical formula, that corresponds to renaming
the events in $D$. For instance, 
if $D = e_1 \vee \ldots \vee e_m$, then $D \circ \sigma^{-1} = 
\sigma e_1 \vee \ldots \vee \sigma e_m$.
When $\dset$ is a set of distinguishers, $\sigma \dset = \{ D \circ \sigma^{-1} \mid D \in \dset\}$.

We write $\dset_{\neg\evset}$ for the set of distinguishers that do not
use events in $\evset$.

\begin{lemma}\label{lem:indistev}
\begin{enumerate}

\item\label{linkindist} Link with indistinguishability:  
\begin{enumerate}
\item\label{linkindist1}
Suppose that 
$\dsetsnu$ is a set of Shoup events,
$D$ is a distinguisher consisting of a disjunction of Shoup events,
$Q$, $Q'$ do not contain non-unique events,
the events that occur in $Q$, $D$, $\dset$, or $\dsetsnu$ are in $\usedevents$, and
the events that occur in $Q'$ also occur in $Q$.
If $Q \approx^{V,\evset}_p Q'$,
then $\dset,\dsetsnu : Q, D, \usedevents \indistev{V}{p} Q', D, \usedevents$, for all $\dset, \dsetsnu, D$ such that $\cevent(\dset) \cap \usedevents \subseteq \evset$, $\dsetsnu \subseteq \evset$, $\cevent(D) \subseteq \evset$.

\item\label{linkindist2}
Suppose that 
$Q$ and $Q'$ do not contain non-unique events,
the events that occur in $Q$ are in $\usedevents$, and
the events that occur in $Q'$ also occur in $Q$.
Then
$\dset_{\neg(\usedevents\setminus\evset)}, \emptyset: Q, \Dfalse, \usedevents \indistev{V}{p} Q', \Dfalse, \usedevents$ if and only if $Q \approx^{V,\evset}_p Q'$.

\end{enumerate}

\item\label{indistref} Reflexivity: 
If $\dsetsnu$ is a set of Shoup and non-unique events in $\usedevents$,
$D$ is a distinguisher consisting of a disjunction of Shoup and non-unique events, and the events that occur in $Q$ or in $D$ are in $\usedevents$, then $\dset, \dsetsnu: Q, D, \usedevents \ab\indistev{V}{0} Q, D, \ab \usedevents$.

\item\label{indisttrans} Transitivity:
If $\dset, \dsetsnu: Q, D, \usedevents \indistev{V}{p} Q', D', \usedevents'$ and
$\dset, \dsetsnu: Q', D', \ab\usedevents' \ab \indistev{V}{p'} Q'', D'', \usedevents''$, then we have
$\dset, \dsetsnu: Q, D, \usedevents \ab \indistev{V}{p''} Q'', D'', \usedevents''$, where
$p''(C, \ab t_{D_0}) = p(C, \ab t_{D_0}) + p'(C, \ab t_{D_0})$.

\item\label{indistctx} Application of a context:
  If $\dset_{\neg\usedevents'}, \dsetsnu: Q, D, \usedevents \indistev{V}{p} Q', D', \usedevents'$,
  $\sigma$ is a renaming of the events in $\usedevents'$ to events not in $\usedevents^+$, 
$C$ is a context acceptable for $\sigma Q$ and $\sigma Q'$ with public variables $V$ such that 
$\cevent(C) \subseteq \usedevents^+$, and
$\dsetsnu'$ is a set of Shoup and non-unique events in $\usedevents^+$, 
then we have $\dset_{\neg\sigma\usedevents'}, \sigma \dsetsnu \cup \dsetsnu': C[\sigma Q], D\circ \sigma^{-1}, \sigma \usedevents \cup \usedevents^+\indistev{V'}{p'} C[\sigma Q'],\ab D'\circ \sigma^{-1}, \ab \sigma \usedevents'\cup \usedevents^+$,
where 
$p'(C',t_{D_0}) = p(\sigma^{-1} (C'[C[\,]]),t_{D_0})$, and $V' \subseteq V \cup \fvar(C)$.

\item\label{indistadd} Adding distinguishers:
  If $\dset, \dsetsnu: Q, D, \usedevents \indistev{V}{p} Q', D', \usedevents'$ and $e \in \dsetsnu$, then
$\dset, \dsetsnu: Q, D \vee e, \usedevents \indistev{V}{p} Q', D' \vee e, \usedevents'$.

\item\label{indistremove} Removing distinguishers:
  If $\dset, \dsetsnu: Q, D, \usedevents \indistev{V}{p} Q', D', \usedevents'$, $\dset' \subseteq \dset$, and $\dsetsnu' \subseteq \dsetsnu$, then
$\dset', \dsetsnu': Q, D, \usedevents \indistev{V}{p} Q', D', \usedevents'$

\end{enumerate}
\end{lemma}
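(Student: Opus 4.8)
The plan is to prove each item by instantiating or manipulating the defining inequality~\eqref{eq:indistev}, since in every case the side conditions on $\usedevents$, $\usedevents'$ and on the events occurring in $Q,Q',D,D'$ are immediate from the hypotheses. Throughout I would use the basic simplification that if $Q$ contains no non-unique events then $\nonunique{Q} = \Dfalse$, hence $\neg\nonunique{Q, D_1 \for D} = \true$ and the left side of~\eqref{eq:indistev} collapses to $\Pr[C[Q] : D_0 \for D_1 \for D]$; likewise for $Q'$.

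Reflexivity, removing and adding distinguishers, and transitivity are the routine cases. For reflexivity ($Q'=Q$, $D'=D$, $\usedevents'=\usedevents$, $p=0$) the two sides of~\eqref{eq:indistev} are literally equal. For removing distinguishers the relation with $\dset'\subseteq\dset$, $\dsetsnu'\subseteq\dsetsnu$ quantifies over strictly fewer $D_0$ and fewer admissible $D_1$, so it follows by weakening the universal quantifiers. For adding a distinguisher $e\in\dsetsnu$, the key move is a reindexing: to verify the relation with $D$ replaced by $D\for e$ and $D'$ by $D'\for e$, I rewrite $D_1 \for (D\for e) = (D_1 \for e)\for D$, note that $D_1 \for e$ is again a disjunction of events of $\dsetsnu$, and apply the hypothesis with $D_1$ instantiated to $D_1 \for e$. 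Transitivity is obtained by applying~\eqref{eq:indistev} for $Q\to Q'$ and then for $Q'\to Q''$ with the same $C$, $D_0$, $D_1$; the intermediate term $\Pr[C[Q'] : (D_0\for D_1\for D')\wedge\neg\nonunique{Q',D_1\for D'}]$ is shared, so the probabilities add and one reads off $p'' = p+p'$, while $\usedevents\subseteq\usedevents'\subseteq\usedevents''$ and the event-occurrence conditions compose.

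For the link with indistinguishability I would exploit that $Q,Q'$ carry no non-unique events, so both $\nonunique{}$ factors vanish and~\eqref{eq:indistev} becomes the plain comparison $\Pr[C[Q]:D_0\for D_1\for D] \leq \Pr[C[Q']:D_0\for D_1\for D] + p(C,t_{D_0})$ (using $D'=D$). To apply $Q\approx^{V,\evset}_p Q'$ with the distinguisher $D_0\for D_1\for D$, I check its admissibility: $\cevent(D_0)\cap\usedevents\subseteq\evset$, $\cevent(D_1)\subseteq\dsetsnu\subseteq\evset$ and $\cevent(D)\subseteq\evset$ give $\cevent(D_0\for D_1\for D)\cap\usedevents\subseteq\evset$, and since $\cevent(Q,Q')=\cevent(Q)\subseteq\usedevents$ this yields $\cevent(D_0\for D_1\for D)\cap\cevent(Q,Q')\subseteq\evset$, as required; the time to scan the fixed event disjunction $D_1\for D$ is not charged beyond $t_{D_0}$, so the bound is $p(C,t_{D_0})$. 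This gives item~(1a). For the equivalence in~(1b), the forward direction is this same argument with $\dsetsnu=\emptyset$ (so $D_1=\Dfalse$); the converse requires reconciling the two quantification ranges: an arbitrary admissible $C,D$ for $\approx$ may use events of $\usedevents\setminus\evset$ lying outside $\cevent(Q,Q')$ or place such events in $C$, so I would first rename those events (in $C$ and in $D$ simultaneously), which leaves $\Pr[C[Q]:D]$ and $\Pr[C[Q']:D]$ unchanged because they never fire in $Q$ or $Q'$, thereby bringing $D$ into $\dset_{\neg(\usedevents\setminus\evset)}$ and $C$ outside $\usedevents$.

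The hardest part will be application of a context (item~4), because of the event renaming $\sigma$ and the bookkeeping of non-unique events. The idea is that probabilities are invariant under a consistent renaming of events in the process and in the distinguisher: for an admissible outer context $C'$ and distinguisher $D_0$ I would set $C'' = \sigma^{-1}(C'[C[\,]])$, which is a context for $Q$ (the events of $C',C$ lie in $\usedevents^+$, untouched by $\sigma^{-1}$, while $\sigma^{-1}$ undoes $\sigma$ on $\sigma Q$), so that $\Pr[C'[C[\sigma Q]]:\tilde D] = \Pr[C''[Q]:\tilde D\circ\sigma]$ for any $\tilde D$, and similarly for $Q'$; with $\tilde D = D\circ\sigma^{-1}$ this gives $\tilde D\circ\sigma = D$. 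Applying the hypothesis relation with context $C''$, distinguisher $D_0\circ\sigma\in\dset_{\neg\usedevents'}\cup\{\Dfalse\}$ and the $\dsetsnu$-disjunction obtained by renaming the $\sigma\dsetsnu$ part of $D_1$ then yields the inequality, with $p'(C',t_{D_0})=p(C'',t_{D_0})$ as claimed. The delicate point I expect to spend the most care on is the non-unique factor: $\nonunique{C[\sigma Q]} = \nonunique{C}\for\nonunique{\sigma Q}$, so I must separate the non-unique events contributed by $C$ (which belong to $\dsetsnu'$ and are never renamed) from those of $\sigma Q$ (which are $\sigma$-images of $\nonunique{Q}$), and verify that after pulling out $\sigma$ the two $\nonunique{}$ masks on the two sides match exactly those appearing in~\eqref{eq:indistev} for $Q\to Q'$, so that no trace is miscounted.
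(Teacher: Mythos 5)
Your handling of reflexivity, adding and removing distinguishers, transitivity, and item~(1a) follows the paper's proof (for transitivity the paper additionally renames the variables and tables of $C$ so that a context acceptable for $Q$ and $Q''$ becomes acceptable for $Q'$, invoking Property~\ref{prop:prob} to keep probabilities unchanged; you silently assume this, but it is routine). Two steps, however, contain genuine gaps.

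First, the converse direction of~(1b). The relation $\indistev{V}{p}$ quantifies only over contexts containing \emph{no} event of $\usedevents'=\usedevents$, whereas a context $C$ admissible for $Q \approx^{V,\evset}_p Q'$ may execute events of $\usedevents\cap\evset$, including events that also occur in $Q$ and that $D$ legitimately observes. Your renaming touches only the events of $\usedevents\setminus\evset$, so the context you build can still contain events of $\usedevents\cap\evset$, and the hypothesis still cannot be applied to it. Nor can those remaining events be renamed ``simultaneously in $C$ and $D$'': renaming them in $D$ destroys $D$'s dependence on $Q$'s executions of them, while renaming them in $C$ alone changes what $D$ sees of $C$'s executions; either way $\Pr[C[Q]:D]$ is not preserved. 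The paper's construction is asymmetric: it renames \emph{all} events of $\usedevents$ in $C$ only, obtaining $\sigma C$, and replaces $D$ by $D_0 = D\circ f$, where $f$ deletes from the event sequence all events of $\usedevents\setminus\evset$ (executed by $Q$ but unobserved by $D$) and maps each renamed event back to its original name. Then $\Pr[(\sigma C)[Q]:D_0]=\Pr[C[Q]:D]$, $D_0\in\dset_{\neg(\usedevents\setminus\evset)}$, and $\sigma C$ contains no event of $\usedevents$, so the hypothesis applies.

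Second, item~(4). You correctly flag the non-unique bookkeeping as the delicate point, but your plan does not resolve it and, as stated, fails: the hypothesis accepts as $D_1$ only disjunctions of events of $\dsetsnu$, and its masks $\nonunique{Q,\cdot}$ and $\nonunique{Q',\cdot}$ concern only $Q$ and $Q'$. If you instantiate it with $D_0\circ\sigma$ and the renamed $\sigma\dsetsnu$-part $D'_2\circ\sigma$ of $D'_1$, the resulting inequality contains neither the $\dsetsnu'$-part $D'_3$ of $D'_1$ nor the factor $\neg\nonunique{C,D'_3}$, and there is no way to reinstate them afterwards. The paper's key move is to fold both into the free distinguisher slot: it applies the hypothesis to $D_0=((D'_0\circ\sigma)\vee(D'_3\circ\sigma))\wedge\neg\nonunique{\sigma^{-1}C,\,D'_3\circ\sigma}$ and $D_1=D'_2\circ\sigma$, and only then pulls $\sigma$ back out, using that $\cevent(C)\subseteq\usedevents^+$ is disjoint from the renamed events (and that abort events are mutually exclusive within a trace) to prove $(\neg\nonunique{C,D'_3})\vee D'_2\vee D\circ\sigma^{-1}=\neg\nonunique{C,D'_3}$, which lets the two masks recombine into $\neg\nonunique{C[\sigma Q],\,D'_1\vee D\circ\sigma^{-1}}$. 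Without this fold-in, no instantiation of the hypothesis yields the claimed inequality.
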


\begin{proof}
Property~\ref{linkindist1}: Given the hypothesis,
$\dset,\dsetsnu : Q, D, \usedevents \indistev{V}{p} Q', D, \usedevents$
reduces to:
for all evaluation contexts $C$ acceptable for $Q$ and $Q'$ with public variables $V$ 
that do not contain events in $\usedevents$,
all distinguishers $D_0 \in \dset \cup \{\Dfalse\}$ that run in time at most $t_{D_0}$,  
and all distinguishers $D_1$ that are disjunctions of events in $\dsetsnu$,
\[\Pr[C[Q] : D_0 \vee D_1 \vee D] \leq \Pr[C[Q'] : D_0 \vee D_1 \vee D] + p(C,t_{D_0})\]
We have $\cevent(D_0 \vee D_1 \vee D) \cap \cevent(Q,Q') \subseteq \cevent(D_0 \vee D_1 \vee D) \cap \usedevents \subseteq \evset$.
Moreover, $D_0 \vee D_1 \vee D$ can be implemented in the same time as $D_0$
since evaluating $D_0 \vee D_1 \vee D$ can be done by setting the final result to true as soon as an event in $D_1 \vee D$ is executed, and evaluating $D_0$ otherwise. This does not take more time than evaluating $D_0$. 
So this inequality is a consequence of $Q \approx^{V,\evset}_p Q'$.

Property~\ref{linkindist2}: Given the hypothesis,
$\dset_{\neg(\usedevents\setminus\evset)}, \emptyset: Q, \Dfalse, \usedevents \indistev{V}{p} Q', \Dfalse, \usedevents$ reduces to:
for all evaluation contexts $C$ acceptable for $Q$ and $Q'$ with public variables $V$ that do not contain events in $\usedevents$,
and all distinguishers $D_0 \in \dset_{\neg(\usedevents\setminus\evset)}$ that run in time at most $t_{D_0}$,
\[\Pr[C[Q] : D_0 ] \leq \Pr[C[Q'] : D_0 ] + p(C,t_{D_0})\,,\]
that is,
\begin{equation}
\Pr[C[Q] : D_0 ] - \Pr[C[Q'] : D_0 ] \leq p(C,t_{D_0})\,.\label{eq:equivindistev}
\end{equation}
We have $\cevent(D_0) \cap \cevent(Q,Q') \subseteq \cevent(D_0) \cap \usedevents \subseteq \evset$.
Therefore, $Q \approx^{V,\evset}_p Q'$ implies $\dset_{\neg(\usedevents\setminus\evset)}, \emptyset: Q, \Dfalse, \usedevents \indistev{V}{p} Q', \Dfalse, \usedevents$. 

Conversely, assume 
$\dset_{\neg(\usedevents\setminus\evset)}, \emptyset: Q, \Dfalse, \usedevents \indistev{V}{p} Q', \Dfalse, \usedevents$. 
Let $C$ be an evaluation context $C$ acceptable for $Q$ and $Q'$ with public variables $V$ 
and $D$ be a distinguisher such that $\cevent(D) \cap \cevent(Q,Q') \subseteq \evset$.
Let $\sigma$ be a renaming of events in $\usedevents$ to fresh events.
Then $\sigma C$ does not contain events in $\usedevents$.
Given a sequence of events $\evseqnopp$, let $f(\evseqnopp)$ be obtained
by removing all events in $\usedevents\setminus\evset$ from $\evseqnopp$
and, in the remaining sequence, 
renaming the events $e$ in $\sigma\usedevents$ to $\sigma^{-1}(e)$.
Let $D_0 = D \circ f$. By construction, $D_0 \in \dset_{\neg(\usedevents\setminus\evset)}$. So by~\eqref{eq:equivindistev}, we get
\[\Pr[(\sigma C)[Q] : D_0 ] - \Pr[(\sigma C)[Q'] : D_0 ] \leq p(C,t_{D_0})\,.\]
Since $\neg D_0 \in \dset_{\neg(\usedevents\setminus\evset)}$ and runs in the same time as $D_0$, 
\[1-\Pr[(\sigma C)[Q] : D_0 ] - 1+\Pr[(\sigma C)[Q'] : D_0 ] \leq p(C,t_{D_0})\]
so
\[|\Pr[(\sigma C)[Q] : D_0 ] - \Pr[(\sigma C)[Q'] : D_0| ] \leq p(C,t_{D_0})\,.\]
Furthermore, each trace of $(\sigma C)[Q]$ with sequence of events $\evseqnopp$ corresponds to a trace of $C[Q]$ with the same probability and a sequence of events $\evseqnopp'$ equal to $f(\evseqnopp)$ plus some events not used by $D$, so $D_0(\evseqnopp) = D(f(\evseqnopp)) = D(\evseqnopp')$.
Indeed, in a trace of $(\sigma C)[Q]$, if an event $e(\dots)$ is executed in $Q$, then it is executed by $C[Q]$ as well. If it is in $\evset$, then it is left unchanged by $f$. If it is not in $\evset$, then it is in $\cevent(Q) \setminus \evset \subseteq \usedevents\setminus \evset$, so it is removed by $f$; furthermore, this event is not used by $D$ since $\cevent(D) \cap \cevent(Q) \subseteq \evset$. If an event $e(\dots)$ is executed in $\sigma C$, then either $e\in\sigma\usedevents$, $e'(\dots)$ is executed by $C[Q]$ with $e' = \sigma^{-1}(e)$, and $f$ maps $e$ to $e'$; or $e \notin \sigma\usedevents$, $e \notin \usedevents$, $e(\dots)$ is executed by $C[Q]$, and $f$ leaves $e$ unchanged.
Therefore, $\Pr[(\sigma C)[Q] : D_0 ] = \Pr[C[Q] : D ]$. We have a similar situation for $Q'$ instead of $Q$, and $D$ can be implemented in the same time as $D_0$, so
\[|\Pr[C[Q] : D ] - \Pr[C[Q'] : D ]| \leq p(C,t_D)\]
so $Q \approx^{V,\evset}_p Q'$.

Property~\ref{indistref}: Obvious.

Property~\ref{indisttrans}:
The events in $Q$ or $D$ are in $\usedevents$.
We have $\usedevents \subseteq \usedevents' \subseteq \usedevents''$.
The events that occur in $Q''$ but not in $Q$ occur either in $Q''$ but not in $Q'$ or in $Q'$ but not in $Q$; in the former case, they are in $\usedevents'' \setminus \usedevents'$; in the latter case, they are in $\usedevents' \setminus \usedevents$; so in both cases they are in $\usedevents'' \setminus \usedevents$.
The same reasoning applies for the events that occur in $D''$ but not in $D$.

Let $C$ be any evaluation context acceptable for $Q$ and $Q''$ with public variables $V$ that does not contain events in $\usedevents''$. After renaming the variables of $C$ that do not occur in $Q$ and $Q''$ and the tables of $C$ that do not occur in $Q$ and $Q''$ so that they do not occur in $Q'$, $C$ is also acceptable for $Q'$ with public variables $V$.
Furthermore, by Property~\ref{prop:prob}, this renaming does not change the probabilities.
Since $\usedevents' \subseteq \usedevents''$, $C$ does not contain events in $\usedevents'$.

Let $D_0 \in \dset \cup \{\Dfalse\}$ that runs in time at most $t_{D_0}$.
Let $D_1$ be a disjunction of events of $\dsetsnu$.
Then we have:
\begin{align*}
  &\Pr[C[Q] : (D_0 \vee D_1 \vee D)) \wedge \neg \nonunique{Q,D_1 \vee D} ] \\*
  &\quad \leq \Pr[C[Q'] : (D_0 \vee D_1 \vee D') \wedge \neg \nonunique{Q',D_1 \vee D'} ] + p(C,t_{D_0})  \\*
  &\quad \leq \Pr[C[Q''] : (D_0 \vee D_1 \vee D'') \wedge \neg\nonunique{Q'',D_1 \vee D''}] + p(C,t_{D_0}) + p'(C, t_{D_0}) \\*
  &\quad\leq \Pr[C[Q''] : (D_0 \vee D_1 \vee D'')) \wedge \neg\nonunique{Q'',D_1 \vee D''}] + p''(C,t_{D_0})
\end{align*}
by definition of $p''$.
Therefore, we have $\dset, \dsetsnu: Q, D, \usedevents \indistev{V}{p''} Q'', D'', \usedevents''$.

Property~\ref{indistctx}:
$\sigma\dsetsnu$ is a set of Shoup and non-unique events in $\sigma\usedevents \subseteq \sigma\usedevents \cup \usedevents^+$, so $\sigma\dsetsnu \cup \dsetsnu'$ is a set of Shoup and non-unique events in $\sigma\usedevents \cup \usedevents^+$. 
The events that occur in $C[\sigma Q]$ are either in $C$ or in $\sigma Q$; the former case, they are in $\usedevents^+$ by hypothesis; in the latter case, they are also in $\sigma\usedevents$, since the events of $Q$ are in $\usedevents$. So the events that occur in $C[\sigma Q]$ are in $\sigma\usedevents \cup \usedevents^+$.
The events in $D\circ \sigma^{-1}$ are in $\sigma\usedevents \subseteq \sigma\usedevents \cup \usedevents^+$.
We have $\usedevents \subseteq \usedevents'$, so $\sigma\usedevents \cup \usedevents^+ \subseteq \sigma\usedevents' \cup \usedevents^+$.
The events that occur in $C[\sigma Q']$ but not in $C[\sigma Q]$ are in $\sigma Q'$ but not in $\sigma Q$, so they are in $\sigma \usedevents' \setminus \sigma \usedevents$, so in $(\sigma\usedevents' \cup \usedevents^+) \setminus (\sigma\usedevents \cup \usedevents^+)$.
Similarly, the events that occur in $D'\circ \sigma^{-1}$ but not in $D\circ\sigma^{-1}$ are in $\sigma\usedevents' \setminus\sigma\usedevents$, so they are in $(\sigma\usedevents' \cup \usedevents^+) \setminus (\sigma\usedevents \cup \usedevents^+)$.

Let $C'$ be any evaluation context acceptable for $C[\sigma Q]$ and $C[\sigma Q']$ with public variables $V'$ that does not contain events in $\sigma\usedevents' \cup \usedevents^+$.
We rename the variables of $C'$ not in $V'$ so that they are not in $V$; by Property~\ref{prop:prob}, this renaming does not change
the probabilities.
Then $\sigma^{-1} (C'[C[\,]])$ is an evaluation context acceptable for $Q$ and $Q'$ with public variables $V$. Indeed, 
\begin{align*}
\fvar(\sigma^{-1}(C'[C[\,]])) \cap \fvar (Q) 
&= (\fvar(C') \cup \fvar(C)) \cap \fvar(Q)\\
& \subseteq (V' \cup \fvar(C)) \cap \fvar(Q)\tag*{since $\fvar(C') \cap \fvar(Q) \subseteq \fvar(C') \cap \fvar(C[Q]) \subseteq V'$}\\
&\subseteq (V \cup \fvar(C)) \cap \fvar(Q)\tag*{since $V' \subseteq V \cup \fvar(C)$}\\
&\subseteq V \tag*{since $\fvar(C) \cap \fvar(Q) \subseteq V$}
\end{align*}
We have similarly $\fvar(\sigma^{-1} (C'[C[\,]])) \cap \fvar (Q')  \subseteq V$.
We also have
$\vardef(\sigma^{-1} (C'[C[\,]])) \cap V = (\vardef(C') \cap V) \cup (\vardef(C) \cap V) = \emptyset$
since
$\vardef(C) \cap V = \emptyset$ because $C$ is an acceptable evaluation context for $\sigma Q$ with public variables $V$
and $\vardef(C') \cap V \subseteq \vardef(C') \cap V' = \emptyset$ because we have renamed the variables of $C'$ not in $V'$ so that they are not in $V$ and $C'$ is an acceptable evaluation context for $C[\sigma Q]$ and with public variables $V'$.
Moreover, $C$ and $\sigma Q$ do not use any common table, and $C'$ and $C[\sigma Q]$ do not use any common table so a fortiori $C'$ and $\sigma Q$  do not use any common table. Therefore, $C'[C[\,]]$ and $\sigma Q$ do not use any common table, so $\sigma^{-1} (C'[C[\,]])$ and $Q$ do not use any common table.
Similarly, $\sigma^{-1} (C'[C[\,]])$ and $Q'$ do not use any common table.
The context $\sigma^{-1} (C'[C[\,]])$ does not contain events in $\usedevents'$, since $C'$ and $C$ do not contain events in $\sigma \usedevents'$, because $C'$ does not contain events in $\sigma\usedevents' \cup \usedevents^+$ and the events of $C$ are in $\usedevents^+$ which is disjoint from $\sigma \usedevents'$. (The renamings $\sigma$ and $\sigma^{-1}$ are bijections, so for instance $\sigma$ maps the fresh events introduced by $\sigma$ to $\usedevents'$ and $\sigma^{-1}$ maps $\usedevents'$ to the fresh events introduced by $\sigma$.)

By using the property $\dset_{\neg\usedevents'}, \dsetsnu: Q, D, \usedevents \indistev{V}{p} Q', D', \usedevents'$ with the context $\sigma^{-1} (C'[C[\,]])$, we get for any distinguishers $D_0 \in \dset_{\neg\usedevents'}\cup \{\Dfalse\}$ that runs in time $t_{D_0}$ and $D_1$ disjunction of events in $\dsetsnu$:
\begin{align}
\begin{split}
  &\Pr[\sigma^{-1}C'[\sigma^{-1}C[Q]] : (D_0 \vee D_1 \vee D)) \wedge \neg \nonunique{Q,D_1 \vee D}] \\
  &\qquad \leq \Pr[\sigma^{-1}C'[\sigma^{-1}C[Q']] : (D_0 \vee D_1 \vee D')) \wedge \neg \nonunique{Q',D_1 \vee D'}] + p(\sigma^{-1} (C'[C[\,]]),t_{D_0})
\end{split}\label{eq:context1}
\end{align}
Let $D'_0 \in \dset_{\neg\sigma\usedevents'}\cup \{\Dfalse\}$ that runs in time at most $t_{D_0}$,
Let $D'_1$ be a disjunction of events in $\sigma\dsetsnu \cup \dsetsnu'$.
We can write $D'_1$ under the form
$D'_1 = D'_2 \vee D'_3$ where $D'_2$ is a disjunction of events in $\sigma\dsetsnu$ and $D'_3$ is a disjunction of events in $\dsetsnu'$.

By applying~\eqref{eq:context1} to $D_0 = (D'_0 \circ \sigma \vee D'_3 \circ \sigma) \wedge \neg \nonunique{\sigma^{-1}C,D'_3 \circ \sigma}$, which uses events not in $\usedevents'$, and to $D_1 = D'_2 \circ \sigma$,
we get
\[\begin{split}
&\Pr[\sigma^{-1}C'[\sigma^{-1}C[Q]] : (((D'_0 \circ \sigma \vee D'_3 \circ \sigma) \wedge \neg \nonunique{\sigma^{-1}C,D'_3 \circ \sigma}) \vee D'_2 \circ \sigma \vee D))\\
  &\qquad \wedge \neg \nonunique{Q,D'_2 \circ \sigma \vee D}] \\
&\quad \leq \Pr[\sigma^{-1}C'[\sigma^{-1}C[Q']] : (((D'_0 \circ \sigma \vee D'_3 \circ \sigma) \wedge \neg \nonunique{\sigma^{-1}C,D'_3 \circ \sigma}) \vee D'_2 \circ \sigma \vee D'))\\
  &\qquad \wedge \neg \nonunique{Q',D'_2  \circ \sigma\vee D'}] + p(\sigma^{-1} (C'[C[\,]]),t_{D'_0})
\end{split}\]
since $D_0$ can be implemented to run in the same time as $D'_0$.
By applying $\sigma$, we have
\[\begin{split}
  &\Pr[C'[C[\sigma Q]] : (((D'_0 \vee D'_3) \wedge \neg \nonunique{C,D'_3}) \vee D'_2 \vee D\circ \sigma^{-1})) \wedge \neg \nonunique{\sigma Q,D'_2 \vee D\circ\sigma^{-1}}] \\
&\quad \leq \Pr[C'[C[\sigma Q']] : (((D'_0 \vee D'_3) \wedge \neg \nonunique{C,D'_3}) \vee D'_2 \vee D'\circ \sigma^{-1}))\\
  &\qquad \wedge \neg \nonunique{\sigma Q',D'_2\vee D'\circ\sigma^{-1}}] + p(\sigma^{-1} (C'[C[\,]]),t_{D'_0})
\end{split}\]
Since the events of $C$ are in $\usedevents^+$,
the events of $\nonunique{C,D'_3}$ are disjoint from those in $(D'_2 \vee D\circ \sigma^{-1})$,
so $(\neg \nonunique{C,D'_3}) \vee (D'_2 \vee D\circ \sigma^{-1}) = \neg \nonunique{C,D'_3}$
and similarly
$(\neg \nonunique{C,D'_3}) \vee (D'_2 \vee D'\circ \sigma^{-1}) = \neg \nonunique{C,D'_3}$.
So we have
\[\begin{split}
  &\Pr[C'[C[\sigma Q]] : (D'_0 \vee D'_3 \vee D'_2 \vee D\circ \sigma^{-1})) \wedge \neg \nonunique{C,D'_3} \wedge \neg \nonunique{\sigma Q,D'_2 \vee D\circ\sigma^{-1}}] \\
&\quad \leq \Pr[C'[C[\sigma Q']] : (D'_0 \vee D'_3 \vee D'_2 \vee D'\circ \sigma^{-1})) \wedge \neg \nonunique{C,D'_3}\\
  &\qquad \wedge \neg \nonunique{\sigma Q',D'_2\vee D'\circ\sigma^{-1}}] + p(\sigma^{-1} (C'[C[\,]]),t_{D'_0})
\end{split}\]
that is
\[\begin{split}
&\Pr[C'[C[\sigma Q]] : (D'_0 \vee D'_1 \vee D\circ \sigma^{-1})) \wedge \neg \nonunique{C[\sigma Q],D'_1 \vee D\circ\sigma^{-1}}] \\
&\quad \leq \Pr[C'[C[\sigma Q']] : (D'_0 \vee D'_1 \vee D'\circ \sigma^{-1})) \wedge \neg \nonunique{C[\sigma Q'],D'_1\vee D'\circ\sigma^{-1}}]\\
&\qquad + p(\sigma^{-1} (C'[C[\,]]),t_{D'_0})
\end{split}\]

Properties~\ref{indistadd} and~\ref{indistremove}: Obvious.
\proofcomplete
\end{proof}

When CryptoVerif transforms a game $G$ into a game $G'$, in most cases,
we have $G \approx^{V,\evset}_p G'$, where $p$ is the probability difference
coming from the transformation, and computed by CryptoVerif,
which implies $\dset,\dsetsnu : G, D, \usedevents \indistev{V}{p} G', D, \usedevents$, for all $\dset, \dsetsnu, D$ such that $\cevent(\dset) \cap \usedevents \subseteq \evset$, $\dsetsnu \subseteq \evset$, $\cevent(D) \subseteq \evset$ by Lemma~\ref{lem:indistev}, Property~\ref{linkindist1}.
However, there are exceptions to this situation:
\begin{itemize}
\item transformations that exploit the uniqueness of $\FIND\unique{e}$ or $\GET\unique{e}$,
which are valid only when event $e$ is not executed. These events
are taken into account by $\nonunique{Q,D}$.
\item transformations that insert events using Shoup's lemma.
This is the case of the transformations \rn{insert\_event} (see Section~\ref{sec:insertevent}) and \rn{insert} (see Section~\ref{sec:insert}). In this case, we have $\dset,\dsetsnu : G, \Dfalse, \usedevents \indistev{V}{p} G', e, \usedevents \cup \{e\}$ where $e$ is the introduced event.

The addition of Shoup events may also be combined with the cryptographic transformation of Section~\ref{sec:primdef}, for example for specifying the decisional Diffie-Hellman assumption.
In general, the cryptographic axioms are of the form
\[\dset_{\neg\usedevents_R},\emptyset : L, \Dfalse, \emptyset \indistev{}{p} R, D_R, \usedevents_R\]
where $L$ does not contain events,
$D_R = \bigvee \{ e \mid \keventabort{e} \text{ occurs in }R\}$
and $\usedevents_R = \{ e \text{ that occur in }R\}$ (both $\keventabort{e}$ and $\unique{e}$).
By Lemma~\ref{lem:indistev}, Property~\ref{indistctx}, we infer
\[\dset_{\neg\sigma\usedevents_R},\dsetsnu : C[L], \Dfalse, \usedevents \indistev{V}{p'} C[\sigma R], D_R\circ \sigma^{-1}, \usedevents\cup\sigma\usedevents_R \]
where 
$\sigma$ is a renaming of the events in $\usedevents_R$ to events not in $\usedevents$,
$C$ is a context acceptable for $L$ and $R$ such that 
the events that occur in $C$ are in $\usedevents$,
$\dsetsnu$ is a set of Shoup and non-unique events in $\usedevents$,
$p'(C',t_{D_0}) = p(\sigma^{-1}(C'[C[\,]]), t_{D_0})$,
and $V \subseteq \fvar(C)$.

Distinguishers in $\dset_{\neg\sigma\usedevents_R}$ include distinguishers that use events in $\usedevents$, in particular distinguishers for correspondences in $G$, as well as distinguishers in $\dset_{\neg(\usedevents\cup\sigma\usedevents_R)}$ used for secrecy and indistinguishability.

\item transformations that prove the absence of some events (up to some probability). For such transformations, we have
$\dset,\dsetsnu : G, e, \usedevents \indistev{V}{p} G, \Dfalse, \usedevents$
where $p$ is an upper bound of the probability of event $e$ in $G$.
For Shoup events, this generally happens when $G$ does not contain $e$ and $p(C,t_{D_0}) = 0$. For non-unique events, $p$ is the probability that the $\FIND\unique{e}$ or $\GET\unique{e}$ yields several possible choices; after this step, the event $e$ is in $\nonunique{G,D}$, so we can exploit uniqueness of $\FIND\unique{e}$ or $\GET\unique{e}$.

\end{itemize}
That is why, in general, when CryptoVerif transforms a game $G$ into a game $G'$, we have $\dset,\dsetsnu : G, D, \usedevents \indistev{V}{p} G', D', \usedevents'$.

There are still transformations that do not fit in this framework (\rn{guess} and \rn{guess\_branch}, because they multiply probabilities, as shown in Sections~\ref{sec:guessi}, \ref{sec:guessx}, and~\ref{sec:guess_branch}; \rn{success simplify} because it needs to compensate probabilities of traces that execute $\sevent$ with those that execute $\sbarevent$ to show soundness for secrecy, as shown in Section~\ref{sec:success_simplify}).

\subsubsection{Secrecy}\label{sec:def:secrecy}

Let us now define the secrecy properties that are proved
by CryptoVerif.

\begin{definition}[(One-session) secrecy]\label{def:secr}
  Let $Q$ be a process, $x$ a variable, and $V$ a set of variables.
  Let
{\allowdisplaybreaks\begin{align*}
\tproco{x} =\, & \cinput{\cSz}{}; \Res{b}{\bool}; \coutput{\cSz}{}; \\
&(\cinput{\cS}{\vf_1:[1, n_1], \ldots, \vf_m:[1, n_m]};\adeftest{x[\vf_1, \ldots, \vf_m]}{}\\*
&\phantom{(}\bguard{b}{\coutput{\cS}{x[\vf_1, \ldots, \vf_m]}}{\Res{y}{T}; \coutput{\cS}{y}}\\*
&\!\!\parpop \cinput{\cS'}{b':\bool}; \bguard{b = b'}{\keventabort{\sevent}}{\keventabort{\sbarevent}})\\
\tprocs{x} =\, & \cinput{\cSz}{}; \Res{b}{\bool}; \coutput{\cSz}{}; \\*
&(\repl{\iS}{\nS}\,\cinput{\cS[\iS]}{\vf_1:[1, n_1], \ldots, \vf_m:[1, n_m]};\adeftest{x[\vf_1, \ldots, \vf_m]}{}\\*
&\phantom{(}\bguard{b}{\coutput{\cS[\iS]}{x[\vf_1, \ldots, \vf_m]}}{}\\*
&\phantom{(}\kw{find} \ {\vfS' = \iS' \leq \nS}\ \kw{suchthat} \ \defined (y[\iS'],\vf_1[\iS'], \ldots, \vf_m[\iS']) \fand {}\\*
&\phantom{(}\qquad \vf_1[\iS'] = \vf_1 \fand \ldots \fand \vf_m[\iS'] = \vf_m\\*
&\phantom{(} \kw{then}\ \coutput{\cS[\iS]}{y[\vfS']}\\*
&\phantom{(} \ELSE\Res{y}{T}; \coutput{\cS[\iS]}{y}\\*
&\!\!\parpop \cinput{\cS'}{b':\bool}; \bguard{b = b'}{\keventabort{\sevent}}{\keventabort{\sbarevent}})
\end{align*}}%
where $\cSz, \cS, \cS' \notin \fc(Q)$, $\vf_1, \ldots, \vf_m, \vfS', y, b, b' \notin \fvar(Q) \cup V$, 
$\sevent$, $\sbarevent$ do not occur in $Q$, 
and $\tyenv(x) = [1,n_1] \times \ldots \times [1,n_m] \rightarrow T$.

Let $\prop$ be $\secrone(x)$ (\emph{one-session secrecy of $x$}) or $\secr(x)$ (\emph{secrecy of $x$}).
The events used by $\prop$ are $\sevent$ and $\sbarevent$.
Let $C_{\prop}= [\,] \parpop \tproc{\prop}$.

Let $C$ be an evaluation context acceptable for $C_{\prop}[Q]$ with public variables $V$ ($x \notin V$)
that does not contain the events used by $\prop$.
The advantage of the adversary $C$ against $\prop$ in process $Q$ is 
\[\Advt_Q^{\prop}(C) = \Pr[C[C_{\prop}[Q]] : \sevent] - \Pr[C[C_{\prop}[Q]] : \sbarevent]\]

The process $Q$ \emph{satisfies} $\prop$ with public variables $V$ ($x \notin V$) up to
probability $p$ when, for all
evaluation contexts $C$ acceptable for $C_{\prop}[Q]$ with public variables $V$
that do not contain the events used by $\prop$, $\Advt_Q^{\prop}(C) \leq p(C)$.
\end{definition}
Intuitively, when $Q$ satisfies $\prop$,
the adversary cannot guess the random bit $b$, that is, 
it cannot distinguish whether the test process $\tproc{\prop}$ outputs the
value of the secret ($b = \true$) or outputs a random number $(b = \false)$.

For one-session secrecy, the adversary performs a single test query, modeled by $\tproco{x}$.
In more detail, in $\tproco{x}$, 
we choose a random bit $b$; the 
adversary sends the indices $(\vf_1, \ldots, \vf_m)$ on channel $\cS$ to
perform a test query on $x[\vf_1, \ldots, \vf_m]$:
if $b = \true$, the test query sends back $x[\vf_1, \ldots, \vf_m]$;
if $b = \false$, it sends back a random value $y$.
Finally, the adversary should guess the bit $b$: it sends its guess $b'$ 
on channel $\cS'$ and, if the guess is correct, then event $\sevent$ is executed,
and otherwise, event $\sbarevent$ is executed.
The probability of getting some information on the secret
is the difference between the probability of $\sevent$
and the probability of $\sbarevent$.
(When the adversary always sends a guess on channel $\cS'$, we have $\Pr[C[C_{\prop}[Q]] : \sbarevent] = 1 - \Pr[C[C_{\prop}[Q]] : \sevent]$, so the advantage of the adversary is $\Advt_Q^{\prop}(C) = \Pr[C[C_{\prop}[Q]] : \sevent] - \Pr[C[C_{\prop}[Q]] : \sbarevent] = 2 \Pr[C[C_{\prop}[Q]]: \sevent] - 1$, which is a more standard formula.
By flipping a coin, the adversary can execute events $\sevent$ and $\sbarevent$ with the same probability, that is why the probability that the adversary really guesses $b$ is the difference between the probability of these two events. 
We need not take the absolute value of $\Pr[C[C_{\prop}[Q]] : \sevent] - \Pr[C[C_{\prop}[Q]] : \sbarevent]$ because, when it is negative, we can obtain the opposite, positive value by considering an adversary that sends the guess $\fnot b'$ instead of $b'$.)

For secrecy, the adversary
can perform several test queries, modeled by 
$\tprocs{x}$. This corresponds to the ``real-or-random''
definition of security~\cite{Abdalla06}. (As shown in~\cite{Abdalla06},
this notion is stronger than the more standard approach
in which the adversary can perform a single test query and some
reveal queries, which always reveal $x[\vf_1, \ldots, \vf_m]$.)
The replication bound $\nS$ in $\tproc{\secr(x)}$ is
chosen large enough so that it does not prevent communications that
would otherwise occur, so $\nS$ does not actually limit the
number of test queries. When we return a random value ($b = \false$) and
several tests queries are performed
on the same indices $\vf_1, \dots, \vf_m$, 
we must return the same random value. That is why, in this case, we look for
previous test queries ($\FIND\ \vfS'$\dots) and return the previous value of $y$ in case a previous test query was performed with the same indices.
For different indices $\vf_1, \dots, \vf_m$, the returned random values
are independent of each other, so the secrecy of $x$ requires that
the cells of array $x$ are indistinguishable from independent random
values. In contrast, the one-session secrecy of $x$ only requires
that all array cells of $x$ are indistinguishable from random values,
not that they are independent of each other.

By Invariant~\ref{invfc}, the variables defined in conditions of $\kw{find}$ and in patterns and in conditions of $\kw{get}$ have no array accesses. Therefore, the definition above applies only to variables $x$ that are not defined in conditions of $\kw{find}$ nor in patterns nor in conditions of $\kw{get}$. 

\begin{lemma}\label{lem:transfersec}
  Let $\prop$ be $\secrone(x)$ or $\secr(x)$.
  
  If $Q$ satisfies $\prop$ with public variables $V$ up to probability $p$ 
  and $C$ is an acceptable evaluation context for $Q$ with
  public variables $V$, then for all $V' \subseteq V \cup \fvar(C)$,
  $C[Q]$ satisfies $\prop$ with public variables $V'$ up to probability $p'$ 
  such that $p'(C') = p(C'[C])$.

  If $Q \approx^{V \cup \{x\},\evset}_p Q'$ and $Q$ satisfies $\prop$ with
  public variables $V$ up to probability $p'$, then $Q'$ satisfies $\prop$ with
  public variables $V$ up to probability $p''$ such that 
  $p''(C) = p'(C) + 2 \times p(C[C_{\prop}[\,]], t_{\sevent})$.
\end{lemma}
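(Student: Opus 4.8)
The plan is to treat the two parts separately, in each case reducing a statement about the composite process $C_{\prop}[\cdot]$ to the hypothesis on $Q$ by commuting the test process $\tproc{\prop}$ with the surrounding evaluation context. The crucial syntactic ingredient, used in both parts, is the identity
\[C_{\prop}[C[Q]] = (C[Q]) \parpop \tproc{\prop} \qquad\text{versus}\qquad C[C_{\prop}[Q]] = C[Q \parpop \tproc{\prop}],\]
where the two processes produce the \emph{same} semantic configuration. Indeed, rules~\eqref{sem:par} and~\eqref{sem:newchannel} make $\redq$ treat $\parpop$ as associative–commutative and perform scope extrusion for $\NEWCHANNEL$; since the channels bound by $C$ may be renamed to avoid the free channels $\cSz,\cS,\cS'$ of $\tproc{\prop}$ (which are not in $\fc(Q)$ by Definition~\ref{def:secr}), both expressions yield the same normal form under $\reduce$, hence the same $\pset,\cset$ in $\initconfig$, and therefore assign equal probability to every sequence of events.

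For the first part I would then note that, applying $C'$ on the outside and using context composition, $C'[C_{\prop}[C[Q]]]$ and $(C'[C[\,]])[C_{\prop}[Q]] = C'[C[C_{\prop}[Q]]]$ reduce to the same configuration. First I would check that $C'[C[\,]]$ is an evaluation context acceptable for $C_{\prop}[Q]$ with public variables $V$, using $V' \subseteq V \cup \fvar(C)$, the acceptability of $C$ for $Q$ with $V$ and of $C'$ for $C_{\prop}[C[Q]]$ with $V'$, and the fact (from Definition~\ref{def:indist}) that $\vardef(C)\cap\fvar(Q)=\emptyset$; this fixes the $\fvar$, $\vardef$, table, and Invariant conditions for the composed context. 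The conclusion is then immediate:
\[\Advt_{C[Q]}^{\prop}(C') = \Advt_Q^{\prop}(C'[C[\,]]) \leq p(C'[C[\,]]) = p'(C'),\]
where the first equality holds because the two configurations coincide, so $\Pr[\cdot:\sevent]$ and $\Pr[\cdot:\sbarevent]$ agree, and the inequality is the secrecy hypothesis for $Q$.

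For the second part I would set $\tilde C = C[C_{\prop}[\,]] = C[\,[\,]\parpop\tproc{\prop}\,]$, which is again an evaluation context. First I would verify that $\tilde C$ is acceptable for $Q$ and for $Q'$ with public variables $V\cup\{x\}$: the only free variable of $\tproc{\prop}$ lying in $\fvar(Q)$ is $x\in V\cup\{x\}$ (the variables $\vf_1,\dots,\vf_m,\vfS',y,b,b'$ are fresh by Definition~\ref{def:secr}), and the acceptability of $C$ for $C_{\prop}[Q']$ transfers to $C_{\prop}[Q]$ after renaming fresh variables, as in the remark following Definition~\ref{def:indist}. Since $\sevent,\sbarevent$ occur neither in $Q$ nor in $Q'$, we have $\{\sevent\}\cap\cevent(Q,Q')=\emptyset\subseteq\evset$ and likewise for $\sbarevent$, so both are admissible distinguishers in $Q\approx^{V\cup\{x\},\evset}_p Q'$. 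Instantiating indistinguishability at these two distinguishers gives
\[\Pr[\tilde C[Q']:\sevent]\leq\Pr[\tilde C[Q]:\sevent]+p(\tilde C,t_{\sevent}), \qquad \Pr[\tilde C[Q]:\sbarevent]\leq\Pr[\tilde C[Q']:\sbarevent]+p(\tilde C,t_{\sbarevent}).\]
Subtracting, using $\tilde C[Q]=C[C_{\prop}[Q]]$, $\tilde C[Q']=C[C_{\prop}[Q']]$, and $t_{\sevent}=t_{\sbarevent}$, then applying the secrecy hypothesis for $Q$ (valid since $C$ is acceptable for $C_{\prop}[Q]$) yields
\[\Advt_{Q'}^{\prop}(C)\leq\Advt_Q^{\prop}(C)+2\,p(\tilde C,t_{\sevent})\leq p'(C)+2\,p(C[C_{\prop}[\,]],t_{\sevent})=p''(C).\]

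I expect the main obstacle to be bookkeeping rather than computation. The probabilistic manipulations above are immediate once the commutation identity is in place, so the real work is twofold: establishing that the two process expressions genuinely collapse to the same configuration (a small lemma about scope extrusion and the associative–commutative, multiset treatment of $\parpop$ by $\redq$ and $\reduce$, together with the channel renaming needed to avoid capturing $\cSz,\cS,\cS'$), and discharging the acceptability side-conditions of Definition~\ref{def:indist} for the composed contexts $C'[C[\,]]$ and $C[C_{\prop}[\,]]$, i.e.\ tracking $\fvar$, $\vardef$, disjointness of tables, and Invariants~\ref{inv1}--\ref{inv3}. The remaining delicate point is ensuring $\{\sevent,\sbarevent\}\cap\cevent(Q,Q')=\emptyset$ so that $\sevent$ and $\sbarevent$ are legitimate distinguishers for $Q\approx^{V\cup\{x\},\evset}_p Q'$; this follows from the freshness conditions on $\sevent,\sbarevent$ built into Definition~\ref{def:secr}.
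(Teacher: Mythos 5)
Your proposal is correct and follows essentially the same route as the paper's proof: part one commutes $C$ with $C_{\prop}$ (justified by $C$ not binding the channels of $\tproc{\prop}$) and checks acceptability of the composed context $C'[C[\,]]$ for $C_{\prop}[Q]$; part two applies the indistinguishability hypothesis with context $C[C_{\prop}[\,]]$ and the distinguishers $\sevent$, $\sbarevent$ (the paper phrases this via absolute-value bounds, you via two one-sided instantiations, which is equivalent), using $t_{\sevent}=t_{\sbarevent}$ and the renaming of variables/tables of $C$ so that it is also acceptable for $C_{\prop}[Q]$.
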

\begin{proof}
Suppose that $Q$ satisfies $\prop$ with public variables $V$ 
($x \notin V$)
and $C$ is an acceptable evaluation context for $Q$ with
  public variables $V$. Let $V' \subseteq V \cup \fvar(C)$.
Choose channels $\cSz, \cS, \cS'$, variables $\vf_1, \ldots, \vf_m, \vfS', y, b, b'$,
and events $\sevent$, $\sbarevent$  such that they do not occur in $C[Q]$.
Let $C'$ be an acceptable evaluation context for $C_{\prop}[C[Q]]$
with public variables $V'$ that does not contain $\sevent$ nor $\sbarevent$.
Then we have
\[\begin{split}
\Advt_{C[Q]}^{\prop}(C') 
&= \Pr[C'[C_{\prop}[C[Q]]] : \sevent] - \Pr[C'[C_{\prop}[C[Q]]] : \sbarevent]\\
&= \Pr[C'[C[C_{\prop}[Q]]] : \sevent] - \Pr[C'[C[C_{\prop}[Q]]] : \sbarevent]\\
&\leq p(C'[C])
\end{split}\]
We can commute the contexts $C$ and $C_{\prop}$
because the context $C$ does not bind the channels of $\tproc{\prop}$.
The context $C'[C]$ is an acceptable evaluation context for $C_{\prop}[Q]$
with public variables $V$ that does not contain $\sevent$ nor $\sbarevent$:
there is no common table between $C$ and $Q$, and between $C'$ and $C_{\prop}[C[Q]]$,
so a fortiori between $C'$ and $Q$ and $\tproc{\prop}$ does not use tables, so there is
no common table between $C'[C]$ and $C_{\prop}[Q]$; moreover
\begin{align*}
&\fvar(C'[C]) \cap \fvar(C_{\prop}[Q]) \\
&\quad = ((\fvar(C') \cap \fvar(C_{\prop}[Q])) \cup \fvar(C)) \cap \fvar(C_{\prop}[Q])\\
&\quad \subseteq (V' \cup \fvar(C)) \cap \fvar(C_{\prop}[Q]) \tag*{since $\fvar(C') \cap \fvar(C_{\prop}[C[Q]]) \subseteq V'$}\\
&\quad \subseteq (V \cup \fvar(C)) \cap \fvar(C_{\prop}[Q]) \tag*{since $V' \subseteq V \cup \fvar(C)$}\\
&\quad \subseteq V \tag*{since $\fvar(C) \cap \fvar(Q) \subseteq V$ and $\fvar(C) \cap \fvar(\tproc{\prop}) = \emptyset$}
\end{align*}

Suppose that $Q \approx^{V \cup \{x\},\evset}_p Q'$ and $Q$ satisfies $\prop$ with
  public variables $V$ up to probability $p'$.
Let $C$ be an acceptable evaluation context for $C_{\prop}[Q']$
with public variables $V$ that does not contain $\sevent$ nor $\sbarevent$.
\[\begin{split}
\Advt_{Q'}^{\prop}(C) 
&= \Pr[C[C_{\prop}[Q']] : \sevent] - \Pr[C[C_{\prop}[Q']] : \sbarevent]\\
&\leq \Pr[C[C_{\prop}[Q]] : \sevent] - \Pr[C[C_{\prop}[Q]] : \sbarevent] + {}\\
&\phantom{{}\leq {}} | \Pr[C[C_{\prop}[Q']] : \sevent] - \Pr[C[C_{\prop}[Q]] : \sevent] | + {}\\
&\phantom{{}\leq {}} | \Pr[C[C_{\prop}[Q]] : \sbarevent] - \Pr[C[C_{\prop}[Q']] : \sbarevent] |\\
&\leq p'(C) + 2 \times p(C[C_{\prop}[\,]], t_{\sevent})
\end{split}\]
since $t_{\sevent} = t_{\sbarevent}$.
Indeed, by renaming the variables and tables of $C$ that do not appear in $Q'$ to variables and tables 
that also do not occur in $Q$,
$C$ is also an acceptable evaluation context for $C_{\prop}[Q]$
with public variables $V$. 
Furthermore, by Property~\ref{prop:prob}, this renaming does not change the probabilities.
\proofcomplete
\end{proof}

\subsubsection{Secrecy for a Bit}\label{sec:secrbit}

\begin{definition}[Bit secrecy]\label{def:secrbit}
  Let $Q$ be a process, $x$ a boolean variable defined under no replication,
  and $V$ a set of variables.
  Let
\begin{align*}
  \tprocb{x} =\, &\cinput{\cS''}{b':\bool}; \adeftest{x}
  \bguard{x = b'}{\keventabort{\sevent}}{\keventabort{\sbarevent}}
\end{align*}
where $\cS'' \notin \fc(Q)$, $b' \notin \fvar(Q) \cup V$, 
$\sevent$, $\sbarevent$ do not occur in $Q$, and $\tyenv(x) = \bool$.

Let $\prop$ be $\secrbit(x)$ (\emph{bit secrecy of $x$}).
The events used by $\prop$ are $\sevent$ and $\sbarevent$.
Let $C_{\prop}= [\,] \parpop \tproc{\prop}$.
The definitions of $\Advt_Q^{\prop}(C)$ and ``$Q$ \emph{satisfies} $\prop$''
are as in Definition~\ref{def:secr}.
\end{definition}
Intuitively, when $Q$ satisfies $\prop$,
the adversary cannot guess the boolean $x$, that is, 
it cannot distinguish whether $x = \true$ or $x = \false$.
The adversary performs a single test query, modeled by $\tprocb{x}$.
This definition is simpler than the definition of (one-session) secrecy for $x$,
because we do not introduce an additional random bit $b$.

By Invariant~\ref{invfc}, the variables defined in conditions of $\kw{find}$ and in patterns and in conditions of $\kw{get}$ have no array accesses. Therefore, the definition above applies only to variables $x$ that are not defined in conditions of $\kw{find}$ nor in patterns nor in conditions of $\kw{get}$. 

Lemma~\ref{lem:transfersec} is also valid when $\prop = \secrbit(x)$, with the same
statement and proof.

\begin{lemma}\label{lem:secrecy-comp}
  If $b_0$ is a boolean variable defined under no replication and $Q$ preserves the one-session secrecy of $b_0$ with public variables $V$ up to probability $p$, then $Q$ preserves the bit secrecy of $b_0$ with public variables $V$ up to probability $2p$.
\end{lemma}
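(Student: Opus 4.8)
The plan is to reduce bit secrecy to one-session secrecy by the standard ``guess-the-bit vs.\ real-or-random'' argument, turning any adversary against bit secrecy into an adversary against one-session secrecy whose advantage is exactly half. Fix an evaluation context $C$ acceptable for $C_{\secrbit(b_0)}[Q]$ with public variables $V$ that does not use $\sevent,\sbarevent$, where $C_{\secrbit(b_0)} = [\,]\parpop\tprocb{b_0}$. Since $b_0$ is defined under no replication we have $m=0$ and $\tyenv(b_0)=\bool$, so the one-session test process $\tproco{b_0}$ draws a bit $b$, answers a single test query on $\cS$ by $b_0$ (if $b=\true$) or by a fresh bit $y$ (if $b=\false$), and compares the adversary's guess $b'$ sent on $\cS'$ against $b$.

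First I would build the reduction context $C^*$. Choose $\cSz,\cS,\cS'$ fresh (not in $C$ nor in $\fc(Q)$) and set $C^*[\,\cdot\,] = C\big[\Reschan{\cSz};\Reschan{\cS};\Reschan{\cS'};(\,\cdot\, \parpop S)\big]$, so that $C^*[C_{\secrone(b_0)}[Q]] = C[\Reschan{\cSz};\Reschan{\cS};\Reschan{\cS'};(Q\parpop\tproco{b_0}\parpop S)]$. The simulator $S$ offers to $C$ exactly the interface of $\tprocb{b_0}$ on channel $\cS''$: it waits for $C$'s guess $b'$ while keeping the one-session channels private. On receiving $b'$, $S$ performs the single real-or-random query on $\cS$; if it gets back a value $z$ (which happens iff $b_0$ is defined), it sends the guess $b'' = (z = b')$ on $\cS'$, thereby triggering $\sevent$ or $\sbarevent$; if $b_0$ is undefined, the query branch of $\tproco{b_0}$ yields, $S$ receives nothing, blocks, and never sends on $\cS'$. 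I would then check that $C^*$ is an evaluation context acceptable for $C_{\secrone(b_0)}[Q]$ with public variables $V$: it never reads $b_0$ (it learns $b_0$ only through $z$, exactly as permitted since $b_0\notin V$), the private channels avoid name clashes, $C^*$ introduces no table not already in $C$, and it does not contain $\sevent,\sbarevent$.

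The heart of the argument is a probability-preserving correspondence between the bit-secrecy run $C[C_{\secrbit(b_0)}[Q]]$ and the reduction run $C^*[C_{\secrone(b_0)}[Q]]$. The key observations are that $b,y\notin\fvar(Q)$, that $\cSz,\cS,\cS'$ are private, and that $S$ produces nothing observable to $C$ before $C$ commits $b'$; hence $C$'s entire interaction with $Q$ --- and therefore the joint law of the event $\mathsf{def}$ (``$b_0$ is defined when $C$ commits its guess''), the value of $b_0$, and the guess $b'$ --- is identical in both runs, and is independent of the fresh bits $b$ and $y$. Here I would use monotonicity of the environment (Lemma~\ref{lem:sem-ext}) together with $S$'s control of scheduling to argue that $b_0$ is defined at $S$'s query time iff it is defined at $C$'s commit time, so that $z = b_0$ when $b=\true$ and $z=y$ (uniform, independent) when $b=\false$. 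Writing $d = \Pr[\mathsf{def}]$ and $\beta = \Pr[b_0 = b' \mid \mathsf{def}]$, a direct computation using $b'' = (z=b')$ gives $\Advt_Q^{\secrone(b_0)}(C^*) = \Pr[\sevent]-\Pr[\sbarevent] = \tfrac{d}{2}(2\beta-1)$ --- only $\mathsf{def}$-traces contribute, since $S$ fires no event otherwise --- whereas $\Advt_Q^{\secrbit(b_0)}(C) = \Pr[b_0=b'\wedge\mathsf{def}]-\Pr[b_0\neq b'\wedge\mathsf{def}] = d(2\beta-1)$. Hence $\Advt_Q^{\secrbit(b_0)}(C) = 2\,\Advt_Q^{\secrone(b_0)}(C^*)$.

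Finally, since $Q$ preserves one-session secrecy of $b_0$ with public variables $V$ up to $p$ and $C^*$ is acceptable, $\Advt_Q^{\secrone(b_0)}(C^*)\le p(C^*)$, so $\Advt_Q^{\secrbit(b_0)}(C)\le 2p(C^*)$; defining the bit-secrecy bound by $p'(C) = 2p(C^*)$ --- the reduction context $C^*$ adds to $C$ only the constant-size simulator $S$, so by Property~\ref{prop:prob} this is the ``$2p$'' of the statement --- yields the claim. I expect the main obstacle to be the faithful formalization of $S$ in the process calculus and the trace-level correspondence of the third step, in particular pinning down the scheduling so that $\defined(b_0)$ is evaluated at the right moment and so that $\neg\mathsf{def}$-traces contribute zero to both advantages, rather than the arithmetic, which is elementary.
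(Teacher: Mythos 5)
Your proposal is correct and takes essentially the same approach as the paper: the paper's reduction context $C'$ is precisely your simulator $S$ (receive the bit-secrecy guess on $\cS''$, make the single real-or-random test query, and answer the one-session challenge with the equality of the two bits), so the two reductions coincide. Your $(d,\beta)$ computation is just a reparametrization of the paper's $\tfrac{3}{4}$/$\tfrac{1}{4}$ case analysis; both give $\Advt_Q^{\secrbit(b_0)}(C) = 2\,\Advt_Q^{\secrone(b_0)}(C^*)$ and hence the bound $2p$.
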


\begin{proof}
  Let $C$ be any acceptable evaluation context for $Q \parpop \tprocb{b_0}$
  with public variables $V$.
  Let 
  \[
    C' = C[\_\parpop \cinput{\cS''}{b'_0:\bool};
    \coutput{\cSz}{};\cinput{\cSz}{};
    \coutput{\cS}{};\cinput{\cS}{b''_0:\bool};\coutput{\cS'}{b'_0 = b''_0}]\,.
    \]
  We execute $C'[Q \parpop \tproco{b_0}]$.

When $b'_0 = b_0$ ($C[Q \parpop \tprocb{b_0}]$ executes $\sevent$),
\begin{itemize}
\item if $b = \true$ (probability 1/2), then $b''_0 = b_0$,
  so $b' = (b'_0 = b''_0) = \true$ and $\sevent$ is executed with probability 1/2
\item if $b = \false$ (probability 1/2), then $b''_0$ is random,
  so
  \begin{itemize}
    \item $b''_0 = b'_0$ with probability 1/4,
      so $b' = (b'_0 = b''_0) = \true$ and $\sbarevent$ is executed;
    \item $b''_0 = \neg b'_0$ with probability 1/4,
      so $b' = (b'_0 = b''_0) = \false$ and $\sevent$ is executed.
  \end{itemize}
\end{itemize}

When $b'_0 = \neg b_0$ ($C[Q \parpop \tprocb{b_0}]$ executes $\sbarevent$),
\begin{itemize}
\item if $b = \true$ (probability 1/2), then $b''_0 = b_0$,
  so $b'$ is false and $\sbarevent$ is executed with probability 1/2;
\item if $b = \false$ (probability 1/2), then $b''_0$ is random,
  so
  \begin{itemize}
    \item $b''_0 = b'_0$ with probability 1/4,
      so $b' = \true$ and $\sbarevent$ is executed;
    \item $b''_0 = \neg b'_0$ with probability 1/4,
      so $b' = \false$ and $\sevent$ is executed.
  \end{itemize}
\end{itemize}
So
\begin{align*}
&  \Pr[C'[Q \parpop \tproco{b_0}]:\sevent] = \frac{3}{4} \Pr[C[Q \parpop \tprocb{b_0}]:\sevent] + \frac{1}{4} \Pr[C[Q \parpop \tprocb{b_0}]:\sbarevent] \\
&\Pr[C'[Q \parpop \tproco{b_0}]:\sbarevent] = \frac{1}{4} \Pr[C[Q \parpop \tprocb{b_0}]:\sevent] + \frac{3}{4} \Pr[C[Q \parpop \tprocb{b_0}]:\sbarevent]
\end{align*}
Finally, we obtain
\begin{align*}
  &\Pr[C'[Q \parpop \tproco{b_0}]:\sevent] - \Pr[C'[Q \parpop \tproco{b_0}]:\sbarevent]\\
&\quad = \frac{1}{2}(\Pr[C[Q \parpop \tprocb{b_0}]:\sevent]-\Pr[C[Q \parpop \tprocb{b_0}]:\sbarevent])
\end{align*}
so
$\Pr[C[Q \parpop \tprocb{b_0}]:\sevent]-\Pr[C[Q \parpop \tprocb{b_0}]:\sbarevent] = 2(\Pr[C'[Q \parpop \tproco{b_0}]:\sevent] - \Pr[C'[Q \parpop \tproco{b_0}]:\sbarevent]) \leq 2p(C') = 2p(C)$, neglecting the additional runtime of $C'$.
\proofcomplete
\end{proof}

Intuitively, the factor 2 is necessary, because in the definition of one-session secrecy, even if the adversary knows the secret bit $b_0$ perfectly, it will not be able to distinguish $b_0$ from the random bit $y$ in half of the cases, because $b_0$ and $y$ have the same value.

In the rest of Section~\ref{sec:secrbit}, we consider a process
\[Q = \cinput{c}{}; \Res{b_0}{\bool}; \coutput{c}{};Q'\]
and let $Q_L = Q'\{\true/b_0\}$ and $Q_R = Q'\{\false/b_0\}$,
so that $Q$ chooses a random bit $b_0$ and runs as $Q_L$ when $b_0$ is true
and as $Q_R$ when $b_0$ is false.
We assume that $Q_L$ and $Q_R$ never abort, that is, they contain neither $\kw{event\string_abort}$ nor $\FIND\unique{e}$. Moreover, they do not use the variable $b_0$.
We assume that the channels of the inputs at the root of $Q_L$ and $Q_R$ are not used elsewhere in $Q_L$ or $Q_R$.
We have the following lemmas.

\newcommand{\ctxdist}[2]{{#1}+t_{#2}}%

\begin{lemma}\label{lem:secrecy-dist}
  If $Q$ preserves the bit secrecy of $b_0$ with public variables $V$ up to probability $p$, then $Q_L \approx^{V,\emptyset}_{p'} Q_R$ where $p'(C, t_D) = p(\ctxdist{C}{D})$ and the context $\ctxdist{C}{D}$ runs in time $t_C + t_D$ and its other parameters (replication bounds, lengths of bitstrings) are the same as for $C$.
\end{lemma}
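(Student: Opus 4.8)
The plan is to reduce the indistinguishability $Q_L \approx^{V,\emptyset}_{p'} Q_R$ to the bit secrecy of $b_0$ by turning a distinguishing adversary into a guessing adversary. Fix an evaluation context $C$ acceptable for $Q_L$ and $Q_R$ with public variables $V$ and a distinguisher $D$ running in time at most $t_D$ with $\cevent(D) \cap \cevent(Q_L,Q_R) = \emptyset$. From $C$ and $D$ I will build a single adversary context $C^\ast$ against $C_{\secrbit(b_0)}[Q] = Q \parpop \tprocb{b_0}$ whose advantage against bit secrecy equals $\Pr[C[Q_L]:D] - \Pr[C[Q_R]:D]$, and which runs in time $t_C + t_D$ with the same other parameters as $C$, so that $C^\ast = \ctxdist{C}{D}$.

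The context $C^\ast$ first performs the handshake of $Q$: it sends on $c$ to trigger the initial input of $Q$, which draws $b_0$ and outputs on $c$; after $C^\ast$ receives this output, the continuation $Q'$ (which is $Q_L$ when $b_0 = \true$ and $Q_R$ when $b_0 = \false$) is an active input process. From this point $C^\ast$ runs the code of $C$, interacting with the continuation on the root channels of $Q_L$, resp.\ $Q_R$; since these root channels occur nowhere else in $Q_L$ and $Q_R$ and the handshake uses only $c$, the continuation behaves exactly as $Q_L$, resp.\ $Q_R$, would behave in the hole of $C$. While simulating $C$, the context $C^\ast$ also records the events that $C$ executes and, once $C$ terminates, computes the boolean $d = D(\evseqnopp)$ on the recorded events and sends the guess $b' = d$ on $\cS''$ to $\tprocb{b_0}$. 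The key point is that $\cevent(D)$ is disjoint from $\cevent(Q_L,Q_R)$, so the value of $D$ depends only on the events produced by $C$, which $C^\ast$ reproduces internally; hence $d$ equals $D$ applied to the full event sequence of the matching run of $C[Q_L]$ or $C[Q_R]$. Because $Q_L$ and $Q_R$ never abort, the continuation always reaches a point where $C$ finishes and the guess is delivered, so $\tprocb{b_0}$ always fires $\sevent$ or $\sbarevent$. Taking $\sevent,\sbarevent$ fresh keeps them out of $C^\ast$, and renaming the variables and tables of $C$ away from $Q$ and $\tprocb{b_0}$ (which does not change probabilities, by Property~\ref{prop:prob}) makes $C^\ast$ acceptable for $C_{\secrbit(b_0)}[Q]$ with public variables $V$; here we use $b_0 \notin V$.

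Conditioning on $b_0$ (each value with probability $1/2$) and using the trace correspondence above, with $G^\ast = C^\ast[C_{\secrbit(b_0)}[Q]]$, the event $\sevent$ fires exactly when $b_0 = b'$, so
\begin{align*}
\Pr[G^\ast : \sevent] &= \tfrac12\Pr[C[Q_L]:D] + \tfrac12\bigl(1 - \Pr[C[Q_R]:D]\bigr),\\
\Pr[G^\ast : \sbarevent] &= \tfrac12\bigl(1 - \Pr[C[Q_L]:D]\bigr) + \tfrac12\Pr[C[Q_R]:D],
\end{align*}
whence $\Advt_Q^{\secrbit(b_0)}(C^\ast) = \Pr[G^\ast:\sevent] - \Pr[G^\ast:\sbarevent] = \Pr[C[Q_L]:D] - \Pr[C[Q_R]:D]$. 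By bit secrecy this is at most $p(C^\ast) = p(\ctxdist{C}{D}) = p'(C,t_D)$, giving $\Pr[C[Q_L]:D] - \Pr[C[Q_R]:D] \leq p'(C,t_D)$.

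To get the absolute value I repeat the construction with the distinguisher $\neg D$, which runs in the same time and satisfies $\cevent(\neg D) = \cevent(D)$, so the associated adversary is again $\ctxdist{C}{D}$; using $\Pr[C[Q_L]:\neg D] = 1 - \Pr[C[Q_L]:D]$ and the analogous identity for $Q_R$, this yields $\Pr[C[Q_R]:D] - \Pr[C[Q_L]:D] \leq p'(C,t_D)$. Combining the two inequalities gives $|\Pr[C[Q_L]:D] - \Pr[C[Q_R]:D]| \leq p'(C,t_D)$, i.e.\ $Q_L \approx^{V,\emptyset}_{p'} Q_R$. The main obstacle is the second paragraph: making $C^\ast$ a genuine process of the calculus and proving the trace-level correspondence rigorously, in particular that $C^\ast$ can recompute $D$ from $C$'s events alone --- which is precisely where the hypothesis $\cevent(D)\cap\cevent(Q_L,Q_R)=\emptyset$ is needed --- and that the handshake together with the no-abort hypothesis lets every run of $C[Q_L]$ and $C[Q_R]$ lift to a run of $G^\ast$ in which the guess is evaluated.
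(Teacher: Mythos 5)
Your proposal is correct and follows essentially the same route as the paper's proof: build from $C$ and $D$ a single guessing context (the paper's $C'$, your $C^\ast$) that performs the handshake on $c$, internally simulates $C$ while recording its events, computes $D$ on the recorded events, and submits the result as the guess $b'$ on $\cS''$, then observe that its bit-secrecy advantage equals $\Pr[C[Q_L]:D]-\Pr[C[Q_R]:D]$. The only cosmetic difference is how the reverse inequality is obtained — you rerun the argument with $\neg D$, the paper negates the guess bit $b'_0$ — which is the same operation, and like the paper you ultimately rest the existence of the simulating context on the Turing-machine encoding of adversaries.
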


The processes $Q_L$ and $Q_R$ can execute different
events without breaking the bit secrecy of $b_0$, because the adversary
for the bit secrecy of $b_0$ does not have access to the events executed
by $Q$. Hence, $Q_L \approx^V_{p'} Q_R$ would not hold in general.

\begin{proof}
  Let $C$ be any acceptable evaluation context for $Q_L$ and $Q_R$
  with public variables $V$, and $D$ a distinguisher such that
  $\cevent(D) \cap \cevent(Q_L,Q_R) = \emptyset$.
Let $c$ and $\cS''$ be a channel that $C$ does not use.

Let $C'$ be a context that outputs on channel $c$, inputs on channel $c$,
runs $C$ but stores events executed by $C$ in its internal state instead of actually executing events, 
computes $D$ on the stored sequence of events executed by $C$ and stores the result in $b'_0$, 
and sends $b'_0$ on channel $\cS''$.
Such a context $C'$ exists because it can be encoded
as a probabilistic Turing machine adversary, which can itself be
encoded as a context in CryptoVerif, as shown in Section~\ref{sec:Turing_adv}.

When $b_0$ is $\true$,
$C'[Q \parpop \tprocb{x}]$ stores in $b'_0$ the result of $C[Q_L]:D$.
When $b'_0 = \true$, $\sevent$ is executed.
When $b'_0 = \false$, $\sbarevent$ is executed.
So
\begin{align*}
&\Pr[C'[Q \parpop \tprocb{x}]:\sevent / b_0 = \true] = \Pr[C[Q_L]:D]\\
&\Pr[C'[Q \parpop \tprocb{x}]:\sbarevent / b_0 = \true] = 1-\Pr[C[Q_L]:D]
\end{align*}

When $b_0$ is $\false$,
$C'[Q \parpop \tprocb{x}]$ stores in $b'_0$ the result of $C[Q_R]:D$.
When $b'_0 = \true$, $\sbarevent$ is executed.
When $b'_0 = \false$, $\sevent$ is executed.
So
\begin{align*}
&\Pr[C'[Q \parpop \tprocb{x}]:\sevent / b_0 = \false] = 1-\Pr[C[Q_R]:D]\\
&\Pr[C'[Q \parpop \tprocb{x}]:\sbarevent / b_0 = \false] = \Pr[C[Q_R]:D]
\end{align*}
Finally, we obtain
\begin{align*}
  &\Pr[C'[Q \parpop \tprocb{x}]:\sevent] - \Pr[C'[Q \parpop \tprocb{x}]:\sbarevent]\\
  &\quad = \frac{1}{2}(\Pr[C[Q_L]:D] + 1-\Pr[C[Q_R]:D] - (1-\Pr[C[Q_L]:D]) - \Pr[C[Q_R]:D])\\
  &\quad = \Pr[C[Q_L]:D] - \Pr[C[Q_R]:D]
\end{align*}
so $\Pr[C[Q_L]:D] - \Pr[C[Q_R]:D] \leq p'(C, t_D)$.
By negating the bit $b'_0$, we swap the events $\sevent$ and $\sbarevent$ without
changing the probability, so $\Pr[C[Q_R]:D] - \Pr[C[Q_L]:D] \leq p'(C, t_D)$.
Therefore, $|\Pr[C[Q_L]:D] - \Pr[C[Q_R]:D]| \leq p'(C, t_D)$.
So $Q_L \approx^{V,\emptyset}_{p'} Q_R$.
\proofcomplete
\end{proof}

\newcommand{\diff}{\kwf{diff}}
\newcommand{\fst}{\kwf{fst}}
\newcommand{\snd}{\kwf{snd}}

Lemma~\ref{lem:secrecy-dist} is the main motivation for the notion of secrecy for a bit: it allows proving indistinguishability between two processes by showing secrecy of bit $b_0$. Using this notion instead of one-session secrecy of $b_0$ avoids losing a factor 2, as shown by Lemma~\ref{lem:secrecy-comp}.

We use this idea to encode the $\diff$ construct originally introduced in ProVerif~\cite{Blanchet07b}: given a process $Q_1$ that contains terms $\diff[M,M']$ and processes $\diff[P,P']$, we define $\fst(Q_1)$ as $Q_1$ with $\diff[M,M']$ replaced with $M$ and $\snd(Q_1)$ as $Q_1$ with $\diff[M,M']$ replaced with $M'$, and similarly for $\diff[P,P']$; the goal is to show that $\fst(Q_1) \approx^{V,\emptyset}_p \snd(Q_1)$ for some $p$ and determine $p$. In order to do that, we define $Q'$ as $Q_1$ with $\diff[M,M']$ replaced with $\iffun(b_0,M,M')$ when $M$ and $M'$ are simple and with $\bguard{b_0}{M}{M'}$ otherwise\footnote{$\iffun(b_0,M,M')$ differs from $\bguard{b_0}{M}{M'}$ in that it evaluates both $M$ and $M'$. Since $\diff[M,M']$ evaluates either $M$ or $M'$ but not both, we translate it into $\bguard{b_0}{M}{M'}$ when the evaluation of $M$ or $M'$ may modify the semantic state, e.g.~by executing an event or by defining a variable. The evaluation of simple terms does not modify the semantic state.}, $\diff[P,P']$ replaced with $\bguard{b_0}{P}{P'}$, and $Q = \cinput{c}{}; \Res{b_0}{\bool}; \coutput{c}{};Q'$. We have $Q_L = Q'\{\true/b_0\} \approx^{V,\emptyset}_0 \fst(Q_1)$ and $Q_R = Q'\{\false/b_0\} \approx^{V,\emptyset}_0 \snd(Q_1)$, so by Lemma~\ref{lem:secrecy-dist}, if $Q$ preserves the bit secrecy of $b_0$ with public variables $V$ up to probability $p$, then $\fst(Q_1) \approx^{V,\emptyset}_{p'} \snd(Q_1)$ where $p'(C, t_D) = p(\ctxdist{C}{D})$.

\begin{lemma}\label{lem:halfobsequi}
  If $Q_L \approx^{V,\emptyset}_{p} Q_R$ then $Q \approx^{V \cup \{b_0\},\emptyset}_{p/2} \cinput{c}{}; \Res{b_0}{\bool}; \coutput{c}{};Q_R$.
\end{lemma}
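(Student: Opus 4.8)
The plan is to fix an arbitrary evaluation context $C$ acceptable for $Q$ and $\hat{Q} := \cinput{c}{}; \Res{b_0}{\bool}; \coutput{c}{};Q_R$ with public variables $V \cup \{b_0\}$, and a distinguisher $D$ with $\cevent(D) \cap \cevent(Q,\hat{Q}) = \emptyset$, and to bound $|\Pr[C[Q]:D] - \Pr[C[\hat{Q}]:D]|$. The key observation is that $Q$ and $\hat{Q}$ share the same prefix $\cinput{c}{}; \Res{b_0}{\bool}; \coutput{c}{}$ and differ only in their continuation: $Q$ runs $Q'$, which behaves as $Q_L$ when $b_0 = \true$ and as $Q_R$ when $b_0=\false$, whereas $\hat{Q}$ runs $Q_R$ in both cases. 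Hence the executions of $C[Q]$ and $C[\hat{Q}]$ can be coupled up to the point where $C$ has triggered the prefix and $b_0$ has been drawn: before the prefix is triggered both are $C[\cinput{c}{}\dots]$ and behave identically, and when $b_0 = \false$ both continue as $Q_R$ with $b_0 = \false$ recorded, so these cases cancel in the difference. Writing $A$ for the event that $C$ triggers the prefix and using that $\Res{b_0}{\bool}$ draws $b_0$ uniformly (as $\bool$ is fixed), I would establish by this coupling the identity
\[\Pr[C[Q]:D] - \Pr[C[\hat{Q}]:D] = \tfrac12\,\Pr[A]\,\big(\Pr[D \mid A, b_0 = \true]_{C[Q]} - \Pr[D \mid A, b_0 = \true]_{C[\hat{Q}]}\big),\]
the factor $\tfrac12$ arising precisely because only half of the triggering traces have $b_0 = \true$.

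Next I would realize the conditional-on-$(A, b_0 = \true)$ experiment as an ordinary indistinguishability game for $Q_L$ versus $Q_R$. Since the prefix cannot be inserted by an evaluation context (these allow only parallel composition and channel restriction), I construct $C'$ as a single adversary that simulates $C$ while (i) internally resolving the $c$-handshake and forcing $b_0 = \true$, (ii) answering every access of $C$ to the public variable $b_0$ by $\true$, and (iii) forwarding communications to the hole only after this simulated handshake, so that the root inputs of $Q_L$ (resp. $Q_R$) become reachable exactly when $Q'$ would in $C[Q]$. Such a $C'$ exists by encoding it as a probabilistic Turing machine and then as a context, exactly as in the proof of Lemma~\ref{lem:secrecy-dist} (Section~\ref{sec:Turing_adv}); its control over scheduling is what lets it reproduce the timing of the prefix. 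By construction $C'$ is acceptable for $Q_L$ and $Q_R$ with public variables $V$: it defines $b_0$ internally, which is legitimate since $b_0 \notin V$ and $Q_L, Q_R$ do not use $b_0$, so Invariant~\ref{inv1} and the variable-disjointness conditions are preserved. Moreover $C'$ runs in time $t_C$ with the same replication bounds and bitstring lengths as $C$, so by Property~\ref{prop:prob} we have $p(C', t_D) = p(C, t_D)$. The traces with $\neg A$ contribute identically to $\Pr[C'[Q_L]:D]$ and $\Pr[C'[Q_R]:D]$ and cancel, yielding $\Pr[C'[Q_L]:D] - \Pr[C'[Q_R]:D] = \Pr[A]\big(\Pr[D \mid A, b_0=\true]_{C[Q]} - \Pr[D\mid A, b_0=\true]_{C[\hat{Q}]}\big)$.

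Combining the two identities gives $\Pr[C[Q]:D] - \Pr[C[\hat{Q}]:D] = \tfrac12\big(\Pr[C'[Q_L]:D] - \Pr[C'[Q_R]:D]\big)$. Because $\cevent(Q_L) \cup \cevent(Q_R) \subseteq \cevent(Q) \cup \cevent(\hat{Q})$, the distinguisher $D$ satisfies $\cevent(D) \cap \cevent(Q_L,Q_R) = \emptyset$, so the hypothesis $Q_L \approx^{V,\emptyset}_p Q_R$ applies to $C'$ and $D$ and yields $|\Pr[C'[Q_L]:D] - \Pr[C'[Q_R]:D]| \leq p(C', t_D) = p(C,t_D)$. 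Hence $|\Pr[C[Q]:D] - \Pr[C[\hat{Q}]:D]| \leq \tfrac12 p(C, t_D) = (p/2)(C, t_D)$, which is the required bound. I expect the main difficulty to lie in the second paragraph: making $C'$ precise and proving that it reproduces exactly the distribution of the $(A, b_0 = \true)$-conditioned traces of $C[Q]$ and $C[\hat{Q}]$, including the correct gating of the root inputs of $Q_L$/$Q_R$ behind the simulated handshake and the faithful emulation of $C$'s reads of the public variable $b_0$. The supporting coupling and conditioning arithmetic, though routine, must be set up carefully so that the factor $\tfrac12$ is accounted for exactly.
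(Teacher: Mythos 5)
Your proof skeleton coincides with the paper's: split on the value of $b_0$, note that the $b_0 = \false$ halves cancel because both games then run $Q_R$, keep the factor $\tfrac{1}{2}$, turn the $b_0 = \true$ half into a $Q_L$-versus-$Q_R$ experiment under a modified context, and use Property~\ref{prop:prob} to replace that modified context by $C$ in the bound $p$. The gap is in the step you yourself single out as the crux: the construction of the modified context $C'$. You build it as a Turing-machine-style simulation of $C$, justified by the encoding of Section~\ref{sec:Turing_adv}. That encoding only captures adversaries interacting through channels, whereas $C$ is acceptable with public variables $V \cup \{b_0\}$ and may therefore read $b_0$ \emph{and} the variables of $V$ (which are defined inside $Q_L$/$Q_R$) directly, by $\kw{find}$ with $\defined$ conditions. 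These are reads of the game's memory, not channel exchanges: a channel-based simulation of $C$ has no way to evaluate them, so your point (ii) (``answering every access of $C$ to $b_0$'') is not an operation available to a context in this calculus, and accesses of the simulated $C$ to variables of $V$ --- which live on the other side of the hole --- cannot be reproduced at all. Hence $C'[Q_L]$ need not have the same distribution as the conditioned $C[Q]$, and the claimed identity fails whenever $C$ actually uses its public variables.

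The paper's proof avoids simulation altogether: it keeps $C$ intact and inserts a purely syntactic evaluation-context gadget between $C$ and the hole, $C'' = C[\Reschan{\tup{c}'}; (F_{\tup{c},\tup{c}'} \parpop \Reschan{\tup{c}}; (\cinput{c}{}; \assign{b_0}{\true} \coutput{c}{}; F_{\tup{c'},\tup{c}}) \parpop [\,])]$, where $\tup{c}$ are the channels of the root inputs of $Q_L$ and $Q_R$ and the $F$'s are forwarders involving fresh channels $\tup{c}'$. The handshake process actually defines $b_0$ by the assignment $\bassign{b_0}{\true}$, so $C$'s reads of $b_0$ behave exactly as in $C[Q]$ conditioned on $b_0 = \true$; $C$'s reads of variables in $V$ still reach $Q_L$/$Q_R$ directly because $C''$ is an ordinary evaluation context containing $C$ literally; and the forwarders gate the root inputs of $Q_L$/$Q_R$ behind the handshake, which is exactly what your point (iii) was meant to ensure. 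With this gadget in place of your $C'$, your conditioning arithmetic and the final appeal to Property~\ref{prop:prob} go through as in the paper.
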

\begin{proof}
  Let $C$ be an evaluation context acceptable for $Q$ and $\cinput{c}{}; \Res{b_0}{\bool}; \coutput{c}{};Q_R$ with public variables $V \cup \{b_0\}$ 
and $D$ be a distinguisher.
  We have
  \begin{align*}
&   |\Pr[C[Q]:D] - \Pr[C[\cinput{c}{}; \Res{b_0}{\bool}; \coutput{c}{};Q_R]:D|\\
&\quad      \leq \frac{1}{2} |\Pr[C[Q]:D / b_0 = \true] - \Pr[C[\cinput{c}{}; \Res{b_0}{\bool}; \coutput{c}{};Q_R]:D / b_0 = \true ] | \\
&\qquad      + \frac{1}{2} |\Pr[C[Q]:D / b_0 = \false] - \Pr[C[\cinput{c}{}; \Res{b_0}{\bool}; \coutput{c}{};Q_R]:D / b_0 = \false ] |\\
      &\quad      \leq \frac{1}{2} |\Pr[C[\cinput{c}{}; \assign{b_0}{\true} \coutput{c}{};Q_L]:D] - \Pr[C[\cinput{c}{}; \assign{b_0}{\true}\coutput{c}{};Q_R]:D] |\\
&\quad      \leq \frac{1}{2} p(C, t_D)
  \end{align*}
  Indeed, $\Pr[C[\cinput{c}{}; \assign{b_0}{\true} \coutput{c}{};Q_L]:D] = \Pr[C''[Q_L]:D]$ (and similarly for $Q_R$), where $C'' = C[\Reschan{\tup{c}'}; (F_{\tup{c},\tup{c}'} \parpop \Reschan{\tup{c}}; (\cinput{c}{}; \assign{b_0}{\true} \coutput{c}{}; F_{\tup{c'},\tup{c}}) \parpop [\,])]$, $\tup{c}$ are the channels of the inputs at the root of $Q_L$ and $Q_R$ (which we assume not to be used elsewhere in $Q_L$, $Q_R$), $\tup{c'}$ are fresh channels corresponding to channels in $\tup{c}$, and $F_{\tup{c'},\tup{c}}$ forwards all messages sent on a channel in $\tup{c'}$ to the corresponding channel in $\tup{c}$, with replication bounds corresponding to the maximum of the replication bounds in $Q_L$ and $Q_R$. (All messages on channels in $\tup{c}$ are forwarded to channels in $\tup{c}'$ and then back on channels in $\tup{c}$ provided the code $\cinput{c}{}; \assign{b_0}{\true}\coutput{c}{}$ has already been executed. That prevents executing $Q_L$ or $Q_R$ before $\cinput{c}{}; \assign{b_0}{\true}\coutput{c}{}$.)
  By Property~\ref{prop:prob}, replacing $C''$ with $C$ as argument of $p(C, t_D)$ does not affect the probability. (We neglect the additional runtime of $C''$.)
\proofcomplete
\end{proof}

Lemma~\ref{lem:dist-secrecy} is the converse of Lemma~\ref{lem:secrecy-dist}.

\begin{lemma}\label{lem:dist-secrecy}
  If $Q_L \approx^{V,\emptyset}_{p} Q_R$, then $Q$ preserves the bit secrecy of $b_0$ with public variables $V$ up to probability $p'$ where $p'(C) = p(C[C_{\secrbit(b_0)}], t_{\sevent})$.
\end{lemma}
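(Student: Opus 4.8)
The plan is to reduce the bit-secrecy advantage of $Q$ to the distinguishing advantage between $Q_L$ and $Q_R$, by internalizing the choice of $b_0$ into the two processes and moving the guess comparison performed by $\tprocb{b_0}$ into the distinguisher. Fix an evaluation context $C$ acceptable for $C_{\secrbit(b_0)}[Q] = Q \parpop \tprocb{b_0}$ with public variables $V$ that does not contain $\sevent$, $\sbarevent$; the goal is to bound $\Advt_Q^{\secrbit(b_0)}(C) = \Pr[C[Q\parpop\tprocb{b_0}]:\sevent] - \Pr[C[Q\parpop\tprocb{b_0}]:\sbarevent]$ by $p(C[C_{\secrbit(b_0)}], t_{\sevent})$.

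First I would build a single context $C'$, acceptable for $Q_L$ and $Q_R$ with public variables $V$, that plays the role of $C$ together with a \emph{modified} test process $\tproc{m} = \cinput{\cS''}{b':\bool}; \bguard{b'}{\keventabort{\sevent}}{\keventabort{\sbarevent}}$, which, instead of comparing the guess $b'$ with $b_0$, simply executes $\sevent$ when $b' = \true$ and $\sbarevent$ when $b' = \false$, so that it no longer refers to the variable $b_0$ (which is undefined in $Q_L$, $Q_R$). Since the root input channels of $Q_L$, $Q_R$ are assumed unused elsewhere, the prelude $\cinput{c}{};\Res{b_0}{\bool};\coutput{c}{}$ of $Q$ can be simulated inside $C'$ by the forwarder construction already used in the proof of Lemma~\ref{lem:halfobsequi} (a handshake gating access to the root channels), and $\tproc{m}$ must be gated in the same way so that it responds exactly when $b_0$ would be defined in the original game. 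The existence of such a $C'$ follows from the encoding of Turing-machine adversaries as contexts (Section~\ref{sec:Turing_adv}), as in the proof of Lemma~\ref{lem:secrecy-dist}.

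Then I would condition on the value of $b_0$. Conditioned on $b_0 = \true$ the inner process runs as $Q_L$ and $\tprocb{b_0}$ executes $\sevent$ iff the guess is $\true$; conditioned on $b_0 = \false$ it runs as $Q_R$ and $\tprocb{b_0}$ executes $\sevent$ iff the guess is $\false$. By construction $C'[Q_L]$ reproduces the $b_0=\true$ conditional (with $\tproc{m}$ in place of $\tprocb{b_0}$) and $C'[Q_R]$ the $b_0=\false$ conditional; since under $\tproc{m}$ a guess of $\false$ triggers $\sbarevent$, identifying the four conditional probabilities gives
\[\Advt_Q^{\secrbit(b_0)}(C) = \tfrac12\bigl[(\Pr[C'[Q_L]:\sevent] - \Pr[C'[Q_R]:\sevent]) - (\Pr[C'[Q_L]:\sbarevent] - \Pr[C'[Q_R]:\sbarevent])\bigr].\]
Applying $Q_L \approx^{V,\emptyset}_p Q_R$ with distinguisher $\sevent$ bounds the first difference by $p(C', t_{\sevent})$, and with distinguisher $\sbarevent$ bounds $\Pr[C'[Q_R]:\sbarevent]-\Pr[C'[Q_L]:\sbarevent]$ by $p(C', t_{\sbarevent})$; both distinguishers are admissible since $\sevent$, $\sbarevent$ are fresh and do not occur in $Q_L$, $Q_R$, so $\cevent(D) \cap \cevent(Q_L,Q_R) = \emptyset$. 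Using $t_{\sevent} = t_{\sbarevent}$ this yields $\Advt_Q^{\secrbit(b_0)}(C) \le p(C', t_{\sevent})$. Finally, by Property~\ref{prop:prob} the formula $p$ depends only on parameters such as runtime, replication bounds, and message lengths, so, neglecting the bounded additional runtime contributed by the forwarders and $\tproc{m}$, $p(C', t_{\sevent}) = p(C[C_{\secrbit(b_0)}], t_{\sevent}) = p'(C)$, the required bound.

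I expect the main obstacle to be the faithful construction of $C'$: one must ensure that $C'[Q_L]$ and $C'[Q_R]$ reproduce the $b_0 = \true$ and $b_0 = \false$ conditional executions \emph{exactly}, including the subtle point that a test query issued before the prelude completes must leave both $\sevent$ and $\sbarevent$ unexecuted, matching the $\adeftest{b_0}$ guard in $\tprocb{b_0}$, which yields when $b_0$ is not yet defined. Getting the gating of $\tproc{m}$ right so that the conditional probabilities coincide on the nose, rather than only up to the traces with an early test query, is where the care is needed; once the displayed identity holds, the remainder is a direct application of indistinguishability together with Property~\ref{prop:prob}.
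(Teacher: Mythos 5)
Your proof is correct, but it takes a genuinely different route from the paper's. The paper argues modularly: by Lemma~\ref{lem:halfobsequi}, $Q_L \approx^{V,\emptyset}_{p} Q_R$ yields $Q \approx^{V \cup \{b_0\},\emptyset}_{p/2} \cinput{c}{}; \Res{b_0}{\bool}; \coutput{c}{};Q_R$; the latter game preserves bit secrecy of $b_0$ up to probability $0$ (since $Q_R$ ignores $b_0$, flipping $b_0$ turns each trace executing $\sevent$ into an equiprobable trace executing $\sbarevent$); and Lemma~\ref{lem:transfersec} (bit-secrecy version) transfers this across the equivalence, the factor $2$ it introduces cancelling the $1/2$ from Lemma~\ref{lem:halfobsequi} to give exactly $p'(C) = p(C[C_{\secrbit(b_0)}], t_{\sevent})$. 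You instead carry out the reduction directly: you construct a simulation context $C'$, condition on $b_0$, establish the exact identity expressing $\Advt_Q^{\secrbit(b_0)}(C)$ through the four event probabilities in $C'[Q_L]$ and $C'[Q_R]$, and apply the hypothesis twice with the admissible distinguishers $\sevent$ and $\sbarevent$; this is essentially the mirror image of the paper's proof of the converse Lemma~\ref{lem:secrecy-dist}, and your treatment of the delicate points (gating the modified test process so that an early query yields, exactly as $\adeftest{b_0}$ does; invoking Property~\ref{prop:prob} to replace $C'$ by $C[C_{\secrbit(b_0)}[\,]]$ in the argument of $p$) matches the level of rigor of the paper's own simulation arguments. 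What the paper's route buys is brevity and reuse: the forwarder/gating construction is verified once, inside Lemma~\ref{lem:halfobsequi}, and the only new ingredient is the trivial zero-advantage observation for the always-$Q_R$ game. What your route buys is a self-contained argument that avoids the intermediate game and the factor-$2$/factor-$\tfrac{1}{2}$ cancellation, making it transparent where the bound comes from and how the bit-secrecy adversary is converted into a distinguishing context; the price is that you must re-justify the simulation details (gating, forwarders, acceptability of $C'$) yourself rather than inheriting them from an already-proven lemma.
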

\begin{proof}
  If $Q_L \approx^{V,\emptyset}_{p} Q_R$, then by Lemma~\ref{lem:halfobsequi}, $Q \approx^{V \cup \{b_0\},\emptyset}_{p/2} \cinput{c}{}; \Res{b_0}{\bool}; \coutput{c}{};Q_R$. Moreover, $\cinput{c}{}; \Res{b_0}{\bool}; \coutput{c}{};Q_R$ preserves the bit secrecy of $b_0$ with public variables $V$ up to probability 0. (Since $Q_R$ does not use $b_0$, the variable $b'$ in $Q_{\secrbit(b_0)}$ is independent of $b_0$, so a trace that executes $\sevent$ corresponds to a trace of the same probability and that executes $\sbarevent$ by changing the value of $b_0$, so $\Pr[C[\cinput{c}{}; \Res{b_0}{\bool}; \coutput{c}{};Q_R \parpop Q_{\secrbit(b_0)}] :\sevent] = \Pr[C[\cinput{c}{}; \Res{b_0}{\bool}; \coutput{c}{};Q_R \parpop Q_{\secrbit(b_0)}] :\sbarevent]$.) So by Lemma~\ref{lem:transfersec} (version for bit secrecy), $Q$ preserves the bit secrecy of $b_0$ with public variables $V$ up to probability $p'$.
\proofcomplete
\end{proof}

Lemma~\ref{lem:secrecy-comp-rev} provides a converse of Lemma~\ref{lem:secrecy-comp} when $Q$ has the particular form given above. There is no probability loss in this case.

\begin{lemma}\label{lem:secrecy-comp-rev}
  If $Q$ preserves the bit secrecy of $b_0$ with public variables $V$ up to probability $p$, then $Q$ preserves the one-session secrecy of $b_0$ with public variables $V$ up to probability $p$.
\end{lemma}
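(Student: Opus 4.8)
The plan is to reproduce the proof of Lemma~\ref{lem:dist-secrecy} but for one-session secrecy instead of bit secrecy; the reason there is no probability loss is that this route introduces a factor $1/2$ (from Lemma~\ref{lem:halfobsequi}) and a factor $2$ (from Lemma~\ref{lem:transfersec}) that cancel each other. First I would turn the hypothesis into an indistinguishability: since $Q$ preserves the bit secrecy of $b_0$ with public variables $V$ up to probability $p$, Lemma~\ref{lem:secrecy-dist} gives $Q_L \approx^{V,\emptyset}_{p_1} Q_R$ where $p_1(C,t_D) = p(\ctxdist{C}{D})$. Then Lemma~\ref{lem:halfobsequi} yields $Q \approx^{V \cup \{b_0\},\emptyset}_{p_1/2} \hat{Q}$, where $\hat{Q} = \cinput{c}{}; \Res{b_0}{\bool}; \coutput{c}{}; Q_R$; by symmetry of $\approx$ (Lemma~\ref{lem:evequi}) we also have $\hat{Q} \approx^{V \cup \{b_0\},\emptyset}_{p_1/2} Q$. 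The standing assumptions of the section on $Q$, $Q_L$, $Q_R$ are exactly those required by these two lemmas.

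Next I would show that $\hat{Q}$ preserves the one-session secrecy of $b_0$ with public variables $V$ up to probability $0$. Here $b_0$ is a boolean defined under no replication, so in $\tproco{b_0}$ there are no indices and $T = \bool$. In any game $C[C_{\secrone(b_0)}[\hat{Q}]]$, the process $\hat{Q}$ draws $b_0$ uniformly and then runs $Q_R$, which does not use $b_0$; the test process returns $b_0$ when its internal bit $b$ is $\true$ and a fresh uniform $y$ when $b$ is $\false$. In both cases the value sent back to the adversary is a uniform bit independent of the remainder of the adversary's view (the outputs of $Q_R$ and the adversary's own coins), because $b_0$ and $y$ are independent of $Q_R$. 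Hence $b$ stays uniform and independent of the guess $b'$ and of whether a guess is sent on $\cS'$, so $\Pr[C[C_{\secrone(b_0)}[\hat{Q}]] : \sevent] = \Pr[C[C_{\secrone(b_0)}[\hat{Q}]] : \sbarevent]$ and the advantage is $0$. This is the one-session analogue of the parenthetical independence argument used in the proof of Lemma~\ref{lem:dist-secrecy}, and I expect getting this step exactly right (in particular, arguing cleanly that the oracle answer carries no information about $b$) to be the main obstacle.

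Finally I would apply Lemma~\ref{lem:transfersec} (which holds verbatim for $\prop = \secrone(b_0)$) to $\hat{Q} \approx^{V \cup \{b_0\},\emptyset}_{p_1/2} Q$, taking the source process $\hat{Q}$ to satisfy one-session secrecy up to $0$. This gives that $Q$ preserves one-session secrecy of $b_0$ with public variables $V$ up to probability $p''$ with $p''(C) = 0 + 2 \times (p_1/2)(C[C_{\secrone(b_0)}[\,]], t_{\sevent}) = p_1(C[C_{\secrone(b_0)}[\,]], t_{\sevent}) = p(\ctxdist{C[C_{\secrone(b_0)}[\,]]}{\sevent})$. Up to the usual replacement of a context and a negligible change of runtime, which do not affect the probability formula by Property~\ref{prop:prob}, this is the same bound $p$, so no factor $2$ is lost, which is precisely the content of the statement.

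As an alternative one can give a single direct reduction: a bit-secrecy adversary that picks a uniform bit $r$, feeds $r$ to the one-session adversary as the oracle answer, and then, from the one-session guess $b'$, outputs $r$ as its guess of $b_0$ if $b' = \true$ and $\neg r$ otherwise. A short computation shows that $\Pr[\text{guess} = b_0] = \Advt^{\secrone(b_0)}_Q(C)/2 + 1/2$, hence its bit-secrecy advantage is exactly twice the one-session advantage; this also yields the claim (in fact up to $p/2$). I would, however, present the chain above, since it matches the stated bound and reuses the already established lemmas.
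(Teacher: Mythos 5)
Your main chain of reasoning is exactly the paper's proof: Lemma~\ref{lem:secrecy-dist} to get $Q_L \approx^{V,\emptyset}_{p'} Q_R$, Lemma~\ref{lem:halfobsequi} to get $Q \approx^{V\cup\{b_0\},\emptyset}_{p'/2} \cinput{c}{}; \Res{b_0}{\bool}; \coutput{c}{};Q_R$, the observation that this latter process preserves one-session secrecy of $b_0$ up to probability $0$ because $Q_R$ does not use $b_0$ (so $b'$ is independent of $b$ and traces executing $\sevent$ and $\sbarevent$ pair up with equal probability), and finally Lemma~\ref{lem:transfersec} with the factor $2$ cancelling the factor $1/2$, up to Property~\ref{prop:prob} and negligible runtime. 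The proposal is correct and takes essentially the same approach as the paper.
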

\begin{proof}
  If $Q$ preserves the bit secrecy of $b_0$ with public variables $V$ up to probability $p$, then by Lemma~\ref{lem:secrecy-dist}, $Q_L \approx^{V,\emptyset}_{p'} Q_R$ where $p'(C, t_D) = p(\ctxdist{C}{D})$. By Lemma~\ref{lem:halfobsequi}, $Q \approx^{V \cup \{b_0\},\emptyset}_{p'/2} \cinput{c}{}; \Res{b_0}{\bool}; \coutput{c}{};Q_R$.

  Moreover, $\cinput{c}{}; \Res{b_0}{\bool}; \coutput{c}{};Q_R$  preserves the one-session secrecy of $b_0$ with public variables $V$ up to probability 0. Indeed, since $Q_R$ does not use $b_0$, $b_0$ can in fact be chosen in the test query in $Q_{\secrone(b_0)}$, so that test query always returns a random boolean, independently of the value of the variable $b$ of $Q_{\secrone(b_0)}$. Therefore, the variable $b'$ is independent of $b$, so a trace that executes $\sevent$ corresponds to a trace of the same probability and that executes $\sbarevent$ by changing the value of $b$, so $\Pr[C[\cinput{c}{}; \Res{b_0}{\bool}; \coutput{c}{};Q_R \parpop Q_{\secrone(b_0)}] :\sevent] = \Pr[C[\cinput{c}{}; \Res{b_0}{\bool}; \coutput{c}{};Q_R \parpop Q_{\secrone(b_0)}] :\sbarevent]$.
  
  So by Lemma~\ref{lem:transfersec} (version for one-session secrecy), $Q$ preserves the one-session secrecy of $b_0$ with public variables $V$ up to probability $p''$ such that $p''(C) = p'(C[C_{\secrone(b_0)}], t_{\sevent}) = p(\ctxdist{C[C_{\secrone(b_0)}]}{\sevent})$ which is about $p(C)$ by Property~\ref{prop:prob}, neglecting the additional runtime of the context.
\proofcomplete
\end{proof}

\subsubsection{Correspondences}\label{sec:def:corresp}

In this section, we define non-injective and injective correspondences.

\newcommand{\fevent}[1]{\kw{event}(#1)}
\newcommand{\fievent}[1]{\kw{inj\text{-}event}(#1)}

\paragraph{Non-injective Correspondences}

A non-injective correspondence is a property of the form
``if some events have been executed, then some other events have
been executed at least once''. Here, we generalize these correspondences to
implications between logical formulas $\psi \Rightarrow \phi$, 
which may contain events.
We use the following logical formulas:
\begin{defn}
\categ{\phi}{formula}\\
\entry{M}{term}\\
\entry{\fevent{e(M_1, \ldots, M_m)}}{event}\\
\entry{\phi_1 \wedge \phi_2}{conjunction}\\
\entry{\phi_1 \vee \phi_2}{disjunction}
\end{defn}
Terms $M, M_1, \ldots, M_m$ in formulas must contain only
variables $x$ without array indices and function applications,
and their variables are assumed to be distinct from variables of processes.
Formulas denoted by $\psi$ are conjunctions of events.
In a correspondence $\psi \Rightarrow \phi$, the variables of $\psi$ are universally quantified;
those of $\phi$ that do not occur in $\psi$ are existentially quantified.
Formally:

\begin{definition}\label{def:nicorresp}
The semantics of the correspondence 
$\forall \tup{x}:\tup{T}; \psi \Rightarrow \exists \tup{y}:\tup{T}'; \phi$,
also written
$\tup{x}:\tup{T}, \tup{y}:\tup{T}'; \psi \Rightarrow \phi$
in a less explicit syntax, is 
$\sem{\forall \tup{x}:\tup{T}; \psi \Rightarrow \exists \tup{y}:\tup{T}'; \phi} = \sem{\tup{x}:\tup{T}, \tup{y}:\tup{T}'; \psi \Rightarrow \phi} = \forall \tup{x} \in \tup{T}, (\psi \Rightarrow \exists \tup{y} \in \tup{T}', \phi)$,
where $\tup{x} = \fvar(\psi)$ and $\tup{y} = \fvar(\phi)\setminus \fvar(\psi)$.
\end{definition}

The formula $M$ holds when $M$ evaluates to $\true$. The formula
$\fevent{e(M_1, \ldots, M_n)}$ holds when the event $e(M_1, \ldots, M_n)$ 
has been executed. Conjunction, disjunction, implication, existential and 
universal quantifications are defined as usual.
More formally, we write $\venv, \evseqnopp \vdash \varphi$ when the sequence
of events $\evseqnopp$ satisfies the formula $\varphi$, in the environment
$\venv$ that maps variables to their values. We define $\venv,
\evseqnopp \vdash \varphi$ as follows:
\begin{tabbing}
$\venv,\evseqnopp \vdash M$ if and only if $\venv, M \evalterm \true$\\
$\venv,\evseqnopp \vdash \fevent{e(M_1, \ldots, M_m)}$ if and only if\\*
\qquad for all $j \leq m$, $\venv, M_j \evalterm a_j$ and 
$e(a_1, \ldots, a_m) \in \evseqnopp$\\
$\venv, \evseqnopp \vdash \varphi_1 \wedge \varphi_2$ if and only if 
$\venv, \evseqnopp \vdash \varphi_1$ and $\venv, \evseqnopp \vdash \varphi_2$\\
$\venv, \evseqnopp \vdash \varphi_1 \vee \varphi_2$ if and only if 
$\venv, \evseqnopp \vdash \varphi_1$ or $\venv, \evseqnopp \vdash \varphi_2$\\
$\venv, \evseqnopp \vdash \varphi_1 \Rightarrow \varphi_2$ if and only if 
$\venv, \evseqnopp \vdash \varphi_1$ implies $\venv, \evseqnopp \vdash \varphi_2$\\
$\venv, \evseqnopp \vdash \exists x \in T, \varphi$ if and only if 
there exists $a\in T$ such that $\venv[x \mapsto a], \evseqnopp \vdash \varphi$\\
$\venv, \evseqnopp \vdash \forall x \in T, \varphi$ if and only if 
for every $a\in T$, we have $\venv[x \mapsto a], \evseqnopp \vdash \varphi$
\end{tabbing}
When $\varphi$ is a closed formula, we write $\evseqnopp \vdash \varphi$ for
$\venv, \evseqnopp \vdash \varphi$ where $\venv$ is the empty function.

\begin{definition}\label{def:correspsat}
The sequence of events 
$\evseqnopp$ \emph{satisfies the correspondence} $\varphi$ if
and only if $\evseqnopp \vdash \varphi$.
\end{definition}

\begin{definition}\label{def:proccorresp}
We define a distinguisher $D(\evseqnopp) = \true$
if and only if $\evseqnopp \vdash \varphi$, and we denote
this distinguisher $D$ simply by $\varphi$.

The advantage of the adversary $C$ against the \emph{correspondence} $\varphi$ in process $Q$
is $\Advt_Q^{\varphi}(C) = \Pr[C[Q] : \neg \varphi]$,
where $C$ is an evaluation context acceptable for $Q$ with any public variables 
that does not contain events used by $\varphi$.

The process $Q$  \emph{satisfies the correspondence} $\varphi$ with public variables $V$ up to probability $p$
if and only if for all
evaluation contexts $C$ acceptable for $Q$ with public variables $V$
that do not contain events used by $\varphi$, 
$\Advt_Q^{\varphi}(C) \leq p(C)$.
\end{definition}

When $\prop$ is a correspondence $\varphi$, we define $C_{\prop} = [\,]$
and the events used by $\prop$ are the events that occur
in the formula $\varphi$.
Therefore, the definition of ``$Q$ satisfies the correspondence $\varphi$''
matches the definition of ``$Q$ satisfies $\prop$'' given in
Definition~\ref{def:secr}.

A process satisfies $\varphi$ up to probability $p$
when the probability that
it generates a sequence of events $\evseqnopp$ that does not satisfy 
$\varphi$ is at most $p(C)$, in the presence of an
adversary represented by the context $C$.

\begin{example}
The semantics of the correspondence
\begin{equation}
\forall x:\pkey, y: \host, z:\nonce; \fevent{e_B(x,y,z)}\Rightarrow \fevent{e_A(x,y,z)}\label{c1}
\end{equation} 
is 
\begin{equation}
\forall x \in \pkey, \forall y \in  \host, \forall z\in \nonce, \fevent{e_B(x,y,z)}\Rightarrow \fevent{e_A(x,y,z)}\label{c1-full}
\end{equation} 
It means that, with overwhelming probability, for all $x,y,z$, if 
$e_B(x,y,z)$ has been executed, then $e_A(x,y,z)$ has been executed.

The semantics of the correspondence
\[\begin{split}
\forall x:T; {}&\fevent{e_1(x)} \wedge \fevent{e_2(x)} \Rightarrow\\
&\quad \exists y:T'; \fevent{e_3(x)} \vee (\fevent{e_4(x,y)} \wedge \fevent{e_5(x,y)})
\end{split}\]
is 
\[\begin{split}
&\forall x\in T, \fevent{e_1(x)} \wedge \fevent{e_2(x)} \Rightarrow\\
&\quad \exists y \in T', \fevent{e_3(x)} \vee (\fevent{e_4(x,y)} \wedge \fevent{e_5(x,y)})
\end{split}\]
It means that, with overwhelming probability, for all $x$, if $e_1(x)$ and $e_2(x)$ have been executed, 
then $e_3(x)$ has been executed or there
exists $y$ such that both $e_4(x,y)$ and $e_5(x,y)$ have been
executed.
\end{example}

\paragraph{Injective Correspondences}

\newcommand{\step}{\tau}
\newcommand{\phistep}{\phi^{\step}}
\newcommand{\psistep}{\psi^{\step}}
\newcommand{\Fstepfun}{f}
\newcommand{\Inj}{\mathrm{Inj}}

Injective correspondences are properties of the form
``if some event has been executed $n$ times, then some other events
have been executed at least $n$ times''. In order to model them
in our logical formulas, we extend the grammar of formulas $\phi$
with injective events $\fievent{e(M_1, \ldots, M_m)}$.
The formula $\psi$ is a conjunction of (injective or non-injective)
events.
The conditions on the number of executions of events apply
only to injective events.

The definition of formula satisfaction is also extended, to be able to
indicate at which step an event has been executed (that is, at which
index it appears in $\evseqnopp$): $\fevent{e(\tup{M})}@\step$ means that
event $e(\tup{M})$ has been executed at step $\step$. Formally:
\begin{tabbing}
$\venv,\evseqnopp \vdash \fevent{e(M_1, \ldots, M_m)}@M_0$ if and only if\\
\qquad for all $j \leq m$, $\venv, M_j \evalterm a_j$, 
$a_0 \neq \bot$, and 
$e(a_1, \ldots, a_m) = \evseqnopp(a_0)$
\end{tabbing}
With this definition, we have:

\begin{definition}\label{def:icorresp}
The semantics of the correspondence 
$\forall \tup{x}:\tup{T}; \psi \Rightarrow \exists \tup{y}:\tup{T}'; \phi$,
also written
$\tup{x}:\tup{T}, \tup{y}:\tup{T}'; \psi \Rightarrow \phi$ 
in a less explicit syntax, is 
\[\begin{split}
&\sem{\forall \tup{x}:\tup{T}; \psi \Rightarrow \exists \tup{y}:\tup{T}'; \phi} =
\sem{\tup{x}:\tup{T}, \tup{y}:\tup{T}'; \psi\Rightarrow \phi} = 
\exists \Fstepfun_1, \dots, \Fstepfun_k\in \mathbb{N}^m \times \prod \tup{T} \rightarrow \mathbb{N} \cup \{\bot\}, \\
&\quad \Inj(I,\Fstepfun_1) \wedge \dots \wedge \Inj(I,\Fstepfun_k) \wedge \forall \step_1, \dots, \step_m\in \mathbb{N}, \forall \tup{x}\in\tup{T}, (\psistep \Rightarrow \exists \tup{y}\in\tup{T}', \phistep)\,,
\end{split}\]
where $\tup{x} = \fvar(\psi)$, $\tup{y} = \fvar(\phi)\setminus \fvar(\psi)$,
$\psi = F_1\wedge \dots \wedge F_m$,
$\psistep = F_1^{\step} \wedge \dots \wedge F_m^{\step}$,
$F_j^{\step} = \fevent{e(\tup{M})}@\step_j$ if $F_j = \fevent{e(\tup{M})}$ or
$F_j = \fievent{e(\tup{M})}$,
$I = \{ j \mid F_j = \fievent{\dots}\}$, 
$\phistep$ is obtained from $\phi$ by replacing each injective event
$\fievent{e(\tup{M})}$ with $\fevent{e(\tup{M})}@\Fstepfun_j(\step_1, \ab \dots, \ab \step_m, \ab \tup{x})$ using a distinct function $\Fstepfun_j$ for each injective event in $\phistep$, and
$\Inj(I,f)$ if and only if $f(\step_1, \dots, \step_m, \tup{x}) = f(\step'_1, \dots, \step'_m, \tup{x}') \neq \bot \Rightarrow \forall j \in I, \ab \step_j = \step'_j$.
\end{definition}

In $\psistep$ and $\phistep$, events are labeled with their associated
execution step, $\step_j$ for the events in $\psistep$ and
$\Fstepfun_j(\step_1, \ab \dots, \ab \step_m, \ab \tup{x})$ for the
injective events in $\phistep$.
Therefore, the functions $\Fstepfun_j$ map the execution steps of
events in $\psi$, $\step_1$, \dots, $\step_m$, and the values
of the variables in $\psi$, $\tup{x}$, to the associated
execution steps of injective events in $\phi$. (The result $\bot$
corresponds to the case in which the event in $\phi$ is not executed: 
in case of disjunctions, not all events in $\phi$ are required to be executed.)
The correspondence is injective when these functions $\Fstepfun_j$
are injective in their arguments that correspond to injective events
in $\psi$. The indices of injective events in $\psi$ are collected
in the set $I$, and injectivity is guaranteed by $\Inj(I,f_j)$,
which means that, ignoring the result $\bot$, $f_j$ is injective
in its arguments of indices in $I$.

Definition~\ref{def:proccorresp} is unchanged for injective correspondences.

\begin{example}
The semantics of the correspondence
\begin{equation}
\forall x:\pkey, y: \host, z:\nonce; \fievent{e_B(x,y,z)}\Rightarrow \fievent{e_A(x,y,z)}\label{c2}
\end{equation}
is 
\begin{equation}
\begin{split}
  &\exists \Fstepfun\in \mathbb{N} \times \pkey\times\host\times\nonce\rightarrow \mathbb{N} \cup \{\bot\}, \Inj(\{1\},\Fstepfun) \wedge {}\\
  &\quad \forall \step\in\mathbb{N}, \forall x\in\pkey, \forall y\in \host, \forall z\in\nonce, \\
&\qquad \fevent{e_B(x,y,z)}@\step\Rightarrow \fevent{e_A(x,y,z)}@\Fstepfun(\step,x,y,z)
\end{split}\label{c2-full}
\end{equation} 
It means that, with overwhelming probability,
each execution of $e_B(x,y,z)$ corresponds to a distinct execution of 
$e_A(x,y,z)$.
In this case, $\Fstepfun$ is a function that maps 
the execution step $\step$ of $e_B(x,y,z)$ and the variables $x$, $y$, $z$ 
to the execution step of $e_A(x,y,z)$. (This step is
never $\bot$.) This function is injective in its first argument, the step $\step$,
so if there are $n$ executions of $e_B(x,y,z)$, at steps $\step_1$, \dots, $\step_n$,
then there are at least $n$ executions of $e_A(x,y,z)$, at steps
$\Fstepfun(\step_1,x,y,z)$, \dots, $\Fstepfun(\step_n,x,y,z)$ and these steps are distinct
by injectivity of $\Fstepfun$ in its first argument.

The semantics of the correspondence
\[\begin{split}
&\forall x:T; \fevent{e_1(x)} \wedge \fievent{e_2(x)} \Rightarrow \exists y:T';\fievent{e_3(x)} \vee {}\\
&\qquad (\fievent{e_4(x,y)} \wedge \fievent{e_5(x,y)})
\end{split}\] 
is 
\[\begin{split}
&\exists \Fstepfun_1, \Fstepfun_2, \Fstepfun_3\in \mathbb{N}^2 \times T\rightarrow \mathbb{N} \cup \{\bot\}, \Inj(\{2\},\Fstepfun_1) \wedge \Inj(\{2\},\Fstepfun_2) \wedge \Inj(\{2\},\Fstepfun_3) \wedge {}\\
&\forall \step_1, \step_2\in\mathbb{N}, \forall x\in T,\fevent{e_1(x)}@\step_1 \wedge \fevent{e_2(x)}@\step_2 \Rightarrow \exists y\in T', \fevent{e_3(x)}@\Fstepfun_1(\step_1,\step_2,x) \vee {}\\
&\qquad (\fevent{e_4(x,y)}@\Fstepfun_2(\step_1,\step_2,x) \wedge \fevent{e_5(x,y)}@\Fstepfun_3(\step_1,\step_2,x))
\end{split}\] 
It means that, with overwhelming probability, for all $x$, if $e_1(x)$ has been executed, 
then each execution of $e_2(x)$ corresponds to distinct
executions of $e_3(x)$ or to distinct executions of $e_4(x,y)$
and $e_5(x,y)$.
The functions $\Fstepfun_1$, $\Fstepfun_2$, and $\Fstepfun_3$ map 
the execution steps $\step_1$ and $\step_2$ of $e_1$ and $e_2$ and the variable $x$
to the execution steps of $e_3$, $e_4$, and $e_5$ respectively.
Ignoring the result $\bot$, they are injective in their second argument, 
which corresponds to the execution step of the injective event $e_2$.
\end{example}

When no injective event occurs in $\forall \tup{x}:\tup{T}; \psi \Rightarrow \exists \tup{y}:\tup{T}';\phi$, 
Definition~\ref{def:icorresp} reduces to the
definition of non-injective correspondences: there are no functions $\Fstepfun_j$, $\phistep = \phi$, and $\fevent{e(\tup{M})}@\step$ holds for some $\step$
if and only if $\fevent{e(\tup{M})}$ holds, so $\psistep$ holds for some $\step_1$, \dots, $\step_m$ if and only if $\psi$ holds. 

\paragraph{Well-formedness condition}

When we consider a correspondence $\forall \tup{x}:\tup{T}; \psi
\Rightarrow \exists \tup{y}:\tup{T}';\phi$, with $\tup{x} = \fvar(\psi)$ and $\tup{y} = \fvar(\phi)\setminus \fvar(\psi)$,
we should have
\begin{equation}
\forall \tup{x} \in \tup{T},\forall \tup{x}'\in\tup{T}, \forall\tup{y}\in\tup{T}',
\psi = \psi\{\tup{x}'/\tup{x}\} \Rightarrow \phi = \phi\{\tup{x}'/\tup{x}\} 
\label{eq:corresp-wf}
\end{equation}
where $\tup{x}'$ are fresh variables, and the equality of terms is the equality
of their values, but disjunctions, conjunctions, and events are considered
syntactically. This condition guarantees that, given an execution of events
in $\psi$, the formula to verify $\exists \tup{y}\in \tup{T}', \phistep$ is uniquely
determined. It avoids pathological correspondences such as
\begin{equation}
\forall x:T; \fevent{e(f(x)} \Rightarrow \fevent{e'(x)}
\label{c5}
\end{equation}
with $f(a) = f(b) = c$, for which $\fevent{e(c)}$ corresponds to 
both $\fevent{e(f(a)}$ and $\fevent{e(f(b)}$, so when event $e(c)$ is executed, $x$ can take both values
$a$ and $b$, so \eqref{c5} requires the execution of events $e'(a)$
and $e'(b)$. An even more pathological case is when $f(x) = c$ for all $x$:
in this case, when event $e(c)$ is executed, \eqref{c5} requires the execution of event $e'(x)$ for all $x\in T$,
which is impossible when $T$ is infinite. However, the condition allows the
correspondence~\eqref{c5} when $f$ is injective, so $x$ is uniquely determined, and it
also allows the correspondences
\begin{equation}
\forall x:T; \fevent{e(f(x)} \Rightarrow \false
\label{c3}
\end{equation}
and
\begin{equation}
\forall x:T; \fevent{e(f(x)} \Rightarrow \fevent{e'(f(x))}
\label{c4}
\end{equation}
for any function $f$: \eqref{c3} requires that event $e(y)$ is never executed with $y$ in the image of $f$,
and \eqref{c4} requires that $e'(y)$ is executed when $e(y)$ is executed with $y$ in the image of $f$.

CryptoVerif displays a warning when it does not manage to prove the well-formedness condition~\eqref{eq:corresp-wf}.

\paragraph{Property}

\begin{lemma}\label{lem:transfercorr}
  If $Q$ satisfies a correspondence $\corresp$ with public variables
  $V$ up to probability $p$ and $C$ is an acceptable evaluation context for $Q$ with public
  variables $V$ that does not contain events used in $\corresp$, then
  for all $V' \subseteq V \cup \fvar(C)$, $C[Q]$ satisfies a
  correspondence $\corresp$ with public variables $V'$ up to probability $p'$ such that
  $p'(C') = p(C'[C])$.

  If $Q \approx^{V,\evset}_p Q'$, $Q$ satisfies a correspondence $\corresp$
  with public variables $V$ up to probability $p'$, and $\evset$ contains all events in $\corresp$, then $Q'$ satisfies $\corresp$
  with public variables $V$ up to probability $p''$ such that 
  $p''(C) = p'(C) + p(C, t_{\corresp})$.
\end{lemma}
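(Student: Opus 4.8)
The plan is to prove both parts by the same strategy as Lemma~\ref{lem:transfersec}, specialized to correspondences. Two features make the argument shorter here: by Definition~\ref{def:proccorresp} the test context is $C_{\corresp} = [\,]$, and the advantage $\Advt_Q^{\corresp}(C) = \Pr[C[Q] : \neg\corresp]$ is the probability of a single distinguisher $\neg\corresp$ rather than a difference of two event probabilities. So in each part I take an arbitrary evaluation context acceptable for the target process (and not containing the events used by $\corresp$) and reduce its advantage to the hypothesis, checking acceptability by the same $\fvar$/table/variable bookkeeping used in Lemma~\ref{lem:indistev}, Property~\ref{indistctx}.

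For the first part, let $C'$ be an evaluation context acceptable for $C[Q]$ with public variables $V'$ that does not contain the events used by $\corresp$. Then $C'[C[\,]]$ is again an evaluation context, and
\[\Advt_{C[Q]}^{\corresp}(C') = \Pr[C'[C[Q]] : \neg\corresp] = \Advt_Q^{\corresp}(C'[C])\,.\]
It remains to check that $C'[C]$ is acceptable for $Q$ with public variables $V$ and contains none of the events used by $\corresp$; the latter is immediate since neither $C'$ nor $C$ does. For acceptability, the table condition follows because $C'$ shares no table with $C[Q]$ (hence with $Q$) and $C$ shares no table with $Q$; and the free-variable condition is the computation
\begin{align*}
\fvar(C'[C]) \cap \fvar(Q)
&\subseteq (V' \cup \fvar(C)) \cap \fvar(Q)\\
&\subseteq (V \cup \fvar(C)) \cap \fvar(Q) \subseteq V\,,
\end{align*}
using $\fvar(C') \cap \fvar(Q) \subseteq \fvar(C') \cap \fvar(C[Q]) \subseteq V'$, then $V' \subseteq V \cup \fvar(C)$, and finally $\fvar(C) \cap \fvar(Q) \subseteq V$. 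The conditions $\vardef(C'[C]) \cap V = \emptyset$ and the Invariants~\ref{inv1} to~\ref{inv3} for $C'[C[Q]]$ with public variables $V$ are handled exactly as in Lemma~\ref{lem:transfersec} and Lemma~\ref{lem:indistev}, Property~\ref{indistctx}, renaming if necessary the variables of $C'$ that do not appear in $V'$ so that they avoid $V$ (which by Property~\ref{prop:prob} leaves all probabilities unchanged). This yields $\Advt_{C[Q]}^{\corresp}(C') \leq p(C'[C]) = p'(C')$.

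For the second part, let $C$ be acceptable for $Q'$ with public variables $V$ and containing no event used by $\corresp$. Renaming the variables and tables of $C$ that do not occur in $Q'$ so that they also avoid $Q$ makes $C$ acceptable for both $Q$ and $Q'$ simultaneously, without changing any probability (Property~\ref{prop:prob}) and without touching its events. Then
\[\Advt_{Q'}^{\corresp}(C) = \Pr[C[Q'] : \neg\corresp] \leq \Pr[C[Q] : \neg\corresp] + \bigl|\Pr[C[Q'] : \neg\corresp] - \Pr[C[Q] : \neg\corresp]\bigr|\,.\]
The first summand is $\Advt_Q^{\corresp}(C) \leq p'(C)$ since $Q$ satisfies $\corresp$ and $C$ is acceptable for $Q$. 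For the second, $\neg\corresp$ is an admissible distinguisher for $Q \approx^{V,\evset}_p Q'$ because $\cevent(\neg\corresp) \cap \cevent(Q,Q') = \cevent(\corresp) \cap \cevent(Q,Q') \subseteq \evset$ (as $\evset$ contains all events of $\corresp$), so it is bounded by $p(C, t_{\neg\corresp})$; and $\neg\corresp$ runs in the same time as $\corresp$, so $p(C, t_{\neg\corresp}) = p(C, t_{\corresp})$. Combining gives $\Advt_{Q'}^{\corresp}(C) \leq p'(C) + p(C, t_{\corresp}) = p''(C)$.

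The main obstacle is not the probability manipulation, which is elementary, but the acceptability bookkeeping — in particular, verifying that $C'[C[Q]]$ (resp.\ $C[Q]$) still satisfies Invariants~\ref{inv1} to~\ref{inv3} with the required public-variable set after composing or renaming contexts. The delicate point is the passage from public variables $V'$ to $V$ in the first part, since Invariant~\ref{invfc} depends on the public-variable set; this is resolved by renaming the $\kw{find}$/$\kw{get}$-condition variables of $C'$ (which have no array accesses by Invariant~\ref{invfc}, hence can be renamed freely) away from $V$, just as in the proof of Lemma~\ref{lem:indistev}, Property~\ref{indistctx}. I expect these checks to be entirely routine given the analogous computations already carried out for secrecy.
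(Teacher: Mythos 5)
Your proof is correct and follows essentially the same route as the paper's: the same composition $C'[C]$ with the identical $\fvar$/$\vardef$/table bookkeeping and renaming of the variables of $C'$ not in $V'$ for the first part, and the same triangle-inequality argument with $\neg\corresp$ as admissible distinguisher (after renaming $C$'s variables and tables to make it acceptable for both $Q$ and $Q'$) for the second. Your treatment is in fact slightly more explicit than the paper's on why $\neg\corresp$ is an allowed distinguisher and why $t_{\neg\corresp} = t_{\corresp}$, but the substance is identical.
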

\begin{proof}
Suppose that $Q$ satisfies a correspondence $\corresp$ with public variables
  $V$ and $C$ is an acceptable evaluation context for $Q$ with public
  variables $V$ that does not contain events used in $\corresp$.
Let $V' \subseteq V \cup \fvar(C)$.
Let $C'$ be an evaluation context acceptable for $C[Q]$ with public variables $V'$
that does not contain events used by $\corresp$.
We rename the variables of $C'$ not in $V'$ so that they are not in $V$; by Property~\ref{prop:prob}, this renaming does not change the probabilities.
We have 
\[\Advt_{C[Q]}^{\corresp}(C') = \Pr[C'[C[Q]] : \neg \corresp] \leq p(C'[C])\]
because $C'[C]$ is an evaluation context acceptable for $Q$ with public variables $V$:
there is no common table between $C$ and $Q$, and between $C'$ and $C[Q]$,
so a fortiori between $C'$ and $Q$, so there is
no common table between $C'[C]$ and $Q$; moreover
\begin{align*}
\fvar(C'[C]) \cap \fvar(Q) 
&= ((\fvar(C') \cap \fvar(Q)) \cup \fvar(C)) \cap \fvar(Q)\\
&\subseteq (V' \cup \fvar(C)) \cap \fvar(Q) \tag*{since $\fvar(C') \cap \fvar(C[Q]) \subseteq V'$}\\
&\subseteq (V \cup \fvar(C)) \cap \fvar(Q) \tag*{since $V' \subseteq V \cup \fvar(C)$}\\
&\subseteq V \tag*{since $\fvar(C) \cap \fvar(Q) \subseteq V$}
\end{align*}
We also have
$\vardef(C'[C[\,]]) \cap V = (\vardef(C') \cap V) \cup (\vardef(C) \cap V) = \emptyset$
since $\vardef(C) \cap V = \emptyset$ because $C$ is an acceptable
evaluation context for $Q$ with public variables $V$ and
$\vardef(C') \cap V \subseteq \vardef(C') \cap V' = \emptyset$ because
we have renamed the variables of $C'$ not in $V'$ so that they are not
in $V$ and $C'$ is an acceptable evaluation context for $C[Q]$ and
with public variables $V'$.

Suppose that $Q \approx^{V,\evset}_p Q'$, $Q$ satisfies a
correspondence $\corresp$ with public variables $V$ up to probability
$p'$, and $\evset$ contains all events in $\corresp$.
Let $C$ be an evaluation context acceptable for $Q'$ with public variables $V$
that does not contain events used by $\corresp$.
We have 
\[\begin{split}
\Advt_{Q'}^{\corresp}(C) 
&= \Pr[C[Q'] : \neg \corresp]\\
&\leq \Pr[C[Q] : \neg \corresp] + | \Pr[C[Q'] : \neg \corresp] - \Pr[C[Q] : \neg \corresp] |\\
&\leq p'(C) + p(C, t_{\corresp})
\end{split}\]
Indeed, by renaming the variables and tables of $C$ that do not appear in $Q'$ to variables and tables 
that also do not occur in $Q$,
$C$ is also an acceptable evaluation context for $Q$
with public variables $V$.
Furthermore, by Property~\ref{prop:prob}, this renaming does not change the probabilities.
\proofcomplete
\end{proof}

\paragraph{Reachability secrecy}

Reachability secrecy aims to show that the adversary cannot compute
the secret value. This notion is standard in the symbolic model,
but less common than the notion of secrecy as ``the adversary cannot
distinguish the secret from a random value'' (Section~\ref{sec:def:secrecy})
in the computational model. It is still used, e.g.\ in the property
of one-wayness or in the computational Diffie-Hellman assumption.

This notion makes sense only when the secret value is of a large type.
Otherwise, the adversary would have a non-negligible probability
of finding the secret value just by random guessing.

This notion is in fact encoded as a correspondence property.
We distinguish two variants.
One-session reachability secrecy of $x$ means that the adversary
cannot compute any cell of array $x$, even if it has access to the
public variables in $V$.
Reachability secrecy of $x$ means that the adversary cannot
compute any cell of array $x$, even if it has access to the other
cells of $x$ and to the public variables in $V$.

\begin{definition}[(One-session) reachability secrecy]\label{def:reachsecr}
  Let $Q$ be a process, $x$ a variable, and $V$ a set of variables.
  Let
  {\allowdisplaybreaks\begin{align*}
    \tproc{\secrreachone(x)} = {} & \repl{\iT}{\nT}\cinput{\cS[\iT]}{x':T, \vf_1:[1, n_1], \ldots, \vf_m:[1, n_m]}; \\*
    &\baguard{\defined(x[\vf_1, \ldots, \vf_m]) \wedge x' = x[\vf_1, \ldots, \vf_m]}{\kevent{\kwf{adv\_has\_x}}}\\
    \tproc{\secrreach(x)} = {} &\repl{\iR}{\nR}\cinput{\cR[\iR]}{\vf'_1:[1, n_1], \ldots, \vf'_m:[1, n_m]}; \adeftest{x[\vf'_1, \ldots, \vf'_m]} \\*
    &\assign{\reveal:\bool}{\true} \coutput{\cR[\iR]}{x[\vf'_1, \ldots, \vf'_m]} \\*
{}\parpop{} &\repl{\iT}{\nT}\cinput{\cS}{x':T, \vf_1:[1, n_1], \ldots, \vf_m:[1, n_m]}; \\*
&\FIND \ i \leq \iR\ \SUCHTHAT \ \kw{defined}(\reveal[i],\vf'_1[i],\dots,\vf'_m[i]) \wedge {}\\*
&\qquad\vf'_1[i] = \vf_1 \wedge \dots \wedge \vf'_m[i] = \vf_m \THEN \kw{yield}\ \ELSE\\*
    &\baguard{\defined(x[\vf_1, \ldots, \vf_m]) \wedge x' = x[\vf_1, \ldots, \vf_m]}{\kevent{\kwf{adv\_has\_x}}}
  \end{align*}}%
where $\cS, \cR \notin\fc(Q)$, $x', \vf_1, \ldots, \vf_m, \vf'_1, \ldots, \vf'_m, \reveal \notin \fvar(Q) \cup V$, $\kwf{adv\_has\_x}$ does not occur in $Q$, 
and $\tyenv(x) = [1,n_1] \times \ldots \times [1,n_m] \rightarrow T$.

  The process $Q$ \emph{satisfies one-session reachability secrecy of} $x$ with public variables $V$ ($x \notin V$) up to probability $p$ if and only if the process $Q \parpop \tproc{\secrreachone(x)}$ satisfies the correspondence $\fevent{\kwf{adv\_has\_x}} \Rightarrow \false$ with public variables $V$ up to probability $p$.

  The process $Q$ \emph{satisfies reachability secrecy of} $x$ with public variables $V$ ($x \notin V$) up to probability $p$ if and only if $Q \parpop \tproc{\secrreach(x)}$ satisfies the correspondence $\fevent{\kwf{adv\_has\_x}} \Rightarrow \false$ with public variables $V$ up to probability $p$.
\end{definition}

The process $\tproc{\secrreachone(x)}$ waits on channel $\cS[\iT]$ for a candidate
value $x'$ and indices $\vf_1, \dots, \vf_m$. If $x[\vf_1, \dots, \vf_m]$ is defined and equal to $x'$, the adversary managed to compute $x[\vf_1, \dots, \vf_m]$, hence to break one-session reachability secrecy. In this case, we execute event $\kwf{adv\_has\_x}$, and our goal will be to bound the probability of this event, by showing the correspondence $\fevent{\kwf{adv\_has\_x}} \Rightarrow \false$.

The process $\tproc{\secrreach(x)}$ additionally provides a reveal query: by sending indices $\vf'_1, \dots, \vf'_m$ on channel $\cR[\iR]$, the adversary can obtain the value of $x[\vf'_1, \dots, \vf'_m]$ if it is defined. Obviously, the adversary breaks reachability secrecy if it computes $x[\vf_1, \dots, \vf_m]$ without having first made a successful reveal query on the indices $\vf_1, \dots, \vf_m$. The absence of such a reveal query is verified by $\FIND\ i \leq \iR \dots$ before executing event $\kwf{adv\_has\_x}$.

The bounds on the number of queries ($\nT$, $\nR$) are chosen large enough that they do not limit the adversary.

\subsubsection{Computation of Advantages}\label{sec:computadv}

\begin{definition}\label{def:adv}
Let $\prop$ be a security property: $\prop$ is $\secrone(x)$, $\secr(x)$, $\secrbit(x)$, or
a trace property, represented by any distinguisher that does not use $\sevent$, $\sbarevent$, nor non-unique events. (Trace properties include correspondences $\corresp$, as well as $\true$, the property that is always true.)
Let $D$ be a disjunction of Shoup and non-unique events that does not contain $\sevent$ nor $\sbarevent$.
Let $C$ be an evaluation context acceptable for $Q$ with any public variables.

When $\prop$ is a trace property, we define $V_{\prop} = \emptyset$ and
\begin{equation}
  \Advtev{Q}{\prop}{C}{D} = \Pr[C[Q] : (\neg\prop \vee D) \wedge \neg \nonunique{Q,D}]\label{eq:adv:corr}
\end{equation}

When $\prop$ is $\secrone(x)$, $\secr(x)$, or $\secrbit(x)$, we define $V_{\prop} = \{x\}$ and
\begin{equation}
\Advtev{Q}{\prop}{C}{D} = \Pr[C[Q] : \sevent \vee D] - \Pr[C[Q] : \sbarevent \vee \nonunique{Q,D}]\label{eq:adv:secr}
\end{equation}

We write $\bound{Q}{V}{\prop}{D}{p}$ when $V_{\prop} \subseteq V$ and for all evaluation contexts $C'$ acceptable for $C_{\prop}[Q]$ with public variables $V \setminus V_{\prop}$ that does not contain events used by $\prop$ or $D$ nor non-unique events in $Q$, we have
$\Advtev{Q}{\prop}{C}{D} \leq p(C)$ for $C = C'[C_{\prop}[\,]]$.
\end{definition}

The events $\sevent$ and $\sbarevent$ are only executed by
$\kw{event\_abort}$.  In~\eqref{eq:adv:secr}, we could write $(\sevent
\vee D) \wedge \neg \nonunique{Q,D}$ instead of $\sevent \vee D$, to
be more similar to~\eqref{eq:adv:corr}.  That would be equivalent
because the game immediately aborts after executing $\sevent$ as well
as events in $D$ and in $\nonunique{Q,D}$, so only one of these events
is executed.
In~\eqref{eq:adv:secr}, we expect that $C = C'[C_{\prop}[\,]]$ for some context $C'$. This is what happens in the
definition of $\bound{Q}{V}{\prop}{D}{p}$. In that definition, $C$ is
a context acceptable for $Q$ with public variables $V$.

%

\begin{lemma}\label{lem:adv}
\begin{enumerate}

\item\label{item:prop:init}
  In the initial game $Q$, let
  $D_U = \bigvee \{ e \mid \unique{e}\text{ occurs in }Q\} = \nonunique{Q}$.

  If $\bound{Q}{V}{\prop}{D_U}{p}$, then $Q$ satisfies property $\prop$
  with public variables $V\setminus V_{\prop}$ up to probability $p'$ where
  $p'(C') = p(C'[C_{\prop}[\,]])$.


\item\label{item:prop:step}
If $\dset, \dsetsnu: Q, D, \usedevents \indistev{V}{p} Q', D', \usedevents'$, 
the events $\sevent$ and $\sbarevent$ are not in $\usedevents'$,
$\bound{Q'}{V}{\prop}{D'}{p'}$, and
\begin{itemize}
\item either $\prop$ is a trace property, $\neg\prop \in \dset$,
  and $p''(C) = p(C, t_{\neg\prop}) + p'(C)$;
\item or $\prop$ is $\secrone(x)$, $\secr(x)$, or $\secrbit(x)$, $\{ \sevent, \neg\sbarevent \} \subseteq \dset$, and $p''(C) = 2 p(C,\ab t_\sevent) + p'(C)$
\end{itemize}
then $\bound{Q}{V}{\prop}{D}{p''}$.

\item\label{item:prop:split}
Let $D$ and $D'$ be disjunctions of Shoup and non-unique events that do not contain $\sevent$ nor $\sbarevent$.
\begin{itemize}
\item If $\prop$ is a trace property,
  $\bound{Q}{V}{\prop}{D}{p}$, and
  $\bound{Q}{V}{\true}{D'}{p'}$, then we have
  $\bound{Q}{V}{\prop}{D \vee D'}{p + p'}$.
\item If $\prop$ is $\secrone(x)$, $\secr(x)$, or $\secrbit(x)$,
  $\Dnu' = \bigvee \{ e \mid e$ occurs in $D'$ and $e$ is a non-unique event$\}$,
  $\bound{Q}{V}{\prop}{D}{p}$,
  $\bound{Q}{V}{\true}{D'}{p'}$, and
  $\bound{Q}{V}{\ab \true}{\ab \Dnu'}{\ab p''}$, then
  $\bound{Q}{V}{\prop}{D \vee D'}{p + p' + p''}$.
\end{itemize}

\item\label{item:prop:merge}
If $\dset, \dsetsnu: Q, D, \usedevents \indistev{V}{p} Q', D', \usedevents'$,
the distinguisher $D''$ is a disjunction of Shoup and non-unique events in $\usedevents$ that does not contain $\sevent$ nor $\sbarevent$,
and $\bound{Q}{V}{\ab \true}{\ab D''}{\ab p'}$, then
$\dset, \dsetsnu: Q, D \vee D'', \usedevents \indistev{V}{p+p'} Q', D', \usedevents'$.

\end{enumerate}
\end{lemma}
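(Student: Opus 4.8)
The plan is to prove each of the four items separately by unfolding Definition~\ref{def:adv} and reducing everything to manipulations of the advantage formulas $\Advtev{Q}{\prop}{C}{D}$. Two facts are used throughout. First, since $\sevent$, $\sbarevent$, every Shoup event, and every non-unique event is executed only by $\keventabort{e}$ or by the abort rules for $\FIND\unique{e}$/$\GET\unique{e}$, at most one such ``abort event'' occurs in any trace; consequently two disjunctions of abort events that share no event are mutually exclusive, and a conjunction such as $(\sevent \vee D) \wedge \neg\nonunique{Q,D}$ collapses, exactly as already noted after Definition~\ref{def:adv}. Second, the side conditions on the events of the context are matched by renaming: given a context $C'$ allowed by the target $\boundfun$-statement, I would rename the events of $C'$ that lie in the forbidden set to fresh names; this changes neither the acceptability of the context, nor the probabilities of the events actually inspected by the distinguishers involved (those events do not occur in $C'$), nor the probability formula, by Property~\ref{prop:prob}, exactly as in the proof of Lemma~\ref{lem:indistev}.

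For item~\ref{item:prop:init}, the key observation is that $D_U = \nonunique{Q}$, so $\nonunique{Q,D_U} = \nonunique{Q} \wedge \neg\nonunique{Q} = \Dfalse$. Hence for a trace property $\Advtev{Q}{\prop}{C}{D_U} = \Pr[C[Q] : \neg\prop \vee \nonunique{Q}] \geq \Pr[C[Q] : \neg\prop] = \Advt_Q^{\prop}(C)$, and for a secrecy property $\Advtev{Q}{\prop}{C}{D_U} = \Pr[C[Q] : \sevent \vee \nonunique{Q}] - \Pr[C[Q] : \sbarevent] \geq \Advt_Q^{\prop}(C)$. Applying $\bound{Q}{V}{\prop}{D_U}{p}$ to the renamed context then gives $\Advt_Q^{\prop}(C') \leq p(C'[C_{\prop}[\,]]) = p'(C')$.

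Item~\ref{item:prop:step} is the heart of the lemma. For a context $C = C'[C_{\prop}[\,]]$ allowed by the target, I would instantiate the hypothesis $\dset, \dsetsnu : Q, D, \usedevents \indistev{V}{p} Q', D', \usedevents'$ with $D_1 = \Dfalse$. For a trace property I take $D_0 = \neg\prop$, obtaining directly $\Advtev{Q}{\prop}{C}{D} \leq \Advtev{Q'}{\prop}{C}{D'} + p(C,t_{\neg\prop}) \leq p'(C) + p(C,t_{\neg\prop})$ by $\bound{Q'}{V}{\prop}{D'}{p'}$. For a secrecy property I use the two instances $D_0 = \sevent$ and $D_0 = \neg\sbarevent$. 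By the abort-event fact, $\Pr[C[Q] : \sevent \vee D] = \Pr[C[Q] : (\sevent \vee D) \wedge \neg\nonunique{Q,D}]$ and, since $D \wedge \neg\nonunique{Q,D} = D$ while $D$ forces the game to abort so that neither $\sbarevent$ nor $\nonunique{Q,D}$ can also fire, $(\neg\sbarevent \vee D) \wedge \neg\nonunique{Q,D} = \neg(\sbarevent \vee \nonunique{Q,D})$. The first instance then gives $\Pr[C[Q] : \sevent \vee D] \leq \Pr[C[Q'] : \sevent \vee D'] + p(C,t_\sevent)$, and the second gives $\Pr[C[Q'] : \sbarevent \vee \nonunique{Q',D'}] \leq \Pr[C[Q] : \sbarevent \vee \nonunique{Q,D}] + p(C,t_\sevent)$ (using $t_{\neg\sbarevent} = t_\sevent$); subtracting yields $\Advtev{Q}{\prop}{C}{D} \leq \Advtev{Q'}{\prop}{C}{D'} + 2p(C,t_\sevent) \leq p'(C) + 2p(C,t_\sevent)$. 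The main obstacle is precisely this secrecy computation: turning the two one-sided inequalities into a bound on the \emph{difference} of probabilities requires the exact collapse of the $\neg\nonunique{}$ conjunctions, which relies on abort-event uniqueness and not on any union bound.

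For items~\ref{item:prop:split} and~\ref{item:prop:merge} the reasoning is boolean. In item~\ref{item:prop:split}, for a trace property one checks $(\neg\prop \vee D \vee D') \wedge \neg\nonunique{Q,D\vee D'} \Rightarrow ((\neg\prop\vee D)\wedge\neg\nonunique{Q,D}) \vee D'$ (splitting on whether $D'$ holds, using $\nonunique{Q,D\vee D'} = \nonunique{Q,D}\wedge\neg D'$), and concludes by a union bound with $\Advtev{Q}{\true}{C}{D'} = \Pr[C[Q]:D'] \leq p'(C)$. For a secrecy property one bounds the two probabilities separately, $\Pr[\sevent \vee D \vee D'] \leq \Pr[\sevent\vee D] + \Pr[D']$ and $\Pr[\sbarevent \vee \nonunique{Q,D}] \leq \Pr[\sbarevent \vee \nonunique{Q,D\vee D'}] + \Pr[\nonunique{Q,D}\wedge D']$; here the abort-event fact shows that $\nonunique{Q,D}\wedge D'$ forces the single abort event to be a non-unique event of $D'$, so $\nonunique{Q,D}\wedge D' \Rightarrow \Dnu'$ and the overlap is bounded by $\bound{Q}{V}{\true}{\Dnu'}{p''}$, which is exactly why the secrecy variant needs the extra hypothesis. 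Finally, in item~\ref{item:prop:merge}, writing $E = D_0\vee D_1\vee D$ and $F = \nonunique{Q,D_1\vee D}$, one computes $\nonunique{Q,D_1\vee D\vee D''} = F \wedge \neg D''$, whence $(E\vee D'')\wedge\neg(F\wedge\neg D'') = (E\wedge\neg F)\vee D''$; a union bound with $\Pr[C[Q]:D''] = \Advtev{Q}{\true}{C}{D''} \leq p'(C)$ adds exactly $p'$ to the left-hand side of the defining inequality~\eqref{eq:indistev} while leaving its right-hand side unchanged, giving the claim. In all parts the side conditions on the forbidden events of the context (that $C$ avoids $\usedevents'$ or the non-unique events of $Q$) are discharged by the renaming argument above, using $\usedevents \subseteq \usedevents'$ together with the fact that $\usedevents$ already contains the events of $D$, $D'$, $D''$ and the non-unique events of $Q$ and $Q'$.
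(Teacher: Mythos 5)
Your proposal is correct and takes essentially the same route as the paper's own proof: the same collapse of the $\neg\nonunique{Q,D}$ conjuncts via abort-event exclusivity, the same instantiation $D_1 = \Dfalse$ with $D_0 = \neg\prop$ (resp.\ $D_0 = \sevent$ and $D_0 = \neg\sbarevent$, subtracting the two one-sided inequalities) in item~\ref{item:prop:step}, and the same union bounds, including the observation $\nonunique{Q,D}\wedge D' \Rightarrow \Dnu'$ in item~\ref{item:prop:split} and the extra $\Pr[C[Q]:D'']$ term in item~\ref{item:prop:merge}. The only detail to make explicit is that the renaming step must cover variables and tables of the context as well as events (so the context becomes acceptable for $Q'$), which is exactly what the paper and the proof of Lemma~\ref{lem:indistev} that you cite already do.
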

\begin{proof}
  Property~\ref{item:prop:init}: We have $\nonunique{Q,D_U} = \Dfalse$.
Let $C'$ be an evaluation context acceptable for $C_{\prop}[Q]$
with public variables $V\setminus V_{\prop}$ that does not contain events used by $\prop$,
and $C = C'[C_{\prop}[\,]]$.
Since $\bound{Q}{V}{\prop}{D_U}{p}$, we have
$\Advtev{Q}{\prop}{C}{D_U} \leq p(C)$.
\begin{itemize}
\item In case $\prop$ is a trace property,
$\Advt_Q^{\prop}(C') = \Pr[C'[Q]: \neg\prop] \leq \Pr[C'[Q]: \neg\prop \vee D_U] = \Pr[C'[C_{\prop}[Q]]: \neg\prop \vee D_U] = \Advtev{Q}{\prop}{C'[C_{\prop}[\,]]}{D_U}$. 
\item In case $\prop$ is $\secrone(x)$, $\secr(x)$, or $\secrbit(x)$,
$\Advt_{Q}^{\prop}(C') = \Pr[C'[C_{\prop}[Q]] : \sevent] - \Pr[C'[C_{\prop}[Q]] : \sbarevent] \leq
\Pr[C'[C_{\prop}[Q]] : \sevent \vee D_U] -
\Pr[C'[C_{\prop}[Q]] : \sbarevent]
= \Advtev{Q}{\prop}{C'[C_{\prop}[\,]]}{D_U}$.
\end{itemize}
In both cases, $\Advt_{Q}^{\prop}(C') \leq \Advtev{Q}{\prop}{C'[C_{\prop}[\,]]}{D_U} = \Advtev{Q}{\prop}{C}{D_U} \leq p(C) = p'(C')$, so $Q$ satisfies property $\prop$ with public variables $V\setminus V_{\prop}$ up to probability $p'$.

Property~\ref{item:prop:step}, case $\prop$ is a trace property:
Let $C$ be an evaluation context acceptable for $Q$ with public variables $V$
that does not contain events used by $\prop$ nor $D$ nor non-unique events
of $Q$.
Let $C'$ be obtained by renaming the variables and tables of $C$ that do not occur in $Q$ to variables and tables that also do not occur in $Q'$, and by renaming the events of $C$ so that they are not in $\usedevents'$.
The context $C'$ is then acceptable for $Q$ and $Q'$ with public variables $V$ and does not contain events in $\usedevents'$.
Since $\dset, \dsetsnu: Q, D, \usedevents \indistev{V}{p} Q', D', \usedevents'$ and
$\neg\prop\in \dset$, we have by taking $D_1 = \Dfalse$,
\begin{align*}
\Advtev{Q}{\prop}{C}{D} 
&= \Pr[C[Q] : (\neg\prop \vee D) \wedge \neg \nonunique{Q,D} ] \\
&= \Pr[C'[Q] : (\neg\prop \vee D) \wedge \neg \nonunique{Q,D} ]
\tag*{since the renaming of events does not affect the events of the distinguisher}\\
&\leq p(C',t_{\neg\prop}) + \Pr[C'[Q'] : (\neg\prop \vee D') \wedge \neg \nonunique{Q',D'}]\\
&\leq p(C', t_{\neg\prop}) + \Advtev{Q'}{\prop}{C'}{D'}\\
&\leq p(C', t_{\neg\prop}) + p'(C')
\tag*{since $\bound{Q'}{V}{\prop}{D'}{p'}$}\\
&\leq p(C, t_{\neg\prop}) + p'(C)
\tag*{since the renaming does not modify the probability formulas by
  Property~\ref{prop:prob}}\\
&\leq p''(C)
\end{align*}
so $\bound{Q}{V}{\prop}{D}{p''}$.

Property~\ref{item:prop:step}, case $\prop$ is $\secrone(x)$, $\secr(x)$, or $\secrbit(x)$:
Let $C'$ be an evaluation context acceptable for $C_{\prop}[Q]$ with public variables $V \setminus V_{\prop}$
that does not contain $\sevent$, $\sbarevent$, events used by $D$, nor non-unique events
of $Q$. Let $C = C'[C_{\prop}[\,]]$.
Let $C''$ be obtained by renaming the variables and tables of $C$ that do not occur in $Q$ to variables and tables that also do not occur in $Q'$, and by renaming the events of $C$ so that they are not in $\usedevents'$. (The events $\sevent$ and $\sbarevent$ are left unchanged by this renaming; they are not in $\usedevents'$.)
The context $C''$ is then acceptable for $Q$ and $Q'$ with public variables $V$ and does not contain events in $\usedevents'$.
Then we have 
\begin{align*}
\Advtev{Q}{\prop}{C}{D} 
&= \Pr[C[Q] : \sevent \vee D] - \Pr[C[Q] : \sbarevent \vee \nonunique{Q,D}]\\
&= \Pr[C''[Q] : \sevent \vee D] - \Pr[C''[Q] : \sbarevent \vee \nonunique{Q,D}]
\tag*{since the renaming of events does not affect the events of the distinguisher}\\
&= \Pr[C''[Q] : (\sevent \vee D) \wedge \neg \nonunique{Q,D}]\\
&\quad - \Pr[C''[Q] : (\sbarevent \wedge \neg D) \vee \nonunique{Q,D}]
\intertext{since $\sevent \vee D$ is equivalent to $(\sevent \vee D) \wedge \neg \nonunique{Q,D}$ and $\sbarevent$ is equivalent to $\sbarevent \wedge \neg D$: the game aborts immediately after
executing $\sevent$, $\sbarevent$, and the events in $D$ and $\nonunique{Q,D}$ so only one of them can be executed}
&= \Pr[C''[Q] : (\sevent \vee D) \wedge \neg \nonunique{Q,D}]\\
&\quad + \Pr[C''[Q] : (\neg\sbarevent \vee D) \wedge \neg \nonunique{Q,D}] - 1\\
&\leq \Pr[C''[Q'] : (\sevent \vee D') \wedge \neg \nonunique{Q',D'}] + p(C'',t_{\sevent})\\
&\quad + \Pr[C''[Q'] : (\neg\sbarevent \vee D') \wedge \neg \nonunique{Q',D'}] + p(C'', t_{\sevent}) - 1
\tag*{since $\neg\sbarevent$ can be implemented in the same time as $\sevent$}\\
&\leq \Pr[C''[Q'] : (\sevent \vee D') \wedge \neg \nonunique{Q',D'}] + p(C'',t_{\sevent})\\
&\quad - \Pr[C''[Q'] : (\sbarevent \wedge\neg D') \vee \nonunique{Q',D'}] + p(C'', t_{\sevent})\\ 
&\leq \Pr[C''[Q'] : \sevent \vee D'] - \Pr[C''[Q'] : \sbarevent \vee \nonunique{Q',D'}] + 2 p(C'', t_{\sevent})
\intertext{since $\sevent \vee D'$ is also equivalent to $(\sevent \vee D') \wedge \neg \nonunique{Q,D'}$ and $\sbarevent$ is also equivalent to $\sbarevent \wedge \neg D'$}
&\leq \Advtev{Q'}{\prop}{C''}{D'} + 2 p(C'', t_{\sevent})\\
&\leq p'(C'') + 2 p(C'', t_{\sevent})
\tag*{since $\bound{Q'}{V}{\prop}{D'}{p'}$}\\
&\leq p'(C) + 2 p(C, t_{\sevent})
\tag*{since the renaming does not modify the probability formulas by
  Property~\ref{prop:prob}}\\
&\leq p''(C)
\end{align*}
so $\bound{Q}{V}{\prop}{D}{p''}$.

Property~\ref{item:prop:split}, case $\prop$ is a trace property:
Let $C$ be an evaluation context acceptable for $Q$ with public variables $V$
that does not contain events used by $\prop$, $D$, $D'$, nor non-unique events
of $Q$.
We have $\nonunique{Q,D} = \nonunique{Q} \wedge\neg D$.
Therefore
\[\begin{split}
\Advtev{Q}{\prop}{C}{D} 
&= \Pr[C[Q] : (\neg\prop \vee D) \wedge \neg \nonunique{Q, D}]\\
&= \Pr[C[Q] : (\neg\prop \vee D) \wedge \neg (\nonunique{Q} \wedge\neg D)]\\
&= \Pr[C[Q] : (\neg\prop \vee D) \wedge (\neg \nonunique{Q} \vee D)]\\
&= \Pr[C[Q] : (\neg\prop \wedge \neg \nonunique{Q}) \vee D]
\end{split}\]
In particular, $\Advtev{Q}{\true}{C}{D} = \Pr[C[Q] : D]$.
Hence
\[\begin{split}
\Advtev{Q}{\prop}{C}{D \vee D'}
&= \Pr[C[Q] : (\neg\prop \wedge \neg \nonunique{Q})\vee D\vee D']\\
&\leq \Pr[C[Q] : (\neg\prop \wedge \neg \nonunique{Q})\vee D] + \Pr[C[Q] : D']\\
&\leq \Advtev{Q}{\prop}{C}{D} + \Advtev{Q}{\true}{C}{D'}\\
&\leq p(C) + p'(C)
\end{split}\]
since $\bound{Q}{V}{\prop}{D}{p}$ and
  $\bound{Q}{V}{\true}{D'}{p'}$.
So $\bound{Q}{V}{\prop}{D \vee D'}{p + p'}$.

Property~\ref{item:prop:split}, case $\prop$ is $\secrone(x)$, $\secr(x)$, or $\secrbit(x)$:
Let $C'$ be an evaluation context acceptable for $C_{\prop}[Q]$ with public variables $V \setminus V_{\prop}$
that does not contain $\sevent$, $\sbarevent$, events used by $D$, $D'$, nor non-unique events
of $Q$. Let $C = C'[C_{\prop}[\,]]$.
Since $\nonunique{Q,D} = \nonunique{Q} \wedge\neg D$, we have
\[
\Advtev{Q}{\prop}{C}{D} = \Pr[C[Q] : \sevent \vee D] - \Pr[C[Q] : \sbarevent \vee (\nonunique{Q} \wedge\neg D)]
\]
Hence
\[\begin{split}
\Advtev{Q}{\prop}{C}{D \vee D'}
&= \Pr[C[Q] : \sevent \vee D \vee D']\\
&\quad - \Pr[C[Q] : \sbarevent \vee (\nonunique{Q} \wedge\neg D \wedge \neg D')]\\
&\leq \Pr[C[Q] : \sevent \vee D] + \Pr[C[Q] : D'] \\
&\quad - \Pr[C[Q] : \sbarevent \vee (\nonunique{Q} \wedge\neg D)] + \Pr[C[Q] : \Dnu']\\
&\leq \Advtev{Q}{\prop}{C}{D} + \Advtev{Q}{\true}{C}{D'} + \Advtev{Q}{\true}{C}{\Dnu'}\\
&\leq p(C) + p'(C) + p''(C)
\end{split}\]
since $\bound{Q}{V}{\prop}{D}{p}$,
$\bound{Q}{V}{\true}{D'}{p'}$, 
$\bound{Q}{V}{\true}{\Dnu'}{p''}$,
and $C$ is also an evaluation context acceptable for $Q$ with public variables $V$.
So $\bound{Q}{V}{\prop}{D \vee D'}{p + p' + p''}$.

The inequality $- \Pr[C[Q] : \sbarevent \vee (\nonunique{Q} \wedge\neg D \wedge \neg D')] \leq - \Pr[C[Q] : \sbarevent \vee (\nonunique{Q} \wedge\neg D)] + \Pr[C[Q] : \Dnu']$ used above is justified as follows:
\[\begin{split}
&\Pr[C[Q] : \sbarevent \vee (\nonunique{Q} \wedge\neg D)]\\
&\quad\leq \Pr[C[Q] : \sbarevent \vee (\nonunique{Q} \wedge\neg D\wedge \neg D') \vee (\nonunique{Q} \wedge D')]\\
&\quad\leq \Pr[C[Q] : \sbarevent \vee (\nonunique{Q} \wedge\neg D \wedge \neg D')] + \Pr[C[Q] : \nonunique{Q} \wedge D']\\
&\quad\leq \Pr[C[Q] : \sbarevent \vee (\nonunique{Q} \wedge\neg D \wedge \neg D')] + \Pr[C[Q] : \Dnu']
\end{split}\]

Property~\ref{item:prop:merge}:
The distinguisher $D \vee D''$ is a disjunction of Shoup and non-unique events in $\usedevents$.
Let $C$ be any evaluation context acceptable for $Q$ with public variables $V$ that does not contain events in $\usedevents'$.
Let $D_0 \in \dset\cup \{\Dfalse\}$.
Let $D_1$ be a disjunction of events in $\dsetsnu$.
We have
\begin{align*}
&\Pr[C[Q] : (D_0 \vee D_1 \vee D \vee D'') \wedge \neg \nonunique{Q,D_1 \vee D \vee D''}]\\
&\quad = \Pr[C[Q] : (D_0 \wedge \neg \nonunique{Q}) \vee D_1 \vee D \vee D'']
\tag*{as in Property~\ref{item:prop:split}, case $\prop$ is a trace property}\\
&\quad \leq \Pr[C[Q] : (D_0 \wedge \neg \nonunique{Q}) \vee D_1 \vee D] + \Pr[C[Q]: D'']\\
&\quad \leq \Pr[C[Q] : (D_0 \vee D_1 \vee D) \wedge \neg \nonunique{Q,D_1 \vee D}] + \Advtev{Q}{\true}{C}{D''}
\tag*{since $\Advtev{Q}{\true}{C}{D''} = \Pr[C[Q]: D'']$ (see Property~\ref{item:prop:split}, case $\prop$ is a trace property)}\\
&\quad \leq  \Pr[C[Q'] : (D_0 \vee D_1 \vee D') \wedge \neg \nonunique{Q,D_1 \vee D'}] + p(C, t_{D_0}) + p'(C)
\end{align*}
since $\bound{Q}{V}{\true}{D''}{p'}$. (Note that the context $C$ does not contain events used by $D''$ nor non-unique events of $Q$.)
\proofcomplete
\end{proof}
This lemma allows one to bound the advantage of the adversary against 
secrecy and correspondences. 
Property~\ref{item:prop:init} is used in the initial game, to express
the desired probability from $\bound{Q}{V}{\prop}{D_U}{p}$. (Using the
distinguisher $D_U$ can also
be understood by saying that we consider that the adversary wins 
if some non-unique event is executed, that is, if a $\FIND$ or $\GET$
declared $\kw{unique}$ by the user actually has several possible choices.
That allows the implementation to make any choice when a $\FIND\unique{e}$ or
$\GET\unique{e}$ has several possible choices: the security proof remains
valid. In particular, a $\FIND\unique{e}$ or $\GET\unique{e}$ can be
implemented by always choosing the first found element.)
Property~\ref{item:prop:step} is used when a game $Q$ is transformed
into a game $Q'$ during the proof. It allows one to bound the probability
in $Q$ from a bound in $Q'$.
Property~\ref{item:prop:split} is useful when distinct sequences of games
are used for bounding the probabilities of breaking $\prop$ and of $D$ on one
side and of $D'$ on the other side.
We bound these two probabilities by $\bound{Q}{V}{\prop}{D}{p}$ and $\bound{Q}{V}{\true}{D'}{p'}$
separately, then obtain a bound $\bound{Q}{V}{\prop}{D \vee D'}{p'''}$ by
computing a sum. (When we deal with secrecy and $\Dnu'\neq\Dfalse$, the
probability of $\Dnu'$ can be bounded by looking at the proof for $D'$.) 
 
More formally, consider the following cases, using Lemma~\ref{lem:adv},
Property~\ref{item:prop:init}:
\begin{itemize}
\item If we want to prove that $Q_0$ satisfies the
correspondence $\corresp$ with public variables $V$,
then we let $\prop = \corresp$.

\item If we want prove that $Q_0$ satisfies the (one-session or bit)
  secrecy of $x$ with public variables $V'$ ($x \notin V'$),
  then we let $\prop$ be $\secrone(x)$, $\secr(x)$, or $\secrbit(x)$
  and $V = V' \cup \{ x\}$.
  
\end{itemize}
In both cases, we show $\bound{Q_0}{V}{\prop}{D_U}{p}$.
The proof produced by CryptoVerif can be represented as a 
tree whose nodes are labeled with quintuples $(Q, V, \prop', D, \usedevents)$
and whose edges are labeled with triples $(\prop'', D'', p'')$,
where $Q$ is the current game, $V$ is the set of public variables,
$\prop'$ and $\prop''$ are either the initial property to prove $\prop$ or $\true$,
$D$ and $D''$ are disjunctions of Shoup and non-unique events,
$\usedevents$ is the set of events used so far, and
$p''$ is a probability formula. The edges have a single source node,
but may have 0, 1, or several target nodes.
We associate to each node labeled with $(Q, V, \prop', D, \usedevents)$
a property $\bound{Q}{V}{\prop'}{D}{p}$, and to each edge labeled 
with $(\prop'', D'', p'')$ with source node labeled with $(Q, V, \prop', D, \usedevents)$ a property $\bound{Q}{V}{\prop''}{D''}{p'}$. CryptoVerif computes the probabilities $p$, $p'$ such that these properties hold from the leaves of the tree to its root, as explained next.

The root of the tree is labeled with $(Q_0, V,  \prop, D_U, \usedevents_0)$
such that $\usedevents_0$ is the set containing the events used
by correspondences to prove or that occur in $Q_0$.

\begin{figure}
\begin{center}
\begin{tikzpicture}[std/.style={rounded corners,draw=black,thick}]
  \node[std] (T) at (3,0) {$Q, V, \_, \_, \usedevents$};
\coordinate (E1m) at (-3,-1);
\coordinate (E2m) at (3,-1);
\coordinate (E3m) at (9,-1);
\node[std] (E21) at (-2,-2) {$Q_1, V_1, \prop'_1, D_1, \usedevents_1$};
\node at (0.5,-2) {\dots};
  \node[std] (E22) at (3,-2) {$Q_j, V_j, \prop'_j, D_j, \usedevents_j$};
\node at (5.5,-2) {\dots};
  \node[std] (E23) at (8,-2) {$Q_l, V_l, \prop'_l, D_l, \usedevents_l$};
  \draw[thick] (T)--(E1m) node[below=1mm] {\dots};
  \draw[thick] (T)--(E2m) node[midway,left=-0.1cm] {{\small $\prop'', D'', p''$}};
  \draw[->,>=latex,thick] (E2m)--(E21);
  \draw[->,>=latex,thick] (E2m)--(E22);
  \draw[->,>=latex,thick] (E2m)--(E23);
  \draw[thick] (T)--(E3m) node[below=1mm] {\dots};
\end{tikzpicture}
\end{center}
\centerline{(a) edge, source, and target nodes}

\begin{center}
\begin{tikzpicture}[std/.style={rounded corners,draw=black,thick}]
  \node[std] (T) at (3,0) {$Q, V, \prop', D, \usedevents$};
  \node[std] (E11) at (-1,-2) {\dots};
  \node[std] (E12)at (0,-2) {\dots};
  \node[std] (E13) at (1,-2) {\dots};
  \node[std] (E2) at (3,-2) {\dots};
  \node[std] (E31) at (5,-2) {\dots};
  \node[std] (E32) at (6,-2) {\dots};
\coordinate (E1m) at (0,-1);
\node at (1,-1) {\dots};
\coordinate (E2m) at (3,-1);
\node at (4.2,-1) {\dots};
\coordinate (E3m) at (5.5,-1);
  \draw[thick] (T)--(E1m) node[midway,left=0.3cm] {{\small $\prop'_1, D'_1, \_$}};
  \draw[->,>=latex,thick] (E1m)--(E11);
  \draw[->,>=latex,thick] (E1m)--(E12);
  \draw[->,>=latex,thick] (E1m)--(E13);
  \draw[thick] (T)--(E2m) node[left=-0.1cm] {{\small $\prop'_j, D'_j, \_$}};
  \draw[->,>=latex,thick] (E2m)--(E2);
  \draw[thick] (T)--(E3m) node[midway,right=0.1cm] {{\small $\prop'_l, D'_l, \_$}};  
  \draw[->,>=latex,thick] (E3m)--(E31);
  \draw[->,>=latex,thick] (E3m)--(E32);
\end{tikzpicture}
\end{center}

\centerline{(b) node and outgoing edges}

\caption{Structure of a proof tree}\label{fig:prooftreegen}
\end{figure}

When an edge labeled with $(\prop'', D'', p'')$ has a source node
labeled with  $(Q, V, \_, \_, \usedevents)$ and target nodes
labeled with $(Q_j, V_j, \prop'_j, D_j, \usedevents_j)$ for
$j \in \{1, \dots, l\}$ (Figure~\ref{fig:prooftreegen}(a)), the bound associated to the edge can be computed
from the bound associated to the target nodes by the probability formula
$p''$ that labels the edge: if $\bound{Q_j}{V_j}{\prop'_j}{D_j}{p_j}$ for
$j \in \{1, \dots, l\}$, then
$\bound{Q}{V}{\prop''}{D''}{p}$ where $p(C) = p''(C, p_1(C), \dots, p_l(C))$.
This situation corresponds to a game transformation that transforms game $Q$
into games $Q_j$ ($j \in \{1, \dots, l\}$).
We can distinguish several cases depending on where the edge comes from:
\begin{itemize}

\item For most game transformations, the edge has a single target node ($l=1$),
  $V_1 = V$, $\prop'_1 = \prop''$,
and the game transformation transforms $Q$ into $Q_1$ and satisfies
  $\dset_1,\emptyset: Q, D'', \usedevents \indistev{V}{p'} Q_1, D_1, \usedevents_1$
where $\dset_1 = \{ \neg\prop'' \}$ when $\prop''$ is a trace property, $\dset_1 = \{ \sevent, \neg\sbarevent\}$ when $\prop''$ is $\secrone(x)$, $\secr(x)$, or $\secrbit(x)$, and $\dset_1 = \emptyset$ when $\prop'' = \true$.
(During the building of the proof tree,
we have
$\dset,\dsetsnu: Q, D, \usedevents \indistev{V}{p'} Q_1, D', \usedevents_1$,
where the distinguishers $\dset$ correspond to the active queries not for introduced Shoup and non-unique events, so $\dset_1 \subseteq \dset$,
$\dsetsnu$ are the active queries for Shoup and non-unique events both before and after this step so $D''$ and $D_1$ are disjunctions of events in $\dsetsnu$,
$D = D''\wedge\neg D_1$ are the Shoup/non-unique events proved at this step,
$D' = D_1\wedge\neg D''$ are the Shoup/non-unique events introduced at this step.
By applying several times Lemma~\ref{lem:indistev}, Property~\ref{indistadd},
we obtain 
$\dset,\dsetsnu: Q, D'', \usedevents \indistev{V}{p'} Q_1, D_1, \usedevents_1$.
By Lemma~\ref{lem:indistev}, Property~\ref{indistremove},
we obtain 
$\dset_1,\emptyset: Q, D'', \usedevents \indistev{V}{p'} Q_1, D_1, \usedevents_1$.)
The bound is inferred by Lemma~\ref{lem:adv}, Property~\ref{item:prop:step}.

If $\prop''$ is a trace property and $\bound{Q_1}{V}{\prop''}{D_1}{p_1}$, then $\bound{Q}{V}{\prop''}{D''}{p}$ where $p(C) = p'(C, t_{\neg\prop''}) + p_1(C)$, so we can define $p''(C, p_t) = p'(C, t_{\neg\prop''}) + p_t$.

If $\prop''$ is $\secrone(x)$, $\secr(x)$, or $\secrbit(x)$ and $\bound{Q_1}{V}{\prop''}{D_1}{p_1}$, then we have $\bound{Q}{V}{\ab \prop''}{\ab D''}{\ab p}$ where $p(C) = 2 p'(C, t_{\sevent}) + p_1(C)$, so we can define $p''(C, p_t) = 2 p'(C, t_{\sevent}) + p_t$.

To unify these two cases, we define
\[\stdtransfproba{\prop''}{p'}(C,p_t) =
\begin{cases}
  p'(C, t_{\neg\prop''}) + p_t&\text{when $\prop''$ is a trace property}\\
  2p'(C, t_{\sevent}) + p_t&\text{when $\prop''$ is $\secrone(x)$, $\secr(x)$, or $\secrbit(x)$}
\end{cases}\]
so that, when an edge comes from a transformation
$\dset_1,\emptyset: Q, D'', \usedevents \indistev{V}{p'} Q_1, \ab D_1, \ab \usedevents_1$ and the considered security property is $\prop''$, the edge can be labeled with the probability formula $\stdtransfproba{\prop''}{p'}$.
If $\bound{Q_1}{V}{\prop''}{D_1}{p_1}$, then $\bound{Q}{V}{\prop''}{D''}{p}$ where $p(C) = \stdtransfproba{\prop''}{p'}(C,p_1(C))$.

\item When a query is proved (by the command \rn{success}, Section~\ref{sec:success}),
  the edge has no target node ($l = 0$), and we simply obtain $\bound{Q}{V}{\prop''}{D''}{p''}$. The probability $p''$ that labels the edge is determined by the \rn{success} command (Proposition~\ref{prop:sec}, \ref{prop:nicorresp}, or~\ref{prop:icorresp}).

  These propositions require that $D'' = \Dfalse$. This is obtained by splitting the properties to prove one property at a time (with one edge for each property starting from the source node), yielding bounds of the form
  $\bound{Q}{V}{\prop''}{\Dfalse}{p''}$ or $\bound{Q}{V}{\true}{e}{p''}$.
  For the first form, $p''$ can immediately be computed from Proposition~\ref{prop:sec}, \ref{prop:nicorresp}, or~\ref{prop:icorresp}. For the second form, when $e$ is a non-unique event, we use Section~\ref{sec:nuevents} and when $e$ is a Shoup event, we notice that $\Advtev{Q}{\true}{C}{e} = \Pr[C[Q]: e \wedge \neg\nonunique{Q,e}] = \Pr[C[Q]: e \wedge \neg\nonunique{Q}] = \Advtev{Q}{\sem{\fevent{e} \Rightarrow \false}}{C}{\Dfalse}$, so we can use $\bound{Q}{V}{\sem{\fevent{e} \Rightarrow \false}}{\Dfalse}{p''}$ instead.

  The \rn{success} command is the only one that removes an event from
  $D''$, which then happens only when we evaluate
  $\bound{Q}{V}{\true}{e}{p''}$, so $D'' = e$, $\prop'' = \true$, $l =
  0$.

\item In case of other transformations such as the \rn{guess} transformation, the relation between bounds that defines $p''$ is given directly in the soundness lemma for the transformation (Lemma~\ref{lem:guess_i}, \ref{lem:guess_x}, or~\ref{lem:guess_branch}). In particular, for the transformation \rn{guess\_branch}, the edge has as many target nodes as there are branches in the guessed instruction. For the transformation $\rn{guess}\ i$, Lemma~\ref{lem:guess_i} requires $D'' = \Dfalse$, which can be achieved as for \rn{success} above.

\end{itemize}

When a node labeled with $(Q, V, \prop', D, \usedevents)$ 
has outgoing edges labeled respectively $(\prop'_1, D'_1, \_)$, 
\ldots, $(\prop'_l, D'_l, \_)$ (Figure~\ref{fig:prooftreegen}(b)), then 
$D = D'_1 \vee \ldots \vee D'_l$
($D$ is a disjunction of the form $e_1 \vee \ldots \vee e_m$, $D'_1, \ldots, D'_l$ are disjunctions
that form a partition of the disjuncts of $D$),
there exists $j_0 \leq l$ such that $\prop'_{j_0} = \prop'$ and
for all $j \neq j_0$, $\prop'_{j} = \true$.
The bound associated to the node is computed from the bound associated to the edges by Lemma~\ref{lem:adv}, Property~\ref{item:prop:split}:
\begin{itemize}
\item If $\prop'$ is a trace property, then we have
 $\bound{Q}{V}{\prop'}{D}{p'_1 + \dots + p'_l}$, where
  $\bound{Q}{V}{\ab \prop'_j}{\ab D'_j}{\ab p'_j}$ for all $j \in \{ 1, \dots, l\}$.
 
\item If $\prop'$ is $\secrone(x)$, $\secr(x)$, or $\secrbit(x)$, then we have
 $\bound{Q}{V}{\prop'}{D}{\sum_{j=1}^l p'_j + \sum_{j=1,\dots,l; j\neq j_0} p''_j}$, where
  $\bound{Q}{V}{\prop'_j}{D'_j}{p'_j}$ for all $j \in \{ 1, \dots, l\}$,
  $\bound{Q}{V}{\prop'_j}{\Dnuarg{j}'}{\ab p''_j}$ for all $j \in \{ 1, \dots, l\}$ with $j \neq j_0$,
  and $\Dnuarg{j}'$ is obtained from $D'_j$ by keeping only the non-unique events.

To prove $\bound{Q}{V}{\prop'_j}{\Dnuarg{j}'}{p''_j}$, we build a proof tree 
with root $Q, V, \prop'_j, \Dnuarg{j}', \usedevents$ from the subtree 
$Q, V, \prop', D, \usedevents \mathop{\longrightarrow}\limits^{\prop'_j, D'_j, p_j} \dots$
by
\begin{itemize}
\item replacing the root $Q, V, \prop', D, \usedevents$ with $Q, V, \prop'_j, D'_j, \usedevents$.
\item removing all Shoup events of $D'_j$ from distinguishers that
label nodes and edges. In particular, the root $Q, V, \prop'_j, D'_j, \usedevents$
then becomes $Q, V, \prop'_j, \Dnuarg{j}', \usedevents$.
\item removing subtrees that start with an edge labeled $\prop'_j, \Dfalse, p$ for some
  $p$.
\end{itemize}
The proof steps remain valid: all proof steps are a fortiori valid when we ignore some Shoup events. That can be verified for each transformation using its soundness lemma. For instance, for proof steps that come from
usual transformations that satisfy property preservation with introduction of events and have a Shoup event $e$
of $D'_j$ both before and after, we can avoid adding that event $e$ when we
derive $\dset_1,\emptyset: Q, D'_j, \usedevents \indistev{V}{p'} Q_1, D_1, \usedevents_1$
from
$\dset,\dsetsnu: Q, D, \usedevents \indistev{V}{p'} Q_1, D', \usedevents_1$.
For proof steps that prove a Shoup event $e$ of $D'_j$ (via \rn{success}), that proof step
starts with just event $e$, so it is simply removed.
We can then apply the previous reasoning to that proof tree.

\end{itemize}

When the proof is a basic sequence of games, each node has one son, which
is the next game in the sequence, except the last game of the sequence 
which has no son. Only the final proof step is distinct for each query.
However, it may happen that distinct sequences of games
are used to bound several events occurring in the game; in this
case, there is a branching in the proof and a node has several sons.
Examples of proof trees can be found in Figure~\ref{fig:prooftrees};
they are explained below.

The bound associated to the leaves of the tree
is computed by \rn{success}; the bound associated to an edge is computed
from the bounds associated to its target nodes, and the bound
associated to a node is computed from the bounds associated to its
outgoing edges. We can then
compute the bounds associated to all nodes of the tree, 
by induction from the leaves to the root. At the root, we obtain a 
bound $\bound{Q_0}{V}{\prop}{D_U}{p}$ that yields the desired result.

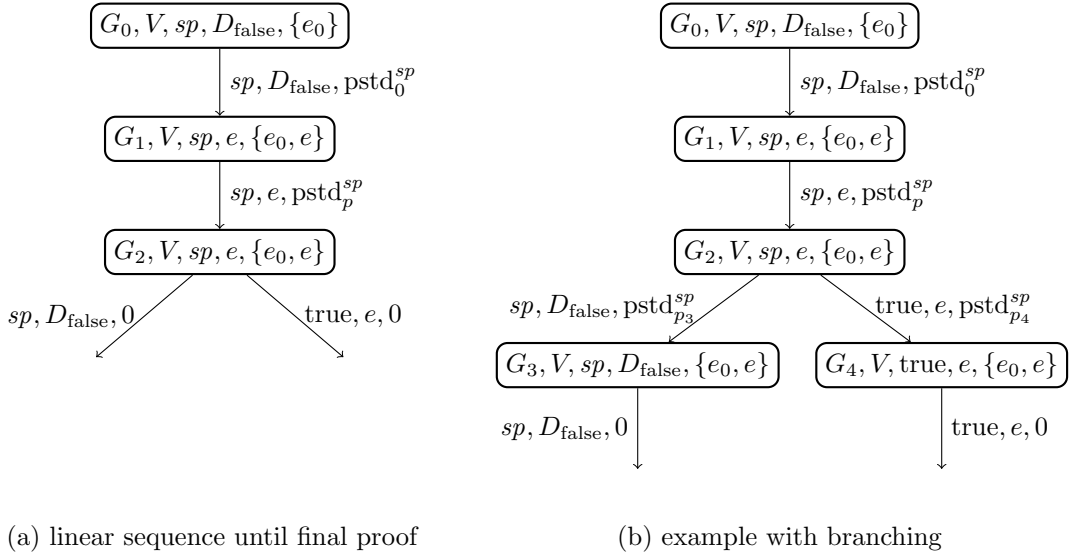
\begin{figure}
\begin{center}
\begin{minipage}{6.1cm}
\begin{center}
\begin{tikzpicture}[std/.style={rounded corners,draw=black,thick},
edge from parent path={[->] (\tikzparentnode) -- (\tikzchildnode)}]
\node[std] {$G_0, V, \prop, \Dfalse, \{e_0\}$} 
child{ node[std] {$G_1, V, \prop, e, \{e_0, e\}$} 
  child { node[std] {$G_2, V, \prop, e, \{e_0, e\}$} [sibling distance=3.5cm]
    child { node {}
      edge from parent node[left]{$\prop, \Dfalse, 0$}
    }
    child { node {}
      edge from parent node[right]{$\true, e, 0$}
    }
    edge from parent node[right]{$\prop, e, \stdtransfproba{\prop}{p}$}
  }
  edge from parent node[right]{$\prop, \Dfalse, \stdtransfproba{\prop}{0}$}
} ;
\end{tikzpicture}%
\end{center}
\vspace*{1.5cm}

\centerline{(a) linear sequence until final proof}
\end{minipage}
\quad
\begin{minipage}{7.9cm}
\begin{center}
\begin{tikzpicture}[std/.style={rounded corners,draw=black,thick},
edge from parent path={[->] (\tikzparentnode) -- (\tikzchildnode)}]
\node[std] {$G_0, V, \prop, \Dfalse, \{e_0\}$} 
child{ node[std] {$G_1, V, \prop, e, \{e_0, e\}$} 
  child { node[std] {$G_2, V, \prop, e, \{e_0, e\}$} [sibling distance=4cm]
    child { node[std] {$G_3, V, \prop, \Dfalse, \{e_0, e\}$}
      child { node {}
        edge from parent node[left] {$\prop, \Dfalse, 0$}
      }
      edge from parent node[left=1mm]{$\prop, \Dfalse, \stdtransfproba{\prop}{p_3}$}
    }
    child { node[std] {$G_4, V, \true, e, \{e_0, e\}$}
      child { node {}
        edge from parent node[right] {$\true, e,0$}
      }
      edge from parent node[right]{$\true, e, \stdtransfproba{\prop}{p_4}$}
    }
    edge from parent node[right]{$\prop, e, \stdtransfproba{\prop}{p}$}
  }
  edge from parent node[right]{$\prop, \Dfalse, \stdtransfproba{\prop}{0}$}
} ;
\end{tikzpicture}
\end{center}

\centerline{(b) example with branching}
\end{minipage}

\end{center}
\caption{Examples of proof trees}\label{fig:prooftrees}
\end{figure}

Lemma~\ref{lem:adv} allows us to obtain more precise probability bounds 
than the standard computation of probabilities generally done by
cryptographers, when we use Shoup's lemma~\cite{Shoup04}.
By Shoup's lemma, if $G'$ is obtained from $G$ by inserting an event $e$ and modifying the code executed after $e$, the probability of distinguishing $G'$ from $G$ is bounded by the probability
of executing $e$: for all contexts $C$ acceptable for $G$ and $G'$ (with any public variables) and all distinguishers $D$, $|\Pr[C[G] : D] - \Pr[C[G']: D]| \leq \Pr[C[G'] : e]$.
Hence, 
\[\Pr[C[G] : D] \leq \Pr[C[G'] : e] + \Pr[C[G']: D].\] 
We improve over this computation of probabilities by considering $e$ and $D$ simultaneously instead of making the sum of the two probabilities: 
\[\Pr[C[G] : D] \leq \Pr[C[G'] : D \vee e].\]
For example, suppose that we want to bound the probability of event $e_0$ in $G_0$: we define $\prop = \sem{\fevent{e_0} \Rightarrow \false} = \neg e_0$. We transform $G_0$ into $G_1$ using Shoup's lemma, so that $G_1$ differs from $G_0$ only when $G_1$ executes event $e$, and we have 
$\{ e_0\},\emptyset: G_0, \ab \Dfalse, \ab \usedevents_0 \indistev{}{0} G_1, \ab e, \ab \usedevents_1$;
then we transform $G_1$ into $G_2$, so that $G_1 \approx^{\{e_0,e\}}_p G_2$,
so we have $\{ e_0\},\emptyset: G_1, \ab e, \ab \usedevents_1 \indistev{}{p} G_2, \ab e, \ab \usedevents_1$ by Lemma~\ref{lem:indistev}, Property~\ref{linkindist1};
and $G_2$ executes neither $e_0$ nor $e$.
We suppose for simplicity that no $\unique{e'}$ occurs, 
so that $\nonunique{G_i,D}$ is always false.
The corresponding proof tree is given in Figure~\ref{fig:prooftrees}(a).
\begin{itemize}
  \item 
Since $e_0$ does not occur in $G_2$, we have $\Advtev{G_2}{\prop}{C}{\Dfalse} = 0$ for all evaluation contexts $C$ acceptable for $G_2$ with public variables $V$ that do not contain event $e_0$. So $\bound{G_2}{V}{\ab \prop}{\ab \Dfalse}{\ab 0}$.
Similarly, $\bound{G_2}{V}{\ab \true}{\ab e}{\ab 0}$. These two properties are represented
in the proof tree by the two edges outgoing from node $G_2, \ab V, \ab \prop, \ab e, \ab \{e_0, e\}$.

\item By Lemma~\ref{lem:adv}, Property~\ref{item:prop:split},
$\bound{G_2}{V}{\ab \prop}{\ab e}{\ab p'_3}$ where $p'_3(C) = 0$.

\item 
Since $\{ e_0\},\emptyset: G_1, \ab e, \ab \usedevents_1 \indistev{}{p} G_2, \ab e, \ab \usedevents_1$,
by Lemma~\ref{lem:adv}, Property~\ref{item:prop:step},
we obtain $\bound{G_1}{V}{\ab \prop}{\ab e}{\ab p'_2}$
where $p'_2(C) = p(C, t_{\neg\prop}) + p'_3(C)$. This is represented in the proof tree
by the edge from node $G_1, \ab V, \ab \prop, \ab e, \ab \{e_0, e\}$ to node $G_2, \ab V, \ab \prop, \ab e, \ab \{e_0, e\}$ labeled with $\prop, e, \stdtransfproba{\prop}{p}$.

\item 
Since $\{ e_0\},\emptyset: G_0, \ab \Dfalse, \ab \usedevents_0 \indistev{}{0} G_1, \ab e, \ab \usedevents_1$,
by Lemma~\ref{lem:adv}, Property~\ref{item:prop:step},
we obtain $\bound{G_0}{V}{\ab \prop}{\ab \Dfalse}{\ab p'_1}$
where $p'_1(C) = 0 + p'_2(C)$. This is represented in the proof tree
by the edge from node $G_0, \ab V,\ab \prop,\ab  \Dfalse, \ab \{e_0\}$ to node $G_1, \ab V, \ab \prop, \ab e, \ab \{e_0, e\}$ labeled with $\prop, \Dfalse, \stdtransfproba{\prop}{0}$.

\item
  Finally, by Lemma~\ref{lem:adv}, Property~\ref{item:prop:init},
  we conclude that $G_0$ satisfies $\prop$ with public variables $V$
  up to probability $p'_1$, where $p'_1(C) = p'_2(C) = p(C, t_{\neg\prop}) + p'_3(C) = p(C, t_{e_0})$, which means that $\Pr[C[G_0] : e_0] \leq p(C, t_{e_0})$
  for all evaluation contexts $C$ acceptable for $G_0$ with public variables $V$ that do not contain event $e_0$.
\end{itemize}

Let $C$ be an evaluation context acceptable for $G_0$ with public variables $V$ that does not contain event $e_0$. 
Since we suppose for simplicity that no $\unique{e'}$ occurs, 
so that $\nonunique{G_i,D}$ is always false, we have $\Advtev{G_i}{\prop}{C}{D} = 
\Pr[C[G_i] : \neg\prop \vee D]$, so we can write the previous computation simply using
probabilities:
\begin{align*}
  \Pr[C[G_0] : e_0] &\leq \Pr[C[G_1] : e_0 \vee e]
\tag*{since $\{ e_0\},\emptyset: G_0, \ab \Dfalse, \ab \usedevents_0 \indistev{}{0} G_1, \ab e, \ab \usedevents_1$}\\
& \leq p(C, t_{e_0}) + \Pr[C[G_2] : e_0 \vee e]
\tag*{since $\{ e_0\},\emptyset: G_1, \ab e, \ab \usedevents_1 \indistev{}{p} G_2, \ab e, \ab \usedevents_1$}\\
&\leq p(C, t_{e_0})
\tag*{since $G_2$ executes neither $e_0$ nor $e$.}
\end{align*}
In contrast, the standard computation of probabilities yields
\[\Pr[C[G_0] : e_0] \leq \Pr[C[G_1] : e_0] + \Pr[C[G_1] : e] \leq p(C, t_{e_0}) + p(C, t_e).\]
The runtime $t_D$ of $D$ is essentially the same for $e_0$, $e$, and $e_0 \vee e$, so $\Pr[C[G_0] : e_0] \leq p(C, t_D)$ by Lemma~\ref{lem:adv}, while $\Pr[C[G_0] : e_0] \leq 2 p(C,t_D)$ by the standard computation, so we have gained a factor 2. The probability that comes from the transformation of $G_1$ into $G_2$ is counted once (for distinguisher $e_0 \vee e$) instead of counting it twice (once for $e_0$ and once for $e$).

The standard computation of probabilities corresponds to applying
point~\ref{item:prop:split} of Lemma~\ref{lem:adv} to bound each
probability separately and compute the sum, as soon as the considered
distinguisher $D$ has several disjuncts. Instead, we use
point~\ref{item:prop:split} of Lemma~\ref{lem:adv} only when the proof
uses different sequences of games to bound the probabilities of the
events, as in Figure~\ref{fig:prooftrees}(b).

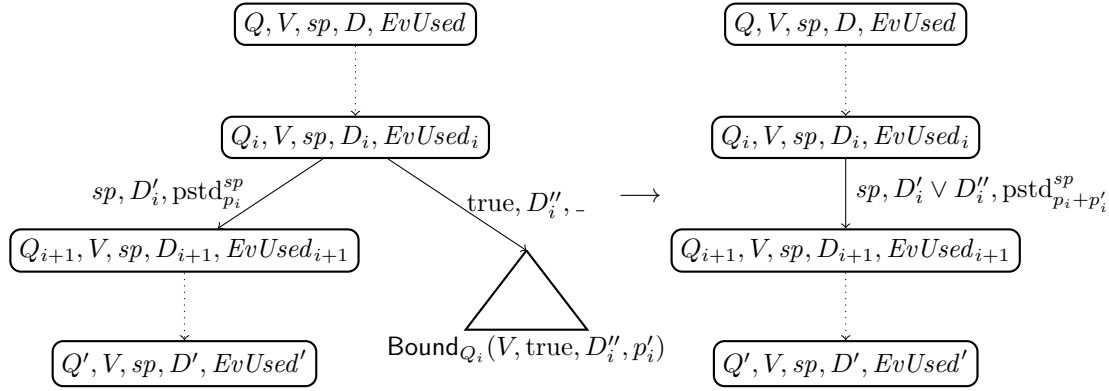
\begin{figure}[t]
\begin{center}
\begin{minipage}{8.9cm}
\begin{tikzpicture}[std/.style={rounded corners,draw=black,thick},
edge from parent path={[->] (\tikzparentnode) -- (\tikzchildnode)}]
\node[std] (A) {$Q, V, \prop, D, \usedevents$} 
child{ node[std] (B) {$Q_i, V, \prop, D_i, \usedevents_i$} [sibling distance=4.5cm]
  child { node[std] {$Q_{i+1}, V, \prop, D_{i+1}, \usedevents_{i+1}$}
    child { node[std] {$Q', V, \prop, D', \usedevents'$}
      edge from parent node { } [dotted]
    }
    edge from parent node[left=2mm]{$\prop, D'_i, \stdtransfproba{\prop}{p_i}$}
  }
  child { node[coordinate] (C) { }
    edge from parent node[right]{$\true, D''_i, \_$}
  };
} ;
\draw [->,dotted] (A) -- (B);
\node[coordinate,below left=1.05cm and 0.8cm] (D) at (C) {};
\node[coordinate,below right=1.05cm and 0.8cm] (E) at (C) {};
\draw [thick] (C) -- (D) -- (E) -- (C);
\node[below=1cm] at (C) {$\bound{Q_i}{V}{\true}{D''_i}{p'_i}$} ;
\end{tikzpicture}%
\end{minipage}
$\hspace*{-10mm}\longrightarrow$
\begin{minipage}{5.9cm}
\begin{tikzpicture}[std/.style={rounded corners,draw=black,thick},
edge from parent path={[->] (\tikzparentnode) -- (\tikzchildnode)}]
\node[std] (A) {$Q, V, \prop, D, \usedevents$} 
child{ node[std] (B) {$Q_i, V, \prop, D_i, \usedevents_i$}
  child { node[std] {$Q_{i+1}, V, \prop, D_{i+1}, \usedevents_{i+1}$}
    child { node[std] {$Q', V, \prop, D', \usedevents'$}
      edge from parent node { } [dotted]
    }
    edge from parent node[right]{$\prop, D'_i \vee D''_i, \stdtransfproba{\prop}{p_i + p'_i}$}
  };
} ;
\draw [->,dotted] (A) -- (B);
\end{tikzpicture}%
\end{minipage}
\end{center}
\caption{Removing subtrees}\label{fig:remsubtree}
\end{figure}

Consider a proof tree that consists of a main branch
that is a sequence of applications of transformations
that satisfy property preservation with introduction of events
(properties of the form
$\dset, \emptyset: Q_i, D'_i, \usedevents_i \indistev{V}{p_i} Q_{i+1}, D_{i+1}, \usedevents_{i+1}$),
and side branches that may use any transformation to bound
the probability of Shoup and non-unique events.
All nodes on the main branch use the same security property $\prop$,
while nodes on side branches use $\true$ as security property.
Such a proof tree happens when we prove indistinguishability properties,
where $\prop$ is any distinguisher used in the definition of
indistinguishability (see Section~\ref{sec:proof_indist}). In particular,
the transformations \rn{guess}, \rn{guess\_branch}, and \rn{success simplify}
are not allowed in the main branch but may be used in side branches.
Lemma~\ref{lem:adv}, Property~\ref{item:prop:merge} allows one
to transform such a proof tree into a single property of the form
$\dset, \emptyset: Q, D, \usedevents \indistev{V}{p} Q', D', \usedevents'$.
Indeed, the main branch starts from the root $Q, V, \prop, D, \usedevents$
to a leaf $Q', V, \prop, D', \usedevents$, with additional subtrees starting from various nodes on this branch. Each node on the main branch is labeled $Q_i, V, \prop, D_i, \usedevents_i$, and the edge of the main branch that starts from $Q_i, V, \prop, D_i, \usedevents_i$ is labeled $\prop, D'_i, \stdtransfproba{\prop}{p_i}$ (see Figure~\ref{fig:remsubtree}).
Suppose an additional subtree starts from a node $Q_i, V, \prop, D_i, \usedevents_i$ with an edge labeled $\true, D''_i, \_$. This subtree yields a bound $\bound{Q_i}{V}{\true}{D''_i}{p'_i}$. The edge of the main branch from $Q_i, V, \prop, D_i, \usedevents_i$ yields a property $\dset, \emptyset: Q_i, D'_i, \usedevents_i \indistev{V}{p_i} Q_{i+1}, D_{i+1}, \usedevents_{i+1}$.
By Lemma~\ref{lem:adv}, Property~\ref{item:prop:merge}, the additional subtree can then be removed from the proof tree by replacing $\prop, D'_i, \stdtransfproba{\prop}{p_i}$ with $\prop, D'_i \vee D''_i, \stdtransfproba{\prop}{p_i + p'_i}$ as label of the edge of the main branch starting from the node $Q_i, V, \prop, D_i, \usedevents_i$.
By repeating this operation, we remove all additional subtrees, obtaining
a proof tree that consists of a single branch with nodes labeled $Q_i, V, \prop, D_i, \usedevents_i$, such that the edge that starts from $Q_i, V, \prop, D_i, \usedevents_i$ is labeled $\prop, D_i, \stdtransfproba{\prop}{p_i}$. This yields a sequence of properties 
$\dset, \emptyset: Q_i, D_i, \usedevents_i \indistev{V}{p_i} Q_{i+1}, D_{i+1}, \usedevents_{i+1}$, which yields a single such property 
$\dset, \emptyset: Q, D, \usedevents \indistev{V}{p} Q', D', \usedevents'$
by transitivity
(Lemma~\ref{lem:indistev}, Property~\ref{indisttrans}).

\subsubsection{Proof of Indistinguishability}\label{sec:proof_indist}

To prove indistinguishability between two games $G_0$ and $G_1$,
CryptoVerif finds a game $G_2$ such that
$\dset, \emptyset : G_0, D_{U0}, \usedevents \indistev{V}{p} G_2, D_2, \usedevents'$
and
$\dset, \emptyset : G_1, D_{U1}, \ab \usedevents_1 \ab \indistev{V}{p',D^+} G_2, D_2, \usedevents_1'$
where
$\dset$ is the set of all distinguishers,
$\usedevents = \cevent(G_0)$, $\usedevents_1 = \cevent(G_1)$,
$D_{U0} = \bigvee \{ e \mid \unique{e}\text{ occurs in }G_0\}$
and $D_{U1} = \bigvee \{ e \mid \unique{e}\text{ occurs in }G_1\}$.
The active queries $D_2$ are also required to be the same in both
sequences of games.
(In general, CryptoVerif builds proof trees; they can be transformed into the properties above by Lemma~\ref{lem:adv}, Property~\ref{item:prop:merge} as explained above. Only transformations that satisfy property preservation with introduction of events are allowed in the sequence of games that proves indistinguishability. The transformations \rn{guess}, \rn{guess\_branch}, and~\rn{success simplify} are not allowed in that sequence, but are allowed in side branches that bound the probability of introduced events.)
So for all evaluation contexts $C$ acceptable for $G_0$ and $G_2$ with public variables $V$ that do not contain events $\usedevents'$,
and all distinguishers $D_0\in\dset$ that run in time at most $t_{D_0}$,  
\begin{equation}
\Pr[C[G_0] : D_0 \vee D_{U0} ] \leq \Pr[C[G_2] : (D_0 \vee D_2) \wedge \neg \nonunique{G_2,D_2}] + p(C,t_{D_0})
\end{equation}
since $\nonunique{G_0,D_{U0}} = \Dfalse$,
and for all evaluation contexts $C$ acceptable for $G_1$ and $G_2$ with public variables $V$ that do not contain events in $\usedevents_1'$,
and all distinguishers $D_0 \in\dset$ that run in time at most $t_{D_0}$,  
\begin{equation}
\Pr[C[G_1] : D_0 \vee D_{U1} ] \leq \Pr[C[G_2] : (D_0 \vee D_2) \wedge \neg \nonunique{G_2,D_2}] + p'(C,t_{D_0})\,.\label{eq:indist2ndstep}
\end{equation}

Let $C$ be an evaluation context acceptable for $G_0$ and $G_1$ with public variables $V$.
After renaming the variables of $C$ that do not occur in $G_0$ and $G_1$ and the tables of $C$ that do not occur in $G_0$ and $G_1$ so that they do not occur in $G_2$, $C$ is also acceptable for $G_2$ with public variables $V$.
Furthermore, by Property~\ref{prop:prob}, this renaming does not change the probabilities.
Let $D_0 \in \dset$ be a distinguisher that runs in time at most $t_{D_0}$.
We rename the events of $C$ in $\usedevents'$ or $\usedevents'_1$
to some fresh events, and modify $D_0$ so that it considers the renamed
events as if they were the original events.
That does not change the probability $|\Pr[C[G_0] : D_0] - \Pr[C[G_1] : D_0]|$, and guarantees that $C$ does not contain events in $\usedevents'$ nor in $\usedevents'_1$.
So
\begin{align*}
  \Pr[C[G_0] : D_0] &\leq \Pr[C[G_0] : D_0 \vee D_{U0} ] \\*
  &\leq \Pr[C[G_2] : (D_0 \vee D_2) \wedge \neg \nonunique{G_2,D_2}] + p(C,t_{D_0})\\*
  &\leq \Pr[C[G_2] : ((D_0 \wedge \neg D_2) \vee D_2) \wedge \neg \nonunique{G_2,D_2}] + p(C,t_{D_0})  \\*
  &\leq \Pr[C[G_2] : (D_0 \wedge \neg D_2) \wedge \neg \nonunique{G_2,D_2}]\\*
  &\qquad {} + \Pr[C[G_2] : D_2 \wedge \neg \nonunique{G_2,D_2}] + p(C,t_{D_0})\\*
  &\leq \Pr[C[G_2] : D_0 \wedge \neg D_2] + \Pr[C[G_2] : D_2 \wedge \neg \nonunique{G_2,D_2}] + p(C,t_{D_0})
\end{align*}
By applying~\eqref{eq:indist2ndstep} to $\neg D_0$, which is also in $\dset$
and runs in the same time as $D_0$, we have
\begin{align*}
  1-\Pr[C[G_1] : D_0] & = \Pr[C[G_1] : \neg D_0]\\
  &\leq \Pr[C[G_1] : \neg D_0 \vee D_{U1} ] \\
  &\leq \Pr[C[G_2] : (\neg D_0 \vee D_2) \wedge \neg \nonunique{G_2,D_2}] + p'(C,t_{D_0})\\
  &\leq \Pr[C[G_2] : \neg D_0 \vee D_2] + p'(C,t_{D_0})\\
  &\leq 1- \Pr[C[G_2] : D_0 \wedge \neg D_2] + p'(C,t_{D_0})
\end{align*}
so
\[- \Pr[C[G_1] : D_0] \leq - \Pr[C[G_2] : D_0 \wedge \neg D_2] + p'(C,t_{D_0})\]
so
\[
\Pr[C[G_0] : D_0] - \Pr[C[G_1] : D_0]\leq \Pr[C[G_2] : D_2 \wedge \neg \nonunique{G_2,D_2}] + p(C,t_{D_0}) + p'(C,t_{D_0})
\]
By applying the formula above to $\neg D_0$, which runs in the same time as $D_0$, we have
\[  \Pr[C[G_1] : D_0] - \Pr[C[G_0] : D_0]\leq \Pr[C[G_2] : D_2 \wedge \neg \nonunique{G_2,D_2}] + p(C,t_{D_0}) + p'(C,t_{D_0})\]
so
\[
|\Pr[C[G_0] : D_0] - \Pr[C[G_1] : D_0]| \leq \Pr[C[G_2] : D_2 \wedge \neg \nonunique{G_2,D_2}] + p(C,t_{D_0}) + p'(C,t_{D_0})
\]
so $G_0 \approx_{p''}^V G_1$ where $p''(C, t_{D_0}) = \Pr[C[G_2] : D_2 \wedge \neg \nonunique{G_2,D_2}] + p(C,t_{D_0}) + p'(C,t_{D_0})$.

\subsubsection{Proof of $\kw{query\_equiv}$}

\paragraph{Decisional case ($\kw{query\_equiv}$ without $[\kwf{computational}]$ annotation)}

The situation is similar to the proof of indistinguishability,
but the property we want to prove is $\dset_{\neg\usedevents_1}, \emptyset : G_0, \Dfalse, \emptyset \indistev{V}{p} G_1, D_1, \usedevents_1$
where $G_0$ contains no events, 
$D_1 = e_1 \vee \ldots \vee e_m$, $e_1, \ldots, e_m$ are the Shoup events occurring in $G_1$, $e'_1, \dots, e'_l$ are the non-unique events occurring in $G_1$,
$\usedevents_1 = \{ e_1, \ldots, e_m, e'_1, \dots, e'_l \}$, $V = \emptyset$.

We need to show that, for all evaluation contexts $C$ acceptable for $G_0$ and $G_1$ without public variables that do not contain events in $\usedevents_1$ and
all distinguishers $D_0 \in \dset_{\neg\usedevents_1}$,
\[
\Pr[C[G_0] : D_0 ] \leq \Pr[C[G_1] : (D_0 \vee D_1) \wedge \neg \nonunique{G_1,D_1}] + p(C,t_{D_0})
\]
CryptoVerif finds a game $G_2$ such that
$\dset_{\neg\usedevents_0'}, \emptyset : G_0, \Dfalse, \emptyset \indistev{V}{p_0} G_2, D_2, \usedevents_0'$
and
$\dset_{\neg\usedevents_1''}, \emptyset : G_1, D_{U1}, \usedevents_1 \indistev{V}{p_1} G_2, D_2', \usedevents_1'$
where $\usedevents_1'' = \usedevents_1' \setminus \{ e_1, \ab \ldots, \ab e_m \}$,
$D_{U1} = e'_1 \vee \ldots \vee e'_l = \nonunique{G_1,D_1}$,
and $D_2 = D_1 \vee D_2'$.
(The events $e_1, \ldots, e_m$ must be preserved by the second proof,
hence we allow distinguishers in $\dset_{\neg\usedevents_1''}$ to use these
events. 
The events $e_1, \ldots, e_m$ will be introduced in the first proof,
and the active queries in $G_2$ are also required to match so
$D_2 = D_1 \vee D_2'$.)
So for all evaluation contexts $C$ acceptable for $G_0$ and $G_2$ without public variables that do not contain events in $\usedevents'_0$,
and all distinguishers $D_0 \in \dset_{\neg\usedevents_0'}$ that run in time at most $t_{D_0}$,  
\[
\Pr[C[G_0] : D_0] \leq \Pr[C[G_2] : (D_0 \vee D_2) \wedge \neg \nonunique{G_2,D_2}] + p_0(C,t_{D_0})
\]
and for all evaluation contexts $C$ acceptable for $G_1$ and $G_2$ without public variables that do not contain events in $\usedevents_1'$,
and all distinguishers $D_0 \in \dset_{\neg\usedevents_1''}$ that run in time at most $t_{D_0}$,
\[
\Pr[C[G_1] : D_0 \vee D_{U1} ] \leq \Pr[C[G_2] : (D_0 \vee D'_2) \wedge \neg \nonunique{G_2,D'_2}] + p_1(C,t_{D_0})
\]
In the last equation, we replace $D_0$ with $\neg D_0 \wedge \neg D_1$ for $D_0 \in \dset_{\neg\usedevents_1'}$, yielding
\begin{align*}
  &\Pr[C[G_1] : (\neg D_0 \wedge\neg D_1) \vee D_{U1} ]\\*
  &\qquad \leq \Pr[C[G_2] : ((\neg D_0 \wedge\neg D_1) \vee D_2') \wedge \neg \nonunique{G_2,D'_2}] + p_1(C,t_{D_0})
\end{align*}
so
\begin{align*}
  &\Pr[C[G_2] : ((D_0 \vee D_1) \wedge \neg D_2') \vee \nonunique{G_2,D'_2}]\\*
  &\qquad \leq   \Pr[C[G_1] : (D_0 \vee D_1) \wedge \neg D_{U1} ] + p_1(C,t_{D_0})
\end{align*}
Then we get
\begin{align*}
  \Pr[C[G_0] : D_0] &\leq \Pr[C[G_2] : (D_0 \vee D_2) \wedge \neg \nonunique{G_2,D_2}] + p_0(C,t_{D_0})\\
  &\leq \Pr[C[G_2] : (D_0 \vee D_1 \vee D'_2) \wedge \neg \nonunique{G_2,D'_2}] + p_0(C,t_{D_0})\\
  &\leq \Pr[C[G_2] : ((D_0 \vee D_1) \wedge \neg D'_2)\vee \nonunique{G_2,D'_2}]\\*
  &\qquad {} + \Pr[C[G_2] : D'_2\wedge \neg \nonunique{G_2, D'_2}] + p_0(C,t_{D_0})\\
  &\leq \Pr[C[G_1] : (D_0 \vee D_1) \wedge\neg D_{U1}]\\*
  &\qquad {} + p_1(C, t_{D_0}) + \Pr[C[G_2] : D'_2 \wedge \neg \nonunique{G_2,D'_2}] + p_0(C,t_{D_0})\\
  &\leq \Pr[C[G_1] : (D_0 \vee D_1) \wedge \neg \nonunique{G_1,D_1}]\\*
  &\qquad{}+ p_1(C, t_{D_0}) + \Pr[C[G_2] : D'_2\wedge \neg \nonunique{G_2,D'_2}] + p_0(C,t_{D_0})
\end{align*}
so we get the desired result with $p(C, t_{D_0}) = p_1(C, t_{D_0}) + \Pr[C[G_2] : D'_2\wedge \neg \nonunique{G_2,D'_2}] + p_0(C,t_{D_0})$.


\paragraph{Computational case ($\kw{query\_equiv}$ with $[\kwf{computational}]$ annotation)}

\newcommand{\distinguish}{\kwf{distinguish}}

As in the decisional case, we want to prove $\dset_{\neg\usedevents_1}, \emptyset : G_0, \Dfalse, \emptyset \indistev{V}{p} G_1, D_1, \usedevents_1$
where $G_0$ contains no events, 
$D_1 = e_1 \vee \ldots \vee e_m$, $e_1, \ldots, e_m$ are the Shoup events occurring in $G_1$, $e'_1, \dots, e'_l$ are the non-unique events occurring in $G_1$,
$\usedevents_1 = \{ e_1, \ldots, e_m, e'_1, \dots, e'_l \}$, $V = \emptyset$.
Additionally, we want to show that the random values of $G_0$ and
$G_1$ marked $[\kwf{unchanged}]$ can be used in events (different from
$e_1, \ldots, e_m$ since $e_1, \ldots, e_m$ have no arguments) in the
game transformed using this assumption. That corresponds to adding
oracles that execute the same arbitrary events using
$[\kwf{unchanged}]$ random values to both $G_0$ and $G_1$. We write
$G_0'$ and $G'_1$ for the games $G_0$ and $G_1$ respectively with
additional events and show $\dset_{\neg\usedevents_1}, \emptyset : G'_0, \Dfalse, \usedevents_0 \indistev{V}{p} G'_1, D_1, \usedevents_1 \cup \usedevents_0$
where $\usedevents_0$ contains these additional events.
These additional events can be observed by the adversary, so
they are allowed in distinguishers in $\dset_{\neg\usedevents_1}$.

Let $D_{U1} = e'_1 \vee \ldots \vee e'_l = \nonunique{G_1,D_1}$.

Let us write $O_0(\tup{r}, \tup{\mathit{args}})$ (resp. $O_1(\tup{r},
\tup{\mathit{args}})$) for the result of oracle $O$ in game $G_0$ (resp. $G_1$)
with randomness $\tup{r}$ and arguments $\tup{\mathit{args}}$.

In order to establish this property, we show that there exists
a mapping $\phi$ of the randomness, such that if random value variable
$r$ has value $v$ in $G_0$, then it has value $\phi_r(v)$ in $G_1$,
$\phi_r$ is the identity when the variable $r$ is marked $[\kwf{unchanged}]$,
$\phi_r$ preserves the probability distribution of variable $r$,
and we define a game $G_2$ in which oracle $O$ with randomness $\tup{r}$
and arguments $\tup{\mathit{args}}$ returns
\begin{align*}
  & \assign{x_0}{O_0(\tup{r}, \tup{\mathit{args}})}\\
  & \assign{x_1}{O_1(\phi(\tup{r}), \tup{\mathit{args}})}\\
  & \bguard{x_0 = x_1}{x_0}{\keventabort{\distinguish}}
\end{align*}
We bound
\begin{align*}
  p(C) &= \Pr[C[G_2] : \distinguish \vee \nonunique{G_2,\Dfalse}] \\
  &= \Pr[C[G_2] : \distinguish \vee D_{U1}]\\
  & = \Advtev{G_2}{\distinguish \Rightarrow\false}{C}{D_{U1}}
\end{align*}
From this bound, we infer the desired property.

We consider a game $G_3$ in which oracle $O$ returns
\begin{align*}
  & \assign{x_0}{O_0(\tup{r}, \tup{\mathit{args}})}\\
  & \assign{x_1}{O_1(\phi(\tup{r}), \tup{\mathit{args}})}\\
  & x_0
\end{align*}
We define $G_2'$ as $G_2$ with the same additional events as in $G_0'$ and $G_1'$, and $G_3'$ as $G_3$ with the same additional events as in $G_0'$ and $G_1'$.
The game $G_3'$ behaves as $G_0'$ except that it executes a Shoup event $e_i$ or a non-unique event when $G_1$ does, so we have, for any evaluation context $C$ acceptable for $G'_0$ and $G'_3$ without public variables, and any distinguisher $D_0$,
\[
\Pr[C[G'_0] : D_0] \leq \Pr[C[G'_3] : D_0 \vee D_1 \vee D_{U1}]
\]
Moreover, $G'_3$ behaves as $G_1'$ except when $G'_2$ executes event $\distinguish$, that is, when $G_2$ executes event $\distinguish$ (the additional events introduced in $G'_2$ are not needed to evaluate the probability of $\distinguish$ since we do not consider their probability), so for all $D$,
\begin{equation}
  |\Pr[C[G'_3] : D] - \Pr[C[G'_1] : D]| \leq \Pr[C[G_2]:\distinguish]
  \label{eq:dist-comput}
\end{equation}
Let $C$ be any evaluation context acceptable for $G'_0$ and $G'_2$ without public variables. By renaming $x_0$ and $x_1$ to variables not in $C$, $C$ is also acceptable for $G'_0$ and $G'_3$ without public variables.
Let $D_0$ be any distinguisher.
With $D = D_0 \vee D_1 \vee D_{U1}$ in~\eqref{eq:dist-comput},
we obtain
\begin{align*}
  \Pr[C[G'_0] : D_0] &\leq \Pr[C[G'_3] : D_0 \vee D_1 \vee D_{U1}]\\
  &\leq \Pr[C[G'_1] : D_0 \vee D_1 \vee D_{U1}] + \Pr[C[G_2]:\distinguish]\\
  &\leq \Pr[C[G'_1] : (D_0 \vee D_1) \wedge \neg \nonunique{G_1,D_1}] + \Pr[C[G'_1] :D_{U1}]\\*
  &\qquad {} + \Pr[C[G_2]:\distinguish]\\
  &\leq \Pr[C[G'_1] : (D_0 \vee D_1) \wedge \neg \nonunique{G_1,D_1}] + p(C)
\end{align*}
since $G'_1$ and $G_2$ execute events $D_{U1}$ in the same cases, and $D_{U1}$ and $\distinguish$ are mutually exclusive.
Therefore, we have $\dset_{\neg\usedevents_1}, \emptyset : G'_0, \Dfalse, \usedevents_0 \indistev{V}{p} G'_1, D_1, \usedevents_1 \cup \usedevents_0$.

\emph{Currently, CryptoVerif can prove $\kw{query\_equiv}$ only when the mapping $\phi$ is the identity for all variables. Other cases can be proved manually and used as assumptions in $\kw{equiv}$ statements.}

\subsection{Turing Machine Adversary}\label{sec:Turing_adv}

In CryptoVerif, the adversary is modeled as an evaluation context.
However, usually, in cryptographic results, an adversary is a
bounded-time probabilistic Turing machine. In this section, we explain
how any bounded-time probabilistic Turing machine that communicates on channels
can be represented as a CryptoVerif evaluation context.

Let $Q_0$ be the initial game that interacts with an adversary.
Let $c_1, \ldots, c_k$ be the channels used in $Q_0$. 
Let $\Tall$ be the union of all types that occur in $Q_0$.
Let $\Tall'$ be the type of pairs containing the encoding a channel as first
component and an element of $\Tall$ as second component.
The encoding of a channel is either the constant $\mathit{yield}$
or a tuple of integers $(j, i_1, \ldots, i_{k'})$ with $1 \leq j \leq k$.
(We assume that unambiguous tuples can be encoded as CryptoVerif values,
and that the constant $\mathit{yield}$ is different from a tuple.)
Let $d_0$, $d_1$, and $d_2$ be channels that do not occur in $Q_0$.

Let $Q_1$ be a process that contains the parallel composition of processes
\[\repl{i_1}{n_1}\ldots \repl{i_{k'}}{n_{k'}} \cinput{c_j[i_1, \ldots, i_{k'}]}{x:\Tall}.\coutput{d_0}{((j, i_1, \ldots, i_{k'}), x)}\]
for each output $\coutput{c_j[i'_1, \ldots, i'_{k'}]}{N}$ that occurs under
$\repl{i'_1}{n_1}\ldots \repl{i'_{k'}}{n_{k'}}$ in $Q_0$.
Since, in the initial game $Q_0$, the channels of all outputs use the current replication indices as channel indices, as in $c_j[i'_1, \ldots, i'_{k'}]$, a single output is executed for each value of the indices and for each syntactic occurrence of the output, so the inputs in $Q_1$ can receive all outputs made by $Q_0$. The process $Q_1$ forwards all these outputs to the same channel $d_0$, with a message that specifies both the channel $c_j[i_1, \ldots, i_{k'}]$ on which $Q_0$ emitted (encoded as a bitstring) and the message $x$ sent by $Q_0$.

In addition, $Q_1$ also contains the parallel composition of processes
\[\repl{i_1}{n_1}\ldots \repl{i_{k'}}{n_{k'}} \cinput{\mathit{yield}}{}.\coutput{d_0}{(\mathit{yield}, ())}\]
for each occurrence of $\kw{yield}$ that occurs under
$\repl{i'_1}{n_1}\ldots \repl{i'_{k'}}{n_{k'}}$ in $Q_0$, to receive all outputs
that come from the $\kw{yield}$ construct.

\begin{figure}
\begin{align*}\firstline\linelabel{line:start}
\hspace*{-1cm}Q_2 = {}&\repl{i}{n} \cinput{d_1[i]}{s:\bitstring};\\\nextline
\linelabel{line:Turingrun}&\assign{(s',o,v)}{f(s)}\\
&&&\qquad \textit{Lines~\ref{line:startcopy}--\ref{line:endcopy} are repeated for each $j \leq k$ and each $k'$}\\
&&&\qquad \textit{such that there is an input on channel $c_j[i'_1, \ldots, i'_{k'}]$ in $Q_0$.}\\\nextline
\linelabel{line:startcopy}\linelabel{line:testmessage}&\baguard{o = (j,k')}\\\nextline
\linelabel{line:out}&\qquad \assign{(a_1, \ldots, a_{k'}, b)}{v}
 \coutput{c_j[a_1, \ldots, a_{k'}]}{b};\\\nextline
\linelabel{line:in}&\qquad \cinput{d_0}{s'':\Tall'};
 \coutput{d_1[i+1]}{f'(s',s'')}\\\nextline
\linelabel{line:endcopy}&\ELSE\\[5mm]\nextline
\linelabel{line:testrandom}&\baguard{o = \mathit{random}}\\\nextline
\linelabel{line:random}&\qquad \Res{x}{\bool};
 \coutput{d_1[i+1]}{f''(s', x)}\\\nextline
&\ELSE\\\nextline
\linelabel{line:testabort}&\baguard{o = \mathit{abort}}\\\nextline
\linelabel{line:abort}&\qquad \keventabort{e}\\\nextline
&\ELSE\\\nextline
\linelabel{line:stop}&\qquad \coutput{d_2}{}
\end{align*}
\caption{Looping process}\label{fig:loopprocess}
\end{figure}

Let $C = \Reschan{d_0}; \Reschan{d_1}; \Reschan{d_2}; (\cinput{\startch}{}.\coutput{d_1[1]}{s_0} \parpop Q_1 \parpop Q_2 \parpop [\,])$, where the process $Q_2$ is defined in Figure~\ref{fig:loopprocess}. Let us explain how the context $C$ can simulate any Turing machine interacting with the process $Q_0$.

The current state of the Turing machine is sent on channel $d_1[i]$ where $i$ is a loop index that starts at 1 and increases during execution.
As shown in the semantics of CryptoVerif, upon startup, a message is sent on channel $\startch$. When $C$ receives that message, it sends the initial state of the Turing machine $s_0$ on channel $d_1[1]$. This message is received by process $Q_2$ (line~\ref{line:start}). Then $Q_2$ calls the function $f$ on the current state $s$ of the Turing machine (line~\ref{line:Turingrun}). This function executes the Turing machine, until one of the following situations happens:
\begin{itemize}
\item The Turing machine sends a message $b$ on a channel $c_j[a_1, \ldots, a_{k'}]$; in this case, $f$ returns $(s', (j, k'), (a_1, \ldots, a_{k'}, b))$, where $s'$ is the new state of the Turing machine. 
The test at line~\ref{line:testmessage} is then going to succeed for the appropriate value of $j,k'$, and the desired message is going to be sent at line~\ref{line:out}. After receiving a message, the process $Q_0$ always replies by sending a message (except if it aborts). This message is going to be received by $Q_1$, which is going to forward on $d_0$ the channel and the received message. These channel and message are then received as $s''$ at line~\ref{line:in}. Then $f'(s',s'')$ is the new state of the Turing machine after receiving that message. This state is sent on channel $d_1[i+1]$, which restarts a new iteration of $Q_2$.

\item The Turing machine generates a fresh random bit; in this case, $f$ returns $(s', \ab \mathit{random}, ())$ where $s'$ is the new state of the Turing machine.
The test at line~\ref{line:testrandom} is then going to succeed. At line~\ref{line:random}, a random bit $x$ is chosen. Then $f''(s', x)$ is the new state of the Turing machine with that random bit. This state is sent on channel $d_1[i+1]$, which restarts a new iteration of $Q_2$ as in the previous case.

\item The Turing machine aborts; in this case, $f$ returns $(s', \mathit{abort}, ())$. The test at line~\ref{line:testabort} is then going to succeed, and the process aborts at line~\ref{line:abort}. (The event $e$ is any event not used elsewhere; the event is not really useful, it is present because the CryptoVerif language always executes an event before aborting.)

\item The Turing machine stops; in this case, $f$ returns $(s', \mathit{stop}, ())$. No test succeeds, so line~\ref{line:stop} is executed. The process tries to send a message on channel $d_2$, but there is no input on this channel, so the process blocks.

\end{itemize}
The constants $\mathit{random}$, $\mathit{abort}$, and $\mathit{stop}$ are
assumed to be pairwise distinct, and distinct from all pairs.

The function $f$ is a CryptoVerif primitive, because it can be implemented
by a deterministic bounded-time Turing machine. (Recall that $f$ stops when
the initial probabilistic Turing machine makes a random choice, and the
random choice is performed by CryptoVerif at 
lines~\ref{line:testrandom}--\ref{line:random}.)
Similarly, the function $f'$ that computes the new state of the Turing machine from the old state and the received message, and the function $f''$ that computes the new state of the Turing machine from the old state and a random bit are CryptoVerif primitives.

The replication bound $n$ (used in $Q_2$, line~\ref{line:start}) is chosen large enough so that the loop never stops due to that bound: the Turing machine aborts or stops before the bound is reached. This is possible since the Turing machine runs in bounded time, so sends a bounded number of messages and chooses a bounded number of random bits.

Notice that, if $Q_0$ sends and receives messages on the same channels, it may happen that a message sent by $Q_0$ is immediately received by $Q_0$ without being intercepted by the adversary. In this case, since both $Q_0$ and $Q_1$ are going to listen on the same channels, the destination of the message (either the honest process $Q_0$ or the adversary $Q_1$) is chosen randomly with uniform probability, depending on the number of available receivers. Therefore, adding more copies of the receiving processes in $Q_1$ increases the probability that the adversary receives the message. 
Moreover, when the same channel is used for both inputs and outputs, the messages sent by $Q_2$ at line~\ref{line:out} may be received back by the adversary via $Q_1$, instead of being received by $Q_0$.
We recommend avoiding this strange situation, by using distinct channels for inputs on the one hand and outputs on the other hand. 
More generally, we recommend using distinct channels for each input and output, so that the adversary gets full control of the network, as already mentioned page~\pageref{distinctchannels}.

As a slight extension, it would still be possible to allow $Q_0$ to output on $c_j[i_1, \ldots, i_{k'}]$ after receiving a message on the same channel $c_j[i_1, \ldots, i_{k'}]$. In this case, a message sent by $Q_0$ on $c_j[i_1, \ldots, i_{k'}]$ cannot be received by $Q_0$, because the input on $c_j[i_1, \ldots, i_{k'}]$ is no longer available when the output on $c_j[i_1, \ldots, i_{k'}]$ is performed by $Q_0$. Moreover, the problem that messages sent by $Q_2$ at line~\ref{line:out} may be received back by the adversary via $Q_1$, instead of being received by $Q_0$, can be avoided by putting the receiver process
\[\cinput{c_j[a_1, \ldots, a_{k'}]}{x:\Tall}.\coutput{d_0}{((j, a_1, \ldots, a_{k'}), x)}\]
after $\coutput{c_j[a_1, \ldots, a_{k'}]}{b}$ in parallel with $\cinput{d_0}{s'':\Tall'}; \coutput{d_1[i+1]}{f'(s',s'')}$ in $Q_2$, instead of including
\[\repl{i_1}{n_1}\ldots \repl{i_{k'}}{n_{k'}} \cinput{c_j[i_1, \ldots, i_{k'}]}{x:\Tall}.\coutput{d_0}{((j, i_1, \ldots, i_{k'}), x)}\]
in $Q_1$.

The context $C$ does not allow the Turing machine to execute events of its choice, while a CryptoVerif context can execute events. We could obviously extend the model to allow the Turing machine to execute events, but this is not needed for the cases we consider. Indeed, if the adversary represented as a CryptoVerif context executes events, these events can be deleted without changing the final result returned by the distinguisher: for correspondences, by Definition~\ref{def:proccorresp}, the context is not allowed to contain events used by $\corresp$, and all other events are ignored by the distinguisher $\neg \corresp$; for one-session secrecy, secrecy, and bit secrecy, by Definitions~\ref{def:secr} and~\ref{def:secrbit}, the context is not allowed to contain $\sevent$ nor $\sbarevent$, and all other events are ignored by the distinguishers $\sevent$ and $\sbarevent$.

To sum up, the context given in this section allows us to run any probabilistic
bounded-time Turing machine as a CryptoVerif context, so CryptoVerif contexts are powerful enough to represent the adversaries usually considered by cryptographers.

\section{Collecting True Facts}\label{sec:truefacts}

In this section, we consider only processes that satisfy Properties~\ref{prop:notables} and~\ref{prop:autosarename}.
We can assume without loss of generality that the adversary also satisfies these properties: 
the Turing machine adversary encoded in Section~\ref{sec:Turing_adv} satisfies them
and tables ($\INSERT$ and $\GET$) can be removed by encoding them using $\FIND$ by transformation \rn{expand\_tables} (Section~\ref{sec:transftables}) and variables defined in conditions of $\FIND$ can be renamed to have distinct names by transformation \rn{auto\_SArename} (Section~\ref{sec:autosarename}).

Given a configuration $\conf = E, \sigma, N, \tblcts, \evseq$ or
$\conf = E, (\sigma, P)\restconfig$ or $\conf = E, \pset, \cset$, we denote by $E_{\conf}$ the
environment $E$ in configuration $\conf$.
We denote by $E_{\trace \before \conf}$ the union of $E_{\conf'}$ for all configurations $\conf' \beforetr{\trace} \conf$
in $\trace$. It is a set of mappings $x[\tup{a}] \mapsto b$. At this stage, it may include conflicting mappings
$x[\tup{a}] \mapsto b$ and $x[\tup{a}] \mapsto b'$ with $b \neq b'$. We prove below (Lemma~\ref{lem:stronginv1exec}) 
that this situation never happens.
The notation $E_{\trace \before \conf}$ is useful because the environment computed in the semantics does not keep the values of variables defined in conditions of $\FIND$ after these conditions are evaluated. Considering the union of all environments of previous configurations allows us to recover the values of these variables, and to use them in the facts that we collect.
Given a configuration $\conf = E, \sigma, N, \tblcts, \evseq$ or
$\conf = E, (\sigma, P)\restconfig$, we denote by $\sigma_{\conf}$ the
mapping sequence for replication indices $\sigma$ in the configuration $\conf$
and by $\evseq_{\conf}$ the sequence of events $\evseq$ in configuration $\conf$.

Let us define $\defset$ as in Section~\ref{sec:inv1exec}, except that
{\allowdisplaybreaks\begin{align*}
&\defset(\sigma, \pptag \FIND\uniqueopt \ (\mathop{\textstyle\bigoplus}\nolimits_{j=1}^m
\tup{\vf_j}[\tup{i}] = \tup{i_j} \leq \tup{n_j}\ \SUCHTHAT \ 
\defined(\tup{M_j}) \fand M_j \THEN N_j)\ \ELSE N) = \\*
&\ \left(\bigmultiunion_{j = 1}^m \bigmultiunion_{\tup{a} \leq \tup{n_j}} \defset(\sigma[\tup{i_j} \mapsto \tup{a}], M_j)\!\right)\! \multiunion
\max\left(\max_{j = 1}^m \left(\{ \tup{\vf_j}[\sigma(\tup{i})] \} \multiunion
\defset(\sigma, N_j)\right), \defset(\sigma, N)\!\right)\\
&\defset(\sigma, \pptag \FIND\uniqueopt \ (\mathop{\textstyle\bigoplus}\nolimits_{j=1}^m
\tup{\vf_j}[\tup{i}] = \tup{i_j} \leq \tup{n_j}\ \SUCHTHAT \ 
\defined(\tup{M_j}) \fand M_j \THEN P_j)\ \ELSE P) = \\*
&\ \left(\bigmultiunion_{j = 1}^m \bigmultiunion_{\tup{a} \leq \tup{n_j}} \defset(\sigma[\tup{i_j} \mapsto \tup{a}], M_j)\!\right)\! \multiunion
\max\left(\max_{j = 1}^m \left(\{ \tup{\vf_j}[\sigma(\tup{i})] \} \multiunion
\defset(\sigma, P_j)\right), \defset(\sigma, P)\!\right)
\end{align*}}%
so that the variables defined in conditions of $\FIND$ are now considered as defined forever, and not temporarily
during the evaluation of the considered condition. We also define
\[\defset(\trace \before \conf) = \dom(E_{\trace \before \conf}) \multiunion \defsetfut(\conf)\,.\]

\begin{lemma}\label{lem:stronginv1exec}
Let $Q_0$ be a process that satisfies Properties~\ref{prop:notables} and~\ref{prop:autosarename}.
Let $\trace$ be a trace of $Q_0$ and $\conf$ be a configuration in the derivation of $\trace$.
Then the following properties hold:
\begin{enumerate}
\item $\defset(\trace \before \conf)$ does not contain duplicate elements.
\item Each variable is defined at most once for each value of its array
indices in $\trace$.
\item $E_{\trace \before \conf}$ contains at most one binding for each $x[\tup{a}]$.
\end{enumerate}
\end{lemma}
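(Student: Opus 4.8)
The plan is to prove the three properties together, since they are tightly linked: once I control $\defset(\trace \before \conf)$, the other two follow. The key new ingredient compared to Lemma~\ref{lem:inv1exec} is that the modified $\defset$ now records variables defined in conditions of $\FIND$ as defined \emph{forever} (via the disjoint union $\bigmultiunion_{j}\bigmultiunion_{\tup{a}\leq\tup{n_j}}$), so I must show these additional definitions do not create duplicates. This is exactly where Properties~\ref{prop:notables} and~\ref{prop:autosarename} are needed: Property~\ref{prop:notables} removes $\GET$ (whose condition variables could be redefined once per table element), and Property~\ref{prop:autosarename} guarantees that variables defined in conditions of distinct $\FIND$ branches have pairwise distinct names, which prevents the clash that the discussion after Corollary~\ref{cor:inv1exec} warned about (where SArename may put the same variable in several branches).

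First I would establish, exactly as in the proof sketch of Lemma~\ref{lem:inv1exec}, the static statement: for every program point $\pp$ with $\dom(\sigma)=\replidx{\pp}$ and process/term $Q$ at $\pp$ satisfying Invariant~\ref{inv1}, every element of $\defset(\sigma,Q)$ has the form $x[\tup{a}]$ with $x\in\vardef(Q)$ and $\image(\sigma)$ a prefix of $\tup{a}$, and that $\defset(\sigma,Q)$ contains no duplicates. The induction is the same, but at the $\FIND$ case I now have extra summands $\bigmultiunion_{\tup{a}\leq\tup{n_j}}\defset(\sigma[\tup{i_j}\mapsto\tup{a}],M_j)$. I would argue these are disjoint from each other and from the rest: distinct values of $\tup{a}$ give distinct extensions of $\sigma$, hence variables with distinct indices by the prefix property; across branches $j$, the condition variables carry distinct names by Property~\ref{prop:autosarename}; and these condition variables cannot appear in the $N_j$, $N$ parts nor be defined there, because by Invariant~\ref{invfc} they have no array accesses and are used only locally in their own condition. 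Thus the modified $\defset(\sigma_0,Q_0)$ has no duplicates, establishing the base case for the whole trace.

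Next I would prove the dynamic invariant by induction on the derivation: if $\conf\red{p}{\ix}\conf'$ (and similarly for $\redq$), then $\defset(\trace\before\conf)\supseteq\defset(\trace\before\conf')$, and likewise for intermediate configurations. Here $E_{\trace\before\conf}$ accumulates all environments of earlier configurations, so $\dom(E_{\trace\before\conf})$ only grows; the content of $\defsetfut(\conf)$ shrinks as definitions are executed and the newly bound cells move into $\dom(E_{\trace\before\conf})$. The delicate point is the $\FIND$/condition machinery: when a branch condition is evaluated, its defined variables are added to some $E''$ that the semantics discards, but $E_{\trace\before\conf}$ retains them via the union over $\beforetr{\trace}$; I must check that each such cell was already counted in $\defsetfut$ of the enclosing $\FIND$ configuration (because of the new "defined forever" summand) so that moving it into $\dom(E_{\trace\before\conf})$ does not increase the multiset. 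Combining the static no-duplicates result at the initial configuration with this monotonicity yields property~1 for all $\conf$ in $\trace$.

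The main obstacle I anticipate is property~2 (each variable defined at most once per index), whose proof must now also cover variables defined in conditions of $\FIND$ — precisely the case Corollary~\ref{cor:inv1exec} had to exclude. I would mirror the argument of Corollary~\ref{cor:inv1exec}, but since condition variables may legitimately be re-evaluated, I rely on property~1 together with Property~\ref{prop:autosarename}: a given name defined in a condition occurs in a single $\FIND$ branch, and the semantics indexes it by the $\FIND$ indices $\tup{i_j}$, so two definitions of $x[\tup{a}]$ with the same $\tup{a}$ would produce a duplicate in $\defset(\trace\before\conf)$, contradicting property~1. For two transitions that cannot be ordered by $\beforetr{\trace}$ — which happens only inside hypotheses of $\FIND$ or $\GET$ rules — I would use that $\GET$ is absent (Property~\ref{prop:notables}) and that within a single $\FIND$ the condition evaluations for distinct indices $\tup{a}$ use distinct array cells, again via the disjointness from the static step. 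Finally, property~3 is an immediate restatement of property~2: $E_{\trace\before\conf}$ collects bindings $x[\tup{a}]\mapsto b$ from earlier configurations, and two conflicting bindings would require two definitions of $x[\tup{a}]$, which property~2 forbids.
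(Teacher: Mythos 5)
Your proposal is correct and follows essentially the same route as the paper's proof: the static no-duplicates property of the modified $\defset$ (using Property~\ref{prop:autosarename} across $\FIND$ branches and distinct extensions of $\sigma$ across index values), the monotonicity $\defset(\trace \before \conf) \supseteq \defset(\trace \before \conf')$ along reductions, property~1 by combining the two, property~2 by contradiction split into the ordered case (duplicate in $\defset(\trace\before\conf_2)$) and the unordered case (both definitions forced into the disjoint static $\defset$s of two condition evaluations of the same $\FIND$), and property~3 as an immediate consequence. The paper's treatment of the unordered case is somewhat more explicit (it first uses property~1 to rule out $x[\tup{a}] \in \dom(E_{\trace\before\conf_{0,k}})$ before invoking disjointness), but your sketch contains the same ingredients.
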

\begin{proofsk}
The proof is similar to the proof of Lemma~\ref{lem:inv1exec}.
  We first show as in Lemma~\ref{lem:inv1exec} that, 
  for all program points $\pp$ in $Q_0$, if
  $\dom(\sigma) = \replidx{\pp}$ are the current replication indices at $\pp$
  and the process or term $Q$ at $\pp$
  satisfies Invariant~\ref{inv1}, then all elements of
  $\defset(\sigma, Q)$ are of the form $x[\tup{a}]$ where $x \in
  \vardef(Q)$ and $\image(\sigma)$ is a prefix of $\tup{a}$. 

  Next, we show that, for all program points $\pp$, if
  $\dom(\sigma) = \replidx{\pp}$ are the current replication indices at $\pp$
  and the process or term $Q$ at $\pp$
  satisfies Invariant~\ref{inv1}, then $\defset(\sigma, Q)$ does not
  contain duplicate elements.
  The proof proceeds by induction on $Q$.
  All multiset unions in the computation of $\defset(\sigma, Q)$ are
  disjoint unions by the property above, because either they use
  different extensions of $\sigma$ (cases of replication and of conditions of $\FIND$) or they use
  disjoint variable definitions or subprocesses or subterms in the
  same branch of $\FIND$ or $\kw{if}$, which must define different
  variables by Invariant~\ref{inv1} and by Property~\ref{prop:autosarename}.

  We show by induction on the derivations that, if
  $\conf \red{p}{\ix} \conf'$, then
  $\defset(\trace\before \conf) \supseteq \defset(\trace\before
  \conf')$
  and for all semantic configurations $\conf''$ in the derivation of
  $\conf \red{p}{\ix} \conf'$,
  $\defset(\trace\before \conf) \supseteq \defset(\trace\before
  \conf'')$, and similarly with $\redq$ instead of $\red{p}{\ix}$.

The first result follows: since $Q_0$ satisfies Invariant~\ref{inv1},
$\defset(\sigma_0, Q_0)$ does not
contain duplicate elements, where $\sigma_0$ is the empty mapping sequence.
Let $\conf_0 = \emptyset, \{ (\sigma_0, Q_0) \}, \fc(Q_0)$, 
 $\conf_1 = \reduce(\emptyset, \{ (\sigma_0, Q_0) \}, \fc(Q_0))$, 
 $\conf_2 = \initconfig(Q_0)$, and $\conf_3$ be any other configuration of~$\trace$.
Then $\defset(\trace \before \conf_0)$, $\defset(\trace \before \conf_1)$,
$\defset(\trace \before \conf_2)$, and therefore $\defset(\trace \before \conf_3)$ do not contain duplicate elements.

Let us prove the second result. In order to derive a contradiction, assume that
two transitions $\conf_1 \red{p_1}{\ix_1} \conf'_1$ and $\conf_2 \red{p_2}{\ix_2} \conf'_2$
inside $\trace$ define the same variable $x[\tup{a}]$.
\begin{itemize}
\item First case: one transition happens before the other, for instance $\conf'_1 \beforetr{\trace} \conf_2$.
(The case $\conf'_2 \beforetr{\trace} \conf_1$ is symmetric.)
Since $\conf_1 \red{p_1}{\ix_1} \conf'_1$ defines $x[\tup{a}]$, we have $x[\tup{a}] \in \dom(E_{\conf'_1})$, so $x[\tup{a}] \in \dom(E_{\trace \before \conf_2})$.
Moreover, since $\conf_2 \red{p_2}{\ix_2} \conf'_2$ defines $x[\tup{a}]$, we have $x[\tup{a}] \in \defsetfut(\conf_2)$,
by inspecting all rules that add elements to the environment.
Therefore $\defset(\trace \before \conf_2) = \dom(E_{\trace \before \conf_2}) \multiunion \defsetfut(\conf_2)$ contains
twice $x[\tup{a}]$. Contradiction.

\item Second case: the transitions cannot be ordered. 
By definition of $\beforetr{\trace}$, this can happen only when a semantic rule uses several derivations
for its assumptions, which happens only in rules for $\FIND$. 
(Recall that $\GET$ is excluded by Property~\ref{prop:notables}.)
Therefore, there exists $k_1$ and $k_2$ such that 
 $\conf_1 \red{p_1}{\ix_1} \conf'_1$ is in the derivation of
$\conf_{0,k} = E, \sigma[\tup{i_{j_k}} \mapsto \tup{a}_k], D_{j_k} \wedge M_{j_k}, \tblcts, \evseq \red{p_k}{\ix_k}^* \conf'_{0,k} = E_k, \sigma_k, r_k, \tblcts, \evseq$ with $v_k = (j_k, \tup{a}_k)$ for $k = k_1$ and $\conf_2 \red{p_2}{\ix_2} \conf'_2$ is in that derivation for $k = k_2$, with $k_1 \neq k_2$.
We have $x[\tup{a}] \in \defset(\trace \before \conf'_{0,k_1}) \subseteq \defset(\trace \before\conf_{0,k_1}) = \dom(E_{\trace \before\conf_{0,k_1}}) \cup \defset(\sigma[\tup{i_{j_{k_1}}} \mapsto \tup{a}_{k_1}], M_{j_{k_1}})$.
Moreover, $\conf_{0,k} \before \conf_1$, so $\dom(E_{\trace \before\conf_{0,k_1}}) \subseteq \dom(E_{\trace \before \conf_1})$.
Since $\conf_1 \red{p_1}{\ix_1} \conf'_1$ defines $x[\tup{a}]$, we have $x[\tup{a}] \in \defsetfut(\conf_1)$,
by inspecting all rules that add elements to the environment.
Since $\defset(\trace \before \conf_1) = \dom(E_{\trace \before \conf_1}) \multiunion \defsetfut(\conf_1)$
does not contain duplicate elements, we have $x[\tup{a}] \notin \dom(E_{\trace \before \conf_1})$,
so $x[\tup{a}] \notin \dom(E_{\trace \before\conf_{0,k_1}})$.
Hence we have $x[\tup{a}] \in \defset(\sigma[\tup{i_{j_{k_1}}} \mapsto \tup{a}_{k_1}], M_{j_{k_1}})$.
Similarly, $x[\tup{a}] \in \defset(\sigma[\tup{i_{j_{k_2}}} \mapsto \tup{a}_{k_2}], M_{j_{k_2}})$.
Let us show that the sets $\defset(\sigma[\tup{i_{j_{k_1}}} \mapsto \tup{a}_{k_1}], M_{j_{k_1}})$
and $\defset(\sigma[\tup{i_{j_{k_2}}} \mapsto \tup{a}_{k_2}], M_{j_{k_2}})$ are disjoint.
We have $v_{k_1} \neq v_{k_2}$, so either $j_{k_1} \neq j_{k_2}$ and in this case
these sets are disjoint because
$M_{j_{k_1}}$ and $M_{j_{k_2}}$ define different variables by Property~\ref{prop:autosarename},
or $j_{k_1} = j_{k_2}$ and $\tup{a}_{k_1} \neq \tup{a}_{k_2}$ and in this case
these sets are disjoint because they use different extensions of $\sigma$.
Since these sets are disjoint, they cannot both contain $x[\tup{a}]$.
Contradiction.

\end{itemize}

The last result is an immediate consequence of the second one.
\proofcomplete
\end{proofsk}

\begin{lemma}\label{lem:add_find_cond}
Let $Q_0$ be a process that satisfies Properties~\ref{prop:notables} and~\ref{prop:autosarename}.
Let $\trace$ be a trace of $Q_0$ and $\conf$ be a configuration in the derivation of $\trace$.
We have $E_{\trace\before\conf} = E_{\conf}[x[\tup{a}]\mapsto b $ for some variables $x$ defined in a condition of $\FIND$ and some indices $\tup{a}$ and values $b]$.
\bbnote{This lemma will be useful to prove the soundness of elsefind facts.}%
\end{lemma}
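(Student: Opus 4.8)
The plan is to reduce the asserted equality to a single agreement statement: $E_{\trace\before\conf}$ and $E_{\conf}$ must assign the same value, with the same domain, to every array cell $x[\tup{a}]$ such that $x$ is not defined in a condition of $\FIND$; the cells on which they may differ are then precisely those of variables defined in conditions of $\FIND$, which is exactly what the override on the right-hand side permits. Throughout, Property~\ref{prop:notables} lets me ignore $\GET$, and since $Q_0$ satisfies Properties~\ref{prop:notables} and~\ref{prop:autosarename}, Lemma~\ref{lem:stronginv1exec} applies and guarantees that $E_{\trace\before\conf}$ contains at most one binding per cell; hence it is a genuine function and the value comparisons below are unambiguous. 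One inclusion is immediate: since $\conf\beforetr{\trace}\conf$, the environment $E_{\conf}$ is one of those whose union forms $E_{\trace\before\conf}$, so $E_{\conf}\subseteq E_{\trace\before\conf}$.

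For the converse, I would fix a binding $x[\tup{a}]\mapsto b\in E_{\trace\before\conf}$ with $x$ not defined in a condition of $\FIND$, and show $x[\tup{a}]\mapsto b\in E_{\conf}$. By definition of $E_{\trace\before\conf}$ there is a configuration $\conf'\beforetr{\trace}\conf$ with $E_{\conf'}(x[\tup{a}])=b$. Because $x$ is not a $\FIND$-condition variable, Corollary~\ref{cor:inv1exec} shows $x[\tup{a}]$ is defined at most once in $\trace$; as it occurs in $E_{\conf'}$ it is defined, by a unique transition $\conf_0\red{p}{\ix}\conf_0'$, and $x[\tup{a}]\mapsto b\in E_{\conf_0'}$ (the value being the same $b$ by the no-conflict part of Lemma~\ref{lem:stronginv1exec}). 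By Corollary~\ref{cor:evalpos} the target term or process of this defining rule is a subterm or subprocess of $Q_0$, at some program point $\pp$; since $x$ is not defined in a condition of $\FIND$, $\pp$ is not inside such a condition, so by Lemma~\ref{lem:condfind} the configuration $\conf_0'$ is not in the derivation of an hypothesis of a rule for $\FIND$ or $\GET$. Finally, since the binding is introduced only at $\conf_0'$ but appears in $E_{\conf'}$, environment accumulation along $\beforetr{\trace}$ (a binding present in an environment must have been introduced at a $\beforetr{\trace}$-earlier configuration) forces $\conf_0'\beforetr{\trace}\conf'$, whence $\conf_0'\beforetr{\trace}\conf$ by transitivity.

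It then remains to propagate the binding from $\conf_0'$ to $\conf$. For this I would use the monotonicity of environments outside $\FIND$ and $\GET$ condition derivations---the consequence of Lemma~\ref{lem:sem-ext} already exploited in the proof of Corollary~\ref{cor:inv1exec}: a configuration not lying in the derivation of an hypothesis of a rule for $\FIND$ or $\GET$ has its environment extended by every later configuration in the order $\beforetr{\trace}$. Applied to $\conf_0'\beforetr{\trace}\conf$, this gives that $E_{\conf}$ extends $E_{\conf_0'}$, so $E_{\conf}(x[\tup{a}])=b$. I expect the main obstacle to be precisely this last step when $\conf$ itself sits inside a $\FIND$-condition derivation, where $E_{\conf}$ carries extra temporary bindings: there I would argue, again from Lemma~\ref{lem:sem-ext}, that every configuration inside such a derivation extends the initial environment of the enclosing reduction, whose entry point lies on the main line after $\conf_0'$ and hence already extends $E_{\conf_0'}$. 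Combining the two inclusions gives agreement on all non-$\FIND$-condition cells, which is the claimed equality.
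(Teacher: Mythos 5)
Your proposal is correct, but it takes a genuinely different route from the paper. The paper disposes of this lemma with a one-line proof sketch, ``by induction on the derivation'': one checks, rule by rule, that the invariant relating $E_{\trace\before\conf}$ to $E_{\conf}$ is preserved (new bindings are either added to both, or are bindings of $\FIND$-condition variables that survive only in the union once the condition evaluation returns). You instead argue globally, per binding: inclusion $E_{\conf}\subseteq E_{\trace\before\conf}$ by reflexivity of $\beforetr{\trace}$, and for the converse you take any binding $x[\tup{a}]\mapsto b$ of a non-$\FIND$-condition variable in the union, locate its unique defining transition via Corollary~\ref{cor:inv1exec} and Lemma~\ref{lem:stronginv1exec}, place it outside $\FIND$/$\GET$ hypothesis derivations via Corollary~\ref{cor:evalpos} and Lemma~\ref{lem:condfind}, and propagate it into $E_{\conf}$ by the monotonicity consequences of Lemma~\ref{lem:sem-ext} --- exactly the chain the paper itself uses in the proof of Corollary~\ref{cor:inv1exec}, including the case where $\conf$ sits inside a condition derivation, which you handle correctly with the ``$E''$ extends $E$'' half of that lemma. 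What your approach buys is modularity and an explanation of \emph{why} the lemma holds (once-defined bindings outside condition derivations are never discarded); what it costs is that it is not actually induction-free: your parenthetical claim that every binding present in an environment was introduced at a $\beforetr{\trace}$-earlier transition is itself established by precisely the induction on the derivation that constitutes the paper's whole proof (it needs that every rule's target environment extends its source environment, with $\FIND$ rules rebuilding theirs from the source rather than from the hypotheses' final environments, and that $\FIND$ hypothesis derivations start from the source environment --- $\GET$ being excluded by Property~\ref{prop:notables}). So your argument is sound at the paper's own level of rigor, but it should be presented as resting on that small auxiliary induction rather than replacing induction altogether.
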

\begin{proofsk}
  By induction on the derivation.
  \proofcomplete
\end{proofsk}

The previous lemma shows that the only difference between $E_{\conf}$ and
$E_{\trace\before\conf}$ is that variables defined in conditions of $\FIND$ are
added to $E_{\trace\before\conf}$. These variables have no array accesses,
so they do not appear in $\defined$ conditions of $\FIND$. Therefore,
these $\defined$ conditions yield the same result whether they are
evaluated in $E_{\conf}$ or in $E_{\trace\before\conf}$.

We use \emph{facts} the represent properties that hold at certain program
points in processes.
We consider the following facts:
\begin{itemize}
\item The boolean term $M$ means that $M$ evaluates to $\true$.
\item $\defined(M)$ means that $M$ is defined (all array accesses in $M$ are defined).
\item $\fevent{e(\tup{M})}$ means that event $e(\tup{M})$ has been executed.
\item $\fevent{e(\tup{M})}@\step$ means that event $e(\tup{M})$ has been executed at step $\step$ (index in the sequence of events $\evseq$).
\item $M_1:\fevent{e(\tup{M})}$ means that event $e(\tup{M})$ has been executed with pair (program point, replication indices) equal to $M_1$.
\item $M_1:\fevent{e(\tup{M})}@\step$ means that event $e(\tup{M})$ has been executed at step $\step$ with pair (program point, replication indices) equal to $M_1$.
\item $\ppf(\pp, \tup{M})$ means that program point $\pp$ has been executed with replication indices equal to $\tup{M}$.
\item $\ppf(\sset_1, \tup{M}_1) \before \dots \before \ppf(\sset_m, \tup{M}_m)$ means that, for $j \leq m$, some program point $\pp_j \in \sset_j$ has been executed with replication indices equal to $\tup{M}_j$, and furthermore these program points have been executed in the order of increasing $j$.
\bbnote{models the history used in the collection of true facts}%
\item $\lppf(\pp, \tup{M})$ means that program point $\pp$ has been executed with replication indices equal to $\tup{M}$ and the values of variables and replication indices are unchanged since that program point (that is, no variable definition nor output that changes the replication indices was executed since that program point).
\end{itemize}
Given an environment $E$ mapping process variables to their values,
an environment $\venv$ mapping replication indices and non-process variables
of the formula $\varphi$ to their values, and a sequence of events $\evseq$,
we define $E, \venv, \evseq \vdash \varphi$, meaning that $E, \venv, \evseq$ satisfy $\varphi$, as follows:
\bbnote{Will be useful when defining facts that hold at input processes}%
\begin{itemize}
\item $E,\venv,\evseq \vdash M$ if and only if $E, \venv, M \evalterm \true$.

\item $E,\venv,\evseq \vdash \defined(M)$ if and only if $E, \venv, M \evalterm a$ for some $a$.

\item $E,\venv,\evseq \vdash \fevent{e(\tup{M})}$ if and only if $E, \venv, \tup{M} \evalterm \tup{a}$ and $(\pp, \tup{a'}):e(\tup{a}) \in \evseq$ for some $\pp$ and $\tup{a'}$.

\item $E,\venv,\evseq \vdash \fevent{e(\tup{M})}@M_0$  if and only if $E, \venv, \tup{M} \evalterm \tup{a}$,
  $E, \venv, M_0 \evalterm a_0$, and $\evseq(a_0) = (\pp, \tup{a'}):e(\tup{a})$ for some $\pp$ and $\tup{a'}$.

\item $E,\venv,\evseq \vdash M_1:\fevent{e(\tup{M})}$ if and only if $E, \venv, \tup{M} \evalterm \tup{a}$, $E, \venv, M_1 \evalterm (\pp,\tup{a'})$ and $(\pp, \tup{a'}):e(\tup{a}) \in \evseq$.

\item $E,\venv,\evseq \vdash M_1:\fevent{e(\tup{M})}@M_0$  if and only if $E, \venv, \tup{M} \evalterm \tup{a}$,
  $E, \venv, M_0 \evalterm a_0$, $E, \venv, M_1 \evalterm (\pp,\tup{a'})$, and $\evseq(a_0) = (\pp, \tup{a'}):e(\tup{a})$.

\end{itemize}
Logical connectives are defined as usual. When $\varphi$ does not contain events,
$\evseq$ can be omitted, writing $E, \venv \vdash \varphi$.

Let $\trace$ be a trace of $Q_0$.
Let $\conf = E, \sigma, N, \tblcts, \evseq$
or $\conf = E, (\sigma, P)\restconfig$ be a configuration that occurs in the derivation of $\trace$.
We define $\trace\before\conf, \venv \vdash \varphi$, meaning that the prefix of $\trace$ until $\conf$ satisfies the formula $\varphi$ with environment $\venv$ (giving values of non-process variables of $\varphi$) as follows:
\begin{itemize}
\item $\trace\before\conf,\venv \vdash F$ if and only if
  $E_{\trace \before \conf}, \sigma_{\conf} \cup \venv, \evseq_{\conf} \vdash F$,
  when $F$ is a term $M$, a defined fact $\defined(M)$, or an event
  $\fevent{e(\tup{M})}$, $\fevent{e(\tup{M})}@M_0$, $M_1:\fevent{e(\tup{M})}$,
  or $M_1:\fevent{e(\tup{M})}@M_0$.

\item $\trace\before\conf,\venv \vdash \ppf(\sset_1, \tup{M}_1) \before \dots \before \ppf(\sset_m, \tup{M}_m)$ if and only if, for all $j \in \{1, \dots, m\}$,
there exists $\conf_j$ at program point $\pp_j \in \sset_j$ in $\trace$ such that $E_{\trace \before \conf}, \sigma_{\conf} \cup \venv, \tup{M}_j \evalterm \image(\sigma_{\conf_j})$ and $\conf_1 \beforetr{\trace} \dots \beforetr{\trace} \conf_m \beforetr{\trace} \conf$.

The fact $\ppf(\pp, \tup{M})$ is actually a particular case of $\ppf(\sset_1, \tup{M}_1) \ab \before \ab \dots \ab \before \ab \ppf(\sset_m, \tup{M}_m)$ with $m = 1$ and $\sset_1 = \{ \pp \}$. 
By specializing the definition above, we have $\trace\before\conf,\venv \vdash \ppf(\pp, \tup{M})$ if and only if there is a configuration $\conf'$ at program point $\pp$ in $\trace$ such that $\conf' \beforetr{\trace} \conf$ and $E_{\trace \before \conf}, \sigma_{\conf} \cup \venv, \tup{M} \evalterm \image(\sigma_{\conf'})$.%
\bbnote{Cannot be defined on a configuration only, because it needs the initial process $Q_0$ to determine that a configuration is at a program point $\pp$. Or could I replace the condition that the term or process is a subterm/subprocess of $Q_0$ by the fact that it contains no value or abort event value or $\kw{abort}$? No, that does not work: one can reduce for instance $C[\Res{x}{T}; N] \red{}{} C[N]$; $C[N]$ contains no value or abort event value, but it is not a subterm/subprocess of $Q_0$.}%

\item $\trace\before\conf,\venv \vdash \lppf(\pp, \tup{M})$ if and only if there is a configuration $\conf'$ at program point $\pp$ in $\trace$ such that $\conf' \beforetr{\trace} \conf$, $E_{\trace \before \conf'} = E_{\trace \before \conf}$, $\sigma_{\conf'} = \sigma_{\conf}$, and $E_{\trace \before \conf}, \sigma_{\conf} \cup \venv, \tup{M} \evalterm \image(\sigma_{\conf'})$.

\end{itemize}
Logical connectives are defined as usual. Most facts are
evaluated in the environment $E_{\trace \before \conf}$ and the mapping sequence $\sigma_{\conf}$.  Events
are evaluated using the sequence of events $\evseq_{\conf}$ in $\conf$, 
but correspond to an execution of the event at some point before $\conf$ in
the trace.

We define that a trace $\trace$ satisfies a logical formula $\varphi$
with environment $\venv$ (giving values of non-process variables of
$\varphi$), denoted $\trace,\venv \vdash \varphi$ as
$\trace \before \conf,\venv \vdash \varphi$, where $\trace$ ends with
$\conf$. Along the same line, we define $E_{\trace} = E_{\trace \before \conf}$
and $\evseq_{\trace} = \evseq_{\conf}$
where $\trace$ ends with $\conf$.

When the formula $\varphi$ does not contain free non-process variables,
we may write $\trace \vdash \varphi$ instead of $\trace,\venv \vdash \varphi$
since the environment $\venv$ is useless.
When $\fset$ is a set of formulas (in particular, of facts),
we write $\bigwedge \fset$ for $\bigwedge_{F \in \fset} F$
and $\bigvee \fset$ for $\bigvee_{F \in \fset} F$.
We also write $\trace,\venv \vdash \fset$ when for all $\varphi \in \fset$,
$\trace,\venv \vdash \varphi$. This is equivalent to
$\trace,\venv \vdash \bigwedge \fset$.
We use similar notations for prefixes $\trace \before \conf$ 
instead of traces $\trace$.

Additionally, we define the following facts:
\begin{itemize}
\item $\sem{\elsefind((i_1 \leq n_1, \ldots, i_m \leq n_m), (M_1, \ldots, M_l), M)} = \forall i_1 \in [1, n_1], \ldots, \forall i_m \in [1, n_m],\ab
\neg (\defined(M_1) \wedge \dots \wedge \defined(M_l) \wedge M)$.
\item $\sem{\elselet(\tup{x}:\tup{T},N,M)} = \forall \tup{x}\in \tup{T}, N \neq M$.
\end{itemize}

\subsection{User-defined Rewrite Rules}\label{sec:userdefinedrewriterules}

The user can give two kinds of information:
\begin{itemize}

\item claims of the form $\forall x_1:T_1, \ldots, \forall x_m:T_m, M$
which mean that for all environments $E$, if for all $j \leq m$, 
$E(x_j) \in T_j$, then $E, M \evalterm \true$.

Such claims must be well-typed, that is, $\{ x_1 \mapsto T_1,\ab
\ldots, \ab x_m \mapsto T_m \} \vdash M : \bool$.

They are translated into rewrite rules as follows:
\begin{itemize}

\item If $M$ is of the form $M_1 = M_2$ and $\vardef(M_2) \subseteq \vardef(M_1)$, 
we generate the rewrite rule
$\forall x_1:T_1, \ab \ldots, \ab \forall x_m:T_m, M_1 \rewrite M_2$.

\item If $M$ is of the form $M_1 \neq M_2$, we generate the rewrite rules
$\forall x_1:T_1, \ldots, \forall x_m:T_m, (M_1 = M_2) \rewrite \false$,
$\forall x_1:T_1, \ldots, \forall x_m:T_m, (M_1 \neq M_2) \rewrite \true$.
(Such rules are used for instance to express that different constants
are different.)

\item Otherwise, we generate the rewrite rule
$\forall x_1:T_1, \ab \ldots, \ab \forall x_m:T_m, \ab M \rewrite \true$.

\end{itemize}
The term $M$ reduces into $M'$ by the rewrite rule 
$\forall x_1:T_1, \ab \ldots, \ab \forall x_m:T_m, \ab M_1
\rewrite M_2$ if and only if $M = C[\sigma M_1]$, $M' = C[\sigma
M_2]$, where $C$ is a term context and $\sigma$ is a substitution that
maps $x_j$ to any term of type $T_j$ for all $j \leq m$.

\item claims of the form $\Res{y_1}{T'_1}, \ab \ldots, \ab \Res{y_l}{T'_l}, \ab 
\forall x_1:T_1, \ab \ldots, \ab \forall x_m:T_m, \ab M_1 \approx_p M_2$
with $\vardef(M_2) \subseteq \vardef(M_1)$.
Informally, these claims mean that $M_1$ and $M_2$ evaluate to 
the same bitstring except in cases of probability at most $p$, 
provided that $y_1, \ldots, y_l$ are chosen randomly with
uniform probability and independently among $T'_1, \ldots, T'_l$ respectively,
and that $x_1, \ldots, x_m$ are of type $T_1, \ldots, T_m$.
($x_1, \ldots, x_m$ may depend on $y_1, \ldots, y_l$.)
Formally, these claims are defined as:
\[\begin{split}
\Pr[& E(y_1) \randomchoice T'_1; \ldots
E(y_l) \randomchoice T'_l;\\[-1mm]
&(E(x_1), \ldots, E(x_m)) \leftarrow {\Adv}(E(y_1), \ldots, E(y_l));\\
&E, M_1 \evalterm a; E, M_2 \evalterm a': a \neq a'
] \leq p(\Adv)
\end{split}\]
where $\Adv$ is a probabilistic Turing machine.

The above claim must be well-typed, that is, $\{ x_1 \mapsto T_1, \ab \ldots, \ab x_m
\mapsto T_m, \ab y_1 \mapsto T'_1, \ab \ldots, \ab y_l \mapsto T'_l \} \vdash M_1
= M_2$.

This claim is translated into the rewrite rule
$\Res{y_1}{T'_1}, \ab \ldots, \ab \Res{y_l}{T'_l}, \ab 
\forall x_1:T_1, \ab \ldots, \ab \forall x_m:T_m, \ab M_1 \rewrite M_2$.

\end{itemize}

The prover has built-in rewrite rules for defining boolean functions:
{\allowdisplaybreaks\begin{align*}
&\fnot \true \rewrite \false\qquad \fnot \false \rewrite \true\qquad 
\forall x:\bool, \fnot (\fnot x) \rewrite x\\
&\forall x:T, \forall y:T, \fnot(x = y) \rewrite x \neq y\\
&\forall x:T, \forall y:T, \fnot(x \neq y) \rewrite x = y\\
&\forall x:T, x = x \rewrite \true
\qquad
\forall x:T, x \neq x \rewrite \false\\
&\forall x:\bool, \forall y:\bool, \fnot(x \fand y) \rewrite (\fnot x) \for (\fnot y)\\
&\forall x:\bool, \forall y:\bool, \fnot(x \for y) \rewrite (\fnot x) \fand (\fnot y)\\
&\forall x:\bool, x \fand \true \rewrite x \qquad 
\forall x:\bool, x \fand \false \rewrite \false\\
&\forall x:\bool, x \for \true \rewrite \true\qquad
\forall x:\bool, x \for \false \rewrite x\\
&\forall x:T, \forall y:T, \iffun(\true,x,y) \rewrite x\qquad
\forall x:T, \forall y:T, \iffun(\false,x,y) \rewrite y\\
&\forall x:\bool, \forall y:T, \iffun(x,y,y) \rewrite y\\
&\forall x_1:T_1, \dots, \forall x_m:T_m, \forall x:\bool, \forall y: T_k, \forall z:T_k, \\
&\qquad f(x_1, \dots,x_{k-1}, \iffun(x,y,z), x_{k+1}, \dots, x_m) \rewrite\\
&\qquad\qquad
\iffun(x,f(x_1, \dots,x_{k-1}, y, x_{k+1}, \dots, x_m),
f(x_1, \dots,x_{k-1}, z, x_{k+1}, \dots, x_m))\\
&\qquad \text{when $f: T_1 \times \ldots \times T_m \rightarrow T$ has option \rn{autoSwapIf}}
\end{align*}}%

The prover also has support for commutative function symbols,
that is, binary function symbols $f: T \times T \rightarrow T'$ 
such that for all $x, y \in T$, 
$f(x,y) = f(y,x)$.
For such symbols, all equality and matching tests are performed
modulo commutativity. The functions $\fand$, $\for$, $=$, and $\neq$
are commutative. 
So, for instance, the rewrite rules above may also be used
to rewrite $\true \fand M$ into $M$, $\false \fand M$ into $\false$, 
$\true \for M$ into $\true$, and $\false \for M$ into $M$.
Used-defined functions may also be declared
commutative; $\xor$ is an example of such a commutative function.

\begin{example}
For example, considering MAC and encryption schemes as in 
Definitions~\ref{def:macsec} and~\ref{def:encsec} respectively, we
have:
{\allowdisplaybreaks\begin{align}
\begin{split}
&\forall k: T_{mk}, \forall m:\bitstring, \\*
&\quad \mverify(m, k, \mac(m, k)) = \true
\end{split}\tag{$\mac$}\label{eq:mac}\\
\begin{split}
&\forall m: \bitstring; \forall k: T_k, \forall r:T_r,\\*
&\quad \dec(\enc(m,k,r),k) = \injbot(m)
\end{split}\tag{$\enc$}\label{eq:enc}
\end{align}}%
We express the poly-injectivity of the function $\ktob$ of Example~\ref{exa:running}
by
\begin{equation}
\begin{split}
&\forall x:T_k, \forall y:T_k,(\ktob(x) = \ktob(y)) = (x = y)\quad\\
&\forall x:T_k,\ktob^{-1}(\ktob(x)) = x
\end{split}\tag{$\ktob$}\label{eq:k2b}
\end{equation}
where $\ktob^{-1}$ is a function symbol that denotes the inverse of $\ktob$.
We have similar formulas for $\injbot$.
\end{example}

\subsection{Collecting True Facts from a Game}

CryptoVerif collects a set of facts $\fset_\pp$ that hold at each program point
$\pp$ in the current game $Q_0$.
Additionally, CryptoVerif also collects facts $\futfset_\pp$ (future facts at $\pp$), which hold at the end of the block of code that contains $\pp$ and ends with an output or an $\kw{event\_abort}$ instruction that aborts the end. For instance, $\futfset_\pp$ may contain equalities that come from assignments performed after $\pp$ in the same block of code. (However, the facts in $\futfset_\pp$ m
ay not hold in case a $\FIND\unique{e}$ aborts because several choices make theconditions of that $\FIND$ succeed.)
These sets of facts may contain facts $M$, $\defined(M)$, $M_1:\fevent{e(\tup{M})}$, and $\ppf(\pp, \tup{M})$.
In these sets of facts, all terms $M$ must be simple.

Previous versions of the algorithm that collects facts were presented in~\cite[Appendix~C.2]{Blanchet07c} and~\cite[Appendix~B.2]{Blanchet07}.
The current algorithm is an extension that relies on the same principles.
The facts $\ppf(\pp, \tup{M})$ is new.
In particular, we have $\ppf(\pp, \replidx{\pp}) \in \facts_{\pp}$.
The algorithm that collects facts satisfies the following properties.

\begin{lemma}\label{lem:factsP}
Let $C$ be an evaluation context acceptable for $Q_0$, $\trace$ be a trace of $C[Q_0]$, $\pp$ be a program point in $Q_0$, and $\facts_{\pp}$ be computed in $Q_0$. 
  If a configuration $\conf$ is at program point $\pp$ in $\trace$, then $\trace \before \conf \vdash \facts_{\pp}$. 
\end{lemma}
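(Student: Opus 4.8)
The plan is to prove the statement by well-founded induction on the ordering $\beforetr{\trace}$ of the configurations occurring in the derivation of $\trace$, mirroring the forward, program-point-indexed structure of the fact-collection algorithm: the algorithm computes $\facts_{\pp}$ from $\facts_{\pp'}$, where $\pp'$ is the program point immediately above $\pp$ in $Q_0$, by inheriting the old facts and adding the facts that the construct at $\pp'$ justifies for the transition into $\pp$. Accordingly, I would fix a configuration $\conf$ at program point $\pp$ in $\trace$. If $\pp$ is the top program point of $Q_0$, then $\facts_{\pp}$ contains only the trivial program-point fact $\ppf(\pp,\replidx{\pp})$, which holds by Lemma~\ref{lem:curidx} (so $\dom(\sigma_{\conf})=\replidx{\pp}$ and $\sigma_{\conf}(\replidx{\pp})=\image(\sigma_{\conf})$) taking $\conf$ itself as witness. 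Otherwise, since $\conf$ is at a genuine subterm/subprocess of $Q_0$, it is the target of the reduction step that produced it; by the target-configuration case of the first property of Lemma~\ref{lem:above_before} together with Corollary~\ref{cor:evalpos}, that step's initial configuration $\conf_0$ is at the program point $\pp'$ immediately above $\pp$. The induction hypothesis applied to $\conf_0$ gives $\trace\before\conf_0\vdash\facts_{\pp'}$, which is the starting point.

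The first half of the step is the soundness of \emph{inheritance}: every fact in $\facts_{\pp'}$ that also belongs to $\facts_{\pp}$ still holds at $\conf$. The four kinds of facts are each monotone along the trace. By Lemma~\ref{lem:sem-ext} (and Lemma~\ref{lem:add_find_cond} for the extra bindings of variables defined in $\FIND$ conditions), $E_{\trace\before\conf}$ extends $E_{\trace\before\conf_0}$, $\evseq$ only grows, and $\sigma$ is unchanged outside the abort cases; since all terms appearing in collected facts are simple (Invariants~\ref{invtic} and~\ref{inv2}), Lemma~\ref{lem:defsubterms} shows a simple term keeps its value once its variables are defined, a $\defined(M)$ fact is preserved under environment extension, a recorded event $M_1:\fevent{e(\tup{M})}$ stays in $\evseq$, and a program-point fact $\ppf(\pp'',\tup{M})$ once witnessed stays witnessed. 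I would also note briefly that, $C$ being acceptable for $Q_0$, its variables and program points do not interfere with those of $Q_0$, so reductions of $Q_0$'s subprocesses evaluate facts exactly as in $Q_0$ alone.

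The second half is the soundness of the \emph{newly added} facts, by cases on the construct at $\pp'$. For $\assign{x[\tup{i}]:T}{M}{}$ (rule \eqref{sem:let}) the added equality $x[\tup{i}]=M$ and $\defined(x[\tup{i}])$ hold because the rule sets $E(x[\sigma(\tup{i})])$ to the value of $M$. For $\Res{x[\tup{i}]}{T}$, only $\defined(x[\tup{i}])$ is added and holds by \eqref{sem:new}. For a conditional, the $\kw{then}$ branch adds the condition $M$ (rule \eqref{sem:if1} required it true, and it is preserved) and the $\kw{else}$ branch adds $\neg M$, using subject reduction (Lemma~\ref{lem:subred}) to know $M$ is boolean so that ``$\neq\true$'' in \eqref{sem:if2} means ``$=\false$''. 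For $\kw{event}$, the fact $M_1:\fevent{e(\tup{M})}$ with $M_1$ the pair (program point, replication indices) is justified directly by \eqref{sem:event}. Program-point facts are handled exactly as in the base case via Lemma~\ref{lem:curidx}. Since $\facts_{\pp}$ contains neither elsefind nor history facts, the $\FIND$ $\kw{else}$ branch adds nothing new here, and $\GET$/$\INSERT$ are excluded by Property~\ref{prop:notables}.

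The main obstacle is the $\FIND$ $\kw{then}$ branch (rule \eqref{sem:find1}). There the algorithm records $\defined(\vf_{jk}[\tup{i}])$ and, for the selected branch $j$, the facts $\defined(M_{jk}\{\tup{\vf_j}[\tup{i}]/\tup{i_j}\})$ and $M_j\{\tup{\vf_j}[\tup{i}]/\tup{i_j}\}$. Establishing these requires combining several earlier results: the hypothesis of \eqref{sem:find1} evaluated the condition $\defined(\tup{M_j})\fand M_j$ to $\true$ for the chosen indices, hence through rule \eqref{sem:definedyes}; the values of the variables defined in that condition are present in $E_{\trace\before\conf}$ by Lemma~\ref{lem:add_find_cond}, while Invariant~\ref{invfc} guarantees those variables have no array accesses so that discarding them in the semantics is harmless; and the substitution of the find indices $\tup{i_j}$ by the result variables $\tup{\vf_j}[\tup{i}]$ is transferred to the evaluation in $\conf$ by Lemma~\ref{lem:substreplindex}. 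A secondary difficulty is that the algorithm itself is only specified by reference to the earlier versions, so the argument must be phrased at the level of a sound transfer function—verifying that each added fact is a semantic consequence of the step—rather than as a line-by-line code verification; configurations arising inside conditions of $\FIND$ need the same transfer reasoning and are kept separate from outer program points using Lemma~\ref{lem:condfind}.
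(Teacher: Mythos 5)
The paper itself gives no proof of Lemma~\ref{lem:factsP}: the fact-collection algorithm is only described by reference to prior work, and the lemma is simply asserted as a property that this algorithm satisfies (the same holds for Lemma~\ref{lem:factsPcases}). So your attempt cannot be measured against a reference argument; judged on its own, its overall shape (a transfer-function soundness argument: inherited facts persist by monotonicity, newly added facts are justified case by case on the construct, with \eqref{sem:find1} as the delicate case) is the right content, but the inductive skeleton you hang it on does not close.

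The first problem is the step where you claim that the reduction producing $\conf$ has its initial configuration $\conf_0$ \emph{at} the program point $\pp'$ immediately above $\pp$, citing Lemma~\ref{lem:above_before} and Corollary~\ref{cor:evalpos}. Lemma~\ref{lem:above_before} only yields a predecessor configuration \emph{inside} $\pp$ or $\pp'$, and the distinction is fatal: being ``at'' a program point requires the term or process to be a subterm/subprocess of $Q_0$, whereas the initial configurations of rules such as \eqref{sem:let}, \eqref{sem:if1}, \eqref{sem:find1} or \eqref{sem:output} contain already-evaluated arguments (values in place of the original terms), so they are in general \emph{not} at $\pp'$, only inside it. Since your induction hypothesis is the lemma itself, which speaks only of configurations at program points, it cannot be applied to $\conf_0$; repairing this needs Lemma~\ref{lem:start_from_pp} to pull $\conf_0$ back to a genuine configuration at $\pp'$, and the monotonicity argument must then also cover the \eqref{sem:ctx}/\eqref{sem:ctxt} evaluation steps in between (this is also exactly where the equality fact $x[\tup{i}]=M$ for a $\kw{let}$ needs the original simple term $M$, not its value). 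The second, deeper problem is at input boundaries: when $\pp$ is the program point of the output process that follows an input, the program point $\pp'$ immediately above it is an input program point, and by the definition of ``at a program point'' no configuration is ever at an input program point (input processes only ever occur inside $\pset$ or in configurations $E,\pset,\cset$, which are merely \emph{inside} program points). So the chain of induction-hypothesis applications breaks at the start of every block of code, precisely where facts must be carried across an output/input hand-off while the process sits in $\pset$. The standard repair, and the pattern this paper uses for the analogous Lemmas~\ref{lem:curidx}, \ref{lem:evalpos} and~\ref{lem:inv1exec}, is to strengthen the statement to an invariant over \emph{all} configurations of the derivation (term configurations, configurations $E,\pset,\cset$ including every process stored in $\pset$, and output-process configurations), prove that invariant by induction on the derivation, and obtain the lemma as a corollary; your per-construct case analysis would then furnish the inductive cases of that stronger invariant.
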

\bbnote{In the implementation, keep the information about program points provided by the known\_history (reorganized as (program\_point list * term list) list), and use that to test compatibility? That would be more precise and more efficient than using the defined variables.}%

Additionally, there is a more precise version of $\facts_{\pp}$ that distinguishes cases depending on the program points at which the various variables are defined, generating several $\facts_{\pp,\case}$ for the various cases $\case$. For this version, we have: 

\begin{lemma}\label{lem:factsPcases}
  Let $C$ be an evaluation context acceptable for $Q_0$, $\trace$ be a trace of $C[Q_0]$, $\pp$ be a program point in $Q_0$, and $\facts_{\pp,\case}$ be computed in $Q_0$. 
  If a configuration $\conf$ is at program point $\pp$ in $\trace$, then there exists $\case$ such that $\trace \before \conf \vdash \facts_{\pp,\case}$. 
\end{lemma}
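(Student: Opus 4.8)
The statement to prove is Lemma~\ref{lem:factsPcases}, a refinement of Lemma~\ref{lem:factsP}: at any configuration reaching program point $\pp$ in a trace of $C[Q_0]$, some case $\case$ of the more precise fact collection $\facts_{\pp,\case}$ holds. Since Lemma~\ref{lem:factsP} is assumed proved, the plan is to reduce Lemma~\ref{lem:factsPcases} to it by analyzing how the case split is constructed, rather than reproving soundness of the underlying fact-collection algorithm from scratch.

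The plan is as follows. First I would recall that the cases $\case$ arise from distinguishing, for each variable $x[M_1,\dots,M_m]$ whose definition is guaranteed (e.g.\ via a $\defined$ condition in an enclosing $\FIND$, captured by the last three clauses of Invariant~\ref{inv2}), \emph{which} syntactic definition point of $x$ in $Q_0$ actually produced the value read at $\conf$. Each such assignment of definition points to the relevant variables yields one case $\case$, and $\facts_{\pp,\case}$ augments $\facts_{\pp}$ with the facts known to hold at those definition points (in particular the equalities and $\ppf$ facts associated with each chosen definition). The key structural input is that $\facts_{\pp} \subseteq \facts_{\pp,\case}$ for every $\case$ and that the union over $\case$ of the extra facts is exhaustive: the cases cover all possibilities for where the read variables were defined.

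Next I would carry out the argument. Fix $\conf$ at program point $\pp$. By Lemma~\ref{lem:factsP}, $\trace \before \conf \vdash \facts_{\pp}$. For each variable access $x[M_1,\dots,M_m]$ that the case analysis tracks, the value is present in $E_{\trace\before\conf}$: this follows because $\pp$ is reached and, by the soundness of $\defined$ conditions established in Section on Invariant~\ref{inv2exec} (``variables are defined before being used''), the array cell $x[a_1,\dots,a_m]$ with $E,\sigma_{\conf},M_j\evalterm a_j$ lies in $\dom(E)$. By Corollary~\ref{cor:inv1exec} (each variable defined at most once per index tuple) combined with Corollary~\ref{cor:evalpos} (the defining rule's target is a subprocess of $Q_0$ up to channel renaming), there is a \emph{unique} syntactic definition point $\pp_x$ of $x$ in $Q_0$ whose execution produced this cell. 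Collecting these $\pp_x$ over all tracked variables determines exactly one case $\case$. I would then verify that the extra facts composing $\facts_{\pp,\case}$ hold on $\trace\before\conf$: by Lemma~\ref{lem:above_before}, the configuration $\conf'$ defining $x[\tup a]$ occurs before $\conf$ in $\trace$; applying Lemma~\ref{lem:factsP} at $\conf'$ gives $\trace\before\conf' \vdash \facts_{\pp_x}$, and since $E_{\trace\before\conf'} \subseteq E_{\trace\before\conf}$ (by the extension properties in Lemma~\ref{lem:sem-ext}, and using that $E_{\trace\before\cdot}$ accumulates all prior environments as discussed after Lemma~\ref{lem:stronginv1exec}), the relevant facts transport forward to $\conf$. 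The $\ppf(\pp_x,\cdot)$ facts transport by the definition of $\ppf$ satisfaction, which only requires the existence of an earlier configuration at that program point with matching replication indices.

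The main obstacle I expect is bookkeeping the relationship between the accumulated environment $E_{\trace\before\conf}$ and the instantaneous environments $E_{\conf'}$ at the definition points, particularly for variables defined inside conditions of $\FIND$, whose values are dropped from the instantaneous environment but retained in $E_{\trace\before\conf}$ (this is exactly the point of introducing $E_{\trace\before\conf}$ and is handled by Lemma~\ref{lem:add_find_cond} and Lemma~\ref{lem:stronginv1exec}). Establishing that the case partition is genuinely exhaustive --- i.e.\ that the definition point of each tracked variable is always one of the syntactic occurrences the algorithm enumerated --- is the substantive content, and it rests on Corollary~\ref{cor:evalpos}: every value read was written by a rule whose target is a subterm or subprocess of $Q_0$, so its definition point is syntactically present in $Q_0$ and hence among the enumerated cases. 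I would therefore present the proof as an induction-free transport argument built on Lemma~\ref{lem:factsP}, Corollaries~\ref{cor:inv1exec} and~\ref{cor:evalpos}, and Lemmas~\ref{lem:above_before}, \ref{lem:sem-ext}, and~\ref{lem:stronginv1exec}, with the only genuinely new step being the selection of the correct $\case$ from the uniqueness of definition points.
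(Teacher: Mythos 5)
The paper contains no proof of this lemma for you to be compared against: both Lemma~\ref{lem:factsP} and Lemma~\ref{lem:factsPcases} are stated as properties that the fact-collection algorithm satisfies, the algorithm itself is not given in this document (it is described only as an extension of the one in prior CryptoVerif papers), and the entire definition of $\facts_{\pp,\case}$ consists of one sentence --- ``a more precise version of $\facts_{\pp}$ that distinguishes cases depending on the program points at which the various variables are defined.'' The lemma functions as a specification of the implementation, used as a black box in the proof of Corollary~\ref{cor:factsP}. So your proposal cannot match or diverge from a paper proof; it attempts something the paper deliberately does not do.

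Judged on its own terms, your reconstruction is sensible and consistent with the paper's one-sentence description, but it has an unavoidable gap: since $\facts_{\pp,\case}$ is never defined, you had to invent its definition (cases indexed by assignments of syntactic definition points to the tracked variables, with $\facts_{\pp} \subseteq \facts_{\pp,\case}$ and the extra facts being those holding at the chosen definition points). What you prove is that \emph{some} case-split collection with that structure satisfies the lemma, not that the algorithm's collection does; in particular, exhaustiveness of the actual case enumeration cannot be established without the algorithm. Within your own reconstruction, the skeleton is right --- uniqueness of definition points (note that in Section~\ref{sec:truefacts} Properties~\ref{prop:notables} and~\ref{prop:autosarename} are assumed, so Lemma~\ref{lem:stronginv1exec} is the correct citation rather than Corollary~\ref{cor:inv1exec}, which excludes variables defined in conditions of $\FIND$), plus monotonicity of $E_{\trace \before \conf}$ and of the event sequence. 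But you gloss over the instantiation issue: the facts collected at a definition point $\pp_x$ are expressed over $\replidx{\pp_x}$, whereas satisfaction of $\facts_{\pp,\case}$ at $\conf$ is evaluated with $\sigma_{\conf}$, so the transported facts must be substituted, replacing the replication indices of $\pp_x$ by the terms (in the indices of the access at $\pp$) whose values are the indices of the defined cell --- exactly the renaming-plus-environment mechanism that Corollary~\ref{cor:factsP} has to make explicit even as a consumer of this lemma. That substitution, and its interaction with variables defined in conditions of $\FIND$ (Lemma~\ref{lem:add_find_cond}), is where the real content of the omitted algorithm lies, and it is what your sketch leaves implicit.
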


$\facts_{\pp}$ can be seen as a particular case of $\facts_{\pp,\case}$ by considering a single case $\case$.

\begin{corollary}\label{cor:factsP}
  Let $C$ be an evaluation context acceptable for $Q_0$, $\trace$ be a trace of $C[Q_0]$, $\pp$ be a program point in $Q_0$, and $\facts_{\pp}$ (resp. $\facts_{\pp,\case}$) be computed in $Q_0$. Let $\conf$ be a configuration at program point $\pp$ in $\trace$. Let $\theta$ be a renaming of $\replidx{\pp}$ to fresh replication indices and $\venv = \{ \theta\replidx{\pp} \mapsto \sigma_{\conf}\replidx{\pp} \}$. Let $\conf'$ be a term or output process configuration in $\trace$ such that $\conf \beforetr{\trace} \conf'$.
  
  We have $\trace \before \conf', \venv \vdash \theta\facts_{\pp}$ and there exists $\case$ such that $\trace \before \conf', \venv \vdash \theta\facts_{\pp,\case}$.

  In particular, $\trace, \venv \vdash \theta\facts_{\pp}$ and there exists $\case$ such that $\trace, \venv \vdash \theta\facts_{\pp,\case}$.
\end{corollary}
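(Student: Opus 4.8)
The plan is to derive Corollary~\ref{cor:factsP} from Lemmas~\ref{lem:factsP} and~\ref{lem:factsPcases} by combining two independent moves: a \emph{renaming} move that replaces the current replication indices $\replidx{\pp}$ by the fresh indices $\theta\replidx{\pp}$ while recording their values in $\venv$, and an \emph{extension} move that lets each fact, once it no longer mentions $\sigma_{\conf}$, survive the passage from $\conf$ to the later configuration $\conf'$. First I would apply Lemma~\ref{lem:factsP} (resp. Lemma~\ref{lem:factsPcases}) to $\conf$, which is at program point $\pp$ in $\trace$, obtaining $\trace\before\conf\vdash\facts_{\pp}$ (resp. $\trace\before\conf\vdash\facts_{\pp,\case}$ for some $\case$). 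By Lemma~\ref{lem:curidx}, $\dom(\sigma_{\conf})=\replidx{\pp}$, so the only replication indices occurring in these facts are exactly the ones that $\theta$ renames, and $\theta$ leaves the process variables of the facts untouched and preserves simplicity of terms.

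For the renaming move, I would show, for each fact $F$ of $\facts_{\pp}$, that $\trace\before\conf,\venv\vdash\theta F$. Since $\theta$ renames $\replidx{\pp}$ to fresh indices and $\venv(\theta i)=\sigma_{\conf}(i)$, a routine induction on the simple terms occurring in $F$, in the spirit of Lemma~\ref{lem:substreplindex}, shows that $E_{\trace\before\conf},\venv,\theta M\evalterm a$ if and only if $E_{\trace\before\conf},\sigma_{\conf},M\evalterm a$, and likewise for the argument terms inside $\defined$ facts and events. This handles the forms $M$, $\defined(M)$, and $M_1{:}\fevent{e(\tup{M})}$ uniformly; the crucial point is that $\theta F$ no longer refers to $\sigma_{\conf}$, its replication indices being entirely supplied by $\venv$. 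The program-point fact $\ppf(\pp,\replidx{\pp})\in\facts_{\pp}$ becomes $\ppf(\pp,\theta\replidx{\pp})$, which is witnessed by $\conf$ itself: $\conf$ is at $\pp$ and $\image(\sigma_{\conf})$ equals the value of $\theta\replidx{\pp}$ under $\venv$.

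For the extension move, I would transfer each $\theta F$ from $\trace\before\conf$ to $\trace\before\conf'$. Because $\conf\beforetr{\trace}\conf'$ and $E_{\trace\before\conf}$ is a union over configurations occurring before $\conf$, the environment $E_{\trace\before\conf'}$ extends $E_{\trace\before\conf}$, and by Lemma~\ref{lem:stronginv1exec} this extension is conflict-free; hence every simple term defined in $E_{\trace\before\conf}$ evaluates to the same value in $E_{\trace\before\conf'}$, so term facts and $\defined$ facts are preserved. The event sequence grows monotonically ($\evseq_{\conf}$ is a prefix of $\evseq_{\conf'}$ by Lemma~\ref{lem:sem-ext}), so event facts are preserved, and $\ppf(\pp,\theta\replidx{\pp})$ is still witnessed by $\conf$, which precedes $\conf'$. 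None of these evaluations touch $\sigma_{\conf'}$, precisely because the renaming already eliminated the dependence on the configuration's replication indices. This gives $\trace\before\conf',\venv\vdash\theta\facts_{\pp}$ (resp. $\theta\facts_{\pp,\case}$). The ``in particular'' statement then follows by taking $\conf'$ to be the final output-process configuration of $\trace$, which is $\beforetr{\trace}$-maximal and hence comes after $\conf$.

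The main obstacle I expect is not the renaming lemma, which is routine once stated for all four fact forms, but rather pinning down the extension move cleanly: one must articulate that applying $\theta$ is exactly what decouples the collected facts from the configuration-specific indices $\sigma_{\conf}$, so that their truth depends only on the monotonically growing data $E_{\trace\before\cdot}$ and $\evseq$ together with the fixed environment $\venv$, and therefore survives to any later term or output-process configuration $\conf'$ independently of $\sigma_{\conf'}$. Care is also needed to confirm that $\theta$ affects no free variable of $F$ beyond the indices of $\replidx{\pp}$ and that the witness configuration $\conf$ for the $\ppf$ fact indeed satisfies $\conf\beforetr{\trace}\conf'$.
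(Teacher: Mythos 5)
Your proposal is correct and follows essentially the same route as the paper's own proof: apply Lemma~\ref{lem:factsP} (resp.\ Lemma~\ref{lem:factsPcases}) at $\conf$, transfer to $\theta\facts_{\pp}$ via the definition of $\venv$, then propagate to $\trace\before\conf'$ using monotonicity of the environment and of the event sequence (Lemma~\ref{lem:sem-ext}), with the $\ppf$ facts witnessed by configurations preceding $\conf'$, and conclude the ``in particular'' part by taking $\conf'$ to be the last configuration of $\trace$. Your added detail (the explicit substitution argument in the spirit of Lemma~\ref{lem:substreplindex} and the appeal to Lemma~\ref{lem:curidx}) merely spells out what the paper compresses into ``by definition of $\venv$''.
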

\begin{proof}
  By Lemma~\ref{lem:factsP}, $\trace \before \conf \vdash \facts_{\pp}$.
  By Lemma~\ref{lem:factsPcases}, there exists $\case$ such that $\trace \before \conf \vdash \facts_{\pp,\case}$.
  Let $\fset = \facts_{\pp}$ (resp. $\fset = \facts_{\pp,\case}$) such that $\trace \before \conf \vdash \fset$.
  By definition of $\venv$, we have $\trace \before \conf, \venv \vdash \theta\fset$.
  Since $\conf \beforetr{\trace} \conf'$, the environment $E_{\trace \before \conf'}$ is an extension of $E_{\trace \before \conf}$, so the terms and $\defined$ facts in $\theta\fset$ are preserved when considering $\trace \before \conf'$ instead of $\trace \before \conf$. (They do not use $\sigma_{\conf}$, resp. $\sigma_{\conf'}$, by the renaming $\theta$.)
  Moreover, by Lemma~\ref{lem:sem-ext}, $\evseq_{\conf'}$ is an extension of $\evseq_{\conf}$, so the events are also preserved.
  By definition of $\trace\before\conf,\venv \vdash \ppf(\sset_1, \tup{M}_1) \before \dots \before \ppf(\sset_m, \tup{M}_m)$, the sequences of program points  $\ppf(\sset_1, \tup{M}_1) \before \dots \before \ppf(\sset_m, \tup{M}_m)$ are also preserved.
  Since $\fset$ is a set of facts containing only terms, $\defined$ facts, events, and sequences of program points,
  we conclude that $\trace \before \conf', \venv \vdash \theta\fset$.

  The last point is obtained by choosing $\conf'$ to be the last configuration of $\trace$.
  \proofcomplete
\end{proof}

\begin{lemma}\label{lem:factsPfut}
  Let $C$ be an evaluation context acceptable for $Q_0$, $\trace = \initconfig(C[Q_0]) \red{p}{\ix} \ldots \red{p'}{\ix'}E, (\sigma, P)\restconfig$ be a trace that does not execute any non-unique event of $Q_0$ with $P = \coutput{c[\tup{a}]}{a};Q$ for some $c$, $\tup{a}$, $a$, and $Q$ or $P = \kw{abort}$, $\pp$ be a program point in $Q_0$, and $\futfset_{\pp}$ be computed in $Q_0$. 
  If the configuration $\conf$ is at program point $\pp$ in $\trace$ and no executed process in the configurations between the configuration at the end of reduction step that contains $\conf$ (included) and $E, (\sigma, P)\restconfig$ (excluded) is of the form $\coutput{c[\tup{a}]}{a};Q$ for some $\tup{a}$, $a$, and $Q$ (when $\conf$ is a process configuration with process $\coutput{c[\tup{a}]}{a};Q$ for some $c$, $\tup{a}$, $a$, and $Q$, we have $\conf = E, (\sigma, P)\restconfig$), then $\trace \vdash \futfset_{\pp}$. 
\end{lemma}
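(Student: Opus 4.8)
The plan is to follow the segment of $\trace$ running from $\conf$ to the terminal configuration $\conf_{\mathrm{term}} = E, (\sigma, P)\restconfig$ and to show that every fact the algorithm attaches to a program point traversed by the output process along this segment still holds at the end of $\trace$. First I would fix $\pp_{\mathrm{out}}$, the program point of $P$, which is a program point of $Q_0$ up to renaming of channels by Corollary~\ref{cor:evalpos}. The hypothesis that no configuration strictly between the end of the reduction step containing $\conf$ and $\conf_{\mathrm{term}}$ runs an output $\coutput{c[\tup{a}]}{a};Q$ guarantees that $\conf_{\mathrm{term}}$ is the first output-or-abort reached after $\conf$, so these reduction steps stay inside a single block of straight-line code possibly containing $\kw{if}$ and $\kw{find}$ branchings. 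No replication or input is crossed inside such a block, so by Lemma~\ref{lem:curidx} the index sequence $\sigma$ agrees at $\conf$, at $\conf_{\mathrm{term}}$, and at every output-process configuration in between ($\kw{find}$ extends $\sigma$ only inside the sub-derivations evaluating its conditions and restores it afterwards).

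The heart of the argument is a backward induction, from $\conf_{\mathrm{term}}$, along the sequence of program points $\pp_{\mathrm{out}} = \pp_k, \dots, \pp_0 = \pp$ visited by the output process on this segment, proving $\trace \vdash \futfset_{\pp_j}$ for each $j$. The base case $\trace \vdash \futfset_{\pp_{\mathrm{out}}}$ holds because the future facts at a terminating construct are exactly those made true by reaching it: a terminating output contributes only trivial facts, while a terminating $\kw{event\_abort}$ contributes its event fact, which holds because rule~\eqref{sem:eventabort} appended that event to $\evseq$. Here I would use the hypothesis that $\trace$ executes no non-unique event of $Q_0$: it forces a terminal $P = \kw{abort}$ to come from a Shoup $\kw{event\_abort}$ and not from rule~\eqref{sem:find3} or~\eqref{sem:get3}, matching the warning in the text preceding the lemma.

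For the inductive step I would appeal to the defining recursion of the fact-collection algorithm: $\futfset_{\pp_j}$ is contained in the set obtained from the future facts of the successor reached along $\trace$, namely $\futfset_{\pp_{j+1}}$, by adding the fact $F_j$ that the construct at $\pp_j$ establishes --- $x = M$ for an assignment, $\defined(x[\tup{i}])$ for $\Res{x[\tup{i}]}{T}$, the $\defined$ and guard facts of the chosen branch for a $\kw{find}$, and the restriction to the taken branch for an $\kw{if}$ (at a branching, $\futfset_{\pp_j}$ is an intersection over branches and is therefore a subset of the taken branch's future facts). Each such $F_j$ is made true by the reduction step from $\pp_j$ to $\pp_{j+1}$ and persists to $\conf_{\mathrm{term}}$: environments grow monotonically and the sequence of events only grows by Lemma~\ref{lem:sem-ext}, and every variable is assigned at most once by Lemma~\ref{lem:stronginv1exec}, so the values of $x$ and of the simple terms occurring in $F_j$ are unchanged when evaluated in $E_{\trace}$. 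Combining $\trace \vdash F_j$ with the inductive hypothesis $\trace \vdash \futfset_{\pp_{j+1}}$ yields $\trace \vdash \futfset_{\pp_j}$, and at $j = 0$ this is the desired conclusion.

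The step I expect to be the main obstacle is the treatment of $\kw{find}$. On one hand, the no-non-unique-event hypothesis is what ensures that a traversed $\FIND\unique{e}$ reduces by rule~\eqref{sem:find1} or~\eqref{sem:find2}, so that a genuine successor branch, whose future facts are supplied by the induction, exists. On the other hand, I must justify that the $\defined(M_{j1}, \dots)$ and guard facts recorded for the selected branch evaluate correctly at $\conf_{\mathrm{term}}$, even though they were checked in a sub-derivation where $\sigma$ was extended by the witness indices stored into $\vf_{jk}[\sigma(\tup{i})]$. For this I would transport the evaluation from that sub-derivation to the terminal environment by substituting the found indices, reusing exactly the argument by which Invariant~\ref{inv2} is shown to be preserved (Lemmas~\ref{lem:defsubterms} and~\ref{lem:substreplindex}), together with the persistence of defined array cells in $E_{\trace}$ from Lemma~\ref{lem:sem-ext}. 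The remaining constructs ($\kw{new}$, $\kw{let}$, $\kw{event}$, $\kw{insert}$) and the persistence claims are routine by comparison, following directly from single assignment and monotonicity of the environment.
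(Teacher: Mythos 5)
There is nothing in the paper to compare your proof against: Lemma~\ref{lem:factsPfut} is stated without proof. Like Lemmas~\ref{lem:factsP} and~\ref{lem:factsPcases}, it is asserted as a property that the fact-collection algorithm satisfies, and that algorithm is never defined in this document --- it is only described informally (``facts that hold at the end of the block of code that contains $\pp$''), with a pointer to earlier papers for the underlying principles; the author's own TODO note attached to the lemma even concedes that the implementation collects no future facts at all for terms and ignores facts coming from the evaluation of the channel and message of an output. So any proof, yours included, is necessarily conditional on a precise definition of $\futfset_{\pp}$ that the paper withholds.

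Judged as a proof sketch modulo that definition, your strategy is sound and deploys each hypothesis exactly where it is needed: the ``no intermediate output'' hypothesis confines the segment from $\conf$ to the terminal configuration to a single block, so that $\sigma$ is unchanged along it (Lemmas~\ref{lem:curidx} and~\ref{lem:sem-ext}); the ``no non-unique event'' hypothesis is what guarantees that a traversed $\FIND\unique{e}$ continues into one of its branches rather than aborting --- precisely the failure mode the paper flags when introducing $\futfset_{\pp}$; and persistence of established facts to the end of the trace follows from monotonicity of $E$ and $\evseq$ (Lemma~\ref{lem:sem-ext}) together with single assignment (Lemma~\ref{lem:stronginv1exec}). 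Two caveats, though. First, your inductive step postulates the recursion $\futfset_{\pp_j} \subseteq \{F_j\} \cup \futfset_{\pp_{j+1}}$ (with intersection over branches at a branching); this is plausible given the informal description, but it is an assumption about the algorithm rather than something derivable from the paper, and your whole argument stands or falls with it. Second, you silently restrict attention to process configurations: a configuration ``at program point $\pp$'' may also be a term configuration (for instance when $\pp$ lies inside a condition of $\FIND$), and your block-walking induction does not start from such a $\conf$; according to the author's note this case is degenerate in the implementation ($\futfset_{\pp}$ is empty for terms, making the claim vacuous there), but a complete proof should state and dispose of it explicitly rather than omit it.
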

  \bb{TODO in the implementation, the collection of future facts is not very precise when terms are not simple (no future facts for terms, future facts of an output do not take into account facts obtained by evaluating the channel and message of the output, ...)}%

\subsection{Equational Prover} \label{sec:equationalprover}

In order to reason on facts, CryptoVerif uses an equational prover: from a set
of facts $\fset$, this equational prover tries to derive a contradiction by rewriting terms, using an algorithm inspired by Knuth-Bendix completion. It also eliminates collisions between independent random values, thus the contradiction is
obtained up to the probability of the eliminated collisions, that is, the probability that $\fset$ holds is bounded by the probability of these collisions.
When this algorithm succeeds, we say that ``$\fset$ yields a contradiction in game $Q_0$'', and CryptoVerif computes the probbaility of the eliminated collisions. (We may omit the current game $Q_0$ when it is clear from the context.)
Previous versions of this algorithm were presented in~\cite[Appendix~C.5]{Blanchet07c} and~\cite[Appendix~B.3]{Blanchet07}.
Those versions did not evaluate the probability because they considered asymptotic security:
they showed that the probability was negligible in the security parameter.
Here, we use exact security: we compute the value of the probabilities, so the
soundness of this algorithm
can be expressed by the following lemma, adapted from~\cite[Proposition~7]{Blanchet07}.

\begin{lemma}\label{lem:yields_contrad}
If for all $j \in J$, $\fset_j$ yields a contradiction in a game $Q_0$, then
CryptoVerif returns a probability $p$ such that for all evaluation
contexts $C$ acceptable for $Q_0$ with any public variables,
$\Prss{C[Q_0]}{\bigvee_{j \in J}\exists \tup{x}_j \in \tup{T}_j, \bigwedge \fset_j} \leq p(C)$,
where $\tup{x}_j$ are the replication indices and non-process variables that occur in $\fset_j$
and $\tup{T}_j$ are their types.
\end{lemma}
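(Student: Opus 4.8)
The plan is to prove soundness of the equational prover by a \emph{soundness of rewriting modulo collisions} argument, in which every collision elimination performed by the prover is charged a probability and the returned $p$ is the corresponding union bound. First I would make precise what ``$\fset_j$ yields a contradiction in $Q_0$'' produces: a finite derivation that starts from the facts of $\fset_j$ and repeatedly rewrites terms using (i) the built-in boolean rewrite rules, (ii) the user-defined rewrite rules of Section~\ref{sec:userdefinedrewriterules}, (iii) equalities extracted from $\fset_j$ by Knuth--Bendix completion, and (iv) collision eliminations between independent random values, until it derives $\false$. To each collision elimination I attach the corresponding collision event together with its probability, and I take $p$ to be the sum over all $j \in J$ and all eliminated collisions of these probabilities, so that $p(C)$ upper-bounds the probability that at least one eliminated collision occurs along a trace of $C[Q_0]$.

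The core step is a per-rule soundness lemma. For a rewrite $M \rewrite M'$ of kind (i)--(iii), I show that for every environment $E$ and valuation $\venv$, if $E, \venv, M \evalterm a$ and $E, \venv, M' \evalterm a'$ then $a = a'$ (for the $\approx_p$ user rules this already holds up to the declared probability $p$); for a collision elimination of kind (iv), the same identity holds \emph{unless} the associated collision holds in $E,\venv$. Chaining these, every fact derived by the prover is a logical consequence, modulo the eliminated collisions, of the original facts evaluated in $E,\venv$. Now suppose a prefix $\trace'$ of a full trace $\trace$ satisfies $\exists \tup{x}_j \in \tup{T}_j, \bigwedge \fset_j$ for some $j$, with witnesses $\tup{a}_j$, and set $\venv = \{\tup{x}_j \mapsto \tup{a}_j\}$: all facts of $\fset_j$ hold in $E_{\trace'}$ and $\venv$. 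If none of the eliminated collisions for $\fset_j$ held in $E_{\trace'},\venv$, the whole derivation would be sound there and would yield that $\false$ holds, which is impossible. Hence some eliminated collision must hold along $\trace$. This reduces the event $\bigvee_{j\in J}\exists \tup{x}_j \in \tup{T}_j, \bigwedge \fset_j$ to a disjunction of collision events.

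It then remains to bound the probability of this disjunction of collision events by $p(C)$. Here the $\Prss{\cdot}{\cdot}$ (first-prefix) semantics is what makes the bound clean: for a collision between two independent random cells, I condition on the configuration at the first prefix where the fact holds and exploit the freshness and independence of the random value drawn last, bounding the probability that it equals a value fixed earlier by the collision probability of its type $T$ (that is $1/|T|$ up to $\epsilon_T$, or the declared nonuniform collision probability). A union bound over all relevant array cells then applies, and the number of cells is controlled by the replication bounds, which are determined by $C$. The resulting expression is built only from the components listed in Property~\ref{prop:prob}, so it is a legitimate probability formula $p(C)$; summing over $j \in J$ gives the stated bound. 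This argument is the exact-security refinement of the qualitative (negligibility) reasoning of~\cite[Proposition~7]{Blanchet07}.

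The main obstacle I anticipate is precisely this last probabilistic accounting: making the union bound over array cells rigorous while respecting independence, so that each collision is charged at the moment its second random value is drawn rather than double-counted, and checking that reasoning with $\Prss{\cdot}{\cdot}$ instead of the final-trace probability indeed licenses charging each collision against a freshly drawn, independent value. The soundness-of-rewriting step (the second paragraph) is essentially bookkeeping once each rewrite rule is verified sound, but turning the qualitative collision argument into an explicit numeric bound that matches CryptoVerif's computation is where the real care is needed.
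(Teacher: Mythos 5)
The paper contains no proof of Lemma~\ref{lem:yields_contrad}: it is stated as an exact-security adaptation of~\cite[Proposition~7]{Blanchet07}, with the underlying algorithm described in the appendices of~\cite{Blanchet07c,Blanchet07}. Measured against that implied argument, your overall structure is the right one: verify each rewriting step sound (built-in rules, user-defined rules, equalities extracted by completion), treat each collision elimination as sound except on an explicit collision event, conclude that whenever $\exists \tup{x}_j \in \tup{T}_j, \bigwedge \fset_j$ holds at some prefix of a trace then some eliminated collision event must hold, and bound the probability of those collision events using the prefix semantics $\Prss{\cdot}{\cdot}$, independence of the last-drawn random value, the replication bounds determined by $C$, and the components listed in Property~\ref{prop:prob}.

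There is, however, one accounting error that loses exactly the content the lemma is meant to carry. You define $p$ as ``the sum over all $j \in J$ and all eliminated collisions of these probabilities,'' i.e.\ a double sum. The remark immediately following the lemma in the paper states that CryptoVerif counts a collision \emph{only once} even when it is eliminated in the derivations for several $j$; this single counting is precisely why the lemma is stated for the disjunction $\bigvee_{j \in J}$ rather than obtained trivially from the single-set case by summing over $j$. Since the lemma asserts the bound for the probability $p$ that \emph{CryptoVerif returns}, proving the inequality only for your larger, double-counted sum does not establish it for CryptoVerif's $p$. The repair is available inside your own argument: your reduction step shows that if the event holds, then some collision in the \emph{set} of distinct eliminated collisions (pooled across all $j \in J$) holds, so the union bound should be taken over that set of distinct collision events rather than summed per~$j$; each shared collision is then charged once, matching what CryptoVerif computes.
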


In particular, the lemma states that, when several sets of facts
$\fset_j$ yield a contradiction in the same game, CryptoVerif counts
only once in the probability $p$ the collisions that are eliminated in
proofs that $\fset_j$ yields a contradiction for several $j$.

More generally, let us consider an algorithm $\algyc$ built from
the following grammar:
\begin{defn}
  \categ{\algyc}{algorithm}\\
  \entry{\fset \text{ yields a contradiction}}{equational proof}\\
  \entry{\algyc_1 \wedge \algyc_2}{conjunction}\\
  \entry{\algyc_1 \vee \algyc_2}{disjunction}\\
  \entry{\mathf}{mathematical formula}\\
  \entry{\algtest{\mathf}{\algyc_1}{\algyc_2}}{test}
\end{defn}
The mathematical formulas $\mathf$ in such algorithms must not depend on the
executed trace. (They may depend on the syntax of the game $Q_0$ or on the set of public variables $V$, for instance.)

We translate such algorithms into logical formulas on traces:
{\allowdisplaybreaks\begin{align*}
  &\yctran{\fset \text{ yields a contradiction}} = \neg \exists \tup{x} \in \tup{T}, \bigwedge \fset \quad \text{where $\tup{x}$ are the replication indices and}\\*
  &\quad\text{non-process variables that occur in $\fset$ and $\tup{T}$ are their types.}\\
  &\yctran{\algyc_1 \wedge \algyc_2} = \yctran{\algyc_1} \wedge \yctran{\algyc_2}\\
  &\yctran{\algyc_1 \vee \algyc_2} = \begin{cases}
    \yctran{\algyc_1}&\text{if $\algyc_1$}\\
    \yctran{\algyc_2}&\text{otherwise}
  \end{cases}\\
  &\yctran{\mathf} = \mathf\\
  &\yctran{\algtest{\mathf}{\algyc_1}{\algyc_2}} = \algtest{\mathf}{\yctran{\algyc_1}}{\yctran{\algyc_2}}
\end{align*}}%
Intuitively, when algorithm $\algyc$ returns true, CryptoVerif shows
that the formula $\yctran{\algyc}$ holds for most traces. It bounds
the probability of the traces for which this formula does not hold, as
shown by the following lemma.

\begin{lemma}\label{lem:yields_contrad_algo}
If algorithm $\algyc$ returns true in a game $Q_0$, then
CryptoVerif returns a probability $p$ such that for all evaluation
contexts $C$ acceptable for $Q_0$ with any public variables,
$\Prss{C[Q_0]}{\neg \yctran{\algyc}} \leq p(C)$.
\end{lemma}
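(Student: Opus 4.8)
The plan is to prove Lemma~\ref{lem:yields_contrad_algo} by structural induction on the algorithm $\algyc$, using Lemma~\ref{lem:yields_contrad} as the base case. The key observation is that $\yctran{\algyc}$ is a logical formula whose truth on a given trace is determined by the facts collected, while the probability bound accumulates the eliminated collisions across the various equational proofs. I would set up the induction so that, for each case of the grammar, I exhibit a probability $p$ and show $\Prss{C[Q_0]}{\neg\yctran{\algyc}} \leq p(C)$ for all acceptable contexts $C$.

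First I would handle the base case $\algyc = (\fset \text{ yields a contradiction})$. Here $\yctran{\algyc} = \neg\exists \tup{x}\in\tup{T}, \bigwedge\fset$, so $\neg\yctran{\algyc} = \exists \tup{x}\in\tup{T}, \bigwedge\fset$, and Lemma~\ref{lem:yields_contrad} directly gives a probability $p$ with $\Prss{C[Q_0]}{\exists \tup{x}\in\tup{T},\bigwedge\fset} \leq p(C)$, which is exactly the required bound (this is the singleton case $J = \{1\}$). The case $\yctran{\mathf} = \mathf$ is immediate: since $\mathf$ does not depend on the executed trace, either it holds for all traces, in which case $\neg\yctran{\algyc}$ is never satisfied and we take $p = 0$, or it fails for all traces; but in the latter case the algorithm would not return true, so this subcase does not arise. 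For the test $\algtest{\mathf}{\algyc_1}{\algyc_2}$, again $\mathf$ is trace-independent, so the test selects a fixed branch; the result follows by applying the induction hypothesis to the selected branch.

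The two composite logical cases are where I would spend the effort. For conjunction, $\yctran{\algyc_1\wedge\algyc_2} = \yctran{\algyc_1}\wedge\yctran{\algyc_2}$, so $\neg\yctran{\algyc_1\wedge\algyc_2} = \neg\yctran{\algyc_1}\vee\neg\yctran{\algyc_2}$; by the induction hypothesis we obtain probabilities $p_1,p_2$ bounding $\Prss{C[Q_0]}{\neg\yctran{\algyc_i}}$, and a union bound gives $\Prss{C[Q_0]}{\neg\yctran{\algyc_1\wedge\algyc_2}} \leq p_1(C)+p_2(C)$. The subtle point, and the main obstacle, is matching the sharpness promised by Lemma~\ref{lem:yields_contrad}: that lemma counts shared eliminated collisions only once. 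So I would not simply sum $p_1$ and $p_2$ but rather track the set of equational proofs ($\fset_j$ yields a contradiction) appearing as leaves throughout $\algyc$, and apply Lemma~\ref{lem:yields_contrad} to the whole collection $\{\fset_j\}_{j\in J}$ at once, obtaining a single probability $p$ bounding $\Prss{C[Q_0]}{\bigvee_{j\in J}\exists\tup{x}_j\in\tup{T}_j,\bigwedge\fset_j}$. Then I would argue that $\neg\yctran{\algyc}$ implies $\bigvee_{j\in J}\exists\tup{x}_j\in\tup{T}_j,\bigwedge\fset_j$ for every trace, so the same $p$ works; this requires a careful syntactic argument that whenever the algorithm's output formula fails on a trace, at least one leaf contradiction-formula is violated, which I would establish by induction on $\algyc$ in parallel with the probability computation.

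For disjunction, $\yctran{\algyc_1\vee\algyc_2}$ is defined operationally: it equals $\yctran{\algyc_1}$ when the algorithm $\algyc_1$ returns true and $\yctran{\algyc_2}$ otherwise. Since ``$\algyc_1$ returns true'' is a decision made statically by CryptoVerif (not trace-dependent), exactly one branch is selected and the induction hypothesis on that branch yields the bound directly. The remaining delicate step, which ties back to the conjunction obstacle, is ensuring that the implication $\neg\yctran{\algyc}\Rightarrow\bigvee_{j\in J}\exists\tup{x}_j\in\tup{T}_j,\bigwedge\fset_j$ holds with $J$ ranging over the leaves \emph{actually reachable} given the static choices, so that the collision-sharing argument of Lemma~\ref{lem:yields_contrad} applies to precisely the right set of facts. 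I expect the bookkeeping of which leaves contribute to $J$, together with verifying the pointwise implication on traces, to be the crux; the probability arithmetic itself is routine once $J$ is correctly identified.
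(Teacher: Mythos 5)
Your proposal is correct and follows essentially the same route as the paper: an induction on $\algyc$ establishing that whenever $\algyc$ returns true and a trace satisfies $\neg\yctran{\algyc}$, some ``$\fset$ yields a contradiction'' call that was actually evaluated and succeeded has $\trace \vdash \exists \tup{x}\in\tup{T}, \bigwedge\fset$, followed by a single application of Lemma~\ref{lem:yields_contrad} to the whole collection of successful leaf calls, so that shared eliminated collisions are counted only once. Your identification of the crux (rejecting the naive union bound in favor of the pointwise trace implication plus one global invocation of Lemma~\ref{lem:yields_contrad}, with the static nature of disjunction and test choices handled by selecting the evaluated branch) is exactly the paper's argument.
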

\begin{proof}
  We show by induction on the definition of $\algyc$ that, if $\algyc$ returns true and $\trace \vdash \neg \yctran{\algyc}$, then there exists $\fset$ such that ``$\fset$ yields a contradiction'' has been called in the evaluation of $\algyc$ and returned true, and $\trace \vdash \exists \tup{x} \in \tup{T}, \bigwedge \fset$ where $\tup{x}$ are the replication indices and non-process variables that occur in $\fset$ and $\tup{T}$ are their types.
  \begin{itemize}
  \item Case $\algyc = (\fset \text{ yields a contradiction})$: obvious.
  \item Case $\algyc = \algyc_1 \wedge \algyc_2$: Since $\algyc$ returns true, $\algyc_1$ and $\algyc_2$ both return true. Since $\trace \vdash \neg (\yctran{\algyc_1} \wedge \yctran{\algyc_2})$, we have either $\trace \vdash \neg \yctran{\algyc_1}$ or $\trace \vdash \neg\yctran{\algyc_2}$. In the first case, by induction hypothesis on $\algyc_1$, there exists $\fset$ such that ``$\fset$ yields a contradiction'' has been called in the evaluation of $\algyc_1$ and returned true, and $\trace \vdash \exists \tup{x} \in \tup{T}, \bigwedge \fset$ where $\tup{x}$ are the replication indices and non-process variables that occur in $\fset$ and $\tup{T}$ are their types. Moreover, ``$\fset$ yields a contradiction'' has been called in the evaluation of $\algyc$. The second case is symmetric.
  \item Case $\algyc = \algyc_1 \vee \algyc_2$: If $\algyc_1$ returns true, then $\yctran{\algyc} = \yctran{\algyc_1}$, so $\trace \vdash \neg\yctran{\algyc_1}$. We conclude by induction hypothesis on $\algyc_1$, as above. If $\algyc_1$ returns false, then $\algyc_2$ returns true, since $\algyc$ returns true. Hence $\yctran{\algyc} = \yctran{\algyc_2}$, so $\trace \vdash \neg\yctran{\algyc_2}$. We conclude by induction hypothesis on $\algyc_2$.
  \item Case $\algyc = \mathf$: since $\mathf$ evaluates to true, there is no trace $\trace$ such that $\trace \vdash \neg\mathf$, so the property holds trivially.
  \item Case $\algyc = \algtest{\mathf}{\algyc_1}{\algyc_2}$: if $\mathf$ evaluates to true, then we conclude by induction hypothesis on $\algyc_1$. Indeed, since $\algyc$ returns true, $\algyc_1$ returns true. Since $\trace \vdash \neg \yctran{\algyc}$, we have $\trace \vdash \neg \yctran{\algyc_1}$. By induction hypothesis, there exists $\fset$ such that ``$\fset$ yields a contradiction'' has been called in the evaluation of $\algyc_1$ and returned true, and $\trace \vdash \exists \tup{x} \in \tup{T}, \bigwedge \fset$ where $\tup{x}$ are the replication indices and non-process variables that occur in $\fset$ and $\tup{T}$ are their types. Then ``$\fset$ yields a contradiction'' has also been called in the evaluation of $\algyc$. Similarly, if $\mathf$ evaluates to false, then we conclude by induction hypothesis on $\algyc_2$.
  \end{itemize}
  We conclude by Lemma~\ref{lem:yields_contrad}.
  \proofcomplete
\end{proof}

\section{\rn{success}: Criteria for Proving Security Properties}\label{sec:success}

The command \rn{success} tries to prove the active queries, as explained below.
We consider a process $Q_0$ that satisfies Properties~\ref{prop:notables} and~\ref{prop:autosarename},
and prove secrecy and correspondence properties for $Q_0$.

\subsection{Secrecy}

\begin{figure}[t]
  \begin{align*}
&\noleak(x[\tup{M}'], \sri, \fset) = (\fset\text{ yields a contradiction}) \vee ((x \notin V) \wedge \bigwedge_{\pptag x[\tup{M}] \text{ in }Q_0}\\
&\qquad \text{-- if $\pptag x[\tup{M}]$ is in $M$ in an assignment $\bassign{y[\tup{i}]}{M}$ in $Q_0$, $M$ is built from replication}\\[-1mm]
  &\qquad \text{indices, variables, function applications, and conditionals, and
    the current call is not}\\[-1mm]
  &\qquad \text{inside a call to $\noleak(y[\_], \_, \_)$, then}\\
&\qquad\quad \noleak(y[\theta\tup{i}], \sri \cup \{ \theta\tup{i} \}, \fset \cup \theta \fset_{\pp} \cup \{ \theta\tup{M} = \tup{M}' \})\\[-1mm]
&\qquad\quad \text{where $\theta$ is a renaming of $\tup{i}$ to fresh replication indices}\\
&\qquad \text{-- if $\pptag x[\tup{M}]$ is in $\kevent{e(M_1, \dots, M_{k-1}, C[\pptag x[\tup{M}]], M_{k+1}, \dots, M_m)}$ in $Q_0$ for $C$ defined in}\\[-1mm]
&\qquad \text{Figure~\ref{fig:eventcontexts}, then true}\\
&\qquad \text{-- otherwise}, \fset \cup \theta \fset_{\pp} \cup \{ \theta\tup{M} = \tup{M}' \}\text{ yields a contradiction}\\[-1mm]
&\qquad\quad \text{where $\theta$ is a renaming of $\replidx{\pp}$ to fresh replication indices})
\end{align*}
  \caption{Function $\noleak$}\label{fig:noleak}
\end{figure}

\begin{figure}[t]
\begin{align*}
C ::= {} & [\,]\\
&y[M_1, \dots, M_{k-1}, C, M_{k+1}, \dots, M_m]\\
&f(M_1, \dots, M_{k-1}, C, M_{k+1}, \dots, M_m)\\
&\Res{y[\tup{i}]}{T}; C\\
&\assign{y[\tup{i}]}{M}{C}\\
&\bguard{M}{C}{N'}\\
&\bguard{M}{N}{C}\\
&\FIND\uniqueopt \ (\mathop\bigoplus\nolimits_{j=1,\dots,m; j \neq k} \tup{\vf_j}[\tup{i}] = \tup{i_j} \leq \tup{n_j}\ \SUCHTHAT\ \defined(\tup{M_j}) \wedge M_j \THEN N_j)\\
&\qquad \oplus \tup{\vf_k}[\tup{i}] = \tup{i_k} \leq \tup{n_k}\ \SUCHTHAT\ \defined(\tup{M_k}) \wedge M_k \THEN C \ \ELSE N\\
&\FIND\uniqueopt \ (\mathop\bigoplus\nolimits_{j=1}^m \tup{\vf_j}[\tup{i}] = \tup{i_j} \leq \tup{n_j}\ \SUCHTHAT\ \defined(\tup{M_j}) \wedge M_j \THEN N_j)\ \ELSE C\\
&\kevent{e(\tup{M})}; C
\end{align*}
\caption{Event contexts}\label{fig:eventcontexts}
\end{figure}

Let us now define syntactic criteria that allow us to prove secrecy
properties of protocols. 
We first define the function $\noleak$ in Figure~\ref{fig:noleak} and explain it below.
This function implicitly depends on the current game $Q_0$ and the public variables $V$.
The function call $\noleak(x[\tup{M}'], \sri, \fset)$ shows that $x[\tup{M}']$
does not leak to the adversary, assuming $\fset$ holds. The set $\sri$ contains all replication indices that appear in $\fset$.
If $\fset$ yields a contradiction, $\noleak$ is true, since it shows the absence of leak \emph{assuming $\fset$ holds}. Otherwise, $x[\tup{M}']$ may leak either because $x \in V$ so $x$ is a public variable, or because of an occurrence of a term $x[\tup{M}]$ in the game that reads $x[\tup{M}']$ (so $\tup{M} = \tup{M'}$ holds) and such that the result of $x[\tup{M}]$ leaks.
\begin{itemize}
\item In case $x[\tup{M}]$ occurs in the term $M$ in an assignment $\bassign{y[\tup{i}]}{M}$, the function $\noleak$ recursively tries to prove that $y[\tup{i}]$ does not leak, when $y[\tup{i}]$ may use $x[\tup{M}']$, that is, when $\tup{M}' = \tup{M}$. The fact $\tup{M}' = \tup{M}$ and the facts $\fset_\pp$ that hold at the program point $\pp$ of $x[\tup{M}]$ are added to the known facts $\fset$ in the recursive call. Indeed, these facts are known to hold in this case. The replication indices are renamed to fresh indices in order to avoid using the same index variable for indices that can actually take different values.
\item In case $x[\tup{M}]$ occurs in the arguments of an event, the arguments of the event do not leak to the adversary, so this occurrence of $x[\tup{M}]$ does not make $x[\tup{M}']$ leak.
\item In all other cases, we consider that the result of $x[\tup{M}]$ may leak, so, in order to prove that $x[\tup{M}']$ does not leak, we show that the occurrence of $x[\tup{M}]$ at $\pp$ cannot read $x[\tup{M}']$, by showing that $\tup{M} = \tup{M'}$, $\fset_\pp$, and $\fset$ together yield a contradiction.
\end{itemize}

\begin{definition}[$\pp$ follows a definition of $x$]
We say that $\pp$ \emph{follows a definition of} $x$ when 
$\Res{x[\tup{i}]}{T};\pptag \dots$, 
$\assign{x[\tup{i}]}{M}{\pptag \dots}$,
$\FIND\uniqueopt \ (\mathop\bigoplus\nolimits_{j=1}^m \tup{\vf_j}[\tup{i}] = \tup{i_j} \leq \tup{n_j}\ \SUCHTHAT\ 
\defined (\tup{M_j}) \fand M_j \THEN {}^{\pp_j}\dots)\ \ELSE \dots$ 
with $\pp_j = \pp$ and $x$ in $\tup{\vf_j}$ for some $j \leq m$,
or $\cinput{c[\tup{M}]}{x[\tup{i}]:T}; \pptag P$ occurs in $Q_0$.
\end{definition}

We do not mention $\GET$ in the previous definition, because it is excluded by Property~\ref{prop:notables}.
For each $\pp$ that follows a definition of $x$ in $Q_0$, we define
$\defrestr_{\pp}(x)$ as follows:
\begin{equation*}
\defrestr_{\pp}(x) =
\begin{cases}
x[\tup{i}]&\text{if $\Res{x[\tup{i}]}{T}; \pptag\dots$ occurs in $Q_0$}\\
y[\tup{M}]&\text{if $\assign{x[\tup{i}]:T}{y[\tup{M}]}{\pptag \dots}$ occurs in $Q_0$ and}\\
&\text{$y$ is defined only by random choices in $Q_0$}
\end{cases}
\end{equation*}
In all other cases, $\defrestr_{\pp}(x)$ is not defined.
The variable $\defrestr_{\pp}(x)$ is the random variable that defines $x$ just before program point $\pp$.
When $x$ itself is chosen randomly at that point, $\defrestr_{\pp}(x)$ is simply $x[\tup{i}]$, where $\tup{i}$ are the current replication indices.
When $x$ is defined by an assignment of a variable $y[\tup{M}]$ that is random, $\defrestr_{\pp}(x)$ is that variable.
Otherwise, we give up and do not define $\defrestr_{\pp}(x)$.
\begin{align*}
&\prove{\secrone(x)}(\pp) = \defrestr_{\pp}(x)\text{ is defined and}\\[-1mm]
&\qquad \noleak(\theta \defrestr_{\pp}(x), \{ \theta\replidx{\pp} \}, \theta\fset_{\pp})\\[-1mm]
  &\qquad \text{where $\theta$ is a renaming of $\replidx{\pp}$ to fresh replication indices}\\
  &\prove{\secrone(x)}(\sset) =  \bigwedge_{\pp \in \sset} \prove{\secrone(x)}(\pp)
\end{align*}
The function call $\prove{\secrone(x)}(\pp)$ proves one-session secrecy for the definition of $x$ just before program point $\pp$. It considers only the cases in which $x$ is defined either by a random choice or by an assignment from a random choice. In other cases, the proof fails. (These other cases can typically be handled by first removing assignments as needed.)
Intuitively, $\prove{\secrone(x)}(\pp)$ guarantees that, when $x[\tup{i}]$ is defined just before program point $\pp$, the random variable $\defrestr_{\pp}(x)$ that defines $x[\tup{i}]$ does not leak, knowing that the facts $\fset_\pp$ hold.
Only events and variables $y[\tup{i}']$ that do not leak depend on the random choice that defines $x[\tup{i}]$; the sent messages and the control flow of the process are independent of $x[\tup{i}]$, so the adversary obtains no information on $x[\tup{i}]$.
That guarantees the one-session secrecy of $x[\tup{i}]$ when it is defined just before $\pp$. This is verified for all program points in $\sset$ by $\prove{\secrone(x)}(\sset)$.
When $x$ is defined by assignment of $z[\tup{M}]$, this proof of one-session secrecy allows some array cells of $z$ to leak, provided the array cells $z[\tup{M}]$ used to define $x$ do not leak.

\bbnote{In order to prove secrecy, we also define
\[\fset_{x,\pp_1,\pp_2} = \begin{cases}
  \{ \false \} &\text{if $z_1 \neq z_2$}\\
  \theta_1\fset_{\pp_1} \cup \theta_2\fset_{\pp_2} \cup \{ \theta_1 \tup{M_1} = \theta_2 \tup{M_2},\ab \tup{i}_1 \neq \tup{i}_2\}&\text{if $z_1 = z_2$}
  \end{cases}\]
That would be a nice definition, but we also need $\tup{i}_1$ and $\tup{i}_2$ elsewhere...}%
In order to prove secrecy, we also define $\prove{\distinct(x)}(\pp_1, \pp_2) = (z_1 \neq z_2) \vee (\theta_1\fset_{\pp_1} \cup \theta_2\fset_{\pp_2} \cup \{ \theta_1 \tup{M_1} = \theta_2 \tup{M_2},\ab \tup{i}_1 \neq \tup{i}_2\}$ yields a contradiction$)$,
where $\defrestr_{\pp_1}(x) = z_1[\tup{M_1}]$,
$\defrestr_{\pp_2}(x) = z_2[\tup{M_2}]$,
$\tup{i}$ are the current replication indices at the definition of $x$,
$\theta_1$ and $\theta_2$ are two distinct renamings of $\tup{i}$ to fresh replication indices, 
$\tup{i}_1 = \theta_1\tup{i}$, and $\tup{i}_2 = \theta_2\tup{i}$.
Intuitively, $\prove{\distinct(x)}(\pp_1, \pp_2)$ guarantees that, if
$x[\tup{i}_1]$ is defined at $\pp_1$, so
$x[\tup{i}_1] = z_1[\theta_1\tup{M_1}]$,
and $x[\tup{i}_2]$ is defined at $\pp_2$, so
$x[\tup{i}_2] = z_2[\theta_2\tup{M_2}]$,
with $\tup{i}_1 \neq \tup{i}_2$, then
the random variables that define $x$ in these two cases, $z_1[\theta_1\tup{M_1}]$ and $z_2[\theta_2\tup{M_2}]$,
are different, that is, $z_1 \neq z_2$ or $\theta_1\tup{M_1} \neq \theta_2\tup{M_2}$.
Therefore, $z_1[\theta_1\tup{M_1}]$ is independent of $z_2[\theta_2\tup{M_2}]$,
so $x[\tup{i}_1]$ is independent of $x[\tup{i}_2]$.
Combining this information with the proof of one-session secrecy, we can prove secrecy of $x$:
we define
\[  \prove{\secr(x)}(\sset) = \prove{\secrone(x)}(\sset) \wedge
\bigwedge_{\pp_1, \pp_2 \in \sset} \prove{\distinct(x)}(\pp_1, \pp_2)\]
The proof of bit secrecy is the same as for one-session secrecy:
\[ \prove{\secrbit(x)}(\sset) = \prove{\secrone(x)}(\sset)\]
The proof of (one-session or bit) secrecy is justified by the following proposition.

\begin{proposition}[(One-session or bit) secrecy]\label{prop:sec}
Consider a process $Q_0$ that satisfies Properties~\ref{prop:notables} and~\ref{prop:autosarename}.
Let $\prop$ be $\secrone(x)$, $\secr(x)$, or $\secrbit(x)$.
Let $\sset = \{\pp \mid \pp$ follows a definition of $x\}$.
If $\prove{\prop}(\sset)$ and
for all evaluation contexts $C$ acceptable for $Q_0$,
$\Prss{C[Q_0]}{\neg \yctran{\prove{\prop}(\sset)}}\leq p(C)$, 
then
$Q_0$ satisfies $\prop$ with public variables $V$ ($x \notin V$)
up to probability $p'$ such that $p'(C) = p(C[C_{\prop}[\,]])$
and $\bound{Q_0}{V \cup\{x\}}{\prop}{\Dfalse}{p}$.
\end{proposition}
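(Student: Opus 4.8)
The plan is to reduce both conclusions to a single \emph{core bound}: for every evaluation context $C'$ acceptable for $C_{\prop}[Q_0]$ with public variables $V$ (where $x \notin V$) that does not contain $\sevent$ or $\sbarevent$, writing $C = C'[C_{\prop}[\,]]$,
\[
\Pr[C[Q_0] : \sevent] - \Pr[C[Q_0] : \sbarevent] \le p(C).
\]
The statement ``$Q_0$ satisfies $\prop$ up to $p'$'' is exactly this bound, since $\Advt_{Q_0}^{\prop}(C') = \Pr[C[Q_0]:\sevent] - \Pr[C[Q_0]:\sbarevent]$ and $p'(C') = p(C'[C_{\prop}[\,]])$. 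Moreover $\bound{Q_0}{V\cup\{x\}}{\prop}{\Dfalse}{p}$ follows from the same bound: for the (more restricted) contexts allowed there, $\Advtev{Q_0}{\prop}{C}{\Dfalse} = \Pr[C[Q_0]:\sevent] - \Pr[C[Q_0]:\sbarevent \vee \nonunique{Q_0}] \le \Pr[C[Q_0]:\sevent] - \Pr[C[Q_0]:\sbarevent] \le p(C)$, the subtracted $\nonunique{Q_0}$ term only decreasing the advantage.

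To prove the core bound I would compare $C[Q_0]$ with an \emph{ideal} configuration in which the test query of $C_{\prop}$ returns, instead of the tested cell of $x$, a fresh random value of the same type (for $\secr(x)$, an independent fresh value per distinct tuple of tested indices, reusing the value for repeated tuples exactly as $\tprocs{x}$ already does; for $\secrbit(x)$, the cell $x$ itself plays the role of the compared value and is replaced by a fresh independent uniform boolean). In the ideal game the response no longer depends on the secret bit $b$ (resp.\ on $x$), and $b$ influences the trace only through that response and the final test $b = b'$; hence the guess $b'$ is distributed independently of $b$, and averaging over uniform $b$ gives equal probabilities for $\sevent$ and $\sbarevent$ by symmetry. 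It therefore suffices to bound the effect of the change, which amounts to showing that the adversary's observable view of $C[Q_0]$ is, up to probability $p$, independent of the random choices $\defrestr_{\pp'}(x)$ defining the tested cells.

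This independence is where $\prove{\prop}(\sset)$ enters, and I would realize it by a probability-preserving resampling of each value $\defrestr_{\pp'}(x)$, swapping its drawn value with the ideal game's fresh value. I would show that, on every trace outside the bad set $\{\exists\,\text{prefix} \vdash \neg\yctran{\prove{\prop}(\sset)}\}$ — whose probability is $\le p(C)$ by hypothesis, consistently with Lemma~\ref{lem:yields_contrad_algo} — this resampling leaves the observable view unchanged. The argument follows the recursion of $\noleak$ by induction: at each occurrence $\pptag x[\tup M]$ in $Q_0$, either (i) it feeds an assignment $\bassign{y[\tup i]}{M}$ and the recursive call $\noleak(y[\theta\tup i],\dots)$ ensures $y$ does not leak, absorbing the perturbation; (ii) it lies in the arguments of an event, invisible to the adversary and to the $\sevent/\sbarevent$ distinguishers; or (iii) the facts $\fset \cup \theta\fset_{\pp}\cup\{\theta\tup M = \tup M'\}$ yield a contradiction, so on good traces this occurrence never reads the tested cell. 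The validity of $\fset_{\pp}$ at $\pp$ is supplied by Lemma~\ref{lem:factsP} and Corollary~\ref{cor:factsP}, and probability preservation of the swap holds because both values follow the same distribution $D_T$, up to the already-accounted $\epsilon_{T}$. For $\secr(x)$ I would additionally use $\prove{\distinct(x)}(\pp_1,\pp_2)$ for $\pp_1,\pp_2\in\sset$: it shows that cells $x[\tup i_1], x[\tup i_2]$ with $\tup i_1 \ne \tup i_2$ are defined by distinct random variables, hence are independent, so the per-cell resamplings are jointly independent and match the independent fresh values of $\tprocs{x}$.

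The hard part will be making the passage from the \emph{syntactic} analysis $\noleak$ to an \emph{actual} probability-preserving transformation of executions fully rigorous. Three points require care. First, the assignment case means the secret may propagate through chains $x \leftarrow z[\tup M]$, so the induction must identify the random variable ultimately carrying the secret and resample \emph{it}. Second, the eliminated collisions must be charged exactly once: Lemma~\ref{lem:yields_contrad} (with its algorithmic extension Lemma~\ref{lem:yields_contrad_algo}) is precisely what lets the union, over all occurrences and all $\pp\in\sset$, of the contradiction proofs be bounded by the single probability $p$ rather than a sum, and the resampling must be organized so that its failure set is contained in the bad set $\neg\yctran{\prove{\prop}(\sset)}$. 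Third, one must check that events executed both by $C$ and by $Q_0$ are genuinely irrelevant to the $\sevent/\sbarevent$ distinguishers, so that the ideal-game symmetry of the second step is sound. Once these are settled, the two displayed conclusions follow from the core bound as in the first step.
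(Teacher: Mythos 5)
Your proposal is correct and is essentially the paper's own proof: the paper likewise reduces both conclusions to the advantage bound $\Pr[C[Q_0]:\sevent]-\Pr[C[Q_0]:\sbarevent]\leq p(C)$, splits traces into those satisfying the semantic translation $\yctran{\prove{\prop}(\cdot)}$ of the syntactic criterion and the rest (whose probability the hypothesis bounds by $p$), and on the good traces establishes $\Pr[C[C_{\prop}[Q_0]]:\sevent\wedge\prop]=\Pr[C[C_{\prop}[Q_0]]:\sbarevent\wedge\prop]$ (Lemma~\ref{lem:sec}) by exactly the resampling/trace-matching argument you sketch, following the $\noleak$ recursion case by case (assignments, events, contradiction) and using $\prove{\distinct(x)}$ for full secrecy. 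The three difficulties you flag are precisely what the paper's proof of Lemma~\ref{lem:sec} works out in detail (tracking $\defrestr$ through assignment chains, charging eliminated collisions once via Lemmas~\ref{lem:yields_contrad} and~\ref{lem:yields_contrad_algo}, and observing that the distinguishers $\sevent$, $\sbarevent$ ignore other events), so your plan coincides with the paper's argument.
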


The proof of Proposition~\ref{prop:sec} relies on the following definitions and
lemma.
We have
\begin{align*}
&\yctran{\noleak(x[\tup{M}'], \sri, \fset)} = (\forall \sri, \neg\bigwedge \fset) \vee ((x \notin V) \wedge \bigwedge_{\pptag x[\tup{M}] \text{ in }Q_0}\\
&\qquad \text{-- if $\pptag x[\tup{M}]$ is in $M$ in an assignment $\bassign{y[\tup{i}]}{M}$, $M$ is built from replication indices,}\\[-1mm]
  &\qquad \text{variables, function applications, and conditionals, and
    the current call is not inside a}\\[-1mm]
  &\qquad \text{call to $\noleak(y, \_, \_)$, then}\\
&\qquad\quad \yctran{\noleak(y[\theta\tup{i}], \sri \cup \{ \theta\tup{i} \}, \fset \cup \theta \fset_{\pp} \cup \{ \theta\tup{M} = \tup{M}' \})}\\[-1mm]
&\qquad\quad \text{where $\theta$ is a renaming of $\tup{i}$ to fresh replication indices}\\
&\qquad \text{-- if $\pptag x[\tup{M}]$ is in $\kevent{e(M_1, \dots, M_{k-1}, C[\pptag x[\tup{M}]], M_{k+1}, \dots, M_m)}$ for $C$ defined in}\\[-1mm]
&\qquad \text{Figure~\ref{fig:eventcontexts}, then true}\\
&\qquad \text{-- otherwise}, \forall (\sri \cup\theta\replidx{\pp}), \neg \bigwedge (\fset \cup \theta \fset_{\pp} \cup \{ \theta\tup{M} = \tup{M}' \})\\[-1mm]
&\qquad\quad \text{where $\theta$ is a renaming of $\replidx{\pp}$ to fresh replication indices})
\end{align*}
$\yctran{\noleak(x[\tup{M}'], \sri, \fset)}$ is the logical formula that is guaranteed when
$\noleak(x[\tup{M}'], \sri, \fset)$ succeeds, up to a small 
probability computed by the equational prover and that bounds 
$\Prss{C[Q_0]}{\neg \yctran{\noleak(x[\tup{M}'], \sri, \fset)}}$.
It is obtained by collecting formulas guaranteed by each call to 
``$\fset$ yields a contradiction'': such a call guarantees 
$\forall \tup{z} \in \tup{T}'', \neg \bigwedge \fset$
up to a small probability that it evaluates, where
$\tup{z}$ are the non-process variables in $\fset$ and 
$\tup{T}''$ are their types.
In the definition of $\yctran{\noleak(x[\tup{M}'], \sri, \fset)}$,
the notation $\forall \sri$ means that all 
variables in $\sri$ are universally quantified
in their respective types. The notation $\forall (\sri \cup\theta\replidx{\pp})$ is similar.
We have similarly
{\allowdisplaybreaks\begin{align*}
\yctran{\prove{\secrone(x)}(\pp)} &=
\begin{cases}
  \yctran{\noleak(\theta\defrestr_{\pp}(x), \{ \theta\replidx{\pp} \}, \theta\fset_{\pp})}\\
  \phantom{\false \quad}\text{if $\defrestr_{\pp}(x)$ is defined},\\
  \phantom{\false \quad}\text{where $\theta$ is a renaming of $\replidx{\pp}$ to fresh replication indices}\\
  \false\quad\text{otherwise}
\end{cases}\\
\yctran{\prove{\secrone(x)}(\sset)} &= \bigwedge_{\pp \in \sset} \yctran{\prove{\secrone(x)}(\pp)}\\
  \yctran{\prove{\distinct(x)}(\pp_1, \pp_2)} &= (z_1 \neq z_2) \vee (\forall \tup{i}_1, \forall \tup{i}_2, \neg \bigwedge \theta_1\fset_{\pp_1} \cup \theta_2\fset_{\pp_2} \cup \{ \theta_1 \tup{M_1} = \theta_2 \tup{M_2},\ab \tup{i}_1 \neq \tup{i}_2\})\\
  \yctran{\prove{\secr(x)}(\sset)} &= \yctran{\prove{\secrone(x)}(\sset)} \wedge
  \bigwedge_{\pp_1, \pp_2 \in \sset} \yctran{\prove{\distinct(x)}(\pp_1, \pp_2)}\\
  \yctran{\prove{\secrbit(x)}(\sset)} &= \yctran{\prove{\secrone(x)}(\sset)}
\end{align*}}%

The only semantic rules that can add $x[\tup{a}]$ to the environment $E$ are~\eqref{sem:newt}, \eqref{sem:lett}, \eqref{sem:findt1}, \eqref{sem:new}, \eqref{sem:let}, \eqref{sem:find1}, and~\eqref{sem:output}.
($\GET$ is excluded by Property~\ref{prop:notables}.)
By Corollary~\ref{cor:evalpos}, the target term or process of these rules is a subterm or subprocess of $Q_0$ up to renaming of channels.
Hence, the target configuration $\conf$ of these rules is at some program point $\pp$ in $\trace$. 
In this case, we say that $x[\tup{a}]$ \emph{is defined just before} $\pp$ in a
trace $\trace$.
Furthermore, given $x[\tup{a}]$ and $\trace$, there is at most one program point $\pp$ such that $x[\tup{a}]$ is defined just before $\pp$ in $\trace$, by Lemma~\ref{lem:stronginv1exec}.

Let $\prop$ be $\secrone(x)$, $\secr(x)$, or $\secrbit(x)$.
Let $\trace$ be a trace of $C[C_{\prop}[Q_0]]$.
Let $E = E_{\trace}$.
We define the set $\Stestidx(\trace)$ of indices of successful test queries as follows:
\begin{itemize}
\item When $\prop$ is $\secrone(x)$:
  Let $\pp_t$ be the program point of the input that performs the test query in
  $\tproco{x}$: ${}^{\pp_t}\cinput{\cS}{\vf_1:[1, n_1], \ldots, \vf_m:[1, n_m]}$.
  If there is an~\eqref{sem:output} reduction in $\trace$ with $\pp' = \pp_t$,
  we define $E'$ to be the environment after that reduction.
  If $x[E(\vf_1), \dots, E(\vf_m)] \in \dom(E')$, we let $\Stestidx(\trace) = \{ \epsilon \}$ (the empty sequence of indices).
Otherwise, $\Stestidx(\trace) = \emptyset$.

\item When $\prop$ is $\secr(x)$:
  Let $\pp_t$ be the program point of the input that performs the test query in
  $\tprocs{x}$: ${}^{\pp_t}\cinput{\cS}{\vf_1:[1, n_1], \ldots, \vf_m:[1, n_m]}$.
Let $\Stestidx(\trace) = \{ a \in [1, \nS] \mid {}$
there is an~\eqref{sem:output} reduction in $\trace$ with $\pp' = \pp_t$, $\sigma'(\tup{i}) = a$, and $x[E(\vf_1[a]), \dots, E(\vf_m[a])] \in \dom(E')$ where $E'$ is the environment after that reduction, and for all \eqref{sem:output} reductions in $\trace$ before the latter reduction, with $\pp' = \pp_t$, $\sigma'(\tup{i}) = a'$, $E(\vf_1[a']) = E(\vf_1[a])$, \dots, and $E(\vf_m[a']) = E(\vf_m[a])$, we have $x[E(\vf_1[a]), \dots, E(\vf_m[a])] \notin \dom(E'')$ where $E''$ is the environment after that reduction$\}$.

The test query with index $a$ is the first successful test query for $x[E(\vf_1[a]), \dots, E(\vf_m[a])]$, so $x[E(\vf_1[a]), \dots, E(\vf_m[a])]$ is defined at that test query, that is, $x[E(\vf_1[a]), \dots, E(\vf_m[a])] \in \dom(E')$. For all previous test queries on the same indices, $x[E(\vf_1[a]), \dots, E(\vf_m[a])]$ was not defined, that is, $x[E(\vf_1[a]), \dots, E(\vf_m[a])] \notin \dom(E'')$. The bound $\nS$ and the variables $\vf_1$, \dots, $\vf_m$ come from $\tprocs{x}$.

\item When $\prop$ is $\secrbit(x)$:
Let $\pp_t$ be the program point of the input in
  $\tprocb{x}$: ${}^{\pp_t}\cinput{\cS''}{b':\bool}$.
  If there is an~\eqref{sem:output} reduction in $\trace$ with $\pp' = \pp_t$,
  we define $E'$ to be the environment after that reduction.
  If $x \in \dom(E')$, we let $\Stestidx(\trace) = \{ \epsilon \}$.
  Otherwise, $\Stestidx(\trace) = \emptyset$.
  
\end{itemize}
Let $\Stestpp(\trace) = \{ \pp \mid \exists a \in \Stestidx(\trace), x[E(\vf_1[a]), \dots, E(\vf_m[a])]$ is defined just before $\pp$ in $\trace \}$.
When $\prop$ is $\secrbit(x)$, $m = 0$, so this definition reduces to
$\Stestpp(\trace) = \emptyset$ if $\Stestidx(\trace) = \emptyset$
and $\Stestpp(\trace) = \{ \pp_{\epsilon} \}$ where $x$ is defined just before
$\pp_{\epsilon}$ in $\trace$ otherwise.
We write $\trace \vdash \prop$ when 
$\trace \vdash \yctran{\prove{\prop}(\Stestpp(\trace))}$.

\begin{lemma}\label{lem:sec}
Consider a process $Q_0$ that satisfies Properties~\ref{prop:notables} and~\ref{prop:autosarename}.
Let $\prop$ be $\secrone(x)$, $\secr(x)$, or $\secrbit(x)$.
Let $C$ be an evaluation context acceptable for $C_{\prop}[Q_0]$ with any public variables $V$ ($x \notin V$) that does not contain $\sevent$ nor $\sbarevent$.
We have
\[\Pr[C[C_{\prop}[Q_0]]: \sevent \wedge \prop] = \Pr[C[C_{\prop}[Q_0]]: \sbarevent \wedge \prop]\,.\]
\end{lemma}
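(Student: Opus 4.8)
The plan is to prove the identity by exhibiting a probability-preserving correspondence between the full traces of $C[C_{\prop}[Q_0]]$ that execute $\sevent$ and satisfy $\prop$ and those that execute $\sbarevent$ and satisfy $\prop$. Since $\Pr[C[C_{\prop}[Q_0]] : \sevent \wedge \prop]$ and $\Pr[C[C_{\prop}[Q_0]] : \sbarevent \wedge \prop]$ are, by definition, the sums of $\Pr[\trace]$ over these two sets of full traces, a measure-preserving correspondence between them yields the equality. The correspondence flips the selection bit $b$ chosen in $\tproc{\prop}$ (for $\secrone(x)$ and $\secr(x)$), respectively the boolean secret (for $\secrbit(x)$), while simultaneously rewriting the random values so that the adversary's view — every message it receives and every random choice it makes, in particular its guess $b'$ — is left unchanged; this is exactly what turns the test $b = b'$ (resp. $x = b'$) into its negation and hence swaps $\sevent$ with $\sbarevent$.

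I would first extract from $\trace \vdash \prop$, that is from $\trace \vdash \yctran{\prove{\prop}(\Stestpp(\trace))}$, the structural fact that the secret does not influence the control flow. For each successful test query (indices collected in $\Stestidx(\trace)$, defining program points in $\Stestpp(\trace)$), the guarantee $\yctran{\noleak(\defrestr_{\pp}(x),\ldots)}$, expanded according to Figure~\ref{fig:noleak}, states for every syntactic occurrence $\pptag x[\tup M]$ in $Q_0$ that is neither an assignment right-hand side nor an event argument that the facts collected at $\pp$ together with $\tup M = \tup M'$ yield a contradiction. By Lemma~\ref{lem:factsP} these facts hold along $\trace$, so no such occurrence ever reads the queried cell. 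Hence the value of the queried cell of $x$ reaches the adversary only through the test-query output, through events (whose arguments are ignored by the distinguishers $\sevent$, $\sbarevent$, and which $C$ is forbidden to contain, so their execution is unaffected), and through the assignment chain tracked by the recursive calls of $\noleak$, which lands only in non-leaking variables. In particular the secret never occurs in a channel, in a conditional guard, or in a $\defined$/$\kw{find}$ condition that steers execution, so flipping it leaves all indices, all branches taken, all program points reached, and therefore $\Stestidx(\trace)$ and $\Stestpp(\trace)$, unchanged.

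I would then define the map. Using Corollary~\ref{cor:evalpos} and Lemma~\ref{lem:stronginv1exec}, each queried cell $x[\tup a]$ is defined exactly once, just before a unique $\pp \in \Stestpp(\trace)$, by the random variable $\defrestr_{\pp}(x)$ (either $x[\tup i]$ itself or some $z[\tup M]$ with $z$ defined only by random choices), whose value lies in $T$ with distribution $D_T$; the $y$ returned by $\tproc{\prop}$ is also drawn in $T$ with distribution $D_T$. For $\secrbit(x)$ the map negates the value of $\defrestr_{\pp}(x)$, propagating the change through the non-leaking assignment chain and the affected events; this is an involution that swaps $\sevent$ with $\sbarevent$. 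For $\secrone(x)$ and $\secr(x)$ the map flips $b$ and, for each index in $\Stestidx(\trace)$, swaps the value produced by $\defrestr_{\pp}(x)$ with the value of the matching $y$: the test-query outputs are then identical before and after, so the view and hence $b'$ are preserved while $b = b'$ becomes $b \neq b'$. The condition $\yctran{\prove{\distinct(x)}}$ guarantees that distinct successful queries read cells defined by distinct, hence independent, random variables, so the per-cell swaps are independent and well defined.

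The delicate points, which I expect to be the main obstacle, are the preservation of $\prop$ and the exact probability bookkeeping. For preservation of $\prop$, since $\Stestpp(\trace)$ is invariant and the facts entering $\yctran{\prove{\prop}(\Stestpp(\trace))}$ constrain only replication indices, definedness, program points, and events — none of which change under the map, by the control-flow argument above — I would argue that the image trace again satisfies $\prop$. For probabilities, rather than insisting on a strict one-to-one map I would pair a trace with $b = \true$ (whose queried cells hold values contributing factors $D_T(v)$) with the family of traces with $b = \false$ in which the returned values $y$ equal those $v$ and the now-unused secret cells range freely over $T$; summing the free cells against $D_T$ contributes a factor $1$, so both sides carry the same probability even when $D_T$ is only approximately uniform. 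Verifying that this pairing is genuinely measure-preserving — in particular that the flipped traces are valid derivations of $C[C_{\prop}[Q_0]]$ and that the $\FIND\ \vfS'$ lookup in $\tprocs{x}$ still returns consistent values of $y$ for repeated indices — is where the bulk of the careful work lies.
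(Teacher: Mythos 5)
Your proposal is correct and follows essentially the same route as the paper's proof: you use the $\noleak$ guarantees extracted from $\trace \vdash \prop$ to show the secret never reaches the adversary's view or control flow, and you pair each $b = \true$ trace with the family of $b = \false$ traces in which $y$ carries the formerly returned values while the secret cells range freely over $T$ (with the $\prove{\distinct(x)}$ condition ensuring independence of the per-cell choices for $\secr(x)$, and a direct value-flip involution for $\secrbit(x)$), which is exactly the paper's matched-sets partition and probability accounting. The points you flag as delicate (validity of the modified derivations, consistency of the $\FIND\ \vfS'$ lookup on repeated queries, preservation of $\prop$ under the modification) are precisely the ones the paper discharges via its matching-configurations induction.
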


\begin{proofof}{the cases $\prop=\secrone(x)$ and $\prop=\secr(x)$}
Let $\trace$ be a full trace of $C[C_{\prop}[Q_0]]$ such that $\trace \vdash \prop$ and $b$ is defined in $\trace$, that is, $b \in \dom(E_{\trace})$.

Let $E = E_{\trace}$.
Let $\tup{i}$ be the current replication indices at the definition of $x$ in $Q_0$.
For $j\in \Stestidx(\trace)$, let $\tup{a}_j = E(u_1[j]), \dots, E(u_m[j])$, 
so that the test query at index $j$ tests $x[\tup{a}_j]$.
Let $\conf_j$ be the target configuration of the semantic rule that adds $x[\tup{a}_j]$ to $E$, and $\pp_j$ be such that $x[\tup{a}_j]$ is defined just before $\pp_j$ in $\trace$. So $\conf_j$ is at program point $\pp_j$ in $\trace$.
Let $z_j[\tup{M}_j] = \defrestr_{\pp_j}(x)$ (which is always defined since $\pp_j \in \Stestpp(\trace)$ and $\trace \vdash \prop$, so $\trace \vdash \yctran{\prove{\secrone(x)}(\pp_j)}$). 
Let $\tup{b}_j$ be such that $E, \{\tup{i} \mapsto \tup{a}_j\}, \tup{M}_j \evalterm \tup{b}_j$.
Then $x[\tup{a}]$ is added to the environment $E$ by~\eqref{sem:newt}, \eqref{sem:lett}, \eqref{sem:new}, or~\eqref{sem:let}, we have $E(x[\tup{a}_j]) = E(z_j[\tup{b}_j])$, and $z_j[\tup{b}_j]$ is chosen at random by~\eqref{sem:newt} or~\eqref{sem:new} in $\trace$ by definition of $\defrestr_{\pp_j}(x)$.
\bbnote{Formally, I should prove that using other semantic rules is not possible, by examining the derivation of the trace in these other cases.}%
Let us prove that, for all $j_1 \neq j_2$ in $\Stestidx(\trace)$, we have $z_{j_1} \neq z_{j_2}$ or $\tup{b}_{j_1} \neq \tup{b}_{j_2}$. 
\begin{itemize}
\item When $\prop$ is $\secrone(x)$, this is trivially true since $\Stestidx(\trace)$ contains at most one element.
\item When $\prop$ is $\secr(x)$, we have $\pp_{j_1}, \pp_{j_2} \in \Stestpp(\trace)$ and $\trace \vdash \secr(x)$, so we have
$\trace \vdash \yctran{\prove{\distinct(x)}(\pp_{j_1}, \pp_{j_2})}$, so $z_{j_1} \neq z_{j_2}$ or 
$\trace \vdash \forall \tup{i}_1, \forall \tup{i}_2, \neg \bigwedge \theta_1\fset_{\pp_{j_1}} \cup \theta_2\fset_{\pp_{j_2}} \cup \{ \theta_1 \tup{M}_{j_1} = \theta_2 \tup{M}_{j_2},\ab \tup{i}_1 \neq \tup{i}_2\}$
where $\theta_1$ and $\theta_2$ are two distinct renamings of $\tup{i}$ to fresh replication indices, 
$\tup{i}_1 = \theta_1\tup{i}$, and $\tup{i}_2 = \theta_2\tup{i}$.
In the latter case, 
let $\venv = \{ \tup{i}_1 \mapsto \tup{a}_{j_1}, \tup{i}_2 \mapsto \tup{a}_{j_2} \}$.
We have $\sigma_{\conf_{j_1}} = [\tup{i} \mapsto \tup{a}_{j_1}]$.
By Corollary~\ref{cor:factsP}, $\trace, \venv \vdash \theta_1\fset_{\pp_{j_1}}$.
Similarly, $\trace, \venv \vdash \theta_2\fset_{\pp_{j_2}}$.
Since $j_1 \neq j_2$, $\tup{a}_{j_1} \neq \tup{a}_{j_2}$. (By construction of $\Stestidx(\trace)$, we consider only the first successful test query for a certain $x[\tup{a}_j]$.\bbnote{More formally?}) So $\trace, \venv \vdash \tup{i}_1 \neq \tup{i}_2$. Therefore, $\trace, \venv \vdash \theta_1 \tup{M}_{j_1} \neq \theta_2 \tup{M}_{j_2}$,
so $\tup{b}_{j_1} \neq \tup{b}_{j_2}$.
\end{itemize}
For $j\in \Stestidx(\trace)$, let us choose elements $v_j$ in $T$, where $T$ is the type of $x$.
Let us consider the following two sets of traces:
\begin{enumerate}
\item\label{tr:btrue} $\trace$ modified by choosing $b = \true$ and $z_j[\tup{b_j}] = v_j$ for all $j\in \Stestidx(\trace)$. (The variable $b$ is chosen and used in $\tproc{\prop}$. Note that the variable $y$ of $\tproc{\prop}$ is not defined when $b = \true$. This set contains a single trace.)
\item\label{tr:bfalse} $\trace$ modified by choosing $b = \false$ and $y[j] = v_j$ and $z_j[\tup{b_j}] = v'_j$ for any $v'_j\in T$, for all $j\in \Stestidx(\trace)$. (This set contains $|T|^{|\Stestidx(\trace)|}$ traces.)
\end{enumerate}
The trace $\trace$ is one of the traces in these two sets: just choose the values of $b$, $z_j[\tup{b}_j]$, and $y[j]$ if $b$ is false that are used in $\trace$.

We show by induction on the derivation of these traces $\trace_s$ ($\trace_1$ is in set~\ref{tr:btrue} and
$\trace_2$ is in set~\ref{tr:bfalse})
that they have matching configurations $\conf'_s = E_s, \sigma, M_s, \tblcts, \evseq_s$, 
$\conf'_s = E_s, \pset_s, \cset$, or
$\conf'_s = E_s, (\sigma, P_s), \pset_s, \cset, \tblcts, \evseq_s$ for $s \in \{1,2\}$
that differ as follows: 
\begin{itemize}
\item $E_1(b) = \true$ while $E_2(b) = \false$.
\item $E_1(\vfS'[j])$ when $\prop$ is $\secr(x)$ and $E_1(y[j])$ are undefined even when $E_2(\vfS'[j])$ and $E_2(y[j])$ are defined for some indices $j \in \Stestidx(\trace)$.
\item $E_1(z[\tup{a}])$ differs from $E_2(z[\tup{a}])$ for some $z$ and $\tup{a}$ such that for some $\tup{M}$, $\sri$, $\fset$, we have $\trace \vdash \yctran{\noleak(z[\tup{M}], \sri, \fset)}$ and
$\trace \vdash \exists \sri, (\tup{M} = \tup{a})\wedge \bigwedge \fset$.
\item Values inside $M_1$ and $M_2$ may differ when $\conf'_s$ (for $s \in \{1,2\}$) occurs in the derivation of 
\begin{align*}
&E_s, \sigma, \assign{y[\tup{i}]}{M} M', \tblcts, \evseq_s \red{1}{}^* E_s[y[\sigma\tup{i}] \mapsto a_s], \sigma, M', \tblcts, \evseq_s\\
\text{or of }&E_s, (\sigma, \assign{y[\tup{i}]}{M} P), \pset_s, \cset,  \tblcts, \evseq_s \red{1}{}^* E_s[y[\sigma\tup{i}] \mapsto a_s], (\sigma, P), \pset_s, \cset,  \tblcts, \evseq_s
\end{align*}
where $M$ is built from replication indices, variables, function applications, and conditionals and for some $\tup{M}$, $\sri$, $\fset$, we have 
$\trace \vdash \yctran{\noleak(y[\tup{M}], \sri, \fset)}$ and
$\trace \vdash \exists \sri, (\tup{M} = \sigma\tup{i})\wedge \bigwedge \fset$.
\item Terms $M_1$ and $M_2$ may differ when $\conf'_s$ (for $s \in \{1,2\}$) occurs in the derivation of 
\begin{align*}
&E_s, \sigma, \kevent{e(\tup{M}_s)}; M', \tblcts, \evseq_s \red{1}{} E_s, \sigma, \kevent{e(\tup{M}'_s)}; M', \tblcts, \evseq_s\\
\text{or of }&E_s, (\sigma, \kevent{e(\tup{M}_s)}; P), \pset_s, \cset,  \tblcts, \evseq_s \red{1}{} E_s, (\sigma, \kevent{e(\tup{M}'_s)}; P), \pset_s, \cset,  \tblcts, \evseq_s
\end{align*}
where the terms $M_s$ match and the only rules above this reduction and under $\conf'_s$ are~\eqref{sem:ctxt} with matching simple contexts any number of times followed by~\eqref{sem:ctxt} with context $\kevent{e(a_{s,1}, \ldots, a_{s,k-1}, [\,], N_{k+1}, \ldots, N_l)}; N$ or~\eqref{sem:ctx} with context $\kevent{e(a_{s,1}, \ab\ldots, \ab a_{s,k-1}, \ab[\,], \ab N_{k+1}, \ab\ldots, \ab  N_l)};P$ once,
where 
\begin{itemize}
\item a \emph{simple context} is a context of the form  $x[a_1, \ab \ldots,\ab a_{k-1}, \ab [\,], \ab N_{k+1}, \ab\ldots, \ab N_m]$ or $f(a_1,\ab \ldots, \ab  a_{k-1},\ab [\,], \ab N_{k+1}, \ab \ldots, \ab N_m)$ for some $x$, $f$, $k$, $m$, $a_1$, \dots, $a_{k-1}$, $N_{k+1}$, \dots, $N_m$,
\item simple contexts $C_s$ (for $s \in \{1,2\}$) \emph{match} when $C_s = x[a_{s,1}, \ab  \ldots,\ab a_{s,k-1}, \ab [\,], \ab N_{k+1}, \ab\ldots, \ab N_m]$ or $C_s = f(a_{s,1}, \ab \ldots, \ab a_{s,k-1}, \ab [\,], \ab N_{k+1}, \ab \ldots, \ab N_m)$ for some $x$, $f$, $k$, $m$, $a_{s,1}$, \dots, $a_{s,k-1}$, $N_{k+1}$, \dots, $N_m$, and
\item  terms $M_s$ (for $s \in \{1,2\}$) \emph{match} when $M_s = a_{s,0}$, or $M_s = x[a_{s,1}, \ab  \ldots,\ab a_{s,k-1}, \ab M'_s, \ab N_{k+1}, \ab\ldots, \ab N_m]$ or $M_s = f(a_{s,1}, \ab \ldots, \ab a_{s,k-1}, \ab M'_s, \ab N_{k+1}, \ab \ldots, \ab N_m)$ for some $x$, $f$, $k$, $m$, $a_{s,0}$, $a_{s,1}$, \dots, $a_{s,k-1}$, $N_{k+1}$, \dots, $N_m$ and matching $M'_s$.
\end{itemize}
\item Terms $M_1$ and $M_2$ (resp. processes $P_1$ and $P_2$) may differ when 
\begin{align*}
&M_s = C_1[\dots C_k[\kevent{e(\tup{M}_s)}; M']\dots]\,, \\
&P_s = C_0[C_1[\dots C_k[\kevent{e(\tup{M}_s)}; M']\dots]]\,,\\
\text{or }&P_s = \kevent{e(\tup{M}_s)}; P
\end{align*} 
where
$k \in \mathbb{N}$,
$C_1$, \dots, $C_k$ are term contexts defined in Figure~\ref{fig:termcontexts}, 
$C_0$ is a process context defined in Figure~\ref{fig:proccontexts},
and the terms $\tup{M}_s$ match, for $s \in \{1,2\}$.

\item Arguments of events in $\evseq_1$ and $\evseq_2$ may differ.
\item The events $\sevent$ and $\sbarevent$ are swapped: when $\evseq_1$ contains $\sevent$, $\evseq_2$ contains $\sbarevent$, and conversely.
\item Some additional configurations corresponding to the execution $\tproc{\prop}$ differ.
\end{itemize}
The proof can be sketched as follows.
The different choice of $b$ leads to $E_1(b) = \true$ and $E_2(b) = \false$.
The only semantic rule that reads the environment is~\eqref{sem:var},
when it evaluates an occurrence of the variable in question. 
By Definition~\ref{def:secr}, $b \notin \fvar(Q_0) \cup V$, so the only occurrences of $b$ are in $\tproc{\prop}$.

If the adversary sends $\tup{a}$ on channel $\cS$ and
$x[\tup{a}]$ is not defined, then $\tproc{\prop}$ simply yields.
If the adversary sends $\tup{a}$ on channel $\cS$ and
$x[\tup{a}]$ is defined, then $\tup{a} = \tup{a}_j$ for some $j \in \Stestidx(\trace)$. In set~\ref{tr:btrue}, $E_1(b) = \true$, so $\tproc{\prop}$ outputs
$E_1(x[\tup{a}_j]) = E_1(z_j[\tup{b}_j]) = v_j$.
In set~\ref{tr:bfalse}, $E_2(b) = \false$, so when it is the first time that
the adversary sends $\tup{a}$ on channel $\cS$ and
$x[\tup{a}]$ is defined, $\tproc{\prop}$ chooses a fresh $y[j]$ equal to $v_j$,
and outputs
$E_2(y[j]) = v_j$; when the adversary sends again $\tup{a}$ on channel $\cS$,
$\tproc{\prop}$ finds $\vfS' = j$
(by construction of $\Stestidx(\trace)$), and outputs
$E_2(y[j]) = v_j$. So in both sets, $\tproc{\prop}$ outputs the same value.

If the adversary sends $b'$ to $\cS'$, then the result of the test $b' = b$
differs between set~\ref{tr:btrue} and set~\ref{tr:bfalse}, since $E_1(b) \neq E_2(b)$, so if set~\ref{tr:btrue} executes $\sevent$, then set~\ref{tr:bfalse} executes $\sbarevent$ and conversely. That is why the events $\sevent$ and $\sbarevent$ are swapped.

By Definition~\ref{def:secr}, $\vfS', y \notin \fvar(Q_0) \cup V$, so the only occurrences of $\vfS'$ and $y$ are in $\tproc{\prop}$. Therefore, the changes that come from differences in the definition of $\vfS'$ and $y$ are already taken into account above.

The value of $E_1(z_j[\tup{b_j}])$ is different from the one of $E_2(z_j[\tup{b_j}])$. Since $\trace \vdash \prop$, we have $\trace \vdash \yctran{\prove{\secrone(x)}(\pp_j)}$,
so $\trace \vdash \yctran{\noleak(\theta z_j[\tup{M}_j], \{ \theta \replidx{\pp_j} \}, \theta \fset_{\pp_j})}$
where $\theta$ is a renaming of $\replidx{\pp_j}$ to fresh replication indices.
We have $\sigma_{\conf_j} = [\replidx{\pp_j} \mapsto \tup{a}_j]$.
Let $\venv = \{ \theta\replidx{\pp_j} \mapsto \tup{a}_j \}$.
By Corollary~\ref{cor:factsP}, $\trace, \venv \vdash \theta \fset_{\pp_j}$.
Moreover, $\trace, \venv \vdash \theta \tup{M}_j = \tup{b}_j$.
So $\trace \vdash \exists \theta \replidx{\pp_j}, (\theta \tup{M}_j = \tup{b}_j) \wedge \bigwedge\theta \fset_{\pp_j}$. Hence, for $\tup{M} = \theta \tup{M}_j$, $\sri = \{ \theta \replidx{\pp_j} \}$, and $\fset = \theta \fset_{\pp_j}$, we have $\trace \vdash \yctran{\noleak(z_j[\tup{M}], \sri, \fset)}$ and $\trace \vdash \exists \sri, (\tup{M} = \tup{b}_j) \wedge \bigwedge\fset$.

The difference between $E_1(z[\tup{a}])$ and $E_2(z[\tup{a}])$ for $z$ and $\tup{a}$ such that for some $\tup{M}$, $\sri$, $\fset$, we have $\trace \vdash \yctran{\noleak(z[\tup{M}], \sri, \fset)}$ and
$\trace \vdash \exists \sri, (\tup{M} = \tup{a})\wedge \bigwedge \fset$ has consequences when~\eqref{sem:var} evaluates $z[\tup{a}]$:
\[E_s, \sigma, \pptag z[\tup{a}], \tblcts, \evseq_s \red{1}{} E_s, \sigma, E_s(z[\tup{a}]), \tblcts, \evseq_s\,.\]
By Lemma~\ref{lem:start_from_pp}, Property~\ref{start_from_pp5} applied to the configuration $\conf = E_s, \sigma, \pptag z[\tup{a}], \tblcts, \evseq_s$ with $l = 0$,
we have
\[E'_s, \sigma, \pptag z[\tup{M'}], \tblcts', \evseq'_s
\red{1}{}^* E_s, \sigma, \pptag z[\tup{a}], \tblcts, \evseq_s\]
by any number of applications of~\eqref{sem:ctxt}, where
$\pptag z[\tup{M}']$ is a subterm of $C[C_{\prop}[Q_0]]$.
Furthermore, if the evaluation of $\tup{M}'$ itself uses~\eqref{sem:var} that
evaluates a $z[\tup{a}]$ that differs, we replace $z[\tup{M}']$ by the smallest subterm of $\tup{M}'$ that evaluates a $z[\tup{a}]$ that differs. By this replacement, we guarantee that the evaluation of $\tup{M'}$ proceeds in the same way in set~\ref{tr:btrue} and set~\ref{tr:bfalse} and yields the same $\tup{a}$.
Recall that $\tup{M}'$ are simple terms by Invariants~\ref{inv2} and~\ref{invtic}, so the evaluation of
$\tup{M'}$ does not change $E_s$, $\tblcts$, $\evseq_s$.
So we have
\[E_s, \sigma, \pptag z[\tup{M'}], \tblcts, \evseq_s
\red{1}{}^* E_s, \sigma, \pptag z[\tup{a}], \tblcts, \evseq_s
\red{1}{} E_s, \sigma, E_s(z[\tup{a}]), \tblcts, \evseq_s\]
for $s \in \{1,2\}$, where $\pptag z[\tup{M}']$ is a subterm of $C[C_{\prop}[Q_0]]$,
by any number of applications of~\eqref{sem:ctxt} followed by one application of~\eqref{sem:var}.
  Let $\theta$ be a renaming of $\replidx{\pp}$ to fresh replication indices and $\venv = \{ \theta\replidx{\pp} \mapsto \sigma\replidx{\pp} \}$.
  By Corollary~\ref{cor:factsP}, $\trace_s, \venv \vdash \theta\fset_{\pp}$.
  Moreover, $E_s, \sigma, \tup{M}' \evalterm \tup{a}$,
  so $E_s, \venv, \theta\tup{M'} \evalterm \tup{a}$.
Since $E_{\trace_s}$ extends $E_s$, we have $\trace_s, \venv \vdash \theta\tup{M}' = \tup{a}$. Since $\trace$ is among the traces $\trace_s$, we have
  $\trace, \venv \vdash \theta\fset_{\pp}$
  and $\trace, \venv \vdash \theta\tup{M}' = \tup{a}$.
  There exists $\venv'$ with domain $\sri$ such that
  $\trace, \venv'\vdash (\tup{M} = \tup{a})\wedge \bigwedge \fset$.
  So $\trace, \venv \cup \venv'\vdash \bigwedge (\fset
  \cup \theta\fset_{\pp} \cup \{ \theta\tup{M}' = \tup{a}, \tup{M} = \tup{a} \})$.
Since $\trace \vdash \yctran{\noleak(z[\tup{M}], \sri, \fset)}$ and $\trace \vdash \exists \sri, (\tup{M} = \tup{a})\wedge \bigwedge \fset$, we have $\trace \vdash \neg (\forall \sri, \neg \bigwedge \fset)$, so $\trace$ satisfies the second disjunct of $\yctran{\noleak(z[\tup{M}], \sri, \fset)}$. Therefore, $z \notin V$, so the occurrence of $\pptag z[\tup{M}']$ evaluated above is either in $\tproc{\prop}$, and in this case $z$ is actually $x$ and this case has already been studied above, or in $Q_0$.
In the latter situation, we are in one of the following three cases:
\begin{itemize}
\item $\pptag z[\tup{M}']$ is in $M$ in an assignment ${}^{\pp'} \bassign{y[\tup{i}]}{M}$ in $Q_0$, $M$ is built from replication indices, variables, function applications, and conditionals and $\trace \vdash \yctran{\noleak(y[\theta\tup{i}], \sri \cup \{ \theta\tup{i} \}, \fset \cup \theta \fset_{\pp} \cup \{ \theta\tup{M}' = \tup{M} \})}$ where $\theta$ is a renaming of $\tup{i}$ to fresh replication indices.
  Let $\tup{M}'' = \theta\tup{i}$, $\sri' =  \sri \cup \{ \theta\tup{i} \}$, and $\fset' = \fset \cup \theta \fset_{\pp} \cup \{ \theta\tup{M}' = \tup{M} \}$.
  Since $\pp'$ is above $\pp$, by Lemma~\ref{lem:above_before}, there is a configuration inside $\pp'$ before $\conf$ in $\trace$. By Lemma~\ref{lem:start_from_pp} applied to that configuration (Property~\ref{start_from_pp1} when the assignment is a process, Property~\ref{start_from_pp5} with $l = 0$ when it is a term),
  the assignment ${}^{\pp'} \bassign{y[\tup{i}]}{M}$ is evaluated by
\begin{align*}
&E_s, \sigma, {}^{\pp'} \assign{y[\tup{i}]}{M} M', \tblcts, \evseq_s \red{1}{}^* E_s[y[\sigma\tup{i}] \mapsto a_s], \sigma, M', \tblcts, \evseq_s\\
\text{or }&E_s, (\sigma, {}^{\pp'} \assign{y[\tup{i}]}{M} P), \pset_s, \cset,  \tblcts, \evseq_s \red{1}{}^* E_s[y[\sigma\tup{i}] \mapsto a_s], (\sigma, P), \pset_s, \cset,  \tblcts, \evseq_s
\end{align*}
for $s \in \{1, 2\}$. We have $\trace \vdash \yctran{\noleak(y[\tup{M''}], \sri', \fset')}$. Moreover $\tup{i} = \replidx{\pp}$ and $\trace, \venv \cup \venv'\vdash \theta\tup{i} = \sigma\tup{i}$ by definition of $\venv$. Hence $\trace, \venv \cup \venv'\vdash \tup{M}'' = \sigma\tup{i}$ and $\trace, \venv \cup \venv'\vdash \fset'$, so
$\trace\vdash \exists\sri', (\tup{M}'' = \sigma\tup{i}) \wedge \bigwedge \fset'$.
Hence we are in a case in which different values inside terms  $M_1$ and $M_2$ are allowed.
Furthermore, the added values $E_s(y[\sigma\tup{i}]) = a_s$ may differ. Let $\tup{a}' = \sigma\tup{i}$.
We have $\trace \vdash \yctran{\noleak(y[\tup{M''}], \sri', \fset')}$ and $\trace\vdash \exists\sri', (\tup{M}'' = \tup{a}') \wedge \bigwedge \fset'$,
so $E_1(y[\tup{a}'])$ is indeed allowed to differ from $E_2(y[\tup{a}'])$.

\item $\pptag z[\tup{M}']$ is in ${}^{\pp'} \kevent{e(M_1, \dots, M_{k-1}, C[\pptag z[\tup{M}']], M_{k+1}, \dots, M_m)}$ in $Q_0$, for $C$ defined in Figure~\ref{fig:eventcontexts}.
Since $\pp'$ is above $\pp$, by Lemma~\ref{lem:above_before}, there is a configuration inside $\pp'$ before $\conf$ in $\trace$. By Lemma~\ref{lem:start_from_pp} applied to that configuration (Property~\ref{start_from_pp1} when the event is a process, Property~\ref{start_from_pp5} with $l = 0$ when it is a term), the evaluation of the event starts from a configuration at $\pp'$ in $\trace$. The evaluation of $\kevent{e(M_1, \dots, M_{k-1}, C[\pptag z[\tup{M}']], M_{k+1}, \dots, M_m)}$ first evaluates $M_1$, \dots, $M_{k-1}$ to values using~\eqref{sem:ctxt} or~\eqref{sem:ctx} with an event context. (If they evaluated to abort event values, $C[\pptag z[\tup{M}']]$ would not be evaluated.) Then if evaluates the context $C$: $\Res{y[\tup{i}]}{T}; C$ is evaluated by~\eqref{sem:newt},
$\assign{y[\tup{i}]}{M}{C}$ is evaluated by~\eqref{sem:lett}, $\bguard{M}{C}{N'}$ is evaluated by~\eqref{sem:ift1} ($M$ must evaluate to $\true$ because otherwise, $\pptag z[\tup{M}']$ would not be evaluated), $\bguard{M}{N}{C}$ is evaluated by~\eqref{sem:ift2} ($M$ must not evaluate to $\true$ because otherwise, $\pptag z[\tup{M}']$ would not be evaluated), $\kevent{e(\tup{M})}; C$ is evaluated by~\eqref{sem:eventt}, and $\FIND$ contexts are evaluated by rules for $\FIND$, until we reach
\[\kevent{e(a_{s,1}, \dots, a_{s,k-1}, C_{s,1}[\dots C_{s,l}[\pptag z[\tup{M}']]\dots], M_{k+1}, \dots, M_m)}\]
for $s \in \{1, 2\}$, where $C_{s,1}$, \dots, $C_{s,l}$ are matching simple contexts. (Values may differ in case $M_1$, \dots, $M_{k-1}$, or terms in $C$ contain other occurrences of variables whose value differs.)
At this point, the reduction proceeds as follows:
\begin{align*}
&E_s, \sigma, \kevent{e(\tup{M}_s)}; M', \tblcts, \evseq_s \red{1}{} E_s, \sigma, \kevent{e(\tup{M}'_s)}; M', \tblcts, \evseq_s\\
\text{or }&E_s, (\sigma, \kevent{e(\tup{M}_s)}; P), \pset_s, \cset,  \tblcts, \evseq_s \red{1}{} E_s, (\sigma, \kevent{e(\tup{M}'_s)}; P), \pset_s, \cset,  \tblcts, \evseq_s
\end{align*}
by~\eqref{sem:var}, \eqref{sem:ctxt} with matching simple contexts any number of times followed by~\eqref{sem:ctxt} or~\eqref{sem:ctx} with context 
$\kevent{e(a_{s,1}, \ldots, a_{s,k-1}, [\,], M_{k+1}, \ldots, M_m)}; \dots$ once,
where 
\begin{align*}
&\tup{M}_s = a_{s,1}, \ldots, a_{s,k-1}, C_{s,1}[\dots C_{s,l}[\pptag z[\tup{M}']]\dots], M_{k+1}, \dots, M_m\\
\text{and }&\tup{M}'_s = a_{s,1}, \ldots, a_{s,k-1}, C_{s,1}[\dots C_{s,l}[a_s]\dots], M_{k+1}, \dots, M_m
\end{align*}
for $s \in \{1,2\}$. Further reductions still manipulate configurations of the same form until the event itself
is executed by~\eqref{sem:eventt} or~\eqref{sem:event}, which adds the event $e$ with possibly different arguments to $\evseq_s$.

\item $\trace \vdash \forall (\sri \cup\theta\replidx{\pp}), \neg \bigwedge (\fset \cup \theta \fset_{\pp} \cup \{ \theta\tup{M}' = \tup{M} \})$ where $\theta$ is a renaming of $\replidx{\pp}$ to fresh replication indices.
  We have $\trace, \venv \cup \venv'\vdash \neg \bigwedge (\fset \cup \theta \fset_{\pp} \cup \{ \theta\tup{M}' = \tup{M} \})$.
  That yields a contradiction, so this case does not happen.
  
\end{itemize}
The sequence of events $\evseq_1$ (resp. $\evseq_2$) is never read by the semantic rules. It is only read by the distinguisher. Therefore, changes in this sequence of events do not modify the rest of the trace.

That concludes the proof that traces in set~\ref{tr:btrue} and set~\ref{tr:bfalse} match.

Furthermore, the trace in set~\ref{tr:btrue} and the traces in set~\ref{tr:bfalse}
have the same probability. All full traces of $C[C_{\prop}[Q_0]]$
that define $b$ and that satisfy $\prop$ belong to set~\ref{tr:btrue} or to set~\ref{tr:bfalse} for some $\trace$ (for instance using the trace in question as $\trace$).
Therefore, these sets form a partition of the full traces of $C[C_{\prop}[Q_0]]$ that define $b$ and that satisfy $\prop$, and the sets that execute $\sevent$ have the same probability as the sets that execute $\sbarevent$.
Moreover, the traces of $C[C_{\prop}[Q_0]]$ that do not define $b$ execute neither $\sevent$ nor $\sbarevent$.
So $\Pr[C[C_{\prop}[Q_0]]: \sevent \wedge \prop] = \Pr[C[C_{\prop}[Q_0]]: \sbarevent \wedge \prop]$.
  \proofcomplete
\end{proofof}

\begin{proofof}{the case $\prop=\secrbit(x)$}
Let $\trace$ be a full trace of $C[C_{\prop}[Q_0]]$ such that $\trace \vdash \prop$.
Let $E = E_{\trace}$.

If $\Stestidx(\trace) = \emptyset$, then $\trace$ executes neither $\sevent$ nor $\sbarevent$.

Otherwise, $\Stestidx(\trace) = \{\epsilon\}$ and $x$ is defined in $\trace$, that is, $x \in \dom(E)$.
Let $\conf_\epsilon$ be the target configuration of the semantic rule that adds $x$ to $E$, and $\pp_\epsilon$ be such that $x$ is defined just before $\pp_\epsilon$ in $\trace$. So $\conf_\epsilon$ is at program point $\pp_\epsilon$ in $\trace$.
Let $z_\epsilon[\tup{M}_\epsilon] = \defrestr_{\pp_\epsilon}(x)$ (which is always defined since $\pp_\epsilon \in \Stestpp(\trace)$ and $\trace \vdash \prop$, so $\trace \vdash \yctran{\prove{\secrone(x)}(\pp_\epsilon)}$). 
Let $\tup{b}_\epsilon$ be such that $E, \emptyset, \tup{M}_\epsilon \evalterm \tup{b}_\epsilon$.
Then $x$ is added to the environment $E$ by~\eqref{sem:newt}, \eqref{sem:lett}, \eqref{sem:new}, or~\eqref{sem:let}, we have $E(x) = E(z_\epsilon[\tup{b}_\epsilon])$, and $z_\epsilon[\tup{b}_\epsilon]$ is chosen at random by~\eqref{sem:newt} or~\eqref{sem:new} in $\trace$ by definition of $\defrestr_{\pp_\epsilon}(x)$.
\bbnote{Formally, I should prove that using other semantic rules is not possible, by examining the derivation of the trace in these other cases.}%

Let us consider the following two traces:
\begin{enumerate}
\item\label{tr:btrue_bit} $\trace_1$ is $\trace$ modified by choosing $z_\epsilon[\tup{b_\epsilon}] = \true$.
\item\label{tr:bfalse_bit} $\trace_2$ is $\trace$ modified by choosing $z_\epsilon[\tup{b_\epsilon}] = \false$.
\end{enumerate}
The trace $\trace$ is one of these two traces: just choose the value $z_\epsilon[\tup{b}_\epsilon]$ that is used in $\trace$.

We show by induction on the derivation of these traces $\trace_s$
that they have matching configurations $\conf'_s = E_s, \sigma, M_s, \tblcts, \evseq_s$, 
$\conf'_s = E_s, \pset_s, \cset$, or
$\conf'_s = E_s, (\sigma, P_s), \pset_s, \cset, \tblcts, \evseq_s$ for $s \in \{1,2\}$
that differ as follows: 
\begin{itemize}
\item $E_1(z[\tup{a}])$ differs from $E_2(z[\tup{a}])$ for some $z$ and $\tup{a}$ such that for some $\tup{M}$, $\sri$, $\fset$, we have $\trace \vdash \yctran{\noleak(z[\tup{M}], \sri, \fset)}$ and
$\trace \vdash \exists \sri, (\tup{M} = \tup{a})\wedge \bigwedge \fset$.
\item Values inside $M_1$ and $M_2$ may differ when $\conf'_s$ (for $s \in \{1,2\}$) occurs in the derivation of 
\begin{align*}
&E_s, \sigma, \assign{y[\tup{i}]}{M} M', \tblcts, \evseq_s \red{1}{}^* E_s[y[\sigma\tup{i}] \mapsto a_s], \sigma, M', \tblcts, \evseq_s\\
\text{or of }&E_s, (\sigma, \assign{y[\tup{i}]}{M} P), \pset_s, \cset,  \tblcts, \evseq_s \red{1}{}^* E_s[y[\sigma\tup{i}] \mapsto a_s], (\sigma, P), \pset_s, \cset,  \tblcts, \evseq_s
\end{align*}
where $M$ is built from replication indices, variables, function applications, and conditionals and for some $\tup{M}$, $\sri$, $\fset$, we have 
$\trace \vdash \yctran{\noleak(y[\tup{M}], \sri, \fset)}$ and
$\trace \vdash \exists \sri, (\tup{M} = \sigma\tup{i})\wedge \bigwedge \fset$.
\item Terms $M_1$ and $M_2$ may differ when $\conf'_s$ (for $s \in \{1,2\}$) occurs in the derivation of 
\begin{align*}
&E_s, \sigma, \kevent{e(\tup{M}_s)}; M', \tblcts, \evseq_s \red{1}{} E_s, \sigma, \kevent{e(\tup{M}'_s)}; M', \tblcts, \evseq_s\\
\text{or of }&E_s, (\sigma, \kevent{e(\tup{M}_s)}; P), \pset_s, \cset,  \tblcts, \evseq_s \red{1}{} E_s, (\sigma, \kevent{e(\tup{M}'_s)}; P), \pset_s, \cset,  \tblcts, \evseq_s
\end{align*}
where the terms $M_s$ match and the only rules above this reduction and under $\conf'_s$ are~\eqref{sem:ctxt} with matching simple contexts any number of times followed by~\eqref{sem:ctxt} with context $\kevent{e(a_{s,1}, \ldots, a_{s,k-1}, [\,], N_{k+1}, \ldots, N_l)}; N$ or~\eqref{sem:ctx} with context $\kevent{e(a_{s,1}, \ab\ldots, \ab a_{s,k-1}, \ab[\,], \ab N_{k+1}, \ab\ldots, \ab  N_l)};P$ once,
where simple contexts and matching are defined as in the cases $\prop = \secrone(x)$ and $\prop = \secr(x)$.
\item Terms $M_1$ and $M_2$ (resp. processes $P_1$ and $P_2$) may differ when 
\begin{align*}
&M_s = C_1[\dots C_k[\kevent{e(\tup{M}_s)}; M']\dots]\,, \\
&P_s = C_0[C_1[\dots C_k[\kevent{e(\tup{M}_s)}; M']\dots]]\,,\\
\text{or }&P_s = \kevent{e(\tup{M}_s)}; P
\end{align*} 
where
$k \in \mathbb{N}$,
$C_1$, \dots, $C_k$ are term contexts defined in Figure~\ref{fig:termcontexts}, 
$C_0$ is a process context defined in Figure~\ref{fig:proccontexts},
and the terms $\tup{M}_s$ match, for $s \in \{1,2\}$.

\item Arguments of events in $\evseq_1$ and $\evseq_2$ may differ.
\item The events $\sevent$ and $\sbarevent$ are swapped: when $\evseq_1$ contains $\sevent$, $\evseq_2$ contains $\sbarevent$, and conversely.
\item Some additional configurations corresponding to the execution $\tproc{\prop}$ differ.
\end{itemize}
The proof can be sketched as follows.

If the adversary sends $b'$ to $\cS''$, then $x$ is defined (since $\Stestidx(\trace) = \{\epsilon\}$) and the result of the test $x = b'$
differs between $\trace_1$ and $\trace_2$, since $E_1(x) = E_1(z_\epsilon[\tup{b_\epsilon}]) = \true \neq E_2(x) = E_2(z_\epsilon[\tup{b_\epsilon}]) = \false$, so if $\trace_1$ executes $\sevent$, then $\trace_2$ executes $\sbarevent$ and conversely. That is why the events $\sevent$ and $\sbarevent$ are swapped.

The value of $E_1(z_\epsilon[\tup{b_\epsilon}])$ is different from the one of $E_2(z_\epsilon[\tup{b_\epsilon}])$. Since $\trace \vdash \prop$, we have $\trace \vdash \yctran{\prove{\secrone(x)}(\pp_\epsilon)}$,
so $\trace \vdash \yctran{\noleak(\theta z_\epsilon[\tup{M}_\epsilon], \{ \theta \replidx{\pp_\epsilon} \}, \theta \fset_{\pp_\epsilon})}$
where $\theta$ is a renaming of $\replidx{\pp_\epsilon}$ to fresh replication indices.
Here, $\replidx{\pp_\epsilon}$ is empty since $x$ is defined under no replication.
We have $\sigma_{\conf_\epsilon} = []$.
Let $\venv = \emptyset$.
By Corollary~\ref{cor:factsP}, $\trace, \venv \vdash \theta \fset_{\pp_\epsilon}$.
Moreover, $\trace, \venv \vdash \theta \tup{M}_\epsilon = \tup{b}_\epsilon$.
So $\trace \vdash \exists \theta \replidx{\pp_\epsilon}, (\theta \tup{M}_\epsilon = \tup{b}_\epsilon) \wedge \bigwedge\theta \fset_{\pp_\epsilon}$. Hence, for $\tup{M} = \theta \tup{M}_\epsilon$, $\sri = \{ \theta \replidx{\pp_\epsilon} \}$, and $\fset = \theta \fset_{\pp_\epsilon}$, we have $\trace \vdash \yctran{\noleak(z_\epsilon[\tup{M}], \sri, \fset)}$ and $\trace \vdash \exists \sri, (\tup{M} = \tup{b}_\epsilon) \wedge \bigwedge\fset$.

The difference between $E_1(z[\tup{a}])$ and $E_2(z[\tup{a}])$ for $z$ and $\tup{a}$ such that for some $\tup{M}$, $\sri$, $\fset$, we have $\trace \vdash \yctran{\noleak(z[\tup{M}], \sri, \fset)}$ and
$\trace \vdash \exists \sri, (\tup{M} = \tup{a})\wedge \bigwedge \fset$ has consequences when~\eqref{sem:var} evaluates $z[\tup{a}]$, as in the cases $\prop = \secrone(x)$ and $\prop = \secr(x)$.
That concludes the proof that traces $\trace_1$ and $\trace_2$ match.

Furthermore, the traces $\trace_1$ and $\trace_2$
have the same probability. All full traces of $C[C_{\prop}[Q_0]]$
such that $\Stestidx$ is non-empty and that satisfy $\prop$ are $\trace_1$ or $\trace_2$ for some $\trace$ (for instance using the trace in question as $\trace$).
Therefore, $\trace_1$ and $\trace_2$ form a partition of the full traces of $C[C_{\prop}[Q_0]]$ such that $\Stestidx$ is non-empty and that satisfy $\prop$, and half of these traces execute $\sevent$, the other half execute $\sbarevent$.
Moreover, the traces of $C[C_{\prop}[Q_0]]$ such that $\Stestidx$ is empty execute neither $\sevent$ nor $\sbarevent$.
So $\Pr[C[C_{\prop}[Q_0]]: \sevent \wedge \prop] = \Pr[C[C_{\prop}[Q_0]]: \sbarevent \wedge \prop]$.
  \proofcomplete
\end{proofof}

\begin{proofof}{Proposition~\ref{prop:sec}}
  Let $\prop$ be $\secrone(x)$, $\secr(x)$, or $\secrbit(x)$.
Let $C$ be an evaluation context acceptable for $C_{\prop}[Q_0]$ with public variables $V$ ($x \notin V$) that does not contain $\sevent$ nor $\sbarevent$.
  We have
  \begin{align*}
  \Advt_{Q_0}^{\prop}(C)
  & = \Pr[C[C_{\prop}[Q_0]]: \sevent] - \Pr[C[C_{\prop}[Q_0]]: \sbarevent]\\
  & = \Pr[C[C_{\prop}[Q_0]]: \sevent \wedge \prop] +
  \Pr[C[C_{\prop}[Q_0]]: \sevent \wedge \neg\prop]\\
&\qquad\!\! - \Pr[C[C_{\prop}[Q_0]]: \sbarevent \wedge \prop]
  - \Pr[C[C_{\prop}[Q_0]]: \sbarevent \wedge \neg\prop]\\
  & = \Pr[C[C_{\prop}[Q_0]]: \sevent \wedge \neg\prop] -
  \Pr[C[C_{\prop}[Q_0]]: \sbarevent \wedge \neg\prop]
  \tag*{by Lemma~\ref{lem:sec}}\\
  & \leq \Pr[C[C_{\prop}[Q_0]]: \neg \prop]\\
  & \leq \Pr[C[C_{\prop}[Q_0]]: \neg \yctran{\prove{\prop}(\{ \pp \mid \pp\text{ follows a definition of }x\})}]\tag*{because, for all $\trace$, $\Stestpp(\trace) \subseteq \{ \pp \mid \pp\text{ follows a definition of }x\}$}\\
  & \leq \Prss{C[C_{\prop}[Q_0]]}{\neg \yctran{\prove{\prop}(\{ \pp \mid \pp\text{ follows a definition of }x\})}}
  \tag*{by Lemma~\ref{lem:Pr-Prss}}\\
  & \leq p(C[C_{\prop}[\,]]) = p'(C)
  \end{align*}
  So $Q_0$ satisfies $\prop$ with public variables $V$ up to probability $p'$.
  Moreover,
  \begin{align*}
    \Advtev{Q_0}{\prop}{C[C_{\prop}[\,]]}{\Dfalse}
    & = \Pr[C[C_{\prop}[Q_0]]: \sevent] - \Pr[C[C_{\prop}[Q_0]]: \sbarevent \vee \nonunique{Q_0}]\\
    &\leq \Advt_{Q_0}^{\prop}(C) \leq p(C[C_{\prop}[\,]])
  \end{align*}
  so $\bound{Q_0}{V \cup \{x\}}{\prop}{\Dfalse}{p}$.
  \proofcomplete
\end{proofof}

\bb{The local dependency analysis is disabled because it gives information
valid only at a certain process occurrence, and here we combine facts
obtained at two occurrences $\pp_1$ and $\pp_2$.}%

\begin{example}\label{exa:running:final}
  Using assumptions on cryptographic primitives,
  the process $Q_0$ of Example~\ref{exa:running} can be transformed
  into the following process $Q''_0$:
{\allowdisplaybreaks\begin{align*}
&Q_0'' = \cinput{\startch}{};
\Res{\vn{k}}{T_k}; \Res{\vn{mk}}{T_{mk}}; \coutput{c}{};(Q''_A \parpop Q''_B)\\
&Q''_A = \repl{i}{n}\cinput{c_A[i]}{};\Res{\vn{k}'}{T_k};\Res{\vn{r}}{T_r};\\*
&\quad \assign{\vn{m}:\bitstring}{\enc'(\Z_k,k,\vn{r}')}\\*
&\quad \coutput{c_A[i]}{\vn{m}, \mac'(\vn{m},\vn{mk})}\\
&Q''_B = \repl{i'}{n}\cinput{c_B[i']}{\vn{m}',\vn{ma}};\\
&\quad \FIND \ {\vf \leq n}\ \SUCHTHAT \ \defined (\vn{m}[\vf], \vn{k}'[\vf])\fand{}\\*
&\qquad  \vn{m}' = \vn{m}[\vf] \fand \mverify'(\vn{m}', \vn{mk}, \vn{ma}) \THEN \\
&\quad \assign{\vn{k}'' : T_k}{\vn{k}'[\vf]}\coutput{c_B[i']}{}
\end{align*}}%
and $Q_0 \approx^{\vn{k}''}_{p} Q''_0$.
In order to prove the one-session secrecy of $\vn{k}''$, we notice
that $\vn{k}''$ is defined by $\bassign{\vn{k}'' : T_k}{\vn{k}'[\vf]}$,
the only variable access to $\vn{k}'$ in $Q''_0$ is 
$\bassign{\vn{k}'' : T_k}{\vn{k}'[\vf]}$, and $\vn{k}''$ is not used in
$Q''_0$. So by Proposition~\ref{prop:sec}, $Q''_0$ satisfies the one-session secrecy of $\vn{k}''$ without public variables up to probability 0.
(We have $\defrestr_{\pp}(\vn{k}'') = \vn{k}'[\vf]$
and
$\yctran{\noleak(\vn{k}'[\vf], \sri, \fset)} = (\forall\sri, \neg\bigwedge \fset) \vee ((\vn{k}' \notin V) \wedge 
\yctran{\noleak(\vn{k}''[\theta\tup{i}], \sri', \fset')}) = \true$
since
$\vn{k}' \notin V$, $\vn{k}'' \notin V$, and
$\yctran{\noleak(\vn{k}''[\theta\tup{i}], \sri', \fset')} = (\forall\sri', \ab \neg\bigwedge \fset') \vee ((\vn{k}'' \notin V) \wedge \true) = \true$.
So $\Prss{C[Q''_0]}{\neg \yctran{\prove{\secrone(\vn{k}'')}(\sset)}} = 0$.)
By Lemma~\ref{lem:transfersec}, the process $Q_0$ of Example~\ref{exa:running}
also satisfies the one-session secrecy of $\vn{k}''$ without public variables
up to probability $p'(C) = 2p(C[C_{\secrone(\vn{k}'')}[\,]], t_{\sevent})$.
However, this process does not preserve the secrecy of $\vn{k}''$,
because the adversary can force several sessions of $B$ to
use the same key $\vn{k}''$, by replaying the message sent by $A$.
(Accordingly, $\prove{\secr(x)}(\sset)$ is not
satisfied.)
\end{example}

The criteria given in this section might seem restrictive, but in fact,
they should be sufficient for all protocols, provided the previous
transformation steps are powerful enough to transform the protocol
into a simpler protocol, on which these criteria can then be applied.

\subsection{Correspondences}

\subsubsection{Example}\label{sec:exacorresp}

We illustrate the proof of correspondences on the following example, 
inspired by the corrected Woo-Lam public key 
protocol~\cite{Woo97}:
\begin{align*}
&B \rightarrow A: (N, B)\\
&A \rightarrow B: \{ pk_A, B, N \}_{sk_A}
\end{align*}
This protocol is a simple nonce challenge: $B$ sends to $A$ a fresh 
nonce $N$ and its identity. $A$ replies by signing the nonce $N$, $B$'s
identity, and $A$'s public key
(which we use here instead of $A$'s
identity for simplicity: this avoids having to relate identities
and keys; CryptoVerif can obviously also handle the version with
$A$'s identity). 
The signatures are assumed to be (existentially) unforgeable under chosen 
message attacks (UF-CMA)~\cite{Goldwasser88}, so, when $B$ receives the signature, $B$ is convinced
that $A$ is present. The signature cannot be a replay because the
nonce $N$ is signed.

In our calculus, this protocol is encoded by the following process $G_0$,
explained below:
{\allowdisplaybreaks\begin{align*}
G_0 = {}& \cinput{c_0}{}; \Res{rk_A}{\keyseed}; \assign{pk_A}{\pkgen(rk_A)} \\*
&\assign{sk_A}{\skgen(rk_A)} \coutput{c_1}{pk_A}; (Q_A \parpop Q_B)\\
Q_A = {}&\repl{i_A}{n} \cinput{c_2[i_A]}{x_N:\nonce, x_B:\host}; \\*
& \kevent{e_A(pk_A, x_B, x_N)}; \Res{r}{\seed};\\*
& \coutput{c_3[i_A]}{\sign(\concat(pk_A, x_B, x_N), sk_A, r)}\\
Q_B = {}&\repl{i_B}{n} \cinput{c_4[i_B]}{x_{pk_A}:\pkey}; \Res{N}{\nonce};
\\*
& \coutput{c_5[i_B]}{N, B};\cinput{c_6[i_B]}{s:\signature};\\*
& \baguard{\verify(\concat(x_{pk_A}, B, N), x_{pk_A}, s)}\\*
& \baguard{x_{pk_A} = pk_A}
\kevent{e_B(x_{pk_A}, B, N)}
\end{align*}}%
The process $G_0$ is assumed to run in interaction with an adversary,
which also models the network.
$G_0$ first receives an empty message on channel $c_0$, sent by the adversary.
Then, it chooses randomly with uniform probability a bitstring $rk_A$ 
in the type $\keyseed$, by the construct $\Res{rk_A}{\keyseed}$. 
Then, $G_0$ generates the public key $pk_A$ corresponding to the coins $rk_A$,
by calling the public-key generation algorithm $\pkgen$. 
Similarly, $G_0$ generates the secret key $sk_A$ by calling 
$\skgen$. It outputs the public key $pk_A$
on channel $c_1$, so that the adversary has this public key.

After outputting this message, the control passes to the receiving
process, which is part of the adversary. Several processes are then made
available, which represent the roles of $A$ and $B$ in the protocol:
the process $Q_A \parpop Q_B$ is the parallel composition of  
$Q_A$ and $Q_B$; it makes simultaneously available the processes defined 
in $Q_A$ and $Q_B$. 
Let $Q'_A$ and $Q'_B$ be such that $Q_A = \repl{i_A}{n} Q'_A$ and
$Q_B = \repl{i_B}{n} Q'_B$.
The replication $\repl{i_A}{n} Q'_A$ represents $n$ copies of the
process $Q_A'$, indexed by the replication index $i_A$.
The process $Q'_A$
begins with an input on channel $c_2[i_A]$; the channel is indexed
with $i_A$ so that the adversary can choose which copy of
the process $Q'_A$ receives the message by sending it on channel
$c_2[i_A]$ for the appropriate value of $i_A$. The situation is
similar for $Q'_B$, which expects a message on channel $c_4[i_B]$.
The adversary can then run each copy of $Q'_A$ or $Q'_B$
simply by sending a message on the appropriate channel $c_2[i_A]$ or
$c_4[i_B]$.

The process $Q'_B$ first expects on channel
$c_4[i_B]$ a message $x_{pk_A}$ in the type $\pkey$ of public keys. This message is not really part of the protocol. It serves
for starting a new session of the protocol, in which $B$ interacts
with the participant of public key $x_{pk_A}$. For starting a session
between $A$ and $B$, this message should be $pk_A$.
Then, $Q'_B$ chooses randomly with uniform probability a nonce $N$ in
the type $\nonce$. 
The type $\nonce$ is \emph{large}: collisions between independent random numbers chosen uniformly
in a large type are eliminated by CryptoVerif.
$Q'_B$ sends the message $(N, B)$ on channel
$c_5[i_B]$.  The control then passes to the receiving process,
included in the adversary. This process is expected to forward this
message $(N, B)$ on channel $c_2[i_A]$, but may proceed differently in
order to mount an attack against the protocol.

Upon receiving a message $(x_N, x_B)$ on channel $c_2[i_A]$, where the
bitstring $x_N$ is in the type $\nonce$ and $x_B$ in the type $\host$,
the process $Q'_A$ executes the event
$e_A(pk_A, \ab x_B, \ab x_N)$. This event does not change the state of the
system. Events just record that a certain program point has been
reached, with certain values of the arguments of the event. 
Then, $Q'_A$ chooses randomly with uniform probability a
bitstring $r$ in the type $\seed$; this random bitstring is next used
as coins for the signature algorithm.  
Finally,
$Q'_A$ outputs the signed message $\{ pk_A, x_B, x_N \}_{sk_A}$.
(The function $\concat$ concatenates its arguments, with information
on the length of these arguments, so that the arguments can be recovered
from the concatenation.)
The control then passes to the receiving process, which should 
forward this message on channel $c_6[i_B]$ if it wishes to run the protocol
correctly.

Upon receiving a message $s$ on $c_6[i_B]$, $Q'_B$ verifies that the
signature $s$ is correct and, if $x_{pk_A} = pk_A$, that is, if $B$ runs a
session with $A$, it executes the event $e_B(x_{pk_A}, B, N)$.  Our
goal is to prove that, if event $e_B$ is executed, then event $e_A$
has also been executed. However, when $B$ runs a session with a
participant other than $A$, it is perfectly correct that $B$
terminates without event $e_A$ being executed; that is why event $e_B$
is executed only when $B$ runs a session with $A$.

By the unforgeability of signatures, the signature verification
with $pk_A$ succeeds only for signatures generated with $sk_A$.
So, when we verify that the signature is correct, we can furthermore
check that it has been generated using $sk_A$.
So, after game transformations explained below, we obtain the following final game:
{\allowdisplaybreaks\begin{align*}
G_1 = {} &\cinput{c_0}{}; \Res{rk_A}{\keyseed}; \\*
&\assign{pk_A}{\pkgen'(rk_A)} \coutput{c_1}{pk_A}; (Q_{1A} \parpop Q_{1B})\\
Q_{1A} = {} &\repl{i_A}{n} \cinput{c_2[i_A]}{x_N:\nonce, x_B:\host};\\*
&\kevent{e_A(pk_A, x_B, x_N)};\\*
&\assign{m}{\concat(pk_A, x_B, x_N)}\\*
&\Res{r}{\seed};\coutput{c_3[i_A]}{\sign'(m, \skgen'(rk_A), r)}\\
Q_{1B} = {} &\repl{i_B}{n} \cinput{c_4[i_B]}{x_{pk_A}:\pkey}; \Res{N}{\nonce};\\*
& \coutput{c_5[i_B]}{N, B};\cinput{c_6[i_B]}{s:\signature};\\*
&\FIND\ u \leq n\ \SUCHTHAT\ \defined(m[u], x_B[u], x_N[u])\\*
& \ {} \wedge (x_{pk_A} = pk_A) \wedge (B = x_B[u])  \wedge (N = x_N[u]) \\*
& \ {} \wedge \verify'(\concat(x_{pk_A}, B, N), x_{pk_A}, s) \THEN\\*
&\kevent{e_B(x_{pk_A}, B, N)})
\end{align*}}%

The assignment $sk_A = \skgen(rk_A)$ has been removed and
$\skgen(rk_A)$ has been substituted for $sk_A$, in order to make
the term $\sign(m, \skgen(rk_A), r)$ appear. This term is needed for
the security of the signature scheme to apply.

In $Q_{1A}$, the signed message is stored in variable $m$, and this variable
is used when computing the signature.

Finally, using the unforgeability of signatures,
the signature verification has been replaced with an array
lookup: the signature verification can succeed only when
$\concat(x_{pk_A}, B, N)$ has been signed with $sk_A$, so we look for
the message $\concat(x_{pk_A}, B, N)$ in the array $m$ and the event
$e_B$ is executed only when this message is found. In other words, we
look for an index $u \leq n$ such that $m[u]$ is defined and $m[u] =
\concat(x_{pk_A}, B, N)$. By definition of $m$, $m[u] = \concat(pk_A,
x_B[u], x_N[u])$, so the equality $m[u] =
\concat(x_{pk_A}, B, N)$ can be replaced with
$(x_{pk_A} = pk_A) \wedge (B = x_B[u]) \wedge (N = x_N[u])$. (Recall
that the result of the $\concat$ function contains enough information
to recover its arguments.)
This transformation replaces the function symbols $\pkgen$, $\skgen$,
$\sign$, and $\verify$ with primed function symbols $\pkgen'$,
$\skgen'$, $\sign'$, and $\verify'$ respectively, to avoid repeated
applications of the unforgeability of signatures with the same key.
(The unforgeability of signatures is applied only to unprimed symbols.)

The soundness of the game transformations shows that $G_0 \approx G_1$.
We will prove that $G_1$ satisfies the correspondences~\eqref{c1}
and~\eqref{c2} with any public variables $V$, in particular with $V =
\emptyset$. By Lemma~\ref{lem:transfercorr}, 
$G_0$ also satisfies these correspondences with public
variables $V=\emptyset$.
Let us sketch how the proof of correspondence~\eqref{c1} for the game $G_1$ 
will proceed. 
Let $Q'_{1A}$ and $Q'_{1B}$ such that $Q_{1A} = \repl{i_A}{n} Q'_{1A}$
and $Q_{1B} = \repl{i_B}{n} Q'_{1B}$.
Assume that event $e_B$ is executed in the copy of $Q'_{1B}$
of index $i_B$, that is, $e_B(x_{pk_A}[i_B], \ab B, \ab N[i_B])$ is executed. (Recall that the variables $x_{pk_A}$, $N$, $u$, \ldots{} are 
implicitly arrays.)
Then the condition of the $\FIND$ above $e_B$ holds, that is,
$m[u[i_B]]$, $x_B[u[i_B]]$, and $x_N[u[i_B]]$ are defined, 
$x_{pk_A}[i_B] = pk_A$, $B = x_B[u[i_B]]$, and $N[i_B] = x_N[u[i_B]]$.
Moreover, since $m[u[i_B]]$ is defined, the assignment that
defines $m$ has been executed in the copy of $Q'_{1A}$ of index $i_A = u[i_B]$.
Then the event $e_A(pk_A, x_B, x_N)$, located above the definition
of $m$, must have been executed in that copy of $Q'_{1A}$, that is,
$e_A(pk_A, \ab x_B[u[i_B]], \ab x_N[u[i_B]])$ has been executed.
The equalities in the condition of the $\FIND$ imply that this
event is also $e_A(x_{pk_A}[i_B], \ab B, \ab N[i_B])$. To sum up,
if $e_B(x_{pk_A}[i_B], \ab B, \ab N[i_B])$ has been executed, then
$e_A(x_{pk_A}[i_B], \ab B, \ab N[i_B])$ has been executed, so
we have the correspondence~\eqref{c1}. 
This reasoning is typical of the way the prover shows correspondences.
In particular, the conditions of array lookups are key in these
proofs, because they allow us to relate values in processes that run
in parallel (here, the processes that represent $A$ and $B$), and
interesting correspondences relate events
that occur in such processes.
Next, we detail and formalize this reasoning, both
for non-injective and injective correspondences.

\subsubsection{Non-unique Events}\label{sec:nuevents}

The only correspondence that involves a non-unique event $e$ is $\fevent{e} \Rightarrow \false$, and it is simply proved by noticing that the event $e$ no longer occurs in the game after the transformation \rn{prove\_unique} (Section~\ref{sec:prove_unique}). Therefore, non-unique events are not concerned by the proofs of Sections~\ref{sec:nicorresp} and~\ref{sec:injcorresp}.

\subsubsection{Non-injective Correspondences}\label{sec:nicorresp}

Intuitively, in order to prove that $Q_0$ satisfies 
a non-injective correspondence 
$\forall \tup{x}:\tup{T}; \psi \Rightarrow \exists \tup{y}:\tup{T'}; \phi$, with $\tup{x} = \fvar(\psi)$ and $\tup{y} = \fvar(\phi)\setminus\fvar(\psi)$, we collect all facts that hold at events in 
$\psi$ and show that these facts imply $\phi$ using the equational prover.

When~\rn{casesInCorresp = false}, CryptoVerif uses $\facts_{\pp}$ to collect these
facts. When \rn{casesInCorresp = true} (the default), it uses $\facts_{\pp,\case}$
for more precision.
In this section, we detail the proof with $\facts_{\pp,\case}$. The usage of
$\facts_{\pp}$ can be considered as using a single case $\case$,
relying on Lemma~\ref{lem:factsP} instead of Lemma~\ref{lem:factsPcases}.
Formally, we collect facts that hold when the event $F$ in $\psi$ has been executed, as follows.

\begin{definition}[$\pp$ executes $F$, $\fset_{F, \pp, \case}$]\label{def:followsfset}
When $F = \fevent{e(M_1, \ab \ldots, \ab M_m)}$ and
$\pptag\kevent{e(M'_1, \ab \ldots, \ab M'_m)}; \dots$ occurs in $Q_0$ or, for $m=0$, $\pptag\keventabort{e}$ or $\pptag\FIND\unique{e}\dots$ occurs in $Q_0$, we say that
\emph{$\pp$ executes $F$}.

If $\pp$ executes $F$ and for all $\pptag\kevent{e(M'_1, \ab \ldots, \ab M'_m)}; \dots$ in $Q_0$, $M'_1$, \dots, $M'_m$ are simple terms, then we define
$\fset^0_{F,\pp,\case} = \facts_{\pp,\case} \cup \{ M'_j = M_j \mid j \leq m \} \cup \{\lppf(\pp, \replidx{\pp})$ if $\pptag\keventabort{e}$ or $\pptag\FIND\unique{e}\dots$ occurs in $Q_0\}$.
If additionally $F$ is not a non-unique event, then we define
$\fset_{F,\pp,\case} = \fset^0_{F,\pp,\case} \cup \futfset_{\pp}$.
\end{definition}
Intuitively, when the event $F$ in $\psi$ has been executed, it has been executed
by some subterm or subprocess of $Q_0$, so there exists a
subterm or subprocess $\pptag\kevent{e(M'_1, \ldots, M'_m)}; \dots$ or, 
for $m=0$, $\pptag\keventabort{e}$ or 
$\pptag\FIND\unique{e}\dots$ in $Q_0$ such that,
the event $e(M'_1, \ldots, M'_m)$ has been executed and it is equal to
the event $F$, hence $M'_j = M_j$ holds for $j \leq m$.
Moreover, since the program point $\pp$, which executes $F$, has been reached,
$\facts_{\pp,\case}$ holds for some case $\case$ (Lemma~\ref{lem:factsPcases}). 
Furthermore, when the event aborts, it is the last step of the trace,
so $\lppf(\pp, \replidx{\pp})$ also holds. Hence $\fset^0_{F, \pp, \case}$ holds
for some case $\case$.
Additionally, assuming we consider traces that do not execute non-unique events,
since the adversary cannot stop execution of the process until the next
output or $\keventabort{e}$, $\futfset_{\pp}$ also holds (Lemma~\ref{lem:factsPfut}), so $\fset_{F, \pp, \case}$ holds for some case $\case$.
This is proved more formally in Lemma~\ref{lem:fsetFP} below.
(The case of $\GET\unique{e}$ is not mentioned in Definition~\ref{def:followsfset} because it is excluded by Property~\ref{prop:notables}.)

We restrict ourselves to the case in which $M'_1$, \dots, $M'_m$ are simple terms
because only simple terms allowed in sets of facts.

Let $\theta$ be a substitution equal to the identity on the variables $\tup{x}$
of $\psi$. This substitution gives values to existentially quantified
variables $\tup{y}$ of $\phi$.
We say that $\fset \shows_{\theta} \phi$ when we can show that $\fset$
implies $\theta\phi$. Formally, we define:
\begin{tabbing}
$\fset \shows_{\theta} M$ if and only if
$\fset \cup \{ \fnot \theta M\}$ yields a contradiction\\[1mm]
$\fset \shows_{\theta} \fevent{e(M_1, \ldots, M_m)}$ if and only if there exist \\*
\qquad $M'_0, \ldots, M'_m$ such that $M'_0:\fevent{e(M'_1, \ldots, M'_m)} \in \fset$\\*
\qquad and $\fset \cup \{ \bigvee_{j = 1}^m \theta M_j \neq M'_j \}$ yields a contradiction\\[1mm]
$\fset \shows_{\theta} \phi_1 \wedge \phi_2$ if and only if 
$\fset \shows_{\theta} \phi_1$ and $\fset \shows_{\theta} \phi_2$\\[1mm]
$\fset \shows_{\theta} \phi_1 \vee \phi_2$ if and only if 
$\fset \shows_{\theta} \phi_1$ or $\fset \shows_{\theta} \phi_2$
\end{tabbing}
Terms $\theta M$ are proved by contradiction, using the equational
prover. Events $\theta F$ are proved by looking for some event $F'$ in
$\fset$ and showing by contradiction that $\theta F = F'$, 
using the equational prover. 

Let $\corresp = \sem{\forall \tup{x}:\tup{T}; \psi \Rightarrow \exists \tup{y}:\tup{T'}; \phi}$ be a non-injective correspondence that does not use non-unique events, 
with $\psi = F_1 \wedge \ldots \wedge F_m$, $\tup{x} = \fvar(\psi)$, and $\tup{y} = \fvar(\phi)\setminus\fvar(\psi)$. Suppose that, in $Q_0$, the arguments of the events that occur in $\psi$ are always simple terms.
Suppose that, for all $j \leq m$, $\pp_j$ that executes $F_j$ and $\case_j$ is a case for $\fset_{\pp_j, \case_j}$.
For $j \leq m$, let $\theta_j$ be a renaming of $\replidx{\pp_j}$ to fresh replication indices. (The renamings $\theta_j$ have pairwise disjoint images.)
Let $\theta$ be a family parameterized by $\pp_1, \case_1,\dots, \pp_m, \case_m$ of substitutions equal to the identity on $\tup{x}$.
We define $\prove{\corresp}(\theta, \pp_1, \case_1, \dots, \pp_m, \case_m) = (\theta_1 \fset_{F_1,\pp_1,\case_1} \cup \dots \cup \theta_m \fset_{F_m,\pp_m,\case_m} \shows_{\theta(\pp_1, \case_1,\dots, \pp_m, \case_m)} \phi)$.
This function defines the algorithm that we use to prove the correspondence
$\corresp$ assuming for all $j \leq m$, $F_j$ is executed in $\pp_j$
and we are in case $\case_j$.
We also define $\prove{\corresp}(\theta, \sset) = \bigwedge_{(\pp_1,\case_1,\dots,\pp_m,\case_m) \in \sset} \prove{\corresp}(\theta, \ab \pp_1, \ab \case_1, \ab \dots, \ab \pp_m, \ab \case_m)$.

Non-injective correspondences are proved as follows.
\begin{proposition}\label{prop:nicorresp}
Let $\corresp = \sem{\forall \tup{x}:\tup{T}; \psi \Rightarrow \exists \tup{y}:\tup{T'}; \phi}$ be a non-injective correspondence that does not use non-unique events, 
with $\psi = F_1 \wedge \ldots \wedge F_m$, $\tup{x} = \fvar(\psi)$, and $\tup{y} = \fvar(\phi)\setminus\fvar(\psi)$. 
Let $Q_0$ be a process that satisfies Properties~\ref{prop:notables} and~\ref{prop:autosarename}.
Suppose that, in $Q_0$, the arguments of the events that occur in $\psi$ are always simple terms.
Let $\sset = \{ (\pp_1, \case_1, \dots, \pp_m, \case_m) \mid \forall j \leq m, \pp_j$ executes $F_j$ and $\case_j$ is a case for $\fset_{\pp_j, \case_j}\}$.
If there exists a family of substitutions $\theta$ equal to the 
identity on $\tup{x}$ such that
$\prove{\corresp}(\theta, \sset)$
and for all evaluation contexts $C$ acceptable for $Q_0$,
$\Prss{C[Q_0]}{\neg \yctran{\prove{\corresp}(\theta, \sset)}}\leq p(C)$,
then $\bound{Q_0}{V}{\corresp}{\Dfalse}{p}$ for any $V$.
\end{proposition}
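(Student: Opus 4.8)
The plan is to unfold the goal $\bound{Q_0}{V}{\corresp}{\Dfalse}{p}$ via Definition~\ref{def:adv}. Since a correspondence is a trace property, $V_{\corresp}=\emptyset$ and $C_{\corresp}=[\,]$, and $\nonunique{Q_0,\Dfalse}=\nonunique{Q_0}$; so it suffices to show, for every evaluation context $C$ acceptable for $Q_0$ with public variables $V$ that contains neither the events used by $\corresp$ nor the non-unique events of $Q_0$, that $\Advtev{Q_0}{\corresp}{C}{\Dfalse}=\Pr[C[Q_0]:\neg\corresp\wedge\neg\nonunique{Q_0}]\leq p(C)$. The whole argument then reduces to a trace-level claim: \emph{every} full trace $\trace$ of $C[Q_0]$ with $\trace\vdash\neg\nonunique{Q_0}$ and $\trace\vdash\yctran{\prove{\corresp}(\theta,\sset)}$ satisfies $\trace\vdash\corresp$. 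Granting this claim, the set of full traces satisfying $\neg\corresp\wedge\neg\nonunique{Q_0}$ is contained in the set satisfying $\neg\yctran{\prove{\corresp}(\theta,\sset)}$, whence $\Pr[C[Q_0]:\neg\corresp\wedge\neg\nonunique{Q_0}]\leq\Pr[C[Q_0]:\neg\yctran{\prove{\corresp}(\theta,\sset)}]\leq\Prss{C[Q_0]}{\neg\yctran{\prove{\corresp}(\theta,\sset)}}\leq p(C)$ by Lemma~\ref{lem:Pr-Prss}(1) and the hypothesis.

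To prove the trace-level claim, fix such a $\trace$ and, following Definition~\ref{def:nicorresp}, fix $\tup{a}\in\tup{T}$ with $\venv_0=\{\tup{x}\mapsto\tup{a}\}$ such that $\trace,\venv_0\vdash\psi$; the goal becomes $\trace,\venv_0\vdash\exists\tup{y}\in\tup{T}',\phi$. Since $\psi=F_1\wedge\dots\wedge F_m$ is a conjunction of events, each $F_j$ is executed; the executing term or process is a subterm or subprocess of $C[Q_0]$ by Lemma~\ref{lem:evalpos} and Corollary~\ref{cor:evalpos} applied to $C[Q_0]$, and because $C$ contains no event used by $\corresp$, this execution happens at a program point $\pp_j$ of $Q_0$ that executes $F_j$. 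Here the main technical ingredient is a fact-satisfaction lemma (the forward-referenced Lemma~\ref{lem:fsetFP}): using $\trace\vdash\neg\nonunique{Q_0}$, there is a case $\case_j$ and a valuation $\venv_j$ of the fresh indices $\theta_j\replidx{\pp_j}$ such that $\trace,\venv_0\cup\venv_j\vdash\theta_j\fset_{F_j,\pp_j,\case_j}$; its proof collects $\facts_{\pp_j,\case_j}$ from Lemma~\ref{lem:factsPcases} (transported by Corollary~\ref{cor:factsP}), the argument equalities $M'_{ji}=M_{ji}$ from matching the executed event to $F_j$, and the future facts $\futfset_{\pp_j}$ from Lemma~\ref{lem:factsPfut} (whose no-non-unique-event hypothesis is exactly $\trace\vdash\neg\nonunique{Q_0}$). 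Setting $\venv=\venv_0\cup\venv_1\cup\dots\cup\venv_m$ (disjoint images by the $\theta_j$), we obtain $\trace,\venv\vdash\bigcup_j\theta_j\fset_{F_j,\pp_j,\case_j}$.

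Since each $\pp_j$ executes $F_j$ and $\case_j$ is a valid case, $(\pp_1,\case_1,\dots,\pp_m,\case_m)\in\sset$, so $\trace\vdash\yctran{\prove{\corresp}(\theta,\sset)}$ gives $\trace\vdash\yctran{\bigcup_j\theta_j\fset_{F_j,\pp_j,\case_j}\shows_{\theta'}\phi}$ with $\theta'=\theta(\pp_1,\case_1,\dots,\pp_m,\case_m)$. I would then prove, by induction on $\phi$, the soundness of $\shows$: if $\trace,\venv\vdash\fset$ and $\trace\vdash\yctran{\fset\shows_{\theta'}\phi}$ then $\trace,\venv\vdash\theta'\phi$. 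For $\phi=M$, $\yctran{\fset\shows_{\theta'}M}$ negates the existence of a valuation making $\fset\wedge\neg\theta'M$ true, and since $\venv$ already witnesses $\fset$, it forces $\theta'M$ to hold; for $\phi=\fevent{e(\tup{M})}$, the chosen fact $M'_0:\fevent{e(\tup{M'})}\in\fset$ witnesses the event, and the accompanying contradiction forces $\theta'M_i=M'_i$, so the event carries the required arguments; conjunction and disjunction follow directly since $\yctran{}$ commutes with $\wedge$ and selects the succeeding branch of $\vee$. This yields $\trace,\venv\vdash\theta'\phi$; using that $\theta'$ is the identity on $\tup{x}$ and, by well-typedness of $\theta'$, maps $\tup{y}$ to terms whose values under $\venv$ lie in $\tup{T}'$, a substitution lemma produces witnesses $\tup{c}\in\tup{T}'$ with $\trace,\venv_0\cup\{\tup{y}\mapsto\tup{c}\}\vdash\phi$, i.e.\ $\trace,\venv_0\vdash\exists\tup{y}\in\tup{T}',\phi$. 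As $\tup{a}$ was arbitrary, $\trace\vdash\corresp$. The main obstacle is precisely this combination of the fact-satisfaction lemma—getting the collected syntactic facts (including the future facts) to provably hold on the concrete trace while tracking the per-program-point renamings $\theta_j$—with the $\shows$-soundness induction that turns each ``yields a contradiction'' verdict into an actual term or event satisfaction; the typing bookkeeping for the existential witnesses is comparatively routine.
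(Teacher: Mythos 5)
Your proposal is correct and follows essentially the same route as the paper: the paper likewise reduces to a trace-level claim (its Lemma~\ref{lem:nicorresp}, stated contrapositively as $\trace \vdash \neg\corresp$ implies $\trace \vdash \neg\yctran{\prove{\corresp}(\theta,\sset)}$), relies on Lemma~\ref{lem:fsetFP} for collecting $\theta_j\fset_{F_j,\pp_j,\case_j}$ on the trace, on the $\shows$-soundness induction (its Lemma~\ref{lem:shows}), and concludes with Lemma~\ref{lem:Pr-Prss} exactly as you do. Your direct formulation versus the paper's contrapositive one, and your explicit witness extraction for $\tup{y}$ versus the paper's instantiation of the universal $\forall\tup{y}\in\tup{T}',\neg\phi$ at the values of $\theta\tup{y}$, are merely symmetric reformulations of the same argument.
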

Intuitively, when $\psi = F_1 \wedge \ldots \wedge F_m$ holds, 
$\theta_1\fset_{F_1,\pp_1,\case_1} \cup \dots \cup \theta_m\fset_{F_m,\pp_m,\case_m}$ hold
for some $\pp_1$, $\case_1$, \dots, $\pp_m$, $\case_m$.
For some $\theta$ equal to the identity on $\psi$,
$\theta_1\fset_{F_1,\pp_1,\case_1} \cup \dots \cup \theta_m\fset_{F_m,\pp_m,\case_m}$ implies $\theta \phi$,
so $\theta \phi$ holds.
Hence the correspondence is satisfied.
%
The proof of Proposition~\ref{prop:nicorresp} relies on the following properties and lemmas. We have
{\allowdisplaybreaks\begin{align*}
&\yctran{\fset \shows_{\theta} M} = \forall \tup{z} \in \tup{T}'', \neg \left(\bigwedge\fset \wedge \fnot \theta M\right)\\
  &\textstyle\yctran{\fset \shows_{\theta} \fevent{e(M_1, \ldots, M_m)}} =
  \forall \tup{z} \in \tup{T}'', \neg \left(\bigwedge\fset \wedge \bigvee_{j = 1}^m \theta M_j \neq M'_j\right)\\*
  &\textstyle\qquad \text{for some $M'_0:\fevent{e(M'_1, \ldots, M'_m)} \in \fset$}\\
&\yctran{\fset \shows_{\theta} \phi_1 \wedge \phi_2} =
\yctran{\fset \shows_{\theta} \phi_1} \wedge \yctran{\fset \shows_{\theta} \phi_2}\\[1mm]
&\yctran{\fset \shows_{\theta} \phi_1 \vee \phi_2} = 
\begin{cases}
\yctran{\fset \shows_{\theta} \phi_1}&\text{if }\fset \shows_{\theta} \phi_1\\
\yctran{\fset \shows_{\theta} \phi_2}&\text{otherwise}
\end{cases}
\end{align*}}%
where $\tup{z}$ are the non-process variables in $\fset$, in the image of $\theta$, and in $\tup{x}$, and $\tup{T}''$
are their types.

\begin{lemma}\label{lem:shows}
$\yctran{\fset\shows_{\theta}\phi} \Rightarrow \forall \tup{z} \in \tup{T}'', \neg \left(\bigwedge\fset \wedge \fnot \theta \phi\right)$, where $\tup{z}$ are the non-process variables in $\fset$, in the image of $\theta$, and in $\tup{x}$, and $\tup{T}''$ are their types.
\end{lemma}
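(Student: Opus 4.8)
The plan is to prove the implication by structural induction on the formula $\phi$, which is built from terms $M$, events $\fevent{e(M_1,\ldots,M_m)}$, conjunctions, and disjunctions. Throughout the recursion the parameters $\fset$, $\theta$, and $\tup{x}$ stay fixed, so the tuple of universally quantified variables $\tup{z}$ and their types $\tup{T}''$ are the same for $\phi$ and for each of its subformulas. Hence I can strip the common prefix $\forall \tup{z} \in \tup{T}''$ and argue pointwise: for each trace $\trace$ and environment $\venv$ giving values to $\tup{z}$, I show that $\trace,\venv \vdash \yctran{\fset \shows_{\theta} \phi}$ forces $\trace,\venv \vdash \neg(\bigwedge \fset \wedge \fnot \theta\phi)$.

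For the base case $\phi = M$, the formula $\yctran{\fset \shows_{\theta} M}$ is by definition exactly $\forall \tup{z} \in \tup{T}'', \neg(\bigwedge \fset \wedge \fnot \theta M)$, so the implication is an identity. For the base case $\phi = \fevent{e(M_1,\ldots,M_m)}$, the translation records a particular fact $M'_0:\fevent{e(M'_1,\ldots,M'_m)} \in \fset$ and asserts $\neg(\bigwedge\fset \wedge \bigvee_{j=1}^m \theta M_j \neq M'_j)$. I would argue by contradiction: assume $\bigwedge\fset$ together with $\fnot\theta\fevent{e(M_1,\ldots,M_m)}$, i.e.\ the event $e(\theta M_1,\ldots,\theta M_m)$ has not been executed. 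Since $\bigwedge\fset$ holds and $M'_0:\fevent{e(M'_1,\ldots,M'_m)}\in\fset$, the event $e(M'_1,\ldots,M'_m)$ has been executed, so the $M'_j$ evaluate to values $a'_j$ with $e(a'_1,\ldots,a'_m)$ in the sequence of events of the trace. The hypothesis, with $\bigwedge\fset$ true, rules out $\bigvee_j \theta M_j \neq M'_j$, hence $\theta M_j$ and $M'_j$ evaluate to the same value $a'_j$ for every $j$; therefore $e(\theta M_1,\ldots,\theta M_m)$ evaluates to $e(a'_1,\ldots,a'_m)$, which is in the sequence, so $\theta\fevent{e(M_1,\ldots,M_m)}$ holds, contradicting the assumption.

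For $\phi = \phi_1 \wedge \phi_2$, the translation is $\yctran{\fset \shows_{\theta}\phi_1} \wedge \yctran{\fset \shows_{\theta}\phi_2}$; using $\fnot\theta(\phi_1\wedge\phi_2) = \fnot\theta\phi_1 \vee \fnot\theta\phi_2$ and distributing over $\bigwedge\fset$, the goal $\neg(\bigwedge\fset \wedge \fnot\theta(\phi_1\wedge\phi_2))$ is the conjunction of $\neg(\bigwedge\fset\wedge\fnot\theta\phi_1)$ and $\neg(\bigwedge\fset\wedge\fnot\theta\phi_2)$, which are exactly the two induction hypotheses. For $\phi = \phi_1 \vee \phi_2$, the definition of $\yctran{\cdot}$ sets $\yctran{\fset\shows_{\theta}(\phi_1\vee\phi_2)}$ equal to $\yctran{\fset\shows_{\theta}\phi_k}$ for the branch $k\in\{1,2\}$ actually taken by the algorithm; since $\fnot\theta(\phi_1\vee\phi_2)= \fnot\theta\phi_1 \wedge \fnot\theta\phi_2$ entails $\fnot\theta\phi_k$, the induction hypothesis $\neg(\bigwedge\fset\wedge\fnot\theta\phi_k)$ already rules out $\bigwedge\fset\wedge\fnot\theta(\phi_1\vee\phi_2)$, yielding the conclusion.

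I expect the event case to be the main obstacle, since it is the only place where the purely syntactic equality conditions $\theta M_j = M'_j$ produced by the equational prover must be turned into the semantic statement that $e(\theta M_1,\ldots,\theta M_m)$ was executed. This requires carefully unfolding the definitions of satisfaction for $M'_0:\fevent{e(M'_1,\ldots,M'_m)}$ and for $\fnot\theta\fevent{e(M_1,\ldots,M_m)}$, and checking that the program-point annotation $M'_0$ carried by the fact in $\fset$ is irrelevant for the event-without-program-point that appears in $\fnot\theta\phi$, so that matching only the argument values $a'_j$ suffices to witness the execution of the event.
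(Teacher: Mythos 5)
Your proof is correct and takes exactly the approach of the paper, whose entire proof of Lemma~\ref{lem:shows} is the one-line ``By induction on $\phi$''; your four cases (term, event, conjunction, disjunction) fill in precisely the details that induction requires, including the key observation in the event case that the program-point annotation $M'_0$ on the fact in $\fset$ can be dropped when witnessing $\theta\fevent{e(M_1,\ldots,M_m)}$.
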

\begin{proof}
By induction on $\phi$.
\proofcomplete
\end{proof}

\begin{lemma}\label{lem:fsetFP}
Let $Q_0$ be a process that satisfies Properties~\ref{prop:notables} and~\ref{prop:autosarename}.
  Let $\trace$ be a full trace of $C[Q_0]$. Let $\evseq$ be the sequence of events in the last configuration of $\trace$. Let $F = \fevent{e(\tup{M})}$ where $\tup{M}$ is a tuple of terms and $e$ is an event that does not occur in $C$. Suppose that the arguments of $e$ in $Q_0$ are always simple terms.
  Let $\venv$ be a mapping of the variables of $\tup{M}$ and $\step$ to their values.
  Suppose that $\trace, \venv \vdash F@\step$.

  Then there exist a program point $\pp$ (in $Q_0$) that executes $F$ and a case $\case$ such that, for any $\theta'$ renaming of $\replidx{\pp}$ to fresh replication indices, there exists a mapping $\sigma$ with domain $\theta'\replidx{\pp}$ such that $\evseq(\venv(\step)) = (\pp, \sigma(\theta'\replidx{\pp})):e(\dots)$ and $\trace, \sigma\cup \venv \vdash \theta'\fset^0_{F, \pp, \case}$.
  If additionally, $\trace$ does not execute a non-unique event of $Q_0$, then 
$\trace, \sigma\cup \venv \vdash \theta'\fset_{F, \pp, \case}$.
\end{lemma}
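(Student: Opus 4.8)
The goal is Lemma~\ref{lem:fsetFP}: given a full trace $\trace$ of $C[Q_0]$ that executes the event $F = \fevent{e(\tup{M})}$ at step $\venv(\step)$, I must produce a program point $\pp$ executing $F$ and a case $\case$ such that, after renaming replication indices by any $\theta'$, there is a mapping $\sigma$ making $\evseq(\venv(\step))$ have the form $(\pp,\sigma(\theta'\replidx{\pp})):e(\dots)$ and $\trace,\sigma\cup\venv\vdash\theta'\fset^0_{F,\pp,\case}$, with the stronger conclusion using $\fset_{F,\pp,\case}$ when no non-unique event fires. The first step is to unwind the hypothesis $\trace,\venv\vdash F@\step$: by the definition of $\vdash$ on event facts, this means $E_{\trace\before\conf_{\mathrm{end}}},\venv,\tup{M}\evalterm\tup{a}$, $\venv(\step)=a_0$, and $\evseq(a_0)=(\pp_0,\tup{a'}):e(\tup{a})$ for some $\pp_0,\tup{a'}$. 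So the event was recorded at step $a_0$ by some semantic reduction that added it to $\evseq$. The only rules that append an event $e(\dots)$ to $\evseq$ are \eqref{sem:eventt}, \eqref{sem:event} (ordinary events), \eqref{sem:eventabortt}/\eqref{sem:eventabort}, \eqref{sem:ctxevent} (abort events), and \eqref{sem:finde}, \eqref{sem:find3} (the $\FIND\unique{e}$ rules; $\GET$ is excluded by Property~\ref{prop:notables}). By Corollary~\ref{cor:evalpos} and Lemma~\ref{lem:evalpos}, the process/term executing that reduction is a subterm/subprocess of $Q_0$ up to channel renaming, so its program point $\pp := \pp_0$ is a genuine program point of $Q_0$; since $e$ does not occur in $C$, the event cannot have come from the context. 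By Invariant~\ref{inv:event} this program point is of the form $\pptag\kevent{e(M'_1,\dots,M'_m)};\dots$, or $\pptag\keventabort{e}$, or a $\FIND\unique{e}$ condition, so $\pp$ executes $F$ in the sense of Definition~\ref{def:followsfset}.

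Next I would fix the case $\case$ by invoking Lemma~\ref{lem:factsPcases} at the configuration $\conf$ at program point $\pp$ that triggers the event: there exists $\case$ with $\trace\before\conf\vdash\facts_{\pp,\case}$. Given the renaming $\theta'$, I set $\sigma := \{\theta'\replidx{\pp}\mapsto\sigma_{\conf}\replidx{\pp}\}$, i.e.\ $\sigma(\theta'\replidx{\pp})=\image(\sigma_{\conf})$; by Lemma~\ref{lem:curidx} this is exactly the sequence of replication-index values at $\pp$, so $\tup{a'}=\image(\sigma_{\conf})$ and $\evseq(\venv(\step))=(\pp,\sigma(\theta'\replidx{\pp})):e(\dots)$ as required. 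Now I push the facts forward from $\conf$ to the end of $\trace$ using Corollary~\ref{cor:factsP}: with $\venv'=\{\theta'\replidx{\pp}\mapsto\sigma_\conf\replidx{\pp}\}=\sigma$ it gives $\trace,\sigma\vdash\theta'\facts_{\pp,\case}$. To obtain $\trace,\sigma\cup\venv\vdash\theta'\fset^0_{F,\pp,\case}$ I must additionally verify the remaining members of $\fset^0$: the equalities $M'_j=M_j$ and, in the abort cases, $\lppf(\pp,\replidx{\pp})$. For the equalities, the semantic rule that executed the event evaluated the arguments $M'_j$ (which are simple by hypothesis) to the same values $a_j=\venv(\tup{M})_j$ that $\tup{M}$ evaluates to, because $\evseq$ records the \emph{evaluated} arguments $e(\tup{a})$; hence $\theta'M'_j$ and $M_j$ have equal values under $\sigma\cup\venv$. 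For $\lppf(\pp,\replidx{\pp})$ in the $\keventabort{e}$ and $\FIND\unique{e}$ cases, I argue that these rules reduce to $\kw{abort}$ and terminate the trace immediately, so $\conf$ is the final output-process configuration and $E_{\trace\before\conf}=E_{\trace}$, $\sigma_\conf=\sigma_{\trace}$ are unchanged through to the end, which is precisely the semantics of $\lppf$.

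Finally, for the strengthened conclusion I assume $\trace$ executes no non-unique event of $Q_0$ and must add $\futfset_\pp$, i.e.\ show $\trace,\sigma\cup\venv\vdash\theta'\futfset_\pp$. This is exactly what Lemma~\ref{lem:factsPfut} provides: since the adversary cannot interrupt $Q_0$'s execution between reaching $\pp$ and the next output or aborting $\keventabort{e}$, and no non-unique event fires (so no $\FIND\unique{e}$ aborts spuriously), the future facts collected at $\pp$ hold at the end of the enclosing block; propagating them to the end of $\trace$ (again via the extension argument in Corollary~\ref{cor:factsP}, relying on environment and event-sequence extension from Lemma~\ref{lem:sem-ext}) yields $\trace,\sigma\cup\venv\vdash\theta'\futfset_\pp$, and combining with the already-established $\theta'\fset^0_{F,\pp,\case}$ gives $\theta'\fset_{F,\pp,\case}$. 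I expect the main obstacle to be the bookkeeping in the ordinary-event case showing that the recorded arguments $\tup{a}$ in $\evseq$ coincide with the evaluation of both $\tup{M}$ (through $\venv$) and the syntactic arguments $\tup{M'}$ (through $\sigma$): this requires tracing through the context rules \eqref{sem:ctxt}/\eqref{sem:ctx} via Lemma~\ref{lem:start_from_pp} to see that the event's arguments were evaluated from a subterm of $Q_0$ to the values stored in $\evseq$, and then matching those to the hypothesis $E_{\trace},\venv,\tup{M}\evalterm\tup{a}$ using Lemma~\ref{lem:subredm} for type preservation and the fact that simple terms evaluate deterministically and monotonically as the environment extends.
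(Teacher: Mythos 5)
Your proposal follows essentially the same route as the paper's proof: unwind $\trace,\venv\vdash F@\step$, do a case analysis on the semantic rules that append events to $\evseq$ (\eqref{sem:event}, \eqref{sem:eventt}, the abort rules, and the $\FIND\unique{e}$ rules), trace back to a syntactic subterm/subprocess of $Q_0$ via Lemma~\ref{lem:start_from_pp}/Corollary~\ref{cor:evalpos} using that $e$ does not occur in $C$, fix $\sigma$ via Lemma~\ref{lem:curidx}, obtain the facts by Corollary~\ref{cor:factsP}, derive the argument equalities from simple-term evaluation under environment extension, establish $\lppf$ by preservation of environment and replication indices in the abort cases, and conclude the strengthened claim with Lemma~\ref{lem:factsPfut} plus extension. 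The only slight imprecision is that a term-level $\keventabort{e}$ does not terminate the trace ``immediately'' but propagates an abort event value through context rules before reaching $\kw{abort}$; since those rules preserve the environment and replication indices attached to the abort value, your conclusion for $\lppf$ still holds, exactly as in the paper.
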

\begin{proof}
Let $E = E_{\trace}$.
Since $\trace, \venv \vdash F@\step$ and the variables of $\tup{M}$ and $\step$ are defined in $\venv$, there exists $\tup{a}$ such that $\venv, \tup{M} \evalterm \tup{a}$ and 
$\evseq(\venv(\step)) = (\pp, \tup{a}_0): e(\tup{a})$ for some $\pp$ and $\tup{a}_0$.
The rule of the semantics that may have added this element to $\evseq$
is \eqref{sem:event}, \eqref{sem:eventabort}, \eqref{sem:ctxevent},
\eqref{sem:finde}, \eqref{sem:find3}, \eqref{sem:gete}, \eqref{sem:get3}, or~\eqref{sem:eventt}.
\begin{itemize}
\item
  In case~\eqref{sem:event}, the initial configuration of rule~\eqref{sem:event} is of the form
  $E_1, \ab (\sigma_1, \ab \pptag\kevent{e(\tup{a})};\ab P), \ab \pset_0, \ab \cset_0, \ab \tblcts_1,\ab \evseq_1$. By Lemma~\ref{lem:start_from_pp}, Property~\ref{start_from_pp1} applied to this configuration, 
  we have reductions
\begin{align*}
&\conf = E_0, (\sigma_0, \pptag\kevent{e(\tup{M'})};P), \pset_0, \cset_0, \tblcts_0,\evseq_0 \\
&\qquad \red{p_0}{\ix_0} \dots \red{p_1}{\ix_1} E_1, (\sigma_1, \pptag\kevent{e(\tup{a})};P), \pset_0, \cset_0, \tblcts_1,\evseq_1 \\
&\qquad \red{1}{} E_1, (\sigma_1, P), \pset_0, \cset_0, \tblcts_1,(\evseq_1, (\pp, \image(\sigma_1)):e(\tup{a}))
\end{align*}
where $\pptag\kevent{e(\tup{M'})};P$ is a subprocess of $C[Q_0]$ up to renaming of channels, by any number of applications of~\eqref{sem:ctx} and a final application of~\eqref{sem:event}.

\item In case~\eqref{sem:eventabort}, \eqref{sem:find3}, or~\eqref{sem:get3}, the rules that can conclude with a process
  $\pptag P$ with $P = \keventabort{e}$, $P = \FIND\unique{e}\dots$, or $P = \GET\unique{e}\dots$
  are \eqref{sem:new},
\eqref{sem:let}, \eqref{sem:if1}, \eqref{sem:if2}, \eqref{sem:find1}, \eqref{sem:find2}, \eqref{sem:insert}, \eqref{sem:get1}, \eqref{sem:get2}, \eqref{sem:output}, \eqref{sem:event} and by Corollary~\ref{cor:evalpos}, their target process is a subprocess of $C[Q_0]$ up to renaming of channels.
So we have a reduction
\begin{align*}
  &\conf = E_0, (\sigma_0, \pptag P), \pset_0, \cset_0, \tblcts_0,\evseq_0 \\
  &\qquad \red{p}{\ix} E_0, (\sigma_0, \kw{abort}), \pset_0, \cset_0, \tblcts_0,(\evseq_0, (\pp, \image(\sigma_0)):e)
\end{align*}
where $\pptag P$ is a subprocess of $C[Q_0]$ up to renaming of channels, by~\eqref{sem:eventabort}, \eqref{sem:find3}, or~\eqref{sem:get3}.

\item In case~\eqref{sem:eventt}, the initial configuration of the rule~\eqref{sem:eventt} is of the form
$E_1, \ab \sigma_1, \ab \pptag\kevent{e(\tup{a})};N, \ab \tblcts_1,\ab \evseq_1$. By Lemma~\ref{lem:start_from_pp}, Property~\ref{start_from_pp5} applied to this configuration with $l = 0$, 
  we have reductions
\begin{align*}
&\conf = E_0, \sigma_0, \pptag\kevent{e(\tup{M'})};N, \tblcts_0,\evseq_0 \\
&\qquad \red{p_0}{\ix_0} \dots \red{p_1}{\ix_1} E_1, \sigma_1, \pptag\kevent{e(\tup{a})};N, \tblcts_1,\evseq_1 \\
&\qquad \red{1}{} E_1, \sigma_1, N, \tblcts_1, (\evseq_1, (\pp, \image(\sigma_1)):e(\tup{a}))
\end{align*}
where $\pptag\kevent{e(\tup{M'})};N$ is a subterm of $C[Q_0]$, by any number of applications of~\eqref{sem:ctxt} and a final application of~\eqref{sem:eventt}.

\item In case~\eqref{sem:ctxevent}, the only rule that can conclude with a process
$C[\keventabort{(\pp, \tup{a}_0):e}]$ is \eqref{sem:ctx}. (The rules~\eqref{sem:new},
\eqref{sem:let}, \eqref{sem:if1}, \eqref{sem:if2}, \eqref{sem:find1}, \eqref{sem:find2}, \eqref{sem:insert}, \eqref{sem:get1}, \eqref{sem:get2}, \eqref{sem:output}, \eqref{sem:event} cannot conclude with $C[\keventabort{(\pp, \tup{a}_0):e}]$ because, by Corollary~\ref{cor:evalpos}, their target process is a subprocess of $C[Q_0]$ up to renaming of channels, and the initial process $C[Q_0]$ does not contain the abort event value $\keventabort{(\pp, \tup{a}_0):e}$.)
Hence, there is a rule that concludes with a term $\keventabort{(\pp, \tup{a}_0):e}$.

In cases~\eqref{sem:finde} and \eqref{sem:gete}, there is also a rule that concludes with a term $\keventabort{(\pp, \ab \tup{a}_0):e}$.

The only rules that conclude with a term $\keventabort{(\pp, \tup{a}_0):e}$ are
\eqref{sem:findte}, \eqref{sem:findt3}, \eqref{sem:gette}, \eqref{sem:gett3}, \eqref{sem:eventabortt}, and~\eqref{sem:ctxeventt}. (It cannot be~\eqref{sem:newt},
\eqref{sem:lett}, \eqref{sem:ift1}, \eqref{sem:ift2}, \eqref{sem:findt1}, \eqref{sem:findt2}, \eqref{sem:insertt}, \eqref{sem:gett1}, \eqref{sem:gett2}, \eqref{sem:eventt}, \eqref{sem:definedyes} because, by Corollary~\ref{cor:evalpos}, their target term is a subterm of $C[Q_0]$, and the initial process $C[Q_0]$ does not contain the abort event value $\keventabort{(\pp, \tup{a}_0):e}$.)
In cases~\eqref{sem:findte} and \eqref{sem:gette}, there is recursively another rule that concludes with $\keventabort{(\pp, \tup{a}_0):e}$.
In case~\eqref{sem:ctxeventt}, the only rule that can conclude with $C[\keventabort{(\pp, \tup{a}_0):e}]$ is~\eqref{sem:ctxt}, so there is recursively another rule that concludes with $\keventabort{(\pp, \tup{a}_0):e}$.
Therefore, $\keventabort{(\pp, \tup{a}_0):e}$ ultimately comes from an application of~\eqref{sem:eventabortt}, \eqref{sem:findt3}, or~\eqref{sem:gett3}:
\[\conf = E_0, \sigma_0, \pptag N, \tblcts_0,\evseq_0 \red{p}{\ix} E_0, \sigma_0, \keventabort{(\pp, \image(\sigma_0)):e}, \tblcts_0,\evseq_0\]
where $N = \keventabort{e}$, $N = \FIND\unique{e}\dots$, or $N = \GET\unique{e}\dots$
Furthermore, the only rules that can conclude with such a term
$\pptag N$ are \eqref{sem:newt}, \eqref{sem:ift1}, \eqref{sem:ift2}, 
\eqref{sem:lett}, \eqref{sem:findt1}, \eqref{sem:findt2}, \eqref{sem:insertt}, \eqref{sem:gett1}, \eqref{sem:gett2}, \eqref{sem:eventt}, \eqref{sem:definedyes} and, by Corollary~\ref{cor:evalpos}, their target term is a subterm of $C[Q_0]$.

\end{itemize}
In all cases, since $e$ does not occur in $C$, $\pp$ is in fact a program point of $Q_0$.
Therefore, the process or term at program point $\pp$ in $Q_0$ is of the form
$\pptag\kevent{e(\tup{M'})}; \dots$, $\pptag\keventabort{e}$,
$\pptag\FIND\unique{e}\dots$, or $\pptag\GET\unique{e}\dots$.
In cases~\eqref{sem:event} and~\eqref{sem:eventt}, 
we have $\sigma_0 = \sigma_1$ by Lemma~\ref{lem:sem-ext}.
In all cases, $\dom(\sigma_0) = \replidx{\pp}$ are the current replication indices at program point $\pp$ by Lemma~\ref{lem:curidx},
and $\image(\sigma_0) = \tup{a}_0$, so
$\sigma_0 = [\replidx{\pp} \mapsto \tup{a}_0]$.
Moreover, $\tup{M'}$ are simple terms, so their evaluation can be written
$E_0, \{ \replidx{\pp} \mapsto \tup{a}_0 \}, \tup{M'} \evalterm \tup{a}$.
Let $\sigma = \{ \theta'\replidx{\pp} \mapsto \tup{a}_0 \}$.
We have $\evseq(\venv(\step)) = (\pp, \sigma(\theta'\replidx{\pp})):e(\tup{a})$.

We have $E_0, \sigma_0, \tup{M'} \evalterm \tup{a}$ so
$E_0, \sigma, \theta'\tup{M'} \evalterm \tup{a}$.
The environment $E_{\trace}$ extends $E_0$, so
$E_{\trace}, \ab \sigma, \ab \theta'\tup{M'} \evalterm \tup{a}$, so
$\trace, \sigma\cup\venv \vdash \theta'\tup{M'} = \tup{M}$.

The configuration $\conf$ is at program point $\pp$ in $\trace$, so
by Corollary~\ref{cor:factsP}, we have
$\trace, \sigma \vdash \theta'\facts_{\pp,\case}$.

When the process or term at $\pp$ is $\pptag\keventabort{e}$, $\pptag\FIND\unique{e}\dots$, or $\pptag\GET\unique{e}\dots$, the environment and replication indices in $\conf$ are the same as at the end of the trace, since the execution after $\conf$ applies \eqref{sem:eventabort}, \eqref{sem:find3}, or \eqref{sem:get3}, which terminate the trace keeping the same environment and replication indices as in $\conf$ or \eqref{sem:eventabortt}, \eqref{sem:findt3}, or \eqref{sem:gett3} which build an abort event value keeping the same environment and replication indices as in $\conf$, followed by some rules among \eqref{sem:findte}, \eqref{sem:gette}, \eqref{sem:ctxt}, \eqref{sem:ctxeventt}, \eqref{sem:finde}, \eqref{sem:gete}, \eqref{sem:ctx}, and~\eqref{sem:ctxevent}, which preserve the environment and replication indices that come with the abort event value.
So $\trace, \sigma \vdash \lppf(\pp, \theta'\replidx{\pp})$.

Therefore, $\pp$ executes $F$ and $\trace, \sigma\cup \venv \vdash \theta'\fset^0_{F, \pp, \case}$ for some $\case$.

Suppose additionally that $\trace$ does not execute any non-unique event of $Q_0$.
Let $\trace''$ be the prefix of $\trace$ that stops at the first evaluated output $\coutput{c[\tup{a'}]}{b};Q$ that follows $\conf$, or $\trace'' = \trace$ if $\trace$ contains no output after $\conf$. 
By Lemma~\ref{lem:factsPfut}, we have
$\trace'' \vdash \futfset_{\pp}$, so $\trace'', \sigma \vdash \theta'\futfset_{\pp}$.
Moreover, $E$ extends $E_{\trace''}$ and by Lemma~\ref{lem:sem-ext}, $\evseq_{\trace}$ extends $\evseq_{\trace''}$\bb{TODO Write a corollary like Corollary~\ref{cor:factsP}?}, so $\trace, \sigma \vdash \theta'\futfset_{\pp}$.
Therefore, $\trace, \sigma\cup \venv \vdash \theta'\fset_{F, \pp, \case}$ for some $\case$.
  \proofcomplete
\end{proof}

\begin{lemma}\label{lem:nicorresp}
Let $\corresp = \sem{\forall \tup{x}:\tup{T}; \psi \Rightarrow \exists \tup{y}:\tup{T'}; \phi}$ be a non-injective correspondence that does not use non-unique events, 
with $\psi = F_1 \wedge \ldots \wedge F_m$, $\tup{x} = \fvar(\psi)$, and $\tup{y} = \fvar(\phi)\setminus\fvar(\psi)$. 
Let $Q_0$ be a process that satisfies Properties~\ref{prop:notables} and~\ref{prop:autosarename}.
Suppose that, in $Q_0$, the arguments of the events that occur in $\psi$ are always simple terms.

Let $\sset = \{ (\pp_1, \case_1, \dots, \pp_m, \case_m) \mid \forall j \leq m, \pp_j$ executes $F_j$ and $\case_j$ is a case for $\fset_{\pp_j, \case_j}\}$.
Let $C$ be an evaluation context acceptable for $Q_0$ with public variables $V$ that does not contain events used by $\corresp$.
Let $\trace$ be a full trace of $C[Q_0]$ that does not execute any non-unique event of $Q_0$.
If $\trace \vdash \neg\corresp$, then 
for any $\theta$ family of substitutions equal to the identity on $\tup{x}$,
$\trace \vdash \neg \yctran{\prove{\corresp}(\theta, \sset)}$.
\end{lemma}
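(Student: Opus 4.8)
The plan is to argue contrapositively at the level of the proof algorithm. Since $\yctran{\prove{\corresp}(\theta,\sset)}=\bigwedge_{(\pp_1,\case_1,\dots,\pp_m,\case_m)\in\sset}\yctran{\prove{\corresp}(\theta,\pp_1,\case_1,\dots,\pp_m,\case_m)}$, it suffices to exhibit \emph{one} tuple $(\pp_1,\case_1,\dots,\pp_m,\case_m)\in\sset$ whose conjunct the trace falsifies, i.e.\ for which $\trace\vdash\neg\yctran{\prove{\corresp}(\theta,\pp_1,\case_1,\dots,\pp_m,\case_m)}$. From a counterexample trace for $\corresp$, I would read off such a tuple from the events that witness $\psi$, then use the collected facts at those program points together with the failure of $\phi$ to contradict the equational proof obligation.

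First I would unfold $\trace\vdash\neg\corresp$. By Definition~\ref{def:nicorresp}, $\neg\corresp$ is $\exists\tup{x}\in\tup{T},(\psi\wedge\forall\tup{y}\in\tup{T'},\neg\phi)$, so there is an environment $\venv_x$ with $\venv_x(\tup{x})\in\tup{T}$ such that $\trace,\venv_x\vdash\psi$ and $\trace,\venv_x\vdash\forall\tup{y}\in\tup{T'},\neg\phi$. As $\psi=F_1\wedge\dots\wedge F_m$ is a conjunction of events, $\trace,\venv_x\vdash F_j$ for each $j$, and by the semantics of $\fevent{\cdot}$ there is a step $\step_j$ with $\trace,\venv_x\cup\{\step_j\}\vdash F_j@\step_j$. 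I would then apply Lemma~\ref{lem:fsetFP} to each $F_j$: its hypotheses hold because $C$ is acceptable and contains no events used by $\corresp$ (so those events do not occur in $C$), the arguments of those events in $Q_0$ are simple, and $\trace$ executes no non-unique event of $Q_0$. Choosing the renamings $\theta_1,\dots,\theta_m$ to be exactly those used inside $\prove{\corresp}(\theta,\pp_1,\case_1,\dots)$ (they have pairwise disjoint images), the lemma yields for each $j$ a program point $\pp_j$ executing $F_j$, a case $\case_j$, and a mapping $\sigma_j$ with domain $\theta_j\replidx{\pp_j}$ such that $\trace,\sigma_j\cup\venv_x\cup\{\step_j\}\vdash\theta_j\fset_{F_j,\pp_j,\case_j}$. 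The resulting tuple lies in $\sset$. Since the sets $\fset_{F_j,\pp_j,\case_j}$ contain no step markers, the dependence on $\step_j$ drops, and since the $\theta_j$ have disjoint images the $\sigma_j$ have disjoint domains; taking $\sigma=\sigma_1\cup\dots\cup\sigma_m$ gives $\trace,\venv_x\cup\sigma\vdash\fset$ for $\fset=\bigcup_j\theta_j\fset_{F_j,\pp_j,\case_j}$.

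It then remains to add the negated conclusion. Writing $\theta'=\theta(\pp_1,\case_1,\dots,\pp_m,\case_m)$, I would show $\trace,\venv_x\cup\sigma\vdash\fnot\theta'\phi$: if instead $\trace,\venv_x\cup\sigma\vdash\theta'\phi$, then by the substitution lemma for formulas (the analogue of Lemma~\ref{lem:substreplindex}) and the fact that the free variables of $\phi$ lie in $\tup{x}\cup\tup{y}$, there are values $\tup{b}$, the evaluations of $\theta'\tup{y}$, with $\trace,\venv_x[\tup{y}\mapsto\tup{b}]\vdash\phi$; by well-typedness of $\theta'$ one has $\tup{b}\in\tup{T'}$, contradicting $\trace,\venv_x\vdash\forall\tup{y}\in\tup{T'},\neg\phi$. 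Hence $\trace,\venv_x\cup\sigma\vdash\bigwedge\fset\wedge\fnot\theta'\phi$. Since the domain of $\venv_x\cup\sigma$ covers the variables $\tup{z}$ (the non-process variables of $\fset$, of the image of $\theta'$, and of $\tup{x}$), this gives $\trace\vdash\exists\tup{z},(\bigwedge\fset\wedge\fnot\theta'\phi)$, that is $\trace\not\vdash\forall\tup{z},\neg(\bigwedge\fset\wedge\fnot\theta'\phi)$. By the contrapositive of Lemma~\ref{lem:shows}, $\trace\vdash\neg\yctran{\fset\shows_{\theta'}\phi}=\neg\yctran{\prove{\corresp}(\theta,\pp_1,\case_1,\dots,\pp_m,\case_m)}$, and therefore $\trace\vdash\neg\yctran{\prove{\corresp}(\theta,\sset)}$.

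The main obstacle I anticipate is the bookkeeping in the middle step: matching the renamings $\theta_j$ supplied to $\prove{\corresp}$ with those fed to Lemma~\ref{lem:fsetFP}, verifying that the step markers $\step_j$ genuinely play no role in the collected facts, and checking that the disjoint-image condition lets the partial environments $\sigma_j$ be merged consistently (all agreeing with $\venv_x$ on $\tup{x}$). The existential-witness step is the other delicate point: it requires making the substitution and typing reasoning for $\tup{y}$ precise, in particular that $\theta'$ is well-typed so that its evaluated image falls within $\tup{T'}$, which is what lets $\forall\tup{y}\in\tup{T'},\neg\phi$ be instantiated.
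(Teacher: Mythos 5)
Your proposal is correct and follows essentially the same route as the paper's proof: unfold $\trace \vdash \neg\corresp$ to obtain the witness environment, apply Lemma~\ref{lem:fsetFP} to each $F_j$ to get the tuple $(\pp_1,\case_1,\dots,\pp_m,\case_m) \in \sset$ and the facts $\theta_j\fset_{F_j,\pp_j,\case_j}$, instantiate the universal over $\tup{y}$ to obtain $\neg\theta'\phi$, and conclude by (the contrapositive of) Lemma~\ref{lem:shows}. The bookkeeping you flag as delicate (step markers dropping out, merging the $\sigma_j$ with disjoint domains, well-typedness of $\theta'$ for the instantiation) is handled implicitly in the paper's proof and poses no real obstacle.
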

\begin{proof}
Since $\trace \vdash \neg\corresp$, we have $\trace \vdash \exists \tup{x} \in \tup{T}, F_1 \wedge \dots \wedge F_m \wedge \forall \tup{y} \in \tup{T'}, \neg \phi$.
So there exists $\venv$ that maps $\tup{x}$ to elements of $\tup{T}$ such that
$\trace, \venv \vdash F_1 \wedge \dots \wedge F_m \wedge \forall \tup{y} \in \tup{T'}, \neg \phi$.
By Lemma~\ref{lem:fsetFP}, for all $j \leq m$, there exists a program point $\pp_j$ (in $Q_0$) that executes $F_j$ and a case $\case_j$ such that, for any $\theta_j$ renaming of $\replidx{\pp_j}$ to fresh replication indices, there exists a mapping $\sigma_j$ with domain $\theta_j\replidx{\pp_j}$ such that $\trace, \sigma_j\cup \venv \vdash \theta_j\fset_{F_j, \pp_j, \case_j}$.
Since $\trace, \venv \vdash \forall \tup{y} \in \tup{T'}, \neg \phi$, we have
$\trace, \venv \vdash \neg \theta(\pp_1, \case_1,\dots, \pp_m,\case_m) \phi$.
Therefore, $\trace, \sigma_1 \cup \dots \cup \sigma_m \cup \venv \vdash
\theta_1\fset_{F_1,\pp_1,\case_1} \cup \dots \cup \theta_m\fset_{F_m,\pp_m,\case_m} \cup \{ \neg \theta(\pp_1, \case_1,\dots, \pp_m,\case_m) \phi \}$, so
$\trace \vdash \exists \theta_1\replidx{\pp_1}, \dots, \exists \theta_m\replidx{\pp_m}, \exists \tup{x} \in \tup{T}, \bigwedge\fset_0(\pp_1,\case_1, \dots, \pp_m,\case_m) \wedge \neg\theta(\pp_1, \ab \case_1, \ab \dots, \ab \pp_m,\ab \case_m)\phi$ where $\fset_0(\pp_1, \case_1, \dots, \pp_m, \case_m) = \theta_1\fset_{F_1,\pp_1,\case_1} \cup \dots \cup \theta_m\fset_{F_m,\pp_m,\case_m}$.
We have 
\[\begin{split}
&\yctran{\prove{\corresp}(\theta, \pp_1, \case_1, \dots, \pp_m, \case_m)} \\
&\quad\textstyle\Rightarrow \forall \theta_1\replidx{\pp_1}, \dots, \forall \theta_m\replidx{\pp_m}, \forall \tup{x} \in \tup{T}, \neg \left(\bigwedge\fset_0(\pp_1, \case_1, \dots, \pp_m, \case_m) \wedge \neg\theta(\pp_1, \ab \case_1, \ab \dots, \ab \pp_m,\ab \case_m)\phi\right)
\end{split}\]
by Lemma~\ref{lem:shows}. (The non-process variables in $\fset_0(\pp_1, \case_1, \dots, \pp_m, \case_m)$, in the image of $\theta(\pp_1, \ab \case_1, \ab \dots, \ab \pp_m,\ab \case_m)$, and in $\tup{x}$ are in $\theta_1\replidx{\pp_1}$, \dots, $\theta_m\replidx{\pp_m}$, $\tup{x}$.)
So $\trace \vdash \neg \yctran{\prove{\corresp}(\theta, \pp_1, \case_1, \dots, \pp_m, \case_m)}$,
so $\trace \vdash \neg \yctran{\prove{\corresp}(\theta, \sset)}$.
\proofcomplete
\end{proof}

\begin{proofof}{Proposition~\ref{prop:nicorresp}}
Let $C$ be an evaluation context acceptable for $Q_0$ with public variables $V$ that does not contain events used by $\corresp$.
We have
\begin{align*}
\Advtev{Q_0}{\corresp}{C}{\Dfalse} 
& = \Pr[C[Q_0] : \neg \corresp \wedge \neg\nonunique{Q_0}]\\
& \leq \Pr[C[Q_0] : \neg \yctran{\prove{\corresp}(\theta, \sset)}]
\tag*{by Lemma~\ref{lem:nicorresp}}\\
& \leq \Prss{C[Q_0]}{\neg \yctran{\prove{\corresp}(\theta, \sset)}}
\tag*{by Lemma~\ref{lem:Pr-Prss}}\\
& \leq p(C)
\end{align*}
So $\bound{Q_0}{V}{\corresp}{\Dfalse}{p}$.
\proofcomplete
\end{proofof}

\begin{example}\label{exa:ni}
  Let us prove that the example $G_1$ satisfies \eqref{c1}.
  We first study the facts $\facts_{\pp_B}$ that hold at the program point
  $\pp_B$ that executes event $e_B$.
  $\facts_{\pp_B}$ contains $\defined(m[u[i_B]])$, $\defined(x_B[u[i_B]])$,
$\defined(x_N[u[i_B]])$, $x_{pk_A}[i_B] = pk_A$, $B = x_B[u[i_B]]$,
and $N[i_B] = x_N[u[i_B]]$, because the condition of $\FIND$ holds
at $\pp_B$. 
Moreover, we have $\defined(m[u[i_B]]) \ab \in \facts_{\pp_B}$,
and, when $m[i_A]$ is defined, $(\pp_A,i_A):\fevent{e_A(pk_A, \ab x_B[i_A], \ab x_N[i_A])}$
holds, so $(\pp_A,i_A):\fevent{e_A(pk_A, \ab x_B[i_A], \ab x_N[i_A])} \{ u[i_B]/i_A \} \in \facts_{\pp_B}$, that is,
$(\pp_A, u[i_B]):\fevent{e_A(pk_A, \ab x_B[u[i_B]], \ab x_N[u[i_B]])} \in \facts_{\pp_B}$.
In other words, since $m$ is defined at index $u[i_B]$, 
event $e_A$ has been executed in the copy of $Q'_{1A}$ of index
$u[i_B]$.
($\facts_{\pp_B}$ also contains other facts, 
which are useless for proving the desired correspondences,
so we do not list them.)

For $\psi = F = \fevent{e_B(x,y,z)}$, 
$\pp_B$ is the only program point  that executes $F$,
so this event has been executed
in some copy of $Q'_{1B}$ of index $i'_B$, with $x_{pk_A}[i'_B] = x, B = y, N[i'_B] = z$.
Then, when $\psi$ holds, the facts $\theta'\fset_{F,\pp_B} \supseteq \theta'\facts_{\pp_B}
\cup \{ x_{pk_A}[i'_B] = x, B = y, N[i'_B] = z \}$ hold for some value of
$i'_B$, with $\theta' = \{i'_B/i_B\}$. (We consider a single case $\case$ here, so we can
simply omit the case $\case$.)

Furthermore, the substitution
$\theta(\pp_B)$ is the identity since all variables of $\phi$
also occur in $\psi$. Then we just have to show that
$\theta'\fset_{F,\pp_B}$ implies $\phi = \fevent{e_A(x,y,z)}$, that is,
$\theta'\fset_{F,\pp_B} \shows_{\theta(\pp_B)} \ab \fevent{e_A(x,y,z)}$.
Since $(\pp_A,u[i_B]):\fevent{e_A(pk_A, \ab x_B[u[i_B]], \ab x_N[u[i_B]])} \in \facts_{\pp_B}$, 
we have $(\pp_A,\ab u[i_B']):\fevent{e_A(pk_A, \ab x_B[u[i'_B]], \ab x_N[u[i'_B]])} \in \theta'\fset_{F, \pp_B}$,
so the equational prover just has to prove by contradiction
that $e_A(pk_A, \ab x_B[u[i'_B]], \ab x_N[u[i'_B]])  = 
e_A(x,\ab y, \ab z)$, that is, 
$pk_A = x$, $x_B[u[i'_B]] = y$, and $x_N[u[i'_B]] = z$.
The proof succeeds using the following equalities of $\theta'\fset_{F,\pp_B}$:
$x_{pk_A}[i'_B] = x$, $B = y$, $N[i'_B] = z$, 
$x_{pk_A}[i'_B] = pk_A$, $B = x_B[u[i'_B]]$,
and $N[i'_B] = x_N[u[i'_B]]$.

Hence, $G_1$ satisfies~\eqref{c1} with any public variables $V$: if
$\psi = \fevent{e_B(x,y,z)}$ has been executed, then
$\phi = \fevent{e_A(x,y,z)}$ has been executed.
\end{example}

In the implementation, 
the substitution $\theta$ is initially defined as the identity on 
$\tup{x} = \fvar(\psi)$. 
It is defined on other variables when checking $\fset
\shows_{\theta} M$ by trying to find $\theta$ such that $\theta M \in
\fset$, and when checking $\fset \shows_{\theta} \fevent{e(M_1,\ab 
\ldots, \ab 
M_m)}$ by trying to find $\theta$ such that $\theta \fevent{e(M_1,\ab
\ldots, \ab M_m)} \in \fset$.  When we do not
manage to find the image by $\theta$ of all variables of $M$, 
resp. $M_1, \ldots,
M_m$, the check fails. 
When there are several suitable facts $\theta M \in \fset$ or 
$\theta \fevent{e(M_1, \ab \ldots, \ab M_m)} \in \fset$,
the system tries all possibilities.

\subsubsection{Injective Correspondences}\label{sec:injcorresp}

Injective correspondences are more difficult to check than
non-injective ones, because they require distinguishing between
several executions of the same event. We achieve that by relying on
the pair (program points, replication indices) that is recorded in the
sequence $\evseq$ together with each event: distinct executions of
events either occur at different program points or have different
values of replication indices.

We extend Definition~\ref{def:followsfset} to injective events, 
with exactly the same definition as for non-injective events.

The proof of injective correspondences extends that for non-injective
correspondences: for a correspondence $\forall \tup{x}:\tup{T}; \psi \Rightarrow \exists \tup{y}:\tup{T}'; \phi$, we
additionally prove that distinct executions of the injective
events of $\psi$ correspond to distinct executions of each injective event of
$\phi$, that is, if the injective events of
$\psi$ have different pairs (program point, replication indices), then each injective event of $\phi$ 
has a different pair (program point, replication indices). In order to achieve this
proof, we collect the following information for each injective event of~$\phi$:
\begin{itemize}

\item the set of facts $\fset$ that are known to hold, which will be used 
to reason on replication indices of events; 

\item the program point and replication indices of the considered
  injective event of $\phi$, stored in a pair $M_0$; these program
  point and indices are computed when we prove that this event is
  executed;

\item the program point and replication indices of the injective events of $\psi$,
stored as a mapping $\sri = \{ j \mapsto (\pp_j,\theta_j\replidx{\pp_j}) \mid F_j$
is an injective event$\}$, where $\psi = F_1 \wedge \ldots \wedge F_m$, $\pp_j$ is the program point that executes $F_j$, and $\theta_j$ is a renaming of $\replidx{\pp_j}$ to fresh replication indices, for $j \leq m$;

\item the set $\vset$ containing the replication indices in $\fset$ and 
the variables $\tup{x}$ of $\psi$;
these variables will be renamed to fresh variables in order to
avoid conflicts of variable names between different events.

\end{itemize}
This information is stored in a set $\scoll$, which contains quadruples
$(\fset, M_0, \sri, \vset)$.
We will show that, if the pair (program point, replication indices) of two executions 
of the injective events
of $\psi$ are different, then the pair (program, replication indices) of the
corresponding executions of the considered injective event of $\phi$
are also different.
The equality between pairs (program point, replication indices)
is obviously defined as the equality between program points and 
between replication indices.
Formally, we consider $(\fset, M_0, \sri, \vset)$ and
$(\fset', M_0', \sri', \vset')$ in $\scoll$. We rename the
variables $\vset'$ of the second element to fresh variables by a
substitution $\theta''$ and show that, if $\sri \neq \theta'' \sri'$,
then $M_0 \neq \theta'' M_0'$ (knowing $\fset$ and
$\theta'' \fset'$). This property implies injectivity.

Since this reasoning is done for each injective event in $\phi$,
we collect the associated sets $\scoll$ in a pseudo-formula
$\coll$, obtained by replacing each injective event of $\phi$ with
a set $\scoll$ and all other leaves of $\phi$ with $\bot$.

We say that $\vdash \coll$ when for all non-bottom leaves $\scoll$ of
$\coll$, for all $(\fset, M_0, \sri, \vset)$, $(\fset', M_0',
\sri', \vset')$ in $\scoll$, $\fset \cup \theta'' \fset' \cup \{
\bigvee_{j \in \dom(\sri)} \sri(j) \neq \theta'' \sri'(j), 
M_0  = \theta'' M_0'\}$ yields a contradiction, where 
the substitution $\theta''$ is a renaming of variables in $\vset'$ to 
distinct fresh variables. As explained above, the condition $\vdash \coll$
guarantees injectivity.

We extend the definition of $\fset \shows_{\theta} \phi$ used for
non-injective correspondences to $\fset \shows_{\theta}^{\sri, \vset, \coll} \phi$,
which means that $\fset$ implies $\theta \phi$
and $\coll$ correctly collects the tuples $(\fset, M_0, \sri, \vset)$
associated to this proof.
Formally, we define:
\begin{tabbing}
$\fset \shows_{\theta}^{\sri,\vset,\bot} M$ if and only if 
$\fset \cup \{ \fnot \theta M\}$ yields a contradiction\\[1mm]
$\fset \shows_{\theta}^{\sri,\vset,\bot} \fevent{e(M_1, \ldots, M_m)}$ if and only if \\*
\qquad there exist $M'_0, \ldots, M'_m$ such that
 $M'_0:\fevent{e(M'_1, \ldots, M'_m)} \in \fset$ and \\*
\qquad $\fset \cup \{  \bigvee_{j = 1}^m \theta M_j \neq M'_j \}$ yields a contradiction\\[1mm]
$\fset \shows_{\theta}^{\sri,\vset,\scoll} \fievent{e(M_1, \ldots, M_m)}$ if and only if \\*
\qquad  there exist $M'_0, \ldots, M'_m$ such that
$M'_0:\fevent{e(M'_1, \ldots, M'_m)} \in \fset$,\\*
\qquad $\fset \cup \{  \bigvee_{j = 1}^m \theta M_j \neq M'_j \}$ yields a contradiction, and $(\fset, M'_0, \sri, \vset) \in \scoll$.\\[1mm]
$\fset \shows_{\theta}^{\sri,\vset,\coll_1 \wedge \coll_2} \phi_1 \wedge \phi_2$ 
if and only if $\fset \shows_{\theta}^{\sri,\vset,\coll_1} \phi_1$ and 
$\fset \shows_{\theta}^{\sri,\vset,\coll_2} \phi_2$\\[1mm]
$\fset \shows_{\theta}^{\sri,\vset,\coll_1 \vee \coll_2} \phi_1 \vee \phi_2$ 
if and only if $\fset \shows_{\theta}^{\sri,\vset,\coll_1} \phi_1$ or 
$\fset \shows_{\theta}^{\sri,\vset,\coll_2} \phi_2$
\end{tabbing}
These formulas differ from the non-injective case in that we propagate
$\sri$, $\vset$, $\coll$ and, in the case of injective events, 
we make sure that quadruples $(\fset, M'_0, \sri, \vset)$
are collected correctly by requiring that $(\fset, M'_0, \sri, \vset) 
\in \scoll$.

Let $\corresp = \sem{\forall \tup{x}:\tup{T}; \psi \Rightarrow \exists \tup{y}:\tup{T}'; \phi}$ be a 
correspondence that does not use non-unique events, with $\psi = F_1 \wedge \ldots \wedge F_m$, $\tup{x} = \fvar(\psi)$, and $\tup{y} = \fvar(\phi)\setminus\fvar(\psi)$.
Suppose that, in $Q_0$, the arguments of the events that occur in $\psi$ are always simple terms.
Suppose that, for all $j \leq m$, $\pp_j$ executes $F_j$ and $\case_j$ is a case for $\fset_{\pp_j, \case_j}$.
For $j \leq m$, let $\theta_j$ be a renaming of $\replidx{\pp_j}$ to fresh replication indices. (The renamings $\theta_j$ have pairwise disjoint images.)
Let $\coll$ be a pseudo formula and $\theta$ be a family parameterized by $\pp_1, \case_1,\dots, \case_m, \pp_m$ of substitutions equal to the identity on $\tup{x}$.
We define $\prove{\corresp}(\coll, \theta, \pp_1, \case_1, \dots, \pp_m, \case_m) = (\fset\shows_{\theta(\pp_1, \case_1, \dots, \pp_m, \case_m)}^{\sri, \vset, \coll} \phi)$ where
$\fset = \theta_1\fset_{F_1,\pp_1,\case_1} \cup \dots \cup \theta_m\fset_{F_m,\pp_m,\case_m}$,
$\sri = \{ j \mapsto (\pp_j,\theta_j\replidx{\pp_j}) \mid F_j$ is an injective event$\}$, and
$\vset = \fvar(\theta_1\replidx{\pp_1}) \cup \dots \cup \fvar(\theta_m\replidx{\pp_m}) \cup \{\tup{x}\}$.
The algorithm $\prove{\corresp}(\coll, \theta, \pp_1, \dots, \pp_m)$ shows that the non-injective version of the correspondence $\corresp$ holds assuming the events in $\psi = F_1 \wedge \ldots \wedge F_m$ are executed at program points $\pp_1, \dots, \pp_m$ respectively. Indeed, in this case, the facts $\fset = \theta_1\fset_{F_1,\pp_1} \cup \dots \cup \theta_m\fset_{F_m,\pp_m}$ hold and the formula $\fset\shows_{\theta(\pp_1, \dots, \pp_m)}^{\sri, \vset, \coll} \phi$ shows that this implies $\theta(\pp_1, \dots, \pp_m) \phi$. (The substitution $\theta(\pp_1, \dots, \pp_m) \phi$ determines the values of $\tup{y}$.) Additionally, $\prove{\corresp}(\coll, \theta, \pp_1, \dots, \pp_m)$ makes sure that $\coll$ correctly collects the information needed to prove injectivity.
We also define $\prove{\corresp}(\coll, \theta, \sset) = (\vdash \coll) \wedge \bigwedge_{(\pp_1,\case_1,\dots,\pp_m,\case_m) \in \sset} \prove{\corresp}(\coll, \theta, \pp_1, \case_1, \dots, \pp_m, \case_m)$.
This algorithm proves the correspondence $\corresp$ assuming the events in $\psi$ are executed at program points in $\sset$. It verifies injectivity via $\vdash \coll$.

The following proposition shows the soundness of the proof of injective correspondences based on this algorithm.

\begin{proposition}\label{prop:icorresp}
Let $\corresp = \sem{\forall \tup{x}:\tup{T}; \psi \Rightarrow \exists \tup{y}:\tup{T}'; \phi}$ be a 
correspondence that does not use non-unique events, with $\psi = F_1 \wedge \ldots \wedge F_m$, $\tup{x} = \fvar(\psi)$, and $\tup{y} = \fvar(\phi)\setminus\fvar(\psi)$.
%
Let $Q_0$ be a process that satisfies Properties~\ref{prop:notables} and~\ref{prop:autosarename}.
Suppose that, in $Q_0$, the arguments of the events that occur in $\psi$ are always simple terms.

Let $\sset = \{ (\pp_1, \case_1, \dots, \pp_m, \case_m) \mid \forall j \leq m, \pp_j$ executes $F_j$ and $\case_j$ is a case for $\fset_{\pp_j, \case_j}\}$.
Assume that there exist a pseudo-formula $\coll$
and a family of substitutions $\theta$ equal to the
identity on $\tup{x}$ such that 
$\prove{\corresp}(\coll, \theta, \sset)$.
Assume that for all evaluation contexts $C$ acceptable for $Q_0$,
$\Prss{C[Q_0]}{\neg \yctran{\prove{\corresp}(\coll, \theta, \sset)}}\leq p(C)$.

Then $\bound{Q_0}{V}{\corresp}{\Dfalse}{p}$ for any $V$.
\end{proposition}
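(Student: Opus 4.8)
The plan is to follow the non-injective proof (Proposition~\ref{prop:nicorresp}) as closely as possible, isolating injectivity as the one genuinely new ingredient. First I would establish an injective analogue of Lemma~\ref{lem:nicorresp}, stated in positive form: for every evaluation context $C$ acceptable for $Q_0$ with public variables $V$ not containing events used by $\corresp$, and every full trace $\trace$ of $C[Q_0]$ that executes no non-unique event of $Q_0$, if $\trace \vdash \yctran{\prove{\corresp}(\coll, \theta, \sset)}$ then $\trace \vdash \corresp$. Because satisfaction of a formula by a fixed trace is classical, the contrapositive is exactly the statement consumed by the probability computation, so Proposition~\ref{prop:icorresp} then follows verbatim as in the non-injective case: $\Advtev{Q_0}{\corresp}{C}{\Dfalse} = \Pr[C[Q_0] : \neg\corresp \wedge \neg\nonunique{Q_0}] \leq \Pr[C[Q_0] : \neg\yctran{\prove{\corresp}(\coll, \theta, \sset)}] \leq \Prss{C[Q_0]}{\neg\yctran{\prove{\corresp}(\coll, \theta, \sset)}} \leq p(C)$, using Lemma~\ref{lem:Pr-Prss} for the last-but-one inequality, which yields $\bound{Q_0}{V}{\corresp}{\Dfalse}{p}$.

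To prove the analogue, I would fix such a $\trace$ with $\trace \vdash \yctran{\prove{\corresp}(\coll, \theta, \sset)}$ and exhibit, following Definition~\ref{def:icorresp}, functions $\Fstepfun_1, \dots, \Fstepfun_k$ that are injective in their $I$-indexed arguments and make the implication $\forall \step_1, \dots, \step_m, \forall \tup{x}, (\psistep \Rightarrow \exists \tup{y}, \phistep)$ hold. Given steps $\step_1, \dots, \step_m$ and values $\tup{x}$ with $\psistep$ true, I would apply Lemma~\ref{lem:fsetFP} (whose statement already covers injective events) to each executed event $F_j$ to obtain a program point $\pp_j$, a case $\case_j$, a renaming $\theta_j$ and a mapping $\sigma_j$ such that the pair recorded with $F_j$ in $\evseq$ is $(\pp_j, \sigma_j(\theta_j\replidx{\pp_j}))$ and $\theta_j\fset_{F_j, \pp_j, \case_j}$ holds at $\trace$. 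Fixing a canonical choice makes $(\pp_1, \case_1, \dots, \pp_m, \case_m) \in \sset$, so the hypothesis gives $\trace \vdash \yctran{\fset \shows^{\sri, \vset, \coll}_{\theta(\cdots)} \phi}$ with $\fset = \bigwedge_j \theta_j\fset_{F_j, \pp_j, \case_j}$. I would then prove an injective refinement of Lemma~\ref{lem:shows}: this formula implies both the non-injective conclusion $\forall \tup{z}, \neg(\bigwedge\fset \wedge \fnot\,\theta(\cdots)\phi)$ and, for each injective-event leaf of $\phi$ actually selected in the proof, that the selected fact $M'_0:\fevent{e(\cdots)}$ gives rise to a quadruple $(\fset, M'_0, \sri, \vset)$ recorded in the matching leaf $\scoll$ of $\coll$. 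The first part, combined with $\trace \vdash \bigwedge\fset$, yields $\theta(\cdots)\phi$ (reading the witnesses for $\tup{y}$ off $\theta$), and for each injective event of $\phi$ I would define $\Fstepfun_j(\step_1, \dots, \step_m, \tup{x})$ to be the unique step of $\evseq$ whose recorded pair equals the value of $M'_0$, and $\bot$ when that leaf is not selected.

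The injectivity $\Inj(I, \Fstepfun_j)$ is the heart of the argument and would use the condition $\vdash \coll$. Suppose $\Fstepfun_j$ takes the same non-$\bot$ value on two argument tuples; applying the previous step to each tuple produces quadruples $(\fset, M_0, \sri, \vset)$ and $(\fset', M_0', \sri', \vset')$ in the same leaf $\scoll$, with $\trace$ satisfying both fact sets and, after the fresh renaming $\theta''$ of $\vset'$, $M_0 = \theta''M_0'$ (the two selected $\phi$-events share the recorded pair, hence the step). The definition of $\vdash \coll$ states that $\fset \cup \theta''\fset' \cup \{\bigvee_{\ell \in \dom(\sri)} \sri(\ell) \neq \theta''\sri'(\ell), M_0 = \theta''M_0'\}$ yields a contradiction, so its translation, which holds on $\trace$, forces $\trace \vdash \sri(\ell) = \theta''\sri'(\ell)$ for every $\ell \in \dom(\sri)$; that is, the injective events of $\psi$ have equal recorded pairs, hence $\step_\ell = \step'_\ell$ for all $\ell \in I$. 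This is exactly $\Inj(I, \Fstepfun_j)$, and together with the implication it establishes $\trace \vdash \corresp$.

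The main obstacle is precisely this injectivity step, and within it the requirement that each $\Fstepfun_j$ be a genuine \emph{function} of its arguments. This forces me to fix canonical choices both in Lemma~\ref{lem:fsetFP} and in the nondeterministic $\shows$ relation, and, more delicately, to justify the implicit step-to-pair correspondence used to read $\Fstepfun_j$ off $M'_0$: I must argue that the recorded pair (program point, replication indices) uniquely identifies an event execution, hence a single index in $\evseq$. This uniqueness I would derive from Lemma~\ref{lem:stronginv1exec} together with the facts that the semantic rules never create program points and that, for events not occurring in conditions of $\kw{find}$, a fixed program point is reached at most once per value of the current replication indices. Threading this correspondence uniformly through both argument tuples of the injectivity check is where the bookkeeping is heaviest; everything else transfers mechanically from the non-injective development.
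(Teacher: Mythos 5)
Your proposal is, in architecture and in almost every detail, the paper's own proof. The paper isolates exactly your trace-level statement as Lemma~\ref{lem:icorresp} (stated contrapositively, proved by contraposition, so it establishes precisely your positive form), derives the proposition by the same three-line computation $\Advtev{Q_0}{\corresp}{C}{\Dfalse} = \Pr[C[Q_0] : \neg\corresp \wedge \neg\nonunique{Q_0}] \leq \Pr[C[Q_0] : \neg\yctran{\prove{\corresp}(\coll,\theta,\sset)}] \leq \Prss{C[Q_0]}{\neg\yctran{\prove{\corresp}(\coll,\theta,\sset)}} \leq p(C)$ via Lemma~\ref{lem:Pr-Prss}, builds the functions $\Fstepfun_l$ from Lemma~\ref{lem:fsetFP} together with your ``injective refinement'' of Lemma~\ref{lem:shows} (this is the paper's Lemma~\ref{lem:ishows}, whose formula $\formula(\cdot)$ records the selected quadruples $(\fset, N_0, \sri, \vset) \in \scoll$ exactly as you describe), and proves $\Inj(I,\Fstepfun_l)$ from $\vdash\coll$ by the same renaming-and-contradiction argument, concluding $\sri(\ell) = \theta''\sri'(\ell)$ for $\ell \in I$ and from there equality of steps.

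The one step that would not go through as you wrote it is the justification of step-to-pair uniqueness. You propose to derive it from Lemma~\ref{lem:stronginv1exec}, but that lemma is about \emph{variable definitions}, and a program point executing $\kevent{e(\tup{M})}$ or $\keventabort{e}$ need not define any variable, so its execution count cannot be bounded by that lemma. The paper proves the needed fact as a standalone result, Lemma~\ref{lem:oneevent}: for events used in the correspondence, distinct indices of $\evseq$ carry distinct pairs (program point, replication indices). Its proof is a dedicated multiset counting argument over potential event executions, the same \emph{technique} as Lemmas~\ref{lem:inv1exec} and~\ref{lem:stronginv1exec} but applied to events rather than variable definitions; your intermediate claim ``a fixed program point is reached at most once per value of the current replication indices'' is essentially its content. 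Note also that restricting that claim to events outside conditions of $\kw{find}$ is not enough: a Shoup event $\keventabort{e}$ of the correspondence may occur inside a $\kw{find}$ condition, and Lemma~\ref{lem:oneevent} covers this because the recorded indices then include the $\kw{find}$ indices. Once Lemma~\ref{lem:oneevent} is invoked at the two places where you need it (well-definedness of $\Fstepfun_l$ as ``the'' step --- the paper sidesteps this by letting $\Fstepfun_l$ pick \emph{some} step satisfying the recorded-pair formula --- and the final inference from equal pairs to equal steps), your proof coincides with the paper's.
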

In the implementation, the value of $\coll$ is computed by adding
$(\fset, M'_0, \sri, \vset)$ to $\scoll$
when handling injective events during the checking of 
$\prove{\corresp}(\coll, \theta, \sset)$.
We check $\vdash \coll$ incrementally, after each addition of an
element to $\coll$. 
The proof of Proposition~\ref{prop:icorresp} relies on the following definitions and lemmas.
We have
{\allowdisplaybreaks\begin{align*}
&\yctran{\fset \shows_{\theta}^{\sri,\vset,\bot} M} = 
\forall \tup{z} \in \tup{T}'', \neg \left(\bigwedge \fset \wedge \fnot \theta M\right)\\
&\textstyle\yctran{\fset \shows_{\theta}^{\sri,\vset,\bot} \fevent{e(M_1, \ldots, M_m)}} =
\forall \tup{z} \in \tup{T}'', \neg \left(\bigwedge \fset \wedge \bigvee_{j = 1}^m \theta M_j \neq M'_j \right)\\*
&\qquad \text{for some $M'_0:\fevent{e(M'_1, \ldots, M'_m)} \in \fset$}\\
&\textstyle\yctran{\fset \shows_{\theta}^{\sri,\vset,\scoll} \fievent{e(M_1, \ldots, M_m)}} = 
\forall \tup{z} \in \tup{T}'', \neg \left(\bigwedge \fset \wedge \bigvee_{j = 1}^m \theta M_j \neq M'_j \right)\\*
&\qquad \text{for some $M'_0:\fevent{e(M'_1, \ldots, M'_m)} \in \fset$ and $(\fset, M'_0, \sri, \vset) \in \scoll$}\\
&\yctran{\fset \shows_{\theta}^{\sri,\vset,\coll_1 \wedge \coll_2} \phi_1 \wedge \phi_2} 
= \yctran{\fset \shows_{\theta}^{\sri,\vset,\coll_1} \phi_1} \wedge
\yctran{\fset \shows_{\theta}^{\sri,\vset,\coll_2} \phi_2}\\
&\yctran{\fset \shows_{\theta}^{\sri,\vset,\coll_1 \vee \coll_2} \phi_1 \vee \phi_2} = 
\begin{cases} 
\yctran{\fset \shows_{\theta}^{\sri,\vset,\coll_1} \phi_1}&\text{if }\fset \shows_{\theta}^{\sri,\vset,\coll_1} \phi_1\\
\yctran{\fset \shows_{\theta}^{\sri,\vset,\coll_2} \phi_2}&\text{otherwise}
\end{cases}
\end{align*}}%
where $\tup{z} = \vset$ and $\tup{T}''$ are the types of these variables. 
We also have
\[\begin{split}
\yctran{\vdash\coll} =
&\bigwedge_{\scoll\neq \bot\text{ leaf of }\coll}
\bigwedge_{(\fset, M_0, \sri, \vset)\in\scoll}
\bigwedge_{(\fset', M_0', \sri', \vset')\in\scoll}\forall \tup{z}\in\tup{T}'',\\
&\qquad \neg \left(\bigwedge \fset \wedge \bigwedge \theta'' \fset' \wedge
\left(\bigvee_{j \in \dom(\sri)} \sri(j) \neq \theta'' \sri'(j)\right) \wedge
M_0  = \theta'' M_0'\right)
\end{split}\]
where the substitution $\theta''$ is a renaming of variables in $\vset'$ to 
distinct fresh variables, $\tup{z} = \vset \cup \theta''\vset'$,
and $\tup{T}''$ are the types of these variables.

We define $\formula(\fset \shows_{\theta}^{\sri,\vset,\coll} \phi)$ as follows:
{\allowdisplaybreaks\begin{align*}
&\formula(\fset \shows_{\theta}^{\sri,\vset,\bot} M) = \theta M\\
&\formula(\fset \shows_{\theta}^{\sri,\vset,\bot} \fevent{e(M_1, \ldots, M_m)}) = \theta \fevent{e(M_1, \ldots, M_m)}\\
&\formula(\fset \shows_{\theta}^{\sri,\vset,\scoll} \fievent{e(M_1, \ldots, M_m)}) = {}\\
&\quad \bigvee_{(\fset, N_0, \sri, \vset) \in \scoll} \exists \step \in \mathbb{N}, N_0: \theta \fevent{e(M_1, \ldots, M_m)}@\step \\
&\formula(\fset \shows_{\theta}^{\sri,\vset,\coll_1 \wedge \coll_2} \phi_1 \wedge \phi_2) =
\formula(\fset \shows_{\theta}^{\sri,\vset,\coll_1} \phi_1) \wedge 
\formula(\fset \shows_{\theta}^{\sri,\vset,\coll_2} \phi_2)\\
&\formula(\fset \shows_{\theta}^{\sri,\vset,\coll_1 \vee \coll_2} \phi_1 \vee \phi_2) =
\begin{cases}
\formula(\fset \shows_{\theta}^{\sri,\vset,\coll_1} \phi_1) &\text{if }\fset \shows_{\theta}^{\sri,\vset,\coll_1} \phi_1\\
\formula(\fset \shows_{\theta}^{\sri,\vset,\coll_2} \phi_2) &\text{otherwise}
\end{cases}
\end{align*}}%
The formula $\formula(\fset \shows_{\theta}^{\sri,\vset,\coll} \phi)$
generalizes $\theta \phi$ to the case of injective events.

\begin{lemma}\label{lem:ishows}
$\yctran{\fset \shows_{\theta}^{\sri,\vset,\coll} \phi} \Rightarrow \forall \tup{z} \in \tup{T}'', \neg\left(\bigwedge \fset \wedge \neg \formula(\fset \shows_{\theta}^{\sri,\vset,\coll} \phi) \right)$
where $\tup{z} = \vset$ and $\tup{T}''$ are the types of these variables. 
\end{lemma}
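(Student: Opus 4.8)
The plan is to prove the implication by structural induction on $\phi$ (equivalently, on the derivation of $\fset \shows_{\theta}^{\sri,\vset,\coll} \phi$), exactly paralleling the proof of Lemma~\ref{lem:shows} but carrying the decoration $\sri, \vset, \coll$ along and replacing the conclusion $\theta\phi$ by the generalized formula $\formula(\fset \shows_{\theta}^{\sri,\vset,\coll} \phi)$. Since the claimed conclusion $\forall \tup{z}\in\tup{T}'', \neg(\bigwedge\fset \wedge \neg\formula(\dots))$ with $\tup{z}=\vset$ is the contrapositive of ``for every valuation $\venv$ of $\vset$, if $\bigwedge\fset$ holds then $\formula(\dots)$ holds,'' I would fix such a $\venv$, assume $\bigwedge\fset$, and derive $\formula(\dots)$ in each case.

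First the base cases. For $\phi = M$ (with $\coll=\bot$) we have $\formula = \theta M$, and $\yctran{\fset \shows_{\theta}^{\sri,\vset,\bot} M}$ is literally $\forall\tup{z},\neg(\bigwedge\fset\wedge\fnot\theta M)$, so the two sides coincide and nothing is to prove. For a non-injective event $\phi=\fevent{e(M_1,\ldots,M_m)}$ (again $\coll=\bot$), the success of $\shows$ fixed a witness $M'_0:\fevent{e(M'_1,\ldots,M'_m)}\in\fset$, and $\yctran{\cdot}$ guarantees $\forall\tup{z},\neg(\bigwedge\fset\wedge\bigvee_j \theta M_j \neq M'_j)$; assuming $\bigwedge\fset$ this yields $\theta M_j = M'_j$ for all $j$, while the event fact in $\fset$ says $e(M'_1,\ldots,M'_m)$ has been executed, so $\theta\fevent{e(M_1,\ldots,M_m)} = \fevent{e(\theta M_1,\ldots,\theta M_m)}$ holds, which is exactly $\formula$.

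The injective event case is the crux. Here $\phi=\fievent{e(M_1,\ldots,M_m)}$, $\coll=\scoll$, and $\yctran{\cdot}$ additionally supplies $(\fset, M'_0, \sri, \vset)\in\scoll$ for the witnessing fact $M'_0:\fevent{e(M'_1,\ldots,M'_m)}\in\fset$. The point is to upgrade the plain event fact to one carrying an execution step: when $\bigwedge\fset$ holds, the semantics of $M'_0:\fevent{e(M'_1,\ldots,M'_m)}$ says that $(\pp,\tup{a'}):e(\tup{a})\in\evseq$ with $M'_0\evalterm(\pp,\tup{a'})$, and membership in the sequence means there is an index $a_0$ with $\evseq(a_0)=(\pp,\tup{a'}):e(\tup{a})$; hence, setting $\step$ to $a_0$, the fact $M'_0:\fevent{e(M'_1,\ldots,M'_m)}@\step$ holds. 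Combining with $\theta M_j = M'_j$ gives $\exists\step\in\mathbb{N},\; M'_0:\theta\fevent{e(M_1,\ldots,M_m)}@\step$, and since $(\fset,M'_0,\sri,\vset)\in\scoll$ this is precisely the disjunct with $N_0=M'_0$ of $\formula(\fset\shows_{\theta}^{\sri,\vset,\scoll}\fievent{e(M_1,\ldots,M_m)})$. The main care required here is to match the step variable $\step$ and the pair $M'_0$ between the fact semantics (index in $\evseq$) and the definition of $\formula$, and to verify that the quantifier over $\tup{z}=\vset$ is the same on both sides; this is why I expect the injective event step, rather than the propositional cases, to be where the argument genuinely has content.

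Finally the inductive cases, which are routine. For $\phi_1\wedge\phi_2$ both $\yctran{\cdot}$ and $\formula$ split as conjunctions over $\coll_1$ and $\coll_2$, so the two induction hypotheses combine directly. For $\phi_1\vee\phi_2$, the definitions of $\yctran{\cdot}$ and of $\formula$ select the \emph{same} branch (the first when $\fset\shows_{\theta}^{\sri,\vset,\coll_1}\phi_1$, the second otherwise), so the induction hypothesis on the chosen branch gives $\bigwedge\fset \Rightarrow \formula$ for that branch, and the disjunctive $\formula$ follows a fortiori. This closes the induction and establishes the lemma.
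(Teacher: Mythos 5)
Your proof is correct and follows exactly the route the paper takes: the paper's own proof is a one-line sketch (``By induction on $\phi$; similar to Lemma~\ref{lem:shows}; the case of injective events is new; the case of disjunction differs but is straightforward''), and your write-up fills in precisely those cases, including the key upgrade in the injective-event case from the fact $M'_0:\fevent{e(M'_1,\ldots,M'_m)}$ to its stepped version via the index in $\evseq$, and the observation that $\yctran{\cdot}$ and $\formula$ select the same branch of a disjunction.
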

\begin{proof}
  By induction on $\phi$. This result is similar to Lemma~\ref{lem:shows}. The case of injective events is new.
  The case of disjunction differs, but is straightforward by induction hypothesis.
\proofcomplete
\end{proof}

The next lemma shows that, for events $e$ in the considered correspondence,
two distinct executions of event $e$ have distinct pairs (program point, replication indices).
When the term $M$ contains no array accesses, we
define $\sigma(M)$ by $\sigma, M \evalterm \sigma(M)$.

\newcommand{\execevset}{\mathit{Events}}
\begin{lemma}\label{lem:oneevent}
Assume that the event $e$ is used in the correspondence $\corresp$.
Let $Q_0$ be a process that satisfies Property~\ref{prop:notables}.
Let $C$ be an evaluation context acceptable for $Q_0$ with public variables $V$
that does not contain events used by $\corresp$.
If $\initconfig(C[Q_0])  \red{p_1}{\ix_1} \ldots \red{p_{m-1}}{\ix_{m-1}} E, (\sigma, P)\restconfig$,
$\evseq(\step) = (\pp, \tup{a}):e(a_1, \dots, a_m)$,  $\evseq(\step') = (\pp', \tup{a}'):e(a'_1, \dots, a'_m)$, and $\step \neq \step'$,
then $(\pp, \tup{a}) \neq (\pp', \tup{a}')$.
\end{lemma}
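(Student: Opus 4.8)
The plan is to prove the contrapositive by a fine analysis of where each recorded event comes from, reducing everything to a single-occurrence property for program points. First I would trace the two recorded events back to their originating reduction steps. Inspecting the rules that append to $\evseq$ or create an abort event value, the pair $(\pp,\tup{a})$ attached to an event in $\evseq$ is always the pair $(\pp,\image(\sigma))$ of the configuration at which the corresponding $\kevent{\dots}$ or $\keventabort{\dots}$ was executed (by \eqref{sem:event}, \eqref{sem:eventt}, \eqref{sem:eventabort}, \eqref{sem:eventabortt}, or, for the $\FIND\unique{e}$ form, \eqref{sem:findt3}/\eqref{sem:find3}), this pair being merely propagated unchanged through \eqref{sem:ctxevent}, \eqref{sem:ctxeventt}, \eqref{sem:findte}, and \eqref{sem:finde}. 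By Lemma~\ref{lem:curidx}, $\dom(\sigma)=\replidx{\pp}$, so $\sigma$ is determined by $\pp$ and $\tup{a}=\image(\sigma)$. Since $e$ is used in $\corresp$ and $C$ contains no event used by $\corresp$, we have $e\notin C$, hence $\pp$ is a program point of $Q_0$; moreover, since $\corresp$ uses no non-unique event, $e$ occurs in $Q_0$ only in $\kevent{e(\dots)}$ (necessarily outside conditions of $\FIND$ by Invariant~\ref{invtfc}) or in $\keventabort{e}$ (possibly inside a condition of $\FIND$, and then with $m=0$).

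Now assume $(\pp,\tup{a})=(\pp',\tup{a}')$, i.e.\ both events are executed at the same program point $\pp$ with the same mapping sequence $\sigma$; I must derive $\step=\step'$. The crux is the following single-occurrence property, which I would establish as a sub-lemma by an induction on the derivation entirely parallel to Lemma~\ref{lem:stronginv1exec} (using a multiset of program points ``still to be executed'' in place of $\defset$, and noting that distinct syntactic occurrences carry distinct program-point tags, so no analogue of Property~\ref{prop:autosarename} is needed): in a trace of a process satisfying Property~\ref{prop:notables} (no $\GET$), for each $\pp$ and each $\sigma$ with $\dom(\sigma)=\replidx{\pp}$, there is at most one occurrence of a configuration at program point $\pp$ with mapping sequence $\sigma$ in the derivation of the trace. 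The exclusion of $\GET$ is exactly what makes this hold; it rules out the several table-driven re-executions of a single program point flagged in the discussion following the definition of $\beforetr{\trace}$.

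When $\pp$ lies outside any condition of $\FIND$ — in particular whenever $e$ is an ordinary event recorded through $\kevent{\dots}$ — the two originating configurations are, by Lemma~\ref{lem:condfind}, not inside the derivation of a hypothesis of a $\FIND$ rule; by the single-occurrence property they are the very same configuration occurrence, which in a trace is the source of exactly one reduction step. That step appends exactly one element to $\evseq$, at a single position, so $\step=\step'$, contradicting $\step\neq\step'$. When $\pp$ lies inside a condition of $\FIND$ (the remaining case, where $e$ is a Shoup event executed by $\keventabort{e}$ with $m=0$), the evaluation of $\keventabort{e}$ at $(\pp,\sigma)$ takes place during the evaluation of the condition of the enclosing $\FIND$, whose own program point is reached at most once by the single-occurrence property; a single execution of that $\FIND$ selects and appends at most one event to $\evseq$ (rules \eqref{sem:finde}/\eqref{sem:findte} fix a single $k_0$), and the value appended at $(\pp,\tup{a})$ is determined by $\sigma$. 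Hence the two appends again coincide and $\step=\step'$.

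The main obstacle is the single-occurrence sub-lemma: although intuitively clear (there is no looping construct, and each fixing of the replication indices above $\pp$ selects one copy of the code), it requires the same careful bookkeeping as Lemma~\ref{lem:stronginv1exec}, together with a precise treatment of the propagation of abort event values through the context rules \eqref{sem:ctxevent}, \eqref{sem:ctxeventt}, \eqref{sem:findte}, and \eqref{sem:finde} so as to tie the position at which an event is appended to the pair $(\pp,\tup{a})$ of the configuration that produced it (using Lemma~\ref{lem:sem-ext} to control how $\evseq$ grows). The ordinary-event case is then routine, and the $\FIND$-condition case only adds the observation that one execution of a $\FIND\uniqueopt$ contributes at most one event to $\evseq$.
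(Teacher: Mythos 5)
Your proposal is correct, but it takes a genuinely different route from the paper's proof. The paper argues directly, with no contrapositive, no traceback of $\evseq$ entries, and no case analysis: for the fixed symbol $e$ it defines a multiset of pairs (program point, replication indices) attached to occurrences of $e$, counting both the occurrences still pending in the process \emph{and} those already recorded in $\evseq$; it checks that this multiset has no duplicates in the initial configuration (distinct syntactic occurrences carry distinct tags, distinct copies get distinct index prefixes --- essentially the same counting you propose for your sub-lemma), and that it never grows along reductions. Because recorded events are kept inside the multiset, firing an event merely moves a pair from the pending part to the $\evseq$ part, so ``no duplicate pairs in the final $\evseq$'' \emph{is} the invariant; events inside conditions of $\FIND$ need no separate treatment, since discarded condition evaluations only shrink the multiset. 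Your route instead establishes a stronger intermediate fact --- at most one occurrence of a configuration at a given pair (program point, mapping $\sigma$) --- which the paper never needs, and then requires the propagation machinery (Lemmas~\ref{lem:sem-ext}, \ref{lem:condfind}, \ref{lem:start_from_pp}) to tie each entry of $\evseq$ back to the unique configuration that produced it, plus the split between $\kw{event}$ occurrences and $\keventabort{e}$ inside conditions of $\FIND$. What you gain is a reusable invariant of independent interest; what you pay is that the crux is a sub-lemma you only sketch, and, as stated, it needs \emph{all} of $C[Q_0]$ to be $\GET$-free, while the lemma assumes Property~\ref{prop:notables} only for $Q_0$: a $\GET$ in $C$ re-evaluates its condition once per table entry with the \emph{same} $\sigma$, so single occurrence fails for program points of $C$. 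You therefore also need the standing convention of Section~\ref{sec:truefacts} that the adversary may be assumed to satisfy Property~\ref{prop:notables} without loss of generality; the paper's argument is less sensitive to this point because events cannot occur in conditions of $\GET$ by Invariant~\ref{invtfc}, so only the harmless branching structure of $\GET$ would enter its multiset.
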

\begin{proof}
  Let us fix the event symbol $e$.
  We define the multisets $\execevset(\tup{a}, M)$, 
  $\execevset(\tup{a}, \ab P)$, and $\execevset(\tup{a}, \ab Q)$ by
  {\allowdisplaybreaks\begin{align*}
&\execevset(\tup{a}, \pptag i) = \emptyset\\
&\execevset(\tup{a}, \pptag x[M_1, \ldots, M_m]) = \bigmultiunion_{j \in \{1, \dots, m\}} \execevset(\tup{a}, M_j)\\ 
&\execevset(\tup{a}, \pptag f(M_1, \ldots, M_m)) = \bigmultiunion_{j \in \{1, \dots, m\}} \execevset(\tup{a}, M_j)\\
&\execevset(\tup{a}, \pptag\Res{x[\tup{i}]}{T}; N) = \execevset(\tup{a}, N)\\
&\execevset(\tup{a}, \pptag\assign{x[\tup{i}]:T}{M}{N}) = \execevset(\tup{a}, M) \multiunion \execevset(\tup{a}, N)\\
&\execevset(\tup{a}, \pptag\bguard{M}{N}{N'}) = \execevset(\tup{a}, M) \multiunion \max(\execevset(\tup{a}, N), \execevset(\tup{a}, N'))\\
&\execevset(\tup{a}, \pptag\kw{find}\uniqueopt\ (\mathop\bigoplus\nolimits_{j=1}^m
    \tup{\vf_j}[\tup{i}] = \tup{i_j} \leq \tup{n_j}\ \kw{suchthat}\ \kw{defined}(\tup{M_j}) \fand M'_j\ \kw{then}\ N_j)\ \ELSE N') = \\*
&\qquad \bigmultiunion_{j=1}^m \bigmultiunion_{\tup{a_j} \leq \tup{n_j}} \execevset((\tup{a}, \tup{a_j}), M'_j) \multiunion \max(\max_{j = 1}^m \execevset(\tup{a}, N_j), \execevset(\tup{a}, N'))\\
&\execevset(\tup{a}, \pptag\kevent{e'(M_1, \ldots, M_m)};M) = \bigmultiunion_{j \in \{1, \dots, m\}} \execevset(\tup{a}, M_j) \multiunion \execevset(\tup{a}, M) \text{ if $e' \neq e$}\\
&\execevset(\tup{a}, \pptag\kevent{e(M_1, \ldots, M_m)};M) = \{ (\pp,\tup{a}) \} \multiunion \bigmultiunion_{j \in \{1, \dots, m\}} \execevset(\tup{a}, M_j) \multiunion \execevset(\tup{a}, M)\\
&\execevset(\tup{a}, \pptag\keventabort{e'}) = \emptyset\text{ if $e' \neq e$}\\
&\execevset(\tup{a}, \pptag\keventabort{e}) = \{ (\pp,\tup{a}) \}\\[2mm]
%
%
&\execevset(\tup{a}, \pptag 0) = \emptyset\\
&\execevset(\tup{a}, \pptag (Q_1 \parpop Q_2)) = \execevset(\tup{a}, Q_1) \multiunion \execevset(\tup{a}, Q_2)\\
&\execevset(\tup{a}, \pptag\repl{i}{n} Q) = \bigmultiunion_{a \in [1, n]} \execevset((\tup{a}, a), Q)\\
&\execevset(\tup{a}, \pptag\Reschan{c};Q) = \execevset(\tup{a}, Q)\\
&\execevset(\tup{a}, \pptag\cinput{c[M_1, \ldots, M_l]}{x[\tup{i}]:T};P) = \bigmultiunion_{j \in \{1, \dots, l\}} \execevset(\tup{a}, M_j) \multiunion \execevset(\tup{a}, P)\\[2mm]
&\execevset(\tup{a}, \pptag\coutput{c[M_1, \ldots, M_l]}{N};Q) = \bigmultiunion_{j \in \{1, \dots, l\}} \execevset(\tup{a}, M_j) \multiunion \execevset(\tup{a}, N) \multiunion \execevset(\tup{a}, Q)\\
&\execevset(\tup{a}, \pptag\Res{x[\tup{i}]}{T}; P) = \execevset(\tup{a}, P)\\
&\execevset(\tup{a}, \pptag\assign{x[\tup{i}]:T}{M}{P}) = \execevset(\tup{a}, M) \multiunion \execevset(\tup{a}, P)\\
&\execevset(\tup{a}, \pptag\bguard{M}{P}{P'}) = \execevset(\tup{a}, M) \multiunion \max(\execevset(\tup{a}, P), \execevset(\tup{a}, P'))\\
&\execevset(\tup{a}, \pptag\FIND\uniqueopt \ (\mathop{\textstyle\bigoplus}\nolimits_{j=1}^{m}
\tup{\vf_j}[\tup{i}] = \tup{i_j} \leq \tup{n_j} \ \SUCHTHAT\ \kw{defined}(\tup{M_j}) \fand M'_j \THEN P_j)\ \ELSE P) =\\*
&\quad \bigmultiunion_{j=1}^m \bigmultiunion_{\tup{a_j} \leq \tup{n_j}} \execevset((\tup{a},\tup{a_j}), M'_j) \multiunion \max(\max_{j = 1}^m \execevset(\tup{a}, P_j), \execevset(\tup{a}, P))\\
&\execevset(\tup{a}, \pptag\kevent{e'(M_0, \ldots, M_m)};P) = \bigmultiunion_{j \in \{1, \dots, m\}} \execevset(\tup{a}, M_j) \multiunion \execevset(\tup{a}, P) \text{ if $e' \neq e$}\\*
&\execevset(\tup{a}, \pptag\kevent{e(M_0, \ldots, M_m)};P) = \{ (\pp, \tup{a}) \} \multiunion \bigmultiunion_{j \in \{1, \dots, m\}} \execevset(\tup{a}, M_j) \multiunion \execevset(\tup{a}, P)\\
&\execevset(\tup{a}, \pptag\keventabort{e'}) = \emptyset\text{ if $e' \neq e$}\\
&\execevset(\tup{a}, \pptag\keventabort{e}) = \{ (\pp,\tup{a}) \}\\
&\execevset(\tup{a}, \pptag\kw{yield}) = \emptyset
\end{align*}}%
($\GET$ and $\INSERT$ are omitted because they do not occur in the game by Property~\ref{prop:notables}.)

  We define the multisets
  {\allowdisplaybreaks\begin{align*}
    \execevset(\evseq) &= \{ (\pp,\tup{a}) \mid (\pp,\tup{a}): e(\dots) \in \evseq \}\\
    \execevset(E,\sigma, M, \tblcts, \evseq) &=
    \execevset(\image(\sigma), M)  \multiunion \execevset(\evseq)\\
    \execevset(E, \pset, \cset) & = \bigmultiunion_{(\sigma', Q') \in \pset} \execevset(\image(\sigma'), Q')\\
    \execevset(E, (\sigma, P)\restconfig) &=
    \execevset(\image(\sigma), P) \multiunion \!\!\bigmultiunion_{(\sigma', Q') \in \pset}\!\! \execevset(\image(\sigma'), Q') \multiunion \execevset(\evseq)
  \end{align*}}%
  The latter multiset contains all pairs $(\pp,\tup{a})$ (program point, value 
of replication indices)
for events $e(\ldots)$ that may be executed in a trace that contains
the configuration $E, (\sigma, P)\restconfig$.

The multiset $\execevset(\sigma_0, C[Q_0])$ contains no duplicates. Indeed,
we show by induction on $M$ that $\execevset(\tup{a}, M)$ is included in the multiset $\{ (\pp, \tup{a}') \}$ where $\pp$ is a program point inside $M$ and $\tup{a}$ is a prefix of $\tup{a}'$, and similarly for $P$ and $Q$. That allows to show that all multiset unions in the computation of $\execevset$ are disjoint unions, since all recursive calls in the computation of $\execevset$ are either with disjoint processes or terms, or with different extensions of $\tup{a}$.

Moreover, by induction on the derivations, if $E,\sigma, M, \tblcts, \evseq \red{p}{\ix} E',\sigma', M', \tblcts', \evseq'$, then $\execevset(E,\sigma, M, \tblcts, \evseq) \supseteq \execevset(E',\ab \sigma', \ab M', \ab \tblcts', \ab \evseq')$; if $E, \pset, \cset \redq E', \pset', \cset'$, then $\execevset(E, \ab \pset, \ab \cset) \supseteq \execevset(E', \pset', \cset')$; and if $\conf \red{p}{\ix} \conf'$, then $\execevset(\conf) \supseteq \execevset(\conf')$.

Therefore, the multiset $\execevset(\emptyset, \{ (\sigma_0, C[Q_0]) \}, \fc(C[Q_0]))$ contains no duplicates, and neither do the multisets $\execevset(\reduce(\emptyset, \{ (\sigma_0, C[Q_0]) \}, \fc(C[Q_0])))$, $\execevset(\initconfig(C[Q_0]))$, and\break $\execevset(\conf)$, where $\conf = E, (\sigma, P)\restconfig$ is the final configuration of the considered trace. Hence, $\execevset(\evseq)$ contains no duplicates, which implies the desired result.
\proofcomplete
\end{proof}

\begin{lemma}\label{lem:icorresp}
Let $\corresp = \sem{\forall \tup{x}:\tup{T}; \psi \Rightarrow \exists \tup{y}:\tup{T}'; \phi}$ be a 
correspondence that does not use non-unique events, with $\psi = F_1 \wedge \ldots \wedge F_m$, $\tup{x} = \fvar(\psi)$, and $\tup{y} = \fvar(\phi)\setminus\fvar(\psi)$.
%
Let $Q_0$ be a process that satisfies Properties~\ref{prop:notables} and~\ref{prop:autosarename}.
Suppose that, in $Q_0$, the arguments of the events that occur in $\psi$ are always simple terms.

Let $\sset = \{ (\pp_1, \case_1, \dots, \pp_m, \case_m) \mid \forall j \leq m, \pp_j$ executes $F_j$ and $\case_j$ is a case for $\fset_{\pp_j, \case_j}\}$.
Let $C$ be an evaluation context acceptable for $Q_0$ with public variables $V$ that does not contain events used by $\corresp$.
Let $\trace$ be a full trace of $C[Q_0]$ that does not execute any non-unique event of $Q_0$.
If $\trace \vdash \neg\corresp$, then 
for any family of substitutions $\theta$ equal to the 
identity on $\tup{x}$,
for any pseudo-formula $\coll$,
$\trace \vdash \neg \yctran{\prove{\corresp}(\coll, \theta, \sset)}$.
\end{lemma}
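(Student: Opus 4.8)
The statement to prove is Lemma~\ref{lem:icorresp}, the injective analogue of Lemma~\ref{lem:nicorresp}. The plan is to follow the structure of the proof of Lemma~\ref{lem:nicorresp} closely, augmenting it to handle injectivity via the pair (program point, replication indices) recorded in the event sequence. The key tools are Lemma~\ref{lem:fsetFP} (which extracts the facts that hold when an injective or non-injective event has been executed, including the program point and replication indices), Lemma~\ref{lem:ishows} (the injective refinement of Lemma~\ref{lem:shows}, relating $\yctran{\fset \shows_{\theta}^{\sri,\vset,\coll} \phi}$ to $\formula(\fset \shows_{\theta}^{\sri,\vset,\coll} \phi)$), and Lemma~\ref{lem:oneevent} (which guarantees that distinct executions of an event $e$ used in $\corresp$ have distinct pairs (program point, replication indices)).

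First I would unfold the semantics of injective correspondences from Definition~\ref{def:icorresp}. Assuming $\trace \vdash \neg\corresp$ means $\trace$ falsifies $\sem{\corresp}$, i.e.\ for every choice of step-functions $\Fstepfun_1, \dots, \Fstepfun_k$ satisfying the $\Inj$ conditions, there exist steps $\step_1, \dots, \step_m$ and values $\tup{x}$ making $\psistep$ true but $\exists \tup{y}, \phistep$ false. I would extract from $\trace \vdash \psistep$, via Lemma~\ref{lem:fsetFP} applied to each conjunct $F_j$, program points $\pp_j$ (in $Q_0$, since $C$ contains no events used by $\corresp$), cases $\case_j$, renamings $\theta_j$, mapping sequences $\sigma_j$, and pairs $(\pp_j, \sigma_j(\theta_j\replidx{\pp_j}))$ recording where and with which indices each $F_j$ executed at step $\step_j$. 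This gives $\trace, \sigma_1 \cup \dots \cup \sigma_m \cup \venv \vdash \theta_1\fset_{F_1,\pp_1,\case_1} \cup \dots \cup \theta_m\fset_{F_m,\pp_m,\case_m}$, exactly as in the non-injective case, but now also pinning down the step values of injective events of $\psi$ in $\sri$.

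The core argument then splits into two parts, mirroring the two conjuncts of $\yctran{\prove{\corresp}(\coll, \theta, \sset)}$, namely $\yctran{\vdash \coll}$ and the conjunction of $\yctran{\prove{\corresp}(\coll, \theta, \pp_1, \dots, \pp_m)}$. For the non-injective skeleton, I would show that if all the $\prove{\corresp}(\coll, \theta, \pp_1, \case_1, \dots)$ hold, then $\trace$ must in fact satisfy $\formula(\fset \shows_{\theta}^{\sri,\vset,\coll} \phi)$ with the collected step witnesses, using Lemma~\ref{lem:ishows} to transform the contradiction at $\fset$ into satisfaction of $\formula$, contradicting $\trace \vdash \neg\exists\tup{y}, \phistep$. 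The delicate injective part is defining the step-functions $\Fstepfun_j$ from the data in $\coll$: each $\Fstepfun_j$ maps the tuple of steps and values of injective events in $\psi$ to the step $M_0$ of the corresponding injective event of $\phi$, read off from the quadruples $(\fset, M_0, \sri, \vset) \in \scoll$. I would then use $\yctran{\vdash\coll}$ together with Lemma~\ref{lem:oneevent} to verify the $\Inj$ condition: distinct inputs to $\Fstepfun_j$ arising from distinct step-values of $\psi$ (hence distinct $(\pp,\tup{a})$ pairs in $\sri$ by Lemma~\ref{lem:oneevent}) must yield distinct outputs $M_0$, because $\yctran{\vdash\coll}$ contradicts $M_0 = \theta''M_0'$ when $\sri \neq \theta''\sri'$, and again by Lemma~\ref{lem:oneevent} distinct $M_0$ values correspond to distinct event steps.

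I expect the main obstacle to be the bookkeeping that shows the collected $\coll$ actually gives \emph{well-defined} step-functions satisfying $\Inj$, and that these witness the existential over $\Fstepfun_1, \dots, \Fstepfun_k$ in Definition~\ref{def:icorresp}. Concretely, one must argue that for each injective event of $\phi$, the map sending (steps and values of injective events of $\psi$) to the recorded step $M_0$ is a genuine function (single-valuedness follows from the well-formedness condition~\eqref{eq:corresp-wf} and the fact that $\phi$ is determined by $\psi$'s values), and that its injectivity on the indices in $I$ is precisely what $\yctran{\vdash\coll}$ guarantees through Lemma~\ref{lem:oneevent}. Matching the quantifier structure of $\sem{\corresp}$ — where $\Fstepfun_j$ are chosen \emph{before} quantifying over steps and values — against the per-trace data collected in $\coll$ requires care, since $\coll$ aggregates information across all program-point combinations in $\sset$; I would handle this by defining each $\Fstepfun_j$ globally from $\scoll$ and checking the $\Inj$ predicate using the universally-quantified contradiction in $\yctran{\vdash\coll}$. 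The rest of the argument transfers routinely from Lemma~\ref{lem:nicorresp}.
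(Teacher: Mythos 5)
Your proposal is correct and takes essentially the same route as the paper's proof: contraposition, Lemma~\ref{lem:fsetFP} to extract the program points, cases, facts, and (program point, replication indices) pairs for the events of $\psi$, Lemma~\ref{lem:ishows} to obtain $\formula(\fset \shows_{\theta}^{\sri,\vset,\coll} \phi)$ from the assumed $\yctran{\prove{\corresp}(\coll,\theta,\sset)}$, a global definition of the step functions $\Fstepfun_l$ from the quadruples in $\scoll$, and the $\Inj$ check via $\yctran{\vdash\coll}$ combined with Lemma~\ref{lem:oneevent}. Two cosmetic points: the component $M_0$ of a quadruple is a (program point, replication indices) pair rather than a step, and single-valuedness of each $\Fstepfun_l$ needs no appeal to the well-formedness condition~\eqref{eq:corresp-wf} --- since an input tuple determines the environment $\venv$, one output per input can simply be chosen, exactly as the paper does.
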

\begin{proof}
  By contraposition, we suppose that, $\trace \vdash \yctran{\prove{\corresp}(\coll, \theta, \sset)}$, so
  for every $\pp_1$ that executes $F_1$, \dots,
for every $\pp_m$ that executes $F_m$, for every $\case_1$, \dots, $\case_m$, $\trace \vdash \yctran{\prove{\corresp}(\coll, \theta, \ab \pp_1,\case_1, \ab \dots, \ab \pp_m,\case_m)}$ and $\trace \vdash \yctran{\vdash\coll}$, and we show that $\trace \vdash \corresp$.

Let $\evseq$ be the sequence of events in the last configuration of $\trace$.

We use the notations of Definition~\ref{def:icorresp}. We construct the functions $\Fstepfun_1, \dots, \Fstepfun_k$ as follows. Let $\venv$ be a mapping of $\step_1$, \dots, $\step_m$ to elements of $\mathbb{N}$ and of $\tup{x}$ to elements of $\tup{T}$. Suppose that $\trace,\venv \vdash \psistep$. Then, for all $j \leq m$, $\trace,\venv \vdash F_j^\step$. By Lemma~\ref{lem:fsetFP}, there exists a program point $\pp_j$ (in $Q_0$) that executes $F_j$ and a case $\case_j$ such that, for any $\theta_j$ renaming of $\replidx{\pp_j}$ to fresh replication indices, there exists a mapping $\sigma_j$ with domain $\theta_j\replidx{\pp_j}$ such that $\evseq(\venv(\step_j)) = (\pp_j, \sigma_j(\theta_j\replidx{pp_j})):\dots$ and $\trace, \sigma_j\cup \venv \vdash \theta_j\fset_{F_j, \pp_j, \case_j}$. Let $\venv_1 = \sigma_1 \cup \dots \cup \sigma_m \cup \venv$.
We have $\trace, \venv_1 \vdash \theta_1\fset_{F_1, \pp_1, \case_1} \cup \dots \cup \theta_m\fset_{F_m,\pp_m, \case_m}$.

Let $\fset(\coll, \ab \theta, \pp_1,\case_1, \dots, \pp_m,\case_m) = \theta_1\fset_{F_1,\pp_1,\case_1} \cup \dots \cup \theta_m\fset_{F_m,\pp_m,\case_m} \cup \{ \neg \formula(\prove{\corresp}(\coll, \ab \theta, \ab \pp_1,\case_1, \ab \dots, \ab \pp_m,\case_m))\}$.
By Lemma~\ref{lem:ishows}, 
$\yctran{\prove{\corresp}(\coll, \theta, \pp_1,\case_1, \dots, \pp_m,\case_m)} \Rightarrow \forall \theta_1\replidx{\pp_1}, \ab \dots, \ab \forall \theta_m\replidx{\pp_m}, \ab \forall \tup{x} \in \tup{T}, \ab \neg\bigwedge\fset(\coll, \theta, \ab \pp_1, \case_1, \ab \dots, \ab \pp_m, \ab \case_m)$.
So  $\trace \vdash \forall \theta_1\replidx{\pp_1}, \ab \dots, \ab \forall \theta_m\replidx{\pp_m}, \ab \forall \tup{x} \in \tup{T}, \ab \neg\bigwedge\fset(\coll, \theta, \ab \pp_1, \case_1, \ab \dots, \ab \pp_m, \case_m)$.

Then $\trace, \venv_1 \vdash \neg\bigwedge\fset(\coll, \theta, \pp_1, \case_1, \dots, \pp_m, \case_m)$,
so $\trace, \venv_1 \vdash \formula(\prove{\corresp}(\coll, \theta, \ab \pp_1,\case_1, \ab \dots, \ab \pp_m,\case_m))$, that is, 
$\trace, \venv_1 \vdash \formula(\fset \shows_{\theta}^{\sri,\vset,\coll} \phi)$, with $\fset = \theta_1\fset_{F_1,\pp_1, \case_1} \cup \dots \cup \theta_m\fset_{F_m,\pp_m, \case_m}$, 
$\sri = \{ j \mapsto (\pp_j,\theta_j\replidx{\pp_j}) \mid F_j$ is an injective event$\}$, 
$\vset = \fvar(\theta_1\replidx{\pp_1}) \cup \dots \cup \fvar(\theta_m\replidx{\pp_m}) \cup \{\tup{x}\}$,
and $\theta = \theta(\pp_1, \case_1, \dots, \pp_m, \case_m)$.

Consider an injective event in $\phi$, associated to function $\Fstepfun_l$. 
\begin{itemize}

\item If that injective event corresponds to 
\[\bigvee_{(\fset, N_0, \sri, \vset) \in \scoll} \exists \step \in \mathbb{N}, N_0:\theta \fevent{e(M_1, \ldots, M_m)}@\step\]
in $\formula(\fset \shows_{\theta}^{\sri,\vset,\coll} \phi)$,
we have 
\[\trace, \venv_1 \vdash \bigvee_{(\fset, N_0, \sri, \vset) \in \scoll} \exists \step \in \mathbb{N}, N_0: \theta \fevent{e(M_1, \ldots, M_m)}@\step\]
since $\formula(\fset \shows_{\theta}^{\sri,\vset,\coll} \phi)$ is a conjunction.
So $\trace, \venv_1[\step \mapsto a] \vdash N_0 : \theta \fevent{e(M_1, \ab \ldots, \ab M_m)}@\step$ for some $a \in \mathbb{N}$ and some $N_0$ such that $(\fset, N_0, \sri, \vset) \in \scoll$.
We define $\Fstepfun_l(\venv(\step_1),\ab \dots,\ab \venv(\step_m), \ab \venv(\tup{x})) = a$, 
so that $\trace, \venv_1\vdash N_0: \theta \fevent{e(M_1, \ab \ldots, \ab M_m)}@\Fstepfun_l(\step_1,\ab \dots,\ab \step_m, \ab \tup{x})$.

Moreover, if $j \in I$, then $F_j$ is an injective event, $F_j = \fievent{e_j(M_{j,1}, \dots, M_{j,m})}$.
Moreover, $\trace, \venv \vdash \psistep$, so $\trace, \venv \vdash F_j^{\step}$,
so $\trace, \venv \vdash \fevent{e_j(M_{j,1}, \dots, M_{j,m})}@\step_j$.
Since $\evseq(\venv(\step_j)) = (pp_j, \sigma_j(\theta_j\replidx{\pp_j}):\dots = (\pp_j, \venv_1(\theta_j\replidx{\pp_j})):\dots$ and $\sri(j) = (\pp_j, \theta_j\replidx{\pp_j})$, we have $\trace, \venv_1 \vdash \sri(j):\fevent{e_j(M_{j,1}, \dots, M_{j,m})}@\step_j$.

\item If that injective event is in a removed disjunct in $\formula(\fset \shows_{\theta}^{\sri,\vset,\coll} \phi)$, 
then we define $\Fstepfun_l(\venv(\step_1),\dots,\venv(\step_m), \venv(\tup{x})) = \bot$.

\end{itemize}
Then we have $\trace, \venv_1 \vdash \theta \phistep$, so $\trace, \venv_1 \vdash \exists \tup{y}\in\tup{T}', \phistep$.

Hence, applying this construction for all $\venv$, 
we obtain $\trace \vdash \forall \step_1, \dots, \step_m\in \mathbb{N}, \forall \tup{x}\in\tup{T}, (\psistep \Rightarrow \exists \tup{y}\in\tup{T}', \phistep)$. It remains to show $\Inj(I,\Fstepfun_l)$ for each $l \in \{1, \dots, k\}$.

Suppose $\Fstepfun_l(a_1, \dots, a_m, \tup{a}) = \Fstepfun_l(a'_1, \dots, a'_m, \tup{a}') \neq \bot$.
Let $\venv = \{\step_1 \mapsto a_1, \dots, \step_m \mapsto a_m, \tup{x} \mapsto \tup{a}\}$.
Let $\scoll$ be the leaf of $\coll$ corresponding to the event associated to $\Fstepfun_l$.
By the construction above, we have $\theta$, $(\fset, N_0, \sri, \vset) \in \scoll$, and an extension $\venv_1$ of $\venv$ such that
\begin{align}
  &\trace, \venv_1\vdash N_0: \theta \fevent{e(M_1, \ab \ldots, \ab M_m)}@\Fstepfun_l(\step_1,\ab \dots,\ab \step_m, \ab \tup{x})\label{eq:nic5}\\
  &\trace, \venv_1 \vdash \fset\label{eq:nic6}\\
  &\text{for $j \in I$, }\trace, \venv_1 \vdash \sri(j):\fevent{e_j(M_{j,1}, \dots, M_{j,m})}@\step_j\label{eq:nic3}
\end{align}

Let $\venv' = \{\step_1 \mapsto a'_1, \dots, \step_m \mapsto a'_m, \tup{x} \mapsto \tup{a'}\}$.
In the same way, we have $\theta'$, $(\fset', N'_0, \sri', \vset') \in \scoll$, and an extension $\venv'_1$ of $\venv'$ such that
\begin{align*}
  &\trace, \venv'_1\vdash N'_0: \theta' \fevent{e(M_1, \ab \ldots, \ab M_m)}@\Fstepfun_l(\step_1,\ab \dots,\ab \step_m, \ab \tup{x})\\
  &\trace, \venv'_1 \vdash \fset'\\
  &\text{for $j \in I$, }\trace, \venv'_1 \vdash \sri'(j):\fevent{e_j(M_{j,1}, \dots, M_{j,m})}@\step_j\,.
\end{align*}

Let $\theta''$ be a renaming of the domain of $\venv'_1$ to fresh variables.
We have
\begin{align}
  &\trace, \venv'_1\theta''^{-1}\vdash \theta''N'_0 : \theta''\theta' \fevent{e(M_1, \ab \ldots, \ab M_m)}@\Fstepfun_l(\step_1,\ab \dots,\ab \step_m, \ab \tup{x})\label{eq:nic8}\\
  &\trace, \venv'_1\theta''^{-1} \vdash \theta''\fset'\label{eq:nic9}\\
  &\text{for $j \in I$, }\trace, \venv'_1\theta''^{-1} \vdash \theta''\fevent{e_j(\sri'(j), M_{j,1}, \dots, M_{j,m})}@\step_j\,.\label{eq:nic4}
\end{align}

Therefore, by~\eqref{eq:nic5} and~\eqref{eq:nic8},
\begin{align*}
  &\trace, \venv_1 \cup \venv'_1\theta''^{-1}\vdash \theta N_0: \fevent{e(M_1, \ab \ldots, \ab M_m)}@\Fstepfun_l(a_1,\ab \dots,\ab a_m, \ab \tup{a})\\
  &\trace, \venv_1 \cup \venv'_1\theta''^{-1}\vdash \theta''N'_0 : \theta''\theta' \fevent{e(M_1, \ab \ldots, \ab M_m)}@\Fstepfun_l(a'_1,\ab \dots,\ab a'_m, \ab \tup{a'})\,.
\end{align*}
Since $\Fstepfun_l(a_1,\ab \dots,\ab a_m, \ab \tup{a}) = \Fstepfun_l(a'_1,\ab \dots,\ab a'_m, \ab \tup{a'})$, the events are the same, so
\begin{equation}
  \trace, \venv_1 \cup \venv'_1\theta''^{-1}\vdash N_0 = \theta'' N'_0\,.
  \label{eq:nic1}
\end{equation}
We also have by~\eqref{eq:nic6} and~\eqref{eq:nic9},
\begin{equation}
  \trace, \venv_1 \cup \venv'_1\theta''^{-1}\vdash \fset \cup \theta''\fset'\,.\label{eq:nic2}
\end{equation}
Since $\trace \vdash \yctran{\vdash\coll}$, we have
\[\trace, \venv_1 \cup \venv'_1\theta''^{-1} \vdash 
\neg \left(\bigwedge\fset \wedge \bigwedge \theta'' \fset' \wedge
\left(\bigvee_{j \in \dom(\sri)} \sri(j) \neq \theta'' \sri'(j)\right) \wedge
N_0  = \theta'' N_0'\right)\]
so using~\eqref{eq:nic2} and~\eqref{eq:nic1}, we conclude that
\[\trace, \venv_1 \cup \venv'_1\theta''^{-1} \vdash \bigwedge_{j \in I} \sri(j) = \theta'' \sri'(j)\]
since $\dom(\sri) = I$.
Let $\evseq$ be the sequence of events at the end of $\trace$.
For $j \in I$, let $b_j = \venv_1(\sri(j)) = \venv'_1(\sri'(j))$.
By~\eqref{eq:nic3} and~\eqref{eq:nic4}, we have $\evseq(a_j) =  b_j:e_j(\dots)$ and $\evseq(a'_j) =  b_j:e_j(\dots)$.
By Lemma~\ref{lem:oneevent}, we have $a_j = a'_j$.
That proves $\Inj(I, \Fstepfun_l)$, and concludes the proof that
$\trace \vdash \corresp$.
\proofcomplete
\end{proof}

\begin{proofof}{Proposition~\ref{prop:icorresp}}
Let $C$ be an evaluation context acceptable for $Q_0$ with public variables $V$ that does not contain events used by $\corresp$.
We have
\begin{align*}
\Advtev{Q_0}{\corresp}{C}{\Dfalse} 
& = \Pr[C[Q_0] : \neg \corresp \wedge \neg\nonunique{Q_0}]\\
& \leq \Pr[C[Q_0] : \neg \yctran{\prove{\corresp}(\coll, \theta, \sset)}]
\tag*{by Lemma~\ref{lem:icorresp}}\\
& \leq \Prss{C[Q_0]}{\neg \yctran{\prove{\corresp}(\coll, \theta, \sset)}}
\tag*{by Lemma~\ref{lem:Pr-Prss}}\\
& \leq p(C)
\end{align*}
So $\bound{Q_0}{V}{\corresp}{\Dfalse}{p}$.
\proofcomplete
\end{proofof}

\begin{example}\label{exa:i}
Let us prove that the example $G_1$ satisfies~\eqref{c2}.
We prove the correspondence
$(\forall \tup{x}:\tup{T}; \psi \Rightarrow \exists \tup{y}:\tup{T}'; \phi) = 
(\forall x: \pkey, y: \host, z: \nonce; \fievent{e_B(x, y, z)} \Rightarrow \fievent{e_A(x, \ab y, \ab z)})$.
The program point
$\pp_B$ executes $F = \fevent{e_B(x, y, z)}$ and $\fset = \theta'\fset_{F,\pp_B} \supseteq
\theta'\facts_{\pp_B}\{i'_B/i_B\} \cup \{ x_{pk_A}[i'_B] = x, B = y,  N[i'_B] = z \}$ with $\theta' = \{i'_B/i_B\}$.

As in the proof of $\fset \shows_{\theta(\pp_B)} \fevent{e_A(x, y, z)}$ 
in Example~\ref{exa:ni}, we show $\fset
\shows_{\theta(\pp_B)}^{\sri, \vset, \coll} \fevent{e_A(x, \ab y, \ab z)}$
where $\sri = \{ 1 \mapsto (\pp_B,i_B') \}$ encodes the program points and replication indices
of the events of $\psi$,
$\vset = \{ i_B', x, y, z \}$ contains the replication indices of
$\fset$ and the variables of $\psi$,
$\coll = \scoll = \{ (\fset, (\pp_A, u[i_B']), \sri, \vset) \}$.
($\coll = \scoll$ because the formula $\psi$ is reduced to a single event;
$M'_0 = (\pp_A, u[i_B'])$ contains the program point and replication indices of the event
$e_A$ contained in $\fset$: $(\pp_A, u[i_B']):\fevent{e_A(pk_A, x_B[u[i_B']], 
x_N[u[i_B']])} \in \fset$.)

In order to prove injectivity, it remains to show that $\vdash \coll$.
Let $\theta'' = \{ i_B''/i_B', \ab x''/x, \ab y''/y, \ab z''/z\}$.
We need to show that $\fset \cup \theta''\fset \cup \{ (\pp_B, i_B') \neq (\pp_B, i_B''),
(\pp_A, u[i_B']) = (\pp_A, u[i_B'']) \}$ yields a contradiction, that is, if the pairs (program point, replication
indices) of the event $e_B$ in $\psi$ are distinct ($(\pp_B, i_B') \neq (\pp_B, i_B'')$),
then the pairs (program point, replication indices) of the event $e_A$ in $\phi$
are also distinct ($(\pp_A, u[i_B']) \neq (\pp_A, u[i_B''])$).

$\fset$ contains $N[i_B'] = x_N[u[i_B']]$, so $\theta''\fset$
contains $N[i_B''] = x_N[u[i_B'']]$. These two equalities
combined with $u[i_B'] = u[i_B'']$ imply that $N[i_B'] = x_N[u[i_B']]
= x_N[u[i_B'']] = N[i_B'']$. Since $N$ is defined by random choices
of the large type $\nonce$, $N[i_B'] = N[i_B'']$ implies
$i_B' = i_B''$ up to probability $n^2/2|\nonce|$,
by eliminating collisions.
This equality contradicts $(\pp_B, i_B') \neq (\pp_B, i_B'')$, so we obtain
the desired injectivity.
Therefore, the game $G_1$ satisfies~\eqref{c2} with any 
public variables $V$ up to probability $n^2/2|\nonce|$.
\end{example}

\section{Game Transformations}\label{sec:gametransf}

\subsection{Syntactic Game Transformations}\label{sec:synttransfo}

\subsubsection{\rn{auto\_SArename}}\label{sec:autosarename}

The transformation \rn{auto\_SArename} renames all variables defined
in conditions of $\FIND$, so that they have distinct names. This
transformation is a particular case of \rn{SArename}
(Section~\ref{sec:SArename}) that is particularly simple because these
variables do not have array accesses by Invariant~\ref{invfc}.

\begin{lemma}
  The transformation \rn{auto\_SArename} requires and preserves
  Properties~\ref{prop:nointervaltypes}, \ref{prop:noreschan},
  \ref{prop:channelindices}, and~\ref{prop:notables}.
  It preserves Property~\ref{prop:expand}.
  If transformation \rn{auto\_SArename} transforms $G$ into $G'$, then
  $\dset,\dsetsnu : G, \ab D, \ab \usedevents \ab \indistev{V}{0} G', D,
  \usedevents$ and $G'$ satisfies Property~\ref{prop:autosarename}.
\end{lemma}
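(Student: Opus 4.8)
The plan is to prove the lemma about \rn{auto\_SArename} by reducing it to the already-announced properties of the general transformation \rn{SArename} (Section~\ref{sec:SArename}), exploiting the fact that variables defined in conditions of $\FIND$ have no array accesses (Invariant~\ref{invfc}). I would treat the four claims in turn: preservation/requirement of Properties~\ref{prop:nointervaltypes}, \ref{prop:noreschan}, \ref{prop:channelindices}, \ref{prop:notables}; preservation of Property~\ref{prop:expand}; the indistinguishability relation $\dset,\dsetsnu : G, D, \usedevents \indistev{V}{0} G', D, \usedevents$; and that $G'$ satisfies Property~\ref{prop:autosarename}.

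First I would describe the transformation concretely: \rn{auto\_SArename} walks the game and, for each variable $x$ that is defined in a condition of some $\FIND$ and occurs in several such definitions (or is reused), gives each syntactic definition a fresh distinct name $x_1, x_2, \dots$, renaming the corresponding uses accordingly. The crucial simplification, as noted in Section~\ref{sec:autosarename}, is that by Invariant~\ref{invfc} these variables have no array accesses and are not in $V$; hence every use of such a variable lies syntactically within the scope of a single definition, so the renaming is a purely local $\alpha$-renaming with no need to reconcile array reads defined under different program points. This is exactly the degenerate case of \rn{SArename} in which the variable to rename has no array access, and I would invoke the soundness statement for \rn{SArename} at probability $0$. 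The syntactic Properties~\ref{prop:nointervaltypes}, \ref{prop:noreschan}, \ref{prop:channelindices}, and~\ref{prop:notables} are preserved because renaming variable names touches neither the types of values, nor $\NEWCHANNEL$, nor channel indices, nor the presence of $\INSERT$/$\GET$; and the transformation does not introduce any non-simple terms nor move code across outputs, so Property~\ref{prop:expand} is preserved as well. For Property~\ref{prop:autosarename} at the output, this is immediate by construction: after the renaming, by design all variables defined in conditions of $\FIND$ carry pairwise distinct names.

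The indistinguishability claim $\dset,\dsetsnu : G, D, \usedevents \indistev{V}{0} G', D, \usedevents$ is where I would spend the most care, though I expect it to be routine once the semantics is set up. The renaming changes no events (so $D$ and $\usedevents$ are unchanged, matching the statement), changes no probabilities, and produces a game whose traces are in bijection with those of $G$: one shows that every trace of $C[G]$ corresponds to a trace of $C[G']$ with the same probability and the same sequence of events, by a simulation argument on the semantics in which the only difference is the name under which a $\FIND$-condition variable is stored in $E$. Since such variables have no array accesses and are not public (Invariant~\ref{invfc}), their stored values are consumed only within the evaluation of the condition (as the semantics of $\FIND$, Rules~\eqref{sem:findte}--\eqref{sem:findt3}, discards the environment $E''$ obtained from evaluating conditions), so the renaming cannot affect any observable behaviour. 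This yields equality of $\Pr[C[G]:D']$ and $\Pr[C[G']:D']$ for every relevant distinguisher, hence the relation holds at probability $0$ directly from Definition~\ref{def:indistev}.

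The main obstacle I anticipate is not conceptual but bookkeeping: one must verify that the renaming respects the invariants needed for \rn{SArename} to apply and that the trace bijection genuinely preserves the $\defined$ conditions. Concretely, the worry is that two renamed definitions of what was formerly the same variable $x$ might appear in the same $\defined(\dots)$ list or be read via an array access; but this is exactly excluded by Invariant~\ref{invfc}, which guarantees no array accesses and absence from $V$. I would therefore make the argument hinge on Invariant~\ref{invfc} and on the fact that the semantics of $\FIND$ and $\GET$ already forget the condition-local environment, so that the renaming is observationally inert. Once these points are discharged, the lemma follows by specializing the \rn{SArename} soundness result to the zero-probability, event-preserving case.
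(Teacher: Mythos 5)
Your proposal is correct and follows essentially the same route as the paper: the paper justifies this lemma precisely by observing that \rn{auto\_SArename} is a particular case of \rn{SArename} that is especially simple because, by Invariant~\ref{invfc}, the renamed variables have no array accesses (so no $\FIND$ branches are duplicated and the probability loss $\epsilon_{\FIND}$ of general \rn{SArename} vanishes, giving probability $0$). Your additional trace-bijection argument, relying on the fact that the semantics of $\FIND$ discards the condition-local environment, is a sound elaboration of the same point.
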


\subsubsection[\rn{expand\_tables}]{\rn{expand\_tables} \cite{Cade13b}}\label{sec:transftables}

The transformation $\rn{expand\_tables}$ transforms $\kw{insert}$ and $\kw{get}$ into $\kw{find}$, since the other transformations do not support tables.
It proceeds by storing the inserted list elements in
fresh array variables, and looking up in these arrays instead of performing
$\kw{get}$. More precisely, when $\INSERT\
 \tbl(M_1,\ldots,M_k);P$ is under the replications $\repl{i_1}{n_1} \ldots \repl{i_l}{n_l}$, it is transformed into
\[\assign{y_1[i_1, \ldots, i_l]}{M_1}\ldots \assign{y_k[i_1, \ldots, i_l]}{M_k} P \]
where $y_1, \ldots, y_k$ are fresh array variables, and we add
$(y_1, \ldots, y_k; i_1 \leq n_1, \ldots, i_l \leq n_l)$ in a set $S'$,
to remember them. The construct 
$\GET\uniqueopt\ \tbl(x_1:T_1,\ldots,x_k:T_k)\ \SUCHTHAT\ M\ \IN\ P\ \ELSE P'$ is then transformed into
\[\begin{split}
&\kw{find}\uniqueopt \left(
\!\!\bigoplus_{\begin{array}{l}
\scriptstyle (y_1, \ldots, y_k; i_1 \leq n_1, 
\scriptstyle \ldots, i_l \leq n_l) \in S'
\end{array}}\!\!
\begin{array}{l}
u_1 = i'_1 \leq n_1, \ldots, u_l = i'_l \leq n_l\ \SUCHTHAT\\ 
\defined(y_1[\tup{i}'], \ldots, y_k[\tup{i}']) \fand {}M\{ y_1[\tup{i}'] / x_1, \ldots, y_k[\tup{i}']/ x_k \}\\
\kw{then}\ \assign{x_1}{y_1[\tup{u}]} \ldots\ \assign{x_k}{y_k[\tup{u}]} P
\end{array}\!\!\!\!\right)\\
&\qquad \ELSE P'
\end{split}\]
where $\uniqueopt$ is either $\unique{e}$ or empty and has the same value
at both occurrences, $\tup{u}$ stands for $u_1, \ldots, u_l$, and $\tup{i}'$ stands for $i'_1, \dots, i'_l$. This construct looks in
all arrays used for translating insertion in table $\tbl$, for indices
$\tup{i}'$ such that $y_1[\tup{i}'], \ldots, y_k[\tup{i}']$ are defined, that is,
an element has been inserted at indices $\tup{i}'$, and 
$M\{ y_1[\tup{i}'] / x_1, \ldots, y_k[\tup{i}']/ x_k \}$ is true,
that is, that element satisfies $M$. When it finds such an element,
it stores it in $x_1, \ldots x_k$, and runs $P$.
(When it finds several elements, one of them is chosen randomly with
uniform probability when $\uniqueopt$ is empty and the non-unique event $e$
is raised when $\uniqueopt$ is $\unique{e}$.)
When it finds no element, it executes $P'$.
These transformations are described for processes, but similar 
transformations are performed for $\INSERT$ and $\GET$ terms.

After this transformation, \rn{expand\_tables} calls
\rn{auto\_SArename} to guarantee Property~\ref{prop:autosarename}.

\begin{lemma}
  The transformation \rn{expand\_tables} requires and preserves
  Properties~\ref{prop:nointervaltypes}, \ref{prop:noreschan}, and~\ref{prop:channelindices}.
  It preserves Property~\ref{prop:expand}.
  If transformation \rn{expand\_tables} transforms $G$ into $G'$, then
  $\dset,\dsetsnu : G, \ab D, \ab \usedevents \ab \indistev{V}{p} G', D, \usedevents$,
  where $p$ is $\epsilon_{\FIND}$ times the maximal number of executions of $\GET$ in $G$
  that is not obviously unique (that is, such that the $\INSERT$ in that table may be executed several times),
  and $G'$ satisfies Properties~\ref{prop:notables} and~\ref{prop:autosarename}.
\end{lemma}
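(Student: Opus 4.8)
The statement to prove concerns the transformation \rn{expand\_tables}, asserting four things: it requires and preserves Properties~\ref{prop:nointervaltypes}, \ref{prop:noreschan}, and~\ref{prop:channelindices}; it preserves Property~\ref{prop:expand}; it yields the property preservation relation $\dset,\dsetsnu : G, D, \usedevents \indistev{V}{p} G', D, \usedevents$ with the stated probability $p$; and $G'$ satisfies Properties~\ref{prop:notables} and~\ref{prop:autosarename}.

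The plan is to split the proof into four parts matching these claims. First I would handle the syntactic property preservation claims, which are largely a matter of inspecting the translation of $\INSERT$ and $\GET$ into $\FIND$ given in the transformation description. The transformation introduces fresh array variables $y_1, \dots, y_k$ defined by assignments and introduces $\FIND$ constructs; I would check that this does not introduce interval types (Property~\ref{prop:nointervaltypes}), does not introduce $\NEWCHANNEL$ (Property~\ref{prop:noreschan}), keeps channel indices at the current replication indices (Property~\ref{prop:channelindices}), and that when the input already satisfies Property~\ref{prop:expand} (terms simple except in $\FIND$ conditions), the output still does, since the condition $M\{ y_1[\tup{i}']/x_1, \dots \}$ appears inside a $\FIND$ condition. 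That $G'$ satisfies Property~\ref{prop:notables} is immediate: after translation, no $\INSERT$ or $\GET$ remains, only $\FIND$. That $G'$ satisfies Property~\ref{prop:autosarename} follows from the final call to \rn{auto\_SArename} mentioned in the transformation, relying on the corresponding lemma for \rn{auto\_SArename} which guarantees exactly this property.

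The main work is the property preservation claim with its probability bound. I would establish this by relating the semantics of $\GET$/$\INSERT$ (Rules~\eqref{sem:insert}, \eqref{sem:get1}, \eqref{sem:get2}, \eqref{sem:get3} and their term counterparts) to the semantics of the $\FIND$ that replaces them. The key idea is a bijection between traces of $G$ and traces of $G'$: an insertion $\INSERT\ \tbl(M_1,\dots,M_k)$ at replication indices $\tup{i}$ corresponds to the assignments $y_j[\tup{i}] \leftarrow M_j$, and a table lookup $\GET$ corresponds to the $\FIND$ scanning the arrays $y_j$. The set of table elements satisfying the condition in $G$ corresponds exactly to the set of index tuples $\tup{i}'$ for which the $\FIND$ condition succeeds in $G'$, preserving both success/failure and, in the non-unique case, the random choice. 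The probability bound arises because $D_{\kw{get}}(S)$ and $D_{\FIND}(S)$ are both almost-uniform distributions at distance at most $\epsilon_{\FIND}/2$ from uniform; when the sets $S$ have the same cardinality and the elements are in corresponding order, the probabilities match exactly, but reordering between the table scan order and the $\FIND$ index order may introduce a discrepancy bounded by $\epsilon_{\FIND}$ per non-obviously-unique $\GET$ execution. Summing over the maximal number of such executions gives $p$. I would formalize this by arguing that for each $\GET$ whose corresponding $\INSERT$ is executed at most once (obviously unique), the sets are singletons or empty, so no $\epsilon_{\FIND}$ is incurred; the charge applies only to the others.

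The hard part will be the precise bookkeeping of the distribution correspondence in the non-unique case. Since $D_{\FIND}(S)(v_i) = D_{|S|}(i)$ depends only on the cardinal and the lexicographic position, and similarly for $D_{\kw{get}}(S)$, I would need to show that the scan order of table elements in $G$ and the lexicographic order of index tuples in $G'$ yield sets $S$ and $S'$ of equal cardinality related by a position-preserving bijection whenever possible; where this fails, the $\epsilon_{\FIND}$ distance between the two almost-uniform distributions over equal-cardinality sets (established in the discussion of $D_{\FIND}$ in Section~\ref{sec:syntax}) bounds the probability difference for that single lookup. I would then assemble these local differences across the trace using the union-bound-style argument implicit in the definition of $\indistev{V}{p}$, taking care that the events $D$ and $\usedevents$ are unchanged (no Shoup or fresh events are introduced, matching the fact that the relation keeps $D$ and $\usedevents$ identical on both sides, since $\uniqueopt$ is transported verbatim and the non-unique event $e$ of a $\GET\unique{e}$ becomes the event of the corresponding $\FIND\unique{e}$). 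Finally, I would invoke transitivity of $\indistev{}{}$ (Lemma~\ref{lem:indistev}, Property~\ref{indisttrans}) to compose the table-translation step with the \rn{auto\_SArename} step, whose probability contribution is $0$.
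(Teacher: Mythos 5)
The paper states this lemma without giving a proof (as it does for most of the syntactic-transformation lemmas), so there is no official argument to compare against; judged on its own merits, your proposal is correct and supplies essentially the argument that the paper's definitions are set up to support. Your decomposition — syntactic inspection for the "requires and preserves" claims, Property~\ref{prop:notables} by construction since no $\INSERT$/$\GET$ survives, Property~\ref{prop:autosarename} delegated to the final \rn{auto\_SArename} call, a trace correspondence in which each $\INSERT$ matches the fresh-array assignments and each $\GET$ matches the generated $\FIND$ (with $\unique{e}$ carried over verbatim, so $D$ and $\usedevents$ are unchanged), and composition with the \rn{auto\_SArename} step by transitivity (Lemma~\ref{lem:indistev}) at zero extra probability — is the natural and intended one, and your probability accounting matches the stated bound: at most $\epsilon_{\FIND}$ per execution of a non-obviously-unique $\GET$, and nothing for obviously-unique ones, since there $|S|\leq 1$ forces both $D_{\kw{get}}(S)$ and $D_{\FIND}(S)$ to be the point distribution (or the else branch in both games).

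One small refinement is worth noting. Your statement that corresponding order gives an exact probability match implicitly assumes that $D_{\kw{get}}$ enjoys the same cardinality-and-position-only dependence that the paper asserts for $D_{\FIND}$; for $D_{\kw{get}}$ the paper only states the distance $\epsilon_{\FIND}/2$ to uniform. This does not affect the lemma: for sets of equal cardinality related by \emph{any} bijection, the triangle inequality through the uniform distribution already gives $d(D_{\kw{get}}(S), D_{\FIND}(S')\circ\phi) \leq \epsilon_{\FIND}$, which is exactly the charge you place on every non-obviously-unique $\GET$ execution, and the obviously-unique case needs no ordering argument at all. So the finer bookkeeping of insertion order versus lexicographic index order that you flag as "the hard part" can simply be dispensed with: the per-lookup bound holds regardless of how the two orders relate, and only the count of executions of non-obviously-unique $\GET$s matters.
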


\subsubsection{\rn{expand}}\label{sec:transfexpand}

The transformation \rn{expand} transforms terms $\NEW$, $\LET$, $\kw{if}$, $\FIND$, 
$\kw{event}$, and $\kw{event\string_abort}$ into processes, so
that Property~\ref{prop:expand} is guaranteed.
It simplifies the generated game on the fly, using many of the rules
of \rn{simplify} (Section~\ref{sec:simplify}), to avoid generating branches
that can actually not be executed.

\bb{TODO more details would be welcome}%

After this transformation, \rn{expand} calls
\rn{auto\_SArename} to guarantee Property~\ref{prop:autosarename}.

\begin{lemma}
  The transformation \rn{expand} requires and preserves
  Properties~\ref{prop:nointervaltypes}, \ref{prop:noreschan}, \ref{prop:channelindices}, \ref{prop:notables}, and~\ref{prop:autosarename}.
  If transformation \rn{expand} transforms $G$ into $G'$, then
  $\dset,\dsetsnu : G, \ab D, \ab \usedevents \ab \indistev{V}{p} G', D, \usedevents$,
  where $p$ is an upper bound of the probability that the simplification steps modify the execution,
  and $G'$ satisfies Property~\ref{prop:expand}.
\end{lemma}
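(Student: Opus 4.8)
The plan is to split \rn{expand} into two conceptual phases and treat them separately: a purely syntactic expansion phase $G \to G_1$ that replaces every non-simple term occurring outside a $\FIND$ condition by an equivalent sequence of process constructs, and the on-the-fly simplification phase $G_1 \to G'$ that removes branches that cannot be executed. The five properties are first assumed as preconditions on $G$ (the expansion rules are only defined under them; in particular $\INSERT$/$\GET$ are excluded by Property~\ref{prop:notables} and have been handled earlier by \rn{expand\_tables}). Their preservation I would check by direct inspection of the expansion rules: expansion introduces no new function symbols or types, no $\NEWCHANNEL$, and no channel indices other than the current replication indices, so Properties~\ref{prop:nointervaltypes}, \ref{prop:noreschan}, \ref{prop:channelindices}, \ref{prop:notables} carry over unchanged. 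Property~\ref{prop:autosarename} for $G'$ follows from the final call to \rn{auto\_SArename}, and Property~\ref{prop:expand} for $G'$ holds by construction, since after expansion every term outside a $\FIND$ condition is simple and neither the simplification steps nor \rn{auto\_SArename} reintroduce non-simple terms.

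The heart of the argument is that the expansion phase is exactly behaviour-preserving and preserves the full event structure. First I would establish a trace-level bisimulation for the expansion of a single non-simple term $M$ in evaluation position. The key observation is that the small-step semantics of terms (Figures~\ref{fig:sem1}, \ref{fig:sem1bis}) and of output processes (Figures~\ref{fig:sem3}, \ref{fig:sem3bis}) are deliberately parallel: each term rule \eqref{sem:newt}, \eqref{sem:lett}, \eqref{sem:ift1}/\eqref{sem:ift2}, \eqref{sem:findt1}/\eqref{sem:findt2}/\eqref{sem:findt3}, \eqref{sem:eventt} has a matching process rule \eqref{sem:new}, \eqref{sem:let}, \eqref{sem:if1}/\eqref{sem:if2}, \eqref{sem:find1}/\eqref{sem:find2}/\eqref{sem:find3}, \eqref{sem:event} that performs the same update to $E$, the same insertion into $\evseq$, and uses the same probability ($D_T(a)$, resp. $D_{\FIND}(S)$ over the same lexicographically ordered solution set $S$, since conditions are never expanded). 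I would show by induction on the structure of $M$ that the reduction of $M$ under the context rules \eqref{sem:ctxt}/\eqref{sem:ctxeventt} proceeds in lock-step with the reduction of its expansion under the corresponding process rules, producing matching configurations up to: (i) the fresh variables introduced by expansion to hold intermediate results, which have no array accesses and are therefore invisible to the rest of the game and to the distinguisher, exactly as variables defined in $\FIND$ conditions are ignored by Invariant~\ref{invfc}; and (ii) the program points attached to events, which are discarded by $\removepp$ before a distinguisher inspects $\evseq$. In particular an abort inside $M$, carried by an abort event value $\keventabort{(\pp,\tup{a}):e}$, corresponds to an $\kw{abort}$ of the expanded process adding the same event $e$ to $\evseq$. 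Because every additional reduction step introduced by expansion has probability $1$, the product of probabilities along matched traces is unchanged, so corresponding traces have equal probability and equal $\removepp(\evseq)$.

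From this bisimulation I would conclude that for every context $C$ acceptable for $G$ and $G_1$ and every event-based formula $\Phi$, $\Pr[C[G]:\Phi] = \Pr[C[G_1]:\Phi]$. Since expansion keeps all $\FIND\unique{e}$/$\GET\unique{e}$ markers and all event names, we have $\nonunique{G} = \nonunique{G_1}$ and $\usedevents$ is unchanged, so the two sides of~\eqref{eq:indistev} use the same formula $\Phi$; equality of probabilities then yields $\dset,\dsetsnu : G, D, \usedevents \indistev{V}{0} G_1, D, \usedevents$ directly from Definition~\ref{def:indistev} (I avoid Lemma~\ref{lem:indistev}.\ref{linkindist1} here, since $G$ may contain non-unique events). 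For the simplification phase I would invoke the soundness of each on-the-fly \rn{simplify} step of Section~\ref{sec:simplify}: each removed branch alters behaviour only on a set of traces whose probability is bounded by the collision probability returned by the equational prover (Lemma~\ref{lem:yields_contrad}), giving $\dset,\dsetsnu : G_1, D, \usedevents \indistev{V}{p} G', D, \usedevents$ with $p$ the sum of these bounds; composing the two phases by transitivity (Lemma~\ref{lem:indistev}.\ref{indisttrans}) gives the claimed relation.

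The main obstacle is the bisimulation for the expansion phase, and specifically making the case analysis on term reductions rigorous while respecting the boundary that terms inside $\FIND$ conditions are never expanded. I must verify that the evaluation order imposed by the term contexts (Figure~\ref{fig:termcontexts}) matches the sequencing produced by expansion, that the fresh intermediate variables never clash with variables read elsewhere so that Invariants~\ref{inv1}–\ref{inv2} are maintained and the two environments agree on every observable cell, and that aborts and events occurring deep inside nested terms are mapped to the right program-point-free event, so that $\nonunique{\cdot}$ and every distinguisher see identical sequences of events. The interleaving of expansion and simplification "on the fly", rather than as two clean phases, is a bookkeeping complication that I would resolve by arguing that the order in which behaviour-preserving expansion steps and probability-bounded simplification steps are applied is immaterial to the final probability bound.
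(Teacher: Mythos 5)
The paper offers no proof of this lemma to compare against: the description of \rn{expand} in Section~\ref{sec:transfexpand} is a short paragraph (with an explicit author note that more details would be welcome) and the lemma is stated bare. Judged on its own, your decomposition is the natural one and is consistent with how the lemma apportions probability: the expansion phase must contribute probability $0$ (the stated $p$ comes only from the simplification steps), and you argue this via lock-step matching of the term rules \eqref{sem:newt}--\eqref{sem:eventabortt} with the process rules \eqref{sem:new}--\eqref{sem:eventabort}, noting that $D_T$ and $D_{\FIND}(S)$ are applied to identical solution sets because $\FIND$ conditions are never expanded. Your decision to derive~\eqref{eq:indistev} directly from Definition~\ref{def:indistev} rather than through Lemma~\ref{lem:indistev}, Property~\ref{linkindist1}, is also right, since $G$ may contain non-unique events and $\nonunique{G} = \nonunique{G_1}$ must be tracked explicitly across the bisimulation.

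Two points need more care than your sketch gives them. First, the interleaving: rather than arguing that a fully-expanded intermediate game $G_1$ exists and that the order of expansion and simplification is immaterial, it is cleaner to view the transformation as a finite sequence of elementary steps, each either an expansion step (probability $0$, by your bisimulation) or a branch-removal step justified by Lemma~\ref{lem:yields_contrad}, and compose them all by Lemma~\ref{lem:indistev}, Property~\ref{indisttrans}; then order is genuinely immaterial because the probabilities add. Second, code duplication: expanding a term $\kw{if}$, $\kw{find}$, or $\kw{let}$ duplicates the continuation into each branch, and the duplicated code may contain $\FIND$ whose conditions define variables, so intermediate games can violate Property~\ref{prop:autosarename} (it is only restored by the final \rn{auto\_SArename} call). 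Since the simplification machinery (fact collection and the equational prover) is stated under Properties~\ref{prop:notables} and~\ref{prop:autosarename}, you must either check that the on-the-fly simplification steps you invoke do not rely on that property for the partially expanded game, or arrange the elementary steps so that renaming is done eagerly. Neither point looks fatal, but both belong in the body of the proof rather than in the list of obstacles.
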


\subsubsection{\rn{prove\_unique}}\label{sec:prove_unique}

The transformation \rn{prove\_unique} tries to prove that each $\FIND\unique{e}$
really has a unique possibility at runtime (up to a small probability),
so that event $e$ is executed with at most that probability.
More precisely, $\FIND\unique{}$ are already proved;
$\FIND\unique{e}$ for which no query $\fevent{e} \Rightarrow \false$ is active
are also considered as already proved (they will be proved elsewhere; with the notations
of Definition~\ref{def:indistev}, $e$ does not occur in $D_1 \vee D$, so $e$ occurs in $\nonunique{Q,D_1 \vee D}$), and they
are replaced with $\FIND\unique{}$. That corresponds to renaming event $e$
to a special non-unique event that is always in $\nonunique{Q,D_1 \vee D}$.
It remains to prove $\FIND\unique{e}$ for which a query $\fevent{e} \Rightarrow \false$ is active.

Suppose that $P_0 = \FIND\unique{e}\ (\mathop\bigoplus\nolimits_{j=1}^m \tup{\vf_{j}}[\tup{i}] = \tup{i_{j}} \leq \tup{n_{j}}$ $\SUCHTHAT$ $\defined (M_{j1}, \ldots, M_{jl_j}) \fand M_j$ $\kw{then}$ $P_j)$ $\kw{else}$ $P$ is such a $\FIND\unique{e}$.
(The same transformation is performed for $\FIND\unique{e}$ terms.)
CryptoVerif proves uniqueness by proving 
\begin{itemize}
\item that we obtain a contradiction if the condition of a certain branch
  holds for two different values of the indices $\tup{i_j}$,
  that is, for all $j \in \{1, \dots, m\}$, $\fset_{P_0} \cup \{ \defined (M_{j1}), \ab \ldots, \ab \defined(M_{jl_j}), \ab M_j, \ab \defined (\theta M_{j1}), \ab \ldots, \ab \defined(\theta M_{jl_j}), \ab \theta M_j, \ab \tup{i_{j}} \neq \theta \tup{i_{j}}\}$ yields a contradiction, where the substitution $\theta$ maps $\tup{i_{j}}$ to fresh replication indices;
\item and that we obtain a contradiction if the conditions of two different
branches hold simultaneously, that is, for all $j, j' \in \{1, \dots, m\}$ with $j < j'$, $\fset_{P_0} \cup \{ \defined (M_{j1}), \ab \ldots, \ab \defined(M_{jl_j}), \ab M_j, \ab \defined (\theta M_{j'1}), \ab \ldots, \ab \defined(\theta M_{j'l_{j'}}), \ab \theta M_{j'}\}$ yields a contradiction, where the substitution $\theta$ maps $\tup{i_{j'}}$ to fresh replication indices. (The substitution $\theta$ is useful in case the same replication indices are used in both branches $j$ and $j'$.)
\end{itemize}
When uniqueness is proved, $\FIND\unique{e}$ is replaced with $\FIND\unique{}$.
A subsequent call to \rn{success} (Section~\ref{sec:success})
will remove the query $\fevent{e} \Rightarrow \false$
when event $e$ no longer occurs in the game.

\begin{lemma}
  The transformation \rn{prove\_unique} requires and preserves
  Properties~\ref{prop:nointervaltypes}, \ref{prop:noreschan}, \ref{prop:channelindices}, \ref{prop:notables}, and~\ref{prop:autosarename}. It preserves Property~\ref{prop:expand}.
  If transformation \rn{prove\_unique} transforms $G$ into $G'$, then
  $\dset,\dsetsnu : G, \ab D, \ab \usedevents \ab \indistev{V}{p} G', D, \usedevents$,
  where $p$ is an upper bound of the probability that some $\FIND\unique{e}$ proved in the transformation actually executes event $e$.
\end{lemma}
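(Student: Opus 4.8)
The plan is to prove the lemma for \rn{prove\_unique} in three parts, corresponding to the three types of claims: preservation of the structural properties, the property-preservation-with-introduction-of-events relation $\dset,\dsetsnu : G, D, \usedevents \indistev{V}{p} G', D, \usedevents$, and the probability bound on $p$. The preservation of Properties~\ref{prop:nointervaltypes}, \ref{prop:noreschan}, \ref{prop:channelindices}, \ref{prop:notables}, \ref{prop:autosarename}, and~\ref{prop:expand} is the easy part: the transformation only changes the annotation on $\FIND$ constructs, replacing $\FIND\unique{e}$ by $\FIND\unique{}$, without touching the structure of terms, channels, tables, variable definitions in conditions, or the simplicity of terms. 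So I would dispatch these by inspection, noting that the syntactic shape of the game is otherwise unchanged.

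For the core indistinguishability-with-events relation, the key observation is that $G$ and $G'$ differ only in the annotations of some $\FIND\unique{e}$ constructs, which are replaced by $\FIND\unique{}$ (equivalently, $e$ is renamed to a generic non-unique event always counted in $\nonunique{\cdot,\cdot}$). First I would set up the semantic comparison: a trace of $C[G]$ and the corresponding trace of $C[G']$ proceed identically as long as no $\FIND\unique{e}$ that was transformed actually encounters several solutions. When such a $\FIND$ has a single solution (or zero), rules~\eqref{sem:findt1}/\eqref{sem:find1} and~\eqref{sem:findt2}/\eqref{sem:find2} apply identically in both games, since the $\unique{e}$ versus $\unique{}$ marking is irrelevant there. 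The traces diverge only at a configuration where a transformed $\FIND\unique{e}$ has $|S| > 1$: in $G$ this triggers event $e$ via~\eqref{sem:findt3}/\eqref{sem:find3}, aborting, while in $G'$ it chooses randomly and continues. The inequality~\eqref{eq:indistev} must then be verified; since $e$ is a non-unique event now counted in $\nonunique{G',D_1\vee D'}$ (it no longer appears in $D_1\vee D$ as an active query to prove), the traces where these transformed $\FIND$ are non-unique are excluded by $\neg\nonunique{G',D_1\vee D'}$ on the right-hand side, so they do not contribute, and the matching traces on both sides agree. I would formalize this by bounding the probability difference by the probability of reaching a non-unique configuration at a transformed $\FIND\unique{e}$.

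The main obstacle, and the heart of the proof, is establishing the probability bound $p$ as an upper bound on the probability that some transformed $\FIND\unique{e}$ actually executes event $e$. This requires showing that the two uniqueness conditions checked by the equational prover (contradiction when one branch succeeds for two index values, and contradiction when two branches succeed simultaneously) are sound, i.e., that whenever a transformed $\FIND\unique{e}$ has $|S|>1$ in a trace, the corresponding set of facts holds, contradicting the prover's conclusion up to the eliminated-collision probability. The plan here is to invoke Lemma~\ref{lem:factsP} to guarantee that $\fset_{P_0}$ holds at the program point of $P_0$, then argue that two distinct solutions $(\text{branch }j, \tup{i_j})$ and $(\text{branch }j', \theta\tup{i_{j'}})$ of the $\FIND$ supply the additional defined-and-true facts $\defined(M_{j1}),\dots,M_j$ and their $\theta$-renamed counterparts, so that the full set of facts the prover was run on is satisfied by the trace. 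By Lemma~\ref{lem:yields_contrad} (via its aggregation over the finitely many branch pairs and index renamings), the probability that this satisfiable configuration arises is bounded by the sum of eliminated-collision probabilities computed by CryptoVerif, which defines $p$.

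Finally, I would assemble these pieces: using the second formulation of~\eqref{eq:indistev} (the one with $(D_0\wedge\neg\nonunique{Q})\vee D_1\vee D$), the probability difference between $C[G]$ and $C[G']$ on the relevant distinguishers is bounded by the probability of executing $e$ in $G$ at a transformed $\FIND\unique{e}$, which is exactly $p$. I expect the delicate accounting to lie in matching the $\usedevents$ bookkeeping — here $\usedevents$ is unchanged since no fresh events are introduced (the generic non-unique events absorbed into $\nonunique{\cdot}$ are not new named events in the sense of $\usedevents'$) — and in carefully justifying, via Lemmas~\ref{lem:factsP} and~\ref{lem:stronginv1exec}, that the facts collected at $P_0$ together with the two-solution facts are genuinely satisfied in any divergent trace, so that the prover's contradiction indeed bounds the divergence probability.
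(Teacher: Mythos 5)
Your decomposition --- structural properties by inspection, trace matching with divergence confined to the transformed $\FIND\unique{e}$ constructs, and the bound $p$ obtained from Lemma~\ref{lem:factsP} plus Lemma~\ref{lem:yields_contrad} applied to the two uniqueness conditions checked by the prover --- is the natural reading of the paper, which states this lemma without spelling out a proof; your third part in particular reconstructs correctly why the prover's contradictions bound the probability of non-uniqueness.

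There is, however, one genuine error in your middle part: you assert that in $G'$ a transformed $\FIND$ with $|S|>1$ ``chooses randomly and continues''. That is not what the transformation produces. Replacing $\FIND\unique{e}$ by $\FIND\unique{}$ does not discard the uniqueness marking; as the paper explains, it corresponds to renaming event $e$ to a special non-unique event that is always in $\nonunique{Q,D_1 \vee D}$. So in $G'$ the rules \eqref{sem:findt3}/\eqref{sem:find3} still apply when $|S|>1$: the game executes the anonymous proved non-unique event and aborts. Your own next sentence depends on exactly this fact and contradicts your reading: a divergent trace of $C[G']$ is discarded by the conjunct $\neg\nonunique{G',D_1\vee D'}$ in~\eqref{eq:indistev} only because it executes an event belonging to $\nonunique{G',D_1\vee D'}$; a find that silently picks a solution and continues executes no event at that point, nothing would exclude the trace, and the claim that ``the matching traces on both sides agree'' would fail, since after such a divergence the two executions need not agree at all. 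The correct accounting is: traces that never reach a transformed find with several solutions are identical in $G$ and $G'$ and are counted identically on both sides; traces that do reach one execute $e$ in $G$ (counted on the left when $e$ occurs in $D_1\vee D$, excluded on both sides otherwise) and execute the anonymous event in $G'$ (always excluded on the right), so the difference of the two probabilities is at most the probability of those traces, that is, at most $p$. Relatedly, your parenthetical ``(it no longer appears in $D_1\vee D'$ as an active query to prove)'' is off: the lemma keeps $D' = D$ unchanged, and the query $\fevent{e}\Rightarrow\false$ remains active after \rn{prove\_unique}; it is only removed by a subsequent \rn{success} once $e$ no longer occurs in the game. What is true --- and what your argument actually needs --- is that $e$ no longer occurs in $G'$, having been renamed to the anonymous event, which is never in $D_1\vee D$ and hence always counted in $\nonunique{G',D_1\vee D}$.
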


\subsubsection[\rn{remove\_assign}]{\rn{remove\_assign} \cite{Blanchet07c}}

The transformation
\rn{remove\_assign} applied to an assignment
$\assign{x[i_1, \ldots, i_l]:T}{M}{P}$
replaces $x$ with its value $M$.
(The same transformation is performed for assignment terms.)
Precisely, the transformation is performed only when $x$ does not
occur in $M$ (non-cyclic assignment) and $M$ contains only variables,
function applications, and tests (otherwise, copying the definition
of $x$ may break the invariant that each variable is assigned at most once).
When $x$ has several distinct
definitions, we simply replace $x[i_1, \ldots, i_l]$ with $M$ in
$P$. (For accesses to $x$ guarded by $\FIND $, we do not know
which definition of $x$ is actually used.)
When $x$ has a single definition or several identical definitions,
we replace everywhere in the
game $x[M_1, \ldots, M_l]$ with $M\{ M_1 / i_1, \ldots, M_l / i_l\}$. 
We additionally update the $\defined$ conditions of $\FIND $ to
preserve Invariant~\ref{inv2} and to make sure that, if
a condition of $\FIND$ guarantees that $x[M_1, \ldots,
M_l]$ is defined in the initial game, then so does 
the corresponding condition of $\FIND$ in the
transformed game.
(Essentially, when $y[M'_1, \ldots, M'_{l'}]$ occurs in $M$,
the transformation typically creates new occurrences of
$y[M''_1, \ldots, M''_{l'}]$ for some $M''_1, \ldots, M''_{l'}$,
so the condition that $y[M''_1, \ldots, M''_{l'}]$ is defined 
must sometimes be explicitly added to conditions of $\FIND$
in order to preserve Invariant~\ref{inv2}.)
Moreover, we replace as often as possible defined conditions $x[M_1,
  \ldots, M_l]$ with defined conditions $y[M_1, \ldots, M_l]$ where
$y$ is defined at the same time as $x$.
When $x\in V$, its definition is kept unchanged.
Otherwise, when $x$ is not referred to at all after the transformation,
we remove the definition of $x$. When $x$ is referred to only at the root of 
$\defined$ tests, we replace its definition with a constant.
(The definition point of $x$ is important, but not its value.)

This removal of assignments is applied to all variables whose value is not used
(those are used only at the root of $\kw{defined}$ conditions, or not at all).
Depending on the argument of the transformation, it is also applied to other assignments:
\begin{itemize}
\item \rn{findcond}: all assignments in conditions of $\FIND$;
\item \rn{useless}: assignments that store a variable or a replication index, when the setting
  \rn{expandAssignXY} is true; otherwise, no other assignment;
\item \rn{binder} $x_1$ \dots $x_n$: the assignments to variables $x_1$, \dots, $x_n$.
\end{itemize}
After this transformation, \rn{remove\_assign} calls
\rn{auto\_SArename} to guarantee Property~\ref{prop:autosarename}.

With the arguments \rn{findcond} and \rn{useless}, this is repeated as
many times as specified by the setting
\rn{maxIterRemoveUselessAssign}. Repetition stops if a fixpoint is
reached.

\begin{lemma}
  The transformation \rn{remove\_assign} requires and preserves
  Properties~\ref{prop:nointervaltypes}, \ref{prop:noreschan}, \ref{prop:channelindices}, \ref{prop:notables}, and~\ref{prop:autosarename}.
  It preserves Property~\ref{prop:expand}.
  If transformation \rn{remove\_assign} transforms $G$ into $G'$, then
  $\dset,\dsetsnu : G, \ab D, \ab \usedevents \ab \indistev{V}{0} G', D, \usedevents$.
\end{lemma}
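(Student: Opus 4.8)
The claims about Properties~\ref{prop:nointervaltypes}, \ref{prop:noreschan}, \ref{prop:channelindices}, \ref{prop:notables}, and~\ref{prop:expand} are syntactic and routine: the transformation only substitutes terms $M$ built from replication indices, variables, function applications, and tests, and reorganizes $\defined$ conditions, so it introduces no interval types in forbidden positions, no $\NEWCHANNEL$, no channel indices other than the current replication indices, and no tables; since the inlined terms $M$ are simple, Property~\ref{prop:expand} is kept, and Property~\ref{prop:autosarename} is restored by the final call to \rn{auto\_SArename}. The substance of the lemma is the relation $\dset,\dsetsnu : G, D, \usedevents \indistev{V}{0} G', D, \usedevents$. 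Since \rn{remove\_assign} introduces no new events, $G$ and $G'$ contain the same events, so $\nonunique{G} = \nonunique{G'}$, $D = D'$, and $\usedevents = \usedevents'$; by Definition~\ref{def:indistev} it therefore suffices to show that, for every evaluation context $C$ acceptable for $G$ and $G'$ with public variables $V$, every distinguisher $D_0$, and every $D_1$, the two probabilities in~\eqref{eq:indistev} are in fact \emph{equal}. The plan is to obtain this by exhibiting a probability-preserving bijection between the full traces of $C[G]$ and those of $C[G']$ that preserves the executed sequence of events $\removepp(\evseq)$.

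I would build this bijection by induction on the derivation of the trace, matching each configuration of $C[G]$ with a configuration of $C[G']$ under an invariant relating them: the related environments $E$ and $E'$ agree on every array cell except possibly the cells of $x$ (when the definition of $x$ is removed or reduced to a constant, $x$ is absent from $E'$ or bound to that constant), and the terms/processes in evaluation position are related by the inlining substitution. The central point is that whenever a $G'$-configuration is about to evaluate a subterm produced by replacing $x[M_1,\dots,M_l]$ with $M\{M_1/i_1,\dots,M_l/i_l\}$, the matching $G$-configuration has $x[\tup a]$ defined with the same value; since $M$ contains only replication indices, variables, function applications, and tests, its evaluation is deterministic, and by Lemma~\ref{lem:defsubterms} and Lemma~\ref{lem:substreplindex} the recomputation of $M\{M_1/i_1,\dots,M_l/i_l\}$ in $E'$ yields exactly the value $E(x[\tup a])$ stored when the assignment defining $x$ was executed in $G$. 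Thus rule~\eqref{sem:var} in the $G$-trace and the recomputation in the $G'$-trace produce matching values with probability $1$ on both sides, and every other reduction is matched identically because no random choice or observable behaviour depends on whether $x$ is read directly or recomputed.

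The main obstacle, and the delicate heart of the argument, is the treatment of $\defined$ conditions of $\FIND$. These conditions depend on which variables have been \emph{defined}, and the transformation changes this set: it may drop the definition of $x$ or reduce it to a constant, and it inserts new variable accesses $y[M'']$ coming from the subterms of $M$. I must prove that, for every $\FIND$ and every choice of branch and indices, the condition $\defined(\dots)\wedge\dots$ evaluates to the same truth value in the matched $G$- and $G'$-configurations, so that the sets $S$ collected in rules~\eqref{sem:findt1}--\eqref{sem:findt3} (and their process counterparts) coincide and the same random choice from $D_{\FIND}(S)$ can be replayed on both sides. This is precisely what the careful update of $\defined$ conditions in the transformation ensures: replacing $\defined(x[M_1,\dots,M_l])$ with $\defined(y[M_1,\dots,M_l])$ for a $y$ defined simultaneously with $x$ preserves the definedness information even when $x$'s definition is removed, and adding $\defined(y[M''])$ for the variables $y$ occurring in $M$ re-establishes Invariant~\ref{inv2} while keeping the truth value unchanged. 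I would establish this by a local case analysis on the position of each modified term, using Invariant~\ref{inv2} to relate the definition point of $x[\tup a]$ to those of the substituted variables, and invoking the single-definition property (Corollary~\ref{cor:inv1exec}, Lemma~\ref{lem:stronginv1exec}) so that the correspondence of definition points is well defined.

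Finally, the bijection is completed on the remaining rules (replication, parallel composition, input, output, events, random choices), where the transformation acts only through the inlining substitution and the environments stay in lock-step, so probabilities and event sequences match step by step. Summing over matched full traces gives $\Pr[C[G]:D''] = \Pr[C[G']:D'']$ for every distinguisher $D''$, in particular for $(D_0\vee D_1\vee D)\wedge\neg\nonunique{G,D_1\vee D}$; combined with $\nonunique{G}=\nonunique{G'}$ and $D=D'$, this yields inequality~\eqref{eq:indistev} with $p=0$, hence $\dset,\dsetsnu : G, D, \usedevents \indistev{V}{0} G', D, \usedevents$, as required.
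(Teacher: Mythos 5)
The paper states this lemma bare, with no proof (as it does for all the syntactic-transformation lemmas in this section), so there is no paper proof to compare against; your trace-correspondence argument is the natural and, as far as I can tell, correct way to fill that gap. Your sketch identifies the right ingredients: single assignment (Corollary~\ref{cor:inv1exec}, Lemma~\ref{lem:stronginv1exec}) and monotonicity of environments (Lemma~\ref{lem:sem-ext}) to justify that recomputing $M\{M_1/i_1,\dots,M_l/i_l\}$ at the use point gives the value stored at the assignment point, determinism of the extra reduction steps so that the stuttering between $G$ and $G'$ costs no probability, and equality of the solution sets $S$ so that $D_{\FIND}(S)$ can be replayed identically (here it matters, and you implicitly use, that $D_{\FIND}(S)$ depends only on $|S|$ and on positions, which is trivially satisfied since $S$ is literally the same set). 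Two points a full write-up must tighten. First, Lemmas~\ref{lem:defsubterms} and~\ref{lem:substreplindex} are stated only for \emph{simple} terms, while the inlined $M$ may contain tests; \rn{remove\_assign} is not restricted to games satisfying Property~\ref{prop:expand}, so you need the (routine, since tests are deterministic and effect-free) generalization of both lemmas. Second, for the replacement of $\defined(x[\tup{M}])$ by $\defined(y[\tup{M}])$, the delicate direction is $y$ defined $\Rightarrow$ $x$ defined: a $\FIND$ condition can be evaluated at a program point lying between two nearby definitions (even inside the same input--output block, and also for other values of the replication indices), so ``$y$ defined at the same time as $x$'' must be read as a genuine two-sided equivalence at every point where a $\defined$ condition can be evaluated; your deferral to the transformation's side condition is acceptable, but the local case analysis you announce is exactly where this has to be checked, not merely invoked.
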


\begin{example}\label{exa:running2}
In the game $G_0$ of Section~\ref{sec:exacorresp},
the transformation
$\rn{remove\_assign\ binder}\ sk_A$ substitutes $\skgen(rk_A)$ for $sk_A$ 
in the whole process and removes the assignment 
$\bassign{sk_A}{\skgen(rk_A)}$.
After this substitution, $\sign(\concat(pk_A, x_B, x_N), sk_A, r)$
becomes $\sign(\concat(pk_A, \ab x_B, \ab x_N), \ab \skgen(rk_A), \ab r)$
thus exhibiting a term required to apply the security assumption
on signatures in the cryptographic transformation of
Section~\ref{sec:primdef}.
\end{example}

\subsubsection{\rn{use\_variable}}

The transformation \rn{use\_variable} $x_1$ \dots $x_m$ tries to use
variables $x_1$, \dots, $x_n$ instead of recomputing their value.
More precisely, at each program point that corresponds to a simple term $M$
and where $x_j[\tup{M}]$ is guaranteed
to be defined (because $x_j$ is defined above that program point or directly
or indirectly because of $\defined$ conditions above that program point),
if all definitions of $x_j$ that can be executed before reaching that
program point are $\bassign{x_j[\tup{i}]}{M_0}\ \IN$, then we test
whether $M$ is equal to $M_0\{\tup{M}/\tup{i}\}$ modulo the built-in
equations, and if yes, we replace $M$ with $x_j[\tup{M}]$.

The $\kw{defined}$ conditions of $\FIND$ above the modified program points are 
updated to make sure that Invariant~\ref{inv2} is preserved. This is needed
in particular to make sure that $x_j[\tup{M}]$ syntactically occurs
in the $\defined$ conditions when it is used.

\begin{lemma}
  The transformation \rn{use\_variable} requires and preserves
  Properties~\ref{prop:nointervaltypes}, \ref{prop:noreschan}, \ref{prop:channelindices}, \ref{prop:notables}, and~\ref{prop:autosarename}.
  It preserves Property~\ref{prop:expand}.
  If transformation \rn{use\_variable} transforms $G$ into $G'$, then
  $\dset,\dsetsnu : G, \ab D, \ab \usedevents \ab \indistev{V}{0} G', D, \usedevents$.
\end{lemma}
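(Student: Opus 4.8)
The final statement is the soundness lemma for the transformation \rn{use\_variable}, which asserts that the transformation requires and preserves Properties~\ref{prop:nointervaltypes}, \ref{prop:noreschan}, \ref{prop:channelindices}, \ref{prop:notables}, \ref{prop:autosarename}, and~\ref{prop:expand}, and that when it transforms $G$ into $G'$, we have $\dset,\dsetsnu : G, D, \usedevents \indistev{V}{0} G', D, \usedevents$. The key observation driving the whole proof is that the transformation only replaces a simple term $M$ with a variable access $x_j[\tup{M}]$ at program points where $x_j[\tup{M}]$ is \emph{guaranteed} to be defined and where $M$ equals the defining expression $M_0\{\tup{M}/\tup{i}\}$ modulo the built-in equations. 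Since the probability difference is $0$, the transformation must be semantically exact: $G$ and $G'$ must produce identical distributions over traces (up to the irrelevant extra $\kw{defined}$ conditions that are always satisfied).

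My plan is to first dispatch the preservation of the syntactic properties, which is routine: the transformation introduces no interval types (Property~\ref{prop:nointervaltypes}), no $\Reschan{c}$ (Property~\ref{prop:noreschan}), no channels with non-current-replication indices (Property~\ref{prop:channelindices}), no $\INSERT$/$\GET$ (Property~\ref{prop:notables}), and does not duplicate code defining $\FIND$-condition variables, so it preserves Property~\ref{prop:autosarename}. For Property~\ref{prop:expand}, since $M$ is a simple term and $x_j[\tup{M}]$ is also simple (indeed $\tup{M}$ are simple by Invariants~\ref{inv2} and~\ref{invtic}), the replacement keeps all non-$\FIND$-condition terms simple, and updating $\defined$ conditions of $\FIND$ does not introduce non-simple terms. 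I would also note that the requirement that $x_j[\tup{M}]$ be guaranteed defined, together with the update of the $\defined$ conditions of enclosing $\FIND$s, is exactly what preserves Invariant~\ref{inv2}; and since the replaced occurrences and the variable accesses they become read the same cell, Invariants~\ref{inv1} and~\ref{inv3} are preserved as well.

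The core of the proof is establishing $\dset,\dsetsnu : G, D, \usedevents \indistev{V}{0} G', D, \usedevents$. By Lemma~\ref{lem:indistev}, Property~\ref{linkindist1}, it suffices to show $G \approx^{V,\evset}_0 G'$, i.e.\ that for every acceptable context $C$ and distinguisher $D_0$, $\Pr[C[G]:D_0] = \Pr[C[G']:D_0]$. The plan is to prove a trace-by-trace correspondence: I would set up a bijection between traces of $C[G]$ and traces of $C[G']$ that preserves probabilities and sequences of events, by induction on the derivation, showing matching configurations $\conf \leftrightarrow \conf'$ with identical environments $E$, mapping sequences $\sigma$, table contents, and event sequences. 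The only place where the two derivations differ is when a replaced occurrence of $M$ is evaluated: in $G$ it reduces through \eqref{sem:fun}, \eqref{sem:var}, and \eqref{sem:replindex} to some value $a$, whereas in $G'$ the corresponding $x_j[\tup{M}]$ reduces through \eqref{sem:var} after evaluating $\tup{M}$. I would use Lemma~\ref{lem:start_from_pp} to locate these evaluations and show both yield the same value $a$: since the transformation guarantees $x_j[\tup{M}]$ is defined whenever reached (so the $x[a_1,\dots,a_m]\in\dom(E)$ test of \eqref{sem:var} succeeds, using the preservation of Invariant~\ref{inv2}), and since every definition of $x_j$ reachable before that point is $\bassign{x_j[\tup{i}]}{M_0}$ with $M = M_0\{\tup{M}/\tup{i}\}$ modulo the built-in equations, Corollary~\ref{cor:inv1exec} ensures $x_j[\tup{M}]$ was assigned exactly once, namely via such an assignment, so $E(x_j[\tup{M}]) = a$.

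The main obstacle, and the step I would spend the most care on, is the argument that the stored value $E(x_j[\tup{M}])$ genuinely equals the recomputed value of $M$ \emph{in the runtime environment}, not merely syntactically. This requires combining three facts: that all definitions of $x_j$ executable before the program point are the specified assignment $\bassign{x_j[\tup{i}]}{M_0}$ (a static analysis fact the transformation checks); that by Corollary~\ref{cor:inv1exec} and Lemma~\ref{lem:stronginv1exec} only one such definition actually fires for the given indices; and that the soundness of the built-in equations (from Section~\ref{sec:userdefinedrewriterules}) guarantees $M_0\{\tup{M}/\tup{i}\}$ and $M$ evaluate equally in every environment. One must also handle the updated $\defined$ conditions in enclosing $\FIND$s carefully: the added conditions (ensuring $x_j[\tup{M}]$ is defined) hold precisely when the $\FIND$ branch would be taken in $G$, so \eqref{sem:definedyes}/\eqref{sem:definedno} select the same branches in both games, leaving the control flow and hence the whole trace distribution unchanged. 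Once this equality of evaluated values and preservation of branching are established, the bijection preserves probabilities exactly and the sequences of events coincide, giving $\Pr[C[G]:D_0]=\Pr[C[G']:D_0]$ and thus the claimed indistinguishability with probability $0$.
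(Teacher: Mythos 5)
The paper states this lemma bare, with no accompanying proof (as it does for most of the syntactic-transformation lemmas), so your proposal can only be measured against the argument the paper clearly intends: exact semantic equivalence of $G$ and $G'$, then transfer to the relation $\indistev{V}{0}$ via the standard route sketched right after Lemma~\ref{lem:indistev}. Your proof does exactly this, and its core — the trace-by-trace bijection, the use of write-once environments (Corollary~\ref{cor:inv1exec}, Lemma~\ref{lem:stronginv1exec}) plus monotonicity of $E$ (Lemma~\ref{lem:sem-ext}) to argue that the stored value $E(x_j[\tup{M}])$ equals the recomputed value of $M$, and the observation that the added $\defined$ conditions never change which $\FIND$ branches succeed because the transformation only fires where $x_j[\tup{M}]$ is guaranteed defined — is the right argument and is essentially complete. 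One caveat worth fixing: Lemma~\ref{lem:indistev}, Property~\ref{linkindist1} formally assumes that $G$ and $G'$ contain no non-unique events and that $D$ is a disjunction of Shoup events, so it does not literally cover a game containing $\FIND\unique{e}$. This costs you nothing, however: your bijection preserves event sequences exactly and the $\unique{e}$ constructs of $G$ and $G'$ coincide, so $\nonunique{G,D_1 \vee D} = \nonunique{G',D_1 \vee D}$ and the inequality~\eqref{eq:indistev} holds with equality and $p = 0$ directly, without passing through Property~\ref{linkindist1}. Stating that last step explicitly would make the proof airtight in full generality.
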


The transformation \rn{use\_variable} is a convenient way to perform
common subexpression elimination, possibly by first inserting the
definition of the desired variable(s) by \rn{insert} (Section~\ref{sec:insert}).
This transformation could be done by several applications
of the transformation \rn{replace} (Section~\ref{sec:replace}). However, \rn{use\_variable} is
easier to use when it performs the desired replacement. For performance
reasons, the equality tests performed by \rn{use\_variable} are considerably
less powerful than those performed by \rn{replace}, so if \rn{use\_variable}
does not replace a term with $x_j[\tup{M}]$ at some occurrence, it is worth
trying \rn{replace}. 

\subsubsection[\rn{SArename}]{\rn{SArename} \cite{Blanchet07c}}\label{sec:SArename}

The transformation \rn{SArename}\ $x$ (single assignment rename)
aims at renaming $x$ so that distinct definitions of $x$ have
different names; this is useful for distinguishing cases depending
on which definition of $x$ has set $x[\tup{i}]$.
This transformation can be applied only when $x \notin V$.
When $x$ has $m > 1$ definitions, we rename each definition of
$x$ to a different variable $x_1, \ldots, x_m$. Terms $x[\tup{i}]$
under a definition of $x_j[\tup{i}]$ are then replaced with 
$x_j[\tup{i}]$.
Each branch of find 
$\FB = \tup{\vf}[\tup{i}] = \tup{i}' \leq \tup{n}\ \SUCHTHAT \ 
\defined (M'_1, \ldots, M'_{l'})\fand M \THEN \dots$
where $x[M_1, \ldots, M_l]$ is a subterm of some $M'_k$ for $k \leq l'$
is replaced with $m$ branches $\FB\{ x_j[M_1, \ldots, M_l] /
x[M_1, \ldots, M_l]\}$ for $1 \leq j \leq m$.

Moreover, the implementation
takes into account that some variables cannot be simultaneously
defined, to reduce the number of branches of $\FIND$ to generate.

After this transformation, \rn{SArename} calls
\rn{auto\_SArename} to guarantee Property~\ref{prop:autosarename}.

As a particular case, \rn{SArename\ random} performs the following transformation:
when $y$ is defined by $\Res{y}{T}$ and has $m > 1$
definitions and all variable accesses to $y$ are of the form $y[i_1,
\ldots, i_l]$ under a definition of $y[i_1, \ldots, i_l]$, where
$i_1, \ldots, i_l$ are the current replication indices at this
definition of $y$ (that is, $y$ has no array access using $\FIND $), 
it renames $y$ to $y_1,
\ldots, y_m$ with a different name for each definition of $y$ by $\Res{y}{T}$.

\begin{lemma}
  The transformation \rn{SArename} requires and preserves
  Properties~\ref{prop:nointervaltypes}, \ref{prop:noreschan}, \ref{prop:channelindices}, \ref{prop:notables}, and~\ref{prop:autosarename}.
  It preserves Property~\ref{prop:expand}.
  If transformation \rn{SArename} transforms $G$ into $G'$, then
  $\dset,\dsetsnu : G, \ab D, \ab \usedevents \ab \indistev{V}{p} G', D, \usedevents$,
  where $p$ is $\epsilon_{\FIND}$ times the maximal number of executions of a modified
  $\FIND$ that is not proved unique.
\end{lemma}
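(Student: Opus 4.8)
The final statement is the soundness lemma for the \rn{SArename} transformation, which asserts four things: that the transformation requires and preserves Properties~\ref{prop:nointervaltypes}, \ref{prop:noreschan}, \ref{prop:channelindices}, \ref{prop:notables}, and~\ref{prop:autosarename}; that it preserves Property~\ref{prop:expand}; that $\dset,\dsetsnu : G, D, \usedevents \indistev{V}{p} G', D, \usedevents$ with $p = \epsilon_{\FIND}$ times the maximal number of executions of a modified $\FIND$ that is not proved unique; and implicitly that all the structural invariants are maintained. The plan is to split the proof along these four claims, proving each by inspecting the effect of the transformation on the syntax of the game.

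First I would handle the property-preservation claims, which are essentially a syntactic bookkeeping exercise. The transformation only renames variable definitions and duplicates $\FIND$ branches: it introduces no new function symbols, no $\NEWCHANNEL$, no channel indices, no $\INSERT$/$\GET$, and (via the final call to \rn{auto\_SArename}) guarantees distinct names for variables defined in conditions of $\FIND$. Each of Properties~\ref{prop:nointervaltypes}--\ref{prop:notables} and~\ref{prop:expand} concerns only the shape of terms, channels, and table constructs, all of which are preserved by renaming and by the duplication of $\defined$ conditions; so I would verify each property holds by a case inspection on the constructs that the transformation touches. I would also check that Invariants~\ref{inv1}--\ref{inv3} (single definition, defined variables, typing) are preserved: the single-definition invariant is precisely the motivation for the transformation (each $x_j$ has a unique origin), Invariant~\ref{inv2} is maintained because each duplicated $\FIND$ branch $\FB\{x_j[\ldots]/x[\ldots]\}$ keeps a $\defined$ condition guarding the renamed access, and typing is preserved since $x_1,\ldots,x_m$ inherit the type of $x$.

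The substantive claim is the property-preservation-with-introduction-of-events relation and the probability bound. The key observation is that the transformation preserves the execution behaviour of the game \emph{exactly} in every trace except that it may reorder the set $S$ of solutions of a modified $\FIND$: when a single branch of the original $\FIND$ is split into $m$ branches according to which definition $x_j$ defined the accessed cell, the elements of $S$ are partitioned and reindexed, but the multiset of solutions is unchanged. By the properties of $D_{\FIND}(S)$ discussed in Section~\ref{sec:syntax} --- namely that the probability of a solution depends only on $|S|$ and on the position of the value in the lexicographic order, and that a bijection preserving positions preserves probabilities exactly --- any $\FIND$ that is \emph{proved unique} (so $|S| \le 1$) contributes no probability difference, whereas a non-unique modified $\FIND$ may shift the chosen solution and thus incurs a discrepancy bounded by $\epsilon_{\FIND}$ per execution. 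I would therefore set up a correspondence between traces of $G$ and of $G'$, show it is a probability-preserving bijection outside the modified non-unique $\FIND$s, and bound the total discrepancy by $\epsilon_{\FIND}$ times the maximal number of executions of such $\FIND$s, using Lemma~\ref{lem:curidx} and Corollary~\ref{cor:evalpos} to justify that the program points and replication indices line up. Since the events $D$, $\usedevents$, and the public variables $V$ are untouched, the \indistev{V}{p} relation then follows directly from Definition~\ref{def:indistev}, with the trace correspondence furnishing the required inequality~\eqref{eq:indistev}.

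The main obstacle will be the precise accounting of the probability shift at a modified $\FIND$ and the verification that the trace correspondence genuinely matches solutions at the same lexicographic position. The subtlety is that splitting one branch into $m$ branches changes the lexicographic order of solutions (a value previously at position $i$ within one branch may land at a different global position once branches are reordered by $j$), so the clean statement ``$\phi$ maps $v$ to $\phi(v)$ at the same position'' needs care: one must argue that for \emph{unique} $\FIND$s the single solution is trivially preserved, while for non-unique ones the $\epsilon_{\FIND}$ slack of $D_{\FIND}$ absorbs any reordering. I would make this rigorous by defining the bijection on solution sets, checking that it is the identity on values and only permutes positions, and invoking the $\epsilon_{\FIND}$ bound $\frac{1}{2}\sum_{v\in S}|D_{\FIND}(S)(v) - D_{\FIND}(S')(\phi(v))| \le \epsilon_{\FIND}$ once per non-unique modified $\FIND$ execution, then summing over the (bounded) number of such executions.
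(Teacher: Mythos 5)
Your proposal is correct and follows essentially the approach the paper intends: the paper states this lemma without an explicit proof, but the machinery it sets up for $D_{\FIND}$ in Section~\ref{sec:syntax} (probabilities depending only on $|S|$ and lexicographic position, with the bijection bound $\frac{1}{2}\sum_{v\in S}|D_{\FIND}(S)(v)-D_{\FIND}(S')(\phi(v))|\leq\epsilon_{\FIND}$) is exactly what you invoke, together with the standard syntactic checks for the invariants and the trace correspondence in which splitting a branch into $m$ branches partitions but does not alter the multiset of solutions. Your identification of the reordering of solutions as the sole source of probability discrepancy, absorbed once per execution of a modified non-unique $\FIND$, is precisely the justification for the stated bound $p$.
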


\begin{example}\label{exa:SSA}
Consider the following process
{\allowdisplaybreaks\begin{align*}
&\cinput{\startch}{};
\Res{k_A}{T_k}; \Res{k_B}{T_k}; 
\coutput{yield}{};(Q_K \parpop Q_S)\\
&Q_K = \repl{i}{n}\cinput{c[i]}{h:T_h, k:T_k}\\*
&\phantom{Q_K =\, }\bguard{h = A}{\assign{k':T_k}{k_A}\coutput{yield}{}}{}\\*
&\phantom{Q_K =\, }\bguard{h = B}{\assign{k':T_k}{k_B}\coutput{yield}{}}{}\\*
&\phantom{Q_K =\, }\assign{k':T_k}{k}\coutput{yield}{}\\
&Q_S = \repl{i'}{n'}\cinput{c'[i']}{h':T_h};\\*
&\phantom{Q_S =\, } \FIND\ \vf = i'' \leq n\ \SUCHTHAT\ \defined(h[i''],k'[i'']) \fand h' = h[i'']\,\kw{then}\, P_1(k'[\vf])\,\kw{else}\, P_2
\end{align*}}%
The process $Q_K$ stores in $(h,k')$ a table of pairs (host name,
key): the key for $A$ is $k_A$, for $B$, $k_B$, and for any other $h$,
the adversary can choose the key $k$.  The process $Q_S$ queries this
table of keys to find the key $k'[\vf]$ of host $h'$, then executes
$P_1(k'[\vf])$. If $h'$ is not found, it executes $P_2$.

By the transformation $\rn{SArename}\ k'$, we can perform a case
analysis, to distinguish the cases in which $k' = k_A$, $k' = k_B$,
or $k' = k$, by renaming the three definitions of $k'$ to
$k'_1$, $k'_2$, and $k'_3$ respectively.
After transformation, we obtain the following processes:
{\allowdisplaybreaks\begin{align*}
&Q'_K = \repl{i}{n}\cinput{c[i]}{h:T_h, k:T_k}\\*
&\phantom{Q'_K =\, }\bguard{h = A}{\assign{k'_1:T_k}{k_A}\coutput{yield}{}}{}\\*
&\phantom{Q'_K =\, }\bguard{h = B}{\assign{k'_2:T_k}{k_B}\coutput{yield}{}}{}\\*
&\phantom{Q'_K =\, }\assign{k'_3:T_k}{k}\coutput{yield}{}\\
&Q'_S = \repl{i'}{n'}\cinput{c'[i']}{h':T_h};\\*
&\phantom{Q'_S =\, }\FIND\ \vf = i'' \leq n\ \SUCHTHAT\ \defined(h[i''],k'_1[i''])\fand h' = h[i''] \THEN P_1(k'_1[\vf])\\*
&\phantom{Q'_S =\, }\ \ \oplus \vf = i'' \leq n\ \SUCHTHAT\ \defined(h[i''],k'_2[i''])\fand h' = h[i''] \THEN P_1(k'_2[\vf])\\*
&\phantom{Q'_S =\, }\ \ \oplus \vf = i'' \leq n\ \SUCHTHAT\ \defined(h[i''],k'_3[i''])\fand h' = h[i''] \THEN P_1(k'_3[\vf])\ \ELSE{P_2}
\end{align*}}%
The $\FIND$ in $Q_S$, which looks for elements in array $k'$, is transformed
in $Q'_S$ into a $\FIND$ with three branches, one for each new name of $k'$
($k'_1$, $k'_2$, and $k'_3$ respectively).
After the simplification (Section~\ref{sec:simplify}), $Q'_S$ becomes:
\begin{align*}
&Q''_S = \repl{i'}{n'}\cinput{c'[i']}{h':T_h};\\
&\phantom{Q'_S =\, }\FIND\ \vf = i'' \leq n\ \SUCHTHAT\ \defined(h[i''],k'_1[i''])\fand h' = A \THEN P_1(k_A)\\
&\phantom{Q'_S =\, }\ \ \oplus \vf = i'' \leq n\ \SUCHTHAT\ \defined(h[i''],k'_2[i''])\fand h' = B \THEN P_1(k_B)\\
&\phantom{Q'_S =\, }\ \ \oplus \vf = i'' \leq n\ \SUCHTHAT\ \defined(h[i''],k'_3[i''])\fand h' = h[i''] \THEN P_1(k[\vf])\ \ELSE{P_2}
\end{align*}
since, when $k'_1[\vf]$ is defined, $k'_1[\vf] = k_A$ and $h[\vf] = A$,
and similarly for $k'_2[\vf]$ and $k'_3[\vf]$.
\end{example}

\subsubsection[\rn{move}]{\rn{move} \cite{Blanchet07c}}

The transformation \rn{move} moves random choices and assignments downwards 
in the code as much as possible.
A random choice $\Res{x[\tup{i}]}{T}$ or assignment $\bassign{x[\tup{i}]:T}{M}$
cannot be moved under a replication, or 
under a parallel composition when both sides use $x$, or a 
let $\assign{y[\tup{i}]:T}{M}{\ldots}$, 
input $\cinput{c[M_1, \ldots, M_l]}{x_1[\tup{i}]:T_1, \ldots, 
x_k[\tup{i}]:T_k}$, output 
$\coutput{c[M_1, \ldots, M_l]}{N_1, \ldots, N_k}$ when $x$ occurs in 
$M, M_1, \ldots, \ab M_l, N_1, \ldots, N_k$, or a $\FIND $ (or $\kw{if}$) when
the conditions use $x$. It can be moved under the other constructs,
duplicating it if necessary, when we move it under a $\FIND $ (or $\kw{if}$)
that uses $x$ in several branches. 
Note that when the random choice $\Res{x[\tup{i}]}{T}$ or assignment
$\bassign{x[\tup{i}]:T}{M}$ cannot be 
moved under
an input, a parallel composition, or a replication, it must be written
above the output that is located above the considered input, parallel 
composition or replication, so that the syntax of processes is not violated.
When there are array accesses to $x$, the random choice $\Res{x[\tup{i}]}{T}$ or assignment $\bassign{x[\tup{i}]:T}{M}$ can be moved only inside the same output process, without moving it under an output or under a $\FIND$ that makes an array access to $x$.

The conditions above are necessary for the soundness of the move.
Furthermore, the move is considered beneficial when it satisfies the following conditions:
\begin{itemize}
  
\item for random choices, when the random choice can
be moved under a $\FIND$ (or $\kw{if}$).
When this transformation duplicates a $\Res{x[\tup{i}]}{T}$ by 
moving it under a $\FIND$ that uses $x$ in several branches,
a subsequent \rn{SArename}($x$) enables us to distinguish several
cases depending in which branch $x$ is created, which is
useful in some proofs.

\item for assignments, when there are no
array accesses to $x$, the assignment to $x$ can be moved under a
$\FIND$ (or $\kw{if}$), and $x$ is used in a single branch of that
$\FIND$ (or $\kw{if}$). In this case, the assignment can be performed
only in the branch that uses $x$, so it will be computed in fewer cases
thanks to the move.

\end{itemize}
The performed moves are determined by the argument of the transformation:
\begin{itemize}
\item \rn{all}: moves all random choices and assignments, provided the move is beneficial.
\item \rn{noarrayref}: moves all random choices and assignments that do not have array references,
  provided the move is beneficial.
\item \rn{random}: moves all random choices, provided the move is beneficial.
\item \rn{random\_noarrayref}: moves all random choices that do not have array references,
  provided the move is beneficial.
\item \rn{assign}: moves all assignments, provided the move is beneficial.
\item \rn{binder}\ $x_1$ \dots $x_n$: move the variables $x_1$, \dots, $x_n$ (even when the move
  is not beneficial).
\end{itemize}
In all cases, only random choices and assignments at the process level (not inside terms) are moved.

\begin{lemma}
  The transformation \rn{move} requires and preserves
  Properties~\ref{prop:nointervaltypes}, \ref{prop:noreschan}, \ref{prop:channelindices}, \ref{prop:notables}, and~\ref{prop:autosarename}. It preserves Property~\ref{prop:expand}.
  If transformation \rn{move} transforms $G$ into $G'$, then
  $\dset,\dsetsnu : G, \ab D, \ab \usedevents \ab \indistev{V}{0} G', D, \usedevents$.
\end{lemma}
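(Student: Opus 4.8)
The statement to prove is the soundness lemma for the \rn{move} transformation: it requires and preserves Properties~\ref{prop:nointervaltypes}, \ref{prop:noreschan}, \ref{prop:channelindices}, \ref{prop:notables}, and~\ref{prop:autosarename}; it preserves Property~\ref{prop:expand}; and if it transforms $G$ into $G'$, then $\dset,\dsetsnu : G, D, \usedevents \indistev{V}{0} G', D, \usedevents$.

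The plan is to split the proof into the property-preservation part and the behavioral-equivalence part, treating the latter as the core. For the properties, I would argue syntactically: \rn{move} only relocates a $\Res{x[\tup{i}]}{T}$ or an assignment $\bassign{x[\tup{i}]:T}{M}$ within the code (possibly duplicating it under a $\FIND$/$\kw{if}$), without introducing interval types, $\Reschan{c}$, channel indices, tables, or changing the set of variables defined in conditions of $\FIND$; hence Properties~\ref{prop:nointervaltypes}--\ref{prop:notables} are immediate, and Property~\ref{prop:autosarename} holds because the moved constructs do not touch $\FIND$-condition variables. Property~\ref{prop:expand} is preserved because the transformation acts only at the process level (not inside terms) and never turns a simple term into a non-simple one. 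Since the probability is $0$ and no events are added or removed, by Lemma~\ref{lem:indistev}, Property~\ref{linkindist1}, it suffices to establish the observational equivalence $G \approx^{V,\evset}_0 G'$ for the relevant $V,\evset$, which then yields the $\indistev{V}{0}$ statement directly.

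For the equivalence, the approach is to show that $G$ and $G'$ produce identical distributions on sequences of events and results, for every acceptable context $C$. First I would decompose \rn{move} into elementary single-step moves (moving one construct past one immediately-enclosing construct, possibly with duplication), and prove equivalence for each elementary move, composing by transitivity (Lemma~\ref{lem:evequi}, Property~\ref{evequip1}). For each elementary case, I would exhibit a correspondence between full traces of $C[G]$ and $C[G']$ that preserves probability and the observable sequence of events $\evseqnopp$. The key semantic fact is that the side conditions enumerated in the transformation guarantee that the relocated random choice or assignment is evaluated on the same inputs and affects the same variable cells: the restriction against moving under a replication, under an input/output/let using $x$, or under a $\FIND$ whose conditions use $x$, ensures that no intervening construct reads or redefines $x$ before the relocated definition, so the value stored in $x[\sigma(\tup{i})]$ is unchanged and, crucially, defined at the same point along every trace that actually reaches the use of $x$. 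The duplication case (moving under a multi-branch $\FIND$/$\kw{if}$ where $x$ is used in several branches) is handled by noting that exactly one branch is taken per trace, so the single definition in $G$ matches the branch-local copy in $G'$ executed in that trace, and the random value drawn has the same distribution $D_T$.

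The main obstacle will be the case with array accesses to $x$ combined with the reordering of the definition relative to other configurations in the trace. When $x$ has array references, the transformation constrains the move to stay within the same output process and forbids moving under an output or under a $\FIND$ making an array access to $x$; I would need to show that, despite the relocation, every $\defined(x[\dots])$ condition and every array read $x[\tup{M}]$ elsewhere in the game still sees $x$ as defined precisely when it did before. This requires a careful argument using Invariant~\ref{inv2} together with the single-definition results (Corollary~\ref{cor:inv1exec}, Lemma~\ref{lem:stronginv1exec}) and the structural lemmas on program points and evaluation position (Lemmas~\ref{lem:above_before}, \ref{lem:start_from_pp}): I must verify that the \emph{definition point} of $x$ in the trace order, relative to any configuration that tests $\defined(x[\dots])$, is not altered by a legal move, so that $\defined$ conditions evaluate identically in $G$ and $G'$. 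Establishing this invariance of the defined-before-used relation under relocation, and confirming that the listed side conditions are exactly strong enough to guarantee it, is where the real work lies; the remaining event-sequence and probability matching then follows routinely by induction on the trace derivation.
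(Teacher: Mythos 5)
The paper states this lemma without proof, so there is no reference argument to compare against; your plan (syntactic preservation of the invariants, decomposition into elementary moves composed by transitivity, and a probability- and event-preserving trace correspondence for each elementary move) is the natural one and follows the blueprint the paper sketches after Lemma~\ref{lem:indistev} for ``most transformations''. However, the step where you discharge the $\indistev{V}{0}$ conclusion fails as cited: Lemma~\ref{lem:indistev}, Property~\ref{linkindist1} requires that $Q$ and $Q'$ contain no non-unique events and that $D$ and $\dsetsnu$ consist of Shoup events only, whereas the lemma you are proving allows $G$ to contain $\FIND\unique{e}$ and allows $D$, $\dsetsnu$ to mention those non-unique events (the paper makes the same gloss in its discussion after Lemma~\ref{lem:indistev}, but its own hypotheses do not license it). The repair is cheap but must be made explicit: verify Definition~\ref{def:indistev} directly from your trace correspondence. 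Since \rn{move} leaves every event occurrence and every $\FIND\unique{e}$ untouched, $\nonunique{G,D_1\vee D}$ and $\nonunique{G',D_1\vee D}$ are the same disjunction, so the distinguisher $(D_0 \vee D_1 \vee D) \wedge \neg\nonunique{G,D_1\vee D}$ is literally identical on both sides, and equality of its probability under the correspondence yields~\eqref{eq:indistev} with $p=0$.

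In the part you defer (invariance of the defined-before-used relation), two points must not be lost, because they are exactly where the side conditions are consumed. First, ``array accesses to $x$'' must include accesses by the adversary: an acceptable context may perform $\kw{find}$ on any variable of $V$, so a variable $x \in V$ has to be treated as having array accesses even if the game itself contains none — otherwise moving its definition under a branch changes which cells the context sees as defined, and the correspondence breaks; the paper's description of \rn{move} is silent about $V$, so your proof needs this as an explicit hypothesis or case. Second, the trace correspondence is not a bijection: when a random choice is moved under a $\FIND$ or $\kw{if}$ and thereby dropped from branches that do not use $x$, all traces of $C[G]$ differing only in the value of $x$ collapse onto a single trace of $C[G']$, so ``preserves probability'' must mean that the probability of each $G'$-trace equals the sum over its preimage (duplication is unproblematic because exactly one copy executes per trace). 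Until these two points are carried out, the core of your argument remains a plan rather than a proof, though it is the right plan.
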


\subsubsection[\rn{move array}]{\rn{move array} \cite{BlanchetEPrint12}}

The transformation $\rn{move array}\ X$ delays the generation of a random value $X$ until the point at which it is first used (lazy sampling). This transformation is implemented as a particular case of a cryptographic transformation by the following equivalence:
\begin{tabbing}
$\repl{i}{\kwp{n}} \Res{X}{\kwt{T}};$ \\
\quad  $(\repl{\kvar{iX}}{\kwp{nX}} \kwf{OX}() := \kw{return}(X) \mid$\\
\quad  $\phantom{(}   \repl{\kvar{ieq}}{\kwp{neq}} \kwf{Oeq}(X':\kwt{T}) := \kw{return}(X' = X))$\\
$\approx_{\#\kwf{Oeq} / |\kwt{T}|}\ [\mathit{manual}]$\\
$\repl{i}{\kwp{n}}$\\ 
\quad  $(\repl{\kvar{iX}}{\kwp{nX}} \kwf{OX}() :=$\\
\qquad $\kw{find}[\kw{unique}]\ j\leq\kwp{nX}\ \kw{suchthat}\ \kw{defined}(Y[j])$\\
\qquad $\kw{then}\ \kw{return}(Y[j])\ \kw{else}\ \Res{Y}{\kwt{T}}; \kw{return}(Y) \mid$\\
\quad  $\phantom{(} \repl{\kvar{ieq}}{\kwp{neq}} \kwf{Oeq}(X':\kwt{T}) :=$\\
\qquad $\kw{find}[\kw{unique}]\ j\leq\kwp{nX}\ \kw{suchthat}\ \kw{defined}(Y[j])$\\
\qquad $\kw{then}\ \kw{return}(X' = Y[j])\ \kw{else}\ \kw{return}(\kwf{false}))$
\end{tabbing}
where $\kwt{T}$ is the type of $X$. Two oracles are defined,
$\kwf{OX}$ and $\kwf{Oeq}$. In the left-hand side, $\kwf{OX}$ returns
the random $X$ itself. In the right-hand side, $\kwf{OX}$ uses a
lookup to test if the random value was already generated; if yes, it
returns the previously generated random value $Y[j]$; if no, it
generates a fresh random value $Y$. Transforming the left-hand side
into the right-hand side therefore moves the generation of the random
number $X$ to the first call to $\kwf{OX}$, that is, the first usage
of $X$. The oracle $\kwf{Oeq}$ provides an optimized treatment of
equality tests $X' = X$: when the random value $X$ was not already
generated, we return $\kwf{false}$ instead of generating a fresh
$X$, so we exclude the case that $X'$ is equal to a fresh
$X$. This case has probability $1/|T|$ for each call to $\kwf{Oeq}$,
so the probability of distinguishing the two games is
$\#\kwf{Oeq}/|T|$. 
(Notice that
there never exist several choices of $j$ that satisfy the conditions of the $\kw{find}$s in the right-hand side of this equivalence, so these $\kw{find}$s can be marked $[\kw{unique}]$ without modifying their behavior.)


\subsubsection{\rn{move\ up}}\label{sec:moveup}

  The transformation $\rn{move\ up\ }x_1\ \dots\ x_n\rn{ to }\pp$ moves the
  random number generations or assignments of $x_1$, \dots, $x_n$ upwards
  in the syntax tree, to the program point $\pp$. This program point
  must correspond to an output process.

  \label{sec:pp}
  The program
point $\pp$ is an integer, which can be determined using the
command \rn{show\_game\ occ}: this command displays the
current game with the corresponding label $\{\pp\}$ at each program
point.  The command \rn{show\_game\ occ} also
allows one to inspect the game, for instance to know the names
of fresh variables created by CryptoVerif during previous transformations.
Program points and variable names may depend on the version of
CryptoVerif.
Since CryptoVerif version 2.01, program points can also be designated
by expressions like $\rn{before}\ \mathit{regexp}$, which designates the
program point at the beginning of the line that matches the regular
expression $\mathit{regexp}$;
$\rn{after}\ \mathit{regexp}$, which designates the
program point just after the line that matches $\mathit{regexp}$;
$\rn{before\_nth }n\ \mathit{regexp}$, which designates the program
  point at the beginning of the $n$-th line that matches the regular expression
  $\mathit{regexp}$;
$\rn{after\_nth }n\ \mathit{regexp}$, which designates the program point at the beginning of the first line
  that has an occurrence number after 
  the $n$-th line that matches the regular expression
  $\mathit{regexp}$;
$\rn{at }n'\ \mathit{regexp}$, which designates the program point at the $n'$-th occurrence number
  that occurs inside the string that matches the regular expression
  $\mathit{regexp}$ in the displayed game;
$\rn{at\_nth }n\ n'\ \mathit{regexp}$, which designates the program point at the $n'$-th occurrence number
  that occurs inside the string corresponding to the
  $n$-th match of the regular expression
  $\mathit{regexp}$ in the displayed game.
This way of designating program points is more stable across versions
of CryptoVerif.

  After the game transformation, a variable $x$ is defined at program
  point $\pp$, and all other variables $x_k$ are defined by
  $\bassign{x_k}{x}\ \IN$.  The variable $x$ is a variable
  $x_k$ itself when $x_k$ has no array accesses and the current replication
  indices at the definition of $x_k$ are the same as at
  $\pp$. Otherwise, the variable $x$ is a fresh variable.
  
  All variables $x_1$, \dots, $x_n$ must have the same type.
  They must not be defined syntactically above the program point $\pp$.
  The definitions of the variables $x_1$, \dots, $x_n$ must be in distinct
  branches of $\kw{if}$, $\kw{find}$, $\kw{let}$, so that they cannot be
  simultaneously defined.
  Either all variables $x_1$, \dots, $x_n$ must be defined by
  random number generations or all of them must be defined by assignments.
  \begin{itemize}
    
  \item If $x_1$, \dots, $x_n$ are defined by random number
    generations, this transformation performs eager sampling of $x_i$.
    The random number generation of $x_1$, \dots $x_n$ must be
    executed at most once for each execution of program point
    $\pp$. This is proved by combining that the definitions of the
    variables $x_1$, \dots, $x_n$ are in distinct branches of $\kw{if}$,
    $\kw{find}$, $\kw{let}$ with the fact that each of these definitions
    (at $\pp_j$) is executed at most once for each value of the
    current replication indices at $\pp$. To show the latter fact, we notice
    that, since $\pp_j$ is syntactically under $\pp$, the current replication
    indices at $\pp$ are a prefix of the replication indices at
    $\pp_j$. If the replication indices at $\pp_j$ are the same as at
    $\pp$, then the fact is proved. Otherwise, the replication indices
    at $\pp_j$ are $\tup{i},\tup{i}_j$ while the replication indices
    at $\pp$ are $\tup{i}$ and we show that $\fset_{\pp_j} \cup
    \fset_{\pp_j}\{\tup{i}_j'/\tup{i}_j\} \cup \{ \tup{i}_j \neq
    \tup{i}_j'\}$ yields a contradiction, where $\tup{i}_j'$ are fresh
    replication indices.

  \item If $x_1$, \dots, $x_n$ are defined by assignments of terms
    $M_j$, then all $M_j$ must consist of variables, function
    applications, and tests; there must be one $M_j$ defined at
    program point $\pp$, using all defined variables collected in
    $\fset_{\pp}$ (let $M_{j_0}$ be that $M_j$, which will be used as
    the definition of $x$: $\bassign{x}{M_{j_0}}\ \IN$); and all terms
    $M_j$ must be equal: for all $j \neq j_0$, $M_j = M_{j_0}$ knowing the facts
    $\fset_{\pp_j}$ that hold at the program point $\pp_j$ of $M_j$.

    The $\kw{defined}$ conditions of $\FIND$ above $\pp$ are updated
    to syntactically guarantee the definition of $M_{j_0}$, as required by
    Invariant~\ref{inv2}.
    
  \end{itemize}

\begin{lemma}
  The transformation \rn{move up} requires and preserves
  Properties~\ref{prop:nointervaltypes}, \ref{prop:noreschan},
  \ref{prop:channelindices}, \ref{prop:notables},
  and~\ref{prop:autosarename}.  It preserves
  Property~\ref{prop:expand}.  If transformation \rn{move up}
  transforms $G$ into $G'$, then $\dset,\dsetsnu : G, \ab D, \ab
  \usedevents \ab \indistev{V}{p} G', D, \usedevents$, where $p$ is an
  upper bound of the probability of collisions eliminated in the proof
  that each $\pp_j$ is executed at most once for each execution of $\pp$ or
  that for all $j \neq j_0$, $M_j = M_{j_0}$.
\end{lemma}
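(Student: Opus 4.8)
The plan is to establish the lemma for the \rn{move up} transformation by the same template used for all the other syntactic game transformations in this section, namely: first the preservation of Properties~\ref{prop:nointervaltypes}--\ref{prop:autosarename} and~\ref{prop:expand}, then the indistinguishability-with-events statement $\dset,\dsetsnu : G, D, \usedevents \indistev{V}{p} G', D, \usedevents$. The property-preservation parts are routine syntactic checks: moving a random sampling or an assignment to a program point $\pp$ above its original location, inserting $\bassign{x_k}{x}$ bindings and updating $\defined$ conditions, does not introduce interval types, $\NEWCHANNEL$, non-current channel indices, tables, or non-simple terms, and does not break the distinctness of $\FIND$-condition variable names; I would dispatch these quickly by inspection of the construction, noting in particular that the $\kw{defined}$-condition updates are exactly what preserves Invariant~\ref{inv2} (hence Property~\ref{prop:expand} in the assignment case).

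For the indistinguishability statement the approach splits along the two cases already separated in the construction. For the \emph{eager sampling} case (random number generations), the key fact the transformation must justify is that moving $\Res{x}{T}$ up to $\pp$ does not change the distribution, which holds precisely because the sampling is executed \emph{at most once} per execution of $\pp$: since the definitions of $x_1,\dots,x_n$ lie in distinct branches of $\kw{if}$/$\kw{find}/\kw{let}$, at most one $\pp_j$ fires, and each fires at most once for each value of the current replication indices at $\pp$. I would reduce the latter to the contradiction test $\fset_{\pp_j} \cup \fset_{\pp_j}\{\tup{i}_j'/\tup{i}_j\} \cup \{\tup{i}_j \neq \tup{i}_j'\}$ that the construction performs, invoking Lemma~\ref{lem:factsP} (the collected facts $\fset_{\pp_j}$ hold whenever $\pp_j$ is reached) together with Lemma~\ref{lem:yields_contrad} (resp.\ Lemma~\ref{lem:yields_contrad_algo}) to turn ``yields a contradiction'' into a probability bound $p$ via $\Prss{C[G]}{\cdot}$; the $\Prss{}{}$ formulation is exactly what matches the eliminated-collision probability. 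Since the indices at $\pp$ are a prefix of those at $\pp_j$, the case where they coincide needs no collision elimination. For the \emph{assignment} case, the justification is that all $M_j$ are equal to $M_{j_0}$ under the known facts $\fset_{\pp_j}$, so reading $x$ (defined once at $\pp$ as $M_{j_0}$) recomputes the same value as each original $\bassign{x_j}{M_j}$; again Lemma~\ref{lem:factsP} supplies $\fset_{\pp_j}$ along the trace and Lemma~\ref{lem:yields_contrad} converts the equality checks into the probability $p$.

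Having both cases, I would assemble the probability bound $p$ as an upper bound on the union of all eliminated collisions, appeal to Lemma~\ref{lem:indistev}, Property~\ref{linkindist1} to lift the resulting $G \approx^{V,\evset}_p G'$-style bound into the $\indistev{V}{p}$ judgment with the same $D$ and $\usedevents$ (no events are introduced or proved, so $D$ and $\usedevents$ are unchanged), exactly as done for \rn{remove\_assign}, \rn{move}, and \rn{use\_variable}.

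The main obstacle I expect is the soundness argument for eager sampling, specifically showing that anticipating the random choice at $\pp$ yields the same distribution as the original lazy definitions. The subtle point is handling \emph{array accesses} to the moved variables: when $x$ has array references, the value sampled at $\pp$ must be correctly threaded through the $\bassign{x_k}{x}$ bindings and their array cells so that every later array read $x_k[\tup{a}]$ observes the value that the original game would have produced, and one must argue that the extra freshly-sampled cells of $x$ that are never read (because the corresponding branch was not taken) do not affect the observable behavior. Making this rigorous requires a trace-matching argument in the spirit of the proof of Lemma~\ref{lem:sec}, pairing each trace of $G$ with the corresponding trace of $G'$ of equal probability and checking that the environments agree on all cells that are ever read; the at-most-once property established above is exactly what guarantees the matching is well defined, so the collision probability $p$ is the only slack.
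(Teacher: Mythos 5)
The paper never writes out a proof of this lemma: the probability $p$ simply names the two checks built into the description of \rn{move\ up} (that $\fset_{\pp_j} \cup \fset_{\pp_j}\{\tup{i}_j'/\tup{i}_j\} \cup \{\tup{i}_j \neq \tup{i}_j'\}$ yields a contradiction, and that $M_j = M_{j_0}$ under $\fset_{\pp_j}$ for $j \neq j_0$), and the $\indistev{V}{p}$ judgment is meant to follow by the generic route stated after Definition~\ref{def:indistev}, namely $G \approx^{V,\evset}_p G'$ combined with Lemma~\ref{lem:indistev}, Property~\ref{linkindist1}. Your proposal reconstructs exactly this argument --- the same two sources of $p$, converted into probability bounds via Lemmas~\ref{lem:factsP} and~\ref{lem:yields_contrad}, the same lifting, plus a correct identification of the probability-preserving trace matching for eager sampling (in the spirit of Lemma~\ref{lem:sec}) as the substantive remaining step --- so it matches the paper's intended proof; the only slip is the parenthetical suggestion that preserving Invariant~\ref{inv2} yields Property~\ref{prop:expand}, which is a non sequitur (Property~\ref{prop:expand} is preserved simply because everything \rn{move\ up} introduces lives at the process level), but this does not affect the argument.
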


\subsubsection{\rn{move\_if\_fun}}\label{sec:moveiffun}

The transformation \rn{move\_if\_fun} moves the predefined function
  $\iffun$ or transforms it into a term $\kw{if}\ \dots\ \kw{then}\ \dots\ \kw{else} \dots$ It supports the following variants:
  \begin{itemize}
  \item \rn{move\_if\_fun\ }$\mathit{loc}_1 \dots \mathit{loc}_n$, where each $\mathit{loc}_j$ is either a program point or a function symbol. When $\mathit{loc}_j$ is a program point, it moves occurrences of $\iffun$ from inside the term at that program point to the root of that term. (The program point $\pp$
is designated as explained in Section~\ref{sec:pp}.) When $\mathit{loc}_j$ is a function symbol, it moves occurrences of $\iffun$ from under that function symbol to just above it. The move corresponds to rewriting $C[\iffun(M_1,M_2,M_3)]$ into $\iffun(M_1,C[M_2],C[M_3])$, where $C$ is a term context built from the following grammar:
\begin{defn}
\categ{C}{simple term context}\\
\entry{[\,]}{hole}\\
\entry{x[M_1,\dots,M_{k-1},C,M_{k+1}, \dots, M_m]}{variable}\\
\entry{f(M_1,\dots,M_{k-1},C,M_{k+1}, \dots, M_m)}{function application}
\end{defn}
and the root of $C$ corresponds to a $\mathit{loc}_j$. (It is at program point $\mathit{loc}_j$ when $\mathit{loc}_j$ is a program point; its root symbol is $\mathit{loc}_j$ when $\mathit{loc}_j$ is a function symbol.) These moves are possible only when $C$ is a simple term context, since otherwise they might lead to defining several times the same variable or repeating events since the context $C$ is duplicated in the second and third arguments of $\iffun$ and $\iffun$ evaluates all its arguments. For simplicity, we allow them only when $C[\iffun(M_1,M_2,M_3)]$ is a simple term.
  \item \rn{move\_if\_fun\ level\ }$n$, where $n$ is a positive integer, moves occurrences of $\iffun$ $n$ function symbols up in the syntax tree (provided those $\iffun$ occur under at least $n$ function symbols). As above, these moves are allowed only when they occur inside a simple term.
  \item \rn{move\_if\_fun\ to\_term\ }$\pp_1 \dots \pp_n$ transforms terms $\iffun(M_1,M_2,M_3)$ that occur at program points $\pp_1$, \dots, $\pp_n$ into terms $\kw{if}$ $M_1$ $\kw{then}$ $M_2$ $\kw{else}$ $M_3$. When no program point is given, it performs that transformation everywhere in the game.

    When $M_2$ and $M_3$ have a visible effect, that is, they define some variable with array accesses (including by their usage in various kinds of secrecy queries) or they execute events, the transformation above would not be correct, because $\iffun(M_1,M_2,M_3)$ evaluates both $M_2$ and $M_3$ while $\kw{if}$ $M_1$ $\kw{then}$ $M_2$ $\kw{else}$ $M_3$ evaluates either $M_2$ or $M_3$. In this case, we transform $\iffun(M_1,M_2,M_3)$ into
    $\assign{\kvar{xcond}}{M_1} \assign{\kvar{xthen}}{M_2} \assign{\kvar{xelse}}{M_3} \bguard{\kvar{xcond}}{\kvar{xthen}}{\kvar{xelse}}$
    to make sure that $M_1$, $M_2$, and $M_3$ are always evaluated, and in that order.
    
    When \rn{autoExpand = true} (the default), a call to \rn{expand} is automatically performed after \rn{move\_if\_fun}, which transforms the terms $\assign{\dots}{\dots}\dots$ and $\kw{if}$ {\dots} $\kw{then}$ {\dots} $\kw{else}$ {\dots} into processes. 
  \end{itemize}

\begin{lemma}
  The transformation \rn{move\_if\_fun} requires and preserves
  Properties~\ref{prop:nointervaltypes}, \ref{prop:noreschan}, \ref{prop:channelindices}, \ref{prop:notables}, and~\ref{prop:autosarename}.
  The variants \rn{move\_if\_fun\ }$\mathit{loc}_1 \dots \mathit{loc}_n$ and
  \rn{move\_if\_fun\ level\ }$n$ preserve Property~\ref{prop:expand}.
  If transformation \rn{move\_if\_fun} transforms $G$ into $G'$, then
  $\dset,\dsetsnu : G, \ab D, \ab \usedevents \ab \indistev{V}{0} G', \ab D, \ab \usedevents$.
\end{lemma}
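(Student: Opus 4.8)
The statement to prove is the soundness lemma for the \rn{move\_if\_fun} transformation: it requires and preserves Properties~\ref{prop:nointervaltypes}, \ref{prop:noreschan}, \ref{prop:channelindices}, \ref{prop:notables}, and~\ref{prop:autosarename}; the variants \rn{move\_if\_fun\ }$\mathit{loc}_1 \dots \mathit{loc}_n$ and \rn{move\_if\_fun\ level\ }$n$ preserve Property~\ref{prop:expand}; and if the transformation maps $G$ to $G'$, then $\dset,\dsetsnu : G, D, \usedevents \indistev{V}{0} G', D, \usedevents$. The plan is to treat the preservation of the structural properties and the indistinguishability claim separately, and within the indistinguishability claim to split by the three variants of the transformation. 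The key observation throughout is that every elementary rewrite performed by \rn{move\_if\_fun} is an \emph{exact} semantic equivalence, so the probability $p$ is $0$ and, by Lemma~\ref{lem:indistev}, Property~\ref{linkindist1}, it suffices to establish $G \approx^V_0 G'$ (the $\dset, \dsetsnu, D$ are carried through unchanged since no events are introduced or removed).

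\textbf{Structural properties.} First I would check that each elementary rewrite respects the syntactic invariants. The rewrite $C[\iffun(M_1,M_2,M_3)] \to \iffun(M_1,C[M_2],C[M_3])$ used by the first two variants is restricted to the case where $C$ is a \emph{simple} term context and $C[\iffun(M_1,M_2,M_3)]$ is a simple term; since the rewrite only rearranges variable accesses, function applications, and the predefined $\iffun$, it introduces no random choices, no $\Reschan{c}$, no tables, no channels, and no new $\FIND$, so Properties~\ref{prop:nointervaltypes}, \ref{prop:noreschan}, \ref{prop:channelindices}, and~\ref{prop:notables} are immediate. Since $C$ is simple, no variable is duplicated with array accesses and no event is repeated, so Invariant~\ref{inv1} and Property~\ref{prop:autosarename} are preserved; because the result is again a simple term, Property~\ref{prop:expand} is preserved for these two variants. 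For the \rn{to\_term} variant, the output may contain $\kw{if}$ terms or $\assign{\kvar{xcond}}{\cdot}\dots$ terms, which are not simple, so Property~\ref{prop:expand} is \emph{not} claimed for that variant (consistent with the statement); the subsequent automatic call to \rn{expand} when \rn{autoExpand = true} would restore it. I would also need to verify, for the $\mathit{loc}_j$-as-program-point case, that the $\defined$ conditions and Invariant~\ref{inv2} are unaffected, which holds because no variable access is moved across a definition boundary.

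\textbf{Indistinguishability.} The heart of the proof is that each rewrite preserves the semantics exactly. For the first two variants I would argue that $\iffun(M_1,C[M_2],C[M_3])$ and $C[\iffun(M_1,M_2,M_3)]$ evaluate to the same value in every environment: using the built-in rewrite rules for $\iffun$ and the fact that $\iffun$ evaluates all its arguments, one shows by the definition of the $\evalterm$ relation (restricted to simple terms) that both sides equal $C[M_2]$ when $M_1$ evaluates to $\true$ and $C[M_3]$ otherwise, provided $C$ is a simple term context so that $C[\cdot]$ is a single-hole function/variable context whose evaluation commutes with the branch selection. This gives a configuration-by-configuration bijection between traces of $C[G]$ and $C[G']$ preserving probabilities and event sequences, hence $G \approx^V_0 G'$. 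For the \rn{to\_term} variant the argument splits: when $M_2,M_3$ have no visible effect, $\iffun(M_1,M_2,M_3)$ and $\kw{if}\ M_1\ \kw{then}\ M_2\ \kw{else}\ M_3$ agree because evaluating the unselected branch has no observable consequence (it defines no array-accessed variable and raises no event), so the environments, tables, and event sequences coincide; when they do have a visible effect, the transformation instead produces $\assign{\kvar{xcond}}{M_1}\assign{\kvar{xthen}}{M_2}\assign{\kvar{xelse}}{M_3}\bguard{\kvar{xcond}}{\kvar{xthen}}{\kvar{xelse}}$, which evaluates $M_1$, $M_2$, $M_3$ in order exactly as $\iffun$ does, so again the semantics match exactly.

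\textbf{Main obstacle.} The delicate step is the \emph{visible-effect} analysis in the \rn{to\_term} variant: I must characterize precisely when replacing $\iffun(M_1,M_2,M_3)$ by a genuine conditional changes the trace, namely exactly when $M_2$ or $M_3$ defines a variable with array accesses (including its use in secrecy queries via the public-variable set $V$) or executes an event. The correctness of the fallback encoding with $\kvar{xcond}, \kvar{xthen}, \kvar{xelse}$ then hinges on showing that forcing evaluation of all three subterms in order reproduces the $\iffun$ semantics without introducing fresh observable behavior; here one relies on Lemma~\ref{lem:sem-ext} and on the fact that the fresh variables $\kvar{xcond}, \kvar{xthen}, \kvar{xelse}$ have no array accesses, so they do not perturb $\defined$ conditions or the invariants. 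Establishing the exact trace correspondence in the presence of events that may themselves occur inside $M_2$ or $M_3$ (so that both branches contribute to $\evseq$ under $\iffun$ but only one under a naive conditional) is what justifies the distinction in the definition of the transformation, and is the point that requires the most care.
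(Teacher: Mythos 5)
The paper states this lemma without giving any proof (as it does for most of the transformation-soundness lemmas in Section~\ref{sec:synttransfo}), so there is no official argument to compare against; judged on its own merits, your proposal is sound and follows the natural route: reduce the claim to $G \approx^{V,\evset}_0 G'$ via Lemma~\ref{lem:indistev}, Property~\ref{linkindist1} (legitimate here since no events are added or removed, so $D$ and $\usedevents$ are unchanged), check the invariants rewrite by rewrite, and use the simplicity of the context $C$ to justify the commutation $C[\iffun(M_1,M_2,M_3)] \leftrightarrow \iffun(M_1,C[M_2],C[M_3])$. Your case split on the three variants, and your observation that Property~\ref{prop:expand} fails only for \rn{to\_term} (whose output contains $\kw{if}$ or $\kw{let}$ terms), match exactly what the statement requires.

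One point deserves sharpening. In the \rn{to\_term} variant without visible effects, you claim the two games have coinciding ``environments, tables, and event sequences.'' That is too strong: $\iffun(M_1,M_2,M_3)$ evaluates \emph{both} branches, so in $G$ the environment may contain bindings for variables defined in the unselected branch (and, if that branch contains $\Res{x}{T}$, extra random choices), whereas in $G'$ it does not. Since these variables have no array accesses, are not in $V$, and are not used in queries, the right statement is a probability-preserving \emph{correspondence} between traces — many-to-one when the unselected branch draws randomness, with the probabilities of the $G$-traces mapping to a given $G'$-trace summing to its probability — under which the observable components (event sequences, outputs, array-accessed variables) agree. This marginalization argument, rather than a configuration-by-configuration bijection, is what actually closes the \rn{to\_term} case; the bijection claim is fine for the \rn{loc} and \rn{level} variants, where everything stays inside side-effect-free simple terms.
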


\subsubsection[\rn{insert\_event}]{\rn{insert\_event} \cite{BlanchetEPrint12}}\label{sec:insertevent}

The transformation $\rn{insert\_event}\ e\ \pp$ inserts
$\keventabort{e}$ at program point $\pp$. (The program point $\pp$
is designated as explained in Section~\ref{sec:pp}.)

The transformation $\rn{insert\_event}\ e\ \pp$ also adds
to query $\fevent{e} \Rightarrow \false$ in order to bound
the probability of event $e$.

\begin{lemma}
  The transformation \rn{insert\_event} requires and preserves
  Properties~\ref{prop:nointervaltypes}, \ref{prop:noreschan}, \ref{prop:channelindices}, \ref{prop:notables}, and~\ref{prop:autosarename}. It preserves Property~\ref{prop:expand} if the event is inserted at the process level.
  If transformation \rn{insert\_event} $e$ $\pp$ transforms $G$ into $G'$, then
  $\dset,\dsetsnu : G, \ab D, \ab \usedevents \ab \indistev{V}{0} G', D \vee e, \usedevents \cup \{ e\}$.
\end{lemma}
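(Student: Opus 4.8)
\textbf{Proof plan for the \rn{insert\_event} lemma.}
The plan is to prove the four claims in order: preservation of the syntactic Properties, preservation of Property~\ref{prop:expand}, and the central property-preservation-with-introduction-of-events statement
$\dset,\dsetsnu : G, D, \usedevents \indistev{V}{0} G', D \vee e, \usedevents \cup \{e\}$.
The first two claims are essentially bookkeeping. Inserting $\keventabort{e}$ at a program point $\pp$ only replaces the subterm or subprocess at $\pp$ with $\keventabort{e}$; by Invariant~\ref{inv:event}, $e$ is a fresh Shoup event (it occurs only in $\keventabort{e}$), so no interval types, no $\Reschan{c}$, no channel indices, and no $\INSERT/\GET$ are introduced, giving Properties~\ref{prop:nointervaltypes}, \ref{prop:noreschan}, \ref{prop:channelindices}, and~\ref{prop:notables}. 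For Property~\ref{prop:expand}, I would note that $\keventabort{e}$ is already a legal process (and a legal term, by the grammar of Figure~\ref{fig:syntax-expanded}), so expansion is preserved precisely when the insertion is at the process level; when it is inside a term, $\keventabort{e}$ is still a simple-enough leaf but the surrounding term structure is unchanged, which is why the statement guards Property~\ref{prop:expand} with ``if the event is inserted at the process level.''

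The heart of the proof is the indistinguishability claim. I would unfold Definition~\ref{def:indistev}: one must show that the events of $G$ and of $D$ are in $\usedevents$ (true since $G'$ only adds $e$, and $G$, $D$ are unchanged), that $\usedevents \subseteq \usedevents \cup \{e\}$, that the events of $G'$ not in $G$ (namely $e$) are in $(\usedevents \cup \{e\})\setminus\usedevents$, that the events of $D\vee e$ not in $D$ (again $e$) are in $(\usedevents\cup\{e\})\setminus\usedevents$, and finally the probability inequality~\eqref{eq:indistev} with $p = 0$. Since the inserted event $e$ is a Shoup event, $\nonunique{G,D_1\vee D} = \nonunique{G',D_1\vee D'}$ where $D' = D\vee e$ (inserting a Shoup $\keventabort{e}$ adds no $\FIND\unique{}$ or $\GET\unique{}$, so the non-unique disjunctions are identical). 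Thus the inequality to prove reduces, for every acceptable context $C$ not using events in $\usedevents\cup\{e\}$, every $D_0\in\dset\cup\{\Dfalse\}$, and every $D_1$, to
\[
\Pr[C[G] : (D_0 \vee D_1 \vee D)\wedge \neg\nonunique{G,D_1\vee D}]
\leq \Pr[C[G'] : (D_0 \vee D_1 \vee D \vee e)\wedge \neg\nonunique{G',D_1\vee D\vee e}].
\]

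The key technical step is a trace-coupling argument, which is the analogue of Shoup's lemma in this framework. Every trace of $C[G]$ is matched with a trace of $C[G']$ having the same probability, obtained by running the same reductions until execution first reaches $\pp$ with the replication indices that would trigger the inserted event; up to that point the two games are syntactically identical, so the coupled traces produce identical sequences of events and the semantic rules apply identically (formally I would invoke Lemma~\ref{lem:evalpos} and Lemma~\ref{lem:start_from_pp} to see that a configuration at $\pp$ in $C[G]$ corresponds to one in $C[G']$, the only difference being the process/term sitting at $\pp$). If the coupled $C[G]$-trace never reaches $\pp$, the two traces are identical and contribute equally to both sides. If it does reach $\pp$, then $C[G']$ instead executes $\keventabort{e}$ by rule~\eqref{sem:eventabort} (or~\eqref{sem:eventabortt} followed by context rules, if the insertion is inside a term), aborting with $e$ appended to $\evseq$; such a trace satisfies the distinguisher $D_0\vee D_1\vee D\vee e$ through its disjunct $e$ on the right-hand side. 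Hence every $C[G]$-trace counted on the left is matched by a $C[G']$-trace of equal probability counted on the right, giving the inequality with $p=0$. The main obstacle is making this coupling fully rigorous across all three shapes of insertion point (process position, term-in-output-process position, and term-in-\FIND-condition position) and checking that appending $e$ to $\evseq$ and the abort never remove a trace that was previously counted on the left; I expect this to require a careful case analysis tracking $\beforetr{\trace}$ and the environment-extension lemma (Lemma~\ref{lem:sem-ext}), exactly as in the worked Shoup's-lemma example of Figure~\ref{fig:prooftrees}(a).
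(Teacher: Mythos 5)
Your route is the intended one: the paper in fact states this lemma \emph{without} proof (as it does for most syntactic transformations in Section~\ref{sec:gametransf}), and the justification it has in mind is exactly the Shoup's-lemma trace argument you sketch — see the discussion of transformations that insert events in Section~\ref{sec:computadv}, which records precisely the conclusion $\dset,\dsetsnu : G, \Dfalse, \usedevents \indistev{V}{0} G', e, \usedevents \cup \{e\}$. So there is no divergence of approach to report; what needs attention is rigor at three points.

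First, your trace matching is not a probability-preserving bijection, and the sentence ``every $C[G]$-trace counted on the left is matched by a $C[G']$-trace of equal probability'' is false as written. A full trace of $C[G]$ that passes through $\pp$ continues executing after $\pp$, so \emph{many} such traces collapse onto the single aborting trace of $C[G']$; what holds is that the total probability of all full $C[G]$-traces extending a given prefix that reaches $\pp$ equals the probability of that prefix, which is the probability of the corresponding aborting $C[G']$-trace. The inequality with $p=0$ follows from this many-to-one accounting, not from per-trace equality. Second, the claim $\nonunique{G,D_1\vee D} = \nonunique{G',D_1\vee D\vee e}$ is too strong: inserting $\keventabort{e}$ at $\pp$ removes the code previously at $\pp$, which may itself contain $\FIND\unique{e'}$ occurrences, so the non-unique events of $G'$ can be a strict subset of those of $G$. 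This does not break the argument — for traces that never reach $\pp$ the implication $\neg\nonunique{G,D_1\vee D} \Rightarrow \neg\nonunique{G',D_1\vee D\vee e}$ still holds, and a trace that reaches $\pp$ cannot have executed a non-unique event earlier (it would already have aborted), so the aborting $G'$-trace satisfies the right-hand restriction — but you should argue this one-directional inclusion rather than assert an equality. Third, your justification of the Property~\ref{prop:expand} caveat is backwards: the point is not that $\keventabort{e}$ is a ``simple-enough leaf,'' but that a term containing $\keventabort{e}$ is \emph{not} simple (simple terms contain only replication indices, variables, and function applications), so inserting the event inside a term outside a $\FIND$ condition destroys Property~\ref{prop:expand}, whereas process-level insertion leaves every term untouched — which is exactly why the lemma restricts that clause to process-level insertion.
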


\subsubsection[\rn{insert}]{\rn{insert} \cite{BlanchetEPrint12}}\label{sec:insert}

The transformation $\rn{insert}\ \pp\ \mathit{ins}$ adds instruction
$\mathit{ins}$ at the program point $\pp$. The program point $\pp$
is designated as explained in Section~\ref{sec:pp}. The instruction
$\mathit{ins}$ can for instance be a test, in which case all branches
of the test will be copies of the code that follows program point $\pp$
(so that the semantics of the game is unchanged). It can also be an
assignment or a random generation of a fresh variable or an
$\kw{event\_abort}$ instruction. In all cases, CryptoVerif checks that
this instruction preserves the semantics of the game except when we
execute an inserted Shoup event, and rejects it with an error
message if it does not.

After this transformation, \rn{insert} calls
\rn{auto\_SArename} to guarantee Property~\ref{prop:autosarename}.

When the user inserts $\keventabort{e}$, the transformation
$\rn{insert}$ adds a query $\fevent{e} \Rightarrow \false$ in order to
bound the probability of event $e$.

When the user inserts a $\FIND\unique{e}$ (the user actually types
$\FIND\unique{}$ but CryptoVerif automatically generates a fresh event $e$
and inserts $\FIND\unique{e}$ instead, since uniqueness is not proved yet),
the transformation $\rn{insert}$ adds a query $\fevent{e} \Rightarrow \false$
and calls \rn{prove\_unique} (Section~\ref{sec:prove_unique})
in order to try proving uniqueness.

\begin{lemma}
  The transformation \rn{insert} requires and preserves
  Properties~\ref{prop:nointervaltypes}, \ref{prop:noreschan}, \ref{prop:channelindices}, \ref{prop:notables}, \ref{prop:autosarename}, and~\ref{prop:expand}.
  If transformation \rn{insert} transforms $G$ into $G'$, then
  $\dset,\dsetsnu : G, \ab D, \ab \usedevents \ab \indistev{V}{p} G', D \vee e_1 \vee \dots \vee e_m,
  \usedevents \cup \{ e_1, \dots, e_m \}$
  where $e_1$, \dots, $e_m$ are the events in the inserted instruction ($\keventabort{e_j}$, $\FIND\unique{e_j}$) and the probability $p$ comes from \rn{prove\_unique}.
\end{lemma}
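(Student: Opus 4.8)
The final statement is the soundness lemma for \rn{insert}, asserting that the transformation requires and preserves Properties~\ref{prop:nointervaltypes}--\ref{prop:expand}, and that if \rn{insert} transforms $G$ into $G'$, then $\dset,\dsetsnu : G, D, \usedevents \indistev{V}{p} G', D \vee e_1 \vee \dots \vee e_m, \usedevents \cup \{ e_1, \dots, e_m \}$, where $e_1,\dots,e_m$ are the events appearing in the inserted instruction and $p$ comes from \rn{prove\_unique}. The plan is to decompose the transformation into the atomic modifications it performs and to chain the property-preservation lemmas and the composition rules for $\indistev{V}{}{}$ already established in Lemma~\ref{lem:indistev}.

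First I would treat the invariant-preservation part separately from the probabilistic part, since the former is essentially a syntactic bookkeeping check: CryptoVerif explicitly verifies that $\mathit{ins}$ preserves the semantics (except for inserted Shoup events), and the grammar of allowed instructions (tests duplicating the continuation, assignments, fresh random generations, $\keventabort{e}$, $\FIND\unique{e}$) is such that each clearly maintains Properties~\ref{prop:nointervaltypes}, \ref{prop:noreschan}, \ref{prop:channelindices}, \ref{prop:notables}, and~\ref{prop:expand}; the subsequent call to \rn{auto\_SArename} restores Property~\ref{prop:autosarename}, invoking its lemma from Section~\ref{sec:autosarename}. Then I would split the probabilistic claim according to the kind of inserted instruction. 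For a pure test, assignment, or random generation of a fresh variable, the semantics is genuinely unchanged, so the step contributes $\dset,\dsetsnu : G, D, \usedevents \indistev{V}{0} G', D, \usedevents$ by Lemma~\ref{lem:indistev}, Property~\ref{indistref} (reflexivity up to the syntactic rewriting, which preserves the distribution of traces). For an inserted $\keventabort{e_j}$, the behaviour is exactly that of \rn{insert\_event} (Section~\ref{sec:insertevent}): by Shoup's lemma, the two games differ only when $e_j$ fires, giving $\dset,\dsetsnu : G, D, \usedevents \indistev{V}{0} G', D \vee e_j, \usedevents \cup \{e_j\}$, and the query $\fevent{e_j} \Rightarrow \false$ is added to bound its probability later.

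The most delicate case — and the main obstacle — is the insertion of a $\FIND\unique{e_j}$. Here the transformation both introduces the non-unique event $e_j$ and immediately calls \rn{prove\_unique} (Section~\ref{sec:prove_unique}) to try to discharge it, so the probability $p$ is precisely the bound returned by \rn{prove\_unique} on the traces where the inserted $\FIND$ is not in fact unique. The argument is that inserting an unconstrained $\FIND\unique{e_j}$ whose branches replicate the continuation does not change the semantics as long as $e_j$ is not executed (the unique choice coincides with the ordinary choice), so that the configurations in which several solutions exist are exactly those flagged by $e_j$; this yields $\dset,\dsetsnu : G, D, \usedevents \indistev{V}{0} G'', D \vee e_j, \usedevents\cup\{e_j\}$ for the intermediate game $G''$ before \rn{prove\_unique}, and then \rn{prove\_unique} transforms $G''$ into $G'$ contributing its own probability $p$ by its soundness lemma. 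I would then combine the per-instruction steps by transitivity (Lemma~\ref{lem:indistev}, Property~\ref{indisttrans}), summing the probabilities and accumulating the disjuncts $e_1 \vee \dots \vee e_m$ and the event sets, to obtain the stated $\indistev{V}{p}$ with $D \vee e_1 \vee \dots \vee e_m$ and $\usedevents \cup \{e_1,\dots,e_m\}$. The subtle point requiring care throughout is ensuring that the $\uniqueopt$ markers and the non-unique events interact correctly with the $\nonunique{Q,D}$ term appearing in~\eqref{eq:indistev}, so that the uniqueness exploited by the inserted $\FIND$ is only used on the traces that are, by construction, not counted; this is exactly the mechanism that the framework of Definition~\ref{def:indistev} was set up to support, so I expect the verification to reduce to checking that the inserted $\FIND$ satisfies the hypotheses of \rn{prove\_unique} rather than to a fresh probabilistic argument.
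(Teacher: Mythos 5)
Your proposal is correct and takes essentially the same route as the paper, which states this lemma without an explicit proof and justifies it implicitly by the transformation description: CryptoVerif's check that the inserted instruction preserves the semantics except when an inserted event fires (giving your probability-zero steps and the Shoup-style step for $\keventabort{e_j}$, exactly as in \rn{insert\_event}), the call to \rn{prove\_unique} for an inserted $\FIND\unique{e_j}$ (contributing $p$), and the final \rn{auto\_SArename} call (restoring Property~\ref{prop:autosarename}), composed by transitivity of Lemma~\ref{lem:indistev}. The only minor imprecision is your appeal to Property~\ref{indistref} (reflexivity) for the semantics-preserving insertions: it does not literally apply to the syntactically distinct $G$ and $G'$, and what is actually needed there is the probability-preserving trace correspondence guaranteed by CryptoVerif's semantic check, but this does not affect the soundness of the argument.
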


\subsubsection{\rn{replace}}\label{sec:replace}

The transformation $\rn{replace}\ \pp\ M$ replaces the term $M_0$ at program point
$\pp$ with the term $M$. ($M_0$ and $M$ must be simple. The program point $\pp$
is designated as explained in Section~\ref{sec:pp}.) Before performing the replacement, it checks that 
$M$ is equal to $M_0$ at that program point (up to a small probability): first, it collects all facts $\fset_{\pp}$ that hold at program point $\pp$; second, it tests equality between $M_0$ and $M$ using $\fset_{\pp}$ and built-in equations (it uses equalities inferred from $\fset_{\pp}$ to replace variables with their values, trying to make the terms equal); third, it simplifies $M_0$ and $M$ using user-defined rewrite rules of Section~\ref{sec:userdefinedrewriterules}, and tests equality between the results using $\fset_{\pp}$ and built-in equations; fourth, it rewrites $M_0$ and $M$ at most \rn{maxReplaceDepth} times using equalities inferred from $\fset_{\pp}$ and user-defined rewrite rules of Section~\ref{sec:userdefinedrewriterules}, until it finds a common term modulo the built-in equations. The transformation is performed as soon as the equality between $M_0$ and $M$ is proved.

The $\kw{defined}$ conditions of $\FIND$ above the program point $\pp$ are 
updated to make sure that Invariant~\ref{inv2} is preserved. This is needed
in particular when $M$ makes array accesses that $M_0$ does not make.

\begin{lemma}
  The transformation \rn{replace} requires and preserves
  Properties~\ref{prop:nointervaltypes}, \ref{prop:noreschan}, \ref{prop:channelindices}, \ref{prop:notables}, and~\ref{prop:autosarename}. It preserves Property~\ref{prop:expand}.
  If transformation \rn{replace} $\pp$ $M$ transforms $G$ into $G'$, then
  $\dset,\dsetsnu : G, \ab D, \ab \usedevents \ab \indistev{V}{p} G', D, \usedevents$, where $p$ is an upper bound of the probability that $M$ is different from $M_0$ at $\pp$.
\end{lemma}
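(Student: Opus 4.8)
The statement to prove is the lemma characterizing the transformation \rn{replace}: it requires and preserves Properties~\ref{prop:nointervaltypes}, \ref{prop:noreschan}, \ref{prop:channelindices}, \ref{prop:notables}, and~\ref{prop:autosarename}, it preserves Property~\ref{prop:expand}, and if it transforms $G$ into $G'$, then $\dset,\dsetsnu : G, D, \usedevents \indistev{V}{p} G', D, \usedevents$, where $p$ bounds the probability that $M$ differs from $M_0$ at $\pp$.

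The plan is to split the statement into its two essentially independent parts. First, the structural invariants. The replacement of a simple term $M_0$ with another simple term $M$ at a single program point does not introduce any function returning interval types, any $\NEWCHANNEL$, any channel with non-current indices, any $\INSERT$/$\GET$, nor any variable defined in a $\kw{find}$ condition; hence Properties~\ref{prop:nointervaltypes}, \ref{prop:noreschan}, \ref{prop:channelindices}, \ref{prop:notables}, and~\ref{prop:autosarename} are each preserved by a direct syntactic inspection. Since $M$ and $M_0$ are both simple, replacing one by the other keeps all non-$\kw{find}$-condition terms simple, so Property~\ref{prop:expand} is preserved. I would also note the delicate point mentioned in the transformation description: the $\defined$ conditions of $\FIND$ above $\pp$ are updated so that Invariant~\ref{inv2} continues to hold (needed when $M$ makes array accesses that $M_0$ does not); I would argue this update only adds $\defined$ facts that are guaranteed at runtime, so it changes neither the behavior nor the invariants beyond restoring Invariant~\ref{inv2}.

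The substantive part is the indistinguishability claim. First I would unfold the goal using Lemma~\ref{lem:indistev}, Property~\ref{linkindist1}: since $G$ and $G'$ carry the same events and the same active Shoup/non-unique distinguishers $D$, it suffices to establish $G \approx^{V,\evset}_p G'$ for a suitable $\evset$, from which the $\indistev{V}{p}$ statement with unchanged $D$ and $\usedevents$ follows. So the real work is to bound $|\Pr[C[G]:D_0] - \Pr[C[G']:D_0]|$ for any acceptable context $C$ and distinguisher $D_0$. The key idea is a coupling argument on traces: I would show that every trace of $C[G]$ in which, whenever the configuration reaches program point $\pp$, the terms $M$ and $M_0$ evaluate to the same value, is matched by a trace of $C[G']$ with the same probability and the same observable behavior (same events, same outputs, same distinguisher verdict). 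The two games can only differ on the set of traces where $M \neq M_0$ at some execution of $\pp$. The probability of that set is exactly what the equational prover bounds: by the algorithm of Section~\ref{sec:replace}, CryptoVerif proves $M = M_0$ using the facts $\fset_\pp$ that hold at $\pp$ (Lemma~\ref{lem:factsP}), and the equational prover's soundness (Lemma~\ref{lem:yields_contrad}, via $\fset_\pp \cup \{M \neq M_0\}$ yielding a contradiction) gives precisely a probability $p$ bounding $\Prss{C[G]}{\exists \tup{x}, \bigwedge \fset_\pp \wedge M \neq M_0}$, hence bounding the probability of the differing traces.

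The main obstacle I expect is making the trace-coupling precise, in the same style as the long proof of Lemma~\ref{lem:sec}. Concretely, I would need a bijection between full traces of $C[G]$ and $C[G']$ that preserves probability and agrees on events, and to show that the only place the two derivations can diverge is at an evaluation of the replaced subterm; this requires using the semantic machinery of Section~2 (Lemmas~\ref{lem:evalpos}, \ref{lem:start_from_pp}, and Corollary~\ref{cor:factsP}) to locate where the subterm at $\pp$ is evaluated and to confirm that $\fset_\pp$ indeed holds there. A secondary subtlety is that $M$ may introduce array accesses requiring the updated $\defined$ conditions; I must check that these updated conditions hold whenever $\pp$ is reached in $C[G]$ (so that the $\FIND$ behavior is unchanged on matching traces), which again follows from the fact that $M = M_0$ implies $M$ is defined on the non-differing traces. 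Finally, I would appeal to Lemma~\ref{lem:Pr-Prss} to pass from the "prefix-satisfies" probability $\Prss{\cdot}{\cdot}$ produced by the equational prover to the bound on the probability difference, completing the proof that $\dset,\dsetsnu : G, D, \usedevents \indistev{V}{p} G', D, \usedevents$.
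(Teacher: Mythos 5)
Your proposal is correct and takes essentially the same route as the paper, which states this lemma without a dedicated proof and relies on exactly the pattern you reconstruct: syntactic inspection for the invariant properties, and, for the probability bound, the facts $\fset_{\pp}$ holding at $\pp$ (Lemma~\ref{lem:factsP}) combined with the equational prover's soundness (Lemmas~\ref{lem:yields_contrad} and~\ref{lem:Pr-Prss}) to bound the probability of traces where $M \neq M_0$ at $\pp$, all other traces of $C[G]$ and $C[G']$ being matched by a coupling that preserves probabilities and events. One minor technical caveat: Lemma~\ref{lem:indistev}, Property~\ref{linkindist1} formally assumes that $G$ and $G'$ contain no non-unique events and that $D$ is a disjunction of Shoup events, so in the presence of $\FIND\unique{e}$ you should use your coupling to verify inequality~\eqref{eq:indistev} directly (which it does, since matching traces execute the same events, including non-unique ones) rather than passing through plain observational equivalence --- the paper itself glosses over this same point.
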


The variant $\rn{assume\ replace}\ \pp\ M$ performs the same replacement,
without checking the equality between $M_0$ and $M$. This transformation
is obviously not sound, but can be used to experiment with modifications
in the games. As soon as this transformation is used, CryptoVerif
does not claim that any property is proved.

\subsubsection[\rn{merge\_branches}]{\rn{merge\_branches} \cite{BlanchetEPrint12}}\label{sec:mergebranches}

The transformation $\rn{merge\_branches}$ performs the following transformations:
\begin{enumerate}

\item If some $\kw{then}$ branches of a $\kw{find}[\kw{unique}]$ 
  execute the same code as the $\kw{else}$ branch (up to renaming of
  variables defined in these branches and that do not have array
  accesses, and up to equality of terms proved using facts $\fset_{\pp}$ that hold at the program point $\pp$ of the considered $\kw{find}[\kw{unique}]$), the index variables bound in these $\kw{then}$ branches
  have no array accesses, and the conditions of these $\kw{then}$ branches
  do not contain $\kw{event\_abort}$ nor unproved $\FIND\unique{e}$, 
  then we remove these $\kw{then}$ branches.

Indeed, these $\kw{then}$ branches have the same effect
as the $\kw{else}$ branch. The hypotheses are needed for
the following reasons:
\begin{itemize}
\item The renamed variables must not have array accesses because 
renaming variables that have array accesses requires transforming
these array accesses. The transformation $\rn{merge\_arrays}$
presented in Section~\ref{sec:mergearrays} can rename
variables with array accesses.

\item The index variables bound in the removed branches must not
have array accesses, because removing the definitions of these
variables would modify the behavior of the array accesses.

\item The conditions must not contain $\kw{event\_abort}$ nor unproved $\FIND\unique{e}$, because
  if they do, the $\kw{find}$ may abort while code after transformation
  would not abort.
\end{itemize}

\item If all branches of $\kw{if}$, $\kw{let}$ with pattern matching,
  or $\kw{find}$ execute the same code (up to renaming of variables
  defined in these branches and that do not have array accesses, and up to equality of terms proved using facts $\fset_{\pp}$ that hold at the program point $\pp$ of the considered $\kw{if}$, $\kw{let}$, or $\kw{find}$), and in case of $\kw{find}$,
  it is not marked $\unique{e}$,
  the index variables bound in the $\kw{then}$ branches
  have no array accesses, and the conditions of the $\kw{then}$ branches 
  do not contain $\kw{event\_abort}$ nor unproved $\FIND\unique{e}$, then
  we replace that $\kw{if}$ or $\kw{find}$ with its $\kw{else}$
  branch.  

  In this transformation, we ignore the array accesses that
  occur in the conditions of the $\kw{find}$ under consideration,
  since these conditions will disappear after the transformation.

\end{enumerate}
Furthermore, $\rn{merge\_branches}$ applies these transformations
globally to all $\kw{find}$s of the game for which the simplification
is possible. As a consequence, one can ignore array accesses to all
variables in conditions of $\kw{find}$ that will be removed, so more
transformations are enabled.

\begin{lemma}
  The transformation \rn{merge\_branches} requires and preserves
  Properties~\ref{prop:nointervaltypes}, \ref{prop:noreschan}, \ref{prop:channelindices}, \ref{prop:notables}, \ref{prop:autosarename}, and~\ref{prop:expand}.
  If transformation \rn{merge\_branches} transforms $G$ into $G'$, then
  $\dset,\dsetsnu : G, \ab D, \ab \usedevents \ab \indistev{V}{p} G', D, \usedevents$, where $p$ is an upper bound on the probability that equalities between terms of merged branches do not hold.
\end{lemma}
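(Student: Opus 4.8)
The statement to prove is the soundness lemma for \rn{merge\_branches}: that it requires and preserves Properties~\ref{prop:nointervaltypes}--\ref{prop:autosarename} and~\ref{prop:expand}, and that if it transforms $G$ into $G'$, then $\dset,\dsetsnu : G, D, \usedevents \indistev{V}{p} G', D, \usedevents$ where $p$ bounds the probability that the equalities between terms of merged branches fail.

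The plan is to reduce the statement to a probabilistic indistinguishability claim and dispatch the property-preservation parts by routine inspection. First I would handle the syntactic properties: since \rn{merge\_branches} only removes branches or replaces a branching construct with its $\kw{else}$ branch, it introduces no interval types, no $\Reschan{c}$, no channel indices, no tables, and no variables defined in conditions of $\FIND$, so Properties~\ref{prop:nointervaltypes}--\ref{prop:autosarename} are preserved by structural inspection; Property~\ref{prop:expand} is preserved because removing branches from an already-expanded term or process keeps terms simple outside $\FIND$ conditions. The interesting content is the relation $\indistev{V}{p}$, and by Lemma~\ref{lem:indistev}, Property~\ref{linkindist1}, it suffices to show $G \approx^{V,\evset}_p G'$ for the appropriate $\evset$, since \rn{merge\_branches} introduces no new events and the same distinguishers $D$ are carried through unchanged (the relation keeps $D$, $\dsetsnu$, and $\usedevents$ fixed, which matches the hypotheses of Property~\ref{linkindist1} once we note no non-unique event is introduced because a $\FIND\unique{e}$ is never the target of case~2 and case~1 removes only $\kw{then}$ branches while keeping the $\kw{else}$, preserving $\nonunique{G}$).

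To establish $G \approx^{V,\evset}_p G'$, the core is a coupling argument between traces of $C[G]$ and $C[G']$ for an arbitrary acceptable context $C$. For each of the two rewriting patterns I would argue locally at the affected $\kw{find}$/$\kw{if}$/$\kw{let}$ program point $\pp$: collect the facts $\fset_{\pp}$ that hold there (sound by Lemma~\ref{lem:factsP}), and invoke the hypothesis that the merged branches execute the same code \emph{up to equality of terms proved using $\fset_{\pp}$}. The equational prover (Lemma~\ref{lem:yields\_contrad}) then yields a probability $p$ bounding the traces in which these equalities fail; outside that bad event, I would build a bijection between traces of $C[G]$ and $C[G']$ of equal probability that produce the same sequence of events and the same result. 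The side conditions feed directly into correctness of this bijection: the renamed variables and the index variables in removed branches having no array accesses guarantees that renaming and deleting their definitions does not disturb any $\defined$ condition or array access elsewhere (by Invariant~\ref{inv2} and Lemma~\ref{lem:inv1exec}); the absence of $\kw{event\_abort}$ and unproved $\FIND\unique{e}$ in the removed branch conditions guarantees the merged code cannot abort where the replacement would not (so the two games really do match step-for-step in the good case); and for case~2, that the construct is not marked $\unique{e}$ ensures we are not silently changing $\nonunique{G}$.

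The main obstacle I expect is the coupling itself when the merged construct is a $\FIND$ with several successful branches, because the semantics (rule~\eqref{sem:find1}) chooses the branch and indices randomly according to $D_{\FIND}(S)$. Removing branches or collapsing to the $\kw{else}$ branch changes the solution set $S$, and I must argue the distribution of observable behavior is preserved. Here I would lean on the remark in the semantics section that $D_{\FIND}(S)(v_i)$ depends only on $|S|$ and the lexicographic position of $v_i$, so that transformations mapping successful values to successful values at the same position incur no $\epsilon_{\FIND}$ cost: in the good case (equalities hold), the removed $\kw{then}$ branches behave exactly like the $\kw{else}$ branch, so the \emph{observable} outcome is identical regardless of which index is selected, and I can quotient the random choice away. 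The delicate bookkeeping is ensuring that, across the global application of the transformation to all eligible $\FIND$s simultaneously, the bad events for different program points are summed correctly into a single $p$, which is exactly the content of Lemma~\ref{lem:yields\_contrad} counting eliminated collisions only once; I would phrase the final bad-event probability as a single $\Prss{C[G]}{\cdot}$ and appeal to that lemma.
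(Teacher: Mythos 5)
You handle the syntactic properties correctly, and the low-level coupling you sketch is the right tool, but your top-level reduction is unsound: you cannot invoke Lemma~\ref{lem:indistev}, Property~\ref{linkindist1}, to reduce the lemma to plain indistinguishability $G \approx^{V,\evset}_p G'$. First, that property requires that $G$ and $G'$ contain no non-unique events and that $D$ and $\dsetsnu$ consist of Shoup events only, whereas the present lemma must cover games and distinguishers $D$ that contain non-unique events. Second, and more fundamentally, $G \approx^{V,\evset}_p G'$ with the stated $p$ is simply false for case~1 of \rn{merge\_branches}. Case~1 removes $\kw{then}$ branches of a $\kw{find}[\kw{unique}]$ whose code equals that of the $\kw{else}$ branch, and its soundness rests on ignoring the traces in which that $\FIND$ actually has several successful (branch, indices) pairs: in $G$ such a trace executes the proved non-unique event and aborts, while in $G'$, where a successful branch has been deleted, the $\FIND$ may have a single remaining solution and continue with a different branch. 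An acceptable context can produce this situation with probability bounded only by the probability that was spent when that $\FIND\unique{}$ was proved unique in an earlier game; that quantity is \emph{not} part of your $p$, which bounds only failures of the equalities between merged branches. So the two games are distinguishable beyond $p$, and the lemma is provable only for the relation $\indistev{V}{p}$ itself, whose conjuncts $\neg\nonunique{G,D_1\vee D}$ and $\neg\nonunique{G',D_1\vee D}$ discard exactly those traces on both sides. The paper warns about precisely this in the discussion following Lemma~\ref{lem:indistev}: transformations that exploit uniqueness of $\FIND\unique{e}$ are exceptions to the pattern ``prove $\approx^{V,\evset}_p$, then apply Property~\ref{linkindist1}''.

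The repair is to argue directly against Definition~\ref{def:indistev}: fix an acceptable context $C$ containing no event of $\usedevents$, a distinguisher $D_0 \in \dset\cup\{\Dfalse\}$, and a disjunction $D_1$ of events of $\dsetsnu$, and map every trace of $C[G]$ counted on the left-hand side (it satisfies $D_0 \vee D_1 \vee D$ and executes no proved non-unique event of $G$) to a trace of $C[G']$ of equal probability counted on the right-hand side, except on the bad event that some equality between merged branches fails, whose probability is bounded by Lemma~\ref{lem:yields_contrad}. On the counted traces every proved $\kw{find}[\kw{unique}]$ has at most one successful choice at each execution, so deleting branches whose code equals the $\kw{else}$ branch, or collapsing a construct all of whose branches agree, changes neither which code is executed nor the probability of the random choices; in particular, no appeal to the position-invariance of $D_{\FIND}(S)$ is needed, because the multiple-solution traces where that argument would be required are exactly the ones excluded by $\neg\nonunique{G,D_1\vee D}$. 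The side conditions of the transformation (no $\kw{event\_abort}$ nor unproved $\FIND\unique{e}$ in the removed conditions, no array accesses on the removed index variables, case~2 not marked $\unique{e}$) are then exactly what makes this trace correspondence well defined and event-preserving.
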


\subsubsection[\rn{merge\_arrays}]{\rn{merge\_arrays} \cite{BlanchetEPrint12}}\label{sec:mergearrays}

The transformation $\rn{merge\_arrays}$ $x_{11}\ \ldots\ x_{1n}$, $\ldots$, $x_{m1}\ \ldots\ x_{mn}$ merges the variables $x_{j1}, \ab \ldots, \ab x_{jn}$ into a single variable $x_{j1}$ for each $j \leq m$. Each variable $x_{jk}$ must have a single definition. For each $j \leq n$, the variables $x_{j1}, \ldots, x_{jn}$ must have the same type and indices of the same type. They must not be defined for the same value of their indices (that is, $x_{jk}$ and $x_{jk'}$ must be defined in different branches of $\kw{if}$ or $\kw{find}$ when $k \neq k'$). The arrays $x_{j1}, \ldots, x_{jn}$ are merged into a single array $x_{j1}$ for each $j \leq m$. The transformation proceeds as follows:
\begin{itemize}

\item If, for each $k \leq n$,  $x_{1k}$ is defined above $x_{jk}$ for all $1 < j < m$, we introduce a fresh variable $b_k$ defined by $b_k \gets \kwf{mark}$
just after the definition of $x_{1k}$. We call $b_k$ a \emph{branch variable}; it is used to detect that $x_{jk}$ has been defined: $x_{jk}[\tilde M]$ is defined before the transformation if and only if $x_{j1}[\tilde M]$ and $b_k[\tilde M]$
are defined after the transformation, and $x_{j1}[\tilde M]$ after the transformation is equal to $x_{jk}[\tilde M]$ before the transformation. 

\item For each $\kw{find}$ that requires that some variables $x_{jk}$ are defined, we leave the branches that do not require the definition of $x_{jk}$ unchanged and we try to transform the other branches $\FB_l = (\tilde u_l = \tilde i_l \leq \tilde n_l\ \kw{suchthat}\ \kw{defined}(\tilde M_l) \wedge M_l\ \kw{then}\ P_l)$ as follows.
\begin{enumerate}

\item We require that, for each $l$, there exists a distinct $k$ such that the $\defined$ condition of $\FB_l$ refers to $x_{jk}$ for some $j$ but not to $x_{jk'}$ for any other $k'$. 
(Otherwise, the transformation fails.) We denote by $l(k)$ the value of $l$ that corresponds to $k$.

\item We choose a ``target'' branch $\targetFB  = (\tilde u = \tilde i \leq \tilde n\ \kw{suchthat}\ \kw{defined}(\tilde M) \wedge M\ \kw{then}\ P)$: if the $\defined$ condition of some branch $\FB_l$ refers to $x_{j1}$ for some $j$, we choose that branch $\FB_l$. Otherwise, we choose any branch $\FB_l$ and rename its variables $x_{jk}$ to $x_{j1}$.
We require that the references $x_{j1}[\tilde M]$ to the variables $x_{j1}$ in the $\defined$ condition of the target branch all have the same indices $\tilde M$.
%
If the transformation succeeds, we will replace all branches $\FB_l$ with the target branch.

\item The branch $\targetFB$ after transformation is equivalent to
branches $\bigorfind_{k = 1}^n \targetFB\{x_{jk}/x_{j1}, \ab j = 1,\dots, m\}$ before
transformation. We show that these branches are equivalent to 
the branches $\FB_l$.

For each $k \leq n$, 
\begin{itemize}

\item
if $l(k)$ exists, then we show that $\targetFB\{x_{jk}/x_{j1}, \ab j = 1,\dots,m\}$ is equivalent to $\FB_{l(k)}$. Let $l = l(k)$. We first rename the variables $\tilde u_l$ of $\FB_l$ to the variables $\tilde u$ of the target branch. For simplicity, we still denote by $\FB_l = (\tilde u_l = \tilde i_l \leq \tilde n_l$ $\kw{suchthat}$ $\kw{defined}(\tilde M_l) \wedge M_l$ $\kw{then}$ $P_l)$ the obtained branch. Then we show that, if the variables of $\tilde M_l$ are defined, then the variables of $\tilde M\{x_{jk}/x_{j1}, \ab j = 1, \dots, m\}$ are defined, and conversely; $M_l = M\{x_{jk}/x_{j1}, \ab j = 1, \dots, m\}$ (knowing the equalities that hold at that program point), and $P_l$ and $P\{x_{jk}/x_{j1}, \ab j = 1, \dots, m\}$ execute the same code up to renaming of variables defined in $P_l$ or $P\{x_{jk}/x_{j1}, \ab j = 1, \dots, m\}$ and that do not have array accesses, and up to equality of terms proved using facts $\fset_{\pp}$ that hold at the program point $\pp$ of the considered $\kw{find}$.

\item
if $l(k)$ does not exist, then we show that $\targetFB\{x_{jk}/x_{j1}, j = 1, \dots, m\}$ can in fact not be executed, because its condition cannot hold: the variables of $\tilde M\{x_{jk}/x_{j1}, \ab j = 1, \dots, m\}$ cannot be simultaneously defined or $M\{x_{jk}/x_{j1}, j = 1, \dots, m\}$ cannot hold.

\end{itemize}
\end{enumerate}

If the transformation above fails and we have introduced branch variables, we replace each condition $\kw{defined}(x_{jk}[\tilde M])$ with $\kw{defined}(x_{j1}[\tilde M], b_k[\tilde M])$.

If the transformation above fails and we have not introduced branch variables, the whole $\rn{merge\_arrays}$ transformation fails.

\item The definition of $x_{jk}$ is renamed to $x_{j1}$ and each reference 
to $x_{jk}[\tilde M]$ is renamed to $x_{j1}[\tilde M]$.

\end{itemize}

\begin{lemma}
  The transformation \rn{merge\_arrays} requires and preserves
  Properties~\ref{prop:nointervaltypes}, \ref{prop:noreschan}, \ref{prop:channelindices}, \ref{prop:notables}, \ref{prop:autosarename}, and~\ref{prop:expand}.
  If transformation \rn{merge\_arrays} transforms $G$ into $G'$, then
  $\dset,\dsetsnu : G, \ab D, \ab \usedevents \ab \indistev{V}{p} G', D, \usedevents$, where $p$ is an upper bound on the probability that required equalities do not hold.
\end{lemma}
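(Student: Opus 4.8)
The plan is to reduce the statement to the construction of a probability- and event-preserving bijection between the full traces of $C[G]$ and $C[G']$, for an arbitrary acceptable context $C$, valid outside a set of ``bad'' traces whose cumulative probability is bounded by $p$. First I would dispatch the syntactic part. The transformation only renames variables ($x_{jk} \mapsto x_{j1}$), introduces fresh branch variables $b_k$ defined by $\bassign{b_k}{\kwf{mark}}$, and rewrites $\defined$ conditions and $\FIND$ branches; it neither introduces interval types, $\Reschan{c}$, non-current channel indices, tables, nor non-simple terms, so Properties~\ref{prop:nointervaltypes}, \ref{prop:noreschan}, \ref{prop:channelindices}, \ref{prop:notables}, and~\ref{prop:expand} are preserved by inspection. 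Property~\ref{prop:autosarename} is preserved because the merged and branch variables have single definitions and the transformation introduces no clashing names in conditions of $\FIND$. Since the transformation creates no Shoup nor non-unique events and keeps the $\unique{e}$ markings unchanged, the set of $\unique{e}$ events is the same in $G$ and $G'$, so $D' = D$, $\usedevents' = \usedevents$, and $\nonunique{G} = \nonunique{G'}$; establishing inequality~\eqref{eq:indistev} of Definition~\ref{def:indistev} therefore amounts to comparing $\Pr[C[G]:\cdot]$ and $\Pr[C[G']:\cdot]$ for the relevant distinguishers up to $p$.

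The core is the semantic bijection. I would define the renaming $\rho$ sending every binding $x_{jk}[\tilde a] \mapsto v$ to $x_{j1}[\tilde a] \mapsto v$, and map each environment $E$ of a $C[G]$-configuration to $E' = \rho(E)$ augmented with the branch bindings $b_k[\tilde a] \mapsto \kwf{mark}$ that record which column $k$ was entered. The key invariant to maintain along the derivation is that for each $j$ and each index tuple $\tilde a$, at most one of $x_{j1}[\tilde a], \ldots, x_{jn}[\tilde a]$ is defined in $E$; this follows from the hypothesis that the $x_{jk}$ lie in different branches of $\kw{if}$/$\kw{find}$ together with Invariant~\ref{inv1} and Lemma~\ref{lem:stronginv1exec}, and it guarantees that $\rho$ never identifies two distinct live cells. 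Under this correspondence every semantic rule other than those for the rewritten $\FIND$s steps in lockstep with the same probability and the same emitted events, since $\rho$ commutes with term evaluation (Invariant~\ref{inv2} ensures the updated $\defined$ conditions read correctly) and leaves $D_T$, $\tblcts$, and $\evseq$ untouched.

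The delicate step is the equivalence of the rewritten $\FIND$s, and this is where $p$ enters. For a group of branches $\FB_l$ merged into the target branch $\targetFB$, I would show that $\targetFB$ in $G'$ is equivalent to $\bigoplus_{k=1}^n \targetFB\{x_{jk}/x_{j1}\}$ in $G$, and that this re-expanded family matches the original branches: for each column $k$ with an associated $l(k)$, the $\defined$ conditions coincide (via the branch variables or the common indices), the guards satisfy $M_{l(k)} = M\{x_{jk}/x_{j1}\}$, and the continuations $P_{l(k)}$ and $P\{x_{jk}/x_{j1}\}$ agree up to renaming of non-array local variables; for columns $k$ without an $l(k)$, the corresponding branch is dead. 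Each such property is discharged by the equational prover from the facts $\fset_{\pp}$ holding at the program point of the $\FIND$, so by Lemma~\ref{lem:factsP} and Lemma~\ref{lem:yields_contrad} they hold along all traces except on a set of configurations whose probability is the bound $p$. Because re-expanding $\targetFB$ yields exactly the original branches in the same lexicographic order of solutions, the multiset $S$ and the position of each solution are preserved, so $D_{\FIND}(S)$ assigns matching probabilities and no $\epsilon_{\FIND}$ term is incurred, consistent with $p$ depending only on the failing equalities.

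Finally I would assemble the bound. On the complement of the bad traces the constructed map is a measure-preserving bijection that preserves $\removepp(\evseq)$, hence preserves the value of every distinguisher $D_0 \vee D_1 \vee D$ appearing in~\eqref{eq:indistev}, so the restricted probabilities of $(D_0 \vee D_1 \vee D) \wedge \neg\nonunique{G,D_1 \vee D}$ coincide on both sides; the remaining difference is the mass of the bad traces, which is absorbed into $p$ by expressing it as a $\Prss{C[G]}{\cdot}$ quantity and applying Lemmas~\ref{lem:yields_contrad} and~\ref{lem:Pr-Prss}. This yields $\dset,\dsetsnu : G, D, \usedevents \indistev{V}{p} G', D, \usedevents$. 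I expect the main obstacle to be the bookkeeping of the $\FIND$ merge: proving that the branch-variable encoding faithfully reconstructs the original $\defined$ conditions in every case (including cross-referencing conditions updated to preserve Invariant~\ref{inv2}), and that the re-expansion genuinely preserves the order of solutions so that the $D_{\FIND}$ probabilities match exactly rather than only up to $\epsilon_{\FIND}$.
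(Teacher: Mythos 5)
The paper states this lemma without a proof (as it does for the other syntactic game transformations), so your attempt can only be measured against the proof obligations embedded in the description of \rn{merge\_arrays} itself; your decomposition --- syntactic preservation of the invariants, a renaming-plus-branch-variable correspondence on configurations, equivalence of the rewritten $\FIND$s discharged by the equational prover from $\fset_{\pp}$, and assembly of the probability via Lemmas~\ref{lem:yields_contrad} and~\ref{lem:Pr-Prss} --- is exactly that set of obligations, and most of it is sound.

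There is, however, a genuine gap in the step you yourself flag as delicate: the claim that the re-expansion preserves ``the same lexicographic order of solutions'', so that $D_{\FIND}$ assigns matching probabilities and no $\epsilon_{\FIND}$ term is incurred. This is false in general. In $G$ the successful values of a modified $\FIND$ are ordered lexicographically on pairs (branch index, array indices), the merged columns occupying $n$ distinct branches $\FB_{l(1)},\dots,\FB_{l(n)}$; in $G'$ all of them collapse into the single target branch, where they are ordered by array indices alone. Two solutions $(l(1),\tup{a})$ and $(l(2),\tup{a}')$ with $\tup{a} > \tup{a}'$ are ordered one way in $G$ and the opposite way in $G'$, so the bijection between solution sets is not position-preserving, and by the paper's own analysis of $D_{\FIND}$ (its probabilities depend on the position of a solution in the ordered set) the chosen-element distributions can differ by up to $\epsilon_{\FIND}$ per execution. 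Your observation that the continuation depends only on the stored indices $\tup{u}$ --- since each index tuple succeeds in at most one column --- shows the two outcome distributions range over the same set of index tuples, but not that they are equal. Consequently either the bound must contain an additional term of $\epsilon_{\FIND}$ times the number of executions of modified $\FIND$s that are not proved unique, exactly as in the lemmas for \rn{SArename} and \rn{expand\_tables}, which perform the mirror-image branch-splitting operation, or the exactness argument must be restricted to $\FIND$s with at most one successful value (e.g.\ those marked $\unique{}$, for which reordering is irrelevant because several solutions abort the game in both $G$ and $G'$). As written, your proof establishes the stated bound only under this unjustified order-preservation claim.
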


\subsubsection{\rn{guess} $i$}\label{sec:guessi}

When \rn{guessRemoveUnique = true} and some (one-session) secrecy
queries are present, the transformation \rn{guess} $i$ first
transforms the game $G$ into $\Gru$, by replacing all proved
$\FIND\unique{}$ with $\FIND$. Lemma~\ref{lem:Gru} shows the soundness
of this preliminary transformation. It may be advantageous for
(one-session) secrecy proofs because the removed $\FIND\unique{}$ do
not need to be proved, while the remaining ones must be reproved after
the transformation, as we show below.

\begin{lemma}\label{lem:Gru}
  Let $\Gru$ be the game obtained from $G$ by replacing all proved  $\FIND\unique{}$ with $\FIND$.
  Let $\prop$ be a security property ($\prop$ is $\secrone(x)$, $\secr(x)$, $\secrbit(x)$, or
a correspondence $\corresp$, which does not use $\sevent$, $\sbarevent$, nor non-unique events).
Let $D$ be a disjunction of Shoup events and a subset of non-unique events $e_i$ corresponding to unproved $\FIND\unique{e_i}$ in $G$ ($D$ does not contain $\sevent$ nor $\sbarevent$).
If $\bound{\Gru}{V}{\prop}{D}{p}$, then $\bound{G}{V}{\prop}{D}{p}$.
\end{lemma}
\begin{proof}
  Let $C'$ be an evaluation context acceptable for $C_{\prop}[G]$ with public variables $V \setminus V_{\prop}$ that does not contain events used by $\prop$ or $D$ nor non-unique events in $G$. Let $C = C'[C_{\prop}[\,]]$.
  The context $C'$ is also acceptable for $C_{\prop}[\Gru]$ with public variables $V \setminus V_{\prop}$, and a fortiori does not contain non-unique events in $\Gru$. Since $\bound{\Gru}{V}{\prop}{D}{p}$, we have $\Advtev{\Gru}{\prop}{C}{D} \leq p(C)$.

  First case: $\prop$ is a correspondence $\corresp$. We have
  \begin{align*}
    \Advtev{G}{\prop}{C}{D} &= \Pr[C[G] : (\neg \corresp \vee D) \wedge \neg\nonunique{G,D}] \\
    &\leq \Pr[C[\Gru] : (\neg \corresp \vee D) \wedge \neg\nonunique{\Gru,D}] = \Advtev{\Gru}{\prop}{C}{D}
  \end{align*}
  because traces of $G$ that satisfy $(\neg \corresp \vee D) \wedge \neg\nonunique{G,D}$ correspond to similar traces of $\Gru$ that satisfy $(\neg \corresp \vee D) \wedge \neg\nonunique{\Gru,D}$. Only traces that satisfy $e$ for some proved $\FIND\unique{e}$ in $G$ are mapped to different traces in $\Gru$. These traces satisfy $\nonunique{G,D}$.

  Second case: $\prop$ is $\secrone(x)$, $\secr(x)$, or $\secrbit(x)$. We have
\[\Pr[C[G]: \sevent \vee D] \leq \Pr[C[\Gru]: \sevent \vee D]\]
   since the traces of $C[G]$ that execute an event in $\sevent \vee D$ correspond to
   similar traces of $C[\Gru]$ that also execute an event in $\sevent \vee D$ ($\sevent \vee D$ does not contain any proved non-unique event of $G$),
   and
    \[\Pr[C[\Gru]: \sbarevent \vee \nonunique{\Gru,D}] \leq \Pr[C[G]: \sbarevent \vee \nonunique{G,D}]\]
   since the traces of $C[\Gru]$ that execute $\sbarevent$ or an
   event in $\nonunique{\Gru,D}$ correspond to
   either to similar traces of $C[G]$ that execute the same event
   or to traces that execute a proved non-unique event in $C[G]$, so an event in $\nonunique{G,D}$. Therefore,
   \begin{align*}
     \Advtev{G}{\prop}{C}{D} &= \Pr[C[G]: \sevent \vee D] - \Pr[C[G]: \sbarevent \vee \nonunique{G,D}] \\
     &\leq \Pr[C[\Gru]: \sevent \vee D] - \Pr[C[\Gru]: \sbarevent \vee \nonunique{\Gru,D}] = \Advtev{\Gru}{\prop}{C}{D}.
   \end{align*}
   In both cases, we obtain $\Advtev{G}{\prop}{C}{D} \leq p(C)$ and $\bound{G}{V}{\prop}{D}{p}$.
   \proofcomplete
\end{proof}

Next, the main guessing transformation is performed.
The transformation \rn{guess\ $i$} consists in guessing the tested session of a principal
in a protocol, which is a step frequently done in cryptographic proofs.
In CryptoVerif, we consider a game $G$ and define a transformed game $G'$
by guessing a replication index $i$: we replace
   $\repl{i}{n} Q$ with $\repl{i}{n} Q'$
   where $Q'$ is obtained from $Q$ by replacing the processes $P$
   under the first inputs with $\baguard{i = i_{\mathrm{tested}}}{P'}$ $\kw{else}$ $P''$
   and $i_{\mathrm{tested}}$ is a constant.
   The constant $i_{\mathrm{tested}}$ is the index of the tested session. We distinguish the process executed in the tested session, $P'$, on which we are going to prove security properties, from the process $P''$ for other sessions which are executed, but for which we do not prove security properties.
   (In case \rn{diff\_constants = true}, the constant
   $i_{\mathrm{tested}}$ must not be considered different from other constants
   of the same type.)
   The process $P'$ is obtained from $P$ by
   \begin{itemize}
   \item duplicating all events: $\kevent{e(\tup{M})}$ is replaced with
     $\kevent{e(\tup{M})}; \kevent{e'(\tup{M})}$ and similarly
     $\keventabort{e}$ is replaced with $\kevent{e}; \keventabort{e'}$.
     We require that in the game $G$,
     the same event $e$ cannot occur both under the modified replication
     $\repl{i}{n} Q$ and elsewhere in the game.
     (Otherwise, queries that use $e$ are left unchanged.)
   \item duplicating definitions of every variable $x$ used in queries
     for secrecy and one-session secrecy: $\bassign{x'}{x}\ \kw{in}$
     is added after each definition of $x$. We require that in the
     game $G$, the same variable $x$ used in queries for secrecy or
     one-session secrecy cannot be defined both under the modified
     replication $\repl{i}{n} Q$ and elsewhere in the game. (Otherwise,
     the considered query is left unchanged.)
   \end{itemize}
   The process $P''$ is obtained from $P$ by duplicating definitions of every variable $x$ used
   in queries for secrecy (not one-session secrecy): $\bassign{x''}{x}\ \kw{in}$ is added after each definition of $x$.
   
   We replace variables $x$ in secrecy and one-session secrecy queries
   with their duplicated version $x'$. For secrecy queries, the duplicated
   version $x''$ is added to public variables.
   In both cases, we prove (one-session) secrecy for the variable
   $x'$ defined in the tested session. In case of one-session secrecy,
that is enough: it shows that $x'$ is indistinguishable from a random
value, and that proves one-session secrecy of $x$ for all sessions by
symmetry. However, for secrecy, we additionally want to show that the
values of $x$ in the various sessions are independent of each other;
this is achieved by considering the value of $x$ in sessions other
than the tested session (that is, $x''$) as public: if $x'$ is
indistinguishable from random even when $x''$ is public, then $x'$ is
independent of $x''$.

   In non-injective correspondence queries, we replace \emph{one}
   non-injective event $e$ before the arrow $\Rightarrow$ with its
   duplicated version $e'$. Hence, we prove the query for the tested
   session, which uses event $e'$. The proof is valid for all sessions
   by symmetry.

   The probability of attack must basically be multiplied by $n$ for
   all modified queries.  The proof depends on the considered query
   and is detailed below.

   For queries that are left unchanged, i.e. secrecy and one-session
   secrecy queries for variables not defined under the modified
   replication, non-injective correspondence queries with no event
   before the arrow $\Rightarrow$ under the modified replication (the
   previous queries prove properties about other roles than the one
   for which we guess the tested session), bit secrecy queries
   (because the secret is defined under no replication, so it is not
   under the guessed replication), as well as injective correspondence
   queries (see details below), the probability is unchanged. It is
   clear that these queries are not affected by the transformation.

   After this transformation, \rn{guess} calls
   \rn{auto\_SArename} to guarantee Property~\ref{prop:autosarename}.

\begin{lemma}\label{lem:guess_i}
    The transformation \rn{guess} $i$ requires and preserves
    Properties~\ref{prop:nointervaltypes}, \ref{prop:noreschan},
    \ref{prop:channelindices}, \ref{prop:notables}, and~\ref{prop:autosarename}.
    It preserves Property~\ref{prop:expand}.

    Suppose the game $G$ is transformed into $G'$ by the transformation \rn{guess} $i$, where $i$ is a replication index bounded by $n$. Below, we consider only the modified queries.
     
     Let $\corresp$ and $\corresp'$ be respectively the semantics of a
     non-injective correspondence and its transformed
     correspondence.
     If $\bound{G'}{V}{\corresp'}{\Dfalse}{p}$  and $p$
     is independent of the value of $i_{\mathrm{tested}}$, then
     $\bound{G}{V}{\corresp}{\Dfalse}{n p}$.


     If $G'$ satisfies the one-session secrecy of $x'$ with public variables $V$ ($x, x', x'' \notin V$) up to probability $p$ and $p$ is independent of the value of $i_{\mathrm{tested}}$, then $G$ satisfies the one-session secrecy of $x$ with public variables $V$ up to probability $n p$.
     %
     
     
     If $G'$ satisfies the secrecy of $x'$ with public variables $V \cup \{ x'' \}$ ($x, x', x'' \notin V$) up to probability $p$ and $p$ satisfies Property~\ref{prop:prob}, then
     $G$ satisfies the secrecy of $x$ with public variables $V$ up to probability $n \times p$ (neglecting a small additional runtime of the context).

     If $\bound{G'}{V \cup \{ x'\}}{\secrone(x')}{\nonunique{G'}}{p}$ and $p$ satisfies Property~\ref{prop:prob}, then we have $\bound{G}{V \cup \{x\}}{\secrone(x)}{\Dfalse}{n p}$.

     If $\bound{G'}{V \cup \{x',x''\}}{\secr(x')}{\nonunique{G'}}{p}$ and $p$ satisfies Property~\ref{prop:prob}, then we have $\bound{G}{V \cup \{ x \}}{\secr(x)}{\Dfalse}{n p}$ (neglecting a small additional runtime of the context).
   \end{lemma}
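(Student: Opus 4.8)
The plan is to prove Lemma~\ref{lem:guess_i} by a uniform reduction strategy: given an adversary $C$ (a context) attacking the original game $G$ for a modified query, construct an adversary $C'$ attacking the transformed game $G'$, where $C'$ guesses the value of the tested session $i_{\mathrm{tested}}$ uniformly at random in $[1,n]$ and otherwise runs $C$. The multiplicative factor $n$ in the probability bounds will come precisely from this guessing: the reduction succeeds only when the guessed index matches the session that $C$ actually attacks, which happens with probability $1/n$. First I would treat the preliminary $\Gru$ transformation separately, using Lemma~\ref{lem:Gru} to reduce from $G$ to $\Gru$; then all the work concerns the relation between $\Gru$ (renamed back to $G$ for simplicity) and $G'$.

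The structural claims (Properties~\ref{prop:nointervaltypes}--\ref{prop:autosarename} and~\ref{prop:expand}) are routine syntactic checks on the transformation, verified by inspecting that the inserted $\baguard{i = i_{\mathrm{tested}}}{P'}{P''}$ construct, the event duplications $\kevent{e(\tup{M})}; \kevent{e'(\tup{M})}$, and the variable-copy assignments $\bassign{x'}{x}$ and $\bassign{x''}{x}$ all preserve typing, simplicity of terms, absence of tables and of $\NEWCHANNEL$, and (after the final \rn{auto\_SArename} call) distinctness of find-condition variables. For the probability bounds, I would handle each query kind in turn. For a non-injective correspondence, the key observation is that in $G'$ the duplicated event $e'$ is executed only in the tested session $i = i_{\mathrm{tested}}$; so a trace of $G$ violating $\corresp$ at some session $a$ corresponds, when $i_{\mathrm{tested}} = a$, to a trace of $G'$ violating $\corresp'$. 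Averaging over the uniform guess of $i_{\mathrm{tested}}$ gives $\Advt_G^{\corresp}(C) \leq n\, \Advt_{G'}^{\corresp'}(C')$, and the hypothesis that $p$ is independent of $i_{\mathrm{tested}}$ lets us collapse the $n$ choices into a single bound $np$. For one-session secrecy, the argument is the same by symmetry: the property for $x'$ in the tested session transfers to $x$ in an arbitrary session. For secrecy, I would use that making $x''$ public (the copy of $x$ in non-tested sessions) forces $x'$ to be indistinguishable from random even given the other cells, which is exactly what the definition of secrecy (``real-or-random'' over all cells) demands; here I would invoke Property~\ref{prop:prob} to argue that the probability formula does not depend on the concrete value of $i_{\mathrm{tested}}$, only on replication bounds and lengths.

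The last two statements, phrased with $\bound{\cdot}{\cdot}{\cdot}{\cdot}{\cdot}$ and with $\nonunique{G'}$ as the disjunction argument, require combining the advantage-transfer above with Lemma~\ref{lem:adv}: one reads off $\bound{G'}{V \cup \{x'\}}{\secrone(x')}{\nonunique{G'}}{p}$ via Lemma~\ref{lem:adv}, Property~\ref{item:prop:init} and the secrecy soundness Proposition~\ref{prop:sec}, then propagates through the reduction to obtain $\bound{G}{V \cup \{x\}}{\secrone(x)}{\Dfalse}{np}$; the appearance of $\Dfalse$ on the $G$ side (versus $\nonunique{G'}$ on the $G'$ side) reflects that the non-unique events of $G'$ are among those introduced or re-examined by the transformation and are accounted for in $p$.

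I expect the main obstacle to be the secrecy case, and specifically making rigorous the claim that the multiplicative factor is exactly $n$ rather than larger. The difficulty is that the real-or-random secrecy definition lets the adversary make \emph{several} test queries on different cells of $x$, whereas guessing fixes a single tested session; one must argue that proving indistinguishability of the single cell $x'$ in the tested session, \emph{with $x''$ public}, is sufficient to recover the multi-query secrecy of $x$, and that the context $C'$ simulating $C$ correctly routes the at-most-one relevant test query into the tested session while answering the others using the public values $x''$. Getting the event- and variable-duplication bookkeeping to line up so that the simulation is perfect whenever the guess is correct — and contributes nothing otherwise — is where the care is needed; I would isolate this as a separate claim about the correspondence between traces of $C[G]$ conditioned on attacking session $a$ and traces of $C'[G']$ with $i_{\mathrm{tested}} = a$, and then average.
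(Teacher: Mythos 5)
Your proposal diverges from the paper in a way that matters. First, a formalization point: in the transformed game $G'$, $i_{\mathrm{tested}}$ is a \emph{constant of the game}, not something an adversary can sample; there is no adversary $C'$ that ``guesses'' it. The hypothesis $\bound{G'}{V}{\corresp'}{\Dfalse}{p}$ with $p$ independent of $i_{\mathrm{tested}}$ is a statement about the family of games obtained for each value of that constant. The paper exploits it not by a randomized reduction but by partitioning the violating traces of $C[G]$ according to the session index at which the violation occurs (the value $i_0$ in the violated event, resp.\ the value of $\vf_j$ in the test query), observing that the traces in the slice $i_0 = i_{\mathrm{val}}$ are violating traces of $C[G']$ with $i_{\mathrm{tested}} = i_{\mathrm{val}}$ for the \emph{same} context $C$, and summing over the $n$ values. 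For correspondences and one-session secrecy your guessing argument is morally this partition-and-sum (your ``$1/n$ success probability'' is the same bookkeeping), so up to rephrasing it goes through — though for one-session secrecy you still owe the balancing step showing that traces where $\vf_j$ is undefined or $\vf_j \neq i_{\mathrm{tested}}$ contribute zero advantage (the paper proves this by the $b \leftrightarrow \neg b$ swap, which exchanges $\sevent$ and $\sbarevent$ on exactly those traces).

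The genuine gap is the secrecy case, and it is not just a matter of care: guessing plus ``answer the other test queries using the public values $x''$'' does not prove multi-query real-or-random secrecy. Answering all non-tested sessions with the real values $x''$ compares ``all real'' against ``all real except one random,'' and repeating that comparison for each guessed session does not telescope to ``all real'' versus ``all random,'' which is what $\secr(x)$ requires. The paper instead runs a hybrid argument: it defines intermediate test processes $\tproc{\secr(x),v}$ ($v = 0,\dots,n$) that answer queries with $\vf_j \leq v$ using the real value and queries with $\vf_j > v$ with (consistently reused) random values, writes $\Advt_G^{\secr(x)}(C)$ as the telescoping sum of the $n$ differences between consecutive hybrids, and bounds each difference by $\Advt_{G'}^{\secr(x')}(C[C'_v])$ with $i_{\mathrm{tested}} = v$, where the simulating context $C'_v$ answers sessions $< v$ from the public $x''$, forwards session $v$ to the test oracle for $x'$, and answers sessions $> v$ with fresh randomness. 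Property~\ref{prop:prob} is then invoked so that $p(C[C'_v])$ collapses to $p(C)$ uniformly in $v$, giving the factor $n$. Without this asymmetric (real below / random above) hybrid structure your simulation cannot be chained, so the secrecy statement — and consequently the final $\bound{G}{V \cup \{x\}}{\secr(x)}{\Dfalse}{np}$ statement that rests on it — remains unproved in your plan.
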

   Property~\ref{prop:prob} guarantees that $p$ is independent of the value of $i_{\mathrm{tested}}$, as well as other independence conditions needed for secrecy and for the properties on $\boundfun$ because we modify the context in the proof.
Since the third argument of $\boundfun$ is always $\Dfalse$ in the conclusion of Lemma~\ref{lem:guess_i}, we cannot use the optimization of considering the disjunction of several properties simultaneously, as outlined in Section~\ref{sec:computadv}: we must consider each property and event separately. Indeed, if we applied guessing to several properties at once, we might need to guess the tested session for each property, which would introduce several factors $n$. Since the third argument of $\boundfun$ is $\nonunique{G'}$ in the hypothesis of Lemma~\ref{lem:guess_i} for (one-session) secrecy properties, uniqueness of $\FIND\unique{e}$ must be reproved in game $G'$ after the guess transformation (that is, the probability that these $\FIND\unique{e}$ have several successful choices must be bounded again in $G'$).

   \begin{proof} 
     
     \paragraph{Non-injective correspondences}
   We suppose that the events under the transformed replication
   contain as argument the replication index $i$ of that replication.
   (CryptoVerif implicitly adds the current program point and
   replication indices to each event, and uses fresh distinct
   variables for the added replication indices in the queries. That
   does not change the meaning of the query.)

   Let $\forall i_0:[1,n],\tup{x}:\tup{T}; \fevent{e(\tup{M})} \wedge \psi \Rightarrow \exists \tup{y}:\tup{T}'; \phi$ be the initial query and
   $\forall i_0:[1,n],\tup{x}:\tup{T}; \fevent{e'(\tup{M})} \wedge \psi \Rightarrow \exists \tup{y}:\tup{T}'; \phi$ be the transformed query,
   where $\{i_0, \tup{x}\} = \fvar(\fevent{e'(\tup{M})} \wedge \psi)$,
   $\tup{y} = \fvar(\phi) \setminus \fvar(\fevent{e'(\tup{M})} \wedge \psi)$,
   and $i_0$ be the variable for the index of the transformed
   replication in $\tup{M}$.
   
   Let $C$ be an evaluation context acceptable for $G$ with public variables $V$ 
   that does not contain events used by $\corresp$.
   {\allowdisplaybreaks\begin{align*}
     \Advtev{G}{\corresp}{C}{\Dfalse} &= \Pr[C[G] : \neg\corresp \wedge \neg\nonunique{G}]\\
     &= \Pr\left[\begin{array}{@{}l@{}}
         C[G] : (\exists i_0 \in [1,n], \exists \tup{x} \in \tup{T}, \fevent{e(\tup{M})} \\
         \quad {}\wedge \psi \wedge \neg \exists \tup{y}\in\tup{T}', \phi)\wedge \neg\nonunique{G}
       \end{array}\right]\\
     &= \sum_{i_{\mathrm{val}} = 1}^n \Pr\left[\begin{array}{@{}l@{}}
         C[G] : (\exists i_0 \in [1,n], \exists \tup{x} \in \tup{T}, i_0 = i_{\mathrm{val}} \wedge \fevent{e(\tup{M})} \\
         \quad {} \wedge \psi \wedge \neg \exists \tup{y}\in\tup{T}', \phi)\wedge \neg\nonunique{G}
     \end{array}\right]\\
     &= \sum_{i_{\mathrm{val}} = 1}^n 
       \Pr\left[\begin{array}{@{}l@{}}
         C[G'] : (\exists i_0 \in [1,n], \exists \tup{x} \in \tup{T}, \fevent{e'(\tup{M})} \\
         \quad {} \wedge \psi \wedge \neg \exists \tup{y}\in\tup{T}', \phi)\wedge \neg\nonunique{G'}
         \end{array}\right]
         \text{ for } i_{\mathrm{tested}} = i_{\mathrm{val}}
       \\
     &= \sum_{i_{\mathrm{val}} = 1}^n \Advtev{G'}{\corresp'}{C}{\Dfalse}\text{ for } i_{\mathrm{tested}} = i_{\mathrm{val}}
   \end{align*}}%
   Since $\bound{G'}{V}{\corresp'}{\Dfalse}{p}$, we have $\Advtev{G'}{\corresp'}{C}{\Dfalse} \leq p(C)$ and the probability $p$ is independent of the value of $i_{\mathrm{tested}}$, so we obtain
   $\Advtev{G}{\corresp}{C}{\Dfalse} \leq n \times p(C)$.
   Therefore $\bound{G}{V}{\corresp}{\Dfalse}{n p}$.
   
   \paragraph{One-session secrecy}
   Let $C$ be an evaluation context acceptable for $C_{\secrone(x)}[G]$ with public variables $V$ that does not contain $\sevent$ nor $\sbarevent$.
   Suppose that the modified replication corresponds to the $j$-th index of variable $x$. 
   We have
   \begin{align*}
     \Advt_G^{\secrone(x)}(C) &= \Pr[C[C_{\secrone(x)}[G]] : \sevent] - \Pr[C[C_{\secrone(x)}[G]] : \sbarevent]\\
     &= \sum_{v = 1}^n \begin{array}{@{}l@{}}
\Pr[C[C_{\secrone(x)}[G]] : \sevent \wedge \vf_j = v] \\
{} - \Pr[C[C_{\secrone(x)}[G]] : \sbarevent \wedge \vf_j = v]
\end{array}\\
     &= \sum_{v = 1}^n \begin{array}{@{}l@{}}
       \Pr[C[C_{\secrone(x')}[G']] : \sevent \wedge \vf_j = v] \\
       {} - \Pr[C[C_{\secrone(x')}[G']] : \sbarevent \wedge \vf_j = v]
       \end{array}\text{ for }i_{\mathrm{tested}} = v
   \end{align*}
   because in $G'$ with $i_{\mathrm{tested}} = v$, $x'[\vf_1, \dots, \vf_m] = x[\vf_1, \dots, \vf_m]$ when $\vf_j = v$, so $C_{\secrone(x)}[G]$ behaves like $C_{\secrone(x')}[G']$ (the events added in $G'$ are not used).

   Moreover, $\Pr[C[C_{\secrone(x')}[G']] : \sevent \wedge (\vf_j$ not defined${} \vee \vf_j \neq v)] = \Pr[C[C_{\secrone(x')}[G']] : \sbarevent \wedge (\vf_j$ not defined${} \vee \vf_j \neq v)]$ when $i_{\mathrm{tested}} = v$. Indeed, when $\vf_j$ is not defined or $\vf_j \neq v = i_{\mathrm{tested}}$, the query on $\cS$ either is not executed or always yields, independently of the value of $b$. Hence, changing the value of $b$ just swaps the events $\sevent$ and $\sbarevent$. So
   \begin{align*}
     &\Pr[C[C_{\secrone(x')}[G']] : \sevent \wedge (\vf_j\text{ not defined} \vee \vf_j \neq v) \wedge b = \true] = {}\\
     &\qquad\Pr[C[C_{\secrone(x')}[G']] : \sbarevent \wedge (\vf_j\text{ not defined} \vee \vf_j \neq v) \wedge b = \false]\\
     &\Pr[C[C_{\secrone(x')}[G']] : \sbarevent \wedge (\vf_j\text{ not defined} \vee \vf_j \neq v) \wedge b = \true] = {}\\
     &\qquad \Pr[C[C_{\secrone(x')}[G']] : \sevent \wedge (\vf_j\text{ not defined} \vee \vf_j \neq v) \wedge b = \false].
   \end{align*}
   We obtain the announced result by swapping the two sides of the second equality and adding the first equality to it. (The variable $b$ is always defined when $\sevent$ or $\sbarevent$ is executed.)

   Therefore,
   \begin{align*}
     \Advt_G^{\secrone(x)}(C) &= \sum_{v = 1}^n \Pr[C[C_{\secrone(x')}[G']] : \sevent] - \Pr[C[C_{\secrone(x')}[G']] : \sbarevent]\text{ for }i_{\mathrm{tested}} = v\\
     &= \sum_{v = 1}^n \Advt_{G'}^{\secrone(x')}(C) \text{ for }i_{\mathrm{tested}} = v
   \end{align*}
   Since $G'$ satisfies the one-session secrecy of $x'$ with public variables $V$ up to probability $p$, we have $\Advt_{G'}^{\secrone(x')}(C) \leq p(C)$ and the probability $p$ is independent of the value of $i_{\mathrm{tested}}$, so we obtain $\Advt_G^{\secrone(x)}(C) \leq n \times p(C)$. Therefore, $G$ satisfies the one-session secrecy of $x$ with public variables $V$ up to probability $n p$.
   
   \paragraph{Secrecy}
Let $C$ be an evaluation context acceptable for $C_{\secr(x)}[G]$ with public variables $V$ that does not contain $\sevent$ nor $\sbarevent$.
Suppose that the modified replication corresponds to the $j$-th index of variable $x$.
   We have
   \begin{align*}
     &\Advt_G^{\secr(x)}(C) \\
     &\quad {} = \frac{1}{2}\left(\begin{array}{@{}r@{}}
       \Pr[C[G \parpop \tprocs{x}] : \sevent \mid b = \true] + \Pr[C[G \parpop \tprocs{x}] : \sevent \mid b = \false]\\
       {} - \Pr[C[G \parpop \tprocs{x}] : \sbarevent \mid b = \true] - \Pr[C[G \parpop \tprocs{x}] : \sbarevent \mid b = \false]
       \end{array}\right)
   \end{align*}
   Let 
   {\allowdisplaybreaks\begin{align*}
\tproc{\secr(x), \mathrm{real}} =\, & \cinput{\cSz}{}; \coutput{\cSz}{}; \\*
&(\repl{\iS}{\nS}\,\cinput{\cS[\iS]}{\vf_1:[1, n_1], \ldots, \vf_m:[1, n_m]};\adeftest{x[\vf_1, \ldots, \vf_m]}{}\\*
&\phantom{(}\coutput{\cS[\iS]}{x[\vf_1, \ldots, \vf_m]}\\*
&\!\!\parpop \cinput{\cS'}{b':\bool}; \bguard{b'}{\keventabort{\sevent}}{\keventabort{\sbarevent}})\\
\tproc{\secr(x), \mathrm{random}} =\, & \cinput{\cSz}{}; \coutput{\cSz}{}; \\*
&(\repl{\iS}{\nS}\,\cinput{\cS[\iS]}{\vf_1:[1, n_1], \ldots, \vf_m:[1, n_m]};\adeftest{x[\vf_1, \ldots, \vf_m]}{}\\*
&\phantom{(}\kw{find} \ {\vfS' = \iS' \leq \nS}\ \kw{suchthat} \ \defined (y[\iS'],\vf_1[\iS'], \ldots, \vf_m[\iS']) \fand {}\\*
&\phantom{(}\qquad \vf_1[\iS'] = \vf_1 \fand \ldots \fand \vf_m[\iS'] = \vf_m\\*
&\phantom{(} \kw{then}\ \coutput{\cS[\iS]}{y[\vfS']}\\*
&\phantom{(} \ELSE\Res{y}{T}; \coutput{\cS[\iS]}{y}\\*
&\!\!\parpop \cinput{\cS'}{b':\bool}; \bguard{b'}{\keventabort{\sevent}}{\keventabort{\sbarevent}})
   \end{align*}}%
   Then
   \begin{align*}
     \Advt_G^{\secr(x)}(C) &= \frac{1}{2}\left(\begin{array}{@{}r@{}}
       \Pr[C[G \parpop \tproc{\secr(x), \mathrm{real}}] : \sevent] + \Pr[C[G \parpop \tproc{\secr(x), \mathrm{random}}] : \sbarevent]\\
       {} - \Pr[C[G \parpop \tproc{\secr(x), \mathrm{real}}] : \sbarevent] - \Pr[C[G \parpop \tproc{\secr(x), \mathrm{random}}] : \sevent]
       \end{array}\right)
   \end{align*}
Let
   \begin{align*}
\tproc{\secr(x), v} =\, & \cinput{\cSz}{}; \coutput{\cSz}{}; \\
&(\repl{\iS}{\nS}\,\cinput{\cS[\iS]}{\vf_1:[1, n_1], \ldots, \vf_m:[1, n_m]};\adeftest{x[\vf_1, \ldots, \vf_m]}{}\\*
&\phantom{(}\bguard{\vf_j \leq v}{\coutput{\cS[\iS]}{x[\vf_1, \ldots, \vf_m]}}{}\\
&\phantom{(}\kw{find} \ {\vfS' = \iS' \leq \nS}\ \kw{suchthat} \ \defined (y[\iS'],\vf_1[\iS'], \ldots, \vf_m[\iS']) \fand {}\\
&\phantom{(}\qquad \vf_1[\iS'] = \vf_1 \fand \ldots \fand \vf_m[\iS'] = \vf_m\\
&\phantom{(} \kw{then}\ \coutput{\cS[\iS]}{y[\vfS']}\\
&\phantom{(} \ELSE\Res{y}{T}; \coutput{\cS[\iS]}{y}\\
&\!\!\parpop \cinput{\cS'}{b':\bool}; \bguard{b'}{\keventabort{\sevent}}{\keventabort{\sbarevent}})
\end{align*}
   The process $\tproc{\secr(x),0}$ behaves like $\tproc{\secr(x),\mathrm{random}}$ and
   $\tproc{\secr(x), n}$ behaves like $\tproc{\secr(x),\mathrm{real}}$ ($n_j = n$).
   Therefore,
   \begin{align*}
     \Advt_G^{\secr(x)}(C) &= \frac{1}{2}\left(\begin{array}{@{}r@{}}
       (\Pr[C[G \parpop \tproc{\secr(x), n}] : \sevent] - \Pr[C[G \parpop \tproc{\secr(x), 0}] : \sevent])\\
       {} - (\Pr[C[G \parpop \tproc{\secr(x), n}] : \sbarevent] - \Pr[C[G \parpop \tproc{\secr(x), 0}] : \sbarevent]) 
     \end{array}\right)\\
     &=\frac{1}{2}\left(\begin{array}{@{}r@{}}
       \sum_{v=1}^n (\Pr[C[G \parpop \tproc{\secr(x), v}] : \sevent] - \Pr[C[G \parpop \tproc{\secr(x), v-1}] : \sevent]) \\
       {} - \sum_{v=1}^n (\Pr[C[G \parpop \tproc{\secr(x), v}] : \sbarevent] - \Pr[C[G \parpop \tproc{\secr(x), v-1}] : \sbarevent])
     \end{array}\right)
   \end{align*}
\newcommand{\cSu}{c_{s1}}

   We define a context $C'_v$ that returns a random value for $\vf_j > v$,
   the real value of $x$ obtained from the public variable $x''$ in $G'$ for $\vf_j < v$,
   and calls $\tproc{\secr(x'), \mathrm{real}}$ or $\tproc{\secr(x'), \mathrm{random}}$ for $\vf_j = v$.
   \begin{align*}
     C'_v ={} &\Reschan{\cSu};( [\,] \mid \repl{\iS}{\nS}\,\cinput{\cS[\iS]}{\vf_1:[1, n_1], \ldots, \vf_m:[1, n_m]}; \\
     &\qquad \bguard{\vf_j = v}{\coutput{\cSu[\iS]}{\vf_1, \ldots, \vf_m}; \cinput{\cSu[\iS]}{z: T}; \coutput{\cS[\iS]}{z}}{}\\
     &\qquad \adeftest{x''[\vf_1, \ldots, \vf_m]}{} \\
     &\qquad \bguard{\vf_j < v}{\coutput{\cS[\iS]}{x''[\vf_1, \ldots, \vf_m]}}{}\\
     &\qquad \kw{find} \ {\vfS' = \iS' \leq \nS}\ \kw{suchthat} \ \defined (y[\iS'],\vf_1[\iS'], \ldots, \vf_m[\iS']) \fand {}\\
     &\qquad\qquad \vf_1[\iS'] = \vf_1 \fand \ldots \fand \vf_m[\iS'] = \vf_m\\
     &\qquad \kw{then}\ \coutput{\cS[\iS]}{y[\vfS']}\\
     &\qquad \ELSE\Res{y}{T}; \coutput{\cS[\iS]}{y})
   \end{align*}
   where the processes $\tproc{\secr(x'), \mathrm{real}}$ and $\tproc{\secr(x'), \mathrm{random}}$ use channel $\cSu$ instead of $\cS$.
   When $\vf_j = v$, the query on $\cS[\iS]$ is forwarded to $\tproc{\secr(x'), \mathrm{real}}$ (resp.~$\tproc{\secr(x'), \mathrm{random}}$) on channel $\cSu[\iS]$.
   The values of $x$ for sessions other than $v$ are collected in $x''$ by $G'$; these are the values returned by the query on $\cS[\iS]$ when $\vf_j < v$.
   Finally, when $\vf_j > v$, the query on $\cS[\iS]$ is answered with a random value $y$.

   Then
   \begin{align*}
     &\Pr[C[G \parpop \tproc{\secr(x),v}] : \sevent] = \Pr[C[C'_v[G' \parpop \tproc{\secr(x'), \mathrm{real}}]] : \sevent],\\
     &\Pr[C[G \parpop \tproc{\secr(x),v}] : \sbarevent] = \Pr[C[C'_v[G' \parpop \tproc{\secr(x'), \mathrm{real}}]] : \sbarevent],\\
     &\Pr[C[G \parpop \tproc{\secr(x),v-1}] : \sevent] = \Pr[C[C'_v[G' \parpop \tproc{\secr(x'), \mathrm{random}}]] : \sevent],\text{ and}\\
     &\Pr[C[G \parpop \tproc{\secr(x),v-1}] : \sbarevent] = \Pr[C[C'_v[G' \parpop \tproc{\secr(x'), \mathrm{random}}]] : \sbarevent],
   \end{align*}
   for $i_{\mathrm{tested}} = v$.

   Then
   \begin{align*}
     &\Advt_G^{\secr(x)}(C) \\
     &\quad = \sum_{v=1}^n \frac{1}{2} \left(\begin{array}{@{}r@{}}
       \Pr[C[C'_v[G' \parpop \tproc{\secr(x'), \mathrm{real}}]] : \sevent] - \Pr[C[C'_v[G' \parpop \tproc{\secr(x'), \mathrm{random}}]] : \sevent]\\
          {} - \Pr[C[C'_v[G' \parpop \tproc{\secr(x'), \mathrm{real}}]] : \sbarevent] + \Pr[C[C'_v[G' \parpop \tproc{\secr(x'), \mathrm{random}}]] : \sbarevent]
     \end{array}\right)\\*
     &\qquad\quad\text{for }i_{\mathrm{tested}} = v\\
     &\quad = \sum_{v=1}^n \Advt_{G'}^{\secr(x')}(C[C'_v]) \text{ for }i_{\mathrm{tested}} = v
   \end{align*}
   by the link between the advantage for secrecy and $\tproc{\secr(x'), \mathrm{real}}$, $\tproc{\secr(x'), \mathrm{random}}$ shown above.
   Since $G'$ satisfies the secrecy of $x'$ with public variables $V \cup \{ x'' \}$ up to probability $p$ and $C[C'_v]$ is an evaluation context acceptable for $G'\parpop \tproc{\secr(x')}$ with public variables $V \cup \{ x'' \}$ that does not contain $\sevent$ nor $\sbarevent$, we have $\Advt_{G'}^{\secr(x')}(C[C'_v]) \leq p(C[C'_v])$. Moreover, by Property~\ref{prop:prob}, the probability $p(C[C'_v])$ is independent of $i_{\mathrm{tested}} = v$ (because the type of $i_{\mathrm{tested}}$ is bounded) and depends only on the runtime of $C$, the number of outputs $C$ makes on the various channels (which determine replication bounds), and the length of bitstrings, so we have $p(C[C'_v]) = p(C)$ (the runtime of $C'_v$ can be neglected). Hence, we obtain
   $\Advt_G^{\secr(x)}(C) \leq n \times p(C)$.
   Therefore, $G$ satisfies the secrecy of $x$ with public variables $V$ up to probability $n \times p$.

   \paragraph{One-session secrecy, secrecy, and bit secrecy}
   In this lemma, bit secrecy queries do not occur. However, we reuse this proof in Lemmas~\ref{lem:guess_x} and~\ref{lem:guess_branch} where bit secrecy queries occur, so we also handle them here.
   Let $\prop$ be $\secrone(x)$, $\secr(x)$, or $\secrbit(x)$, and $\prop'$ be the same property
   with $x'$ instead of $x$. Let $V' = V$ when $\prop$ is $\secrone(x)$ or $\secrbit(x)$, and
   $V' = V \cup \{ x''\}$ when $\prop$ is $\secr(x)$.
   Since $\bound{G'}{V'\cup \{x'\}}{\prop'}{\nonunique{G'}}{p}$, then 
   by Lemma~\ref{lem:adv}, Property~\ref{item:prop:init},
   $G'$ satisfies $\prop'$ with public variables $V'$ up to probability $p'$
   such that $p'(C) = p(C[C_{\prop'}])$.
   So by the previous result for (one-session or bit) secrecy,
   $G$ satisfies $\prop$ with public variables $V$ up to probability $n p'$.
   Let $C$ be an evaluation context acceptable for $C_{\prop}[G]$ with public variables $V$ that does not contain $\sevent$ nor $\sbarevent$.
   Hence
   \begin{align*}
     &\Advtev{G}{\prop}{C[C_{\prop}[\,]]}{\Dfalse}\\
     &\quad = \Pr[C[C_{\prop}[G]] : \sevent] - \Pr[C[C_{\prop}[G]] : \sbarevent \vee \nonunique{G}]\\
     &\quad \leq \Pr[C[C_{\prop}[G]] : \sevent] - \Pr[C[C_{\prop}[G]] : \sbarevent] \\
     &\quad = \Advt_G^{\prop}(C) \leq n \times p'(C) = n \times p(C[C_{\prop}[\,]])
   \end{align*}
   Indeed, replacing $C_{\prop'}$ with $C_{\prop}$ in the argument of $p$ does not change its result, by Property~\ref{prop:prob}.
   Therefore, we have $\bound{G}{V\cup\{x\}}{\prop}{\Dfalse}{n p}$.
   \proofcomplete
   \end{proof}

   In the \rn{guess} transformation, we cannot modify injective correspondence queries, because two executions
   of some injective event $e$ with different indices $i$ could be mapped to the same
   events in the conclusion of the query. In general, it even does not work for
   non-injective events inside injective queries. As a counter-example, consider the query:
   $\forall i:[1,n],x:T';\fevent{e_1(i,x)} \wedge \fievent{e_2(x)} \Rightarrow \fievent{e_3()}$
   with events $e_1(i_1,x_1)$, $e_1(i_2,x_2)$, $e_2(x_1)$, $e_2(x_2)$ and $e_3$
   each executed once. This query is false: we have two executions of $e_2$ (with matching executions of $e_1$) for a single execution of $e_3$. That contradicts injectivity. However, it is
   true if we restrict ourselves to one value of $i$ (the index of the tested session), because we consider
   $e_1(i_1,x_1)$, $e_2(x_1)$ and $e_3$ for $i = i_1$ and
   $e_1(i_2,x_2)$, $e_2(x_2)$ and $e_3$ for $i = i_2$. Requiring the same
   value of $x$ in $e_1$ and $e_2$ restricts the events $e_2$ that
   we consider when we guess the session for $e_1$. Therefore,
   proving the query for the tested session does not allow us to prove it in the initial game.
   
   Hence, 
   for injective correspondence queries $\forall \tup{x}:\tup{T}; \psi \Rightarrow \exists \tup{y}:\tup{T'};\phi$, we proceed as follows:
we define a non-injective query $\noninj(\forall \tup{x}:\tup{T}; \psi \Rightarrow \exists \tup{y}:\tup{T'};\phi)$ simply obtained by replacing injective events with non-injective events,
and we try to prove that $\noninj(\forall \tup{x}:\tup{T}; \psi \Rightarrow \exists \tup{y}:\tup{T'};\phi)$ implies $\forall \tup{x}:\tup{T}; \psi \Rightarrow \exists \tup{y}:\tup{T'};\phi$ in the current game. 
This proof is a modified version of the proof of injective queries (Section~\ref{sec:injcorresp}):
we define the pseudo-formula $\coll(\psi,\phi)$ by
\begin{tabbing}
$\coll(\psi,M) = \bot$\\[1mm]
$\coll(\psi,\fevent{e(\tup{M})}) = \bot$\\[1mm]
$\coll(\psi, \fievent{e(\tup{M})} = \scoll$ such that
assuming $\psi = F_1 \wedge \dots \wedge F_m$,\\
\quad for every $\pp_1$ that executes $F_1$, \ldots, 
for every $\pp_m$ that executes $F_m$,
letting\\
\qquad $\fset = \theta_1\fset_{F_1,\pp_1} \cup \dots \cup \theta_m\fset_{F_m,\pp_m}$, \\
\qquad $\sri = \{ j \mapsto (\pp_j,\theta_j\replidx{\pp_j}) \mid F_j$ is an injective event$\}$, \\
\qquad $\vset = \fvar(\theta_1\replidx{\pp_1}) \cup \dots \cup \fvar(\theta_m\replidx{\pp_m}) \cup \{ \tup{x}, \tup{y} \}$,\\
\quad where for $j \leq m$, $\theta_j$ is a renaming of $\replidx{\pp_j}$ to fresh replication indices,\\
\quad we have $(\fset, (\tup{M}), \sri, \vset) \in \scoll$.\\[1mm]
$\coll(\psi, \phi_1 \wedge \phi_2) = \coll(\psi, \phi_1) \wedge \coll(\psi, \phi_2)$\\[1mm]
$\coll(\psi, \phi_1 \vee \phi_2) = \coll(\psi, \phi_1) \vee \coll(\psi, \phi_2)$ 
\end{tabbing}
Given a pseudo-formula $\coll$, we define $\vdash\coll$ as in Section~\ref{sec:injcorresp}.

\begin{proposition}\label{prop:corresp-inj}
Let $\forall \tup{x}:\tup{T}; \psi \Rightarrow \exists \tup{y}:\tup{T'}; \phi$ be a 
correspondence, with $\tup{x} = \fvar(\psi)$
and $\tup{y} = \fvar(\phi) \setminus \fvar(\psi)$.
Let $\corresp = \sem{\forall \tup{x}:\tup{T}; \psi \Rightarrow \exists \tup{y}:\tup{T'}; \phi}$ be the semantics the
correspondence $\forall \tup{x}:\tup{T}; \psi \Rightarrow \exists \tup{y}:\tup{T'}; \phi$ (Definition~\ref{def:icorresp}),
and $\nicorresp = \sem{\noninj(\forall \tup{x}:\tup{T}; \psi \Rightarrow \exists \tup{y}:\tup{T'}; \phi)}$ be the semantics of the
correspondence $\noninj(\forall \tup{x}:\tup{T}; \psi \Rightarrow \exists \tup{y}:\tup{T'}; \phi)$.
Let $Q_0$ be a process that satisfies Properties~\ref{prop:notables} and~\ref{prop:autosarename}.
Suppose that, in $Q_0$, the arguments of the events that occur in $\psi$ are always simple terms.

Assume that $\vdash \coll(\psi, \phi)$ and
for all evaluation contexts $C$ acceptable for $Q_0$,
$\Prss{C[Q_0]}{\neg \yctran{\vdash\coll(\psi, \phi)}} \leq p(C)$.
If $\bound{Q_0}{V}{\nicorresp}{\Dfalse}{p'}$, then $\bound{Q_0}{V}{\corresp}{\ab \Dfalse}{\ab p' + p}$.
\end{proposition}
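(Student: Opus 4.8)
The plan is to reduce the injective correspondence $\corresp$ to the non-injective correspondence $\nicorresp$ together with an injectivity condition captured by $\vdash\coll(\psi,\phi)$. The key observation is that, by Definition~\ref{def:icorresp}, the semantics of $\corresp$ consists of two conjuncts: the non-injective skeleton $\forall \step_1,\dots,\step_m, \forall \tup{x}, (\psistep \Rightarrow \exists \tup{y}, \phistep)$, and the injectivity requirements $\Inj(I,\Fstepfun_1) \wedge \dots \wedge \Inj(I,\Fstepfun_k)$ on the step functions. Intuitively, $\nicorresp$ should supply the first conjunct (since it states the correspondence holds when injective events are read as non-injective), while $\vdash\coll(\psi,\phi)$ should supply the injectivity part. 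I would therefore show that, for a full trace $\trace$ of $C[Q_0]$ that does not execute any non-unique event, if $\trace \vdash \nicorresp$ and $\trace \vdash \yctran{\vdash\coll(\psi,\phi)}$, then $\trace \vdash \corresp$.

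First I would reuse the construction from the proof of Lemma~\ref{lem:icorresp}: given a mapping $\venv$ of $\step_1,\dots,\step_m,\tup{x}$ such that $\trace,\venv \vdash \psistep$, I invoke Lemma~\ref{lem:fsetFP} to obtain, for each $j$, a program point $\pp_j$ executing $F_j$ and a case $\case_j$, together with mappings $\sigma_j$ recording the program point and replication indices of each event, and such that the facts $\theta_j\fset_{F_j,\pp_j,\case_j}$ hold. The non-injective correspondence $\nicorresp$ then yields witnesses for $\tup{y}$ and the execution of each event in $\phi$; I use these to define the step functions $\Fstepfun_l$, exactly as in Lemma~\ref{lem:icorresp}, setting $\Fstepfun_l$ to the step of the corresponding injective event of $\phi$ (or $\bot$ when that event lies in a discarded disjunct). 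This establishes the non-injective conjunct and simultaneously populates each $\scoll$ leaf with tuples $(\fset, M'_0, \sri, \vset)$ via the definition of $\coll(\psi,\phi)$.

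The injectivity conjunct $\Inj(I,\Fstepfun_l)$ is then discharged using $\vdash\coll(\psi,\phi)$, following the end of the proof of Lemma~\ref{lem:icorresp} almost verbatim: if $\Fstepfun_l$ maps two argument tuples to the same non-$\bot$ step, the two corresponding tuples in $\scoll$ yield the same event of $\phi$, so $\trace$ satisfies $M_0 = \theta'' M_0'$; combined with the facts $\fset \cup \theta''\fset'$ and the contradiction guaranteed by $\yctran{\vdash\coll(\psi,\phi)}$, I conclude $\sri(j) = \theta''\sri'(j)$ for all $j \in I$, and then Lemma~\ref{lem:oneevent} forces the corresponding steps of the injective events in $\psi$ to coincide, i.e. $\step_j = \step'_j$. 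Having both conjuncts, $\trace \vdash \corresp$. Contrapositively, a trace with $\trace \vdash \neg\corresp$ that does not execute a non-unique event must satisfy either $\trace \vdash \neg\nicorresp$ or $\trace \vdash \neg\yctran{\vdash\coll(\psi,\phi)}$.

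With this trace-level lemma in hand, the final probability bound follows the now-standard pattern seen in the proofs of Propositions~\ref{prop:nicorresp} and~\ref{prop:icorresp}. For an acceptable context $C$ not containing events of $\corresp$, I bound
\begin{align*}
\Advtev{Q_0}{\corresp}{C}{\Dfalse}
&= \Pr[C[Q_0] : \neg\corresp \wedge \neg\nonunique{Q_0}]\\
&\leq \Pr[C[Q_0] : \neg\nicorresp \wedge \neg\nonunique{Q_0}] + \Pr[C[Q_0] : \neg\yctran{\vdash\coll(\psi,\phi)}]\\
&\leq \Advtev{Q_0}{\nicorresp}{C}{\Dfalse} + \Prss{C[Q_0]}{\neg\yctran{\vdash\coll(\psi,\phi)}}\\
&\leq p'(C) + p(C)
\end{align*}
using Lemma~\ref{lem:Pr-Prss} for the second term, the hypothesis $\bound{Q_0}{V}{\nicorresp}{\Dfalse}{p'}$ for the first, and the bound on $\Prss{C[Q_0]}{\neg\yctran{\vdash\coll(\psi,\phi)}}$ for the second. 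This gives $\bound{Q_0}{V}{\corresp}{\Dfalse}{p'+p}$. The main obstacle I expect is the trace-level case split: I must argue that a trace violating $\corresp$ violates at least one of the two sub-properties, which requires carefully tracking that the step functions constructed from $\nicorresp$ are well-defined and that failure of injectivity is precisely what $\coll(\psi,\phi)$ detects. Getting the handling of the existential witnesses $\tup{y}$ and the discarded disjunctions of $\phi$ to line up cleanly between the $\nicorresp$ proof and the $\coll$ machinery is the delicate bookkeeping step, but it is structurally identical to what Lemma~\ref{lem:icorresp} already does, so I would factor it through that lemma as much as possible.
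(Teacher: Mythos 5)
Your proposal is correct and follows essentially the same route as the paper's proof: a trace-level argument that $\nicorresp$ together with $\yctran{\vdash\coll(\psi,\phi)}$ (and no non-unique events) implies $\corresp$, with the step functions built as in Lemma~\ref{lem:icorresp}, injectivity discharged by the $\coll$ contradiction combined with Lemma~\ref{lem:oneevent}, and the final union bound giving $p'+p$. The only point to keep straight is that here the $\scoll$ entries record the argument tuples $(\tup{M})$ of the injective events of $\phi$ rather than their pairs (program point, replication indices) as in Lemma~\ref{lem:icorresp} — which is exactly why equality of step-function values yields equality of those arguments — but your argument uses them in precisely that way, so this is only a matter of notation.
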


Proposition~\ref{prop:corresp-inj} proves injectivity much like in Section~\ref{sec:injcorresp}, but using the arguments $\tup{M}$ of the events in $\phi$
instead of the program points and replication indices at their execution (which we do not have since in $Q_0$ we are not able to prove that these events have been executed; otherwise we would simply prove the correspondence in $Q_0$).
Intuitively, $\vdash \coll(\psi, \phi)$ shows that, if we have different executions of injective events in $\psi$, that is, executions of such events with different pairs (program point, replication indices), then the arguments $\tup{M}$ of each injective event in $\phi$ must be different, which implies different executions of this event.

\begin{proof}
Let $C$ be an evaluation context acceptable for $Q_0$ with public variables $V$
that does not contain events used by $\corresp$.
Let $\coll =\coll(\psi,\phi)$.
Consider a trace $\trace$ of $C[Q_0]$ such that $\trace \vdash \nicorresp$, $\trace \vdash \yctran{\vdash\coll}$, $\trace$ does not execute a non-unique event of $Q_0$, and the last configuration of $\trace$ cannot be reduced. Let $\evseq$ be the sequence of events in the last configuration of $\trace$.
Since $\evseq \vdash \nicorresp$,
\[
\evseq \vdash \forall \step_1, \dots, \step_m\in \mathbb{N}, \forall \tup{x}\in\tup{T}, (\psistep \Rightarrow \exists \tup{y}\in\tup{T}', \phi)
\]
with the notations of Definition~\ref{def:icorresp}.
By defining functions $\Fstepfun_j \in \mathbb{N}^m \times \prod \tup{T} \rightarrow \mathbb{N} \cup \{\bot\}$
that map $\step_1$, \dots, $\step_m$, $\tup{x}$ to the execution steps of injective events in the proof of $\phi$,
and to $\bot$ when the event is not used in the proof of $\phi$,
we have
\begin{equation}
\evseq \vdash \exists \Fstepfun_1, \dots, \Fstepfun_k\in \mathbb{N}^m \times \prod \tup{T} \rightarrow \mathbb{N} \cup \{\bot\},
\forall \step_1, \dots, \step_m\in \mathbb{N}, \forall \tup{x}\in\tup{T}, (\psistep \Rightarrow \exists \tup{y}\in\tup{T}', \phistep)
\label{eq:f1}
\end{equation}
It remains to show $\Inj(I,\Fstepfun)$ for all $\Fstepfun \in \{ \Fstepfun_1, \dots, \Fstepfun_k\}$.

Let $\Fstepfun \in \{ \Fstepfun_1, \dots, \Fstepfun_k\}$  and $\fevent{e(\tup{M})}@\Fstepfun(\step_1, \dots, \step_m, \tup{x})$
be the event labeled with $\Fstepfun$ in $\phistep$.
Suppose that $\Fstepfun(\step'_1, \dots, \step'_m, \tup{a}') = \Fstepfun(\step''_1, \dots, \step''_m, \tup{a}'') \neq \bot$ and 
there exists $j \in I$ such that $\step'_j \neq \step''_j$, and let us prove a contradiction.

Since $\Fstepfun(\step'_1, \dots, \step'_m, \tup{a}') \neq \bot$, 
$\fevent{e(\tup{M})}@\Fstepfun(\step'_1, \dots, \step'_m, \tup{a}')$ is used in the proof of~\eqref{eq:f1},
so letting  $\venv_1 = \{ \step_1 \mapsto \step'_1, \dots, \step_m \mapsto \step'_m, \tup{x} \mapsto \tup{a}' \}$,
$\venv_1, \evseq \vdash \psistep$ and there exists an extension $\venv'_1$ of $\venv_1$ to $\tup{y}$ such that 
$\venv'_1, \evseq \vdash \phistep$ and $\venv'_1, \evseq \vdash \fevent{e(\tup{M})}@\Fstepfun(\step_1, \dots, \step_m, \tup{x})$.

Since $\venv_1, \evseq \vdash \psistep$, for all events $F_\ell = \fevent{e_\ell(\tup{M}_\ell)}@\step_\ell$ in $\psistep$, $\trace, \venv_1 \vdash F_\ell$
and $\evseq(\step'_\ell) = (\dots):e_\ell(\tup{a}_{\ell,1})$ for $\tup{a}_{\ell,1}$ such that
$\venv_1, \tup{M}_\ell \evalterm \tup{a}_{\ell,1}$. 
By Lemma~\ref{lem:fsetFP}, there exists a program point $\pp_{\ell,1}$ that executes $F_\ell$ (in $Q_0$) and a case $\case_{\ell,1}$ such that, for any $\theta_{\ell,1}$ renaming of $\replidx{\pp_{\ell,1}}$ to fresh replication indices, there exists a mapping $\sigma_{\ell,1}$ with domain $\theta_{\ell,1}\replidx{\pp_{\ell,1}}$ such that $\evseq(\venv_1(\step_\ell)) = (\pp_{\ell,1}, \sigma_{\ell,1}(\theta_{\ell,1}\replidx{\pp_{\ell,1}})):\dots$ and $\trace, \sigma_{\ell,1}\cup \venv_1 \vdash \theta_{\ell,1}\fset_{F_\ell, \pp_{\ell,1},\case_{\ell,1}}$.
So $\evseq(\step'_\ell) = (\pp_{\ell,1}, \sigma_{\ell,1}(\theta_{\ell,1}\replidx{\pp_{\ell,1}})):e_\ell(\tup{a}_{\ell,1})$.

Let $\scoll$ be the label of $\coll$ at the occurrence corresponding to $\Fstepfun$,
and $\fievent{e(\tup{M})}$ be the injective event at that occurrence in $\phi$.
Let $\fset_1 = \bigcup_{\ell} \theta_{\ell,1}\fset_{F_\ell, \pp_{\ell,1},\case_{\ell,1}}$,
$\sri_1 = \{ \ell \mapsto (\pp_{\ell,1},\theta_{\ell,1}\replidx{\pp_{\ell,1}}) \mid F_\ell$ is an injective event$\}$, and
$\vset_1 = \fvar(\theta_{1,1}\replidx{\pp_{1,1}}) \cup \dots \cup \fvar(\theta_{m,1}\replidx{\pp_{m,1}}) \cup \{\tup{x},\tup{y}\}$.
By construction of $\coll$, we have $(\fset_1, (\tup{M}), \sri_1, \vset_1) \in \scoll$.

So we have $\trace, \bigcup_\ell \sigma_{\ell,1}\cup \venv_1 \vdash \bigcup_\ell \theta_{\ell,1}\fset_{F_\ell, \pp_{\ell,1},\case_{\ell,1}}$.
Letting $\sigma_1 = \bigcup_\ell \sigma_{\ell,1}$, we have $\trace, \sigma_1\cup \venv'_1 \vdash \fset_1$;
for $\ell$ such that $F_\ell$ is an injective event, $\evseq(\step'_{\ell}) = \sigma_1\sri_1(\ell):e_{\ell}(\dots)$;
$\vset_1 = \dom(\sigma_1) \cup \{\tup{x},\tup{y}\}$.
Since $\venv'_1, \evseq \vdash \fevent{e(\tup{M})}@\Fstepfun(\step_1, \ab \dots, \ab \step_m, \ab \tup{x})$, we have
$\venv'_1, \tup{M} \evalterm \tup{a}_1$ and  $\evseq(\Fstepfun(\step'_1, \dots, \step'_m, \tup{a}')) = (\dots):e(\tup{a}_1)$ for some $\tup{a}_1$.

Since $\Fstepfun(\step''_1, \dots, \step''_m, \tup{a}'') \neq \bot$, we have similarly $(\fset_2, (\tup{M}), \sri_2, \vset_2) \in \scoll$, $\venv'_2$, and $\sigma_2$ such that $\trace, \sigma_2\cup \venv'_2 \vdash \fset_2$;
for $\ell$ such that $F_\ell$ is an injective event, $\evseq(\step''_{\ell}) = \sigma_2\sri_2(\ell):e_{\ell}(\dots)$;
$\vset_2 = \dom(\sigma_2) \cup \{\tup{x},\tup{y}\}$;
$\venv'_2, \tup{M} \evalterm \tup{a}_2$ and $\evseq(\Fstepfun(\step''_1, \dots, \step''_m, \tup{a}'')) = (\dots):e(\tup{a}_1)$ for some $\tup{a}_2$.

Let $\theta''$ be a renaming of variables in $\vset_2$.
Then $\trace, \sigma_2\theta''^{-1}\cup \venv'_2\theta''^{-1} \vdash \theta''\fset_2$;
for $\ell$ such that $F_\ell$ is an injective event, $\evseq(\step''_{\ell}) = \sigma_2\theta''^{-1} \theta''\sri_2(\ell):e_{\ell}(\dots)$;
$\venv'_2\theta''^{-1}, \theta''\tup{M} \evalterm \tup{a}_2$ and $\evseq(\Fstepfun(\step''_1, \ab \dots, \ab \step''_m, \ab \tup{a}'')) = (\dots):e(\tup{a}_2)$ for some $\tup{a}_2$.

Then $\trace, \sigma_1 \cup \sigma_2\theta''^{-1}\cup \venv'_1 \cup \venv'_2\theta''^{-1} \vdash \fset_1 \cup \theta''\fset_2$.

There exists $j\in I$ such that $\step'_j \neq \step''_j$, so  
$\sigma_1\sri_1(j)\neq \sigma_2\theta''^{-1} \theta''\sri_2(j)$ (distinct events have distinct pairs (program point, replication indices) by Lemma~\ref{lem:oneevent}), so
there exists $j \in \dom(\sri_1) = I$ such that
$\trace, \sigma_1 \cup \sigma_2\theta''^{-1}\cup \venv'_1 \cup \venv'_2\theta''^{-1} \vdash \sri_1(j)\neq \theta''\sri_2(j)$.

Since $\Fstepfun(\step'_1, \dots, \step'_m, \tup{a}') = \Fstepfun(\step''_1, \dots, \step''_m, \tup{a}'')$, 
$\evseq(\Fstepfun(\step'_1, \dots, \step'_m, \tup{a}')) = \evseq(\Fstepfun(\step''_1, \dots, \step''_m, \tup{a}''))$, so
$\tup{a}_1 = \tup{a}_2$, so
$\trace, \sigma_1 \cup \sigma_2\theta''^{-1}\cup \venv'_1 \cup \venv'_2\theta''^{-1} \vdash \tup{M} = \theta''\tup{M}$.

So $\trace, \sigma_1 \cup \sigma_2\theta''^{-1}\cup \venv'_1 \cup \venv'_2\theta''^{-1} \vdash \fset_1 \cup \theta''\fset_2 \cup \{ \bigvee_{j \in \dom(\sri_1)} \sri_1(j)\neq \theta''\sri_2(j), \tup{M} = \theta''\tup{M} \}$.

Since the trace satisfies $\yctran{\vdash\coll}$, this is a contradiction. 
Therefore, we conclude that the considered trace satisfies $\corresp$.
Hence, every full trace of $C[Q_0]$ that satisfies $\nicorresp$, $\yctran{\vdash\coll}$, and does not execute a non-unique event of $Q_0$ also satisfies $\corresp$.
Therefore, every full trace of $C[Q_0]$ that satisfies $\neg \corresp$
satisfies $\neg (\nicorresp \wedge \yctran{\vdash\coll} \wedge \neg \nonunique{Q_0})$, so every full trace of $C[Q_0]$ that satisfies $\neg \corresp \wedge \neg \nonunique{Q_0}$ satisfies $\neg (\nicorresp \wedge \yctran{\vdash\coll} \wedge \neg \nonunique{Q_0})\wedge \neg \nonunique{Q_0} = (\neg \nicorresp \vee \neg \yctran{\vdash\coll}) \wedge \neg \nonunique{Q_0} = (\neg \nicorresp  \wedge \neg \nonunique{Q_0}) \vee ( \neg \yctran{\vdash\coll} \wedge \neg \nonunique{Q_0,\Dfalse})$, so it satisfies $(\neg \nicorresp  \wedge \neg \nonunique{Q_0}) \vee \neg \yctran{\vdash\coll}$. 
So
\begin{align*}
\Advtev{Q_0}{\corresp}{C}{\Dfalse}
&= \Pr[C[Q_0]: \neg \corresp \wedge \neg \nonunique{Q_0}] \\
&\leq \Pr[C[Q_0] : (\neg \nicorresp  \wedge \neg \nonunique{Q_0}) \vee \neg \yctran{\vdash\coll}]\\
&\leq \Pr[C[Q_0] : \neg \yctran{\vdash\coll}] + \Pr[C[Q_0]: \neg\nicorresp\wedge \neg \nonunique{Q_0})]\\
&\leq \Pr[C[Q_0] : \neg \yctran{\vdash\coll}] + \Advtev{Q_0}{\nicorresp}{C}{\Dfalse}\\
&\leq \Pr[C[Q_0] : \neg \yctran{\vdash\coll}] + p(C)\tag*{since $\bound{Q_0}{V}{\nicorresp}{\Dfalse}{p}$}\\
&\leq p'(C)\,.
\end{align*}
Therefore $\bound{Q_0}{V}{\corresp}{\Dfalse}{p'}$.
\proofcomplete
\end{proof}
If the proof that $\noninj(\forall \tup{x}:\tup{T}; \psi \Rightarrow \exists \tup{y}:\tup{T'}; \phi)$ implies $\forall \tup{x}:\tup{T}; \psi \Rightarrow \exists \tup{y}:\tup{T'}; \phi$ works, we 
just have to prove $\noninj(\forall \tup{x}:\tup{T}; \psi \Rightarrow \exists \tup{y}:\tup{T'}; \phi)$ and we can apply the \rn{guess} transformation for non-injective correspondences.
Otherwise, we simply leave the query $\forall \tup{x}:\tup{T}; \psi \Rightarrow \exists \tup{y}:\tup{T'}; \phi$ unchanged.

A transformation \rn{guess} $i$ \rn{\&\&\ above}, similar to \rn{guess} $i$, can be used to guess the whole sequence $\tup{i}$ of replication indices above and including the modified replication, by testing the equality $\tup{i} = \tup{i}_{\mathrm{tested}}$ instead of $i = i_{\mathrm{tested}}$.

\subsubsection{\rn{guess} $x[c_1,\dots,c_m]$}\label{sec:guessx}

Like the transformation \rn{guess} $i$,
when \rn{guessRemoveUnique = true} and some (one-session or bit) secrecy
queries are present, the transformation \rn{guess} $x[c_1,\dots,c_m]$ first
transforms the game $G$ into $\Gru$, by replacing all proved
$\FIND\unique{}$ with $\FIND$. Lemma~\ref{lem:Gru} shows the soundness
of this preliminary transformation.

Next, the transformation \rn{guess} $x[c_1,\dots,c_m]$ transforms a game $G$ into a game $G'$
by guessing the value of a variable $x[c_1,\dots,c_m]$: it replaces
the processes $P$ under the definition of $x[c_1,\dots,c_m]$
with
\[\bguard{x[c_1,\dots,c_m] = v_{\mathrm{tested}}}{P}{\kw{event\_abort}\ \kwf{bad\_guess}}\]
and $v_{\mathrm{tested}}$ is a constant, which is the guessed value of
$x[c_1,\dots,c_m]$.
   (At each definition of $x$, CryptoVerif must be able to determine
   whether it is a definition of $x[c_1,\dots,c_m]$ or not.
   The variable $x$ must not be defined inside a term.
   In case \rn{diff\_constants = true}, the constant
   $v_{\mathrm{tested}}$ must not be considered different from other constants
   of the same type.)

   In case there is a (one-session or bit) secrecy query, it uses instead
\[\assign{\mathit{guess\_x\_defined}}{\true}\bguard{x[c_1,\dots,c_m] = v_{\mathrm{tested}}}{P}{\kw{event\_abort}\ \kwf{bad\_guess}}\]
  where $\mathit{guess\_x\_defined}$ is a fresh variable, and we add $\mathit{guess\_x\_defined}$ to the public variables of (one-session or bit) secrecy queries. That gives the adversary knowledge of whether the guessed variable is defined or not. This is useful because the adversary may need to swap its answer differently depending on whether the guessed variable is defined or not, so that the cases in which this variable is not defined always increase the probability of breaking (one-session or bit) secrecy.

\medskip
  When there are only correspondence queries, we can actually execute any code when $x[c_1,\ab\dots,\ab c_m]$ is different from the guessed value $v_{\mathrm{tested}}$. In particular, we can execute $P$ with $x[c_1,\ab \dots,\ab c_m]$ set to $v_{\mathrm{tested}}$, which has the effect of replacing the definition of $x[c_1,\ab \dots,\ab c_m]$ with $\bassign{x}{v_{\mathrm{tested}}}$ and removing the test $x[c_1,\dots,c_m] = v_{\mathrm{tested}}$.

  To sum up, we also define a transformation \rn{guess} $x[c_1,\dots,c_m]$ \rn{no\_test} that can be applied when there are only correspondence queries and when $x[c_1,\dots,c_m]$ is defined only by definitions of the form $\bassign{x}{M}$. (This is the most useful case, since the definition of $x$ can then be simplified.) This transformation replaces these definitions $\bassign{x}{M}$ with
  $\bassign{x}{v_{\mathrm{tested}}}$ when $M$ is a simple term and with
  $\assign{\kvar{ignore}}{M}\bassign{x}{v_{\mathrm{tested}}}$ otherwise,
  where $\kvar{ignore}$ is a fresh variable whose value is not used.

The transformation \rn{guess} $x[c_1,\dots,c_m]$ \rn{no\_test} would not be valid in the presence of secrecy queries (at least not with the same probability), because before transformation the value of $x[c_1,\ab \dots, \ab c_m]$ may contain part of the secret variable and this value may leak, while after transformation, that leaks disappears and the variable may be perfectly secret for all values $x[c_1,\dots,c_m] = v_{\mathrm{tested}}$.

  \sloppy

\begin{lemma}\label{lem:guess_x}
    The transformations \rn{guess} $x[c_1,\dots,c_m]$ and \rn{guess} $x[c_1,\dots,c_m]$ \rn{no\_test} require and preserve
    Properties~\ref{prop:nointervaltypes}, \ref{prop:noreschan},
    \ref{prop:channelindices}, \ref{prop:notables}, and~\ref{prop:autosarename}.
    They preserve Property~\ref{prop:expand}.

Suppose the game $G$ is transformed into $G'$ by the transformation $\rn{guess}\ x[c_1,\dots,c_m]$ or \rn{guess} $x[c_1,\dots,c_m]$ \rn{no\_test}, where $x$ is of type $T$ and $T \neq \emptyset$. 

Let $\corresp$ be the semantics of a correspondence.
Let $D$ be a disjunction of Shoup and non-unique events that does not contain $\sevent$ nor $\sbarevent$.
If $\bound{G'}{V}{\corresp}{D}{p}$ and $p$ is independent of the value of $v_{\mathrm{tested}}$, then $\bound{G}{V}{\corresp}{D}{|T| p}$.

Let $\prop$ be $\secrone(y)$, $\secr(y)$, or $\secrbit(x)$.
Then $G$ is transformed into $G'$ by the transformation $\rn{guess}\ x[c_1,\dots,c_m]$.
If $G'$ satisfies $\prop$ with public variables $V \cup \{ \mathit{guess\_x\_defined} \}$ ($y \notin V$) up to probability $p$ and $p$ satisfies Property~\ref{prop:prob}, then
$G$ satisfies $\prop$ with public variables $V$ up to probability $|T| \times p$ (neglecting a small additional runtime of the context).
If $\bound{G'}{V' \cup \{ \mathit{guess\_x\_defined} \}}{\prop}{\nonunique{G'}}{p}$ and $p$ satisfies Property~\ref{prop:prob}, then $\bound{G}{V'}{\prop}{\Dfalse}{|T| p}$ (neglecting a small additional runtime of the context).
\end{lemma}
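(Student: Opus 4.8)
The plan is to follow the structure of the proof of Lemma~\ref{lem:guess_i}, replacing the guess over a replication index by a guess over the value of $x[c_1,\dots,c_m]$ and summing over the $|T|$ possible values of $T$ rather than over the $n$ indices. The property-preservation claims are routine, since the transformation only inserts a deterministic test, a fresh Shoup event $\kwf{bad\_guess}$, and (in the secrecy case) the assignment of the fresh public variable $\mathit{guess\_x\_defined}$. The preliminary replacement of proved $\FIND\unique{}$ with $\FIND$ (game $\Gru$) is handled by Lemma~\ref{lem:Gru}, so I may reason as if the guess is applied directly; I write $G'_v$ for the guessed game with $v_{\mathrm{tested}} = v$. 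As in Lemma~\ref{lem:guess_i}, I would first rename the events of any acceptable context so that it avoids the fresh event $\kwf{bad\_guess}$, making it acceptable for $C_{\prop}[G'_v]$ (with $\mathit{guess\_x\_defined}$ simply left unused in the public-variable set).

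For the correspondence case, fix an acceptable $C$ and let $\Phi = (\neg\corresp \vee D) \wedge \neg\nonunique{G,D}$. Since the transformation alters only the definition point of $x[c_1,\dots,c_m]$ and leaves the non-unique events untouched, $\nonunique{G,D} = \nonunique{G'_v,D}$, and every full trace of $C[G]$ is reproduced with equal probability in $C[G'_v]$ whenever the guess succeeds: if $x[c_1,\dots,c_m]$ is never defined the trace is reproduced for every $v$, while if it is defined with value $w$ the trace is reproduced in $C[G'_w]$ (the inserted test passes) and aborts with $\kwf{bad\_guess}$ in $C[G'_v]$ for $v \neq w$. Hence each $\trace \vdash \Phi$ is counted at least once in $\sum_{v\in T}\Advtev{G'_v}{\corresp}{C}{D}$, giving $\Advtev{G}{\corresp}{C}{D} \leq \sum_{v \in T}\Advtev{G'_v}{\corresp}{C}{D} \leq |T| p(C)$ from $\bound{G'}{V}{\corresp}{D}{p}$ and the independence of $p$ from $v_{\mathrm{tested}}$; composing with Lemma~\ref{lem:Gru} yields $\bound{G}{V}{\corresp}{D}{|T| p}$. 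The \rn{no\_test} variant is identical, except that the matching trace in $C[G'_w]$ is the one in which $x$ is assigned $w = v_{\mathrm{tested}}$; the side effects of the dropped term are preserved by the $\kvar{ignore}$ assignment, so traces still match with equal probability.

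For the secrecy case (the statement phrased with ``satisfies $\prop$''), I would partition the advantage of an acceptable $C$ by the final value $W$ of $x[c_1,\dots,c_m]$ ($W = \bot$ if undefined), writing $\Advt_G^{\prop}(C) = \sum_{v\in T}\delta_v + \delta_U$, where $\delta_v$ (resp. $\delta_U$) is the $\sevent$-minus-$\sbarevent$ probability restricted to the traces with $W = v$ (resp. $W = \bot$). As in the correspondence case, traces with $W = v$ match those of $C[C_{\prop}[G'_v]]$ in which $\mathit{guess\_x\_defined}$ is defined, and traces with $W = \bot$ match those in which it is undefined, for every $v$; hence $\Advt_{G'_v}^{\prop}(C) = \delta_v + \delta_U$ and $\Advt_G^{\prop}(C) = \sum_v \Advt_{G'_v}^{\prop}(C) - (|T|-1)\delta_U$. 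When $\delta_U \geq 0$ this already gives $\Advt_G^{\prop}(C) \leq \sum_v \Advt_{G'_v}^{\prop}(C) \leq |T| p(C)$. This argument is uniform over $\secrone$, $\secr$, and $\secrbit$, since guessing $x$ is orthogonal to the number of test queries and to the real-or-random bookkeeping internal to $C_{\prop}$; in particular, unlike the full-secrecy case of Lemma~\ref{lem:guess_i}, no hybrid over sessions is needed.

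The main obstacle is the case $\delta_U < 0$, and this is exactly what $\mathit{guess\_x\_defined}$ is for. Because $\sevent$ and $\sbarevent$ are executed by $\kw{event\_abort}$, the adversary's final guess (on $\cS'$, resp. $\cS''$) is its last relevant action, so at the moment this guess is emitted $x[c_1,\dots,c_m]$ is defined iff $W \neq \bot$, and this definedness is observable through the public variable $\mathit{guess\_x\_defined}$. Using the relay construction of Section~\ref{sec:Turing_adv} (as in Lemma~\ref{lem:halfobsequi}), I would build from $C$ a context $\hat C$ that renames the guess channel and inserts a relay which, upon forwarding the guess, flips the guessed bit exactly when $\mathit{guess\_x\_defined}$ is undefined; then $\Advt_{G'_v}^{\prop}(\hat C) = \delta_v - \delta_U$, and since $\delta_U < 0$ one gets $\Advt_G^{\prop}(C) = \sum_v\delta_v + \delta_U \leq \sum_v\delta_v - |T|\delta_U = \sum_v \Advt_{G'_v}^{\prop}(\hat C) \leq |T| p(\hat C) = |T| p(C)$, the last equality by Property~\ref{prop:prob} (the relay changes neither $p$ nor the parameters it depends on, and $T$ bounded makes $p$ independent of $v_{\mathrm{tested}}$). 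Establishing this causality/sign argument cleanly is the delicate step. Finally, the $\boundfun$-form of the secrecy statement (hypothesis $\bound{G'}{V' \cup \{ \mathit{guess\_x\_defined} \}}{\prop}{\nonunique{G'}}{p}$, conclusion $\bound{G}{V'}{\prop}{\Dfalse}{|T| p}$) follows from the ``satisfies'' form exactly as in the last paragraph of the proof of Lemma~\ref{lem:guess_i}: apply Lemma~\ref{lem:adv}, Property~\ref{item:prop:init} to pass to a ``satisfies $\prop$'' statement, invoke the result just proved, and bound $\Advtev{G}{\prop}{C}{\Dfalse} \leq \Advt_G^{\prop}(C)$.
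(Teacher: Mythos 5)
Your proof is correct and follows essentially the same route as the paper's: partition the advantage according to the value of $x[c_1,\dots,c_m]$ (including the undefined case), match traces of $G$ with those of the guessed game (aborted traces contributing nothing), sum over the $|T|$ values using Property~\ref{prop:prob} for independence from $v_{\mathrm{tested}}$, handle the sign of the undefined-case contribution by letting the adversary conditionally transform its guess based on $\defined(\mathit{guess\_x\_defined})$, and reduce the $\boundfun$-form to the ``satisfies'' form as in Lemma~\ref{lem:guess_i}. The only cosmetic difference is that the paper takes the maximum over four bit-flipping adversaries to obtain $|\delta_{v}|+|\delta_U|$ directly, whereas you case-split on the sign of $\delta_U$ and use two adversaries; the mechanism and the resulting bound are the same.
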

\fussy
\begin{proof}

   \paragraph{Correspondences}
Let $C$ be an evaluation context acceptable for $G$ with any public
variables that does not contain events used by $\corresp$ or $D$.
   \begin{align*}
     &\Advtev{G}{\corresp}{C}{D} \\
     &\quad = \Pr[C[G] : (\neg\corresp \vee D) \wedge\neg\nonunique{G,D}]\\
     &\quad= \Pr[C[G] : (\neg\corresp \vee D)  \wedge\neg\nonunique{G,D} \wedge x[c_1,\dots,c_m] \text{ not defined}]  + {}\\
     &\quad\phantom{{}={}}\sum_{v \in T} \Pr[C[G] : (\neg\corresp \vee D) \wedge\neg\nonunique{G,D} \wedge x[c_1,\dots,c_m] = v]\\
     &\quad\leq \sum_{v \in T} \Pr[C[G] : (\neg\corresp \vee D) \wedge\neg\nonunique{G,D} \wedge (x[c_1,\dots,c_m] = v \vee x[c_1,\dots,c_m] \text{ not defined})]
   \end{align*}
   Moreover,
   \[\begin{split}
   &\Pr[C[G'] : (\neg\corresp \vee D) \wedge\neg\nonunique{G',D}]\\
   &\ \, \geq \Pr[C[G] : (\neg\corresp \vee D) \wedge\neg\nonunique{G,D} \wedge (x[c_1,\dots,c_m] = v_{\mathrm{tested}} \vee x[c_1,\dots,c_m] \text{ not defined})]
   \end{split}\]
   This property holds because $G'$ behaves like $G$ when $x[c_1,\dots,c_m]$ is not defined or $x[c_1,\dots,c_m] = v_{\mathrm{tested}}$, in both transformations \rn{guess} $x[c_1,\dots,c_m]$ and \rn{guess} $x[c_1,\dots,c_m]$ \rn{no\_test}.
   So      
   \begin{align*}
     \Advtev{G}{\corresp}{C}{D} 
     &\leq \sum_{v \in T} \Pr[C[G'] : (\neg\corresp \vee D) \wedge\neg\nonunique{G',D}] \text{ for } v_{\mathrm{tested}} = v\\
     &\leq \sum_{v \in T} \Advtev{G'}{\corresp}{C}{D}\text{ for } v_{\mathrm{tested}} = v
   \end{align*}
   Since $\bound{G'}{V}{\corresp}{D}{p}$, we have $\Advtev{G'}{\corresp}{C}{D} \leq p(C)$ and $p$ is independent of the value of $v_{\mathrm{tested}}$, so we obtain
   $\Advtev{G}{\corresp}{C}{\ab D} \leq |T| \times p(C)$.
   Therefore $\bound{G}{V}{\corresp}{D}{|T| p}$.
   
   \paragraph{(One-session or bit) secrecy}
   Let $C$ be an evaluation context acceptable for $C_{\prop}[G]$ with public variables $V$ that does not contain $\sevent$ nor $\sbarevent$.
   We have
   \begin{align*}
     &\Advt_G^{\prop}(C)\\
     &= \Pr[C[C_{\prop}[G]] : \sevent] - \Pr[C[C_{\prop}[G]] : \sbarevent]\\
     &=  \Pr[C[C_{\prop}[G]] : \sevent \wedge x[c_1,\dots,c_m] \text{ not defined}] - \Pr[C[C_{\prop}[G]] : \sbarevent \wedge x[c_1,\dots,c_m] \text{ not defined}]\\
     &+ \sum_{v = 1}^{|T|} \Pr[C[C_{\prop}[G]] : \sevent \wedge x[c_1,\dots,c_m] = v] - \Pr[C[C_{\prop}[G]] : \sbarevent \wedge x[c_1,\dots,c_m] = v]
   \end{align*}
   From the adversary $C$, we define four adversaries $C'$ that output $b''$ instead of $b'$ on channel $\cS'$ ($\cS''$ for bit secrecy), where $b'' = \bguard{\kw{defined}(\mathit{guess\_x\_defined})}{f_1(b')}{f_2(b')}$ where $f_1(b')$ is either $b'$ or $\neg b'$, and similarly for $f_2$, and consider the adversary $C'_{\max,v_{\mathrm{tested}}}$ among those four that yields the maximum $\Advt_{G'}^{\prop}(C')$. Changing $b'$ into $\neg b'$ swaps the events $\sevent$ and $\sbarevent$, and therefore swaps their probabilities. Hence, for this adversary $C'_{\max,v_{\mathrm{tested}}}$,
   \begin{align*}
     &\Advt_{G'}^{\prop}(C'_{\max,v_{\mathrm{tested}}})\\
     &= | \Pr[C[C_{\prop}[G]] : \sevent \wedge x[c_1,\dots,c_m] = v_{\mathrm{tested}}] -
     \Pr[C[C_{\prop}[G]] : \sbarevent \wedge x[c_1,\dots,c_m] = v_{\mathrm{tested}}] |\\
     &+ | \Pr[C[C_{\prop}[G]] : \sevent \wedge x[c_1,\dots,c_m] \text{ not defined}] -
     \Pr[C[C_{\prop}[G]] : \sbarevent \wedge x[c_1,\dots,c_m] \text{ not defined}] |
   \end{align*}
   Since $G'$ satisfies $\prop$ with public variables $V \cup \{ \mathit{guess\_x\_defined} \}$ ($y \notin V$) up to probability $p$ and $C'_{\max,v_{\mathrm{tested}}}$ is an evaluation context acceptable for $C_{\prop}[G']$ with public variables $V \cup \{ \mathit{guess\_x\_defined} \}$ that does not contain $\sevent$ nor $\sbarevent$, we have $\Advt_{G'}^{\prop}(C'_{\max,v_{\mathrm{tested}}}) \leq p(C'_{\max,v_{\mathrm{tested}}})$. So we have
   \begin{align*}
   \sum_{v_{\mathrm{tested}} = 1}^{|T|} p(C'_{\max,v_{\mathrm{tested}}})
   &\geq \sum_{v_{\mathrm{tested}} = 1}^{|T|} \Advt_{G'}^{\prop}(C'_{\max,v_{\mathrm{tested}}})\\
   &\geq \sum_{v_{\mathrm{tested}} = 1}^{|T|}
   \left|\begin{array}{l}
   \Pr[C[C_{\prop}[G]] : \sevent \wedge x[c_1,\dots,c_m] = v_{\mathrm{tested}}] \\
      {} - \Pr[C[C_{\prop}[G]] : \sbarevent \wedge x[c_1,\dots,c_m] = v_{\mathrm{tested}}]
   \end{array}\right|\\
   &\qquad {} + |T| \times \left|\begin{array}{l}
   \Pr[C[C_{\prop}[G]] : \sevent \wedge x[c_1,\dots,c_m] \text{ not defined}]\\
      {} - \Pr[C[C_{\prop}[G]] : \sbarevent \wedge x[c_1,\dots,c_m] \text{ not defined}]
   \end{array}\right|\\
   &\geq \Advt_G^{\prop}(C)
   \end{align*}
   Moreover, by Property~\ref{prop:prob}, $p$ is independent of the value of $v_{\mathrm{tested}}$ (since the type $T$ is bounded) and $p(C)$ depends only on the runtime of $C$, the number of outputs $C$ makes on the various channels (which determine replication bounds), and the length of bitstrings, so we have $p(C'_{\max,v_{\mathrm{tested}}}) = p(C)$. (The additional runtime of the context can be neglected.) So $\Advt_G^{\prop}(C) \leq |T| p(C)$. Therefore, $G$ satisfies $\prop$ with public variables $V$ up to probability $|T| \times p$.
   The proof of the second property for (one-session or bit) secrecy proceeds as for $\rn{guess}\ i$ in Lemma~\ref{lem:guess_i}.
   \proofcomplete
   \end{proof}
   
\subsubsection{\rn{guess\_branch}}\label{sec:guess_branch}

Like the transformation \rn{guess} $i$,
when \rn{guessRemoveUnique = true} and some (one-session or bit) secrecy
queries are present, the transformation \rn{guess\_branch} $\pp$ first
transforms the game $G$ into $\Gru$, by replacing all proved
$\FIND\unique{}$ with $\FIND$. Lemma~\ref{lem:Gru} shows the soundness
of this preliminary transformation.

Next, the transformation \rn{guess\_branch} $\pp$ guesses the branch taken by a branching instruction ($\kw{if}$, $\kw{let}$, $\kw{find}$) at program point $\pp$. The program point $\pp$
is designated as explained in Section~\ref{sec:pp}. The instruction at $\pp$ must be executed at most once (either because it is not under replication or because this is proved by CryptoVerif, showing that two executions with distinct replication indices lead to a contradiction: $\fset_{\pp} \cup \fset_{\pp}\{\tup{i}'/\tup{i}\} \cup \{ \tup{i}' \neq \tup{i}\}$ yields a contradiction, where $\tup{i}$ are the current replication indices at $\pp$ and $\tup{i}'$ are fresh replication indices, using a mode of the equational prover of Section~\ref{sec:equationalprover} that does not allow elimination of collisions, so that this property is proved without probability loss). Suppose this instruction has $k$ branches. 

We consider a game $G$ and define transformed games $G'_j$ ($0 \leq j < k$) in which branch $j$ of the instruction at $\pp$ is kept and all other branches are replaced with $\kw{event\_abort}\ \kwf{bad\_guess}$.

In case there is a (one-session or bit) secrecy query, $\bassign{\mathit{guess\_br\_defined}}{\true}\ \IN$ is added before the instruction at $\pp$ where $\mathit{guess\_br\_defined}$ is a fresh variable, and we add $\mathit{guess\_br\_defined}$ to the public variables of (one-session or bit) secrecy queries. That gives the adversary knowledge of whether the instruction at $\pp$ is executed or not. This is useful because the adversary may need to swap its answer differently depending on whether that instruction is executed or not, so that the cases in which that instruction is not executed always increase the probability of breaking (one-session or bit) secrecy.

\medskip
  When there are only correspondence queries, we can actually execute any code when the taken branch is different from the guessed one. In particular, we can execute the same code as in the tested branch, which has the effect of removing the test at $\pp$ when that test is $\kw{if}$. (The tests $\kw{find}$ and $\kw{let}$ with pattern-matching have additional effects: guaranteeing the definition of variables for $\kw{find}$; defining variables for $\kw{let}$. In general, that prevents their removal.)

  To sum up, we also define a transformation \rn{guess\_branch} $\pp$ \rn{no\_test} that can be applied when there are only correspondence queries and the instruction at $\pp$ is $\bguard{M}{P_1}{P_0}$. This transformation defines two transformed games $G'_j$ ($j \in \{0,1\}$) in which the instruction at $\pp$ is replaced with $P_j$ when $M$ is a simple term and with
  $\assign{\kvar{ignore}}{M} P_j$ otherwise,
  where $\kvar{ignore}$ is a fresh variable whose value is not used.

The transformation \rn{guess\_branch} $\pp$ \rn{no\_test} would not be valid in the presence of secrecy queries (at least not with the same probability), because before transformation the test at $\pp$ may make the value of $M$ leak, which can reveal for instance one bit of the secret variable, while after transformation, that leaks disappears and the variable may be perfectly secret both when $P_0$ and when $P_1$ are executed.
  
  \sloppy

\begin{lemma}\label{lem:guess_branch}
    The transformations \rn{guess\_branch} $\pp$ and \rn{guess\_branch} $\pp$ \rn{no\_test} require and preserve
    Properties~\ref{prop:nointervaltypes}, \ref{prop:noreschan},
    \ref{prop:channelindices}, \ref{prop:notables}, and~\ref{prop:autosarename}.
    They preserve Property~\ref{prop:expand}.

Suppose the game $G$ is transformed into games $G'_j$ ($0 \leq j < k$) by the transformation \rn{guess\_branch} $\pp$ or \rn{guess\_branch} $\pp$ \rn{no\_test}.

Let $\corresp$ be the semantics of a correspondence.
Let $D$ be a disjunction of Shoup and non-unique events that does not contain $\sevent$ nor $\sbarevent$.
If for all $0 \leq j < k$, $\bound{G'_j}{V}{\corresp}{D}{p_j}$,
then $\bound{G}{V}{\corresp}{D}{\sum_{j=0}^{k-1} p_j}$.

Let $\prop$ be $\secrone(y)$, $\secr(y)$, or $\secrbit(x)$.
Then $G$ is transformed into $G'_j$ by the transformation \rn{guess\_branch} $\pp$.
If $G'_j$ satisfies $\prop$ with public variables $V \cup \{ \mathit{guess\_br\_defined} \}$ ($y \notin V$) up to probability $p_j$ for $0 \leq j < k$ and the probabilities $p_j$ satisfy Property~\ref{prop:prob}, then $G$ satisfies $\prop$ with public variables $V$ up to probability $\sum_{j = 0}^{k-1} p_j$ (neglecting a small additional runtime of the context).
If for all $0 \leq j < k$, $\bound{G'_j}{V' \cup \{ \mathit{guess\_br\_defined} \}}{\prop}{\nonunique{G'_j}}{p_j}$ and the probabilities $p_j$ satisfy Property~\ref{prop:prob},
then $\bound{G}{V'}{\prop}{\Dfalse}{\sum_{j=0}^{k-1} p_j}$ (neglecting a small additional runtime of the context).
\end{lemma}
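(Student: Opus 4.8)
Looking at Lemma~\ref{lem:guess_branch}, the structure closely parallels that of Lemma~\ref{lem:guess_x} (the \rn{guess} $x[c_1,\dots,c_m]$ case), so the plan is to reuse the same proof architecture, replacing the partition over values $v \in T$ by a partition over branches $0 \leq j < k$. The preliminary transformation $G \to \Gru$ (replacing proved $\FIND\unique{}$ with $\FIND$) is handled exactly as before by invoking Lemma~\ref{lem:Gru}, so I would first dispatch that preamble and then work with $\Gru$ (which I continue to call $G$) for the main argument.

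For the correspondence case, the plan is to fix an evaluation context $C$ acceptable for $G$ with any public variables not containing events used by $\corresp$ or $D$, and to partition the probability $\Advtev{G}{\corresp}{C}{D}$ according to which branch of the instruction at $\pp$ is taken (and the case where $\pp$ is not reached). Here the key semantic fact is that the instruction at $\pp$ is executed at most once, which CryptoVerif establishes either syntactically (no replication) or via the equational prover showing $\fset_{\pp} \cup \fset_{\pp}\{\tup{i}'/\tup{i}\} \cup \{ \tup{i}' \neq \tup{i}\}$ yields a contradiction, \emph{without} probability loss. This single-execution property is what makes the branches mutually exclusive events, so that I can bound each summand by $\Advtev{G'_j}{\corresp}{C}{D}$: the game $G'_j$ behaves exactly like $G$ on traces where branch $j$ is taken (or $\pp$ is not reached), since the other branches are replaced with $\keventabort{\kwf{bad\_guess}}$, which does not execute any event of $\corresp$ or $D$. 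Summing over $j$ and using $\bound{G'_j}{V}{\corresp}{D}{p_j}$ gives the bound $\sum_{j=0}^{k-1} p_j$. The \rn{no\_test} variant is handled identically, noting that executing the tested-branch code when the guessed branch differs still agrees with $G$ on the relevant traces.

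For the (one-session or bit) secrecy case, I would follow the adversary-construction argument from the secrecy part of Lemma~\ref{lem:guess_x} almost verbatim. The idea is to decompose $\Advt_G^{\prop}(C)$ as a sum over branches $j$ (plus the not-reached case) of differences $\Pr[\cdots \sevent \cdots] - \Pr[\cdots \sbarevent \cdots]$, and for each $j$ to build from $C$ four candidate adversaries that output $b''$ depending on $\mathit{guess\_br\_defined}$ (applying $f_1,f_2 \in \{b' \mapsto b', b' \mapsto \neg b'\}$), selecting the one maximizing $\Advt_{G'_j}^{\prop}(C')$. The role of $\mathit{guess\_br\_defined}$ being public is exactly to let the adversary swap its guess differently depending on whether $\pp$ is reached, so that the not-reached contribution is counted with a favorable sign. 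Using $|T|$-independence is replaced here by the fact that there are $k$ branches; Property~\ref{prop:prob} guarantees that $p_j$ does not depend on the added context, so $p_j(C') = p_j(C)$ up to negligible runtime. The second secrecy property ($\boundfun$ with $\nonunique{G'_j}$) then follows from the first by Lemma~\ref{lem:adv}, Property~\ref{item:prop:init}, exactly as in the corresponding step of Lemma~\ref{lem:guess_i}.

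The main obstacle I anticipate is the careful bookkeeping in the secrecy case when $\pp$ lies under a replication. In Lemma~\ref{lem:guess_x} the guessed object $x[c_1,\dots,c_m]$ is a single array cell, so there is a clean single guessed instance; here, the single-execution proof for $\pp$ must be invoked to ensure the branch guess concerns a unique instance, and I must make sure the probabilities over the mutually exclusive branch events combine correctly without double-counting the not-reached case across the four adversaries. Concretely, I expect the delicate point to be verifying that the maximizing adversary $C'_{\max,j}$ indeed realizes the sum of absolute-value contributions (the $j$-branch term plus the not-reached term), mirroring the final inequality chain in the secrecy part of Lemma~\ref{lem:guess_x}; once that bookkeeping is set up, summing over the $k$ branches and applying the hypotheses $\bound{G'_j}{\cdots}{\prop}{\nonunique{G'_j}}{p_j}$ closes the argument.
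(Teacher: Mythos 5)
Your proposal is correct and follows essentially the same route as the paper's proof: the correspondence case partitions $\Advtev{G}{\corresp}{C}{D}$ over the (mutually exclusive, since $\pp$ executes at most once) branch-taken events plus the not-executed case, folds the latter into each branch term, and uses that $G'_j$ agrees with $G$ on those traces; the secrecy case reuses the four-adversary construction keyed on $\mathit{guess\_br\_defined}$ with the maximizing adversary $C'_{\max,j}$, Property~\ref{prop:prob} to get $p_j(C'_{\max,j}) = p_j(C)$, and the reduction to Lemma~\ref{lem:adv} for the $\boundfun$ variant. The "delicate point" you flag is in fact harmless: summing the not-reached term $k$ times only strengthens the inequality $\sum_j \Advt_{G'_j}^{\prop}(C'_{\max,j}) \geq \Advt_G^{\prop}(C)$, which is all that is needed.
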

\fussy
\begin{proof}

  \paragraph{Correspondences}
Let $C$ be an evaluation context acceptable for $G$ with any public
variables that does not contain events used by $\corresp$ or $D$.
\begin{align}
     \Advtev{G}{\corresp}{C}{D} &= \Pr[C[G] : (\neg\corresp \vee D) \wedge\neg\nonunique{G,D}]\notag\\
     &= \Pr[C[G] : (\neg\corresp \vee D)  \wedge\neg\nonunique{G,D}\wedge \pp \text{ not executed}] + {}\notag\\
     &\phantom{{}={}}
     \sum_{j = 0}^{k-1} \Pr[C[G] : (\neg\corresp \vee D) \wedge\neg\nonunique{G,D} \wedge \text{branch $j$ is taken at }\pp]\notag\\
     &\leq \sum_{j = 0}^{k-1} \Pr\left[\begin{array}{@{}l@{}}
         C[G] : (\neg\corresp \vee D) \wedge\neg\nonunique{G,D} \wedge {}\\
         \phantom{C[G] : {}}(\pp \text{ not executed} \vee \text{branch $j$ is taken at }\pp)
     \end{array}\right]\notag\\
     &\leq \sum_{j = 0}^{k-1} \Pr[C[G'_j] : (\neg\corresp \vee D) \wedge\neg\nonunique{G'_j,D}]\label{eq:stepguessbranch}\\
     &\leq \sum_{j = 0}^{k-1} \Advtev{G'_j}{\corresp}{C}{D}\notag\\
     &\leq \sum_{j = 0}^{k-1} p_j(C) \tag*{since $\bound{G'_j}{V}{\corresp}{D}{p_j}$}
\end{align}
The step~\eqref{eq:stepguessbranch} is valid because $G'_j$ behaves like $G$ when $\pp$ is not executed or branch $j$ is taken at $\pp$, in both transformations \rn{guess\_branch} $\pp$ and \rn{guess\_branch} $\pp$ \rn{no\_test}.
Therefore, we obtain $\bound{G}{V}{\corresp}{D}{\sum_{j=0}^{k-1} p_j}$.

   \paragraph{(One-session or bit) secrecy}
   Let $C$ be an evaluation context acceptable for $C_{\prop}[G]$ with public variables $V$ that does not contain $\sevent$ nor $\sbarevent$.
   We have
   \begin{align*}
     &\Advt_G^{\prop}(C)\\
     &= \Pr[C[C_{\prop}[G]] : \sevent] - \Pr[C[C_{\prop}[G]] : \sbarevent]\\
     &=  \Pr[C[C_{\prop}[G]] : \sevent \wedge \pp \text{ not executed}] - \Pr[C[C_{\prop}[G]] : \sbarevent \wedge \pp \text{ not executed}]\\
     &+ \sum_{j = 0}^{k-1} \Pr[C[C_{\prop}[G]] : \sevent \wedge \text{branch $j$ is taken at }\pp] - \Pr[C[C_{\prop}[G]] : \sbarevent \wedge \text{branch $j$ is taken at }\pp]
   \end{align*}
   From the adversary $C$, we define four adversaries $C'$ that output $b''$ instead of $b'$ on channel $\cS'$ ($\cS''$ for bit secrecy), where $b'' = \bguard{\kw{defined}(\mathit{guess\_br\_defined})}{f_1(b')}{f_2(b')}$ where $f_1(b')$ is either $b'$ or $\neg b'$, and similarly for $f_2$, and consider the adversary $C'_{\max,j}$ among those four that yields the maximum $\Advt_{G'_j}^{\prop}(C')$. Changing $b'$ into $\neg b'$ swaps the events $\sevent$ and $\sbarevent$, and therefore swaps their probabilities. Hence, for this adversary $C'_{\max,j}$,
   \begin{align*}
     &\Advt_{G'_j}^{\prop}(C'_{\max,j})\\
     &= | \Pr[C[C_{\prop}[G]] : \sevent \wedge \text{branch $j$ is taken at }\pp] -
     \Pr[C[C_{\prop}[G]] : \sbarevent \wedge \text{branch $j$ is taken at }\pp] |\\
     &+ | \Pr[C[C_{\prop}[G]] : \sevent \wedge \pp \text{ not executed}] -
     \Pr[C[C_{\prop}[G]] : \sbarevent \wedge \pp \text{ not executed}] |
   \end{align*}
   So
   \begin{align*}
   &\sum_{j=0}^{k-1} \Advt_{G'_j}^{\prop}(C'_{\max,j})\\
     &\geq \sum_{j=0}^{k-1} \left|\begin{array}{l}
     \Pr[C[C_{\prop}[G]] : \sevent \wedge \text{branch $j$ is taken at }\pp] \\
     {} -
     \Pr[C[C_{\prop}[G]] : \sbarevent \wedge \text{branch $j$ is taken at }\pp]
     \end{array}\right|\\
   &\qquad {} + \sum_{j=0}^{k-1} | \Pr[C[C_{\prop}[G]] : \sevent \wedge \pp \text{ not executed}] -
     \Pr[C[C_{\prop}[G]] : \sbarevent \wedge  \pp \text{ not executed}] |\\
   &\geq \Advt_G^{\prop}(C)
   \end{align*}
   Since $G'_j$ satisfies $\prop$ with public variables $V \cup \{ \mathit{guess\_br\_defined} \}$ up to probability $p_j$ and $C'_{\max,j}$ is an evaluation context acceptable for $C_{\prop}[G'_j]$ with public variables $V \cup \{ \mathit{guess\_br\_defined} \}$ that does not contain $\sevent$ nor $\sbarevent$, we have $\Advt_{G'_j}^{\prop}(C'_{\max,j}) \leq p_j(C'_{\max,j})$.
   Moreover, by Property~\ref{prop:prob}, $p_j(C)$ depends only on the runtime of $C$, the number of outputs $C$ makes on the various channels (which determine replication bounds), and the length of bitstrings, so we have $p_j(C'_{\max,j}) = p_j(C)$. (The additional runtime of the context can be neglected.)
   So $\Advt_G^{\prop}(C) \leq \sum_{j = 0}^{k-1} p_j(C)$.
   Therefore, $G$ satisfies $\prop$ with public variables $V$ up to probability $\sum_{j = 0}^{k-1} p_j$.
   The proof of the second property for (one-session or bit) secrecy proceeds as for $\rn{guess}\ i$ in Lemma~\ref{lem:guess_i}.
   \proofcomplete
   \end{proof}
   
\subsubsection[\rn{global\_dep\_anal}]{\rn{global\_dep\_anal} \cite{Blanchet07c}}\label{sec:globaldepanal}


The global dependency analysis \rn{global\_dep\_anal} $x$ tries to find
a set of variables $S$ such that only variables in $S$ depend on $x$.
In particular, when the global dependency analysis succeeds,
the control flow and the view of the adversary do not depend on $x$,
except in cases of negligible probability.

Let $x$ be a variable defined only by random choices $\Res{x}{T}$ where $T$
is a large type.
Let $\Sdef$ be a set of variables defined only by 
assignments.
Let $\Sdep$ be a set of variables containing $x$.
(Intuitively, $\Sdep$ will be a superset of variables
that depend on $x$.)

We say that a function $f : T \rightarrow T'$ is \emph{uniform} when 
each element of $T'$ has at most
$|T|/|T'|$ antecedents by $f$.
In particular, this is true in the following two cases:
\begin{itemize}

\item $f$ is such that $f(x)$ is uniformly distributed in $T'$ if
$x$ is uniformly distributed in $T$.

\item $f$ is the restriction to the image of $f'$ of an inverse of
$f'$, where $f'$ is a poly-injective function. (We consider that
$f(x)$ is undefined
when $x$ is not in the image of $f'$. Here, in contrast to the rest of
the paper, we allow $f : T \rightarrow T'$ to be defined only on a
subset of $T$.) Precisely, when $x_k \in \Sdef$
is defined by a pattern-matching $\assign{f'(x_1, \ldots, x_n)}{M}{P}\
\ELSE P'$, we have $x_k = {f'}_k^{-1}(M)$, but furthermore when $x_k$
is defined we know that the value of $M$ is in the image of $f'$, so
we have $x_k = f(M)$ where $f = {{f'}_k^{-1}}_{|\im f'}$.

\end{itemize}
\bb{TO DO? I could generalize to functions that have at most $p(\secp) * |I_{\secp}(T)|/|I_{\secp}(T')|$ antecedents for each value, where p is a polynomial.}%

We say that $M$ \emph{characterizes a part of} $x$ with
$\Sdef$,$\Sdep$ when for all $M_0$
obtained from $M$ by substituting variables of $\Sdef$ with
their definition (when there is a dependency cycle among variables of $\Sdef$,
we do not substitute a variable inside its definition), 
$\alpha M_0 = M_0$ implies $f_1(\ldots f_k((\alpha
x)[\tup{M'}])) = f_1(\ldots f_k(x[\tup{M}]))$ for some uniform functions 
$f_1, \ldots, f_k$ and for
some $\tup{M}$ and $\tup{M'}$, where $\alpha$ is a renaming of variables
of $\Sdep$ to fresh variables, 
\bbnote{$\alpha$ does not touch replication indices.}%
$x[\tup{M}]$ is a subterm of $M_0$, 
$(\alpha x)[\tup{M'}]$ is a subterm of $\alpha M_0$, 
the variables in $\Sdep$ do not occur in $\tup{M}$ or $\tup{M'}$,
$T$ is the type of the
result of $f_1$ (or of $x$ when $k = 0$), and $T$ is a large type. In
that case, the value of $M$ uniquely determines the value of
$f_1(\ldots f_k(x[\tup{M}]))$.

We use a simple rewriting prover to determine that.  We consider the
set of terms $\mset_0 = \{ \alpha M_0 = M_0 \}$, and we rewrite
elements of $\mset_0$ using the first kind of user-defined rewrite
rules mentioned in Section~\ref{sec:userdefinedrewriterules} and the rule $\{
M_1 \wedge M_2 \} \cup \mset' \rewrite \{ M_1, M_2 \} \cup \mset'$.

When $\mset_0$ can be rewritten to a set that contains an equality
of the form  
$f_1(\ldots
f_k(x[\tup{M}])) = f_1(\ldots f_k((\alpha x)[\tup{M'}]))$ or
$f_1(\ldots f_k((\alpha x)[\tup{M'}])) = f_1(\ldots f_k(x[\tup{M}]))$
for some $\tup{M}$ and $\tup{M'}$ such that the variables in $\Sdep$
do not occur in $\tup{M}$ or $\tup{M'}$, we have that $M$ characterizes 
a part of $x$ with $\Sdef, \Sdep$.

We say that $M$ \emph{characterizes a part of} $x$ when $M$
characterizes a part of $x$ with $\emptyset,S'$ where $S'$ is $\{ x\}$
union the set of all variables except those defined by random choices.
(We know that variables different from $x$ and defined by random choices
do not depend on $x$, so in the absence of more precise information,
we can set $\Sdep = S'$.)

We say that $\indep(x) = S$ when intuitively, only variables in $S$
depend on $x$, and the adversary cannot see the value of $x$.
Formally, $\indep(x) = S$ when 
\begin{itemize}

\item $S \cap V = \emptyset$.

\item Variables of $S$ do not occur in input or output
channels or messages, that is, they do not occur in the terms 
$M_1$, \ldots, $M_m$,
$N_1$, \ldots, $N_k$ in the input $\cinput{c[M_1, \ldots,
M_m]}{x_1[\tup{i}]:T_1, \ab \ldots, \ab x_k[\tup{i}]:T_k}$ or in the output
$\coutput{c[M_1,\ldots, M_m]}{N_1, \ab \ldots, \ab N_k}$.

\item Variables of $S$ except $x$ are defined only by assignments.

\item If a variable $y \in S$ occurs in $M$ in $\assign{z:T}{M}{P}$,
then $z \in S$.

\item Variables in $S$ may occur in $\defined $ conditions of $\FIND $
but only at the root of them.

\item All terms $M_j$ in processes 
$\FIND$ $(\mathop\bigoplus\nolimits_{j=1}^{m} \tup{\vf_j}[\tup{i}] \leq
\tup{n_j}$ $\SUCHTHAT$ $\defined (M_{j1}, \ldots, M_{jl_j}) \fand M_j$
$\kw{then}$ $P_j)$ $\kw{else}$ $P'$ 
are combinations
by $\fand$, $\for$, or $\fnot$ of terms that either do not contain
variables in $S$ or are of the form $M_1 = M_2$ or $M_1 \neq M_2$
where $M_1$ characterizes a part of $x$ with $S\setminus \{ x\},S$ and no
variable of $S$ occurs in $M_2$, or $M_2$ characterizes a part of $x$ 
with $S\setminus \{ x\},S$ and
no variable of $S$ occurs in $M_1$.

\end{itemize}
The last item implies that the result of tests does not depend on
the values of variables in $S$, except in cases of negligible probability.
Indeed, the tests $M_1 = M_2$ with $M_1$ characterizes a part of $x$ 
with $S\setminus \{ x\},S$ and $M_2$
does not depend on variables in $S$ are false except 
in cases of negligible probability, since the value of $M_1$ 
uniquely determines the value
of $f_1(\ldots f_k(x[\tup{M}]))$ and $M_2$ does not depend on
$f_1(\ldots f_k(x[\tup{M}]))$, so the equality $M_1 = M_2$ happens for a single
value of $f_1(\ldots f_k(x[\tup{M}]))$, which yields a negligible
probability because $f_1, \ldots, f_k$ are uniform, $x$ is chosen
with uniform probability, and the type of the result of $f_1$ is large. 
Similarly, the tests $M_1 \neq M_2$
are true except in cases of negligible probability.

In checking the conditions of $\indep(x) = S$, we do not consider the 
parts of the code that are unreachable due to tests whose result is
known by the conditions above.

The set $S$ is computed by a fixpoint iteration, 
starting from $\{ x \}$ and adding variables defined by assignments
that depend on variables already in $S$.

If we manage to show that $\indep(x) = S$, we transform the game
as follows:
\begin{itemize}

\item We replace with $\false$ terms $M_1 = M_2$ in conditions of $\FIND$
where $M_1$ characterizes a part of $x$ with $S\setminus \{ x\},S$ and no
variable of $S$ occurs in $M_2$, or symmetrically.

\item We replace with $\true$ terms $M_1 \neq M_2$ in conditions of $\FIND$
where $M_1$ characterizes a part of $x$ with $S\setminus \{ x\},S$ and no
variable of $S$ occurs in $M_2$, or symmetrically.

\end{itemize}

\begin{lemma}
  The transformation \rn{global\_dep\_anal} requires and preserves
  Properties~\ref{prop:nointervaltypes}, \ref{prop:noreschan}, \ref{prop:channelindices}, \ref{prop:notables}, \ref{prop:autosarename}, and~\ref{prop:expand}.
  If transformation \rn{global\_dep\_anal} transforms $G$ into $G'$, then
  $\dset,\dsetsnu : G, \ab D, \ab \usedevents \ab \indistev{V}{p} G', D, \usedevents$, where $p$ is an upper bound on the probability that required equalities do not hold.
\end{lemma}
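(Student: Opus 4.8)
The plan is to separate the preservation of the syntactic properties from the semantic relation $\indistev{V}{p}$, and to establish the latter through an ``identical until a collision'' argument.

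The syntactic part is routine. The transformation only rewrites subterms $M_1 = M_2$ into $\false$ and $M_1 \neq M_2$ into $\true$ inside conditions of $\FIND$. Such rewrites introduce no interval types, no $\NEWCHANNEL$, no new channel indices, and no tables, so Properties~\ref{prop:nointervaltypes}, \ref{prop:noreschan}, \ref{prop:channelindices}, and~\ref{prop:notables} are preserved; they neither add nor rename variables defined in conditions of $\FIND$, so Property~\ref{prop:autosarename} is preserved; and since $\false$ and $\true$ are simple terms, Property~\ref{prop:expand} is preserved. These properties are also used by the analysis itself; e.g.\ Property~\ref{prop:notables} lets the construction of $\indep(x) = S$ ignore $\GET$ and $\INSERT$.

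For the semantic relation, I would define a bad event $\mathrm{bad}$ that holds in a trace of $C[G]$ when one of the rewritten tests is reached with a value that actually differs from its replacement, i.e.\ a test $M_1 = M_2$ that evaluates to $\true$, or a test $M_1 \neq M_2$ that evaluates to $\false$ (in both cases $M_1 = M_2$ holds at that program point), where $M_1$ characterizes a part of $x$ with $S\setminus\{x\},S$ and no variable of $S$ occurs in $M_2$, or symmetrically. As long as $\mathrm{bad}$ does not occur, every rewritten term has the same value in $G$ and $G'$, so $C[G]$ and $C[G']$ follow matching traces with identical environments, control flow, and sequences of events; in particular a $\FIND\unique{e}$ whose condition was rewritten has the same set of solutions in both games, hence fires its non-unique event $e$ in exactly the same cases. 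Since $G$ and $G'$ contain the same events, $\nonunique{G,D_1\vee D}$ and $\nonunique{G',D_1\vee D}$ coincide as distinguishers, and ``identical until $\mathrm{bad}$'' gives, for $D^\ast = (D_0 \vee D_1 \vee D)\wedge\neg\nonunique{G,D_1\vee D}$, the bound $\Pr[C[G]:D^\ast] \leq \Pr[C[G']:D^\ast] + \Pr[C[G]:\mathrm{bad}]$, which is exactly the inequality of Definition~\ref{def:indistev} with $p$ an upper bound on $\Pr[C[G]:\mathrm{bad}]$. (When $G$ has no non-unique events this is precisely $G\approx^{V,\evset}_p G'$ for $\evset=\cevent(G,G')$ followed by Lemma~\ref{lem:indistev}, Property~\ref{linkindist1}.)

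The core difficulty, and the main obstacle, is bounding $\Pr[C[G]:\mathrm{bad}]$; this is the probability ``that required equalities do not hold'' and is where $\indep(x) = S$ is used. Adapting the matching-trace technique of Lemma~\ref{lem:sec}, I would show by induction on the derivation of the trace that the adversary's view, the control flow, and the values of all variables outside $S$ are independent of the value of $x$ up to the first rewritten test reached: the defining conditions of $\indep(x) = S$ guarantee that variables of $S$ (which are not public, $S\cap V=\emptyset$) never occur in channels, messages, or other $\FIND$ conditions except at the root of $\defined$, and that a variable computed from a member of $S$ again lies in $S$, so the only way $x$ influences the run is through the rewritten tests. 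Hence at each such test $M_1 = M_2$, the value of $M_1$ determines $f_1(\dots f_k(x[\tup{M}]))$ for uniform functions $f_1,\dots,f_k$ into a large type, while $M_2$ is independent of this value; eliminating the collision between a uniformly distributed random value and an independent value bounds the probability of each test, and summing over all rewritten tests and their executions --- accounting, via Lemma~\ref{lem:yields_contrad}, each eliminated collision only once --- yields $p$. Making this rigorous requires formalizing that $f_1(\dots f_k(x[\tup{M}]))$ remains uniformly distributed and independent of $M_2$ conditioned on everything observed before the test, which, exactly as in the secrecy proofs, is the delicate point.
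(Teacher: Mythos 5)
Your proposal is correct and follows essentially the same route as the paper: the paper states this lemma without a separate formal proof, relying on the justification embedded in its description of the transformation --- namely that each rewritten test $M_1 = M_2$ (resp.\ $M_1 \neq M_2$) can differ from $\false$ (resp.\ $\true$) only when the value $f_1(\ldots f_k(x[\tup{M}]))$, determined by $M_1$, collides with a value independent of it, an event of small probability because the $f_i$ are uniform, $x$ is chosen uniformly, and the result type of $f_1$ is large. Your identical-until-bad organization, with $p$ bounding the probability of the bad event by collision elimination (accounted once via Lemma~\ref{lem:yields_contrad}), together with the observation that matching traces make $\nonunique{G,D_1\vee D}$ and $\nonunique{G',D_1\vee D}$ coincide, is precisely the natural formalization of that argument, and you correctly flag the conditional-independence step as the delicate point.
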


\subsubsection[\rn{simplify}]{\rn{simplify} \cite{Blanchet07c,BlanchetEPrint12}}
\label{sec:simplify}


We use the following transformations in order to simplify games.
These transformations exploit the information collected as explained
in Section~\ref{sec:truefacts}.
\begin{enumerate}

\item Each term $M$ in the game is replaced with a simplified term
$M'$ obtained by reducing $M$ by user-defined rewrite rules knowing $\facts_{P_M}$ 
(see Sections~\ref{sec:userdefinedrewriterules} and~\ref{sec:equationalprover}) and 
the rewrite rules obtained from $\facts_{P_M}$
by the above equational prover where $P_M$ is the smallest process 
containing $M$. The replacement is performed
only when at least one user-defined rewrite rule has been used, to avoid 
complicating the game by substituting all variables with their value.

\item When setting \rn{inferUnique} is true, CryptoVerif tries to prove uniqueness of $\FIND\unique{e}$,
  as in transformation \rn{prove\_unique} (Section~\ref{sec:prove_unique}).
  
\item If $P = \FIND\uniqueopt \ (\mathop{\textstyle\bigoplus}\nolimits_{j=1}^m
\vf_{j1}[\tup{i}] = i_{j1} \leq n_{j1},
\ldots, \vf_{jm_j}[\tup{i}] = i_{jm_j} \leq n_{jm_j}$ $\SUCHTHAT$ $\defined(M_{j1}, \ab
\ldots, \ab M_{jl_j}) \fand M_j$ $\kw{then}$ $P_j)$ $\kw{else}$ $P'$,
$M_j$ does not contain $\kw{event\_abort}$ nor unproved $\FIND\unique{e}$,
$\vf_{jk}[\tup{i}]$ reduces into $M'$ by user-defined rewrite rules knowing $\facts_{P_j}$ (see Sections~\ref{sec:userdefinedrewriterules} and~\ref{sec:equationalprover}) and the rewrite rules obtained from $\facts_{P_j}$, 
and $\vf_{jk}$ does not occur in $M'$, 
then $\vf_{jk}$ is removed from the $j$-th branch of this
$\FIND$, 
$i_{jk}$ is replaced with $M'\{i_{j'k'}/\vf_{j'k'},j'\leq m, k'\leq m_j\}$ in
$M_{j1}, \ldots, M_{jl_j}, M_j$ and $P_j$ is replaced with
$\assign{\vf_{jk}[\tup{i}]:[1, n_{jk}]}{M'}{P_j}$.
(Intuitively, $\vf_{jk}[\tup{i}] = M'$, so the value of $\vf_{jk}[\tup{i}]$
can be computed by evaluating $M'$ instead of performing an array lookup.
We remove $\vf_{jk}[\tup{i}]$ from the variables looked up by $\FIND$ and
replace $\vf_{jk}[\tup{i}]$ with its value $M'$.)
\bb{TODO problem if $M'$ uses variables $\vf_{j'k'}$ with indices different
from $\tup{i}$? It seems impossible to build a counter-example, because
we cannot guarantee that $\vf_{jk} = M'$ while having $M'$ use $\vf_{j'k'}$ 
with indices syntactically different from $\tup{i}$ and still equal to 
$\tup{i}$ at runtime.}%

\item Suppose that  $P = \FIND\uniqueopt \ (\mathop{\textstyle\bigoplus}\nolimits_{j=1}^{m}$  $\vf_{j1}[\tup{i}]  = i_{j1} \leq n_{j1}, \ab 
\ldots, \ab \vf_{jm_j}[\tup{i}] = i_{jm_j} \leq n_{jm_j}$ $\SUCHTHAT$
$\defined(M_{j1}, \ab \ldots, \ab M_{jl_j}) \fand M_j$ $\kw{then}$ $P_j)$ $\kw{else}$
$P'$,
there exists a term $M$ such that $\defined(M) \in \fset_{P_j}$,
$x[N_1, \ldots, N_l]$ is a subterm of $M$, 
$x \neq \vf_{jk}$ for all $k \leq m_j$,
and none of the following conditions holds: 
a) $P$ is under a definition of $x$ in $Q_0$;
b) $Q_0$ contains $Q_1 \parpop Q_2$ such that a definition of $x$
occurs in $Q_1$ and $P$ is under $Q_2$ or a definition of $x$
occurs in $Q_2$ and $P$ is under $Q_1$;
c) $Q_0$ contains $lp+1$ replications above a process $Q$ that contains
a definition of $x$ and $P$, where $lp$ is the length of the longest
common prefix between $N_1, \ldots, N_l$ and the current replication
indices at the definitions of $x$.
Then the $j$-th branch of the $\FIND $ is removed.
(In this case, $x[N_1, \ldots, N_l]$ cannot be defined at $P$,
so the $j$-th branch of the $\FIND$ cannot be taken.)

\item Suppose that  $P = \FIND\uniqueopt \ (\mathop{\textstyle\bigoplus}\nolimits_{j=1}^{m}$  $\vf_{j1}[\tup{i}] = i_{j1} \leq n_{j1}, \ab 
\ldots, \ab \vf_{jm_j}[\tup{i}] = i_{jm_j} \leq n_{jm_j}$ $\SUCHTHAT$
$\defined(M_{j1}, \ab \ldots, \ab M_{jl_j}) \fand M_j$ $\kw{then}$ $P_j)$ $\kw{else}$
$P'$,
there exist terms $M$, $M'$ such that $\defined(M) \in \fset_{P_j}$,
$x[N_1, \ldots, N_l]$ is a subterm of $M$, 
$\defined(M') \in \fset_{P_j}$,
$x'[N'_1, \ldots, N'_{l'}]$ is a subterm of $M'$, 
$N_k = N'_k$ for all $k \leq \min(l, l')$,
$x \neq x'$,
and $x$ and $x'$ are incompatible, then the $j$-th branch of the 
$\FIND $ is removed. Two variables $x$ and $x'$ are said to be compatible
when either there exists $Q_1 \parpop Q_2$ in the game such that 
$x$ is defined in $Q_1$ and $x'$ is defined in $Q_2$, or 
there is a definition of $x'$ under a definition of $x$, or symmetrically.

\item If $P = \FIND\uniqueopt \ (\mathop{\textstyle\bigoplus}\nolimits_{j=1}^m
\tup{\vf_j}[\tup{i}] = \tup{i_j} \leq \tup{n_j}$ $\SUCHTHAT$ $\defined(M_{j1}, 
\ab \ldots, \ab M_{jl_j}) \fand M_j$ $\kw{then}$ $P_j)$ $\kw{else}$ $P'$ and $\facts_{P_j}$ yields a
contradiction, then 
the $j$-th branch of the $\FIND $ is removed if $M_j$ does not contain $\kw{event\_abort}$ nor unproved $\FIND\unique{e}$, 
and $P_j$ is replaced with $\coutput{yield}{}$ if $M_j$ contains $\kw{event\_abort}$ or some unproved $\FIND\unique{e}$.

\item If $P = \FIND\uniqueopt \ \ELSE P'$, then $P$ is
replaced with $P'$.

\item If $\FIND\uniqueopt \ (\mathop{\textstyle\bigoplus}\nolimits_{j=1}^m
\tup{\vf_j}[\tup{i}] = \tup{i_j} \leq \tup{n_j}$ $\SUCHTHAT$ $\defined(M_{j1}, 
\ab \ldots, \ab M_{jl_j}) \fand M_j$ 
$\kw{then}$ $P_j)$ $\kw{else}$ $P'$ and $\facts_{P'}$ yields a 
contradiction, then $P'$ is replaced with $\coutput{yield}{}$. 

\item If $P = \FIND\uniqueopt\ \tup{\vf}[\tup{i}] = \tup{i'} \leq \tup{n} \ \SUCHTHAT\ \defined(M_{1}, \ab \ldots, \ab M_{l}) \wedge M \THEN 
P_1\ \ELSE P'$,
$\uniqueopt$ is not $\unique{e}$ for some non-unique event $e$ that is not proved yet to have negligible probability,
$\facts_{P'}$ yields a contradiction, 
$M$ is simple (so $M$ never aborts), and 
the variables in $\tup{\vf}$ are not used outside $P$ and are not in $V$, 
then $P$ is replaced with $P_1$.
(When the $\FIND$ defines variables $\tup{\vf}$ used elsewhere, 
we cannot remove it.)

\item If $P = \FIND\uniqueopt \ (\mathop{\textstyle\bigoplus}\nolimits_{j=1}^{m}$
$ \tup{\vf_j}[\tup{i}] \leq \tup{n_j}$ $\SUCHTHAT$
$\defined(M_{j1}, \ab \ldots, \ab M_{jl_j}) 
\fand M_j$ $\kw{then}$ $\coutput{yield}{})$
$\kw{else}$ $\coutput{yield}{}$,
$\uniqueopt$ is not $\unique{e}$ for some non-unique event $e$ that is not proved yet to have negligible probability,
the terms $M_j$ do not contain $\kw{event\_abort}$ nor unproved $\FIND\unique{e}$,
and the variables in $\tup{\vf_j}$ are not used outside $P$ and are not in $V$, 
then $P$ is replaced with $\coutput{yield}{}$.

\item The $\defined $ conditions of $\FIND $ are updated so that
Invariant~\ref{inv2} is satisfied. (When such a $\defined$ condition guarantees
that $M$ is defined, $\defined(M)$ implies $\defined(M')$, and 
after simplification $M'$ appears in the scope of this condition,
then $M'$ has to be added to this condition if it is not already present.)
\bbnote{remove from def list elements that do not occur in the condition $M_j$
or in the process $M_j$ of the find and whose definition is implied by 
other elements of the def list}%

\item If $P = \Res{x}{T};P'$ or $\assign{x:T}{M}{P'}$ and $x$ is not used
in the game and is not in $V$, then $P$ is replaced with $P'$.

\item If one of the $\kw{then}$ branches of a $\kw{find}[\kw{unique}]$
  always succeeds and the conditions of this $\kw{find}[\kw{unique}]$ do not contain
  $\kw{event\_abort}$ nor unproved $\FIND\unique{e}$, then we keep only that branch.

  Indeed, the other branches are never taken: the conditions of this
  $\kw{find}[\kw{unique}]$ never abort in traces counted in the probability, 
  and the $\kw{find}$ itself aborts when there are several successful choices.%
  \bb{If some conditions contain $\kw{event\_abort}$, I could remove the
  branches whose conditions do not contain $\kw{event\_abort}$, and
  replace the $\kw{then}$ part of branches whose conditions contain $\kw{event\_abort}$
  by $\kw{yield}$.}%

\item We reorganize a $\kw{find}[\kw{unique}]$ that occurs in a
  $\kw{then}$ branch of a $\kw{find}[\kw{unique}]$:
  we transform
  \newcommand{\brfun}{\mathsf{b}}%
\[\kw{find}[\kw{unique}]\ (\bigorfind_{j=1}^k \tilde u_j = \tilde i_j \leq \tilde n_j\ \kw{suchthat}\ c_j\ \kw{then}\ P_j)\ \kw{else}\ P\]
where $P_{j_0} = \kw{find}[\kw{unique}]\ (\bigorfind_{j'=1}^{k'} \FB_{j'})$ $\kw{else}$ $P''_{j_0}$
into
\[\begin{split}
&\kw{find}[\kw{unique}]\ (\bigorfind_{j=1,\dots,k; j \neq j_0} \tilde u_j = \tilde i_j \leq \tilde n_j\ \kw{suchthat}\ c_j\ \kw{then}\ P_j) \\
&\!\oplus (\bigorfind_{j'=1}^{k'} \bigorfind_{(\tilde u' = \tilde i' \leq \tilde n'\ \kw{suchthat}\ c'\ \kw{then}\ P') \in \brfun(\FB_{j'}, L_{j'})}\\
&\phantom{\!\oplus (\quad\qquad} \tilde u_{j_0} = \tilde i_{j_0} \leq \tilde n_{j_0}, \tilde u' = \tilde i' \leq \tilde n'\ \kw{suchthat}\ c_{j_0} \wedge c'\ \kw{then}\ P')\\
& \kw{else}\ \kw{find}[\kw{unique}]\ \tilde u_{j_0} = \tilde i_{j_0} \leq \tilde n_{j_0}\ \kw{suchthat}\ c_{j_0}\ \kw{then}\ P''_{j_0}\ \kw{else}\ P
\end{split}\]
where\\
-- either for all $j\leq k$, $c_j$ does not contain $\kw{event\_abort}$ nor unproved $\FIND\unique{e}$ or $c_{j_0}$ never aborts (this is true in particular when $c_{j_0}$ does not contain $\kw{event\_abort}$ nor proved or unproved $\FIND\unique{e}$)\\
\bbnote{If for all $j\leq k$, $c_j$ does not contain $\kw{event\_abort}$ nor unproved $\FIND\unique{e}$, the only abortion event that appears in the conditions of $\FIND$ is the proved non-unique event. Before transformation, this event is executed in a $c_j$ when there exists $j \neq j_0$ and $\tup{i}_j$ such that $c_j$ executes that event, or there exist $\tup{i}_{j_0}$ such that $c_{j_0}$ executes that event. 
After transformation, it is executed in a $c_j$ when there exists $j \neq j_0$ and $\tup{i}_j$ such that $c_j$ executes that event, or there exist $\tup{i}_{j_0}$ such that $c_{j_0}$ executes that event and either there exists $\tup{i'}$ such that the defined condition of $c'$ holds or the first $\FIND$ fails. 
That respects the probability conditions.

If $c_{j_0}$ never aborts, then in case $c_j$ aborts for $j \neq j_0$, the same abortion happens after transformation. 

However, in case $c_{j_0}$ contains a proved $\FIND\unique{}$ and for some $j \neq j_0$, $c_j$ contains $\keventabort{e}$, suppose
that $\keventabort{e}$ is executed for $N_1 > 0$ values of $\tup{i}_j$, there are $N_2>0$ values of $\tup{i}_{j_0}$ such that the proved $\FIND\unique{}$ in $c_{j_0}$ actually has several solutions. Suppose that for each $\tup{i}_{j_0}$ the defined condition of $c'$ holds for $N$ values of $\tup{i}'$ (for simplicity, we suppose that the last condition does not depend on the value of $\tup{i}_{j_0}$ and that $L_{j'} = []$ for all $j'$). Then, before transformation, there are $N_1$ solutions of the top $\FIND$ that yield $\keventabort{e}$ and $N_2$ that yield abortion with the non-unique event,
so $\Pr[C[G]:e] = N_1/(N_1+N_2)$ times the probability of reaching the top $\FIND$.
After transformation, there are $N_1$ solutions of the top $\FIND$ that yield $\keventabort{e}$ and $N_2 N$ that yield abortion with the non-unique event, so $\Pr[C[G']:e] = N_1/(N_1+N_2N)$ times the probability of reaching the top $\FIND$. That breaks $\Pr[C[G]:e] \leq \Pr[C[G']:e]$.
So we forbid this situation.
}%
-- $\tilde i$ are the current replication indices at the transformation point\\
-- for all $j'$, $\FB_{j'} = (\tilde u'_{j'} = \tilde i'_{j'} \leq \tilde n'_{j'}\ \kw{suchthat}\ c'_{j'}\ \kw{then}\ P'_{j'})$,
$c'_{j'} = \kw{defined}(\tilde M_{j'}) \wedge \dots$,
$c'_{j'}$ never aborts, and
\bbnote{To take an extreme counter-example, suppose $c_{j_0}$ always evaluates to false and $c'_{j'}$ is a $\FIND\unique{e}$ that always has several solutions. In the game before transformation, $c'_{j'}$ is not reachable, so the $\FIND\unique{e}$ is trivially proved unique: $\Pr[C[G]: e] = 0$. However, after transformation, $c_{j_0} \wedge c'$ evaluates $c' = c'_{j'}$ (simple case when $L_{j'} = \emptyset$) so executes event $e$. If no other condition of the find aborts, $\Pr[C[G']: e]$ is the probability of reaching the find. That breaks $\Pr[C[G]: D \wedge\neg e] = \Pr[C[G]: D] \leq \Pr[C[G']: D \wedge\neg e]$.}%
$L_{j'}$ is the list of $x[\tilde N]$ subterm of $\tilde M_{j'}$ with $x \in \tilde u_{j_0}$ and $\tilde N \neq \tilde i$, ordered by increasing size\\
-- $\brfun(\tilde u' = \tilde i' \leq \tilde n'\ \kw{suchthat}\ c'\ \kw{then}\ P', [\,]) = \{ \tilde u' = \tilde i' \leq \tilde n'\ \kw{suchthat}\ c'\{\tilde i_{j_0}/\tilde u_{j_0}\} \ \kw{then}\ P'\}$\\
-- $\brfun(\FB, x[\tilde N] :: L) = \brfun(\FB, L) \cup \{ \tilde u' = \tilde i' \leq \tilde n'\ \kw{suchthat}\ c'\{i/x[\tilde N]\} \wedge \tilde N = \tilde i\ \kw{then}\ P' \mid (\tilde u' = \tilde i' \leq \tilde n'\ \kw{suchthat}\ c'\ \kw{then}\ P') \in \brfun(\FB, L) \}$ where $i = x\{\tilde i_{j_0}/\tilde u_{j_0}\}$.

The function $\brfun$ takes into account that, before the transformation, $\tilde u_{j_0}[\tilde i]$ is defined when we test $\kw{defined}(\tilde M_{j'})$ in $\FB_{j'}$, while after the transformation, $\tilde u_{j_0}[\tilde i]$ is not defined yet when we perform this test. Furthermore, the value of $\tilde u_{j_0}[\tilde i]$ will be $\tilde i_{j_0}$. Therefore, 1) when we access $x[\tilde i]$ in $c'_{j'} = \kw{defined}(\tilde M_{j'}) \wedge \dots$, we replace this access with $i$, where $i = x\{\tilde i_{j_0}/\tilde u_{j_0}\}$; this is done in $\brfun(\FB, [\,])$ by the substitution $\{\tilde i_{j_0}/\tilde u_{j_0}\}$; and 2) when we access $x[\tilde N]$ in $c'_{j'} = \kw{defined}(\tilde M_{j'}) \wedge \dots$ for $\tilde N$ not syntactically equal to $\tilde i$, we need to distinguish two cases: either at runtime $\tilde N = \tilde i$ and we replace this access with $i$ (second part of the union in $\brfun(\FB, x[\tilde N] :: L)$), or at runtime $\tilde N \neq \tilde i$ and we continue using $x[\tilde N]$ (first part of the union in $\brfun(\FB, x[\tilde N] :: L)$). The array accesses $x[\tilde N]$ in $L_{j'}$ are ordered by increasing size because, in case of nested array accesses, we need to handle the bigger array access first (so it must occur last in the list), because after substitution of the smaller one with $i$, we would not recognize the bigger one.

\bbnote{We tried to transform
\[\begin{split}
\kw{find}[\kw{unique}]\ (\bigorfind_{j=1}^k &\tilde u_j = \tilde i_j \leq \tilde n_j\ \kw{suchthat}\ c_j\ \kw{then}\ P_j)\ \kw{else}\ P
\end{split}\]
where $P_{j_0} = \kw{find}\uniqueopt\ (\bigorfind_{j'=1}^{k'} \FB_{j'})$ $\kw{else}$ $P''_{j_0}$
into
\[\begin{split}
&\kw{find}\uniqueopt\ \\
&\quad \bigorfind_{j'=1}^{k'} \bigorfind_{(\tilde u' = \tilde i' \leq \tilde n'\ \kw{suchthat}\ c'\ \kw{then}\ P') \in \brfun(\FB_{j'}, L_{j'})}\\
&\qquad \tilde u_{j_0} = \tilde i_{j_0} \leq \tilde n_{j_0}, \tilde u' = \tilde i' \leq \tilde n'\\
&\qquad \kw{suchthat}\ c_{j_0} \wedge c'\ \kw{then}\ P'\\
&\kw{else}\ \kw{find}[\kw{unique}]\\
&\phantom{\kw{else}\ }(\bigorfind\limits_{j=1,\dots,k; j \neq j_0} \tilde u_j = \tilde i_j \leq \tilde n_j\ \kw{suchthat}\ c_j\ \kw{then}\ P_j) \\
&\phantom{\kw{else}\ }\oplus \tilde u_{j_0} = \tilde i_{j_0} \leq \tilde n_{j_0}\ \kw{suchthat}\ c_{j_0}\ \kw{then}\ P''_{j_0}\\
&\phantom{\kw{else}\ }\kw{else}\ P
\end{split}\]
However, using this transformation with $\uniqueopt = [\kw{unique}]$ had a negative impact on OEKE:
in OEKE, the branches in the inner $\kw{find}$ after transformation are merged, so this $\kw{find}$
disappears nicely.
We also tried using the version above only when $\uniqueopt$ is empty. It had a negative impact on
\url{examples/textsecure/indiff_HKDF_2.ocv}: it prevented a merge of $\kw{find}$ branches.

Moreover, this transformation can already be done manually when $\uniqueopt$ is empty:
just insert the outer $\kw{find}$ and simplification will yield the desired result
by removing useless branches of $\kw{find}$.

A transformation along the same lines would be: transform
\[\begin{split}
\kw{find}[\kw{unique}]\ (\bigorfind_{j=1}^k &\tilde u_j = \tilde i_j \leq \tilde n_j\ \kw{suchthat}\ c_j\\[-1mm]
&\kw{then}\ P_j)\ \kw{else}\ P
\end{split}\]
where $P_{j_0} = \kw{if}$ $c$ $\kw{then}$ $P''_1$ $\kw{else}$ $P''_2$ into
\[\begin{split}
&\kw{find}[\kw{unique}]\\
&\quad (\bigorfind_{j=1,\dots,k; j \neq j_0} \tilde u_j = \tilde i_j \leq \tilde n_j\ \kw{suchthat}\ c_j\ \kw{then}\ P_j)\\
&\quad \oplus \tilde u_{j_0} = \tilde i_{j_0} \leq \tilde n_{j_0}\ \kw{suchthat}\ c_{j_0} \wedge c\ \kw{then}\ P''_1\\
&\quad \oplus \tilde u_{j_0} = \tilde i_{j_0} \leq \tilde n_{j_0}\ \kw{suchthat}\ c_{j_0} \wedge \neg c\ \kw{then}\ P''_2\\
&\kw{else}\ P
\end{split}\]
but it might have a negative impact as well.}%

\bbnote{I abuse notation when I write $c_{j_0} \wedge c'_{j'}$: the conjunction of the ``defined'' parts is computed by concatenating the lists of defined variables.}%
This transformation cannot be performed when the outer $\kw{find}$ is not
unique because it might change the probability of taking each branch.
Moreover, we tried performing such a transformation when the inner $\kw{find}$
is not unique (in this case, after transformation, the outer $\kw{find}$ is
not unique), but it had a negative impact in some examples.
Furthermore, in the latter case, the transformation can be performed manually
by inserting the desired outer $\kw{find}$ and simplifying the game:
CryptoVerif will remove the useless branches of $\kw{find}$.

The conditions $c_{j_0}$ and $c'_{j'}$ are conjunctions of $\kw{defined}$ conditions and a term.
In the current implementation, the transformation is not performed when the term in $c_{j_0}$ is false (the branch of $\FIND$ will be removed by another transformation), and the branches such that the terms in $c'_{j'}$ are false are first removed. Furthermore, the transformation is performed only when one of the following conditions holds: the terms in $c_{j_0}$ and all $c'_{j'}$ are simple, or the term in $c_{j_0}$ is true and all $c'_{j'}$ never abort, or the terms in $c'_{j'}$ are all true and $c_{j_0}$ never aborts. With the usual simplification of ${}\wedge\true$, this guarantees that the transformed game satisfies Property~\ref{prop:expand}. Moreover, this implies the abortion conditions ($c_{j_0}$ and all $c'_{j'}$ never abort).

After this transformation, we advise renaming the variables
$\tilde u_{j_0}$ to distinct names, since they now have multiple
definitions.

\item We reorganize a $\kw{find}[\kw{unique}]$ that occurs in a condition
  of a $\kw{find}$: we transform
\[\begin{split}
\kw{find}\uniqueopt\ (\bigorfind_{j=1}^k &\tilde u_j = \tilde i_j \leq \tilde n_j\ \kw{suchthat}\ c_j\ \kw{then}\ P_j)\ \kw{else}\ P
\end{split}\]
where
\[\begin{split}
&c_{j_0} =  \defined(\tilde M') \wedge \kw{find}[\kw{unique}] (\bigorfind_{j'=1}^{k'} \tilde u'_{j'} = \tilde i'_{j'} \leq \tilde n'_{j'}\ \kw{suchthat}\ c'_{j'}\ \kw{then}\ M'_{j'})\ \kw{else}\ \kwf{false}
\end{split}\]
for all $j'\leq k'$, $M'_{j'}$ never aborts
\bbnote{To take an extreme counter-example, suppose $c'_{j'}$ always evaluates to false and $M'_{j'}$ is a $\FIND\unique{e}$ that always has several solutions. In the game before transformation, $M'_{j'}$ is not reachable, so the $\FIND\unique{e}$ in $M'_{j'}$ is trivially proved unique: $\Pr[C[G]: e] = 0$. However, after transformation, $c'_{j'} \wedge M'_{j'}$ evaluates $M'_{j'}$ so executes event $e$. If no other condition of the find aborts, $\Pr[C[G']: e]$ is the probability of reaching the find with $\tilde M'$ defined. That breaks $\Pr[C[G]: D \wedge\neg e] = \Pr[C[G]: D] \leq \Pr[C[G']: D \wedge\neg e]$.}%
and either for all $j\leq k$, $c_j$ does not contain $\kw{event\_abort}$ nor unproved $\FIND\unique{e}$ or for all $j'\leq k'$, $c'_{j'}$ never aborts, into 
\bbnote{If for all $j\leq k$, $c_j$ does not contain $\kw{event\_abort}$ nor unproved $\FIND\unique{e}$, the only abortion event that appears in the conditions of $\FIND$ is the proved non-unique event. Before transformation, this event is executed when there exists $j \neq j_0$ and $\tup{i}_j$ such that $c_j$ executes that event, or there exist $\tup{i}_{j_0}$, $\tilde i'_{j'}$ such that $\tilde M'$ is defined and $c'_{j'}$ executes that event, or there exists $\tup{i}_{j_0}$, such that $\tilde M'$ is defined and the $\FIND\unique{}$ at the top of $c_{j_0}$ has several solutions. 
After transformation, it is executed in the first 2 conditions, and the last one makes the top $\FIND$ non-unique. That respects the probability conditions.

If for all $j'\leq k'$, $c'_{j'}$ never aborts, then in case $c_j$ aborts for $j \neq j_0$ and the $\FIND\unique{}$ at the top of $c_{j_0}$ has a single solution, the same abortion happens after transformation. In case $c_j$ aborts for $j \neq j_0$ and the $\FIND\unique{}$ at the top of $c_{j_0}$ has several solutions, before transformation, the abortion is chosen randomly among those of $c_j$ and the non-unique $\FIND$ of $c_{j_0}$, and after transformation the abortion is chosen randomly among those of $c_j$ only. Since the non-unique $\FIND$ of $c_{j_0}$ is not counted in the probabilities, the transformation increases the final probability as desired.

However, in case for some $j'$, $c'_{j'}$ contains a proved $\FIND\unique{}$ and for some $j \neq j_0$, $c_j$ contains $\keventabort{e}$, suppose
that $\keventabort{e}$ is executed for $N_1 > 0$ values of $\tup{i}_j$, there are $N_2>0$ values of $\tup{i}_{j_0}$ such that $\tup{M}'$ is defined
and for each of these values $N >0$ values of $\tup{j}'$ such that the defined condition in $c'_{j'}$ holds and the proved $\FIND\unique{}$ in $c'_{j'}$ actually has several solutions (for simplicity, we suppose that the last condition does not depend on the value of $\tup{i}_{j_0}$). Then, before transformation, there are $N_1$ solutions of the top $\FIND$ that yield $\keventabort{e}$ and $N_2$ that yield abortion with the non-unique event,
so $\Pr[C[G]:e] = N_1/(N_1+N_2)$ times the probability of reaching the top $\FIND$.
After transformation, there are $N_1$ solutions of the top $\FIND$ that yield $\keventabort{e}$ and $N_2 N$ that yield abortion with the non-unique event, so $\Pr[C[G']:e] = N_1/(N_1+N_2N)$ times the probability of reaching the top $\FIND$. That breaks $\Pr[C[G]:e] \leq \Pr[C[G']:e]$.
So we forbid this situation.
}%
\[\begin{split}
&\kw{find}\uniqueopt\ (\bigorfind_{j=1,\dots, k; j \neq j_0} \tilde u_j = \tilde i_j \leq \tilde n_j\ \kw{suchthat}\ c_j\ \kw{then}\ P_j) \\
&\!\oplus (\bigorfind_{j'=1}^{k'} \tilde u_{j_0} = \tilde i_{j_0} \leq \tilde n_{j_0}, \tilde u'_{j'} = \tilde i'_{j'} \leq \tilde n'_{j'}\ \kw{suchthat}\\[-3.5mm]
&\phantom{\!\oplus (\bigorfind_{j'=1}^{k'} }\qquad \defined(\tilde M') \wedge c'_{j'} \wedge M'_{j'}\{\tilde i'_{j'}/\tilde u'_{j'}\}\ \kw{then}\ P_{j_0})\\[-3mm]
&\kw{else}\ P
\end{split}\]
The indication $\uniqueopt$ corresponds to either $[\kw{unique}]$
or empty.
The $\kw{find}$ is marked $[\kw{unique}]$ after transformation if 
the outer $\kw{find}$ was $[\kw{unique}]$ before transformation.

The variables $\tilde u'_{j'}$ are defined inside the condition of a $\FIND$ so by Invariant~\ref{invfc}, they have no array accesses. The transformation performed by function $\brfun$ above is therefore not needed here.

The conditions $c'_{j'}$ are conjunctions of $\kw{defined}$ conditions and a term.
The current implementation first removes the branches such that one of the following two conditions holds: the terms in $c'_{j'}$ are false or $M'_{j'}$ is false and $c'_{j'}$ never aborts. Furthermore, the transformation is performed only when for all $j'$, one of the following conditions holds: the term in $c'_{j'}$ and $M_{j'}$ are simple, or the term in $c'_{j'}$ is true and $M'_{j'}$ never aborts, or $M'_{j'}$ is true or false and the term in $c'_{j'}$ never aborts. With the usual simplification of ${}\wedge\true$ and ${}\wedge\false$, this guarantees that the transformed game satisfies Property~\ref{prop:expand}. Moreover, this implies the abortion conditions ($c'_{j'}$ and $M'_{j'}$ never abort).

\bbnote{In fact, I think only one $[\kw{unique}]$ is necessary:
- if the outer find is not unique, the initial set of solutions is mapped to a set of solutions of the same cardinality,
with an additional value of $\tilde u'_{j'}$ (and a single such value because the inner find is unique) when the branch is $j_0$,
the probability of taking each branch remains unchanged.
- if the inner find is not unique, for branches $\neq j_0$, there is still a single solution (because the outer find is unique); if branch $j_0$ is taken,
there may be several solutions instead of a single one, but all these solutions execute the same $P_{j_0}$.
The obtained find is not unique when the initial finds are not both unique.
However, in practice, if the outer find is unique and the inner find is not, losing the information that the outer find is unique may be problematic.
So I take the option of performing the transformation only if the inner find is unique; then the outer find is unique after transformation iff it was unique before transformation.
}%

\end{enumerate}
The simplification is iterated at most \rn{maxIterSimplif} times. The iteration stops
earlier in case a fixpoint is reached.

\begin{lemma}
  The transformation \rn{simplify} requires and preserves
  Properties~\ref{prop:nointervaltypes}, \ref{prop:noreschan}, \ref{prop:channelindices}, \ref{prop:notables}, and~\ref{prop:autosarename}. It preserves~\ref{prop:expand}.
  If transformation \rn{simplify} transforms $G$ into $G'$, then
  $\dset,\dsetsnu : G, \ab D, \ab \usedevents \ab \indistev{V}{p} G', D, \usedevents$, where $p$ is an upper bound on the probability that required equalities do not hold.
\end{lemma}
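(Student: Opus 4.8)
\textbf{Proof proposal for the \rn{simplify} lemma.}

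The plan is to treat the lemma as a conjunction of claims: the preservation of each structural property (Properties~\ref{prop:nointervaltypes}, \ref{prop:noreschan}, \ref{prop:channelindices}, \ref{prop:notables}, \ref{prop:autosarename}, and~\ref{prop:expand}), together with the main semantic claim $\dset,\dsetsnu : G, D, \usedevents \indistev{V}{p} G', D, \usedevents$ with $p$ bounding the probability that the exploited equalities fail. Since \rn{simplify} is a fixpoint iteration of the twenty-odd individual rewriting steps listed, I would first reduce everything to the case of a \emph{single} step: by Lemma~\ref{lem:indistev}, Property~\ref{indisttrans} (transitivity of $\indistev{V}{}$), a bound for one step composes into a bound for the full iteration by summing the per-step probabilities, and the structural properties, being preserved by each step, are preserved by the composite. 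So the real work is a case analysis over the individual transformations.

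For the structural properties, each step must be checked to maintain the invariants; this is largely routine bookkeeping. The properties about interval types, absence of $\NEWCHANNEL$, channel indices, and absence of tables follow because no step introduces any of the forbidden constructs. Property~\ref{prop:autosarename} is the one to watch: several steps (notably the $\FIND$-reorganization steps 14 and 15) create variables with multiple definitions, which is exactly why the lemma statement notes that one should rename afterwards; I would verify that within the scope of this lemma the property is formally maintained (or explicitly invoke an \rn{auto\_SArename} call, as other transformations in the excerpt do). Property~\ref{prop:expand} preservation is checked step by step using the stated side-conditions (e.g. the ``simple term'' and ``never aborts'' conditions that guarantee the result stays within the grammar of Figure~\ref{fig:syntax-expanded}).

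The heart of the proof is the semantic claim, and I would organize the steps into three families. The \emph{pure rewriting} steps (1, 2, 3, 11, 12, 13) rely on Lemma~\ref{lem:factsP}/Corollary~\ref{cor:factsP} to guarantee that $\facts_{P_M}$ genuinely holds whenever the rewritten program point is reached, and on Lemma~\ref{lem:yields_contrad} to bound the probability that an eliminated collision invalidates the rewrite; these contribute the $p$ term. The \emph{branch-removal} steps (4--10) are justified by showing that the removed branch is unreachable in every trace counted in the probability — using Invariants~\ref{inv1},~\ref{inv2} and the compatibility/definition arguments spelled out in the step descriptions — so that removing it changes no counted trace's behaviour; here I would lean on Lemma~\ref{lem:stronginv1exec} (single definition) and the defined-variable tracking of the semantics. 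The \emph{$\FIND$-reorganization} steps (14, 15) are by far the subtlest, and I expect them to be the main obstacle: one must show that the reorganized $\FIND$ has, trace by trace, the same set of successful branch/index solutions as the original (so the distribution $D_{\FIND}$ is preserved, appealing to the reordering-invariance property of $D_{\FIND}(S)$ noted early in the excerpt), and that the non-unique and Shoup abort events are executed in exactly the matching cases. The author's marginal notes (\texttt{bbnote}) already flag the delicate probability accounting: whenever a condition contains $\kw{event\_abort}$ or an unproved $\FIND\unique{e}$, combining it multiplicatively with another condition can change $\Pr[\cdot:e]$, which is why the side-conditions forbid precisely those configurations. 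The plan for these steps is therefore to set up a bijection between the solution sets of the two $\FIND$s (using the $\brfun$ construction to account for $\tilde u_{j_0}$ being undefined before the reorganization) and to verify, under each stated abortion side-condition, that $\Pr[C[G]:e] \leq \Pr[C[G']:e]$ for each non-unique/Shoup event $e$ while the counted $\neg\nonunique{}$ traces are preserved, thereby establishing the inequality~\eqref{eq:indistev} that defines $\indistev{V}{p}$.
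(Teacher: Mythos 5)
You cannot be compared against the paper's own proof here, because there is none: like the other per-transformation soundness lemmas of Section~\ref{sec:gametransf}, this lemma is stated bare, with its justification implicitly deferred to the earlier published versions of the simplification procedure \cite{Blanchet07c,BlanchetEPrint12}. So your proposal is not a variant of the paper's argument but an attempt to supply what the paper omits. Judged on its own terms, the skeleton is the right one and uses exactly the infrastructure the paper sets up for this purpose: reduction to a single simplification step composed by transitivity (Lemma~\ref{lem:indistev}, Property~\ref{indisttrans}) with per-step probabilities summing into $p$; Lemma~\ref{lem:factsP} and Corollary~\ref{cor:factsP} to guarantee that $\facts_{\pp}$ holds on every reachable configuration at $\pp$; Lemma~\ref{lem:yields_contrad} to bound the probability of the eliminated collisions; trace-matching for the branch-removal steps; and, for steps 14--15, a bijection of solution sets together with the reordering-invariance of $D_{\FIND}$ and the stated abortion side-conditions to obtain the one-sided inequality~\eqref{eq:indistev}.

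Two points need more than you give them. A minor one: your classification of the steps is off in places --- step 2 is not rewriting but a \rn{prove\_unique}-style step (Section~\ref{sec:prove_unique}), whose contribution to $p$ is the probability that the proved $\FIND\unique{e}$ actually executes $e$, and step 13 is a uniqueness-based branch removal, so both belong with the steps whose accounting hinges on non-unique events. The substantive one is Property~\ref{prop:autosarename}: the lemma claims outright that \rn{simplify} preserves it, yet the paper's description of \rn{simplify}, unlike those of \rn{remove\_assign}, \rn{SArename}, or \rn{expand}, never says that \rn{auto\_SArename} is invoked afterwards, so your fallback ``or explicitly invoke an \rn{auto\_SArename} call'' is not available within the scope of this lemma. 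And the issue is real: in step~14 the auxiliary function $\mathsf{b}$ produces several branches each containing a (substituted) copy of the inner condition $c'_{j'}$, and the outer condition $c_{j_0}$ is likewise copied into every new branch and into the residual $\kw{find}[\kw{unique}]$ of the $\kw{else}$ branch; since conditions of $\FIND$ may define variables, this duplication creates identically-named variables defined in conditions of distinct branches, which is precisely what Property~\ref{prop:autosarename} forbids. A complete proof must resolve this, e.g.\ by arguing that the duplicated conditions are copied with fresh names for the variables they define (which is harmless since those variables have no array accesses by Invariant~\ref{invfc}); it cannot be left as an unresolved disjunction, because every subsequent iteration of the fixpoint \emph{requires} the property, as does the transitivity argument you use to chain the steps.
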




\subsubsection{\rn{all\_simplify}}

\rn{all\_simplify} perform several simplifications on the game, as if 
\begin{itemize}
\item \rn{simplify}, 
\item \rn{move all} if \rn{autoMove = true}, 
\item \rn{remove\_assign useless} if \rn{autoRemoveAssignFindCond = false}, \\
  \rn{remove\_assign findcond} if \rn{autoRemoveAssignFindCond = true},
\item \rn{SArename random} if \rn{autoSARename = true},
\item and \rn{merge\_branches} if \rn{autoMergeBranches = true}
\end{itemize}
had been called.

\subsubsection{\rn{success simplify}}\label{sec:success_simplify}

The transformation \rn{success simplify} is a combination of \rn{success} (Section~\ref{sec:success}) and \rn{simplify} (Section~\ref{sec:simplify}), with the following addition. First, in the \rn{success} step, the command \rn{success simplify} collects information that is known to be true when the adversary manages to break at least one of the desired properties. Then, the first iteration of the \rn{simplify} step removes parts of the game that contradict this information and replaces them with $\keventabort{\kwf{adv\_loses}}$.

\bb{TODO when $\lppf(P,\tup{M}) \in \fset$, we execute $\fset \leftarrow \fset \cup \convertelsefind(\ef_P\{\tup{M}/\tup{i}\}, \fset)$ (all variables known to be defined in $\fset$ are defined at $\ppf(P,\tup{M})$) In the implementation, instead of using $\lppf$, we directly add the $\elsefind$ facts at $P$ and later call $\convertelsefind$.}%

In more detail, \rn{success simplify} collects a set $\lset^-$ of $(\vset, \fset)$ and a set of formulas $\lset^+$. If the adversary breaks a desired security property, then either there exists a set of facts $\fset$ in $\lset^-$ that holds or there exists a formula in $\lset^+$ that holds. The sets of facts $\lset^-$ correspond to cases in which the proof of the security property failed; their probability may be high. The associated variables $\vset$ are replication indices and non-process variables that occur in $\fset$. The formulas in $\lset^+$ correspond to cases in which the security property was proved (up to a certain probability); these formulas are negations of the formulas that prove the security property in the considered case; the probability that they hold is bounded by the equational prover of CryptoVerif. The contents of $\lset^+$ does not influence the game obtained after the transformation. It is useful to compute the probability difference coming from the transformation. The sets $\lset^-$ and $\lset^+$ are computed as follows:
\begin{itemize}
\item In case there is an indistinguishability query, or a (one-session or bit) secrecy query on a variable $x$ not defined only by $\NEW$ or by assignments of variables defined by $\NEW$, or a correspondence query $\forall \tup{x}:\tup{T}; \psi \Rightarrow \exists \tup{y}:\tup{T}'; \phi$ with some event $e$ in $\psi$ such that the game contains $\kevent{e(\tup{M})}$ and some term in $\tup{M}$ is not simple, no information is collected at all and \rn{simplify} is not performed.

\item For each correspondence query $\fevent{e} \Rightarrow \false$ where $e$ is a non-unique event, for every $\pp$ that executes $\fevent{e}$,
  for every $\case$, $(\theta'\replidx{\pp}, \ab \theta'\fset^0_{\fevent{e},\pp,\case})$ is added to $\lset^-$, for some $\theta'$ renaming of $\replidx{\pp}$ to fresh replication indices.
  (Indeed, in order to break $\fevent{e} \Rightarrow \false$, event $e$ must be executed, so the facts $\fset^0_{\fevent{e},\pp,\case}$ at some execution of event $e$ hold.)
  
\item For each other correspondence query $\forall \tup{x}:\tup{T}; F_1 \wedge \dots \wedge F_m \Rightarrow \exists \tup{y}:\tup{T}'; \phi$,
  let $\corresp = \sem{\forall \tup{x}:\tup{T}; F_1 \wedge \dots \wedge F_m \Rightarrow \exists \tup{y}:\tup{T}'; \phi}$ and $\sset_0 = \{ (\pp_1,\case_1,\dots, \pp_m,\case_m) \mid \forall j \leq m, \pp_j$ executes $F_j$ and $\case_j$ is a case for $\fset_{\pp_j, \case_j}\}$;
  by trying to prove the correspondence, we build a subset $\sset_1$ of $\sset_0$, a family of substitutions $\theta$, and a pseudo-formula $\coll$ such that $\prove{\corresp}(\coll, \theta, \sset_1)$. ($\theta$ and $\coll$ are computed incrementally on the successful cases in the proof of the correspondence.)
    
  For all $(\pp_1,\case_1,\dots, \pp_m,\case_m) \in \sset_0 \setminus \sset_1$ (\rn{success} fails to prove the correspondence for those cases):
\begin{itemize}
\item 
  If the query is $\forall \tup{x}:\tup{T}; \fevent{e(\tup{N})} \Rightarrow \false$ and the process or term at $\pp_1$ is $\kevent{e(\tup{M})}; \ab \dots$, then 
  $(\tup{x} \cup \theta'\replidx{\pp_1}, \ab \theta'\fset_{\pp_1, \case_1} \cup \{ \lppf(\pp_1, \theta'\replidx{\pp_1}), \ab \tup{N} = \theta'\tup{M} \})$ is added to $\lset^-$, where $\theta'$ is a renaming of $\replidx{\pp_1}$ to fresh replication indices. (We can stop the trace just after event $e$ without changing the truth of the query, and that is more precise because we can use the $\elsefind$ facts at $e$.)

\item Otherwise,
  $(\tup{x} \cup \bigcup_{j=1}^m \theta_j\replidx{\pp_j}, \bigcup_{j=1}^m \theta_j\fset_{F_j,\pp_j,\case_j})$ is added to $\lset^-$, where for $j \leq m$, $\theta_j$ is a renaming of $\replidx{\pp_j}$ to fresh replication indices.
  (Indeed, in order to break the correspondence $\forall \tup{x}:\tup{T}; F_1 \wedge \dots \wedge F_m \Rightarrow \exists \tup{y}:\tup{T}'; \phi$, the events $F_1$, \dots, $F_m$ must be executed, so the facts $\fset_{F_j,\pp_j,\case_j}$ for $j \leq m$ that hold when $F_1$, \dots, $F_m$ are executed certainly hold when the correspondence is broken. In principle, we could add $\neg \exists \tup{y}:\tup{T}'; \phi$ to the facts $\bigcup_{j=1}^m \theta_j\fset_{F_j,\pp_j,\case_j}$ added to $\lset^-$. However, we have no way to express universal quantification in general in known facts, so when $\tup{y}$ is not empty, we could not add $\forall \tup{y}:\tup{T}'; \neg \phi$ but would end up adding $\neg \phi$ which in fact means $\exists \tup{y}:\tup{T}'; \neg \phi$. That would remain sound assuming the types in $\tup{T}'$ are not empty, but would be weaker. Moreover, in practice, we end up having to distinguish precisely the case in which $\exists \tup{y}:\tup{T}'; \phi$ can be proved from the case in which it cannot, which can typically be done by inserting an appropriate $\FIND$. We generally insert a Shoup event $e$ in the $\kw{else}$ branch of that $\FIND$, triggered when the correspondence cannot be proved, a case for which we want to bound the probability. After that, it remains to prove $\fevent{e} \Rightarrow \false$: we apply \rn{success simplify} to that correspondence. Adding $\neg\phi = \true$ would not change anything for that correspondence, and we exploit that the condition of the inserted $\FIND$ is false at event $e$, which gives us more precise information than having added $\neg\phi$ for the initial correspondence.)

\end{itemize}
  Moreover, $\neg \yctran{\prove{\corresp}(\coll, \theta, \sset_1)}$ is added to $\lset^+$. (\rn{success} proves the correspondence for the cases in $\sset_1$.)

\item For each secrecy, one-session secrecy, or bit secrecy query on a variable $x$ defined only by $\NEW$ or by assignments of variables defined by $\NEW$, let $\sset_0 = \{ \pp \mid \pp$ follows a definition of $x \}$
and $\sset_1 = \{ \pp \in \sset_0 \mid \prove{\secrone(x)}(\pp)\}$.

For each $\pp \in \sset_0 \setminus \sset_1$ (\rn{success} fails to prove one-session secrecy for those cases), $(\theta\replidx{\pp}, \theta\fset_{\pp})$ is added to $\lset^-$, where $\theta$ is a renaming of $\replidx{\pp}$ to fresh replication indices. (Indeed, if secrecy, one-session secrecy, or bit secrecy of $x$ is broken, a definition of $x$ must have been executed, so the facts $\fset_{\pp}$ at that definition hold.)

Moreover, $\neg \yctran{\prove{\secrone(x)}(\sset_1)}$ is added to $\lset^+$. (\rn{success} proves one-session secrecy for the cases in $\sset_1$.)

Additionally, if the considered query is a secrecy query, then for each $\pp_1, \pp_2 \in \sset_0$, 
let $z_1[\tup{M_1}] = \defrestr_{\pp_1}(x)$ and $z_2[\tup{M_2}] = \defrestr_{\pp_2}(x)$.
If $z_1 \neq z_2$, then the definitions at $\pp_1$ and $\pp_2$ are proved to be independent, and nothing is added to $\lset^-$ nor $\lset^+$.
  If $z_1 = z_2$, let $\tup{i}$ be the current replication indices at the definition of $x$, let $\theta_1$ and $\theta_2$ be two distinct renamings of $\tup{i}$ to fresh replication indices, 
let $\tup{i}_1 = \theta_1\tup{i}$ and $\tup{i}_2 = \theta_2\tup{i}$,
let $\fset = \theta_1\fset_{\pp_1} \cup \theta_2\fset_{\pp_2} \cup \{ \theta_1 \tup{M_1} = \theta_2 \tup{M_2},\ab \tup{i}_1 \neq \tup{i}_2\}$. If $\fset$ yields a contradiction, then the definitions at $\pp_1$ and $\pp_2$ are proved to be independent up to a small probability, $\exists \tup{i}_1, \exists \tup{i}_2, \bigwedge \fset$ is added to $\lset^+$. Otherwise, $(\tup{i}_1 \cup \tup{i}_2, \fset)$ is added to $\lset^-$.

\end{itemize}
In the \rn{simplify} step, the set $\lset^-$ is used as follows: for each program point $\pp$ not in a condition of $\FIND$\bbnote{``$\pp$ not in a condition of $\FIND$'' is present in the implementation, but not required for soundness},
if for all $(\vset, \fset) \in \lset^-$, $\fset_{\pp} \cup \fset$ yields a contradiction,
then the code at $\pp$ is replaced with $\keventabort{\advloses}$.
(The reason why $\vset$ is needed in the implementation is for the optimization of probabilities of collisions:
we determine using which indices in $\vset$ we get the smaller bound for the number of
collisions.)

The probability that a security property is broken before the transformation
and not after is then bounded by the probability that a modified program
point $\pp$ is reached and the adversary breaks the property.
If that breach corresponds to a case in $\lset^+$, the probability of the breach itself is bounded by construction of $\lset^+$.
If that breach corresponds to a case in $\lset^-$, the probability of the breach and reaching $\pp$ is bounded because
for all $(\vset, \fset) \in \lset^-$, $\fset_{\pp} \cup \fset$ yields a contradiction, which bounds the probability that the facts $\fset_{\pp} \cup \fset$ hold for some $(\vset, \fset) \in \lset$, and $\fset_{\pp}$ holds when $\pp$ is reached while some $\fset$ in $\lset$ holds when the adversary breaks the property.
This is formalized by the following lemma.

\begin{lemma}
  The transformation \rn{success\ simplify} requires and preserves
  Properties~\ref{prop:nointervaltypes}, \ref{prop:noreschan}, \ref{prop:channelindices}, \ref{prop:notables}, \ref{prop:autosarename}, and~\ref{prop:expand}.

  If transformation \rn{success\ simplify} transforms $G$ into $G'$,
the distinguisher $D$ is a disjunction of Shoup and non-unique events,
the property $\prop$ and the disjuncts in $D$ correspond to active queries,
$\lset^- = \{ (\vset_j, \fset_j) \mid j \in J\}$,
the modified program points are $\pp_k$ for $k \in K$,
$\fsetmod_k = \fset_{\pp_k}$,
for all evaluation contexts $C$ acceptable for $G$,
$\Prss{C[G]}{ \left(\bigvee_{j \in J, k \in K} \exists \replidx{\pp_k}, \exists \vset_j, \bigwedge \fset_j \wedge \bigwedge \fsetmod_k\right) \vee \left(\bigvee \lset^+\right)} \leq p(C)$,
and $\bound{G'}{V}{\prop}{D}{p'}$, then
$\bound{G}{V}{\prop}{D}{p+p'}$.
\end{lemma}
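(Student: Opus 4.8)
The plan is to first dispatch the preservation of Properties~\ref{prop:nointervaltypes}--\ref{prop:expand}, which is routine: \rn{success simplify} only replaces the code at each modified program point $\pp_k$ with the output process $\keventabort{\advloses}$, so it introduces no interval type, no $\Reschan$, no $\INSERT$/$\GET$, no $\FIND$-condition variable, and no non-simple term, and $\keventabort{\advloses}$ is a well-formed expanded process. For the probability bound, fix a context $C'$ acceptable for $C_{\prop}[G]$ with public variables $V\setminus V_{\prop}$, not containing events used by $\prop$ or $D$ nor non-unique events of $G$, and set $C=C'[C_{\prop}[\,]]$; the goal is to bound $\Advtev{G}{\prop}{C}{D}$. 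Since $\advloses$ is fresh and the non-unique events of $G'$ are among those of $G$, the same $C$ is acceptable for $C_{\prop}[G']$, so $\bound{G'}{V}{\prop}{D}{p'}$ gives $\Advtev{G'}{\prop}{C}{D}\le p'(C)$, and it remains to show $\Advtev{G}{\prop}{C}{D}\le\Advtev{G'}{\prop}{C}{D}+p(C)$. The central tool is a coupling of the full traces of $C[G]$ and $C[G']$: since $G'$ differs from $G$ only by aborting with $\keventabort{\advloses}$ at the points $\pp_k$, each trace of $C[G]$ is paired with the trace of $C[G']$ that agrees with it step by step until the first configuration at some $\pp_k$, where $C[G']$ aborts. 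Write $R$ for the event, on $C[G]$, that a configuration at a modified $\pp_k$ is reached. On $\neg R$ the paired traces are identical with identical probability and raise the same non-unique events; and because $\sevent$, $\sbarevent$, every event of $D$, and $\advloses$ all abort, an $R$-trace has executed none of $\sevent$, $\sbarevent$, nor $D$ before reaching $\pp_k$, so $C[G']$ executes none of them while $C[G]$ may continue and execute them afterwards.

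For a trace property $\prop$ I would write $\Advtev{G}{\prop}{C}{D}=\Pr[C[G]:(\neg\prop\vee D)\wedge\neg\nonunique{G,D}]$ and split on $R$. The $\neg R$ part equals the corresponding contribution to $\Advtev{G'}{\prop}{C}{D}$ by the coupling. For the $R$ part I would use the construction of $\lset^-$ and $\lset^+$: by the analysis underlying Propositions~\ref{prop:sec}, \ref{prop:nicorresp}, and~\ref{prop:icorresp} (in particular Lemmas~\ref{lem:fsetFP}, \ref{lem:nicorresp}, and~\ref{lem:icorresp}), whenever a trace breaks $\neg\prop\vee D$, either some $(\vset_j,\fset_j)\in\lset^-$ has $\fset_j$ satisfied at the event or definition witnessing the breach, or some formula of $\lset^+$ holds. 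Reaching $\pp_k$ means, by Lemma~\ref{lem:factsP}, that $\fsetmod_k=\fset_{\pp_k}$ holds at that configuration. Collecting both facts at a common configuration of the trace, exactly as in Corollary~\ref{cor:factsP} (terms, $\defined$ facts, events, and program-point facts persist as the environment and event sequence grow; for an aborting witness such as $\keventabort{e}$ the witnessing facts, including $\lppf$, hold at the final configuration where $\fsetmod_k$ also holds), I obtain that a prefix of the trace satisfies $\bigvee_{j,k}\exists\replidx{\pp_k},\exists\vset_j,\bigwedge\fset_j\wedge\bigwedge\fsetmod_k$ or $\bigvee\lset^+$, i.e.\ the formula $\varphi_{\mathrm{bad}}$ of the hypothesis. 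Hence the $R$ part is bounded by $\Prss{C[G]}{\varphi_{\mathrm{bad}}}\le p(C)$, and $\Advtev{G}{\prop}{C}{D}\le\Advtev{G'}{\prop}{C}{D}+p(C)\le p'(C)+p(C)$.

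The secrecy case, where $\prop$ is $\secrone(x)$, $\secr(x)$, or $\secrbit(x)$, is the main obstacle, precisely because the advantage is a difference $\Pr[C[G]:\sevent\vee D]-\Pr[C[G]:\sbarevent\vee\nonunique{G,D}]$ and removing traces can be shown sound only by compensating $\sevent$-traces against $\sbarevent$-traces. Using the coupling and that an $R$-trace of $C[G']$ executes none of $\sevent$, $\sbarevent$, $D$, I would establish $\Advtev{G}{\prop}{C}{D}-\Advtev{G'}{\prop}{C}{D}=\Pr[C[G]:(\sevent\vee D)\wedge R]-\Pr[C[G]:(\sbarevent\vee\nonunique{G,D})\wedge R]$. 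Since the aborting events are mutually exclusive, the right-hand side splits as $(\Pr[\sevent\wedge R]-\Pr[\sbarevent\wedge R])+(\Pr[D\wedge R]-\Pr[\nonunique{G,D}\wedge R])$. The second summand is at most $\Pr[C[G]:D\wedge R]$, and $D\wedge R$ implies $\varphi_{\mathrm{bad}}$ because the active queries for the disjuncts of $D$ cause $\lset^-$ to collect the facts at every execution of those events. The first summand is where compensation enters: a naive bound by $\Pr[\sevent\wedge R]$ fails, since $\sevent\wedge R$ does not imply $\varphi_{\mathrm{bad}}$---the adversary may guess correctly even when secrecy holds. Here I would refine Lemma~\ref{lem:sec}: splitting on whether $\trace\vdash\prop$ (i.e.\ $\trace\vdash\yctran{\prove{\prop}(\Stestpp(\trace))}$), the matched traces constructed there---obtained by flipping the secret bit $b$, or the defining random value in the bit-secrecy case---differ only in values for which $\noleak$ holds, hence have identical $G$-control flow and reach the same modified points. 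Therefore $\Pr[\sevent\wedge\prop\wedge R]=\Pr[\sbarevent\wedge\prop\wedge R]$ and they cancel, leaving $\Pr[\sevent\wedge\neg\prop\wedge R]-\Pr[\sbarevent\wedge\neg\prop\wedge R]\le\Pr[C[G]:\neg\prop\wedge R]$.

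Finally, a trace with $\trace\vdash\neg\prop\wedge R$ satisfies $\varphi_{\mathrm{bad}}$: $\neg\prop$ forces either a tested definition of $x$ at a program point outside the proved set $\sset_1$, so that the $\lset^-$ facts $\fset_j$ collected at that definition hold and, with $R$, also $\fsetmod_k$; or a failure of the one-session-secrecy or distinctness proof formula, so that a disjunct of $\lset^+$ holds. Consequently both $\neg\prop\wedge R$ and $D\wedge R$ imply $\varphi_{\mathrm{bad}}$, so the difference is at most $\Prss{C[G]}{\varphi_{\mathrm{bad}}}\le p(C)$, and with $\Advtev{G'}{\prop}{C}{D}\le p'(C)$ we conclude $\bound{G}{V}{\prop}{D}{p+p'}$. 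The hardest step will be making the compensation rigorous while tracking $R$ together with the disjuncts of $D$ and $\nonunique{G,D}$: one must verify that the trace matching of Lemma~\ref{lem:sec} carries over verbatim to the events restricted by $R$, and that flipping the secret genuinely preserves reachability of each $\pp_k$ on $\prop$-traces, which relies on $\noleak$ making the control flow of $G$ independent of the secret.
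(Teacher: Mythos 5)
The preservation of Properties~\ref{prop:nointervaltypes}--\ref{prop:expand} and your treatment of general correspondences (facts without $\lppf$, which persist to the end of the trace) follow the paper's argument. But your central claim for trace properties --- that every trace reaching a modified program point ($R$-trace) and breaking $(\neg\prop\vee D)\wedge\neg\nonunique{G,D}$ has a \emph{prefix} satisfying $\left(\bigvee_{j,k}\exists\replidx{\pp_k},\exists\vset_j,\bigwedge\fset_j\wedge\bigwedge\fsetmod_k\right)\vee\left(\bigvee\lset^+\right)$ --- is false in one case, and this breaks your decomposition ``$R$-part $\leq p(C)$, $\neg R$-part $=$ $G'$-part''. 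When $\prop$ is $\sem{\forall\tup{x}{:}\tup{T};\fevent{e(\tup{N})}\Rightarrow\false}$ and $e$ occurs as a \emph{non-aborting} event $\kevent{e(\tup{M})};\dots$, the entry of $\lset^-$ contains $\lppf(\pp_1,\theta'\replidx{\pp_1})$, which holds only on the prefix ending immediately after the execution of $e$ (it asserts no variable definition or index change since $\pp_1$). If the trace reaches its first modified program point $\pp_k$ strictly \emph{after} executing $e$, then the prefix at $e$ lacks $\fsetmod_k$, every later prefix loses $\lppf$, and $\lset^+$ need not hold: no prefix satisfies the bad formula, yet this is an $R$-trace that breaks the query. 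Your parenthetical only covers aborting witnesses ($\keventabort{e}$, non-unique events, Shoup events), where the trace ends at the event and the conflict cannot arise. The paper's proof devotes its entire Case~2 to exactly this situation, and resolves it differently: such traces are charged to $\Advtev{G'}{\prop}{C}{D}$, not to $p(C)$ --- the matching trace of $C[G']$ agrees up to just after $e$ and may then abort with $\advloses$ at $\pp_k$, but it has \emph{already executed} $e$, hence still satisfies $\neg\corresp$ and executes no non-unique event of $G'$. So the clean ``$R$ versus $\neg R$'' coupling must be refined to allow some $R$-traces of $G$ to be matched against aborted $R$-traces of $G'$ that still witness the breach; without this your bound on the $R$-part does not hold.

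For secrecy your route genuinely differs from the paper's. You keep the split on $R$ and propose to extend the coupling of Lemma~\ref{lem:sec} so that flipping the secret preserves reachability of each $\pp_k$, giving $\Pr[\sevent\wedge\prop\wedge R]=\Pr[\sbarevent\wedge\prop\wedge R]$; this is plausible (the matched configurations in that proof sit at the same program points), but it requires reproving a strengthened Lemma~\ref{lem:sec}, which is where you correctly locate the difficulty. The paper avoids this: it applies Lemma~\ref{lem:sec} as stated to rewrite $\Advtev{G}{\prop}{C}{D}$ as $\Pr[C[G]:(\sevent\wedge\neg\prop)\vee D]-\Pr[C[G]:\sbarevent\wedge\neg\prop]-\Pr[C[G]:\nonunique{G,D}]$, bounds the first term by $p(C)$ plus its $G'$ analogue via the bad-formula/matching-trace argument (here the facts carry no $\lppf$, since secrecy and the aborting disjuncts of $D$ are the only sources, so your persistence argument is sound), and handles the negative terms by the reverse injection: a trace of $C[G']$ executing $\sbarevent$ or a non-unique event never executes $\advloses$, hence exists unchanged in $C[G]$; it then reassembles $\Advtev{G'}{\prop}{C}{D}$ by applying Lemma~\ref{lem:sec} to $G'$. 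That decomposition buys exactly what your approach must pay for: no statement about the flip-coupling and modified program points is ever needed.
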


Unfortunately, we cannot prove
$\dset,\dsetsnu : G, \ab D, \ab \usedevents \indistev{V}{p} G', \ab D,
\ab \usedevents \cup \{ \advloses \}$
for the transformation \rn{success\ simplify}, because, in case of
(one-session or bit) secrecy of a variable $x$, the inequality needed for this property
may not hold: we need to take into account that, when $x$ is not defined, traces that execute $\sevent$
and those that execute $\sbarevent$ compensate in the computation of
$\Advtev{G}{\prop}{C}{D}$ in order to prove the soundness of this
transformation. We cannot prove this soundness independently for
$\Pr[C[G]:\sevent]$ and for $\Pr[C[G]:\neg\sbarevent]$.

Moreover, in the implementation, for (one-session or bit) secrecy properties, a probability
$2p(C)$ is added instead of just $p(C)$ as shown by the lemma above. 
The factor 2 is difficult to avoid because other simplifications are
performed at the same time as described in the transformation \rn{simplify}
(Section~\ref{sec:simplify}), and the factor 2 is needed for these 
transformations.

\begin{proof}
  \paragraph{Fact~1.}
  Let $\corresp$ be a correspondence not of the form $\forall \tup{x}\in\tup{T}, \fevent{e(\tup{N})} \Rightarrow \false$, or the correspondence $\fevent{e} \Rightarrow \false$ for some Shoup event $e$.
Let $C$ be any evaluation context acceptable for $G$ with public variables $V$
that does not contain events used by $\corresp$.
Let $\trace$ be any full trace of $C[G]$ that does not execute any non-unique event of $G$ and such that $\trace \vdash \neg\corresp$.
Then $\trace \vdash \left(\bigvee_{j \in J} \exists \vset_j, \bigwedge\fset_j\right) \vee \left(\bigvee \lset^+\right)$.

  \paragraph{Proof of Fact~1.}
By Lemma~\ref{lem:icorresp},
for any substitutions $\theta(\pp_1,\case_1,\ab \dots,\ab \pp_m,\case_m)$ equal to the 
identity on $\tup{x}$,
for any pseudo-formula $\coll$,
\[\textstyle\trace \vdash \neg \yctran{\prove{\corresp}(\coll, \theta, \sset_0)}\]
So
\[
\textstyle\trace \vdash
(\neg \yctran{\prove{\corresp}(\coll, \theta, \sset_1)}) \vee
\left(
  \bigvee_{(\pp_1, \case_1, \dots, \pp_m, \case_m) \in \sset_0 \setminus \sset_1} 
  \neg\yctran{\prove{\corresp}(\coll, \theta, \pp_1,\case_1, \ab \dots, \ab \pp_m,\case_m)}
\right)\,.\]
Moreover, $\neg\yctran{\fset \shows_{\theta}^{\sri,\vset,\coll} \phi} \Rightarrow \exists \tup{z} \in \tup{T}'', \bigwedge \fset$ where $\tup{z} = \vset$ and $\tup{T}''$ are the types of these variables, by an easy induction on $\phi$,
so $\neg\yctran{\prove{\corresp}(\coll, \ab \theta, \ab \pp_1,\case_1, \ab \dots, \ab \pp_m,\ab \case_m)} \Rightarrow \exists \theta_1\replidx{\pp_1}, \ab\dots, \ab\exists \theta_m\replidx{\pp_m}, \ab \exists \tup{x} \in \tup{T}, \ab \bigwedge\theta_1\fset_{F_1,\pp_1,\case_1} \cup \dots \cup \theta_m\fset_{F_m,\pp_m,\case_m}$ where, for $j\leq m$, $\theta_j$ is a renaming of $\replidx{\pp_j}$ to fresh replication indices.
So
\[\begin{split}
\trace \vdash
&\textstyle(\neg\yctran{\prove{\corresp}(\coll, \theta, \sset_1)}) \vee {}\\
&\textstyle\left(
  \bigvee_{(\pp_1, \case_1, \dots, \pp_m, \case_m) \in \sset_0 \setminus \sset_1} 
  \exists \theta_1\replidx{\pp_1}, \dots, \exists \theta_m\replidx{\pp_m}, \exists \tup{x} \in \tup{T}, \bigwedge \theta_1\fset_{F_1,\pp_1,\case_1} \cup \dots \cup \theta_m\fset_{F_m,\pp_m,\case_m}
\right)\,.
\end{split}\]
The formula $\neg\yctran{\prove{\corresp}(\coll, \theta, \sset_1)}$ is added to $\lset^+$
and $(\tup{x} \cup \bigcup_{j=1}^m \theta_j\replidx{\pp_j}, \bigcup_{j=1}^m \theta_j\fset_{F_j,\pp_j,\case_j})$ is added to $\lset^-$ when $(\pp_1,\case_1,\dots, \pp_m,\case_m) \in \sset_0 \setminus \sset_1$ 
(recall that, when $e$ is Shoup event, $e$ is always executed by $\keventabort{e}$)
so 
$\trace \vdash \left(\bigvee_{j \in J} \exists \vset_j, \bigwedge \fset_j\right) \vee \left(\bigvee \lset^+\right)$.

\paragraph{Fact~2.}
Let $e$ be a non-unique event.
Let $C$ be any evaluation context acceptable for $G$ with public variables $V$
that does not contain $e$.
Let $\trace$ be any full trace of $C[G]$ such that $\trace \vdash e$.
Then $\trace \vdash \left(\bigvee_{j \in J} \exists \vset_j, \bigwedge \fset_j\right) \vee \left(\bigvee \lset^+\right)$.

\paragraph{Proof of Fact~2.}
By Lemma~\ref{lem:fsetFP}, there exist a program point $\pp$ (in $G$) and a case $\case$ such that, for any $\theta'$ renaming of $\replidx{\pp}$ to fresh replication indices, there exists a mapping $\sigma$ with domain $\theta'\replidx{\pp}$ such that
$\trace, \sigma \vdash \theta'\fset^0_{\fevent{e}, \pp, \case}$.
Let $(\vset, \fset) = (\theta'\replidx{\pp}, \theta'\fset^0_{\fevent{e}, \pp, \case})$ be the element of $\lset^-$ for $\pp$ and $\case$, in the treatment of correspondence $\fevent{e} \Rightarrow \false$.
We have $\trace \vdash \exists \vset, \bigwedge\fset$.
Therefore, $\trace \vdash \bigvee_{j \in J} \exists \vset_j, \bigwedge\fset_j$.

\paragraph{Fact~3.}
Let $C$ be any evaluation context acceptable for $G$ with public variables $V$.
Let $\trace$ be any trace of $C[G]$.
If $\trace \vdash \neg\bigvee_{k \in K} \exists \replidx{\pp_k}, \bigwedge \fsetmod_k$, 
then there is no configuration in $\trace$ at a modified program point $\pp_k$.

\paragraph{Proof of Fact~3.}
By contraposition, if there is a configuration $\conf = E, \sigma, {}^{\pp_k} M, \tblcts, \evseq$ or
$\conf = E, (\sigma, {}^{\pp_k} P), \pset, \cset, \tblcts, \evseq$
at a modified program point $\pp_k$ in trace $\trace$,
then
let $\theta$ be a renaming of $\replidx{\pp_k}$ to fresh indices
and $\venv = \{ \theta\replidx{\pp_k} \mapsto \sigma\replidx{\pp_k} \}$;
by Corollary~\ref{lem:factsP}, $\trace, \venv \vdash \theta\facts_{\pp_k}$.
So $\trace, \venv \vdash \theta \fsetmod_k$,
so $\trace \vdash \exists \replidx{\pp_k}, \bigwedge \fsetmod_k$.

\bigskip\noindent
We perform the proof for each query separately.

\paragraph{Case 1: $\prop$ is some correspondence $\corresp$ different from $\forall \tup{x}\in \tup{T}, \fevent{e(\tup{N})} \Rightarrow \false$ (including $\prop$ is $\true$).}
Let $C$ be any evaluation context acceptable for $G$ with public variables $V$
that does not contain events used by $\corresp$, $D$, nor non-unique events of $G$.

Consider any full trace $\trace$ of $C[G]$ such that $\trace \vdash (\neg\corresp \vee D) \wedge\neg\nonunique{G,D}$.
Let us show that $\trace \vdash \left(\bigvee_{j \in J} \exists \vset_j, \bigwedge\fset_j\right) \vee \left(\bigvee \lset^+\right)$.

Case 1.1: $\trace$ does not execute any non-unique event of $G$. Then $\trace \vdash \neg\corresp \vee D_s$ where $D_s$ is the disjunction of Shoup events in $D$.

Case 1.1.1: $\trace \vdash \neg\corresp$
We conclude by Fact~1.

Case 1.1.2: $\trace \vdash e$ for some Shoup event $e$ in $D_s$.
We conclude by Fact~1 for the correspondence $\fevent{e} \Rightarrow \false$.

Case 1.2: $\trace$ executes a non-unique event of $G$. Then $\trace \vdash e$ for some non-unique event $e$ in $G$ and in $D$.
We conclude by Fact~2.

If $\trace \vdash \bigvee_{k \in K} \exists \replidx{\pp_k}, \bigwedge \fsetmod_k$, 
then $\trace \vdash \left(\bigvee_{j \in J, k \in K} \exists \replidx{\pp_k}, \exists \vset_j, \bigwedge \fset_j \wedge \bigwedge \fsetmod_k\right) \vee \left(\bigvee \lset^+\right)$.

Otherwise, by Fact~3, there is no configuration in $\trace$ at a modified program point $\pp_k$,
so $\trace$ has a matching trace in $C[G']$ that also satisfies $(\neg\corresp \vee D) \wedge\neg\nonunique{G',D}$.

We conclude that
{\allowdisplaybreaks\begin{align*}
\Advtev{G}{\corresp}{C}{D} 
\leq {}&\Pr[C[G]: (\neg\corresp \vee D) \wedge\neg\nonunique{G,D}]\\
\begin{split}
\leq {}&\Pr[C[G]: \left(\bigvee_{j \in J, k \in K} \exists \replidx{\pp_k}, \exists \vset_j, \bigwedge \fset_j \wedge \bigwedge \fsetmod_k\right) \vee \left(\bigvee \lset^+\right)]\\*
&{} + \Pr[C[G']: (\neg\corresp \vee D) \wedge\neg\nonunique{G',D}]
\end{split}\\
\begin{split}
\leq {}&\Prss{C[G]}{ \left(\bigvee_{j \in J, k \in K} \exists \replidx{\pp_k}, \exists \vset_j, \bigwedge \fset_j \wedge \bigwedge \fsetmod_k\right) \vee \left(\bigvee \lset^+\right)}\\*
&{} + \Pr[C[G']: (\neg\corresp \vee D) \wedge\neg\nonunique{G',D}]
\end{split}\tag*{by Lemma~\ref{lem:Pr-Prss}}\\*
\leq {}& p(C) + \Advtev{G'}{\corresp}{C}{D}\\*
\leq {}& p(C) + p'(C) \tag*{since $\bound{G'}{V}{\prop}{D}{p'}$}
\end{align*}}%
Hence, we have $\bound{G}{V}{\prop}{D}{p+p'}$.

\paragraph{Case 2: $\prop$ is some correspondence $\corresp = \forall \tup{x}\in \tup{T},\fevent{e(\tup{N})} \Rightarrow \false$.}
Let $C$ be any evaluation context acceptable for $G$ with public variables $V$
that does not contain events used by $\corresp$, $D$, nor non-unique events of $G$.

Consider a full trace $\trace$ of $C[G]$ such that $\trace \vdash (\neg\corresp \vee D) \wedge\neg\nonunique{G,D}$.
Let us show that either some prefix of $\trace$ satisfies 
$\left(\bigvee_{j \in J, k \in K} \exists \replidx{\pp_k}, \exists \vset_j, \bigwedge \fset_j \wedge \bigwedge \fsetmod_k\right) \vee \left(\bigvee \lset^+\right)$
or there is a matching trace in $C[G']$ that satisfies $(\neg\corresp \vee D) \wedge\neg\nonunique{G',D}$.
\bbnote{I would be happy to factorize this proof a bit more, perhaps by showing ``for some prefix $\trace'$ of $\trace$, $\trace' \vdash \left(\bigvee_{j \in J} \exists \vset_j, \bigwedge_{F \in \fset_j} F\right) \vee \left(\bigvee \lset^+\right)$ and ...'' but finding a simple property for $\trace'$ is not that easy.}%

We have $\trace \vdash \exists \tup{x}\in \tup{T}, \fevent{e(\tup{N})} \vee D$.

Case~2.1: $\trace$ executes a Shoup event $e'$ of $D$.
Then $\trace \vdash e'$
and $\trace$ is actually a full trace that does not execute any non-unique event of $G$.
By Fact~1, $\trace \vdash \left(\bigvee_{j \in J} \exists \vset_j, \bigwedge_{F \in \fset_j} F\right) \vee \left(\bigvee \lset^+\right)$.

Case~2.2: $\trace$ executes a non-unique event of $G$. Then $\trace \vdash e'$ for some non-unique event $e'$ in $G$ and in $D$.
Then $\trace$ is actually a full trace. By Fact~2, $\trace \vdash \left(\bigvee_{j \in J} \exists \vset_j, \bigwedge_{F \in \fset_j} F\right) \vee \left(\bigvee \lset^+\right)$.

In cases~2.1 and~2.2, 
\begin{itemize}
\item If $\trace \vdash \bigvee_{k \in K} \exists \replidx{\pp_k}, \bigwedge \fsetmod_k$, then 
$\trace \vdash \left(\bigvee_{j \in J, k \in K} \exists \replidx{\pp_k}, \exists \vset_j, \bigwedge \fset_j \wedge \bigwedge \fsetmod_k\right) \vee \left(\bigvee \lset^+\right)$.
\item Otherwise, by Fact~3, there is no configuration in $\trace$ at a modified program point $\pp_k$,
so $\trace$ has a matching trace in $C[G']$ that also satisfies $(\neg\corresp \vee D) \wedge\neg\nonunique{G',D}$.
\end{itemize}

Case~2.3: $\trace$ does not execute any non-unique event of $G$ nor any Shoup event of $D$.
Then $\trace \vdash \exists \tup{x}\in \tup{T}, \fevent{e(\tup{N})}$.

Let $\evseq$ be the sequence of events and $E$ be the environment in the last configuration
of $\trace$.
There is a mapping $\venv$ of the variables $\tup{x}$ to their values such that
$\trace,\venv \vdash e(\tup{N})$.
As in the proof of Lemma~\ref{lem:fsetFP}, 
there exist $\tup{a}$ and $\step \in \mathbb{N}$
such that $\venv, \tup{N} \evalterm \tup{a}$ and 
$\evseq(\step) = (\pp, \tup{a}_0): e(\tup{a})$ for some $\pp$ and $\tup{a}_0$.
The rule of the semantics that may have added this element to $\evseq$
is \eqref{sem:event}, \eqref{sem:eventabort}, \eqref{sem:ctxevent},
\eqref{sem:finde} or~\eqref{sem:eventt}.
(It cannot be~\eqref{sem:find3} nor~\eqref{sem:get3} because $e$ is not a non-unique event.
It cannot be~\eqref{sem:gete} because $G$ does not contain $\GET$ by Property~\ref{prop:notables}.)
\begin{itemize}
\item Case~2.3.1:
In case~\eqref{sem:event}, 
we have reductions
\begin{align*}
&\conf = E_0, (\sigma_0, \pptag\kevent{e(\tup{M})};P), \pset_0, \cset_0, \tblcts_0,\evseq_0 \\
&\qquad \red{p_0}{\ix_0} \dots \red{p_1}{\ix_1} E_1, (\sigma_1, \pptag\kevent{e(\tup{a})};P), \pset_0, \cset_0, \tblcts_1,\evseq_1 \\
&\qquad \red{1}{} \conf' = E_1, (\sigma_1, P), \pset_0, \cset_0, \tblcts_1,(\evseq_1, (\pp, \image(\sigma_1)):e(\tup{a}))
\end{align*}
where $\pptag\kevent{e(\tup{M})};P$ is a subprocess of $C[G]$ up to renaming of channels, by any number of applications of~\eqref{sem:ctx} and a final application of~\eqref{sem:event}.
The terms $\tup{M}$ are simple terms (when some term in $\tup{M}$ is not simple, no information is collected at all and \rn{simplify} is not performed), so in fact
\begin{align*}
&\conf = E_0, (\sigma_0, \pptag\kevent{e(\tup{M})};P), \pset_0, \cset_0, \tblcts_0,\evseq_0 \\
&\qquad \red{1}{}^* E_0, (\sigma_0, \pptag\kevent{e(\tup{a})};P), \pset_0, \cset_0, \tblcts_0,\evseq_0 \\
&\qquad \red{1}{} \conf' = E_0, (\sigma_0, P), \pset_0, \cset_0, \tblcts_0,(\evseq_0, (\pp, \image(\sigma_0)):e(\tup{a}))
\end{align*}
  
\item Case~2.3.2: In case~\eqref{sem:eventabort}, we have a reduction
\begin{align*}
  &\conf = E, (\sigma_0, \pptag P), \pset_0, \cset_0, \tblcts_0,\evseq_0\\
  &\qquad \red{p}{\ix} \conf' = E, (\sigma_0, \kw{abort}), \pset_0, \cset_0, \tblcts_0,(\evseq_0, (\pp, \image(\sigma_0)):e)
\end{align*}
where $\pptag P$ is a subprocess of $C[G]$ up to renaming of channels, by~\eqref{sem:eventabort}.

\item Case~2.3.3: In case~\eqref{sem:eventt}, we have reductions
\begin{align*}
&\conf = E_0, \sigma_0, \pptag\kevent{e(\tup{M})};N, \tblcts_0,\evseq_0 \\
&\qquad \red{p_0}{\ix_0} \dots \red{p_1}{\ix_1} E_1, \sigma_1, \pptag\kevent{e(\tup{a})};N, \tblcts_1,\evseq_1 \\
&\qquad \red{1}{} E_1, \sigma_1, N, \tblcts_1, (\evseq_1, (\pp, \image(\sigma_1)):e(\tup{a}))
\end{align*}
where $\pptag\kevent{e(\tup{M})};N$ is a subterm of $C[G]$, by any number of applications of~\eqref{sem:ctxt} and a final application of~\eqref{sem:eventt}.
The terms $\tup{M}$ are simple terms, so in fact
\begin{align*}
&\conf = E_0, \sigma_0, \pptag\kevent{e(\tup{M})};N, \tblcts_0,\evseq_0 \\
&\qquad \red{1}{}^* E_0, \sigma_0, \pptag\kevent{e(\tup{a})};N, \tblcts_0,\evseq_0 \\
&\qquad \red{1}{} E_0, \sigma_0, N, \tblcts_0, (\evseq_0, (\pp, \image(\sigma_0)):e(\tup{a}))
\end{align*}
By Invariant~\ref{invtfc}, $\pp$ is not inside a condition of $\FIND$ or $\GET$,
so by Lemma~\ref{lem:condfind}, $\conf$ is not in the derivation of an hypothesis of a rule for $\FIND$ or $\GET$.
The only rule for processes other than those for $\FIND$ or $\GET$ that evaluates a non-simple term is~\eqref{sem:ctx} and similarly, the only rule for terms other than those for $\FIND$ or $\GET$ that evaluates a term is~\eqref{sem:ctxt}, so we have
\begin{align*}
  &E_0, (\sigma_0, C_0[C_1[\dots C_k[\pptag\kevent{e(\tup{M})};N]\dots]]), \pset_0, \cset_0, \tblcts_0,\evseq_0\\
  &\qquad \red{1}{}^* \conf' = E_0, (\sigma_0, C_0[C_1[\dots C_k[N]\dots]]), \pset_0, \cset_0, \tblcts_0, (\evseq_0, (\pp, \image(\sigma_0)):e(\tup{a}))
\end{align*}
for some $C_0$ context defined in
Figure~\ref{fig:proccontexts}, $k \in \mathbb{N}$, and $C_1$, \dots,
$C_k$ contexts defined in Figure~\ref{fig:termcontexts}, by $k$ applications of~\eqref{sem:ctxt}
and one application of~\eqref{sem:ctx}.
\bbnote{In other words, we are in the second case of Lemma~\ref{lem:start_from_pp}, Property~\ref{start_from_pp5}.}%

\item Case~2.3.4: In case~\eqref{sem:ctxevent}, we have
  \[\conf = E_0, \sigma_0, \pptag \keventabort{e}, \tblcts_0,\evseq_0 \red{p}{\ix} E_0, \sigma_0, \keventabort{(\pp, \image(\sigma_0)):e}, \tblcts_0,\evseq_0\]
  by \eqref{sem:eventabortt}
  where $\pptag\keventabort{e}$ is a subterm of $C[G]$, followed by applications \eqref{sem:findte}, \eqref{sem:ctxt}, \eqref{sem:ctxeventt}, \eqref{sem:finde}, \eqref{sem:ctx}, and~\eqref{sem:ctxevent}.
  We let $\conf'$ be the last configuration of $\trace$.

\end{itemize}
Let $\theta'$ be a renaming of $\replidx{\pp}$ to fresh replication indices.
We have
\begin{align*}
  &\prove{\corresp}(\coll, \theta_0, \pp, \case) = (\theta'\fset_{\fevent{e(\tup{N})}, \pp, \case}\text{ yields a contradiction}),\\
  &\yctran{\prove{\corresp}(\coll, \theta_0, \pp, \case)} = \forall \theta'\replidx{\pp},\ab \forall \tup{x} \in \tup{T}, \ab \neg\bigwedge \theta'\fset_{\fevent{e(\tup{N})}, \pp, \case},\\
  &\coll\text{ is a pseudo-formula with all leaves $\bot$, so }\yctran{\vdash\coll} = \true\,.
\end{align*}
If $(\pp, \case) \in \sset_0 \setminus \sset_1$, then in cases~2.3.1 and~2.3.3, $(\tup{x} \cup \theta'\replidx{\pp}, \ab \theta'\fset_{\pp, \case} \cup \{ \lppf(\pp, \ab \theta'\replidx{\pp}), \ab \tup{N} = \theta'\tup{M} \})$ is added to $\lset^-$ and in cases~2.3.2 and~2.3.4, $(\theta'\replidx{\pp}, \theta'\fset_{\fevent{e}, \pp, \case})$ is added to $\lset^-$.
Moreover, $\neg \yctran{\prove{\corresp}(\coll, \theta_0, \sset_1)} = \bigvee_{(\pp,\case) \in \sset_1}\ab \exists \theta'\replidx{\pp},\exists \tup{x} \in \tup{T}, \bigwedge\theta'\fset_{\fevent{e(\tup{N})}, \pp, \case}$ is added to $\lset^+$.

As in the proof of Lemma~\ref{lem:fsetFP},
we have $\sigma_0 = [\replidx{\pp} \mapsto \tup{a}_0]$.
Let $\sigma = \{ \theta'\replidx{\pp} \mapsto \tup{a}_0 \}$.
As in the proof of Lemma~\ref{lem:fsetFP}, we have
$\trace, \sigma\cup \venv \vdash \theta'\fset_{\fevent{e(\tup{N})}, \pp, \case}$,
since $\trace$ does not execute any non-unique event of $G$.
\begin{itemize}
\item In cases~2.3.1 and~2.3.3 when $(\pp, c) \in \sset_1$ and in cases~2.3.2 and~2.3.4, 
we have $\trace \vdash \left(\bigvee_{j \in J} \exists \vset_j, \bigwedge\fset_j\right) \vee \left(\bigvee \lset^+\right)$.
\begin{itemize}
\item \sloppy If $\trace \vdash \bigvee_{k \in K} \exists \replidx{\pp_k}, \bigwedge \fsetmod_k$, then 
$\trace \vdash \left(\bigvee_{j \in J, k \in K} \exists \replidx{\pp_k}, \exists \vset_j, \bigwedge \fset_j \wedge \bigwedge \fsetmod_k\right) \vee \left(\bigvee \lset^+\right)$.
\item Otherwise, by Fact~3, there is no configuration in $\trace$ at a modified program point $\pp_k$,
so $\trace$ has a matching trace in $C[G']$ that also satisfies $(\neg\corresp \vee D) \wedge\neg\nonunique{G',D}$.
\end{itemize}

\item In cases~2.3.1 and~2.3.3 when $(\pp, c) \in \sset_0 \setminus \sset_1$, 
let $\trace'$ be the prefix of $\trace$ that stops at $\conf'$.
We have $\conf \beforetr{\trace} \conf'$, so by Corollary~\ref{cor:factsP},
$\trace \before \conf', \sigma \vdash \theta'\facts_{\pp,\case}$, that is,
$\trace', \sigma \vdash \theta'\facts_{\pp,\case}$.
We have $E_0, \sigma_0, \tup{M} \evalterm \tup{a}$, so
$E_0, \sigma, \theta'\tup{M} \evalterm \tup{a}$.
The environment $E_{\trace'}$ extends $E_0$, so
$E_{\trace'}, \sigma, \theta'\tup{M} \evalterm \tup{a}$, so
$\trace', \sigma\cup\venv \vdash \theta'\tup{M} = \tup{N}$.
Since $E_{\trace \before \conf'} = E_{\trace \before \conf}$ and $\sigma_{\conf'} = \sigma_{\conf}$,
$\trace', \sigma \vdash \lppf(\pp, \theta'\tup{i})$.
Hence $\trace' \vdash \bigvee_{j \in J} \exists \vset_j, \bigwedge \fset_j$.
\begin{itemize}
\item If $\trace' \vdash \bigvee_{k \in K} \exists \replidx{\pp_k}, \bigwedge \fsetmod_k$, then 
$\trace' \vdash \left(\bigvee_{j \in J, k \in K} \exists \replidx{\pp_k}, \exists \vset_j, \bigwedge \fset_j \wedge \bigwedge \fsetmod_k\right) \vee \left(\bigvee \lset^+\right)$ and $\trace'$ is a prefix of $\trace$.
\item Otherwise, by Fact~3, there is no configuration in $\trace'$ at a modified program point $\pp_k$,
so $\trace$ has a matching trace in $C[G']$ that also satisfies $(\neg\corresp \vee D) \wedge\neg\nonunique{G',D}$.
(Note that $\trace$ may still be modified by \rn{success simplify}: the matching trace in $C[G']$ may execute a modified program point $\pp_k$
after $\conf'$, but still the matching trace executes $e(\tup{M})$, so satisfies $\neg\corresp$, and it does not execute any non-unique event of $G'$.)
\end{itemize}

\end{itemize}
We conclude that
\begin{align*}
\Advtev{G}{\corresp}{C}{D} 
= {}& \Pr[C[G]: (\neg\corresp \vee D) \wedge\neg\nonunique{G,D}]\\
\begin{split}
\leq {}&\Prss{C[G]}{\left(\bigvee_{j \in J, k \in K} \exists \replidx{\pp_k}, \exists \vset_j, \bigwedge \fset_j \wedge \bigwedge \fsetmod_k\right) \vee \left(\bigvee \lset^+\right)}\\
&{} + \Pr[C[G']: (\neg\corresp \vee D) \wedge\neg\nonunique{G',D}]
\end{split}\\*
\leq {}& p(C) + \Advtev{G'}{\corresp}{C}{D}\\*
\leq {}& p(C) + p'(C) \tag*{since $\bound{G'}{V}{\prop}{D}{p'}$}
\end{align*}
Hence, we have $\bound{G}{V}{\prop}{D}{p+p'}$.

\paragraph{Case 3: $\prop$ is $\secrone(x)$, $\secr(x)$, or $\secrbit(x)$ with $C = C'[C_{\prop}[\,]]$ and $x$ is defined only by $\NEW$ or by assignments of variables defined by $\NEW$.}
Let $C'$ be any evaluation context acceptable for $C_{\prop}[G]$ with public variables $V \setminus V_{\prop}$
that does not contain $\sevent$, $\sbarevent$, any event in $D$, nor any non-unique event of $G$.
\begin{align*}
\Advtev{G}{\prop}{C}{D} 
&= \Pr[C[G]: \sevent \vee D] - \Pr[C[G]: \sbarevent \vee \nonunique{G,D}]\\
&= \Pr[C[G]: \sevent] + \Pr[C[G]: D] - \Pr[C[G]: \sbarevent] -  \Pr[C[G]: \nonunique{G,D}]\tag*{since these events are mutually exclusive}\\
&= \Pr[C[G]: \sevent \wedge\neg\prop] + \Pr[C[G]: D]\\
&\quad - \Pr[C[G]: \sbarevent\wedge\neg\prop] -  \Pr[C[G]: \nonunique{G,D}]
\tag*{by Lemma~\ref{lem:sec}}\\
&= \Pr[C[G]: (\sevent \wedge\neg\prop) \vee D] - \Pr[C[G]: \sbarevent\wedge\neg\prop] -  \Pr[C[G]: \nonunique{G,D}]
\end{align*}
We have
\begin{itemize}
\item $\Pr[C[G']: \nonunique{G',D}] \leq \Pr[C[G]: \nonunique{G,D}]$, 
\item $\Pr[C[G']: \sbarevent\wedge\neg\prop] \leq \Pr[C[G]: \sbarevent\wedge\neg\prop]$
\end{itemize}
since a trace that executes a non-unique event or $\sbarevent$ in $G'$ cannot execute event $\advloses$,
so it executed without change in $G$.
Let us show that $\Pr[C[G]: (\sevent \wedge\neg\prop) \vee D] \leq p(C) + \Pr[C[G']: (\sevent \wedge\neg\prop) \vee D]$.
Let $\trace$ be a full trace of $C[G]$ such that $\trace \vdash (\sevent \wedge\neg\prop) \vee D$.
Let us show that $\trace \vdash \left(\bigvee_{j \in J} \exists \vset_j, \bigwedge\fset_j\right) \vee \left(\bigvee \lset^+\right)$.

Case 3.1: $\trace$ executes a non-unique event of $G$. Then $\trace \vdash e$ for some non-unique event $e$ in $G$ and in $D$.
We conclude by Fact~2.

Case 3.2: $\trace \vdash e$ for some Shoup event $e$ in $D$. Then 
$\trace$ does not execute any non-unique event of $G$.
We conclude by Fact~1 for the correspondence $\fevent{e} \Rightarrow \false$.

Case 3.3: $\trace \vdash \sevent \wedge\neg\prop$.
Since $\trace \vdash \neg\prop$, we have $\trace \vdash \neg\yctran{\prove{\prop}(\Stestpp(\trace))}$, so we are in one of the following two cases:
\begin{itemize}
\item There exists $\pp \in \Stestpp(\trace)$ such that $\trace \vdash \neg\yctran{\prove{\secrone(x)}(\pp)}$.
\begin{itemize}
\item If $\pp \in \sset_1$, then $\trace \vdash \bigvee \lset^+$ since $\neg \yctran{\prove{\secrone(x)}(\sset_1)}$ is added to $\lset^+$.
\item If $\pp \in \sset_0 \setminus \sset_1$, then $(\theta\replidx{\pp}, \theta\fset_{\pp})$ is added to $\lset^-$, where $\theta$ is a renaming of $\replidx{\pp}$ to fresh replication indices.
Moreover, we have $\neg\yctran{\noleak(z[\tup{M}'], \sri, \fset)} \Rightarrow \exists \sri, \bigwedge \fset$, 
so $\neg \yctran{\prove{\secrone(x)}(\pp)} \Rightarrow \exists \theta\replidx{\pp}, \bigwedge \theta\fset_{\pp}$
since $\defrestr_{\pp}(x)$ is defined.
Therefore, $\trace \vdash \bigvee_{j \in J} \exists \vset_j, \bigwedge\fset_j$.
\end{itemize}
\item In case $\prop = \secr(x)$, there exist $\pp_1,\pp_2 \in \Stestpp(\trace)$ such that $\trace \vdash \neg\yctran{\prove{\distinct(x)}(\pp_1,\ab \pp_2)}$.
Let $z_1[\tup{M_1}] = \defrestr_{\pp_1}(x)$, $z_2[\tup{M_2}] = \defrestr_{\pp_2}(x)$,
$\tup{i}$ be the current replication indices at the definition of $x$,
$\theta_1$ and $\theta_2$ be two distinct renamings of $\tup{i}$ to fresh replication indices, 
$\tup{i}_1 = \theta_1\tup{i}$, $\tup{i}_2 = \theta_2\tup{i}$, and
$\fset = \theta_1\fset_{\pp_1} \cup \theta_2\fset_{\pp_2} \cup \{ \theta_1 \tup{M_1} = \theta_2 \tup{M_2},\ab \tup{i}_1 \neq \tup{i}_2\}$
Then $z_1 = z_2$ and $\trace \vdash \exists \tup{i}_1, \exists \tup{i}_2, \bigwedge \fset$.
\begin{itemize}
\item If $\fset$ yields a contradiction, then $\exists \tup{i}_1, \exists \tup{i}_2, \bigwedge \fset$ is added to $\lset^+$, so $\trace \vdash \bigvee \lset^+$.
\item Otherwise, $(\tup{i}_1 \cup \tup{i}_2, \fset)$ is added to $\lset^-$, so $\trace \vdash \bigvee_{j \in J} \exists \vset_j, \bigwedge\fset_j$.
\end{itemize}
\end{itemize}

If $\trace \vdash \bigvee_{k \in K} \exists \replidx{\pp_k}, \bigwedge \fsetmod_k$, 
then $\trace \vdash \left(\bigvee_{j \in J, k \in K} \exists \replidx{\pp_k}, \exists \vset_j, \bigwedge \fset_j \wedge \bigwedge \fsetmod_k\right) \vee \left(\bigvee \lset^+\right)$.

Otherwise, by Fact~3, there is no configuration in $\trace$ at a modified program point $\pp_k$,
so $\trace$ has a matching trace in $C[G']$ that also satisfies $(\sevent \wedge\neg\prop) \vee D$.

Therefore, $\Pr[C[G]: (\sevent \wedge\neg\prop) \vee D] \leq p(C) + \Pr[C[G']: (\sevent \wedge\neg\prop) \vee D]$, so
\[\Advtev{G}{\prop}{C}{D} 
\leq p(C) + \Pr[C[G']: (\sevent \wedge\neg\prop) \vee D] - \Pr[C[G']: \sbarevent\wedge\neg\prop] -  \Pr[C[G']: \nonunique{G',D}]\]
Moreover, by applying on $G'$ the same steps as on $G$ at the beginning of this proof, we have
\[\Advtev{G'}{\prop}{C}{D} = \Pr[C[G']: (\sevent \wedge\neg\prop) \vee D] - \Pr[C[G']: \sbarevent\wedge\neg\prop] -  \Pr[C[G']: \nonunique{G',D}]\]
so
\[\Advtev{G}{\prop}{C}{D} \leq p(C) + \Advtev{G'}{\prop}{C}{D} \leq p(C) + p'(C)\]
since $\bound{G'}{V}{\prop}{D}{p'}$. Therefore, we have
$\bound{G}{V}{\prop}{D}{p+p'}$.
\proofcomplete
\end{proof}

\subsection{\rn{crypto}: Applying the Security Assumptions on Primitives}\label{sec:primdef}

The \rn{crypto} transformation applies security assumptions
on primitives. The first version of this transformation was
presented in~\cite[Section~3.2 and Appendix~D]{Blanchet07c}.

\section{Proof Strategy}\label{sec:strategy}

The first version of the automatic proof strategy was presented
in~\cite[Section~5]{Blanchet07c}.

\section{Conclusion}

The tool CryptoVerif produces proofs by sequences of games
like those manually written by cryptographers. It generates the
games, using an automatic proof strategy or guidance from the user,
who specifies the transformations to perform. It supports
a wide variety of cryptographic primitives specified by
indistinguishability axioms. Many of these primitives are
included in a library so that the user does not have to redefine
them. It can prove secrecy,
correspondence, and indistinguishability properties.
It has been applied to substantial case studies, including
Signal~\cite{KobeissiEuroSP17}, TLS~1.3~\cite{BhargavanSP17}, and
WireGuard~\cite{Lipp19}.

CryptoVerif still has limitations. In particular, the size of games
tends to grow too fast, which limits its ability to deal with large examples,
especially because some game transformations require the game to be expanded first by the \rn{expand} transformation, which duplicates the code from each test
until the end of protocol. Planed improvements include allowing more game transformations to work without previous application of \rn{expand}; allowing internal oracle calls in games, in order to share code between different parts of the game; using composition results in order to make proofs more modular.
Moreover, some game transformations could be generalized. For instance, the transformation \rn{merge\_branches} merges branches of a test when they execute the same code; the detection that several branches execute equivalent code could be made more flexible, by allowing reorderings of instructions for instance.
CryptoVerif only considers blackbox adversaries: it does not support proofs that manipulate the code of the adversary, such as the forking lemma~\cite{Pointche96}.

\paragraph{Acknowledgments}

I warmly thank David Pointcheval for his advice and explanations
of the computational proofs of protocols. This project would not have
been possible without him. I also thank Jacques Stern
for initiating this work and Charlie Jacomme for helpful discussions
on this document. 
The design and implementation of CryptoVerif was partly done while 
I was at CNRS and at Ecole Normale Sup\'erieure. 
This work was partly supported by the French National Research Agency (ANR) 
under the projects FormaCrypt (ARA SSIA 2005), ProSe (VERSO 2010, decision
number 2010-VERS-004), TECAP (decision number ANR-17-CE39-0004-03) and
received funding from the France 2030 program managed by the ANR under
the reference ANR-22-PECY-0006 (PEPR Cybersecurity SVP).


\end{document}